\documentclass[a4paper,twoside,openright,11pt]{report}
\usepackage[english]{babel}
\usepackage[ansinew]{inputenc}
\usepackage{fontenc}
\usepackage{amsfonts}
\usepackage{amsmath}
\usepackage{amsthm}
\usepackage{enumerate}
\usepackage{textcomp}
\usepackage{mathrsfs}
\usepackage{natbib}
\usepackage{dsfont} 
\usepackage{graphicx}
\usepackage{multirow}
\usepackage{pst-all} 
\usepackage{lastpage}
\usepackage[nottoc]{tocbibind} 

\usepackage[colorlinks=true,linkcolor=red,citecolor=blue]{hyperref}

\usepackage{geometry}
\geometry{left=3cm,right=3cm,top=3cm,bottom=3cm}
\numberwithin{equation}{section}

\newcommand{\n}{\noindent}
\newcommand{\F}{\textbf{F}}
\newcommand{\Q}{\textbf{Q}}

\newcommand{\ind}[1]{\ensuremath{\mathds{1}_{#1}}}
\DeclareMathOperator*{\argmin}{arg\,min\,}
\DeclareMathOperator*{\argmax}{arg\,max\,}
\DeclareMathOperator*{\argsup}{arg\,sup\,}
\DeclareMathOperator*{\arginf}{arg\,inf\,}

\newtheorem{theorem}{Theorem}[section]
\newtheorem{definition}{Definition}[section]
\newtheorem{corollary}{Corollary}[section]
\newtheorem{lemma}{Lemma}[section]
\newtheorem{proposition}{Proposition}[section]
\newtheorem{conclusion}{Conclusion}

\theoremstyle{definition}
\newtheorem{remark}{Remark}[section]
\newtheorem{example}{Example}[section]

\usepackage{fancyhdr}
\setlength{\headheight}{15.2pt}
\pagestyle{fancy}

\cfoot{\thepage\ / \pageref{LastPage}}
\lhead[\thepage]{\scriptsize \leftmark}
\rhead[\scriptsize \rightmark]{\thepage}
\chead{}

\begin{document}
\begin{titlepage}
\psset{unit=1in,linewidth=4pt} 
\begin{tabular}{l p{8cm} r}
\includegraphics[scale=0.8]{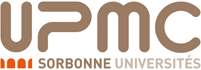} &  & \includegraphics[scale = 0.9]{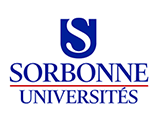}
\end{tabular}

\vspace{1cm}
\begin{center}
\Large{\textbf{\'Ecole Doctorale de Science Math\'ematiques de Paris Centre}}\\
\vspace{1cm}
\Huge{Th\`ese de Doctorat}\\

\vspace{1cm}

\LARGE{Discipline: Math\'ematiques}\\
\large{Sp\'ecialit\'e: Statistiques}\\

\vspace{2cm} 

\large{pr\'esent\'ee par} \\
\Large{\textbf{Diaa AL MOHAMAD}} \\

\vspace{1cm}
\rule{\textwidth}{1pt}
\LARGE \textbf{Esimation d'un Mod\`ele de M\'elange param\'etrique et Semiparam\'etrique par des $\varphi-$divergences}\\
\rule{\textwidth}{1pt}
\vspace{1cm}
{\Large dirig\'ee par Pr. Michel BRONIATOWSKI} \\

\vspace{1cm}

\large Soutenance le 17 novembre devant le jury compos\'e de \\  
\vspace{0.5cm}
\resizebox{\textwidth}{!}{ 
\begin{tabular}{lll}
Michel BRONIATOWSK &  LSTA - Universit\'e Paris VI & Directeur de th\`ese \\
St\'ephane CHRETIEN & National Physical Laboratory - UK & Rapporteur \\
Pierre VANDEKERKHOVE & LAMA - Universit\'e Marne-la-Vall\'ee & Rapporteur\\
Laurent BORDES & LMAP - Universit\'e de Pau et des Pays de l'Adour & Examinateur \\
Catherine MATIAS & LPMA - Universit\'e Paris VI & Examinateur\\
St\'ephane ROBIN & AgroParisTech / INRA & Examinateur
\end{tabular}}
\end{center}
\vspace{\fill}
\end{titlepage}  

\clearpage

%
%
\chapter*{Remerciements}
\addcontentsline{toc}{chapter}{Remerciement}
Arrivant \`a la fin de cette th\`ese, je suis reconnaissant pour certaines personnes qui avaient un r\^ole important sur les travaux que j'ai pu mener dans mon travail de th\`ese. Il y a certaines personnes sans qui je n'aurai pas eu la chance de revenir ici en France afin de reprendre mon travail de th\`ese apr\`es deux ans et demi d'interruption.\\
Je vaudrais remercier tout d'abord ma m\`ere Alia qui m'a support\'e pendant toute la dur\'ee de mon s\'ejour en France et qui m'accorde sa confiance et son support \`a tout moment. Je vaudrais remercier mon p\`ere Hayel qui n'est plus avec nous dans ce monde depuis longtemps. Il m'encourageait toujours pour poursuivre mes \'etudes et il a consacr\'e sa vie pour me faire agrandir. Je remercie mes parents qui m'ont appris comment appr\'ecier le temps. Je vaudrais remercier mon \'epouse Tamader qui m'a accompagn\'e pendant toute la dur\'ee de ma th\`ese et qui \'etait patiente et a compris la pression que je subissais \`a cause de la courte dur\'ee de ma bourse d''etude (2 ans). Je lui remercie surtout parce qu'elle a support\'e ma d\'ecision de quitter mon pays la Syrie pour revenir en France, et puis d'avoir accept\'e d'\^etre avec moi ici en France sans qu'elle ait d'amie ni de famille. Il faudrait que je remercie ma fille Leen qui a r\'ejouit ma vie avec ses beaux sourires. Je suis tr\`es reconnaissant pour mon beau-p\`ere Moafk et ma belle-m\`ere Anam qui m'ont h\'eberg\'e avec ma petite famille pendant la p\'eriode (deux ans environs) que je pr\'eparais mes papiers pour revenir en France. Je leur remercie pour leur support qui m'accompagne tous les jours. Je vaudrais remercie \'egalement mes fr\`eres, Nouaf, Yasser et Fayez qui m'ont accord\'e leur confiance et m'ont aid\'e pour revenir en France. Je vaudrais remercier mes beaux-fr\`eres, Maaen, Mauthana, Omar et Mohamad que je consid\`ere comme mes fr\`eres et amis.\\

Pour l'avancement dans cette th\`ese, je remercie Michel Broniatowski qui a tout d'abord accept\'e de me prendre \`a nouveau apr\`es deux ans et demi d'interruption \`a cause de la stituation terrible dans mon pays. Ses id\'ees, ses conseils ainsi que sa vision de mon travail ont \'et\'e des facteurs importants pour achever et finir cette recherche. Je lui remercie pour plusieurs caf\'es auxquels il m'a invit\'es. Je lui remercie d'avoir compris ma situation familiale particuli\`ere. Je lui remercie de la confiance qu'il m'a accord\'ee pendant toute la dur\'ee de la th\`ese. Je remercie Alexis pour quelques courtes discussions sur les L-moments. Je remercie Assia qui m'a parl\'e des mod\`eles de m\'elanges semiparam\'etriques et des questions qui y sont li\'ees. Je remercie Matthias Kohl de l'Universit\'e de Furwagen en Allemagne pour des discussions sur l'int\'egration num\'erique. Je remercie Gilles Celeux, Jean-Patrick Baudrey et Olivier Schwander pour de vives discussions sur le mod\`ele de m\'elange semiparam\'etrique durant un groupe de travail. Je remercie mes camarades du laboratoire, Dimby, Moukhtar, Matthiau et les autres pour plusieurs discussions scientifiques ou non. Je remercie les rapporteurs de ma th\`ese, Mr. Vandekerkhove et Mr. Chr\'etien, pour leurs commentaires et leurs avis tr\`es positif et encourageant.\\

Il y a plusieurs personnes \`a qui je n'aurais peut-\^etre pas l'occasion de dire merci. Je remercie Khaled Halawa mon ancien prof de math\'ematiques et qui est devenu un cher coll\`egue en Syrie apr\`es. Je remercie Said Dsouki et Abd Allah Aboshahein qui ont convaincu l'administration de mon institut en Syrie de m'raccorder la bourse de la th\`ese apr\`es sa supension, Je remercie finalement Bassam Alzneika qui m'a aid\'e, pendant une p\'eriode tr\`es difficile dans la r\'egion o\`u je vivais en Syrie, \`a finir les papiers administratifs n\'ecessaires pour l'obtention de la bourse.\\

MERCI A TOUS ! \\

Diaa. \today.

\chapter*{Summary}
\addcontentsline{toc}{chapter}{Summary}
\section*{R\'esum\'e}
L'\'etude des mod\`eles de m\'elanges est un champ d'\'etude tr\`es vaste. D'autre part, les $\varphi-$divergences sont des outils statistiques qui de plus en plus attirent l'attention des statisticiens et les praticiens. Dans cette th\`ese, nous pr\'esentons et \'etudions quelques aspects et propri\'et\'es des $\varphi-$divergences et les estimateurs qui sont construits \`a base d'une $\varphi-$divergence. Nous employons ces estimateurs \`a l'estimation des mod\`eles de m\'elanges. Dans une seconde partie de cette th\`ese, nous construisons et d\'eveloppons une nouvelle structure pour les mod\`eles de m\'elanges semiparam\'etriques. L'estimation dans ce nouveau mod\`ele est bas\'ee sur les $\varphi-$divergences qui offrent de bons outils pour le traitement de notre nouvelle approche.\\

Nous pr\'esentons dans la premi\`ere partie de cette th\`ese les $\varphi-$divergences, et nous en faisons un rappel des principales propri\'et\'es. Nous pr\'esentons les m\'ethodes existantes dont l'objectif est de produire des estimateurs pour des mod\`eles param\'etriques bas\'es sur des $\varphi-$divergences. Nous nous int\'eressons \`a l'\'etude de la m\'ethode de Beran, de l'approche de Basu-Lindsay et de la forme dual des $\varphi-$divergences. Nous nous int\'eressons en particulier \`a la derni\`ere approche. Nous montrons que les estimateurs bas\'es sur la forme duale des $\varphi-$divergences dans un contexte param\'etrique ne sont pas robustes. Ceci est explor\'e th\'eoriquement et exp\'erimentalement avec des simulations num\'eriques. Nous proposons ensuite une modification qui rend ces estimateurs robustes. Le nouvel estimateur est alors compar\'e avec les autres m\'ethodes existantes qui produisent des estimateurs robustes \`a base des $\varphi-$divergences. La comparaison est \'egalement men\'ee par rapport \`a un estimateur consid\'er\'e comme $\ll$tr\`es performant$\gg$ pour l'estimation dans des mod\`eles param\'etriques; le minimum density power divergence. Notre nouvel estimateur montre de bonnes propri\'et\'es par rapport aux autres estimateurs en comp\'etition.\\
Dans un second temps, nous pr\'esentons un algorithme d'optimisation dont l'objectif est de calculer les estimateurs \`a base de divergences. Notre algorithme est un algorithme proximal qui perturbe la fonction objective \`a chaque it\'eration par une autre fonction convenablement choisie. La convergence des s\'equences g\'en\'er\'ees par l'algorithme est \'etudi\'ee. Nous montrons que les points limites des s\'equences g\'en\'er\'ees par l'algorithme sont des points stationnaires de la fonction objective. D'autres propri\'et\'es de convergence globale de la s\'equence vers un optimum local de la fonction objective sont explor\'ees mais en imposant des hypoth\`eses plus restrictives. Nous \'etudions la convergence de la s\'equence g\'en\'er\'ee par l'algorithme proximal dans plusieurs exemples. La convergence de l'algorithme EM est \'etudi\'ee \`a nouveau sur quelques exemples mais dans l'esprit de notre approche.\\

Dans la deuxi\`eme partie de cette th\`ese, nous construisons une nouvelle structure pour les mod\`eles de m\'elanges semiparam\'etriques \`a deux composantes dont l'une est inconnue. La nouvelle approche permet l'incorporation d'une information a priori lin\'eaire sur la composante inconnue; par exemple des contraintes de moments ou de L-moments. Nous d\'eveloppons deux approches pour l'estimation de ce mod\`ele; une approche pour les contraintes de type moments et une approche pour les contraintes de L-moments. Les propri\'et\'es asymptotiques des estimateurs r\'esultant sont \'etudi\'ees et prouv\'ees sous des hypoth\`eses standards. Nous illustrons par des simulations num\'eriques les avantages de la nouvelle approche et de l'incorporation d'une information a priori par rapport aux m\'ethodes existantes d'estimation d'un mod\`ele semiparam\'etrique sans aucune information pr\'eliminaire \`a part une hypoth\`ese de sym\'etrie.\\

\textbf{Mots cl\'es:} Mod\`ele de m\'elange, $\varphi-$divergence, estimateur \`a noyau sym\'etrique et asym\'etrique, algorithme proximal, dualit\'e de Fenche-Legendre, mod\`ele semiparam\'etrique, mod\`ele semiparam\'etrique de quantile, L-moments.

\clearpage
\section*{Abstract}
The study of mixture models constitutes a large domain of research. On the other hand, $\varphi-$divergences attract more and more the attention of statisticians and practitioners. In this work, we show some of the aspects and properties of $\varphi-$divergence-based estimators. We employ these estimators in mixture models. We build and develop in a second part a new structure for semiparametric mixture models and estimate the new model using $\varphi-$divergences efficiently.\\

In the first part of this work, we present $\varphi-$divergences and recall some of their basic properties. We present some of the existing methods in the literature which produce parametric estimation using $\varphi-$divergences; namely Beran's approach, the Basu-Lindsay approach and the dual formula of $\varphi-$divergences. We are interested in the later method. We show that the estimator based on the existing dual formula of $\varphi-$divergences in the parametric settings is not robust against outliers in several models. The problem is explored theoretically and experimentally. We propose a modification to this estimation method in order to robustify it. The new estimator is then compared to existing methods based on $\varphi-$divergences. The comparison is also done with respect to a powerful estimator in parametric estimation; the minimum density power divergence estimator. Our new estimator shows encouraging performances.\\
We present after that an optimization algorithm in order to calculate estimators based on a divergence criterion. The algorithm is a proximal-point algorithm which optimizes a modified version of the objective function by adding a suitable regularization term. Convergence properties of the presented algorithm are studied. We prove the convergence of the limiting points of the sequence generated by the algorithm to stationary points of the objective function. More properties are explored but with further assumptions in order to prove the convergence of the whole sequence towards a local optimum of the objective. Several examples are discussed, and another proof of convergence of the EM algorithm is given in several mixtures in the light of our approach.\\

In a second part of this work, we construct a new structure for semiparametric two-component mixture models where one component is unknown. The new structure permits to incorporate some prior linear information about the unknown component such as moments or L-moments constraints. Two different approaches are developed using $\varphi-$divergences in order to estimate the semiparametric mixture model; an approach for moment-type constraints and an approach for L-moments constraints. The asymptotic properties of the resulting estimators are studied and proved under standard conditions. The new structure is demonstrated by simulations to produce better estimates using the prior information than existing methods in the literature which do not consider in general any prior information except for a symmetric assumption.\\

\textbf{Keywords:} Mixture model, $\varphi-$divergence, symmetric and asymmetric kernel density estimator, proximal-point algorithm, Fenchel-Legendre duality, semiparametric model, semiparametric linear quantile model, L-moment.

\clearpage
\tableofcontents
\clearpage
\listoffigures
\clearpage
\listoftables
%
%
\chapter*{Introduction}
\addcontentsline{toc}{chapter}{Introduction}
Le but de cette th\`ese est l'\'etude et l'estimation d'un mod\`ele de m\'elanges de lois de la forme:
\[P(.|\phi) = \sum_{i=1}^j{\lambda_i P_i(.|\theta_i)}, \qquad \text{t.q. } \sum_{i=1}^{j}{\lambda_i}=1.\]
Dans la premi\`ere partie de la th\`ese (Chap.1 et 2), le mod\`ele de m\'elange est param\'etrique. C'est \`a dire que la distribution de chaque composante, index\'ee par $i=1,\cdots, j$, correspond a une loi connue param\'etr\'ee par $\theta_i$. Dans la deuxi\`eme partie de la th\`ese (Chap.3 et 4), nous aborderons le cas particulier o\`u $j=2$ en supposant que  la premi\`ere composante du m\'elange est param\'etrique alors que la deuxi\`eme est nonparam\'etrique. Nous proposons ensuite une structure o\`u la deuxi\`eme composante est semiparam\'etrique au sens o\`u elle appartient \`a une famille de lois d\'efinies par des contraintes lin\'eaires, par exemple l'ensemble des lois de probabilit\'e ayant une variance \'egale au carr\'e de l'esp\'erance. Les mod\`eles de m\'elanges param\'etriques ont des applications diverses en biologie, en machine learning etc., voir \cite{Titterington} ou \cite{FinMixModMclachlan} pour plus de d\'etails. Les mod\`eles de m\'elanges semiparam\'etriques ont \'et\'e employ\'es dans diff\'erents contextes en g\'en\'etique (\cite{JunMaDiscret}), biologie (\cite{Bordes06b}) en machine learning (\cite{Song}) pour des algorithmes de clustering, etc. Le mod\`ele semiparam\'etrique pourrait \^etre appliqu\'e dans d'autres situations et sur plus d'applications comme en traitement du signal.\\

L'estimation d'un mod\`ele de m\'elange param\'etrique se fait en g\'en\'eral avec l'algorithme EM de \cite{Dempster}. L'algorithme EM offre une proc\'edure facile \`a programmer et dont la complexit\'e est faible. En effet, l'algorithme EM maximise la log-vraisemblance du mod\`ele de m\'elange it\'erativement. A chaque it\'eration, nous maximisons la vraisemblance \`a l'int\'erieur de chaque classe (composante) en attribuant des poids $h_{i,k}$ \`a chaque observation (num\'ero $i$) mesurant son appartenance \`a la classe $k$. Cependant, l'algorithme EM produit des estimateurs non-robustes parce que nous calculons le maximum de vraisemblance en fin de compte. Le maximum de vraisemblance est un estimateur qui est connu d'\^etre sensible aux points aberrants \emph{(outliers)} et au fait que le mod\`ele ne contient pas la vraie distribution des donn\'ees \emph{(misspecification)}. L'objectif de la premi\`ere partie est d'appliquer un autre outil d'estimation qui produit des estimateurs robustes. Nous formulons \'egalement un algorithme qui ressemble \`a l'algorithme EM au sens o\`u l'optimisation n'est pas men\'ee sur tous les param\`etres en m\^eme temps, mais sur les proportions dans une \'etape et sur les param\`etres d\'ecrivant les classes dans une autre \'etape.\\
L'estimation d'un mod\`ele de m\'elange semiparam\'etrique \`a deux composantes est un sujet tr\`es r\'ecent. Plusieurs m\'ethodes existent pour estimer la proportion et/ou les param\`etres de la composante param\'etrique sans qu'il y ait de contraintes sur la composante inconnue. Une hypoth\`ese de symm\'etrie sur la composante inconnue a \'et\'e employ\'ee afin de mieux estimer le mod\`ele de m\'elange; voir \cite{Bordes10} et \cite{Maiboroda2012}. L'estimation d'un mod\`ele de m\'elange semiparam\'etrique sans qu'il y ait de contraintes sur la composante inconnue est difficile surtout lorsque nous avons \`a estimer des param\`etres inconnus de la composante param\'etrique. Nous proposons une m\'ethode pour estimer un mod\`ele de m\'elange semiparam\'etrique lorsque la composante inconnue est d\'efinie par des contraintes lin\'eaires de type moments ou L-moments. Nous \'etudions les propri\'et\'es asymptotiques des estimateurs obtenus. Plusieurs simulations num\'eriques sont pr\'esent\'ees afin d'illustrer l'avantage de la nouvelle approche.


\section{Premi\`ere partie: Estimation robuste bas\'ee sur des \texorpdfstring{$\varphi-$}{phi-}divergences avec application aux mod\`eles de m\'elanges param\'etriques}
\subsection{Chapitre 1: Estimation bas\'ee sur des \texorpdfstring{$\varphi-$}{phi-}divergences}
Les $\varphi-$divergences sont des mesures de distance ou dissimilarit\'e entre des distributions de probabilit\'e ou plus g\'en\'eralement entre des measures $\sigma-$finies. Elles ont \'et\'e introduites ind\'ependamment par \cite{Csiszar1963} et \cite{AliSilvey}. Pour deux mesures $P$ et $Q$ $\sigma-$finies telles que $Q$ est absolument continue par rapport \`a $P$, nous d\'efinissons la $\varphi-$divergence entre $Q$ et $P$ par:
\[D_{\varphi}(Q,P) = \int_{\mathbb{R}^r}{ \varphi\left( \dfrac{dQ}{dP}(x) \right)dP(x)},\]
o\`u $\varphi$ est une fonction positive convexe telle que $\varphi(1)=0$. Si $\varphi$ est strictement convexe alors:
\[D_{\varphi}(Q,P) = 0 \;\; \textrm{ si et seulement si} \;\; P = Q.\]
L'estimation par une $\varphi-$divergence se fait en minimisant celle-ci entre une famille de lois et une mesure de probabilit\'e $P_T$ inconnue. La loi $P_T$ n'est connue en g\'en\'erale que par un \'echantillon $Y_1,\cdots,Y_n$ observ\'ee. La famille de lois est un mod\`ele $P_{\phi}$ avec $\phi\in\Phi\subset\mathbb{R}^d$ param\'etr\'e par le param\`etre $\phi$. Le but est de trouver le meilleur vecteur de param\`etres $\phi^T$ tel que $P_{\phi^T}$ soit le plus proche possible de $P_T$ d'un point de vue d'une $\varphi-$divergence. En particulier, si $P_T$ est un membre du mod\`ele $(P_{\phi})_{\phi}$, alors il existe $\phi^T$ tel que $P_T=P_{\phi^T}$, et
\[\phi^T = \argmin_{\phi\in\Phi} D_{\varphi}(P_{\phi},P_T).\]
Comme $P_T$ est inconnue, nous avons besoin de la remplacer par un estimateur afin d'estimer $\phi^T$. Dans le cas o\`u les mesures de probabilit\'e sont d\'efinies sur des espaces discrets, $P_T$ est remplac\'ee par la mesure empirique, voir \cite{LindsayRAF}. Dans le cas des mod\`eles continus, remplacer $P_T$ par sa version empirique n'est pas convenable, car le mod\`ele ne serait pas absolument continu par rapport \`a la mesure empirique pour n'importe quel $n$, et aucune proc\'edure d'estimation ne pourrait \^etre produite, voir \cite{BroniatowskiSeveralApplic}. Plusieurs approches ont \'et\'e propos\'ees afin d'approximer la $\varphi-$divergence lorsque le mod\`ele est continu. 
\begin{itemize}
\item L'approche de \cite{Beran}. Cette approche consiste \`a remplacer directement $P_T$ par un estimateur \`a noyau. L'approche de \cite{Beran} a \'et\'e propos\'ee dans le cas de la divergence de Hellinger. Cette approche a \'et\'e plus tard g\'en\'eralis\'ee \`a la classe des $\varphi-$divergences par \cite{ParkBasu} et \cite{KumarBasu}. \\
\item L'approche de \cite{BasuLindsay}. Cette approche consiste \`a remplacer $P_T$ par un estimateur \`a noyau et convoler le mod\`ele avec le m\^eme noyau afin de r\'eduire le r\^ole de la fen\^etre. Les auteurs d\'emontrent que sous une certaine condition (difficile) sur le noyau (noyau transparent), l'estimateur bas\'e sur leur approche est consistant sans avoir besoin que l'estimateur \`a noyau soit consistant. \\
\item La forme dual des $\varphi-$divergences. Cette approche a \'et\'e d\'evelopp\'ee ind\'ependamment par \cite{BroniaKeziou2006} et \cite{LieseVajdaDivergence}. On peut montrer que pour trois densit\'es de probabilit\'es $p_{\alpha},p_{\phi}$ et $p_T$, nous avons:
\[D_{\varphi}(p_{\phi},p_T) \geq \int{\varphi'\left(\frac{p_{\phi}}{p_{\alpha}}\right)(x)p_{\phi}(x)dx} - \int{\varphi^{\#}\left(\frac{p_{\phi}}{p_{\alpha}}\right)(y) p_T(y)dy},\]
o\`u $\varphi^{\#}(t)=t\varphi'(t)-\varphi(t)$, et que l'\'egalit\'e est atteinte lorsque $p_{\alpha}=p_T$. Alors:
\begin{equation}
D_{\varphi}(p_{\phi},p_T) = \sup_{\alpha\in\Phi}\left\{\int{\varphi'\left(\frac{p_{\phi}}{p_{\alpha}}\right)(x)p_{\phi}(x)dx} - \int{\varphi^{\#}\left(\frac{p_{\phi}}{p_{\alpha}}\right)(y) p_T(y)dy}\right\}.
\label{eqn:ParametricDualFormIntro}
\end{equation}
Il suffit maintenant de remplacer $p_T(y)dy$ par la mesure empirique $dP_n$ afin d'avoir un estimateur dit \emph{dual} de la $\varphi-$divergence. L'int\'er\^et de cette approche est que, contrairement aux autres approches, nous n'avons pas besoin d'un estimateur \`a noyau, et donc nous n'avons pas \`a chercher un \emph{bon} noyau et une \emph{bonne} fen\^etre.
\end{itemize}
Dans le premier chapitre, nous nous int\'eressons \`a la forme duale des $\varphi-$divergences. \cite{BroniatowskiKeziou2007} et \cite{LieseVajdaDivergence} proposent le minimum dual $\varphi-$divergence estimateur (MD$\varphi$DE) de $\phi^T$:
\[\hat{\phi} = \arginf_{\phi\in\Phi} \sup_{\alpha\in\Phi}\left\{\int{\varphi'\left(\frac{p_{\phi}}{p_{\alpha}}\right)(x)p_{\phi}(x)dx} - \frac{1}{n}\sum_{i=1}^n{\varphi^{\#}\left(\frac{p_{\phi}}{p_{\alpha}}\right)(y_i)}\right\}.\]
Lorsque $P_T$ appartient au mod\`ele $P_{\phi}$, la forme duale estime bien la $\varphi-$divergence parce que le supremum sur $\alpha$ sera atteint \`a $\alpha=\phi^T$. Cependant, si $P_T$ n'est pas dans le mod\`ele, ce n'est plus le cas. Par exemple, le cas des donn\'ees contenant des outliers. En effet, si $P_T$ n'est pas dans le mod\`ele, alors nous aurons:
\[D_{\varphi}(p_{\phi},p_T) \geq \sup_{\alpha\in\Phi}\left\{\int{\varphi'\left(\frac{p_{\phi}}{p_{\alpha}}\right)(x)p_{\phi}(x)dx} - \int{\varphi^{\#}\left(\frac{p_{\phi}}{p_{\alpha}}\right)(y) p_T(y)dy}\right\}.\]
Nous illustrons \`a la figure (\ref{fig:UnderEstimationIntro}) un exemple simple qui montre l'impact de ce probl\`eme. Nous illustrons \'egalement la solution que nous allons proposer ensuite.
\begin{figure}[h]
\centering
\includegraphics[scale=0.48]{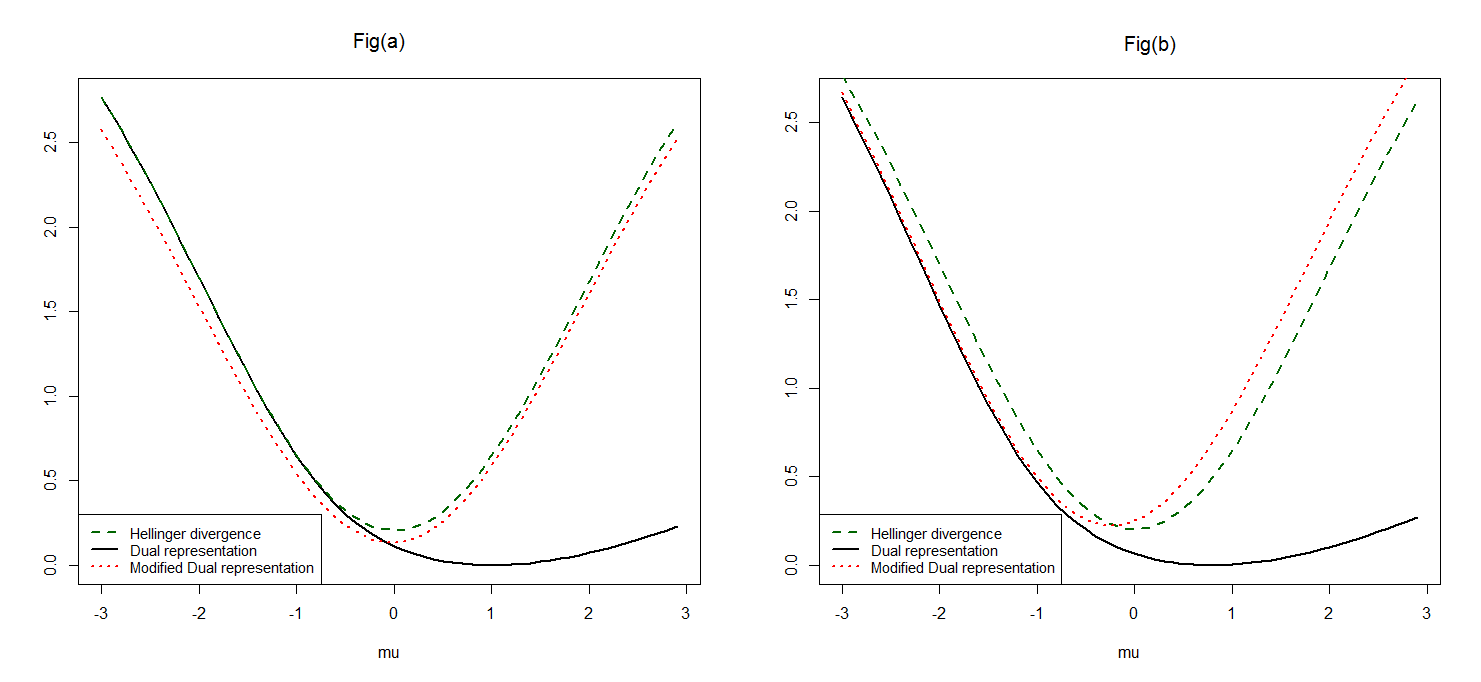}
\caption{Sous-estimation caus\'ee par la forme duale classique en comparaison avec notre alternative. La vraie distribution $P_T$ est $0.9\mathcal{N}(\mu=0,\sigma=1)+0.1\mathcal{N}(\mu=10,\sigma=2)$. Figure (a) montre la forme duale classique (\ref{eqn:ParametricDualFormIntro}) en comparaison avec la nouvelle formulation duale d\'efinie par (\ref{eqn:NewExactDualFormIntro}). Figure (b) montre les approximations correspondantes apr\`es avoir remplac\'e la vraie distribution par sa version empirique.}
\label{fig:UnderEstimationIntro}
\end{figure}
Notre solution consiste \`a utiliser un estimateur \`a noyau au lieu de $p_{\alpha}$. Ceci permet de s'adapter \`a $p_T$ que ce soit sous le mod\`ele ou non, et en m\^eme temps permet de se d\'ebarasser de la forme supr\'emale. Notre nouvel estimateur est d\'efini par:
\begin{equation}
D_{\varphi}(p_{\phi},p_T) = \sup_{w>0}\left\{\int{\varphi'\left(\frac{p_{\phi}}{K_{n,w}}\right)(x)p_{\phi}(x)dx} - \int{\varphi^{\#}\left(\frac{p_{\phi}}{K_{n,w}}\right)(y) p_T(y)dy}\right\}.
\label{eqn:NewExactDualFormIntro}
\end{equation}
En effet, un "bon" choix de la fen\^etre permet d'introduire le nouvel estimateur:
\begin{equation*}
\hat{\phi}_{n} = \arginf_{\phi\in\Phi} \int{\varphi'\left(\frac{p_{\phi}}{K_{n,w_{\text{opt}}}}\right)(x)p_{\phi}(x)dx} - \frac{1}{n}\sum_{i=1}^n{\varphi^{\#}\left(\frac{p_{\phi}}{K_{n,w_{\text{opt}}}}\right)(y_i)}.
\end{equation*}
Nous d\'emontrons que ce nouvel estimateur est consistant et asymptotiquement Gaussian sous des hypoth\`eses standards. Les simulations num\'eriques montrent que le nouvel estimateur performe mieux que les autres estimateurs pr\'esent\'es dans ce chapitre. Nous comparons la performance de cet estimateur avec le minimum density power divergence (MDPD) de \cite{BasuMPD} qui est un estimateur de Bregman. Notre estimateur performe aussi bon que le MDPD dans plusieurs simulations et performe mieux dans un mod\`ele \`a queue lourde qui est le GPD.


\subsection{Chapitre 2:}
Les proc\'edures d'estimation pr\'esent\'ees dans le chapitre pr\'ec\'edent sont en g\'en\'eral nonconvexes. Les m\'ethodes d'optimisation convexe ne garantissent que la convergence vers un optimum local de la fonction objective si celle-ci n'est pas convexe. Nous introduisons dans ce chapitre un algorithme d'optimisation proximale. Un algorithme proximal est un algorithme it\'eratif qui \`a chaque it\'eration optimise une version r\'egularis\'ee de la fonction objective. Les algorithmes proximaux ont \'et\'e prouv\'es de produire de meilleurs r\'esultats que les algorithmes d'optimisation classiques, voir \cite{Goldstein}. \\
Soient $(X_1,Y_1),\cdots,(X_n,Y_n)$ un \'echantillon de couples de variables al\'eatoires i.i.d. distribu\'ees selon la densit\'e $f(x,y|\phi^T)$ pour un $\phi^T\in\Phi$. Soient $(x_1,y_1),\cdots,(x_n,y_n)$ des r\'ealisations de ces couples. Les donn\'ees $y_1,\cdots,y_n$ sont les donn\'ees observ\'ees et les donn\'ees $x_1,\cdots,x_n$ sont les donn\'ees inobserv\'ees ou les \'etiquettes. Par exemple, les donn\'ees $x_1,\cdots,x_n$ sont les classes correspondant aux points $y_1,\cdots,y_n$. \\
L'algorithme EM est une proc\'edure it\'erative qui estime le vecteur de param\`etres $\phi^T$ en maximisant l'esp\'erance de la log-vraisemblance compl\'et\'ee sachant les donn\'ees observ\'ees, c.\`a.d.
\begin{eqnarray*}
\phi^{k+1} & = & \argmax_{\Phi} Q(\phi,\phi^k) \\
					 & = & \argmax_{\Phi} \mathbb{E}\left[\log(f(\textbf{X},\textbf{Y}|\phi)) \left| \textbf{Y}=\textbf{y},\phi^k\right.\right],
\end{eqnarray*}
o\`u $\textbf{X} = (X_1,\cdots,X_n)$, $\textbf{Y} = (Y_1,\cdots,Y_n)$ et $\textbf{y}=(y_1,\cdots,y_n)$. On peut d\'emontrer que ces it\'erations s'\'ecrivent de la fa\c{c}on suivante:
\begin{equation}
\phi^{k+1} = \argmax_{\Phi} \sum_{i=1}^n{\log\left(p_{\phi}(y_i)\right)} + \sum_{i=1}^n\int_{\mathcal{X}}{\log\left(\frac{h_i(x|\phi)}{h_i(x|\phi^k)}\right) h_i(x|\phi^k) dx}.
\label{eqn:EMProximalIntro}
\end{equation}
o\`u $h_i(x|\phi^k) = \frac{f(x,y_i|\phi^k)}{p_{\phi^k}(y_i)}$ est la densit\'e conditionnelle des \'etiquettes sachant une donn\'ee $y_i$, et $p_{\phi}$ est la loi marginale des donn\'ees observ\'ees. L'algorithme EM (\ref{eqn:EMProximalIntro}) s'\'ecrit comme un algorithme proximal, car nous sommes en train de maximiser la log-vraisemblance en la perturbant \`a chaque it\'eration de l'algorithme par une fonction positive qui ressemble \`a une distance de Kullback-Lebiler mais entre les densit\'es conditionnelles des \'etiquettes.\\
\cite{Tseng} propose de g\'en\'eraliser (\ref{eqn:EMProximalIntro}) en permettant au terme proximal \`a prendre d'autres formes plus g\'en\'erales guid\'ees par une fonction g\'en\'eratrice $\psi$. Tseng propose l'algorithme suivant:
\begin{equation}
\phi^{k+1} = \argsup_{\phi} J(\phi) - D_{\psi}(\phi,\phi^k),
\label{eqn:TsengAlgoIntro}
\end{equation}
o\`u $J(\phi)$ est la log-vraisemblance et
\begin{equation}
D_{\psi}(\phi,\phi^k) = \sum_{i=1}^n\int_{\mathcal{X}}{\psi\left(\frac{h_i(x|\phi)}{h_i(x|\phi^k)}\right)h_i(x|\phi^k)dx}.
\label{eqn:DivergenceClassesNtNormIntro}
\end{equation}
La fonction $\psi$ est prise comme une fonction convexe positive qui v\'erifie $\psi(1)=\psi'(1)=0$. Pour l'algorithme (\ref{eqn:EMProximalIntro}), $\psi(t)=-\log(t)+t-1$. \\
L'algorithme EM ainsi que la g\'en\'eralisation faite par Tseng ont une objective de maximiser la log-vraisemblance. Par cons\'equent, les estimateurs issus de ces algorithmes ne sont pas robustes contre les outliers ou une perturbation du mod\`ele autour de la vraie distribution des donn\'ees. Pour cela, nous proposons de g\'en\'eraliser l'algorithme de Tseng en utilisant le lien entre la maximisation de la log-vraisemblance et la minimisation de la distance de Kullback-Leibler entre la mesure empirique et le mod\`ele dans les mod\`eles discrets. Bien \'evidemment, pour les mod\`eles continus, ce lien est atteint de mani\`ere diff\'erente, car la distance entre le mod\`ele et la mesure empirique n'est pas bien d\'efinie\footnote{Dans le monde des $\varphi-$divergences, cette distance est consid\'er\'ee infinie.}. Nous proposons de remplacer la log-vraisemblance par un estimateur d'une $\varphi-$divergence.
\begin{equation}
\phi^{k+1} = \arginf_{\phi} \hat{D}_{\varphi}(p_{\phi},p_T) + \frac{1}{n}D_{\psi}(\phi,\phi^k).
\label{eqn:DivergenceAlgoPreVersionIntro}
\end{equation}
En prenant $\hat{D}_{\varphi}(p_{\phi},p_T)$ l'estimateur induit par la forme duale (\ref{eqn:ParametricDualFormIntro}) apr\`es avoir replac\'e $p_T(y)dy$ par $dP_n$, et pour $\varphi(t)=-\log(t)+t-1$, nous avons:
\begin{eqnarray*}
\phi^{k+1} & = & \arginf_{\phi}\left\{\sup_{\alpha} \frac{1}{n}\sum_{i=1}^n{\log(p_{\alpha}(y_i))} - \frac{1}{n}\sum_{i=1}^n{\log(p_{\phi}(y_i))} + \frac{1}{n}
D_{\psi}(\phi,\phi^k)\right\} \\
           & = & \arginf_{\phi}\left\{-\frac{1}{n}\sum_{i=1}^n{\log(p_{\phi}(y_i))} +\frac{1}{n}D_{\psi}(\phi,\phi^k)\right\} \\
           & = & \argsup_{\phi}\left\{\frac{1}{n}\sum_{i=1}^n{\log(p_{\phi}(y_i))} - \frac{1}{n}D_{\psi}(\phi,\phi^k)\right\} \\
           & = & \argsup_{\phi} J(\phi) - \frac{1}{n}D_{\psi}(\phi,\phi^k).
\end{eqnarray*}
Donc, notre algorithme contient la g\'en\'eralisation de Tseng. De plus, pour $\psi(t)=-\log(t)+t-1$, on se trouve avec l'algorithme EM.
Nous proposons \'egalement dans le cas d'un mod\`ele de m\'elange
\[p_{\phi}(y) = \sum_{i=1}^s{\lambda_i p_i(y|\theta_i)}\]
un algorithme proximal \`a deux niveaux; une sous-\'etape qui optimise sur les proportions $\lambda_i$ et une sous-\'etape qui optimise sur les param\`etres d\'ecrivant les classes $\theta_i$. En d'autres termes:
\begin{eqnarray}
\lambda^{k+1} & = & \arginf_{\lambda\in[0,1]^s, s.t. (\lambda,\theta^k)\in\Phi} \hat{D}_{\varphi}(p_{\lambda,\theta^k},p_{\phi^*}) + D_{\psi}((\lambda,\theta^k),\phi^k); \label{eqn:DivergenceAlgoSimp1Intro} \\
\theta^{k+1} & = & \arginf_{\theta\in\Theta, s.t. (\lambda^{k+1},\theta)\in\Phi} \hat{D}_{\varphi}(p_{\lambda^{k+1},\theta},p_{\phi^*}) + D_{\psi}((\lambda^{k+1},\theta),\phi^k).
\label{eqn:DivergenceAlgoSimp2Intro}
\end{eqnarray}
Nous d\'emontrons sous des hypoth\`eses standards que les s\'equences $(\phi^k)_k$ g\'en\'er\'ees par l'un des algorithmes (\ref{eqn:DivergenceAlgoPreVersionIntro}) ou (\ref{eqn:DivergenceAlgoSimp1Intro}, \ref{eqn:DivergenceAlgoSimp2Intro}) convergent vers un point stationnaire de l'estimateur de la divergence $\phi\mapsto\hat{D}_{\varphi}(p_{\phi},p_T)$. \\
D\'efinissons l'ensemble $\Phi^0$ par:
\begin{equation}
\Phi^0 = \{\phi\in\Phi: \hat{D}_{\varphi}(p_{\phi},p_T)\leq \hat{D}_{\varphi}(p_{\phi^0},p_T)\}.
\label{eqn:SetPhis0Intro}
\end{equation} 
Nous citons ici les deux principaux r\'esultats th\'eoriques concernant la convergence de la s\'equence $\Phi^k$ g\'en\'er\'ee par les algorithmes (\ref{eqn:DivergenceAlgoPreVersionIntro}) ou (\ref{eqn:DivergenceAlgoSimp1Intro}, \ref{eqn:DivergenceAlgoSimp2Intro}).
\begin{proposition}
Supposons que les s\'equences (\ref{eqn:DivergenceAlgoPreVersionIntro}) et (\ref{eqn:DivergenceAlgoSimp1Intro}, \ref{eqn:DivergenceAlgoSimp2Intro}) sont bien d\'efinies dans $\Phi$. Pour les deux algorithmes, la s\'equence $(\phi^{k})_k$ v\'erifie les propri\'et\'es suivantes:
\begin{itemize}
\item[(a)] $D_{\varphi}(p_{\phi^{k+1}}|p_T)\leq D_{\varphi}(p_{\phi^k}|p_T)$;
\item[(b)] $\forall k, \phi^k \in \Phi^0$;
\item[(c)] Supposons que les fonctions $\phi\mapsto\hat{D}_{\varphi}(p_{\phi}|p_T), D_{\psi}$ sont semicontinues inf\'erieurement et que l'ensemble $\Phi^0$ est compact, alors la s\'equence $(\phi^k)_k$ est bien d\'efinie et born\'ee. De plus, la s\'equence $\left(\hat{D}_{\varphi}(p_{\phi^k}|p_T)\right)_k$ converge.
\end{itemize}
\end{proposition}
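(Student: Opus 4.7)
The plan hinges on one elementary observation: since $\psi$ is nonnegative and satisfies $\psi(1)=0$, the proximal term obeys $D_\psi(\phi,\phi)=0$ for every $\phi$. Evaluating the objective of either update rule at the anchor $\phi^k$ therefore returns exactly $\hat D_\varphi(p_{\phi^k},p_T)$, and this single fact drives the whole argument.

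For part (a) in the single-block algorithm (\ref{eqn:DivergenceAlgoPreVersionIntro}), the defining inequality of the infimum yields
\[\hat D_\varphi(p_{\phi^{k+1}},p_T)+\tfrac{1}{n}D_\psi(\phi^{k+1},\phi^k)\leq\hat D_\varphi(p_{\phi^k},p_T)+\tfrac{1}{n}D_\psi(\phi^k,\phi^k)=\hat D_\varphi(p_{\phi^k},p_T),\]
and the conclusion follows from $D_\psi\geq 0$. For the two-block algorithm (\ref{eqn:DivergenceAlgoSimp1Intro})--(\ref{eqn:DivergenceAlgoSimp2Intro}), I would apply the same variational inequality twice: first at the $\lambda$-update against the anchor $(\lambda^k,\theta^k)$, producing
\[\hat D_\varphi(p_{\lambda^{k+1},\theta^k},p_T)+D_\psi((\lambda^{k+1},\theta^k),\phi^k)\leq\hat D_\varphi(p_{\phi^k},p_T),\]
then at the $\theta$-update, still anchored at $\phi^k$, against the trial point $(\lambda^{k+1},\theta^k)$. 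Chaining the two inequalities and discarding the nonnegative proximal terms on the left recovers the same monotonicity.

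Part (b) follows by a one-line induction on $k$: $\phi^0\in\Phi^0$ trivially, and if $\phi^k\in\Phi^0$ then (a) gives $\hat D_\varphi(p_{\phi^{k+1}},p_T)\leq\hat D_\varphi(p_{\phi^k},p_T)\leq\hat D_\varphi(p_{\phi^0},p_T)$, so $\phi^{k+1}\in\Phi^0$ as well.

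The decisive step is (c), where the reduction to the compact set $\Phi^0$ is essential. Observe that the objective of the $k$-th subproblem takes the value $\hat D_\varphi(p_{\phi^k},p_T)$ at the feasible point $\phi=\phi^k$, whereas for any $\phi\notin\Phi^0$ one has $\hat D_\varphi(p_\phi,p_T)>\hat D_\varphi(p_{\phi^0},p_T)\geq\hat D_\varphi(p_{\phi^k},p_T)$, so by $D_\psi\geq 0$ the objective at such a $\phi$ strictly exceeds its value at $\phi^k$. The infimum over $\Phi$ therefore coincides with the infimum over the compact set $\Phi^0$, and lower semicontinuity of $\phi\mapsto\hat D_\varphi(p_\phi,p_T)+\tfrac{1}{n}D_\psi(\phi,\phi^k)$ on $\Phi^0$ guarantees that the infimum is attained, making $\phi^{k+1}$ well-defined; the two-block case is handled sub-step by sub-step with the same reduction. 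Boundedness of $(\phi^k)_k$ then follows from (b) together with compactness of $\Phi^0$, and finally $(\hat D_\varphi(p_{\phi^k},p_T))_k$ is nonincreasing by (a) and bounded below by zero, hence convergent. The only genuinely delicate ingredient is the restriction-to-$\Phi^0$ argument: lower semicontinuity alone does not deliver a minimizer on the possibly non-compact set $\Phi$, and the comparison with the value at the anchor $\phi^k$ is what bridges this gap.
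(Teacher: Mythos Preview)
Your proposal is correct and follows essentially the same route as the paper: the variational inequality at the anchor for part (a), induction for (b), and the reduction of the infimum to the compact sublevel set $\Phi^0$ for (c). One small point worth flagging: in (c) you conclude convergence of $(\hat D_\varphi(p_{\phi^k},p_T))_k$ by saying it is nonincreasing and bounded below \emph{by zero}. The paper is slightly more careful here: since the results are meant to apply to a generic objective (including the kernel-based dual estimate, the MDPD, etc.), nonnegativity of $\hat D_\varphi$ is not assumed; instead the lower bound is taken to be $\inf_{\phi\in\Phi^0}\hat D_\varphi(p_\phi,p_T)$, which is finite because a lower semicontinuous function attains its infimum on the compact set $\Phi^0$. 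Your argument goes through unchanged once you use this lower bound in place of zero.
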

Cette proposition annonce une propri\'et\'e essentielle de l'algorithme. Sous des conditions simples, nous avons que la fonction objective (l'estimateur de la divergence) converge le long de la s\'equence $\phi^k$, voir figure (\ref{fig:DecreaseDivGaussChi2Chi2Intro}). Cette propri\'et\'e peut \^etre utilis\'ee comme un crit\`ere d'arr\^et de l'algorithme au cas o\`u la s\'equence de vecteurs $\phi^k$ ne converge pas. 
\begin{figure}[ht]
\centering
\includegraphics[scale = 0.3]{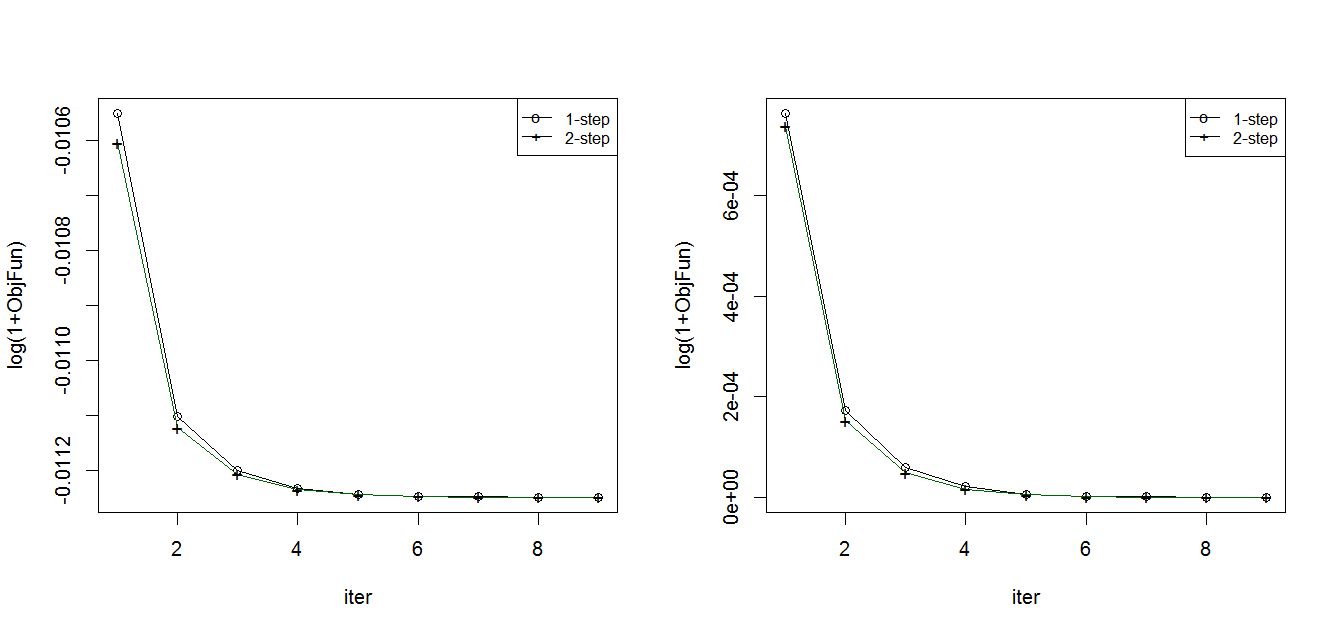}
\caption{D\'ecroissance de la (estimateur de la) distance de Hellinger entre la vraie distribution des donn\'ees et le mod\`ele estim\'e \`a chaque it\'eration de l'algorithme proximal dans le cas d'un mod\`ele de m\'elange \`a 2 composantes Gaussiennes. La figure de gauche illustre la courbe des valeurs pour le nouvel estimateur dual (\ref{eqn:NewExactDualFormIntro}) estim\'e. La figure de droite illustre la courbe des valeurs de l'estimateur dual classique (\ref{eqn:ParametricDualFormIntro}) estim\'e. Les valeurs sont repr\'esent\'ees sur une \'echelle logarithmique $\log(1+x)$. Le 1-step repr\'esente l'algorithme (\ref{eqn:DivergenceAlgoPreVersionIntro}), et le 2-step repr\'esente l'algorithme (\ref{eqn:DivergenceAlgoSimp1Intro}, \ref{eqn:DivergenceAlgoSimp2Intro})} 
\label{fig:DecreaseDivGaussChi2Chi2Intro}
\end{figure}
Un deuxi\`eme r\'esultat principal dans ce travail prouve la convergence des sous-suites vers un point stationnaire de la divergence estim\'ee. Ce r\'esultat est nouveau, car les r\'esultats existants supposent que le terme proximal $D_{\psi}$ v\'erifie une hypoth\`ese d'identifiabilit\'e $D_{\psi}(\phi,\phi')=0$ ssi $\phi=\phi'$. Dans ce r\'esultat, nous ne demandons pas cette hypoth\`ese.
\begin{proposition}
Supposons que 
\begin{enumerate}
\item les fonctions $\phi\mapsto\hat{D}_{\varphi}(p_{\phi}|p_T), D_{\psi}$ et $\nabla_1 D_{\psi}$ sont d\'efinies et continues sur, respectivement, $\Phi, \Phi\times\Phi$ et $\Phi\times\Phi$;
\item $\nabla \hat{D}_{\varphi}(p_{\phi}|p_T)$ est d\'efinie et continue sur $\Phi$;
\item $\Phi^0$ est compact,
\end{enumerate}
alors pour l'algorithme d\'efini par (\ref{eqn:DivergenceAlgoPreVersionIntro}), toute sous-suite convergente converge vers un point stationnaire de la fonction objective $\phi\rightarrow\hat{D}(p_{\phi},p_T)$. De plus, si l'hypoth\`ese 2 n'est pas v\'erifi\'ee, alors 0 appartient au sous-gradient de $\phi\mapsto\hat{D}(p_{\phi},p_T)$ calcul\'e au point limite. En d'autres termes, les sous-suites convergentes convergent vers des points stationnaires g\'en\'eralis\'es.
\end{proposition}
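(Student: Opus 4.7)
La d\'emonstration s'appuie sur la proposition pr\'ec\'edente, qui assure que la suite $(\phi^k)_k$ reste enti\`erement dans le compact $\Phi^0$ et que la suite des valeurs $\hat{D}_\varphi(p_{\phi^k},p_T)$ est d\'ecroissante et convergente. Soit $(\phi^{k_j})_j$ une sous-suite convergente de limite $\phi^\infty$. Par compacit\'e de $\Phi^0$, on extrait de plus une sous-sous-suite (que l'on continue d'indexer par $k_j$) telle que l'it\'er\'e pr\'ec\'edent v\'erifie $\phi^{k_j-1}\to\phi^{\infty-1}$. La strat\'egie consiste \`a \'ecrire l'\'equation d'optimalit\'e du premier ordre satisfaite par $\phi^{k_j}$ en tant qu'argmin de (\ref{eqn:DivergenceAlgoPreVersionIntro}), puis \`a passer \`a la limite en exploitant la continuit\'e des gradients.

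L'\'equation d'optimalit\'e s'\'ecrit
\[
\nabla \hat{D}_\varphi(p_{\phi^{k_j}},p_T) + \frac{1}{n}\nabla_1 D_\psi(\phi^{k_j},\phi^{k_j-1}) = 0.
\]
Les hypoth\`eses 1 et 2 garantissent la continuit\'e conjointe des deux gradients, ce qui permet de passer \`a la limite:
\[
\nabla \hat{D}_\varphi(p_{\phi^\infty},p_T) + \frac{1}{n}\nabla_1 D_\psi(\phi^\infty,\phi^{\infty-1}) = 0.
\]
Il reste \`a montrer que le terme proximal $\nabla_1 D_\psi(\phi^\infty,\phi^{\infty-1})$ est nul. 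C'est l\`a l'obstacle principal de la preuve : sans hypoth\`ese d'identifiabilit\'e sur $D_\psi$, rien n'assure a priori que $\phi^{\infty-1}=\phi^\infty$, et l'on ne peut donc invoquer directement la propri\'et\'e triviale $\nabla_1 D_\psi(\phi,\phi)=0$.

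Pour contourner cette difficult\'e, j'exploiterais l'in\'egalit\'e de d\'ecroissance
\[
\hat{D}_\varphi(p_{\phi^{k+1}},p_T) + \frac{1}{n}D_\psi(\phi^{k+1},\phi^k) \le \hat{D}_\varphi(p_{\phi^k},p_T),
\]
cons\'equence directe du fait que $\phi^{k+1}$ minimise la fonction perturb\'ee en (\ref{eqn:DivergenceAlgoPreVersionIntro}) (comparer avec le choix $\phi=\phi^k$ et utiliser $D_\psi(\phi^k,\phi^k)=0$). En t\'elescopant et en exploitant la convergence de $(\hat{D}_\varphi(p_{\phi^k},p_T))_k$, on obtient $\sum_k D_\psi(\phi^{k+1},\phi^k)<\infty$, donc $D_\psi(\phi^{k+1},\phi^k)\to 0$. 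Par continuit\'e de $D_\psi$, il en r\'esulte $D_\psi(\phi^\infty,\phi^{\infty-1})=0$. La forme explicite (\ref{eqn:DivergenceClassesNtNormIntro}) de $D_\psi$, combin\'ee \`a la stricte positivit\'e de $\psi$ hors de $1$ et \`a $\psi(1)=0$, force alors $h_i(\cdot|\phi^\infty)=h_i(\cdot|\phi^{\infty-1})$ presque partout. La propri\'et\'e fondamentale $\psi'(1)=0$ entra\^ine, en d\'erivant sous le signe somme dans (\ref{eqn:DivergenceClassesNtNormIntro}), que chaque int\'egrand de $\nabla_1 D_\psi(\phi^\infty,\phi^{\infty-1})$ comporte le facteur $\psi'(1)=0$ et s'annule. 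On conclut $\nabla \hat{D}_\varphi(p_{\phi^\infty},p_T)=0$, et $\phi^\infty$ est bien un point stationnaire.

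Pour la seconde partie, lorsque $\nabla \hat{D}_\varphi$ n'est pas d\'efinie ou continue, on remplace l'\'equation d'optimalit\'e du premier ordre par l'inclusion sous-diff\'erentielle $0\in\partial \hat{D}_\varphi(p_{\phi^{k_j}},p_T) + \frac{1}{n}\nabla_1 D_\psi(\phi^{k_j},\phi^{k_j-1})$, valable sous la semicontinuit\'e inf\'erieure de $\phi\mapsto\hat{D}_\varphi(p_\phi,p_T)$. La fermeture du graphe du sous-diff\'erentiel (au sens de Clarke ou limitant) permet de passer cette inclusion \`a la limite et, combin\'ee \`a l'annulation du terme proximal obtenue comme ci-dessus, d'obtenir $0\in\partial \hat{D}_\varphi(p_{\phi^\infty},p_T)$, c'est-\`a-dire que $\phi^\infty$ est un point stationnaire g\'en\'eralis\'e.
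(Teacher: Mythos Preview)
Your proof is correct and follows essentially the same strategy as the paper's: extract a convergent subsequence together with the corresponding predecessors, write the first-order optimality condition, pass to the limit, and show that the proximal gradient term vanishes by first establishing $D_\psi(\phi^\infty,\phi^{\infty-1})=0$, then using the explicit form of $D_\psi$ together with $\psi'(1)=0$ to conclude $\nabla_1 D_\psi(\phi^\infty,\phi^{\infty-1})=0$. The only cosmetic difference is that you obtain $D_\psi(\phi^{k+1},\phi^k)\to 0$ along the full sequence via a telescoping-sum argument, whereas the paper passes the decrease inequality to the limit directly along the chosen subsequence using that $(\hat{D}_\varphi(p_{\phi^k},p_T))_k$ converges; both routes are equivalent. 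The treatment of the non-differentiable case via closedness/outer semicontinuity of the subdifferential graph also matches the paper's argument.
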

\begin{example}
Dans un m\'elange de deux composantes Gaussiennes:
\[p_{\lambda,\mu}(x) = \frac{\lambda}{\sqrt{2\pi}} e^{-\frac{1}{2}(x-\mu_1)^2} + \frac{1-\lambda}{\sqrt{2\pi}} e^{-\frac{1}{2}(x-\mu_2)^2},\]
on peut d\'emontrer avec notre approche que les points limites g\'en\'er\'es par l'algorithme EM sont des points stationnaires de la log-vraisemblance d\`es que le vecteur initial $\phi^{0}$ v\'erifie la condition suivante:
\[J(\phi^0)>\max\left[J\left(0,\infty,\frac{1}{n}\sum_{i=1}^n{y_i}\right),\; J\left(1,\frac{1}{n}\sum_{i=1}^n{y_i},\infty\right)\right],\]
\end{example}
\noindent o\`u $J$ est la log-vraisemblance. D'autres exemples et discussions sont pr\'esent\'es dans le chapitre ci-dessous. Les simulations num\'eriques men\'ees sur des m\'elanges de Gaussiens et des m\'elanges de Weibull montrent que l'algorithme proximal marche, et nous arrivons \`a calculer les estimateurs bas\'es sur les formes duales pr\'esent\'ees dans le chapitre pr\'ec\'edent. Une application de notre algorithme proximal sur le density power divergence de \cite{BasuMPD} est \'egalement pr\'esent\'ee.

\section{Deuxi\`eme partie: Mod\`eles de m\'elanges semiparam\'etriques \`a deux composantes dont l'une est inconnue}
Un mod\`ele de m\'elange semiparam\'etrique \`a deux composantes est d\'efinie par:
\begin{equation}
f(x) = \lambda f_1(x|\theta) + (1-\lambda) f_0(x), \qquad \text{for } x\in\mathbb{R}^r,
\label{eqn:GeneralSemiParaMixIntro}
\end{equation}
o\`u $\lambda$ et $\theta$ sont deux param\`etres \`a estimer. $f_0$ est consid\'er\'ee inconnue durant l'estimation.
\subsection{Chapitre 3: Mod\`eles de m\'elanges semiparam\'etriques \`a deux composantes dont l'une est d\'efinie par des contraintes lin\'eaires sur sa distribution}
Le mod\`ele de m\'elange semiparam\'etrique (\ref{eqn:GeneralSemiParaMixIntro}) a \'et\'e \'etudi\'e et employ\'e dans certaines applications r\'ecemment par plusieurs auteurs, \cite{Bordes06b}, \cite{Robin}, \cite{Song}, \cite{JunMaDiscret} et \cite{Xiang}. Dans ces papiers, le mod\`ele n'a pas \'et\'e trait\'e avec la d\'efinition g\'en\'erale donn\'ee ci-dessus. Certains auteurs ont consid\'er\'e que $\theta$ est connu de mani\`ere \`a ce que la composante param\'etrique soit enti\`erement connue. D'autres auteurs ont consid\'er\'e une distribution pr\'ecise comme le Gaussien pour $f_1$ sans aucune g\'en\'eralisation \`a d'autres familles de distributions. 	\\
Plusieurs m\'ethodes d'estimation du mod\`ele de m\'elanges semiparam\'etrique (\ref{eqn:GeneralSemiParaMixIntro}) ont \'et\'e introduites. \cite{Bordes06b} proposent d'\'etudier le mod\`ele (\ref{eqn:GeneralSemiParaMixIntro}) lorsque $r=1$, $f_0$ est sym\'etrique par rapport \`a une valeur inconnue $\mu_0$ et $f_1$ est enti\`erement connue, c.\`a.d. $\theta$ est connu. \cite{Song} proposent d'\'etudier le mod\`ele (\ref{eqn:GeneralSemiParaMixIntro}) lorsque $f_1$ est un Gaussien centr\'e en 0 avec une variance inconnue. Ils supposent en plus que $f_0(0)=0$. D'autres auteurs ont propos\'e des m\'ethodes de type EM, voir \cite{Robin}, \cite{Song} et \cite{JunMaDiscret}. Une m\'ethode bas\'ee sur la distance de Hellinger a \'et\'e propos\'ee par \cite{Xiang}, mais nous n'en parlons pas, car l'algorithme pr\'esent\'e dans leur article n'est pas claire et contient des calculs int\'egrales qui ne se font pas avec des m\'ethodes num\'eriques. \\
Ces m\'ethodes d'estimation ne sont pas bas\'ees sur une th\'eorie solide except\'ee la m\'ethode de \cite{Bordes06b}. De plus, le comportement asymptotique des algorithmes propos\'es n'a pas \'et\'e \'etudi\'e. La convergence des m\'ethodes it\'eratives de type EM n'a pas \'et\'e \'etablie non plus. La m\'ethode de \cite{Bordes06b} exploite la structure de sym\'etrie impos\'ee sur $f_0$ pour construire un estimateur consistant et asymptotiquement Gaussien, voir \cite{Bordes10}. \\
L'article de \cite{Xiang} illustre une comparaison entre plusieurs m\'ethodes d'estimation pour le mod\`ele semiparam\'etrique (\ref{eqn:GeneralSemiParaMixIntro}) lorsque $\theta$ est connu. Les m\'ethodes donnent de bonnes performances sans qu'il y ait une m\'ethode gagnante. Les donn\'ees ont \'et\'e g\'en\'er\'ees par des m\'elanges de deux Gaussiens. Nous avons refait des simulations similaires mais en consid\'erant $\theta$ inconnu. Les r\'esultats n'ont pas \'et\'e satisfaisants. La m\'ethode de \cite{Bordes10} a tendance \`a donner de bonnes r\'esultats pour une proportion basse de la partie param\'etrique, mais non pas tr\`es proche de z\'ero. Les autres m\'ethodes ont tendance \`a donner de bonnes performances lorsque la proportion de la partie param\'etrique est \'elev\'ee. Nous croyons que le probl\`eme vient du degr\'e de difficult\'e du mod\`ele de m\'elange semiparam\'etrique. L'ajout d'une information a priori comme la sym\'etrie de $f_0$ a permis d'am\'eliorer l'estimation et de mieux \'etudier la th\'eorie li\'ee \`a la m\'ethode.\\
Suivant l'id\'ee de \cite{Bordes06b}, nous proposons d'ajouter une information a priori relativement g\'en\'erale de mani\`ere \`a ce que nous puissions bien estimer le mod\`ele de m\'elange semiparam\'etrique. Nous proposons d'ajouter une information lin\'eaire comme les contraintes de moments. Les contraintes lin\'eaires peuvent \^etre trait\'ees avec des outils d'analyse convexe. Nous d\'efinissons ainsi notre mod\`ele de m\'elange semiparam\'etrique sous des contraintes lin\'eaires sur la composante inconnue par:
\begin{eqnarray}
P(.| \phi) & = &  \lambda P_1(.|\theta) + (1-\lambda) P_0 \quad \text{s.t. } \nonumber\\
P_0\in\mathcal{M}_{\alpha} & = & \left\{Q \in M \text{ s.t. } \int_{\mathbb{R}^r}{dQ(x)}=1, \int_{\mathbb{R}^r}{g(x)dQ(x)}=m(\alpha) \right\}
\label{eqn:SetMalphaIntro}
\end{eqnarray}
o\`u $g(x)=(g_1(x),\cdots,g_{\ell}(x))$ et $m(\alpha)=(m_1(\alpha),\cdots,m_{\ell}(\alpha))$.\\
L'estimation d'un mod\`ele semiparam\'etrique d\'efini par des contraintes lin\'eaires a \'et\'e \'etudi\'ee par \cite{BroniaKeziou12}. Les auteurs proposent d'estimer un mod\`ele semiparam\'etrique par des $\varphi-$divergences. \cite{AlexisGSI13} ont travaill\'e avec des mod\`eles semiparam\'etriques sous des contraintes de L-moments. Afin d'exploiter leurs m\'ethodologies, nous devons travailler plut\^ot sur un mod\`ele d\'efini par $P_0$. En effet, supposons avoir un \'echantillon $X_1,\cdots,X_n$ distribu\'e selon $P_T$ un m\'elange de deux lois $P_1(.|\theta^*)$ et $P_0^*$. Nous avons:
\begin{equation}
P_0^* = \frac{1}{1-\lambda^*} P_T - \frac{\lambda^*}{1-\lambda^*} P_1(.|\theta^*).
\label{eqn:TrueP0modelIntro}
\end{equation}
D\'efinissons l'ensemble:
\begin{equation}
\mathcal{N} = \left\{Q=\frac{1}{1-\lambda} P_T - \frac{\lambda}{1-\lambda} P_1(.|\theta), \quad \lambda\in(0,1),\theta\in\Theta\right\}.
\label{eqn:SetNnewModelIntro}
\end{equation}
Notons que $P_0^*$ appartient \`a $\mathcal{N}$ pour $\lambda=\lambda^*$ et $\theta=\theta^*$. Par ailleurs, comme $P_0^*$ v\'erifie l'ensemble de contraintes de $\mathcal{M}_{\alpha^*}$ pour un $\alpha^*$ inconnu, nous pouvons \'ecrire:
\[P_0^*\in \mathcal{N} \bigcap \cup_{\alpha\in\mathcal{A}}\mathcal{M}_{\alpha}.\]
Alors, il est raisonnable de d\'efinir une proc\'edure d'estimation qui minimise la distance entre $\mathcal{N}$ et $\cup_{\alpha\in\mathcal{A}}\mathcal{M}_{\alpha}$. Cette distance est atteinte en $P_0^*$. Nous avons alors:
\begin{equation}
(\lambda^*,\theta^*,\alpha^*) \in \arginf_{\lambda,\theta,\alpha}\inf_{P_0\in\mathcal{M}_{\alpha}}D_{\varphi}\left(P_0,\frac{1}{1-\lambda} P_T - \frac{\lambda}{1-\lambda} P_1(.|\theta)\right).
\label{eqn:MomentTrueEstimProcIntro}
\end{equation}
Ceci est un probl\`eme d'optimisation sur un espace de dimension infinie. Pour le r\'esoudre, nous utilisons un r\'esultat de dualit\'e de Fenchel-Legendre, voir Proposition 1.4 de \cite{AlexisThesis} (voir \'egalement Proposition 4.2 de \cite{BroniaKeziou12}).
\begin{eqnarray}
(\lambda^*,\theta^*,\alpha^*)  & = &  \arginf_{\phi}\inf_{Q\in\mathcal{M}_{\alpha}} D_{\varphi}\left(Q,\frac{1}{\lambda-1} P_T - \frac{\lambda}{1-\lambda} P_1(.|\theta)\right) \nonumber\\
& = & \arginf_{\phi}\sup_{\xi\in\mathbb{R}^{l+1}} \xi^t m(\alpha)  - \frac{1}{1-\lambda}\int{\psi\left(\xi^t g(x)\right) dP_T(x)}\nonumber \\
	&  & + \frac{\lambda}{1-\lambda} \int{\psi\left(\xi^t g(x)\right) dP_1(x|\theta)}. \label{eqn:MomentEstimProcIntroTrue}
\end{eqnarray}
o\`u $\psi(t)=\sup_x tx-\varphi(x)$. Dans cette formule, nous avons $m(\alpha)=(1,m_1(\alpha),\cdots,m_{\ell}(\alpha))$. Il est possible maintenant d'estimer le triplet $(\lambda^*,\theta^*,\alpha^*)$ \`a la base d'un \'echantillon $X_1,\cdots,X_n$ par:
\begin{eqnarray}
(\hat{\lambda}, \hat{\theta},\hat{\alpha}) & = & \arginf_{\lambda, \theta,\alpha} \sup_{\xi\in\mathbb{R}^{l+1}} \xi^t m(\alpha)  - \frac{1}{1-\lambda}\frac{1}{n}\sum_{i=1}^n{\psi\left(\xi^t g(X_i)\right)} \nonumber\\
 &  & \qquad \qquad \qquad + \frac{\lambda}{1-\lambda} \int{\psi\left(\xi^t g(x)\right) dP_1(x|\theta)}.
\label{eqn:MomentEstimProcIntro}
\end{eqnarray}
Nous prouvons que cet estimateur est consistant et asymptotiquement Gaussien sous des hypoth\`eses standards.
\begin{example} \label{ex:Chi2MomentIntro}
Prenons l'exemple d'un mod\`ele de m\'elange \`a deux composantes dont l'une est d\'efinie par trois contraintes de moments (les trois premiers moments). L'ensemble $\mathcal{M}_{\alpha}$ est d\'efini par:
\[\mathcal{M}_{\alpha} = \left\{Q: \int{dQ(x)}=1,\; \int{xdQ(x)}=m_1(\alpha),\; \int{x^2dQ(x)}=m_2(\alpha),\; \int{x^3dQ(x)}=m_3(\alpha)\right\}.\]
Si $\varphi(t)=(t-1)^2/2$, alors $\psi(t)=\frac{1}{2}t^2+t$ et l'optimum sur $\xi$ est donn\'e par:
\[\xi(\phi) = \Omega^{-1}\left(m(\alpha) - \int{g(x)\left(\frac{1}{1-\lambda}dP(x)-\frac{\lambda}{1-\lambda}dP_1(x|\theta)\right)}\right), \text{ pour } \phi\in\Phi^+.\]
o\`u 
\[\Omega = \int{g(x)g(x)^t\left(\frac{1}{1-\lambda}dP(x)-\frac{\lambda}{1-\lambda}dP_1(x|\theta)\right)}. \]
$\Phi^+$ est l'ensemble de param\`etres pour lesquels la fonction objective est concave par rapport \`a $\xi$. En d'autres termes, $\Phi^+=\{\phi : \Omega \text{ est sym\'etrique d\'efinie positive}\}$. Soit $M_i$ le moment d'ordre $i$ de $P_T$. D\'enotons \'egalement $M_i^{(1)}(\theta)$ le moment d'ordre $i$ de la composante param\'etrique $P_1(.|\theta)$.
\[M_i = \mathbb{E}_{P_T}[X^i],\qquad M_i^{(1)}(\theta)=\mathbb{E}_{P_1(.|\theta)}[X^i].\]
Un calcul simple montre que:
\begin{eqnarray*}
\Omega & = & \int{g(x)g(x)^t\left(\frac{1}{1-\lambda}dP(x)-\frac{\lambda}{1-\lambda}dP_1(x|\theta)\right)} \\
  & = & \left[\frac{1}{1-\lambda}M_{i+j-2} - \frac{\lambda}{1-\lambda}M_{i+j-2}^{(1)}(\theta)\right]_{i,j\in\{1,\cdots,4\}}.
\end{eqnarray*}
et la fonction objective dans (\ref{eqn:MomentEstimProcIntroTrue}) est donn\'ee par:
\begin{multline*}
H(\phi,\xi) = \xi^tm(\alpha) - \left[\frac{1}{2}\xi_1^2+\xi_1+(\xi_1\xi_2+\xi_2)\left(\frac{1}{1-\lambda}M_1-\frac{\lambda}{1-\lambda}M_1^{(1)}(\theta)\right)\right. \\ +(\xi_2^2/2+\xi_1\xi_2+\xi_3)\left(\frac{1}{1-\lambda}M_2-\frac{\lambda}{1-\lambda}M_2^{(1)}(\theta)\right) + (\xi_1\xi_4+\xi_2\xi_3+\xi_4)\left(\frac{1}{1-\lambda}M_3-\frac{\lambda}{1-\lambda}M_3^{(1)}(\theta)\right) \\
+ (\xi_3^2/2+\xi_2\xi_4)\left(\frac{1}{1-\lambda}M_4-\frac{\lambda}{1-\lambda}M_4^{(1)}(\theta)\right) + \xi_3\xi_4\left(\frac{1}{1-\lambda}M_5-\frac{\lambda}{1-\lambda}M_5^{(1)}(\theta)\right) \\
\left. + \xi_4^2/2 \left(\frac{1}{1-\lambda}M_6-\frac{\lambda}{1-\lambda}M_6^{(1)}(\theta)\right)\right].
\end{multline*}
Cet exemple montre que notre estimateur peut \^etre calcul\'e de mani\`ere efficace et avec une complexit\'e lin\'eaire sans que la dimension des donn\'ees intervienne.
\end{example}
Des simulations num\'eriques ont \'et\'e men\'ees afin de tester la validit\'e de notre approche et de comparer sa performance aux m\'ethdes existantes.
\subsection{Chapitre 4: Mod\`eles de m\'elanges semiparam\'etrique \`a deux composantes dont l'une est d\'efinie par des contraintes  de L-moments}
Le sujet de ce chapitre est consid\'er\'e comme la suite du chapitre pr\'ec\'edent. Nous avons propos\'e une structure pour un mod\`ele de m\'elange semiparam\'etrique \`a deux composantes dont l'une est d\'efinie par des contraintes lin\'eaires. En prenant des contraintes de moments, on est aper\c{c}u que les chiffres de calculs explosent facilement. Exemple \ref{ex:Chi2MomentIntro} montre le cas des trois premiers moments impos\'es sur la composante inconnue $P_0$. Le calcul de la matrice $\Omega$ ainsi que la fonction objective ensuite $H(\phi,\xi)$ est une arithm\'etique entre les moments de la composante param\'etrique et les moments du m\'elange. Avec les trois premiers moments, nous avons d\'ej\`a \`a calculer les moments jusqu'\`a l'ordre 6. Si l'on travaille avec des distributions \`a queues lourdes, le moment d'ordre 6 prendra des valeurs d'ordre $10^{10}$ voire plus. Les moments du m\'elange eux-m\^emes seront remplac\'es durant l'estimation par des moments empiriques. Ceux-ci explosent rapidement pour un \'echantillon donn\'e m\^eme pour des distributions \`a queues l\'eg\`eres. En effet, le calcul de l'inverse de la matrice $\Omega$ devient d\'elicat. Par exemple, durant nos simulations num\'eriques, il n'a pas \'et\'e possible d'utiliser une m\'ethode num\'erique pour inverser la matrice $\Omega$ car elle a eu une sensibilit\'e \'elev\'ee dans certains voisinages des param\`etres $\phi$, et par cons\'equent, nous avons calcul\'e l'inverse avec des m\'ethodes d'inversion par bloc directes.\\ 
R\'ecemment, les L-moments ont \'et\'e propos\'es par \cite{Hoskings} et ils sont de plus en plus utilis\'es comme des alternatives des moments standards. Les L-moments sont repr\'esentatifs et caract\'erisent la loi de probabilit\'e d\`es que son esp\'erance existe, voir Th\'eor\`eme 1 de \cite{Hoskings}. De plus, les quatre premiers L-moments sont des indicateurs de la moyenne, l'\'echelle, le skewness et le kurtosis. Ces premi\`eres propri\'et\'es sont d\'ej\`a int\'eressantes pour consid\'erer les L-moments. Nous citons aussi le fait que les calculs num\'eriques des L-moments ne s'explosent pas facilement et dans nos simulations, les valeurs num\'eriques ont \'et\'e toujours proches de 1 et aucun probl\`eme de sensibilit\'e de matrices n'a \'et\'e rencontr\'e.\\
Avant de proc\'eder \`a l'introduction du nouveau mod\`ele, nous rappelons la d\'efinition des L-moments.
Soient $X_{1:n} < \ldots < X_{n:n} $ les statistiques d'ordre associ\'ees \`a un \'echantillon $X_1,\cdots,X_n$ i.i.d. ayant une fonction de r\'epartition $\mathbb{F}_T$.
\begin{definition}
Le L-moment d'ordre $r$, not\'e $\lambda_r$, $r=1,2,\ldots$ est d\'efini par:
\begin{equation*}
  \lambda_r = \frac{1}{r} \sum_{k=0}^{r-1}(-1)^k \binom{r-1}{k}\mathbb{E}\left( X_{r-k:r} \right).
\end{equation*}
\end{definition}
Une propri\'et\'e tr\`es int\'eressante et essentielle des L-moments est qu'ils sont lin\'eaires en les mesures de quantile. Une mesure de quantile est d\'efinie par le moyen de la fonction quantile de la mani\`ere suivante. Pour tout bor\'elien de $\mathcal{B}([0,1])$
\[{\bf{F}}^{-1}(B)=\int_0^1{\ind{x\in B}d\mathbb{F}^{-1}(x)} \in\mathbb{R}\cup\{-\infty,+\infty\}.\]
Les L-moments peuvent \^etre r\'e\'ecrits par
\begin{equation}
\lambda_r = -\int_{\mathbb{R}}{K_r(t)d\F^{-1}(t)}, \qquad r\geq 2
\label{eqn:LmomRepIntShiftLegIntro}
\end{equation}
o\`u
\begin{equation}
K_r(t) = \int_0^t{L_{r-1}(u)du} = \sum_{k=0}^{r-1}{\frac{(-1)^{r-k}}{k+1}\binom{r}{k}\binom{r+k}{k}t^{k+1}}
\label{eqn:IntShiftLegPolyIntro}
\end{equation}
sont les polyn\^omes de Legendre translat\'es et int\'egr\'es. La lin\'earit\'e des L-moments par rapport aux quantiles est la cl\'e essentielle pour construire notre m\'ethode d'estimation dans un m\'elange semiparam\'etrique bas\'ee sur le r\'esultat de dualit\'e de Fenchel-Legendre. Nous d\'efinissons notre mod\`ele de m\'elange semiparam\'etrique \`a deux composantes dont l'une est d\'efinie par des contraintes de L-moments par:
\begin{eqnarray}
P(.| \phi) & = &  \lambda P_1(.|\theta) + (1-\lambda) P_0 \quad \text{s.t. } \nonumber\\
\F_0^{-1}\in\mathcal{M}_{\alpha} & = & \left\{\Q^{-1} \in M^{-1}, \Q^{-1}\ll\F_0^{-1} \text{ s.t. } \int_{0}^1{K(u)d\Q^{-1}}=m(\alpha)\right\}
\label{eqn:SetMalphaLmomIntro}
\end{eqnarray}
o\`u $m(\alpha) = (m_2(\alpha),\cdots,m_{\ell-1}(\alpha))$. Nous d\'efinissons de mani\`ere similaire au chapitre pr\'ec\'edent un mod\`ele par le moyen de $P_0$. D\'efinissons les ensembles:
\begin{eqnarray*}
\Phi^+ & = & \left\{(\lambda,\theta)\in(0,1)\times\Theta : \frac{1}{1-\lambda}\mathbb{F}_T - \frac{\lambda}{1-\lambda}\mathbb{F}_1(.|\theta) \text{ est une fonction de r\'epartition}\right\}, \\
\mathcal{N}^{-1} & = &  \left\{\Q^{-1} \in M^{-1}\; : \; \exists (\lambda,\theta)\in\Phi^+ \text{ s.t. } \mathbb{Q}^{-1} = \left(\frac{1}{1-\lambda}\mathbb{F}_T - \frac{\lambda}{1-\lambda}\mathbb{F}_1(.|\theta)\right)^{-1}\right\}.
\end{eqnarray*}
L'ensemble $\Phi^+$ repr\'esente l'ensemble effectif des param\`etres concern\'es dans le nouveau mod\`ele \'ecrit avec $P_0$. En effet, un couple $(\lambda,\theta)$ ne d\'efinit pas en g\'en\'eral une fonction de r\'epartition ayant la forme $\frac{1}{1-\lambda}\mathbb{F}_T - \frac{\lambda}{1-\lambda}\mathbb{F}_1(.|\theta)$. Pour cela, il est important pour le moment de ne garder que les param\`etres qui rendent cette fonction une fonction de r\'epartition. Nous aurons la possibilit\'e plus tard d'ignorer ce probl\`eme et de travailler sur tout l'ensemble $\Phi$.\\
Nous avons:
\[{\mathbb{F}_0^*}^{-1}\in \mathcal{N}^{-1} \bigcap \cup_{\alpha}\mathcal{M}_{\alpha}.\]
Donc, il est raisonnable de d\'efinir une proc\'edure d'estimation qui minimise une distance entre les deux ensembles $\mathcal{N}^{-1}$ et $\cup_{\alpha}\mathcal{M}_{\alpha}$.
\begin{equation}
(\lambda^*,\theta^*,\alpha^*) \in \arginf_{(\lambda,\theta,\alpha)\in\Phi^+}\inf_{\F_0^{-1}\in\mathcal{M}_{\alpha}} D_{\varphi}\left(\F_0^{-1},\left(\frac{1}{1-\lambda}\F_T - \frac{\lambda}{1-\lambda}\F_1(.|\theta)\right)^{-1}\right).
\label{eqn:EstimProcQuantileFunsIntro}
\end{equation}
Afin de r\'esoudre ce probl\`eme d'optimisation qui est men\'e sur un espace de dimension infinie, nous utilisons \`a nouveau la Proposition 1.4 de \cite{AlexisThesis} (voir \'egalement Proposition 4.2 de \cite{BroniaKeziou12}) pour \'ecrire:
\[(\lambda^*,\theta^*,\alpha^*) \in \arginf_{(\lambda,\theta,\alpha)\in\Phi^+} \sup_{\xi\in\mathbb{R}^{\ell-1}} \xi^t m(\alpha) - \int_0^1{\psi\left(\xi^tK(u)\right)d\left(\frac{1}{1-\lambda}\F_T - \frac{\lambda}{1-\lambda}\F_1(.|\theta)\right)^{-1}(u)}.\]
En utilisant le Lemme 1.2 de \cite{AlexisThesis}, nous pouvons \'ecrire:
\begin{equation}
(\lambda^*,\theta^*,\alpha^*) \in \arginf_{(\lambda,\theta,\alpha)\in\Phi^+}\sup_{\xi\in\mathbb{R}^{\ell-1}} \xi^tm(\alpha) - \int_{\mathbb{R}}{\psi\left(\xi^tK\left(\frac{1}{1-\lambda}\mathbb{F}_T(x) - \frac{\lambda}{1-\lambda}\mathbb{F}_1(x|\theta)\right)\right)dx}.
\label{eqn:EstimProcPhiPLusIntro}
\end{equation}
Ceci est une proc\'edure d'estimation o\`u la fonction de r\'epartition g\'en\'erant les donn\'ees $\mathbb{F}_T$ peut \^etre approxim\'ee par sa version empirique afin d'estimer le triple $(\lambda^*,\theta^*,\alpha^*)$ \`a la base d'un \'echantillon donn\'e. Cependant, la caract\'erisation de l'ensemble $\Phi^+$ ici n'est pas \'evidente et tr\`es co\^uteuse num\'eriquement. De plus, l'ensemble $\Phi^+$ pourrait prendre des formes qui ne sont pas ad\'equates pour les algorithmes d'optimisation num\'eriques, voir figure (\ref{fig:DiffFormPhiPlusIntro}).
\begin{figure}[ht]
\centering
\includegraphics[scale=0.45]{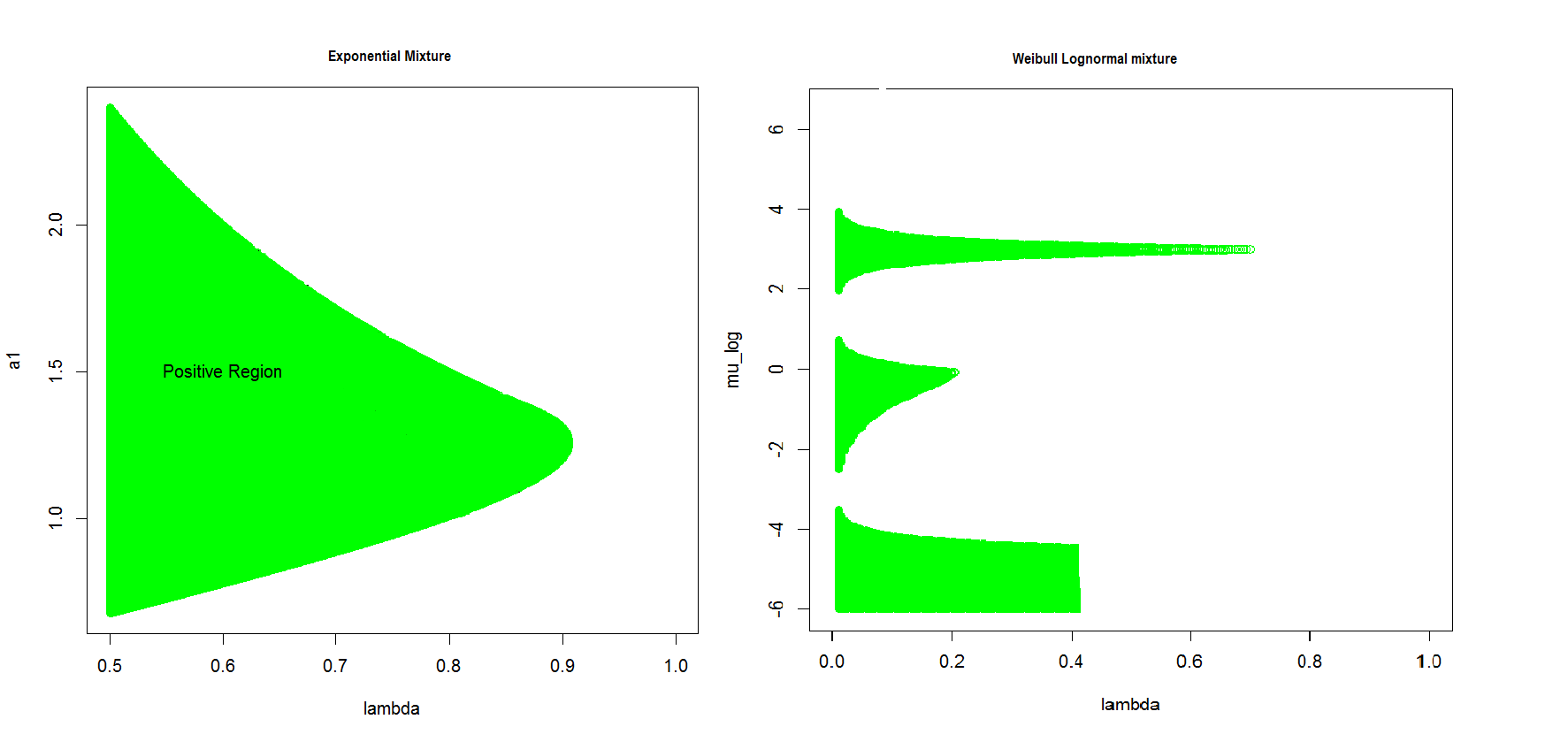}
\caption{Differentes formes de l'ensemble $\Phi^+$. Pour le m\'elange Weibull-Lognormal, c'est le Weibull qui est la composante semiparametrique.}
\label{fig:DiffFormPhiPlusIntro}
\end{figure}
Afin de r\'esoudre ce probl\`eme, nous montrons que $\phi^*=(\lambda^*,\theta^*,\alpha^*)$ est un infimum global de la fonction $H(\phi,\xi(\phi))$ o\`u:
\[H(\phi,\xi)  =  \xi^t m(\alpha) - \int{\psi\left[\xi^tK\left(\frac{1}{1-\lambda} \mathbb{F}_T(y) - \frac{\lambda}{1-\lambda} \mathbb{F}_1(y|\theta)\right)\right]dy},\]
et 
\[\xi(\phi) = \argsup_{\xi\in\mathbb{R}^{\ell-1}} H(\phi,\xi).\]
Donc, si la fonction $H(\phi,\xi(\phi))$ n'a qu'un seul infimum global, ceci ne sera autre que $\phi^*$. Cela justifie notre proc\'edure d'estimation:
\begin{equation}
\phi^* = \arginf_{(\alpha,\theta,\lambda)\in\Phi}\sup_{\xi\in\mathbb{R}^{\ell-1}} \xi^t m(\alpha) - \int{\psi\left[\xi^tK\left(\frac{1}{1-\lambda} \mathbb{F}_T(x) - \frac{\lambda}{1-\lambda} \mathbb{F}_1(x|\theta)\right)\right]dx}.
\label{eqn:EstimProcLmomDualCDFVersionTrueIntro}
\end{equation}
Avec un \'echantillon $X_1,\cdots,X_n$ distribu\'e selon $\mathbb{F}_T$, nous estimons $\phi^*$ par:
\begin{equation}
\hat{\phi} = \arginf_{(\alpha,\theta,\lambda)\in\Phi}\sup_{\xi\in\mathbb{R}^{\ell-1}} \xi^t m(\alpha) - \int{\psi\left[\xi^tK\left(\frac{1}{1-\lambda} \mathbb{F}_n(x) - \frac{\lambda}{1-\lambda} \mathbb{F}_1(x|\theta)\right)\right]dx}.
\label{eqn:EstimProcLmomDualCDFVersionIntro}
\end{equation}
Nous montrons que cet estimateur est consistant et asymptotiquement Gaussien sous des hypoth\`eses standards. Les simulations num\'eriques montrent le gain de l'utilisation des contraintes de L-moments par rapport aux moments standard surtout lorsque la proportion de la composante param\'etrique est tr\`es basse; 0.05 voire 0.01.
\begin{example}
Prenons le cas d'une divergence de $\chi^2$ pour $\varphi(t)=(t-1)^2/2$ et $\psi(t)=t^2/2+t$. La fonction objective $H(\phi,\xi)$ est donn\'ee par:
\[H(\phi,\xi) = \xi^tm(\alpha) - \int{\frac{1}{2}\left(\xi^tK\left(\mathbb{F}_0(y|\phi)\right)\right)^2 + \xi^tK\left(\mathbb{F}_0(y|\phi)\right) dy},\]
o\`u 
\[\mathbb{F}_0(y|\phi) = \frac{1}{1-\lambda} \mathbb{F}_T(y) - \frac{\lambda}{1-\lambda} \mathbb{F}_1(y|\theta).\]
La fonction $H$ est un polyn\^ome de degr\'ee 2 en $\xi$. Pour tout $\phi\in\Phi$, le supremum par rapport \`a $\xi$ est donn\'e par:
\[\xi(\phi) = \Omega^{-1}\left(m(\alpha) - \int{K(\mathbb{F}_0(y|\phi))}dy\right),\]
o\`u
\[\Omega = \int{K\left(\mathbb{F}_0(y|\phi)\right)K\left(\mathbb{F}_0(y|\phi)\right)^tdy}.\]
La matrice Hessienne de $\xi\mapsto H(\phi,\xi)$ est \'egale \`a $-\Omega$, et donc c'est une matrice sym\'etrique d\'efinie n\'egative pour tout $\phi$ dans $\Phi$. Par cons\'equent, $\xi(\phi)$ est un supremum global de $\xi\mapsto H(\phi,\xi)$.
\end{example}

\part{Robust Estimation Using \texorpdfstring{$\varphi-$}{phi-}Divergences with Application to Parametric Mixture Models}
\chapter{Estimation using a phi-divergence}
The maximum likelihood estimation is a simple and efficient method to estimate unknown parameters of a given model. The most common drawback of this method is its sensibility to contamination and misspecification. From the first years of the twentieth century, many researchers such as Pearson, Hellinger, Kullback, Neymann and others started developing different approaches using distance-like functions between probability density functions called divergences. Several divergence-based techniques permit to construct robust estimators, such as $\varphi-$divergences (\cite{Csiszar}, \cite{AliSilvey}), $S-$divergences (\cite{GoshSDivergence}), R\'enyi pseudodistances (see for example \cite{TomaAubin}), Bregman divergences and many others. In this work, we are particularly interested in $\varphi-$divergences on the one hand, and on the other hand, in comparing the resulting estimators and existing approaches with the maximum likelihood estimator (MLE) and some well-known divergences.\\
Estimation using $\varphi-$divergences is based on the idea of minimizing a distance between the true distribution and a given model. In practice, the true distribution is replaced by its empirical version calculated on the basis of an $n-$sample. When working with discrete models, everything goes well and a simple plug-in of the empirical distribution results in a plausible and good estimation procedure, see \cite{LindsayRAF}. The true challenge appears when we work with continuous models and smoothing techniques are apparently necessary tools. Several techniques were proposed in the literature. We give a brief summary of some of these approaches and present a new method which has very encouraging performances and properties. A comparison of several methods based on $\varphi-$divergences will be presented at the end of this chapter with an extensive simulation study on several distributions. The comparison is held with respect to the maximum likelihood estimator (MLE) and a powerful estimator called the minimum density power divergence (MDPD) introduced by \cite{BasuMPD}.
%
\section{A brief introduction about \texorpdfstring{$\varphi$}{phi}-divergences}
\subsection{Definition, useful properties and standard examples}\label{subsec:DefPhiDiv}
$\varphi$-divergences were introduced independently by \cite{Csiszar1963} (as "$f$-divergences") and \cite{AliSilvey}. Let $P$ and $Q$ be two $\sigma-$finite measures defined on $(\mathbb{R}^{r}, \mathscr{B}(\mathbb{R}^{r}))$ such that $Q$ is absolutely continuous (a.c.) with respect to (w.r.t.) $P$. Let $\varphi : \mathbb{R} \mapsto [0, +\infty]$ be a proper convex function with $\varphi(1) = 0$ and such that its domain $\textrm{dom}\varphi = \left\lbrace   x \in \mathbb{R} \;\; \textrm{such that} \;\; \varphi(x) < \infty \right\rbrace := (a_{\varphi},b_{\varphi})$ with $a_{\varphi} < 1 < b_{\varphi}$. The $\varphi$-divergence between $Q$ and $P$ is defined by:
\begin{equation}
D_{\varphi}(Q,P) = \int_{\mathbb{R}^r}{ \varphi\left( \dfrac{dQ}{dP}(x) \right)dP(x)},
\label{eqn:PhiDivergence}
\end{equation}
\n where $\dfrac{dQ}{dP}$ is the Radon-Nikodym derivative. When $Q$ is not a.c.w.r.t. $P$, we set $D_{\varphi}(Q,P) = + \infty$. When, $P = Q$ then $D_{\varphi}(Q,P) = 0$. Furthermore, if the function $x \mapsto \varphi(x)$ is strictly convex on a neighborhood of $x=1$, then 
\begin{equation}
\label{fondamental property of divergence}
D_{\varphi}(Q,P) = 0 \;\; \textrm{ if and only if} \;\; P = Q.
\end{equation}

\n In the definition of $\varphi-$divergences, we have considered the general case of $\sigma-$finite measures. In the whole Part I (Chapters 1 and 2) of this work, we will only be interested in $\varphi-$divergences between probability measures. In Chapter 3, we will be working with finite signed measures, and finally in Chapter 4, we will be working in the frame frame of the general case of $\sigma-$finite measures.\\ 
Several standard statistical divergences can be expressed as $\varphi-$divergences; the Hellinger, the Pearson's and the Neymann's $\chi^2$, and the (modified) Kullback-Leibler. They all belong to the class of Cressie-Read (see \cite{CressieRead1984}), also known as "power divergences", defined through the generator function $\varphi_{\gamma}$ given by:
\begin{equation}
\varphi_{\gamma}(x) := \dfrac{x^{\gamma}-\gamma x + \gamma -1}{\gamma(\gamma -1)},
\label{eqn:CressieReadPhi}
\end{equation}
for $\gamma=\frac{1}{2},2,-2,0,1$ respectively\footnote{For $\gamma\in\{0,1\}$, the limit is calculated since it is not well-defined. We denote $\varphi_0(x)=-\log x + x -1$ for the case of the modified Kullback-Leibler and $\varphi_1(x) = x\log x -x + 1$ for the Kullback-Leibler.}.
More details and properties can be found in \cite{LieseVajda} or \cite{Pardo}. \\
Estimators based on $\varphi-$divergences were developed in the parametric (see \cite{Beran},\cite{LindsayRAF},\cite{ParkBasu},\cite{BroniaKeziou09}) and the semiparametric setups (see \cite{BroniaKeziou12} and \cite{AlexisGSI13}). In completly nonparametric setup, we may mention the work of \cite{KarunamuniWu} on two component mixture models when both components are unknown.\\

%

\subsection{General estimation based on \texorpdfstring{$\varphi$}{phi}-divergences}
Estimation based on $\varphi-$divergences consists in finding the projection of the true distribution $P_T$ on the set $\{P_{\phi},\phi\in\Phi\}$, i.e. the model. Minimum discrepancy or minimum divergence estimators are defined by:
\begin{equation}
\phi^T = \argmin_{\phi\in\Phi} D_{\varphi}(P_{\phi},P_T).
\label{eqn:DivergenceBasedEstim}
\end{equation}
This procedure was proved to be robust in the sens that a perturbation of the model in a small neighborhood of $P_T$ would result in a small perturbation in the resulting estimates, see \cite{Donoho}. \cite{Beran} has proved that $\argmin_{\phi\in\Phi} D_{\varphi}(P_{\phi},P)$ for the case of the Hellinger divergence is continuous as a function of $P$ in a Hellinger neighborhood of $P_T$. This is also translated into an automatic robustness of the Hellinger divergence for small perturbations around the true distribution in the Hellinger topology.\\

In practice, the estimation procedure (\ref{eqn:DivergenceBasedEstim}) needs to be approximated on the basis of a dataset $X_1,\cdots,X_n$ since the true distribution is unknown. When working with discrete models, $\varphi-$divergences are approximated using a direct plug-in of the empirical distribution $P_n$. This is possible because both the model and the empirical distribution are absolutely continuous with respect to each others for large $n$. Efficient and robust estimators were derived and extensively studied; see for example \cite{Simpson} and \cite{LindsayRAF}. \\

For continuous models, the empirical distribution is no longer suitable to replace directly the true distribution since the model has a continuous support. Thus, the model is not absolutely continuous with respect to $P_n$ for any $n$ and no estimation procedure can  be produced, see \cite{BroniatowskiSeveralApplic} for a discussion about this point. Authors such as \cite{Beran}, \cite{ParkBasu} and \cite{KumarBasu} proposed to simply smooth the empirical distribution using kernels, see paragraph \ref{subsec:BeranApproach}. \cite{BasuLindsay} proposed to smooth both the model and the empirical distribution; see paragraph \ref{subsec:BLapproach}. Although smoothing the model may result in a loss of information, Basu and Lindsay show, in simple models, that this loss is rather small. They also notice that there is still a difficulty in the choice of the window and the kernel for the smoothing.\\
Recently, an approach based on some convexity arguments has been proposed independently by \cite{LieseVajdaDivergence} and \cite{BroniaKeziou2006}, see paragraph \ref{subsec:ClassicalDualFormula}. In both articles, the authors provide similar "supremal" representations of $\varphi-$divergences where a simple plug-in of the empirical distribution is possible without any smoothing techniques. The resulting estimators were called as minimum dual $\varphi-$divergence estimators (MD$\varphi$DE). Another estimator based on the dual formula called the dual $\varphi-$divergence estimator (D$\varphi$DE) was proposed. This estimator is proved to be consistent by \cite{BroniatowskiKeziou2007}, see paragraph \ref{sec:DphiDE} for more details. Since the introduction of the MD$\varphi$DE, no complete study about its robustness was proposed except for the calculus of the influence function in \cite{TomaBronia} and \cite{BroniatowskiSeveralApplic}. There were no simulation studies either, except for the paper of \cite{Frydlova}. However, in the later, the authors have considered only the case of Gaussian model where the MD$\varphi$DE coincides with the maximum likelihood estimator. \cite{Broniatowski2014} has proved that the MD$\varphi$DE coincides with the MLE on any regular exponential family, hence on a Gaussian model. Hence, the simulation results of \cite{Frydlova} shows only that known fact that the MLE is not robust.\\
The dual representation proposed by both \cite{LieseVajdaDivergence} and \cite{BroniaKeziou2006} yields estimators which perform well under the model and have efficiency comparable to the MLE\footnote{The MD$\varphi$DE is even as efficient as the MLE in regular exponential families.}. Weak and strong consistency is reached under classical conditions (see \cite{BroniatowskiKeziou2007}). Limit laws of the MD$\varphi$DE and the estimated divergence are simple and were exploited to build statistical tests. However, when we are not under the model, this approach suffers from lack of robustness. Under contamination or under misspecification, this approach does not approximate well the $\varphi-$divergence between the true distribution and the model. It even remarkably underestimates its value. We propose in the sequel a brief explanation of this problem and provide a general solution, see paragraph \ref{subsec:DefautsClassicalDualForm}. A new robust estimator called kernel-based MD$\varphi$DE is introduced. Our estimator avoids the supremal form of the MD$\varphi$DE, see paragraph (\ref{subsec:KernelSolution}). We study asymptotic properties of this estimator in Section \ref{sec:AsymptotProper}.\\
%
%
%

\section{Estimation based on \texorpdfstring{$\varphi-$}{phi-}divergences in continuous models}
In what follows, we suppose to have an i.i.d. sample $Y_1,\cdots,Y_n$ drawn from the probability distribution $P_T$. The function $K$ will denote a kernel function defined on $\mathbb{R}^r$ not necessarily symmetric. In this section, we present two general approaches to approximate a $\varphi-$divergence on the basis of a given sample.
\subsection{Beran's approach: Smoothing of the empirical distribution}\label{subsec:BeranApproach}
A simple and natural approach to approximate the $\varphi-$divergence between the true distribution of the data $P_T$ and the model is to replace $P_T$ by a smoothed version of the empirical distribution $P_n$, say $K_{n,w}$ with $w$ a smoothing parameter. An estimator of $\phi^T$ is then given by:
\begin{equation}
\hat{\phi} = \argmin_{\phi\in\Phi} \int{\varphi\left(\frac{p_{\phi}}{K_{n,w}}\right)(y)K_{n,w}(y)dy}.
\label{eqn:BeranEstimator}
\end{equation}
This method was first introduced in the context of $\varphi-$divergences by \cite{Beran} who studied the Hellinger divergence in a univariate context and proved that it is robust and asymptotically efficient in the same time. It was then generalized to the class of $\varphi-$divergences in the univariate context, see \cite{ParkBasu} and \cite{KumarBasu}. The Hellinger divergence has very favorable properties. Indeed, \cite{Jimenez} proved that no minimum power-divergence estimator performs better than the minimum Hellinger in terms of both second order efficiency and robustness. The idea of Beran was also employed in the estimation of the proportion of a nonparametric mixture model, see \cite{KarunamuniWu}. In their approach, however, we suppose to have three i.i.d. samples; a sample drawn according to each component and a sample drawn from the whole mixture. See also \cite{TangRegression} for an application on finite mixture regression models and the references therein. \\
In the multivariate context, \cite{TamuraBoos} have studied the asymptotic properties of the Hellinger divergence. Surprisingly, the estimator needs a correction term in order to converge to a multivariate Gaussian distribution at a $\sqrt{n}$ speed. In the univariate case, this correction term does not exist since it converges to zero in probability when multiplied by $\sqrt{n}$.\\

Asymptotic properties of the resulting estimators were only studied in the previous references when $K_{n,w}$ is the Parzen-Rosenblatt kernel density estimator, i.e. a symmetric kernel density estimator. In the context of nonnegative supported distributions, the use of symmetric kernels is not advised especially if there is a considerable mass near zero which is the case for example of the exponential distribution. Several techniques for bias correction were proposed, see \cite{BiasCorrSurvey} for a survey. We mention also asymmetric kernels, see for example \cite{Libengue} for a general approach. We give in the next paragraph more details especially about asymmetric kernels and provide some examples. These two solutions provided considerable improvement in the estimation of densities defined on the half real line. To the best of our knowledge, the use of asymmetric kernels in parametric estimation has not been considered in the literature. This is may be because asymmetric kernels is still a recent topic and the first paper goes back to \cite{Chen}. Moreover, the theory is not sufficiently developed yet. Indeed, consistency of asymmetric kernel density estimators is only proved on every compact subset of the domain of definition of the true density and not on the whole domain. Consistency becomes more difficult to prove when the density explodes for example near zero. Furthermore, the rules for the choice of the window are not very efficient as we will see in the simulations in Section \ref{sec:Simulations}.\\
\begin{remark}
Unlike symmetric kernels, generalization of asymmetric kernels to the multivariate case is not simple. So far, and to the best of our knowledge, there is only two recent papers which treat the multivariate case \cite{BouezmarniMultivariate} and \cite{FunkeMultivariate}. The two papers suppose that the data is bounded. Both methods can be applied in our kernel-based MD$\varphi$DE and in Beran's method (and its generalization), but more investigations are needed in order to employ them in the Basu-Lindsay approach.
\end{remark}
\subsection{The Basu-Lindsay approach: Smoothing the model}\label{subsec:BLapproach}
The idea of smoothing the empirical distribution was applied to avoid the problem of absolute continuity of the model with respect to $P_n$ when we use the later to replace the true distribution in (\ref{eqn:PhiDivergence}). \cite{BasuLindsay} argue that the use of this method requires consistency and rates of convergence for the kernel estimator. Thus, they propose to smooth not only the empirical distribution, but also the model. Indeed, smoothing equally the model $p_{\phi}$ and the empirical measure $P_n$, as in (\ref{eqn:BasuLindsayDiv}) here below, may reduce the influence of the choice of the \emph{window} on the resulting estimator. For example, if the smoothing is by convolution with a symmetric kernel $K$ such as the Gaussian kernel, the Basu-Lindsay approach is summarized in the following two lines:
\begin{eqnarray}
p_{\phi}^*(x) & = & \frac{1}{w}\int_{\mathbb{R}}{p_{\phi}(y) K\left(\frac{x-y}{w}\right)dy}; \nonumber\\
\hat{\phi} & = & \arginf_{\phi\in\Phi} \int_{\mathbb{R}}{\varphi\left(\frac{p_{\phi}^*(x)}{K_{n,w}(x)}\right)K_{n,w}(x)dx},
\label{eqn:BasuLindsayDiv}
\end{eqnarray}
where $K_{n,w}(x) = \frac{1}{nw}\sum{K\left(\frac{x-y_i}{w}\right)}$ is the Parzen-Rosenblatt symmetric-kernel estimator. For example, in the Gaussian model $\mathcal{N}(\mu,\sigma^2)$, the smoothed model is merely a Gaussian density with variance equal to $\sigma^2+h^2$. Thus, the Basu-Lindsay approach appears as if we are calculating a divergence between a \emph{weighted} version of the model and the kernel estimator.\\
The authors prove the robustness of (\ref{eqn:BasuLindsayDiv}) using the residual adjustment function (RAF), see \cite{LindsayRAF}, since the corresponding influence function is generally unbounded, keeping first order efficiency in hand. The basic problem from a theoretical point of view is that in order to eliminate the role of the smoothing window, one needs to find what the authors call a \emph{transparent} kernel\footnote{The transparency assumption here means that the smoothed score function (derivative of the log-likelihood) is proportional to the non smoothed one. The proportion rate can only be a function of the parameters.}. This is a very hard task in general as has already been mentioned in \cite{KumarBasu} for example. Basu and Lindsay have only provided three simple examples (Gaussian, Poisson and gamma) where one can provide a transparent kernel but have not shown any leads for a general method. They have also shown in simple examples that when we use non transparent kernels, loss of information is not large. Besides, consistency and rates of convergence for the kernel estimator become necessary in order to obtain the consistency of the resulting estimator.\\
We will show in the following paragraph that if we are working with non classical situations such as densities defined on $[0,\infty)$, we may encounter further difficulties in the Basu-Lindsay approach.
\subsubsection{Smoothing-the-model's effect: symmetric versus asymmetric kernels}\label{subsec:SmoothingModelEffect}
The Basu-Lindsay approach seems to be more sensitive to the choice of the \emph{kernel} than standard methods. For example, let's take the case of densities defined on $(0,\infty)$ (with zero possibly included). Simple examples of such distributions are Weibull distributions and generalized Pareto distributions (GPDs). It is well-known that estimation based on symmetric kernels is biased near zero. Thus, smoothing the model with such kernels will result in similar bias near zero. Figure \ref{fig:SmoothingeffectGamma} shows the influence of a Gaussian kernel on a GPD model. The smoothed model has a peak near zero and decreases then towards zero, and hence largely underestimates the values of the "not smoothed" model near zero. Thus, the divergence calculates a distance between a biased estimator of the true distribution and a biased model, and there is no intuitive guarantee of what should give the minimization of such function. Standard methods which do not smooth the model would suffer less from this sort of problems since the bias is only in the kernel estimator.\\
Simulation results in Section \ref{sec:Simulations} show that among the three methods which use a kernel estimator (Beran's approach, the Basu-Lindsay approach and our kernel-based MD$\varphi$DE which will be introduced later on) the Basu-Lindsay approach is the most sensitive one. Under the model, all three methods do not give satisfactory results in comparison to the MLE (or the classical MD$\varphi$DE which will be presented later on) when we use symmetric kernels. When outliers are present, they still give a better result than the MLE.\\
\begin{figure}[h]
\centering
\includegraphics[scale=0.5]{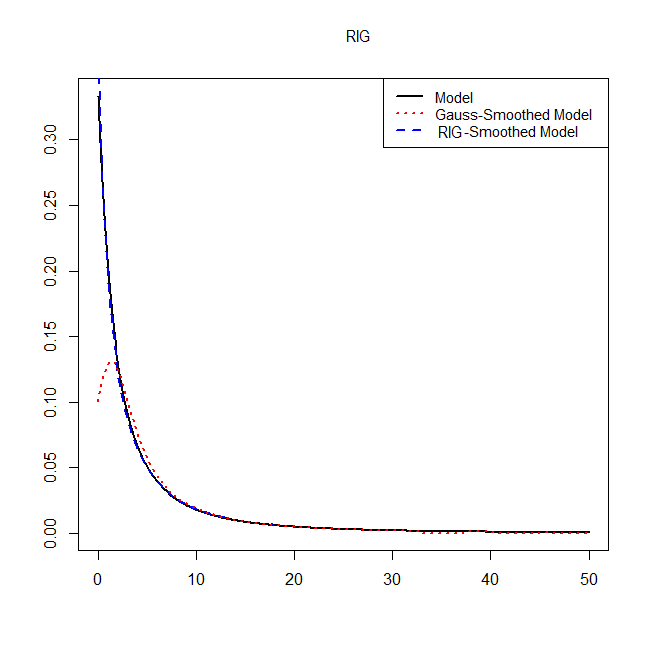}
\caption{Smoothing the model with a Gaussian kernel results in a great loss in information. The use of an asymmetric kernel such as the the reciprocal inverse Gaussian (RIG) seems to be a good alternative}
\label{fig:SmoothingeffectGamma}
\end{figure}

\noindent The solution for the previous problem is of course to either use a bias-correction method, see \cite{BiasCorrSurvey}, or to use asymmetric kernels which do not suffer from the boundary bias, see \cite{Libengue}. A more intriguing example is a Weibull distribution with shape parameter in $(0,1)$. The density function explodes to infinity as we approach from zero\footnote{Of course, if we are defining the Weibull distribution with a location parameter, the pdf explodes to infinity near the value of the location parameter.}. Cases such as GPD models can be treated efficiently using bias-correction methods since the support needs to be semi-closed. Models which have singularities such as the Weibull model can be treated using asymmetric kernels such as gamma kernels or reciprocal inverse Gaussian kernels\footnote{Asymmetric kernels have an attractive property that they can treat both bounded and unbounded densities.}. Such kernels could be employed to recover a good performance in the Basu-Lindsay approach.\\ 

Let's see how this kind of solution can be applied on the Basu-Lindsay approach. We discuss only the case of asymmetric kernels since similar arguments hold for bias-correction methods. Let $\hat{f}$ be the asymmetric-kernel estimator defined by:
\[\hat{f}(x) = \frac{1}{n c(y_1,\cdots,y_n)}\sum_{i=1}^n{K_{x,w}(y_i)},\]
where $K_{x,w}$ is the asymmetric kernel calculated at observation $y_i$, and $c(y_1,\cdots,y_n)$ is a constant which ensures integrability to 1. For example, $K$ is the gamma kernel:
\[K_{x,w}(y) = \frac{y^{x/w}}{\Gamma(1+x/w) h^{1+x/w}} e^{-y/w}, \qquad \text{for } y\in[0,\infty),\]
where $\Gamma$ is the classical gamma function. Estimator $\hat{f}$ can no longer be defined as the convolution between the asymmetric kernel and the empirical distribution in the same way as symmetric ones. Thus, the smoothed model in the Basu-Lindsay approach can no longer be obtained by simple convolution. It is given by:
\[p_{\phi}^*(x) = \int_{0}^{\infty}{\frac{1}{c(y)}K_{x,w}(y)p_{\phi}(y)dy},\]
where $c(y)$ is a function which normalizes the kernel for each value of $y$ in order to be a density. It is given by:
\[c(y) = \int_{0}^{\infty}{K_{z,w}(y)dz}.\]
Unfortunately, this normalization function cannot be calculated but numerically. Taking into account the number of integrations needed to perform such a task and the calculus of the $\varphi-$divergence afterwards which also needs numerical integration, we get a high complexity and execution time. In comparison to the classical approach of smoothing only the empirical distribution (\cite{KumarBasu}), the calculus of the smoothed model imposes two extra embedded integrals making the calculus of the $\varphi-$divergence very difficult on two levels. The first one is the execution time, and the second one is the subtlety of the whole calculus since all these integrals are carried out over slow decreasing functions on the half real line\footnote{The calculus of bounded integrals is far more simple than infinite integrals. Besides, a slow decreasing function (at the border of the its domain), even if it is smooth, is harder to be handled by numerical integration methods than fast decreasing ones.}.
\begin{remark}
We were unable to use asymmetric kernels in the Basu-Lindsay approach, because integration calculus (three embedded ones) failed even when restricting the calculus of the normalizing function $c(y)$ on a finite interval. The execution time using the statistical tool \cite{Rtool} on an i7 laptop with 8G RAM took 12 minutes for a simple calculus of the smoothed model. One can imagine now the execution time of the $\varphi-$divergence and finally the optimization over $\phi$. The method should work if one can handle efficiently the problem of numerical integrations and give close results to the case when we do not smooth the model.
\end{remark}
\begin{remark}
The use of the normalization function is necessary to get a very small loss of information. If it is not used, there will be a similar underestimation near zero to the case of symmetric kernels when applied on models defined on a semi-closed intervals.\\
\end{remark}
\subsubsection{The Varying Kernel Density Estimator}
Very recently, \cite{vKDE} have proposed a kernel-type estimator which does not contain a normalization function. Their approach is based on the so called Mellin transform to approximate the distribution function and then derive an estimate of the density function. Let $\mathbb{F}$ be a cdf and define the following operator:
\[\left(\mathcal{M}\mathbb{F}\right)(j) = \int_{0}^{\infty}{t^{-j}d\mathbb{F}(t)}=\mu_j, \qquad \text{for }j=0,1,\cdots\]
Introduce the sequence of operators $\mathcal{M}_{\alpha}^{-1}$:
\begin{equation}
\left(\mathcal{M}_{\alpha}^{-1}\mu\right)(x) = 1 - \sum_{k=0}^{\alpha}\frac{(\alpha x)^k}{k!}\sum_{j=k}^{\infty}\frac{(-\alpha x)^{j-k}}{(j-k)!}\mu_j,\qquad x\in\mathbb{R}_+.
\label{eqn:MellinInverse}
\end{equation}
Here $\mu=\{\mu_j,j=0,1,\cdots\}$ and $\alpha\rightarrow\infty$ at a specific rate. The transform $\mathcal{M}\mathbb{F}(1-z)$ where $z$ is a complex variable, is known as the Mellin transform. It is possible to recover a function from its Mellin Transform, see for example \cite{Tagliani}. Under some conditions, we may write:
\[\mathbb{F}_{\alpha} = \mathcal{M}_{\alpha}^{-1}\mathcal{M}\mathbb{F} \xrightarrow[\text{weakly}]{\alpha\rightarrow\infty} \mathbb{F}.\]
Inserting the empirical moments in (\ref{eqn:MellinInverse}), we may construct an estimator of $\mathbb{F}$ as follows:
\[\tilde{\mathbb{F}}(x) = 1 - \frac{1}{n}\sum_{i=1}^n\sum_{k=0}^{\alpha}{\frac{1}{k!}\left(\frac{\alpha}{X_i}x\right)^k\exp\left(-\frac{\alpha}{X_i}x\right)}, \qquad x\in\mathbb{R}_+.\]
Notice that the inner sum in the previous display tends to $\ind{X_i>x}$ as $\alpha\rightarrow\infty$ which means that $\tilde{\mathbb{F}}$ approximates indeed $\mathbb{F}$. Note also that the estimator $\tilde{\mathbb{F}}$ is derivable so that an estimator of the density can be deduced directly by derivation as follows:
\begin{equation}
\hat{f}_{\alpha}(x) = \frac{1}{n}\sum_{i=1}^n{\frac{1}{y_i}\frac{1}{\Gamma(\alpha)}\left(\frac{\alpha x}{y_i}\right)^{\alpha}\exp\left(-\frac{\alpha x}{y_i}\right)},
\label{eqn:MTKDE}
\end{equation}
for a "bandwidth" $\alpha\in\mathbb{N}^*$. This is called by \cite{vKDE} the varying kernel density estimator (vKDE). This estimator is different from the estimators defined based on symmetric or asymmetric kernels as explained by the authors. They provide a bias-corrected version of this estimator to reduce the bias at the boundary in practice. Nevertheless, we prefer to use (\ref{eqn:MTKDE}) because it integrates to 1 and the Basu-Lindsay approach can be performed more efficiently and reasonably in comparison to the use of asymmetric kernels when working with distributions defined on $\mathbb{R}_+$. The parameter $\alpha$ is a natural number, and (\ref{eqn:MTKDE}) is $L1$--consistent as $\alpha$ goes to infinity under suitable conditions. It even achieves the optimal rate of convergence for MSE and MISE. \\
It is important to notice that $\hat{f}_{\alpha}(0)=0$ for $\alpha\geq 1$. Thus, it is preferable in the context of density estimation to be used for densities which have value equal to 0 at 0 or for densities which are defined on $(0,\infty)$. However, in kernel-based estimation procedures, the value at zero is not important because it disappears in integration calculus. Besides, no observation will have exactly the value zero. Thus, (\ref{eqn:MTKDE}) can still be used in a parameter estimation procedure even if we are working with densities not well defined on zero or have a positive value at zero.

%
%
%

\section{A plug-in estimate: the dual formula of \texorpdfstring{$\varphi-$}{phi-}divergences}\label{sec:introductorysec}
\subsection{The minimum dual \texorpdfstring{$\phi$}{phi}-divergence estimator}\label{subsec:ClassicalDualFormula}
\cite{LieseVajdaDivergence} propose the following "supremal" representation of $\varphi-$divergences. Let $\mathcal{P}$ be a class of mutually absolutely continuous distributions such that for any triplet $P,P_T$ and $Q$, $\varphi'(dP/dQ)$ is $P_T$-integrable. Theorem 17 in \cite{LieseVajdaDivergence} states that:
\begin{equation}
D_{\varphi}(P_T,P) = \sup_{Q\in\mathcal{P}}\int{\varphi'\left(\frac{dQ}{dP}\right)dP_T} + \int{\varphi\left(\frac{dQ}{dP}\right)dP} - \int{\varphi'\left(\frac{dQ}{dP}\right)dQ}
\label{eqn:LieseVajdaRep}
\end{equation}
and the supremum is attained when $Q=P_T$.\\
\cite{BroniaKeziou2006} have also developed a similar and a more general representation of $D_{\varphi}(P,P_T)$. Let $\mathcal{F}$ be any class of $\mathcal{B}-$measurable real valued functions. Let $\mathcal{M}_{\mathcal{F}}$ be the subspace of the space of probability measures $\mathcal{M}$ defined by $\mathcal{M}_{F} = \{P\in\mathcal{M} \text{ s.t. } \int{|f|dP<\infty, \forall f\in\mathcal{F}}\}$. Assume that $\varphi$ is differentiable and strictly convex. Then, for all $P\in\mathcal{M}_{\mathcal{F}}$ such that $D_{\varphi}(P,P_T)$ is finite and $\varphi'(dP/dP_T)$ belongs to $\mathcal{F}$, the $\varphi-$divergence admits the dual representation (see Theorem 4.4 in \cite{BroniaKeziou2006}):
\begin{equation}
D_{\varphi}(P,P_T) = \sup_{f\in\mathcal{F}} \int{f dP} - \int{\varphi^*(f)dP_T},
\label{eqn:GeneralDualRep}
\end{equation}
where $\varphi^*(x)=\sup_{t\in\mathbb{R}} tx-\varphi(t)$ is the Fenchel-Legendre convex conjugate of $\varphi$. Moreover, the supremum is attained at $f=\varphi'(dP/dP_T)$.\\
When substituting $\mathcal{F}$ by the class of functions $\{\varphi'(dP/dQ)\}$, and using the property $\varphi^*(\varphi'(t)) = t\varphi'(t)-\varphi(t)$, we obtain the same representation given above in (\ref{eqn:LieseVajdaRep}). Both formulations (\ref{eqn:LieseVajdaRep}) and (\ref{eqn:GeneralDualRep}) are interesting in their own and in their proofs. The second formula gives us the opportunity to reproduce many supremal forms for the $\varphi-$divergence.\\
In a parametric setup where $dP_{\phi} = p_{\phi}dx$ for $\phi\in\Phi\subset\mathbb{R}^d$ and the true distribution generating the data is a member of the model, i.e. $P_T=P_{\phi^T}$ for some $\phi^T\in\Phi$, \cite{BroniatowskiKeziou2007} propose to use the class of functions $\mathcal{F}_{\phi} = \{\varphi'(p_{\phi}/p_{\alpha}),\alpha\in\Phi\}$. Assume that $\varphi$ and its convex dual are strictly convex. Suppose also the integrability condition:
\begin{equation}
\int{\left|\varphi'\left(\frac{p_{\phi}}{p_{\alpha}}\right)(x)\right|p_{\phi}(x)dx}<\infty,\quad \forall \alpha,\phi\in\Phi.
\label{eqn:IntegCondDualForm}
\end{equation}
Then, the dual representation of $D_{\varphi}$ in the parametric setting is now written as:
\begin{equation}
D_{\varphi}(p_{\phi},p_{\phi_T}) = \sup_{\alpha\in\Phi}\left\{\int{\varphi'\left(\frac{p_{\phi}}{p_{\alpha}}\right)(x)p_{\phi}(x)dx} - \int{\varphi^{\#}\left(\frac{p_{\phi}}{p_{\alpha}}\right)(y) p_{\phi^T}(y)dy}\right\}.
\label{eqn:ParametricDualForm}
\end{equation}
where $\varphi^{\#}(t) = t\varphi'(t)-\varphi(t)$. The idea behind this choice is that the supremum is attained when $\alpha = \phi^T$. Since $p_{\phi^T}$ is unknown, one thinks about replacing $p_{\phi^T}dy$ by the empirical distribution. This seems very natural and does not cause any problem of absolute continuity as in formula (\ref{eqn:PhiDivergence}). Moreover, no smoothing is needed. We now get the following approximation:
\begin{equation}
\hat{D}_{\varphi}(p_{\phi},p_{\phi_T}) = \sup_{\alpha\in\Phi}\left\{\int{\varphi'\left(\frac{p_{\phi}}{p_{\alpha}}\right)(x)p_{\phi}(x)dx} - \frac{1}{n}\sum_{i=1}^n{\varphi^{\#}\left(\frac{p_{\phi}}{p_{\alpha}}\right)(y_i)}\right\}.
\label{eqn:DivergenceDef}
\end{equation}
Both \cite{BroniatowskiKeziou2007} and \cite{LieseVajdaDivergence} propose to estimate the set of parameters $\phi^T$ by:
\begin{equation}
\hat{\phi}_n = \arginf_{\phi\in\Phi} \sup_{\alpha\in\Phi} \hat{D}_{\varphi}(p_{\phi},p_{\phi_T}).
\label{eqn:MDphiDEClassique}
\end{equation}
This was called by \cite{BroniatowskiKeziou2007} the minimum dual $\varphi-$divergence estimator (MD$\varphi$DE). The authors have studied the asymptotic properties and provided sufficient conditions for the consistency of this estimator. They have also built some statistical tests based on it. \cite{TomaBronia} and \cite{BroniatowskiSeveralApplic} have studied the robustness of such an estimator from an influence function (IF) point of view. The IF is unfortunately unbounded in general and does not even depend on $\varphi$ for the class of Cressie-Read functions $\varphi_{\gamma}$ presented in the introduction. This fact is still not sufficient to conclude the non robustness of the MD$\varphi$DE. It was pointed out by many authors in the context of $\varphi-$divergences that one may have an  unbounded influence function, still the resulting estimators enjoy good robustness against outliers, see \cite{Beran} for the Hellinger divergence in continuous models and \cite{LindsayRAF} for a general class of $\varphi-$divergences in discrete models. In the former paper, Beran has studied the robustness by considering the \emph{Hellinger} continuity of the approximate distribution for the estimator when the model varies in a small Hellinger neighborhood of the true distribution. In the later paper, Lindsay has studied the robustness through Pearson's residuals by introducing a new criterion called as the residual adjustment function (RAF). Robustness properties were studied through the RAF and by simulations. In the context of S-divergences, \cite{GoshSDivergence} has shown that the robustness of the resulting estimator depends on two parameters although the IF only depends on one of them.\\
So far, and to the best of our knowledge, there is not even a simulation study of the robustness of the MD$\varphi$DE although it is an estimator which, similarly to the power density estimator of \cite{BasuMPD}, does not require any smoothing or escort parameters. Besides, the asymptotic properties are proved with merely classical conditions on the model. The only simulation study is done by \cite{Frydlova} and focuses only on the Gaussian model. In their results, the MD$\varphi$DE yields similar results to the maximum likelihood estimator when no contamination is present, while they get some cases where the MD$\varphi$DE is robust under contamination, although they \emph{should not} as we will see later in paragraph \ref{subsec:DefautsClassicalDualForm}.

%
%
%

\subsection{The Dual \texorpdfstring{$\varphi-$}{phi}divergence estimator}\label{sec:DphiDE}
\subsubsection{General facts and comments}
The dual $\varphi-$divergence estimator (D$\varphi$DE) was defined in \cite{BroniatowskiKeziou2007} (see also \cite{KeziouThesis}) as the argument of the supremum in (\ref{eqn:DivergenceDef}). It is defined by:
\begin{equation}
\hat{\alpha}_n(\phi) = \argsup_{\alpha\in\Phi}\left\{\int{\varphi'\left(\frac{p_{\phi}}{p_{\alpha}}\right)(x)p_{\phi}(x)dx} - \frac{1}{n}\sum_{i=1}^n{\left[\frac{p_{\phi}}{p_{\alpha}} \varphi'\left(\frac{p_{\phi}}{p_{\alpha}}\right) - \varphi'\left(\frac{p_{\phi}}{p_{\alpha}}\right)\right](y_i)}\right\}
\label{eqn:DphiDE}
\end{equation}
for a given "escort" parameter $\phi$. This M-estimator is far more simple than the classical MD$\varphi$DE defined by (\ref{eqn:MDphiDEClassique}) since it needs only one optimization over $\alpha$ for a given choice of the escort parameter $\phi$. Besides, \cite{TomaBronia} proved that this estimator is robust in some scale and location models from an IF point of view, \emph{provided a suitable choice of the escort parameter}. \cite{TomaBronia}, \cite{TomaAubinTests} and \cite{KeziouThesis} built robust tests using this estimator. \\
\subsubsection{Relation with the density power divergences}
The minimum density power divergence (MDPD) was first introduced by \cite{BasuMPD}. It is defined by:
\begin{eqnarray}
\hat{\phi}_n & = & \arginf_{\phi\in\Phi} \int{p_{\phi}^{1+a}}(z) dz - \frac{a+1}{a}\frac{1}{n}\sum_{i}^n{p_{\phi}^{a}(y_i)} \nonumber \\
 & = & \arginf_{\phi\in\Phi} \mathbb{E}_{P_{\phi}}\left[p_{\phi}^a\right] - \frac{a+1}{a}\mathbb{E}_{P_n}\left[p_{\phi}^{a}\right].
\label{eqn:MDPDdef}
\end{eqnarray}
Let's look at the D$\varphi$DE for power divergences with $\gamma=-a<0$. It is given by:
\begin{eqnarray}
\hat{\alpha}_n & = & \argsup_{\alpha\in\Phi} \frac{1}{\gamma-1}\int{\frac{p_{\theta}^{\gamma}}{p_{\alpha}^{\gamma-1}}(x)dx} - \frac{1}{\gamma}\frac{1}{n}\sum_{i=1}^n{\left[\frac{p_{\theta}}{p_{\alpha}}\right]^{\gamma}(y_i)}\nonumber\\
 & = & \argsup_{\alpha\in\Phi} -\frac{1}{a+1}\int{\frac{p_{\alpha}^{a+1}}{p_{\theta}^{a}}(x)dx} + \frac{1}{a}\frac{1}{n}\sum_{i=1}^n{\left[\frac{p_{\alpha}}{p_{\theta}}\right]^{a}(y_i)}\nonumber\\
& = & \arginf_{\alpha\in\Phi} \int{\frac{p_{\alpha}^{a+1}}{p_{\theta}^{a}}(x)dx} - \frac{1+a}{a}\frac{1}{n}\sum_{i=1}^n{\left[\frac{p_{\alpha}}{p_{\theta}}\right]^{a}(y_i)}\nonumber \\
& = & \arginf_{\alpha\in\Phi}\mathbb{E}_{P_{\alpha}}\left[\left(\frac{p_{\alpha}}{p_{\theta}}\right)^a\right] - \frac{a+1}{a}\mathbb{E}_{P_n}\left[\left(\frac{p_{\alpha}}{p_{\theta}}\right)^a\right].
\label{DphiDENeg}
\end{eqnarray}
By comparing (\ref{eqn:MDPDdef}) and (\ref{DphiDENeg}), we can deduce that the D$\varphi$DE seems to be a penalized form of the MDPD. This penalization by a density $p_{\theta}$ creates a big trouble from a robustness point of view. The robustness of the D$\varphi$DE is now not only controlled by the divergence power $a=-\gamma$ but also through $p_{\theta}$. We have seen in the previous paragraph that the robustness of the D$\varphi$DE in a two-component Gaussian mixture varies according to the position of $\theta$ with respect to $\theta^T$ the true vector of parameters. The difficulty of the choice of this escort parameter constitutes the only drawback of the D$\varphi$DE in comparison to the MDPD. In \cite{BroniatowskiSeveralApplic}, the authors establish an interesting link between $\varphi-$divergences and the density power divergence (\ref{eqn:MDPDdef}), see their Theorem 4.1.2.

%
%
%

\section{Limitations of the MD\texorpdfstring{$\varphi$}{phi}DE and D\texorpdfstring{$\varphi$}{phi}DE}
\subsection{The influence of the escort parameter on the robustness of the D\texorpdfstring{$\varphi$}{phi}DE}
The IF of the D$\varphi$DE is given by (see \cite{TomaBronia}):
\[\text{IF}(y|\phi) = \left[\int{J_f(x)p_{\phi^T}(x)dx}\right]^{-1}\left[\int{\left(\frac{p_{\phi}}{p_{\phi^T}}\right)^{\gamma}(x)\nabla_{\phi} p_{\phi^T}(x) dx} - \left(\frac{p_{\phi}}{p_{\phi^T}}\right)^{\gamma}(y)\frac{\nabla_{\phi} p_{\phi^T}(y)}{p_{\phi^T}(y)}\right],\]
where:
\[f(\alpha,\phi,y) = \int{\frac{p_{\phi}^{\gamma}}{p_{\alpha}^{\gamma-1}}p_{\phi}dx} - \left[\frac{p_{\phi}}{p_{\alpha}}(y)\right]^{\gamma}.\]
Previous papers which discussed the choice of the escort parameter have either let the choice arbitrary in the region where the IF is bounded (\cite{TomaBronia}), or proposed to use robust estimates for the escort parameters (\cite{Cherfi} and \cite{Frydlova}). The first idea is very complicated since we have no idea about the true value of the parameters and a bad choice of the escort parameter even inside the region where the IF is bounded does not ensure a good result. In \cite{Frydlova} and \cite{Cherfi}, experimental results show that the D$\varphi$DE in a Gaussian model is very close to the escort parameter and coincides with the escort parameter when the later is equal to the MLE. The last fact can be easily verified following the proof of Theorem 6 in \cite{Broniatowski2014}. Indeed, one may show that the MLE is a zero of the estimating equation of the D$\varphi$DE and has a definite negative Hessian matrix of the corresponding objective function. On the other hand, the use of a robust escort parameter is not always a good idea as we will show in the following two examples.
\begin{example}
Consider a two-component Gaussian mixture model. We will give some conditions on the escort parameter in order to make the IF bounded. The first term in the influence function is a matrix which is independent of $y$ and is constant. Supposing that it is invertible, we investigate both the existence of the integral, which is also a constant, and the boundedness of the remaining term which depends on $y$. The integral exists since the the fraction is of order $e^{ax}$ whereas the derivative is of order $e^{-x^2}$. Boundedness of the IF is therefore equivalent to the boundedness of the remaining term. One can show by simple limit calculus that the escort parameter needs to verify either of the following conditions according to the value of $\gamma$:
\begin{eqnarray}
\mu_1>\mu_1^T, \quad \mu_2<\mu_2^T \qquad & \text{if } & \gamma>0; \label{eqn:MixGaussRobustCond1}\\
\mu_1<\mu_1^T, \quad \mu_2>\mu_2^T \qquad & \text{if } & \gamma<0, \label{eqn:MixGaussRobustCond2}
\end{eqnarray}
in order for the IF to be bounded. Simulation results show that the use of a robust escort parameter verifying the set of conditions (\ref{eqn:MixGaussRobustCond1}, \ref{eqn:MixGaussRobustCond2}) leads to a more robust parameter than the escort. However, the use of a \emph{robust} escort parameter which does \emph{not} fulfill the set of conditions (\ref{eqn:MixGaussRobustCond1}, \ref{eqn:MixGaussRobustCond2}) has a negative impact on the resulting estimator. In our simulations in Section \ref{sec:Simulations}, we have analyzed the mixture whose true set of parameters is $(\lambda^T=0.35,\mu_1^T=-2,\mu_2^T=1.5)$ where the dataset was contaminated by $10\%$ of outliers, see paragraph \ref{subsec:GaussMix} for more details. We used our new MD$\varphi$DE, defined in Section \ref{sec:NewMDphiDE}, as an escort parameter $\hat{\phi}_1$ which is robust. The divergence criterion is the Hellinger divergence which corresponds to $\gamma=0.5$. Thus, we are in the context of condition (\ref{eqn:MixGaussRobustCond1}). The new MD$\varphi$DE verifies this condition and the resulting D$\varphi$DE has a better error, see table \ref{tab:DphiDEGaussMixEx} here below. In the same table, we give another escort parameter $\hat{\phi}_2$ which is as good as the previous one based on the total variation distance (see Section \ref{sec:Simulations} for the definition), and even slightly better. If we calculate the D$\varphi$DE using the escort parameter $\hat{\phi}_2$ which clearly does not verify condition (\ref{eqn:MixGaussRobustCond1}), the error is nearly doubled. \\
\begin{table}[h]
\label{tab:DphiDEGaussMixEx}
\centering
\begin{tabular}{|c|c|}
\hline
Estimator &  Total variation \\
\hline
$\hat{\phi}_1=(\hat{\lambda}=0.349,\hat{\mu}_1=-1.767,\hat{\mu}_2=1.377)$ & 0.087 \\
$\hat{\phi}_2 = (\hat{\lambda}=0.36,\hat{\mu}_1=-2.2, \hat{\mu}_2=1.7)$  & 0.079\\
\hline
D$\varphi$DE($\hat{\phi}_1$)  & 0.076 \\
D$\varphi$DE($\hat{\phi}_2$) & 0.115 \\
\hline
\end{tabular}
\caption{The influence of a robust escort parameter on the D$\varphi$DE in a mixture of two Gaussian components. The error is calculated between the true distribution and the estimated one, see Sec. \ref{sec:Simulations}}
\end{table}
\end{example}
\begin{example}
Let $p_{\phi}$ be a generalized Pareto distribution:
\[p_{\nu,\sigma}(y) = \frac{1}{\sigma}\left(1+\nu\frac{y}{\sigma}\right)^{-1-\frac{1}{\nu}},\quad \text{for } y\geq 0.\]
The shape and the scale are supposed to be unknown and equal to $\nu^T=0.7, \sigma^T=3$. It is necessary for the IF of the D$\varphi$DE to be bounded\footnote{The IF contains an inverse of a $2\times 2$ matrix which cannot be simply calculated. Since it is a mere constant, we only discussed the other terms in the IF.} following the value of $\gamma$ to locate the shape of the escort parameter with respect to the true value of the shape parameter. If $\gamma\in(0,1)$, it is necessary for the IF to be bounded that $\nu<\nu^T$. If $\gamma<0$, then the IF can be bounded whenever $\nu>\nu^T$. Our simulation results in paragraph \ref{subsec:SimulationGPD} show that for $\gamma=0.5$, the D$\varphi$DE calculated using a robust escort parameter (our kernel-based MD$\varphi$DE) has deteriorated the performance significantly. The total variation distance corresponding to the escort parameter is 0.05 whereas the total variation distance corresponding to the D$\varphi$DE is $0.12$. 
\end{example}
The past two examples\footnote{See the remaining of the simulations for more examples.} form an opposed result to the conjecture of both articles \cite{Frydlova} and \cite{Cherfi} about the use of robust escort parameter. The use of a robust escort is a gamble and does not guarantee a better estimator than the escort itself. Thus, we are taking a great risk by using the D$\varphi$DE. Notice, finally, that the D$\varphi$DE is still more robust than the MLE and the classical MD$\varphi$DE even if the IF is not bounded.\\
There is still a remedy, but we did not consider in our simulations yet. We may use several escort parameters and calculate for each of them the corresponding D$\varphi$DE. Then, use a procedure to combine the results of such estimators. \cite{LavancierRochet} provide a way to combine several estimators in order to obtain a better one by searching for the "best linear" combination between initial estimates. 
\subsection{Lack of robustness of the MD\texorpdfstring{$\varphi$}{phi}DE}\label{subsec:DefautsClassicalDualForm}
\paragraph{Unboundedness of the IF}
The influence function of the MD$\varphi$DE is given by (see \cite{TomaBronia} or \cite{BroniatowskiSeveralApplic}):
\[\text{IF}(y) = \left[\int{\frac{\nabla_{\phi}p_{\phi^T}(x).\left(\nabla_{\phi}p_{\phi^T}(x)\right)^t}{p_{\phi^T}(x)}}dx\right]^{-1}\frac{\nabla_{\phi}p_{\phi^T}(y)}{p_{\phi^T}(y)}.\]
The matrix is constant, hence if we suppose that it is invertible, boundedness properties of the IF is determined by the fraction $\frac{\nabla_{\phi}p_{\phi^T}(y)}{p_{\phi^T}(y)}$. We will calculate this fraction in two examples; a mixture of Gaussian distributions and a mixture of Weibull distributions. The fraction is unbounded in both examples. Besides, it is immediate to see that the same conclusion holds in an exponential family model.
\begin{example}
Consider the mixture of two Gaussian components
\[p_{(\lambda,\mu_1,\mu_2)}(y) = \lambda\frac{1}{\sqrt{2\pi}} e^{-\frac{1}{2}(y-\mu_1)^2} + (1-\lambda)\frac{1}{\sqrt{2\pi}}e^{-\frac{1}{2}(y-\mu_1)^2}.\]
We have
\[\frac{\nabla_{\phi}p_{\phi^T}(y)}{p_{\phi^T}(y)} = \left[\begin{array}{c} 
\frac{e^{-\frac{1}{2}(y-\mu_1^T)^2} - e^{-\frac{1}{2}(y-\mu_2^T)^2}}{\lambda^T e^{-\frac{1}{2}(y-\mu_1^T)^2} + (1-\lambda^T)e^{-\frac{1}{2}(y-\mu_1^T)^2}} \\
\frac{\lambda^T(y-\mu_1)e^{-\frac{1}{2}(y-\mu_1^T)^2}}{\lambda^T e^{-\frac{1}{2}(y-\mu_1^T)^2} + (1-\lambda^T)e^{-\frac{1}{2}(y-\mu_1^T)^2}}\\
\frac{(1-\lambda^T)(y-\mu_2)e^{-\frac{1}{2}(y-\mu_2^T)^2}}{\lambda^T e^{-\frac{1}{2}(y-\mu_1^T)^2} + (1-\lambda^T)e^{-\frac{1}{2}(y-\mu_1^T)^2}}
 \end{array}\right] = \left[\begin{array}{c} \frac{1-e^{(\mu_2^T-\mu_1^T)y +\frac{1}{2}(\mu_1^T)^2-\frac{1}{2}(\mu_2^T)^2}}{\lambda^T + (1-\lambda^T)e^{(\mu_2^T-\mu_1^T)y +\frac{1}{2}(\mu_1^T)^2-\frac{1}{2}(\mu_2^T)^2}} \\
\frac{\lambda^T(y-\mu_1)}{\lambda^T + (1-\lambda^T)e^{(\mu_2^T-\mu_1^T)y +\frac{1}{2}(\mu_1^T)^2-\frac{1}{2}(\mu_2^T)^2}} \\
\frac{(1-\lambda)^T(y-\mu_2)}{\lambda^Te^{(\mu_1^T-\mu_2^T)y +\frac{1}{2}(\mu_2^T)^2-\frac{1}{2}(\mu_1^T)^2+1-\lambda^T}}
\end{array}\right].\]
Let's suppose that $\mu_1^T<\mu_2^T$. The first component of the previous vector is bounded at both plus and minus infinity. The second component is bounded at $+\infty$, whereas it has a $-\infty$ limit at $-\infty$. The third component is bounded at $-\infty$ whereas it has a $+\infty$ limit at $+\infty$. This shows that the IF of the MD$\varphi$DE is unbounded.
\end{example}
\begin{example}
Consider the mixture of two Weibull components:
\[p_{(\lambda,\nu_1,\nu_2)}(x) = 2\lambda\nu_1 (2x)^{\nu_1-1} e^{-(2x)^{\nu_1}}+(1-\lambda)\frac{\nu_2}{2}\left(\frac{x}{2}\right)^{\nu_2-1} e^{-\left(\frac{x}{2}\right)^{\nu_2}}.\]
We calculate the fraction $\frac{\nabla_{\nu}p_{\nu^T}(y)}{p_{\nu^T}(y)}$.
\[\frac{\nabla_{\nu}p_{\nu^T}(x)}{p_{\nu^T}(x)} = \left[\begin{array}{c} 
\frac{2\nu_1^T (2x)^{\nu_1^T-1} e^{-(2x)^{\nu_1^T}}-\frac{\nu_2^T}{2}\left(\frac{x}{2}\right)^{\nu_2^T-1} e^{-\left(\frac{x}{2}\right)^{\nu_2^T}}}{2\lambda^T\nu_1^T (2x)^{\nu_1^T-1} e^{-(2x)^{\nu_1^T}}+(1-\lambda^T)\frac{\nu_2^T}{2}\left(\frac{x}{2}\right)^{\nu_2^T-1} e^{-\left(\frac{x}{2}\right)^{\nu_2^T}}} \\ \\
\frac{2\lambda^T\left(1+\nu_1^T\log(2x)-\nu_1^T\log(2x)(2x)^{\nu_1^T}\right)(2x)^{\nu_1^T-1}e^{-(2x)^{\nu_1^T}}}{2\lambda^T\nu_1^T (2x)^{\nu_1^T-1} e^{-(2x)^{\nu_1^T}}+(1-\lambda^T)\frac{\nu_2^T}{2}\left(\frac{x}{2}\right)^{\nu_2^T-1} e^{-\left(\frac{x}{2}\right)^{\nu_2^T}}}\\ \\
\frac{\frac{1-\lambda^T}{2}\left(1+\nu_2^T\log(2x)-\nu_2^T\log(2x)(2x)^{\nu_2^T}\right)(2x)^{\nu_2^T-1}e^{-(2x)^{\nu_2^T}}}{2\lambda^T\nu_1^T (2x)^{\nu_1^T-1} e^{-(2x)^{\nu_1^T}}+(1-\lambda^T)\frac{\nu_2^T}{2}\left(\frac{x}{2}\right)^{\nu_2^T-1} e^{-\left(\frac{x}{2}\right)^{\nu_2^T}}}
 \end{array}\right].\]
The second component is clearly unbounded neither near zero (it is of order $\log(2x)$) nor at infinity (it is of order $e^{(2x)^{\nu_2^T}-(2x)^{\nu_1^T}}$). Hence the IF of the MD$\varphi$DE is unbounded for the mixture of Weibull distribution.\\
\end{example} 

\paragraph{Equality with MLE in exponential families.} An important aspect about the classical MD$\varphi$DE is that it coincides with the maximum likelihood estimator in full exponential models whenever the corresponding \emph{true} divergence $D_{\varphi}$ is finite, see \cite{Broniatowski2014}. This covers the standard Gaussian model for which \cite{Frydlova} provided clear robust properties of the MD$\varphi$DE when outliers are generated by the standard Cauchy distribution. This contradicts with the theoretical result presented in \cite{Broniatowski2014} which is an exact one and depends only on analytic arguments. We have done similar simulations and found out that numerical problems may play a role here. Generally, such problems come from numerical approximations such as numerical integration. In a Gaussian model, all integrals in (\ref{eqn:DivergenceDef}) have close formulas and easy to calculate, see \cite{Frydlova} or \cite{BroniatowskiSeveralApplic}. However, when using the standard Cauchy distribution to generate outliers, we get points with very large values superior to 100. These points participate only in the sum term in the MD$\varphi$DE (\ref{eqn:DivergenceDef}). A Gaussian density with parameters not very far from the standard ones ($\mu=0,\sigma=1$) will produce a value equal to 0 in numerical computer programs. Thus, numerical problems of the form $0/0$ would appear when calculating the sum term in (\ref{eqn:DivergenceDef}) since the summand is of the form $g(p_{\theta}/p_{\alpha})(y_i)$. If one uses simple practical solutions to avoid this, such as adding a very small value (e.g. $10^{-100}$) to the denominator or the nominator, a thresholding effect is produced and the \emph{true} fraction is badly calculated. As a result, such outliers would have practically no effect in the procedure as if they were not added, and one would obtain "forged robust estimates". The same thresholding effect does not happen in the MLE since the likelihood function does not contain any fractions. On the other hand, if one calculates the fraction using the properties of the exponential function, i.e. $p_{\theta}(y_i)/p_{\alpha}(y_i) = \exp[(y_i-\alpha)^2/2 - (y_i-\phi)^2/2]$, the MD$\varphi$DE defined by (\ref{eqn:MDphiDEClassique}) gives the same result as the maximum likelihood estimator and never better even with Cauchy contamination.\\ 
We have performed further simulations on several models which do not belong to the exponential family and found out that the MD$\varphi$DE have a very similar behavior to the MLE, see Section \ref{sec:Simulations} below. Papers such as \cite{BarronSheu} discussed how one can estimate a probability density using exponential families and proved interesting convergence rates. Such paper can explain partially our claim passing by the result in \cite{Broniatowski2014}.\\

\paragraph{Non robustness of the MD$\varphi$DE under outliers or, more generally, under misspecification can be explained.} When $P_T$ is a member of the model, the approximated dual formula converges to the $\varphi-$divergence, and the argument of the infimum to the corresponding one, as the number of observations increases, see Proposition 3.1 in \cite{BroniatowskiKeziou2007}. This result, however, does not hold when $P_T$ is not a member of the model, i.e. under contamination or misspecification. Indeed, consistency is a consequence of the following limit:
\[\hat{D}_n(P_{\phi},P_T)\rightarrow \sup_{\alpha\in\Phi}\left\{\int{\varphi'\left(\frac{p_{\phi}}{p_{\alpha}}\right)(x)p_{\phi}(x)dx} - \int{\varphi^{\#}\left(\frac{p_{\phi}}{p_{\alpha}}\right)(y) dP_T(y)}\right\},\]
together with the fact that the arginf of the left hand side converges to the arginf of the right hand side. However, the limiting quantity is the dual representation of the $\varphi-$divergence, and since the equality in (\ref{eqn:ParametricDualForm}) holds uniquely when $p_{\alpha}=dP_T/dy$ (otherwise there is inequality) then it is never attained as long as $P_T$ is not a member of the model. Moreover, the limiting quantity is a lower bound of the divergence and minimizing the former does not guarantee the minimization of the later. Figure \ref{fig:UnderEstimation} represents this idea on a standard Gaussian model where the mean is unknown and the standard deviation is fixed at 1 and is known. The true distribution is then contaminated by a Gaussian distribution $\mathcal{N}(\mu=10,\sigma=2)$. Thus $P_T$ has the density $0.9\mathcal{N}(\mu=0,\sigma=1)+0.1\mathcal{N}(\mu=10,\sigma=2)$. The model $p_{\phi}$ is a Gaussian model $\mathcal{N}(\mu,1)$. Taking the Hellinger divergence, $\varphi(t)=(\sqrt{t}-1)^2/2$, we plot the dual $\varphi-$divergence formula (\ref{eqn:ParametricDualForm}) in Fig(a) and its empirical version (\ref{eqn:DivergenceDef}) in Fig(b) using a 100-sample drawn from $P_T$. We also plot the true values of the Hellinger divergence calculated using formula (\ref{eqn:PhiDivergence}). The minimum of the dual representation is attained at approximately $\mu=1$ whereas it is attained at approximately 0 for the true divergence. The curve of the dual representation is almost all the time below the curve of the true divergence. We also included in the figures the alternative dual formula introduced in the following paragraph which overcomes this problem.
\begin{figure}[h]
\centering
\includegraphics[scale=0.48]{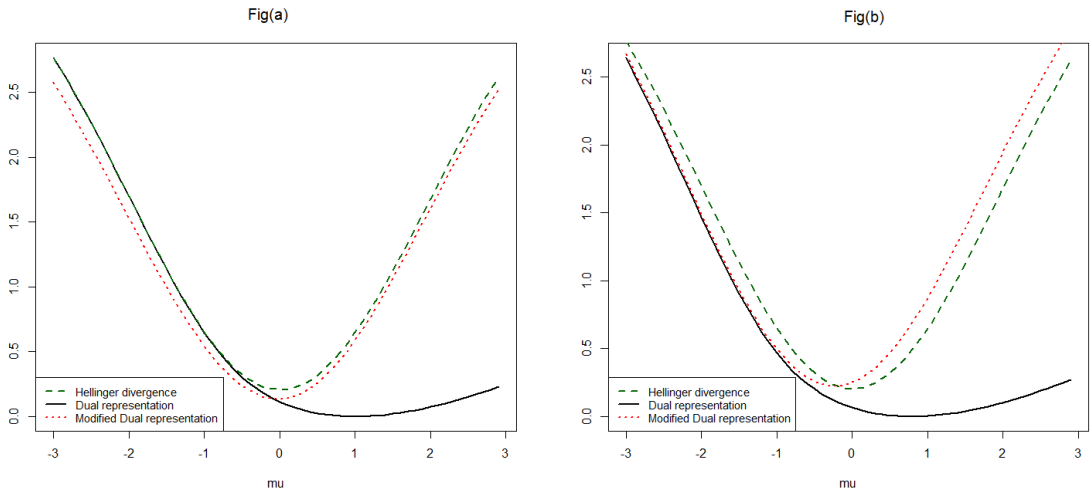}
\caption{Underestimation caused by the classical dual representation compared to the new one. The true distribution is taken to be $0.9\mathcal{N}(\mu=0,\sigma=1)+0.1\mathcal{N}(\mu=10,\sigma=2)$. Figure (a) shows the dual representation defined by (\ref{eqn:ParametricDualForm}) in comparison with the new reformulation defined by (\ref{eqn:NewExactDualForm}). Figure (b) shows the corresponding approximations when we replace the true distribution by its empirical version}
\label{fig:UnderEstimation}
\end{figure}

%
%
%

\section{A new robust estimator: kernel-based dual formula}\label{sec:NewMDphiDE}
\subsection{New reformulation of the dual representation}\label{subsec:KernelSolution}
As stated previously, if the model $(p_{\alpha})_{\alpha}$ does not contain the true distribution $p_T$, the supremum in the dual formula is no longer attained and formula (\ref{eqn:ParametricDualForm}) is no longer an identity because the right hand side underestimates the divergence $D_{\varphi}(p_{\phi},p_T)$.\\
An intuitive solution is to replace $p_{\alpha}$ by some adaptive (nonparametric) estimator of $p_T$ which does not take into account the restriction of being in the model. In the resulting dual representation the supremum is attained whether we are under the model or not and we have equality between the dual representation and the $\varphi-$divergence. This way, the resulting criterion should inherit robustness properties against possible contamination as it approximates a $\varphi-$divergence.\\
One should be able to propose many solutions which correspond to this idea in order to reach a supremal attainment in the dual representation which may vary depending on the situation. For example, if we face a proportion of large-values outliers, one may add an extra component to $p_{\alpha}$, i.e. replace $p_{\alpha}$ with the mixture $\lambda p_{\alpha} + (1-\lambda) q_{\theta}$. The extra component covers the outliers part in a smooth way. This suggestion is still very specific and treats only the case of contaminated data. Any nonparametric estimator of $p_T$ can be used whose parameters may be determined automatically in the supremum calculus\footnote{These parameters can be a window for a kernel estimator or parameters of a limited development in a suitable basis of functions, see \cite{BarronSheu} for some examples of such approaches.}. We propose here to use a kernel density estimator. In what follows $K_{n,w}$ denotes a kernel estimator of $p_T$ defined using a symmetric or asymmetric kernel with or without bias-correction treatment.\\ 
For the definition of the new estimator, let $y_1,\cdots,y_n$ be an i.i.d. sample drawn from the probability law $P_T$. The number of observations $n$ is fixed here. More formally, define the following class of functions $\mathcal{F}_{\phi,n}=\{\varphi'(p_{\phi}/K_{n,w}), w>0\}$. The dual representation is now given by:
\begin{equation}
D_{\varphi}^{\mathcal{Y}_n}(p_{\phi},p_T) = \sup_{w>0}\left\{\int{\varphi'\left(\frac{p_{\phi}}{K_{n,w}}\right)(x)p_{\phi}(x)dx} - \int{\varphi^{\#}\left(\frac{p_{\phi}}{K_{n,w}}\right)(y) p_T(y)dy}\right\}
\label{eqn:NewExactDualForm}
\end{equation}
under a similar condition to (\ref{eqn:IntegCondDualForm}) which is given by,
\begin{equation}
\int{\left|\varphi'\left(\frac{p_{\phi}}{K_{n,w}}\right)(x)\right|p_{\phi}(x)dx} < \infty, \quad \forall w>0, \forall \phi\in\Phi.
\label{eqn:IntegCondNewDualForm}
\end{equation}
We avoided to write $D_{\varphi}(p_{\phi},p_T)$ in formula ({eqn:NewExactDualForm}), because the new dual formula may not ensure equality with the $\varphi-$divergence, but only a good approximation using a sample $\mathcal{Y}_n=\{Y_1,\cdots,Y_n\}$. A "good" choice of the window $w_{\text{opt}}$ would yield\footnote{Recall that the supremum in (\ref{eqn:NewExactDualForm}) is attained at a window for which $K_{n,w}$ is as close as possible to $p_T$. Thus, a good choice of the window should result in a kernel estimator close to $p_{T}$ and could be a good guess to the argument of the supremum in equation (\ref{eqn:NewExactDualForm}).}:
\[D_{\varphi}(p_{\phi},p_T) \approx \int{\varphi'\left(\frac{p_{\phi}}{K_{n,w_{\text{opt}}}}\right)(x)p_{\phi}(x)dx} - \int{\varphi^{\#}\left(\frac{p_{\phi}}{K_{n,w_{\text{opt}}}}\right)(y) p_T(y)dy}.\]
Replace $p_T$ by its empirical version. Our final approximation is given by:
\begin{equation}
\hat{D}_{\varphi}(p_{\phi},p_T) = \int{\varphi'\left(\frac{p_{\phi}}{K_{n,w_{\text{opt}}}}\right)(x)p_{\phi}(x)dx} - \frac{1}{n}\sum_{i=1}^n{\varphi^{\#}\left(\frac{p_{\phi}}{K_{n,w_{\text{opt}}}}\right)(y_i)}.
\label{eqn:EmpiricalNewDualForm}
\end{equation}
We avoid any indexation with respect to the sample or to $n$ for the sake of clarity. Define now the new minimum dual $\varphi-$divergence estimator by:
\begin{equation}
\hat{\phi}_{n} = \arginf_{\phi\in\Phi} \int{\varphi'\left(\frac{p_{\phi}}{K_{n,w_{\text{opt}}}}\right)(x)p_{\phi}(x)dx} - \frac{1}{n}\sum_{i=1}^n{\varphi^{\#}\left(\frac{p_{\phi}}{K_{n,w_{\text{opt}}}}\right)(y_i)}.
\label{eqn:NewMDphiDE}
\end{equation}
In comparison to the MD$\varphi$DE defined by (\ref{eqn:MDphiDEClassique}), we have removed the internal optimization procedure leaving only one simple optimization which keeps our procedure at the same level of complexity as other estimation procedures such as Beran's approach (\cite{Beran}) and its generalization, and \cite{BasuMPD}.\\
An important question which arises now is: what should be the value of $w_{\text{opt}}$ since its calculus demands knowing the true distribution? In the literature on kernel estimation, there exists many rules (automatic or not) to determine sub-optimum windows such as Silverman's (or Scott's) rule-of-thumb, cross-validation methods, etc; see for example \cite{VenablesRipley} Chap 5. Figure \ref{fig:UnderEstimation} shows in a Gaussian example contaminated by a Gaussian component $\mathcal{N}(10,2)$ the use of Silverman's rule with a Gaussian kernel. The classical dual representation clearly underestimates the true divergence whereas the new reformulation stays close to it.\\
It is important to point out that the optimization problem in (\ref{eqn:NewMDphiDE}) is in general not convex. This is the case of the general class of divergence-(or disparity-)based estimators. Thus, we need to use a numerical optimization algorithm in order to calculate our kernel-based MD$\varphi$DE, see Section \ref{sec:Simulations} for more details.\\
\begin{remark}
The new MD$\varphi$DE keeps the MLE as a member of its class for the choice of $\varphi(t)=-\log(t)+t-1$. Indeed, $\varphi'(t)=-1/t+1$ and $t\varphi'(t)-\varphi(t)=\log(t)$. Thus,
\begin{eqnarray*}
\int{\varphi'\left(\frac{p_{\phi}}{K_{n,w_{\text{opt}}}}\right)(x)p_{\phi}(x)dx} & = &  1,\\
\frac{1}{n}\sum_{i=1}^n{\left[\frac{p_{\phi}}{K_{n,w_{\text{opt}}}} \varphi'\left(\frac{p_{\phi}}{K_{n,w_{\text{opt}}}}\right) - \varphi\left(\frac{p_{\phi}}{K_{n,w_{\text{opt}}}}\right)\right](y_i)} & = & \frac{1}{n}\sum_{i=1}^n{\log\left(p_{\phi}\right)- \log\left(K_{n,w_{\text{opt}}}\right)(y_i)}.\\
\end{eqnarray*}
This entails that :
\begin{eqnarray*}
\hat{\phi}_{n} & = & \arginf_{\phi\in\Phi} 1 - \frac{1}{n}\sum_{i=1}^n{\log\left(p_{\phi}(y_i)\right)}+ \frac{1}{n}\sum_{i=1}^n{\log\left(K_{n,w_{\text{opt}}}(y_i)\right)}\\
 & = & \argsup_{\phi\in\Phi} \frac{1}{n}\sum_{i=1}^n{\log\left(p_{\phi}(y_i)\right)}\\
 & = & \text{MLE}.
\end{eqnarray*}
\end{remark}
\begin{remark}
In the spirit of our approach, one can write a dual formula for Beran's approach in the case of the Hellinger divergence or more generally for any $\varphi-$divergence, see paragraph \ref{subsec:BeranApproach}. Consider the class of functions $\mathcal{F}_{\phi,n}=\{\varphi'(p_{\phi}/K_{n,w}), w>0\}$, then by (\ref{eqn:NewExactDualForm}) we can write:
\begin{eqnarray*}
D_{\varphi}\left(p_{\phi},K_{n,w_0	}\right) & = & \sup_{w>0}\left\{\int{\varphi'\left(\frac{p_{\phi}}{K_{n,w}}\right)(x)p_{\phi}(x)dx} - \int{\varphi^{\#}\left(\frac{p_{\phi}}{K_{n,w}}\right)(y) K_{n,w_0}(y)dy}\right\} \\
 & = &  \int{\varphi'\left(\frac{p_{\phi}}{K_{n,w_0}}\right)(x)p_{\phi}(x)dx} - \int{\varphi^{\#}\left(\frac{p_{\phi}}{K_{n,w_0}}\right)(y) K_{n,w_0}(y)dy},
\end{eqnarray*}
where $w_0$ is a window calculated using an automatic rule as mentioned here above for $w_{\text{opt}}$. The only difference with the kernel-based dual formula (\ref{eqn:EmpiricalNewDualForm}) is that we are integrating function $\varphi^{\#}\left(\frac{p_{\phi}}{K_{n,w_0}}\right)$ with respect to the empirical distribution instead of a smoothed version of it.
\end{remark}


\section{Asymptotic properties and robustness of the new kernel-based MD\texorpdfstring{$\varphi$}{phi}DE}\label{sec:AsymptotProper}
We present in this section some of the asymptotic properties of the new MD$\varphi$DE defined by (\ref{eqn:NewMDphiDE}). We use Theorem 5.7 from \cite{Vaart} which we restate here. Consistency of the kernel-based MD$\varphi$DE means that $\hat{\phi}_n$ defined by (\ref{eqn:NewMDphiDE}) converges in probability to $\phi^T$ the true vector of parameters when we are under the model, i.e. $P_T = P_{\phi^T}$. If we are not under the model, consistency holds towards the projection of $P_T$ on the model in the sens of the divergence. In other terms, the projection $P_{\phi^T}$ is the member of the model $P_{\phi}$ whose parameters are defined by $\phi^T = \arginf_{\phi\in\Phi} D_{\varphi}(P_{\phi},P_T)$.\\ 
Similarly to \cite{BasuLindsay}, there are some cases (which are rare) such as the location Gaussian model in which consistency of the kernel-based MD$\varphi$DE does not require any condition on the kernel window. Thus, one may find simpler versions of the results we give below. We will be however interested in the general situation where the window needs to converge towards zero at a certain rate.\\
In a second part of this section, we calculate the limiting law of the new estimator under strong but standard assumptions. We, finally, calculate the influence function of the kernel-based MD$\varphi$DE for a fixed window, and show how the use of a kernel estimate in place of the model $p_{\alpha}$ in the dual formula (\ref{eqn:ParametricDualForm}) interferes to make the IF bounded. \\
We use the same notations as in \cite{Vaart} to denote integration. Thus, if $f$ is a $P-$integrable function, we denote $Pf$ the integral $\int fdP$. Moreover, $K_w*P$ denotes the operation of smoothing $dP$ by the kernel $K_w$ with bandwidth equal to $w$. This smoothing can be done by simple convolution as in the case of Rosenblatt-Parzen kernel estimator. Other kinds of smoothing are presented in Section \ref{subsec:BLapproach}. In this section only, the smoothing is supposed to be an additive operator in the sense that $K_w*(P\pm Q) = K_w*P \pm K_w*Q$.
\subsection{Consistency}
Theorem 5.7 from \cite{Vaart} permits to treat the consistency of a general class of M-estimates. It is stated as follows:
\begin{theorem}
\label{theo:VanderVaart}
Let $M_n$ be random functions and let $M$ be a fixed function of $\phi$ such that for every $\varepsilon>0$
\begin{eqnarray}
\sup_{\phi\in\Phi} |M_n(\phi) - M(\phi)| \xrightarrow[]{\mathbb{P}} 0,\label{eqn:ConsistP1}\\
\inf_{\phi:\|\phi-\phi^T\|\geq\varepsilon} M(\phi) > M(\phi^T).\label{eqn:ConsistP2}
\end{eqnarray}
Then any sequence of estimators $\hat{\phi}_n$ with $M_n(\hat{\phi}_n)\leq M_n(\phi^T) - o_P(1)$ converges in probability to $\phi^T$.
\end{theorem}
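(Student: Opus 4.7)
The plan is to combine the uniform convergence hypothesis (\ref{eqn:ConsistP1}) with the well-separation hypothesis (\ref{eqn:ConsistP2}) to turn any event $\{\|\hat\phi_n - \phi^T\|\geq\varepsilon\}$ into a lower bound on $M(\hat\phi_n)-M(\phi^T)$, which the near-optimality of $\hat\phi_n$ forbids.

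First I would decompose, for any realization,
\[
M(\hat\phi_n) - M(\phi^T) = \bigl[M(\hat\phi_n) - M_n(\hat\phi_n)\bigr] + \bigl[M_n(\hat\phi_n) - M_n(\phi^T)\bigr] + \bigl[M_n(\phi^T) - M(\phi^T)\bigr].
\]
The first and third brackets are each bounded in absolute value by $\sup_{\phi\in\Phi}|M_n(\phi)-M(\phi)|$, hence are $o_P(1)$ by (\ref{eqn:ConsistP1}); this is the crucial role of uniform convergence, since $\hat\phi_n$ is random and only a uniform bound allows us to evaluate $M_n-M$ at a random argument. The middle bracket is at most $o_P(1)$ by the assumption $M_n(\hat\phi_n)\leq M_n(\phi^T)-o_P(1)$ (read as $M_n(\hat\phi_n) - M_n(\phi^T) \leq o_P(1)$). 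Summing,
\[
M(\hat\phi_n) - M(\phi^T) \leq o_P(1).
\]

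Second, I would use the well-separation condition. Fix $\varepsilon>0$ and set
\[
\eta := \inf_{\|\phi-\phi^T\|\geq\varepsilon} M(\phi) - M(\phi^T) > 0,
\]
which is strictly positive by (\ref{eqn:ConsistP2}). On the event $\{\|\hat\phi_n-\phi^T\|\geq\varepsilon\}$ we have $M(\hat\phi_n)\geq M(\phi^T)+\eta$, so
\[
\mathbb{P}\bigl(\|\hat\phi_n-\phi^T\|\geq\varepsilon\bigr) \leq \mathbb{P}\bigl(M(\hat\phi_n)-M(\phi^T)\geq\eta\bigr) \xrightarrow[n\to\infty]{} 0
\]
by the $o_P(1)$ bound from the first step. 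Since $\varepsilon$ was arbitrary, $\hat\phi_n \xrightarrow{\mathbb{P}} \phi^T$.

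The main obstacle is essentially bookkeeping: one must not conflate pointwise convergence $M_n(\phi)\to M(\phi)$ (which would be too weak because $\hat\phi_n$ is random) with the uniform convergence provided by (\ref{eqn:ConsistP1}), and one must carefully use subadditivity of $o_P(1)$ terms. Conceptually the argument is standard; no further structure of the criterion $M_n$ (convexity, differentiability, etc.) is needed, which is precisely why this theorem is so convenient as a black box for the kernel-based MD$\varphi$DE introduced in (\ref{eqn:NewMDphiDE}).
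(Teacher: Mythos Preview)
Your proof is correct and is precisely the standard argument for this result. Note that the paper does not supply its own proof of this statement: it is restated verbatim as Theorem~5.7 from \cite{Vaart} and used as a black box, so there is no alternative approach in the paper to compare against.
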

In our approach, function $M_n$ corresponds to the criterion function $P_n H(P_n,\phi)$, where $H(P_n,\phi,y)$ is defined by:
\[
H(P_n,\phi,y) = \int{\varphi'\left(\frac{p_{\phi}}{K_{w}*P_n}\right)(x)p_{\phi}(x)dx} - \varphi^{\#}\left(\frac{p_{\phi}(y)}{K_{w}*P_n(y)}\right).
\]
Function $M$ is simply defined by the \emph{expected}\footnote{In the literal sense and not mathematically.} limit in probability of $M_n$, since the Law of Large Numbers cannot be used because the average term is not a sum of i.i.d. random variables. It is given by $P_T h(P_T,\phi)$ where $h(P_T,\phi,y)$ is defined as:
\[h(P_T,\phi,y) = \int{\varphi'\left(\frac{p_{\phi}}{p_T}\right)(x)p_{\phi}(x)dx} - \varphi^{\#}\left(\frac{p_{\phi}}{p_T}\right)(y). \]
In order to prove (\ref{eqn:ConsistP1}), write:
\begin{multline}
\sup_{\phi\in\Phi} |P_nH(P_n,\phi) - P_Th(P_T,\phi)| \leq \sup_{\phi\in\Phi} |P_nH(P_n,\phi) - P_nh(P_T,\phi)| + \\ \sup_{\phi\in\Phi} |P_Th(P_T,\phi) - P_nh(P_T,\phi)|.
\label{eqn:DecompProofIneq}
\end{multline}
Now, the second supremum tends to 0 in probability by the Glivenko-Cantelli theorem as soon as $\{h(P_T,\phi), \phi\in\Phi\}$ is a Glivenko-Cantelli class of functions, see \cite{Vaart} Chap. 19 Section 2 and the examples therein. The problem then resides in finding conditions under which the first term tends to 0 in probability. The remaining of the paragraph will be concerned with the search for such conditions. In the whole section concerning the consistency of our new estimator, the window parameter $w$ is supposed to depend on $n$ in order to be able to use Theorem \ref{theo:VanderVaart} without any modification. Besides, the construction of the estimator from (\ref{eqn:NewExactDualForm}) shows the explicit link of the window with $n$.\\
We next provide a set of sufficient conditions in order for the new estimator to be consistent. We treat the general class of $\varphi-$divergences in a first theorem. The result imposes strong but standard assumptions on the model. After that, We present a result for a subclass of $\varphi-$divergences with simpler conditions on the model. Some exceptional cases may be studied separately in order to deduce simpler conditions.
\begin{remark}
Large values of $\gamma$ in absolute value are not interesting in general and may lead to practical complications. Values of $\gamma$ greater than 1 leads to integrability problems in condition (\ref{eqn:IntegCondNewDualForm}) for the new MD$\varphi$DE and in (\ref{eqn:IntegCondDualForm}) for the classical one in standard examples such as the scale Gaussian model, and the supremum in (\ref{eqn:ParametricDualForm}) is not well defined. The special case of $\gamma=2$ which corresponds to the Pearson's $\chi^2$ is included by this remark. This not very surprising. The Pearson's $\chi^2$ is a very sensitive criterion and measures the relative error committed. Thus small errors committed at values where the distribution has small values will have the same influence as the values where distribution attributes a greater density.
\end{remark}
An essential assumption will be the consistency of the kernel estimator. We refer to \cite{WiedWeibbach}, \cite{ZambomDias} or \cite{Libengue} Chap. 1 for a brief survey on symmetric kernels. When using asymmetric kernels, unfortunately consistency is proved only on every compact subset of the support of the distribution function, see \cite{Bouezmarni} or \cite{Libengue} Chap. 3 for a more general approach. Thus, our proof does not cover these kernels. \\
Assumption (\ref{eqn:ConsistP2}) in Theorem \ref{theo:VanderVaart} means that function $\phi\mapsto P_Th(P_T,\phi)$ has a unique and well separated minimum. Uniqueness is achieved when we are under the model ($P_T=P_{\phi^T}$) since function $\phi\mapsto P_Th(P_T,\phi)$ is non other than the dual representation (with the supremum calculated) of the $\varphi-$divergence $D_{\varphi}(P_{\phi},P_{\phi^T})$. Using the property that $D_{\varphi}(p_{\phi},p_{\phi^T})=0$ iff $p_{\phi}=p_{\phi^T}$, uniqueness is immediately verified as soon as the model is identifiable. If we are not under the model (misspecification), the projection of $P_T$ on the model $P_{\phi}$ may not be unique and assumption (\ref{eqn:ConsistP2}) is still needed.
\subsubsection{General Result}
We will derive in this paragraph a result which concerns the general class of $\varphi-$divergences. It was difficult to build such result without imposing strong assumptions on the model. Hereafter, simpler conditions will be proved for the particular class of Cressie-Read functions $\varphi_{\gamma}$ for $\gamma\in(-1,0)$. As mentioned here above, using inequality (\ref{eqn:DecompProofIneq}), the difficult term is first one. It is given by:
\begin{multline*}
P_nH(P_n,\phi)-P_nh(P,\phi) = \int{\left[\varphi'\left(\frac{p_{\phi}}{K_{w}*P_n}\right)-\varphi'\left(\frac{p_{\phi}}{p_T}\right)\right](x)p_{\phi}(x)dx} \\ - \frac{1}{n}\sum_{i=1}^n{\varphi^{\#}\left(\frac{p_{\phi}}{K_{w}*P_n}\right)(y_i) - \varphi^{\#}\left(\frac{p_{\phi}}{p_T}\right)(y_i)}.
\end{multline*}
The key idea is to treat each term (the integral and the sum) separately and prove its uniform convergence in probability towards 0. Another important step is to apply the mean value theorem in order to transfer the difference from functions $\varphi'$ and $\varphi^{\#}$ into a difference between the kernel estimator and the true distribution where consistency of the former is exploited. The proof of the following theorem is differed to Appendix \ref{appendix:proofHellinger}.
\begin{theorem}
\label{theo:HellingerConsist}
Assume that:
\begin{enumerate}
\item function $t\mapsto\varphi(t)$ is twice differentiable;
\item the kernel is defined on a compact and the kernel density estimator is consistent, i.e. $\sup_x\left|K_w*P_n(x) - p_T(x)\right|\rightarrow 0$ in probability;
\item the model $p_{\phi}$ is defined on a compact set and bounded independently of $\phi$, and $p_T$ is also defined on the same compact and bounded;
\item for any $\varepsilon>0$, $\inf_{\phi:\|\phi-\phi^T\|\geq\varepsilon} P_T h(P_T,\phi) > P_T h(P_T,\phi^T)$,
\end{enumerate}
then the minimum dual $\varphi-$divergence estimator defined by (\ref{eqn:NewMDphiDE}) is consistent whenever it exists.
\end{theorem}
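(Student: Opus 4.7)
The plan is to verify the hypotheses of Theorem \ref{theo:VanderVaart} applied to $M_n(\phi)=P_nH(P_n,\phi)$ and $M(\phi)=P_Th(P_T,\phi)$. Hypothesis 4 of the present theorem is exactly condition (\ref{eqn:ConsistP2}), so the whole argument reduces to proving the uniform convergence (\ref{eqn:ConsistP1}). Existence of $\hat\phi_n$ is assumed, and from its definition one has $M_n(\hat\phi_n)\leq M_n(\phi^T)$, so no extra control on the defining infimum is needed.

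For the uniform convergence, I would start from the triangle inequality (\ref{eqn:DecompProofIneq}) already set up in the text. The second term, $\sup_\phi|(P_n-P_T)h(P_T,\phi)|$, is a Glivenko--Cantelli problem for the class $\mathcal{H}=\{y\mapsto h(P_T,\phi,y):\phi\in\Phi\}$. Under Assumption 3, the ratio $p_\phi/p_T$ stays in a bounded set on the common compact support (tacitly using that $p_T$ is continuous and bounded below on it), and $\varphi'$, $\varphi^\#$ are continuous by Assumption 1; combined with continuity of $\phi\mapsto p_\phi$, a standard bracketing or envelope argument (e.g.\ van der Vaart, Example 19.8) turns $\mathcal{H}$ into a Glivenko--Cantelli class and delivers the second bound.

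The first supremum is the real work. The idea is to apply a mean value argument to both $\varphi'$ and $\varphi^\#$, exploiting the relation $(\varphi^\#)'(t)=t\varphi''(t)$. For each $x$ in the compact support one writes
\[
\varphi'\!\Bigl(\frac{p_\phi(x)}{K_w*P_n(x)}\Bigr) - \varphi'\!\Bigl(\frac{p_\phi(x)}{p_T(x)}\Bigr) = \varphi''(\xi_x)\,p_\phi(x)\,\frac{p_T(x)-K_w*P_n(x)}{K_w*P_n(x)\,p_T(x)},
\]
for some $\xi_x$ lying between the two ratios, and analogously for $\varphi^\#$ with the extra factor $\xi_x$. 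Pulling the supremum in $\phi$ outside of the integral (respectively the empirical mean), using Assumption 2 to bound $|K_w*P_n-p_T|$ uniformly in $x$ by an $o_{\mathbb{P}}(1)$ quantity, and combining with the uniform upper bounds on $p_\phi$ and $p_T$ from Assumption 3, gives the desired convergence to zero of both pieces.

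The main obstacle is securing a uniform upper bound for $\varphi''$ on the range of ratios $p_\phi(x)/K_w*P_n(x)$ that appear in the mean value argument. This requires $K_w*P_n$ to remain bounded away from zero uniformly in $x$, which is not automatic from Assumption 3 alone. The standard workaround is to restrict the analysis to the event $\Omega_n=\{\inf_x K_w*P_n(x)>\eta\}$ for some $\eta>0$ chosen below $\inf_x p_T(x)$; Assumption 2 then implies $\mathbb{P}(\Omega_n)\to 1$, and on $\Omega_n$ the ratios lie in a fixed compact subinterval of $(0,\infty)$ on which the continuous $\varphi''$ is bounded. All the estimates above become uniform on that event, which closes the argument and yields (\ref{eqn:ConsistP1}).
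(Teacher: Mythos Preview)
Your proposal is correct and follows essentially the same route as the paper: the decomposition (\ref{eqn:DecompProofIneq}), a mean value expansion of $\varphi'$ and $\varphi^{\#}$ to factor out $K_w*P_n-p_T$, uniform consistency of the kernel estimator for the first supremum, and a Glivenko--Cantelli argument for the second. Your treatment of the lower bound on $K_w*P_n$ via the high-probability event $\Omega_n$ is actually more explicit than the paper's, which simply asserts that the relevant suprema $\mathcal{A}_n$, $\mathcal{B}_n$ are finite ``by virtue of assumption 3'' without isolating that step.
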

Notice that assumption 3 is strong but stays standard. It was already used in the literature, see for example \cite{Beran}. The need to impose such restrictive assumptions stems from the nature of the dual formula which contains the quotient $p_{\phi}/K_{n,w}$ on the one hand. On the other hand, our approach which consists in using a data-based estimator results in sums of strongly dependent terms which cannot be treated in a simple and a general way. In the next paragraph, we treat a subclass of $\varphi-$divergences where we can control the terms of inequality (\ref{eqn:DecompProofIneq}) without the need to assumption 3.

\subsubsection{Case of power divergences with \texorpdfstring{$\gamma\in(-1,0)$}{gamma in (-1,0)}}
Here we have:
\begin{equation}
P_nH(P_n,\phi)-P_nh(P,\phi) = \frac{1}{\gamma-1}\int{\frac{\left(K_{w}*P_n\right)^{1-\gamma}-p_T^{1-\gamma}}{p_{\phi}^{-\gamma}}(x)dx} - \frac{1}{n\gamma}\sum_{i=1}^n{\frac{\left(K_{w}*P_n\right)^{-\gamma}-p_T^{-\gamma}}{p_{\phi}^{-\gamma}}(y_i)}.
\label{eqn:PHdifferenceNeymChi2}
\end{equation}
The key idea for proving the consistency is to use the uniform continuity of functions $t\mapsto t^{-\gamma}$ and $t\mapsto t^{(-\gamma+1)/2}$. The proof of the following theorem is differed to Appendix \ref{appendix:proofNeymChi2}.
\begin{theorem}
\label{theo:NeymChi2Consist}
For the class of power divergences defined through the class of Cressie-Read functions $\varphi_{\gamma}$ with $\gamma\in(-1,0)$, assume that:
\begin{enumerate}
\item the kernel estimator is consistent, i.e. $\sup_x|K_w*P_n(x) - p_T(x)|\rightarrow 0$ in probability;
\item $\left\{\left(\frac{p_{\phi}}{p_T}\right)^{\gamma}, \phi\in\Phi\right\}$ is a Glivenko-Cantelli class of functions;
\item there exists $n_0$ such that $\forall n\geq n_0$, the probability that the quantity 
\[\mathcal{A}_n = \sup_{\phi}\int{\frac{\left(K_w*P_n\right)^{\frac{-\gamma+1}{2}}(x) + p_T^{\frac{-\gamma+1}{2}}(x)}{p_{\phi}^{-\gamma}(x)}dx}\] 
is upper bounded independently of $n$ is greater than $1-\eta_n$ for some $\eta_n\rightarrow 0$;
\item there exists $n_0$ such that $\forall n\geq n_0$, the probability that the quantity 
\[\mathcal{B}_n = \sup_{\phi}\frac{1}{n}\sum_{i=1}^n{p_{\phi}^{\gamma}(y_i)}\] 
is upper bounded independently of $n$ is greater than $1-\eta_n$ for some $\eta_n\rightarrow 0$;
\item for any $\varepsilon>0$, $\inf_{\phi:\|\phi-\phi^T\|\geq\varepsilon} P_T h(P_T,\phi) > P_T h(P_T,\phi^T)$,
\end{enumerate}
then the minimum dual $\varphi-$divergence estimator defined by (\ref{eqn:NewMDphiDE}) is consistent whenever it exists.
\end{theorem}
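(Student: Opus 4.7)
The plan is to verify the hypotheses of Theorem \ref{theo:VanderVaart} applied to $M_n(\phi) = P_n H(P_n, \phi)$ and $M(\phi) = P_T h(P_T, \phi)$. Assumption 5 is exactly the well-separation condition (\ref{eqn:ConsistP2}), so everything reduces to proving the uniform convergence (\ref{eqn:ConsistP1}). I would split this via the triangle inequality (\ref{eqn:DecompProofIneq}). For the second supremum, observe that $\varphi_\gamma^{\#}(t) = (t^\gamma - 1)/\gamma$, so that $P_n h(P_T, \phi) - P_T h(P_T, \phi) = -\frac{1}{\gamma}(P_n - P_T)\bigl[(p_\phi/p_T)^\gamma\bigr]$; this vanishes uniformly in $\phi$ by assumption 2, since the integral piece of $h$ is deterministic.

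For the harder supremum I would exploit identity (\ref{eqn:PHdifferenceNeymChi2}), treating its two pieces separately. On the integral term, the key trick is the difference-of-squares factorization
\[(K_w * P_n)^{1-\gamma} - p_T^{1-\gamma} = \bigl[(K_w * P_n)^{(1-\gamma)/2} - p_T^{(1-\gamma)/2}\bigr]\bigl[(K_w * P_n)^{(1-\gamma)/2} + p_T^{(1-\gamma)/2}\bigr],\]
whose second bracket is exactly the integrand appearing in $\mathcal{A}_n$. Since $(1-\gamma)/2 \in (1/2, 1)$, the map $t \mapsto t^{(1-\gamma)/2}$ is Hölder on $\mathbb{R}_+$ with exponent $(1-\gamma)/2$ and constant $1$, so the first bracket is pointwise bounded by $|K_w * P_n - p_T|^{(1-\gamma)/2}$. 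Pulling the uniform norm outside yields
\[\sup_{\phi} \left|\int \frac{(K_w * P_n)^{1-\gamma} - p_T^{1-\gamma}}{p_\phi^{-\gamma}}(x)\,dx\right| \;\leq\; \|K_w * P_n - p_T\|_\infty^{(1-\gamma)/2} \, \mathcal{A}_n,\]
which tends to $0$ in probability by assumptions 1 and 3.

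The sum term is handled analogously but more directly: since $-\gamma \in (0, 1)$, the function $t \mapsto t^{-\gamma}$ is itself $(-\gamma)$-Hölder on $\mathbb{R}_+$ with constant $1$, giving the pointwise estimate $\bigl|(K_w * P_n(y_i))^{-\gamma} - p_T(y_i)^{-\gamma}\bigr| \leq \|K_w * P_n - p_T\|_\infty^{-\gamma}$. Factoring this out of the average and taking the supremum in $\phi$ bounds the sum by $\frac{1}{|\gamma|}\|K_w * P_n - p_T\|_\infty^{-\gamma}\,\mathcal{B}_n$, which vanishes in probability by assumptions 1 and 4. Combining the two estimates yields (\ref{eqn:ConsistP1}) and Theorem \ref{theo:VanderVaart} then delivers the result. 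The main obstacle, in contrast with Theorem \ref{theo:HellingerConsist}, is to avoid assuming compactly supported and uniformly bounded densities; the Hölder structure of $t \mapsto t^{-\gamma}$ and $t \mapsto t^{(1-\gamma)/2}$, available precisely because $\gamma \in (-1, 0)$, is what replaces that hypothesis, while the role of the global boundedness is shifted to the data-adaptive quantities $\mathcal{A}_n$ and $\mathcal{B}_n$ in assumptions 3 and 4.
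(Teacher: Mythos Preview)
Your proof is correct and follows essentially the same route as the paper's: the difference-of-squares factorization for the integral term together with the regularity of $t\mapsto t^{(1-\gamma)/2}$ and $t\mapsto t^{-\gamma}$, with $\mathcal{A}_n$ and $\mathcal{B}_n$ absorbing the remaining factors. The only cosmetic difference is that the paper invokes \emph{uniform continuity} of these power maps while you quantify it via the explicit H\"older bound $|a^\alpha-b^\alpha|\le|a-b|^\alpha$ for $\alpha\in(0,1)$, which is the same observation stated more precisely.
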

This result is clearly more general than the result of Theorem \ref{theo:HellingerConsist}. We treat models which may be defined on the whole set $\mathbb{R}^d$. The assumptions are still accessible as we will demonstrate in the following example. 

\begin{example}
\label{example:GaussConsist}
We take a simple example of a Gaussian model with unknown mean $\phi=\mu$ which is supposed to be in a close interval $[\mu_{\min},\mu_{\max}]$. We consider power divergences for which $\gamma\in(-1,0)$. We use Theorem \ref{theo:NeymChi2Consist} to prove consistency. The Gaussian kernel is used. Assumption 1 is easily checked by considering the list of conditions in Theorem A in \cite{Silverman}. Assumption 2 holds since
\[\frac{p_{\phi}^{\gamma}}{p_{\phi^T}^{\gamma}}p_{\phi^T}(x) = e^{-\frac{1}{2}x^2 - \mu y + \frac{1}{2}\mu^2}.\]
The verification of assumption 3 is technical, so that we let it to the end. For assumption 4, in order to study $\mathcal{B}_n$, it suffices to consider the quantity $\sup_{\phi}\int{p_{\phi}^{\gamma}p_{\phi^T}}$. Indeed, the Glivenko-Cantelli theorem states that both quantities $\sup_{\phi}\frac{1}{n}\sum_{i=1}^n{p_{\phi}^{\gamma}(y_i)}$ and $\sup_{\phi}\int{p_{\phi}^{\gamma}p_{\phi^T}}$ are uniformly close for sufficiently large $n$ independently of $\phi$, hence boundedness of either of them implies boundedness of the other. We have:
\[\int{p_{\phi}^{\gamma}p_{\phi^T}} = c_2e^{-\frac{-\gamma}{1+\gamma}\frac{\mu^2}{2}}.\]
for some constant $c_2$. Here again, since $\mu$ is supposed to be in a closed interval, the previous quantity is bounded. This entails that $\mathcal{B}_n$ is bounded and assumption 4 is fulfilled. \\
We move now to assumption 5. By the dual representation of the divergence, we have $P_Th(P_T,\phi) = D_{\varphi}(p_{\phi},p_{\phi^T})$. This implies that :
\[P_Th(P_T,\phi) = \frac{1}{\gamma(\gamma-1)}e^{\frac{\gamma^2-\gamma}{2}\mu^2} - \frac{1}{\gamma(\gamma-1)}.\]
This function clearly verifies assumption 5 since it has a minimum at $\mu=0$ and this minimum is well separated. \\
We go back to assumption 3. The second term is given by:
\[\int{\frac{p_{\phi^T}^{\frac{-\gamma+1}{2}}(x)}{p_{\phi}^{-\gamma}(x)}} = c_1e^{\frac{\gamma^2-\gamma}{2(1+\gamma)}\mu^2}\]
for some constant $c_1$. It is thus bounded since $\mu$ is supposed to be in a closed interval. For the first term, let $\eta>0$. By Jensen's inequality, we may write:
\begin{eqnarray*}
\int{\frac{(K_w*P_n)^{\frac{1-\gamma}{2}}(y)}{p_{\phi}^{-\gamma}(y)}dy} & = & \frac{(2\pi)^{\frac{1-\eta}{2}}}{\eta^{3/2}} \int{\left(\frac{K_w*P_n}{p_{\phi}^{2\frac{\eta-\gamma}{1-\gamma}}}\right)^{\frac{1-\gamma}{2}}(y) \frac{1}{\sqrt{2\pi/\eta}}e^{-\eta\frac{(y-\mu)^2}{2}}dy} \\
 & \leq & \frac{(2\pi)^{\frac{1-\eta}{2}}}{\eta^{3/2}}\left(\int{\frac{K_w*P_n}{p_{\phi}^{2\frac{\eta-\gamma}{1-\gamma}}}(y) \frac{1}{\sqrt{2\pi/\eta}}e^{-\eta\frac{(y-\mu)^2}{2}}dy}\right)^{\frac{1-\gamma}{2}}\\
 & \leq & (2\pi)^{\eta/2-\gamma-1/2}e^{\left(\frac{\eta-\gamma}{1-\gamma}-\frac{\eta}{2}\right)\mu^2} \times \left(\frac{1}{nw}\right.\\
\sum_{i=1}^n& e^{-\frac{y_i^2}{2w^2}} & \left. \int{e^{-\frac{1-\gamma-2(\eta-\gamma)w^2+\eta(1-\gamma)w^2}{2w^2(1-\gamma)}y^2 + \frac{(1-\gamma)y_i-2w^2(\eta-\gamma)\mu+w^2(1-\gamma)\eta\mu}{w^2(1-\gamma)}y}dy}\right)^{\frac{1-\gamma}{2}}.
\end{eqnarray*}
We calculate each integral separately:
\begin{multline*}
\int{e^{-\frac{1-\gamma-2(\eta-\gamma)w^2+\eta(1-\gamma)w^2}{2w^2(1-\gamma)}y^2 + \frac{(1-\gamma)y_i-2w^2(\eta-\gamma)\mu+w^2(1-\gamma)\eta\mu}{w^2(1-\gamma)}y}dy} = \\  \sqrt{2\pi}w\sqrt{\frac{1-\gamma}{1-\gamma+(-\eta+2\gamma-\eta\gamma)w^2}}\exp\left[\frac{\left((1-\gamma)y_i-2w^2(\eta-\gamma)\mu+w^2(1-\gamma)\eta\mu\right)^2}{2w^2(1-\gamma)\left(1-\gamma-2(\eta-\gamma)w^2+\eta(1-\gamma)w^2\right)}\right].
\end{multline*}
We now proceed to estimate the sum over $i$. First, the only important term in the precedent integral is the one with factor $y_i^2$. Therefore, we denote in the precedent integral $c_2,c_1,c_0$ respectively the coefficients of terms $y_i^2,y_i$ and the constant term. We denote also $c$ the constant before the exponential (without the $w$). We only give the form of $c_2$:
\[c_2 = \frac{1-\gamma}{2w^2\left(1-\gamma-2(\eta-\gamma)w^2+\eta(1-\gamma)w^2\right)}.\]
We now have:
\begin{multline*}
\frac{1}{nw}\sum_{i=1}^n{e^{-\frac{y_i^2}{2w^2}} \int{e^{-\frac{1-\gamma-2(\eta-\gamma)w^2+\eta(1-\gamma)w^2}{2w^2(1-\gamma)}y^2 + \frac{(1-\gamma)y_i-2w^2(\eta-\gamma)\mu+w^2(1-\gamma)\eta\mu}{w^2(1-\gamma)}y}dy}} \\
 = c\frac{1}{n}\sum_{i=1}^n{\exp\left[\left(c_2-\frac{1}{2w^2}\right)y_i^2+c_1y_i+c_0\right]} \\
= c\frac{1}{n}\sum_{i=1}^n{\exp\left[\frac{\eta-2\gamma+\eta\gamma}{2\left(1-\gamma-2(\eta-\gamma)w^2+\eta(1-\gamma)w^2\right)}y_i^2+c_1y_i+c_0\right]}.
\end{multline*}
The final step is to use a version of the law of large numbers for independent random variables such as the two series theorem of Kolomogrov (see \citep{Feller} Chap VII, Theorem 3 page 238) since the terms of the sum do not have the same probability law, but guided by the standard Gaussian law. The general term of the sum is given by:
\[
Z_i  =  \exp\left[\frac{\eta-2\gamma+\eta\gamma}{2\left(1-\gamma-2(\eta-\gamma)w^2+\eta(1-\gamma)w^2\right)}y_i^2+c_1y_i+c_0\right]. \]
One can verify that the expectation of $Z_i$ exists as soon as the following condition is fulfilled:
\[ 0\leq \eta  <  1 \qquad \text{and} \qquad \gamma>-1.\]
Indeed, 
\begin{eqnarray*}
\mathbb{E}[Z_i] = \frac{1}{\sqrt{2\pi}}\int{\exp\left[\frac{(\eta-1)(\gamma+1) + (\eta-2\gamma+\eta\gamma)w^2}{2\left(1-\gamma-2(\eta-\gamma)w^2+\eta(1-\gamma)w^2\right)}y^2+(c_1+\mu)y+c_0-\mu^2/2\right]}.
\end{eqnarray*}
The dominating term in the integral is the one with $y^2$. It suffices then that the coefficient of $y^2$ to be negative so that the integral exists. We have
\[\frac{(\eta-1)(\gamma+1) + (\eta-2\gamma+\eta\gamma)w^2}{2\left(1-\gamma-2(\eta-\gamma)w^2+\eta(1-\gamma)w^2\right)}<0\]
if the denominator is positive and the nominator is negative. The denominator is equal to $1-\gamma + (-\eta+(2-\eta)\gamma)w^2$. Suppose that $\eta\in(0,1)$, then the denominator is positive as soon as :
\begin{equation}
w^2 < \frac{1-\gamma}{\eta-(2-\eta)\gamma}.
\label{eqn:CondWindow1}
\end{equation}
On the other hand, the nominator is equal to $(\eta-1)(\gamma+1) + (\eta-2\gamma+\eta\gamma)w^2$. If $\eta\in(0,1)$, then the nominator is negative as soon as:
\begin{equation}
w^2<\frac{(1-\eta)(1+\gamma)}{\eta-(2-\eta)\gamma}.
\label{eqn:CondWindow2}
\end{equation}
Combining this with (\ref{eqn:CondWindow1}) and since $(1-\eta)(1+\gamma)<1-\gamma$, then if $\eta\in(0,1)$, the coefficient of $y^2$ is negative as soon as (\ref{eqn:CondWindow2}) is fulfilled. Recall that, both $\eta$ and $\gamma$ are fixed values which do not depend on $n$ (the sample size). Moreover, the window $w$ need to go to zero as $n$ goes to infinity to ensure the consistency of the kernel density estimator. Thus condition (\ref{eqn:CondWindow2}) is fulfilled for any $\gamma\in(-1,0)$ as soon as $\eta\in(0,1)$.\\
The variance can be calculated similarly and proved to be finite under some condition to be identified. It suffices to calculate the second order moment. We have:
\begin{eqnarray*}
\mathbb{E}[Z_i^2] = \frac{1}{\sqrt{2\pi}}\int{\exp\left[\frac{2\eta-3\gamma+2\eta\gamma-1+(\eta-2\gamma+\eta\gamma)w^2}{2\left(1-\gamma-2(\eta-\gamma)w^2+\eta(1-\gamma)w^2\right)}y^2+(c_1+\mu)y+c_0-\mu^2/2\right]}.
\end{eqnarray*}
The coefficient of the dominating term is negative as soon as the denominator is positive (when $\eta\in(0,1)$) and the nominator is negative. This is translated into the following condition:
\[w^2<\frac{-2\eta-2\eta\gamma+3\gamma +1}{\eta-2\gamma+\eta\gamma}.\]
The right hand side is positive only if:
\[\gamma>\frac{2\eta-1}{3-2\eta}.\]
Since $\eta\in(0,1)$ and the function $\eta\mapsto \frac{2\eta-1}{3-2\eta}$ is increasing, then possible values for $\gamma$ where the variance is finite is $(-1/3,0)$. It results that for $\gamma\in[-\frac{1}{3},0)$, the Kolomogrov's two series theorem applies and the average $\frac{1}{n}\sum{Z_i}$ now converges in probability. Besides, the remaining factor $c$ also converges as $n$ goes to infinity (and $w$ goes to zero) to a constant (equal to 1).  Thus, boundedness of $\int{\frac{(K_w*P_n)^{\frac{1-\gamma}{2}}(y)}{p_{\phi}^{-\gamma}(y)}dy}$ is ensured. The argument becomes uniform on $\mu$ since it is supposed to be inside a closed interval $[\mu_{\min},\mu_{\max}]$.
All assumptions of Theorem \ref{theo:NeymChi2Consist} are now verified, and the kernel-based MD$\varphi$DE defined by (\ref{eqn:NewMDphiDE}) is consistent in the Gaussian model.
\end{example}
\subsection{Asymptotic normality}
In the literature on M-estimators, we study the asymptotic normality starting from the estimating equation, see \cite{Vaart} Chap. 5 Section 5.3. The idea, then, consists in using a Taylor expansion. Keeping the same notation as in the previous paragraph, the estimating equation has the form:
\[\nabla P_n H(P_n,\hat{\phi}) = 0.\]
We apply a Taylor expansion on $\nabla P_n H(P_n,\phi)$ between $\hat{\phi}$ and $\phi^T$.
\[\nabla P_n H(P_n,\hat{\phi}) = \nabla P_n H(P_n,\phi^T) + J_{P_n H(P_n,\phi^T)} (\hat{\phi}-\phi^T) + o_P\left(n^{-1/2}\right).\]
The left hand side is zero by definition of $\hat{\phi}$. The $o_P\left(n^{-1/2}\right)$ comes from a suitable control on the third derivatives of the objective function. If the matrix of second order derivatives $J_{P_n H(P_n,\phi^T)}$ converges in probability to an invertible matrix $J$, then, we may write:
\begin{equation}
\sqrt{n} (\hat{\phi}-\phi^T) = J^{-1} \sqrt{n} \nabla P_n H(P_n,\phi^T) + o_P(1).
\label{eqn:AsymptotMeanVal}
\end{equation}
The main problem resides in showing that $\sqrt{n} \nabla P_n H(P_n,\phi^T)$ has a multivariate Gaussian limit law. In the case of our kernel-based MD$\varphi$DE, the vector $\sqrt{n} \nabla P_n H(P_n,\phi^T)$ is not a sum of i.i.d. terms (the case of M-estimates). It contains the difficulty of the case of divergences approximated by replacing the true distribution by a kernel density estimator (\cite{Beran} or \cite{ParkBasu}). Besides, there is a sum of strongly dependent terms making the treatment of this vector very complicated in a general setup. Strong, but standard, conditions are needed in order to study the asymptotic distribution. We only study the case of univarite densities. The general case of multivariate densities is more complicated, because there should be a correction term similarly to \cite{TamuraBoos}. We leave this part to a future work.\\
The following result covers all power divergences, i.e. $\varphi-$divergences with $\varphi=\varphi_{\gamma}$ with $\gamma\in\mathbb{R}\setminus\{0,1\}$. The proof is differed to Appendix \ref{Append:TheoNormalAsyptot}.
\begin{theorem}
\label{theo:NormalAsyptotNewMD}
For the class of power divergences with $\varphi=\varphi_{\gamma}$ for $\gamma\neq \{0,1\}$, assume that:
\begin{enumerate}
\item the kernel-based MD$\varphi$DE $\hat{\phi}$ is consistent;
\item the kernel $K$ is symmetric and has a compact support where it is of class $\mathcal{C}^1$. Moreover $z^2K(z)$ is integrable;
\item the density $p_{\phi^T}$ is defined on a compact, is positive, bounded and twice derivable such that $p_{\phi^T}''$ is bounded. Moreover, there exists a neighborhood of $\phi^T$ such that the partial derivatives up to third order with respect to $\phi$ are bounded;
\item $\frac{n^{1/2}w}{-\log w}\rightarrow \infty$ and $n^{1/2}w^2\rightarrow 0$,
\end{enumerate}
then,
\begin{equation}
\sqrt{n}(\hat{\phi}-\phi^T) \xrightarrow[\mathcal{L}]{} \mathcal{N}\left(0,(2\gamma^2+1)J^{-1}S\left(J^{-1}\right)^t\right),
\label{eqn:AsymptotNormalResNewMDphiDE}
\end{equation}
where $S=\int{\nabla p_{\phi^T}\nabla p_{\phi^T}^t}$.
\end{theorem}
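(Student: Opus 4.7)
The strategy is the classical M-estimator argument sketched in the text around equation (\ref{eqn:AsymptotMeanVal}), the real work being to control the dependence on the kernel estimator inside both the integral and the empirical sum defining $\nabla P_n H(P_n,\phi)$. I would proceed in four steps.

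\textbf{Step 1: estimating equation and linearisation.} From $\nabla P_n H(P_n,\hat\phi)=0$, I apply a Taylor expansion of $\phi\mapsto\nabla P_n H(P_n,\phi)$ about $\phi^T$. Combined with the consistency of $\hat\phi$ (assumption 1) and the boundedness of the third $\phi$-derivatives of the integrand on a neighbourhood of $\phi^T$ (assumption 3), the cubic remainder is $o_P(n^{-1/2})$, so that
\[
\sqrt n(\hat\phi-\phi^T)=-\bigl[\nabla^2 P_nH(P_n,\tilde\phi_n)\bigr]^{-1}\sqrt n\,\nabla P_nH(P_n,\phi^T)+o_P(1),
\]
with $\tilde\phi_n$ on the segment $[\hat\phi,\phi^T]$.

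\textbf{Step 2: convergence of the Hessian.} Under assumption 2 on the kernel and the rate $n^{1/2}w/(-\log w)\to\infty$ of assumption 4, a Silverman-type bound gives the uniform consistency $\sup_x|K_w*P_n(x)-p_{\phi^T}(x)|=O_P\bigl(\sqrt{-\log w/(nw)}\bigr)=o_P(1)$. Replacing $K_w*P_n$ by $p_{\phi^T}$ inside $\nabla^2 P_n H(P_n,\tilde\phi_n)$ costs only $o_P(1)$ (bounded derivatives, compact support), after which the Glivenko--Cantelli theorem combined with $\tilde\phi_n\to\phi^T$ delivers a deterministic limit $J$. A direct computation using $\varphi'_\gamma(1)=0$, $\varphi''_\gamma(1)=1$, $\varphi^{\#'}(1)=1$, $\varphi^{\#''}(1)=\gamma-1$ identifies $J$ as the Fisher-information-like matrix appearing in the statement.

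\textbf{Step 3: Gaussian limit of the score.} This is the delicate step. Set $\Delta_n(x)=K_w*P_n(x)-p_{\phi^T}(x)$. I apply a second-order Taylor expansion of $t\mapsto t^{-\gamma}$ (and of $t\mapsto t^{\gamma-1}$) around $p_{\phi^T}$ inside both pieces of $\nabla P_nH(P_n,\phi^T)$, obtaining a decomposition into three types of terms:
\begin{enumerate}
\item a leading i.i.d.\ average $n^{-1}\sum_i\psi(Y_i)$ with $\psi$ centred under $P_{\phi^T}$;
\item terms linear in $\Delta_n$ of the shape $\int\Delta_n(x)g(x)\,dx$; using that $K$ is symmetric and writing $\int\Delta_ng=(P_n-P_{\phi^T})(K_w\star g)$, Taylor expansion of $K_w\star g$ gives $(P_n-P_{\phi^T})g$ plus a deterministic bias of order $w^2$, which is $o(n^{-1/2})$ by assumption 4;
\item terms at least quadratic in $\Delta_n$, bounded uniformly by $\sup_x\Delta_n(x)^2\cdot\text{const}=O_P(-\log w/(nw))$, which is $o_P(n^{-1/2})$ again by assumption 4.
\end{enumerate}
Collecting the type-1 and linearised type-2 contributions yields a single i.i.d.\ sum $n^{-1/2}\sum_i\widetilde\psi(Y_i)$ with centred summand, and the multivariate CLT gives a Gaussian limit. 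A careful covariance computation, using $\int K=1$ and the vanishing first moment of $K$, identifies the limiting covariance as $(2\gamma^2+1)\,S$ with $S=\int\nabla p_{\phi^T}\nabla p_{\phi^T}^{\,t}$: the $1$ comes from the score-like contribution, the $2\gamma^2$ from the $\gamma$-weighted kernel correction and the cross-covariance of the two pieces.

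\textbf{Step 4: conclusion.} Slutsky's lemma combined with Steps 2 and 3 delivers the claimed limit $\mathcal N\bigl(0,(2\gamma^2+1)J^{-1}S(J^{-1})^t\bigr)$.

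\textbf{Main obstacle.} Step 3 is the hard part. The score is not an i.i.d.\ sum because $K_w*P_n$ enters non-linearly inside both the integral over $x$ and the sum over $Y_i$; the extra prefactor $(2\gamma^2+1)$, which rules out a direct MLE-type answer, arises only after a careful bookkeeping of the linear-in-$\Delta_n$ terms and their covariance with the leading i.i.d.\ score. The two bandwidth conditions in assumption 4 are exactly what is needed here: $n^{1/2}w^2\to 0$ kills the smoothing bias on the $\sqrt n$ scale, while $n^{1/2}w/(-\log w)\to\infty$ is the minimal rate ensuring that the quadratic-in-$\Delta_n$ remainders are $o_P(n^{-1/2})$ uniformly.
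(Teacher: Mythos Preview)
Your overall plan coincides with the paper's: linearise the estimating equation, show the Hessian converges, and extract a CLT for the score via a second-order Taylor expansion in $\Delta_n=K_w*P_n-p_{\phi^T}$, killing quadratic remainders with the bandwidth conditions. Your treatment of the Hessian and of the quadratic-in-$\Delta_n$ terms is essentially correct (the paper's splitting via the intermediate $K_w*P_{\phi^T}$ is finer, but your cruder $\sup_x\Delta_n(x)^2$ bound already gives $o_P(n^{-1/2})$ under assumption~4).

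There is, however, a genuine gap in Step~3. You claim that after Taylor expansion all linear-in-$\Delta_n$ contributions are ``of the shape $\int\Delta_n(x)g(x)\,dx$'', which you then linearise neatly as $(P_n-P_{\phi^T})(K_w\star g)$. This is true for the \emph{integral} piece of the score, but not for the \emph{sum} piece: expanding $K_{n,w}^{-\gamma}$ inside $\frac{1}{n}\sum_i \nabla p_{\phi^T}\,p_{\phi^T}^{\gamma-1}/K_{n,w}^{\gamma}(Y_i)$ produces a term proportional to
\[
\frac{1}{n}\sum_{i=1}^n \frac{\nabla p_{\phi^T}}{p_{\phi^T}^2}(Y_i)\,\Delta_n(Y_i),
\]
which is a diagonal U-statistic-type object (the same data enter both $\Delta_n$ and the evaluation points), not an integral against Lebesgue measure. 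Your rewriting trick does not apply to it. The paper handles this explicitly: it shows that this sum agrees with the corresponding integral $\int \frac{\nabla p_{\phi^T}}{p_{\phi^T}^2}(x)\,\Delta_n(x)\,p_{\phi^T}(x)\,dx$ up to $o_P(n^{-1/2})$, via an integration by parts that turns the discrepancy into $\sqrt n\,\sup|\mathbb F_n-\mathbb F_{\phi^T}|$ multiplied by quantities involving $\sup|K_{n,w}-p_{\phi^T}|$ and $\sup|K_{n,w}'-p_{\phi^T}'|$; this is where the $\mathcal C^1$ assumption on $K$ is actually used. Once this reduction is made, the three surviving Gaussian pieces (the i.i.d.\ score, the integral's linear-in-$\Delta_n$ term with coefficient $-\gamma$, and the sum's linear-in-$\Delta_n$ term with coefficient $\gamma$) are driven by the \emph{same} underlying $\mathcal N(0,S)$ fluctuation, which is how the $(2\gamma^2+1)$ factor emerges. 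Without this identification you cannot compute the covariance, and your description of the $2\gamma^2$ as coming from a ``cross-covariance'' is too vague to be a proof.
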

Assumptions on the model in this theorem are restrictive, but stay standard. Similar conditions were considered in the study of the rates of convergence of the kernel density estimation, see \cite{WiedWeibbach}. They were also considered in the study of asymptotic properties of $\varphi-$divergences, see \cite{Beran} Theorem 4. Furthermore, conditions on the bandwidth are verified for $w=n^{-1/4-\delta}$ for $\delta\in(0,1/8)$, see \cite{Bordes10} Remark 3.1. Notice that even with such strong assumptions, the proof is not simple and demands several techniques and results from the theory on kernel density estimation.\\
\begin{remark}
Under the assumptions of Theorem \ref{theo:NormalAsyptotNewMD}, consistency of the kernel-based MD$\varphi$DE is ensured by Theorem \ref{theo:HellingerConsist} provided the differentiability of $\varphi$ up to second order (assumption 1) and the uniqueness and well separability of the minimum (assumption 4). Notice that the consistency of the kernel estimator (assumption 2) is also fulfilled, see \cite{WiedWeibbach} Theorem 2.
\end{remark}
\begin{remark}
The condition on the existence and boundedness of $p_{\phi^T}''$ can be relaxed to being Lipschitz function. This demands however more assumptions on the bandwidth, see lemma 3.1 in \cite{Bordes10}. 
\end{remark}
\subsection{Influence Function for a given window}
In practice, the choice of the window is based on methods such as cross-validation, Gaussian approximations or even based on personal experience. Thus, it is interesting to study the robustness properties supposing that the window is generated by an external tool. Although in practice, we estimate the window on the basis of an observed dataset, this creates a complication in the definition and the calculation of the influence function. For this reason, we suppose that the window is fixed and independent of $n$.\\
The influence function (IF), although being limited to the existence of a noise-component, is easy to calculate in general\footnote{This is regardless of the theoretical justifications of its existence.} and gives an aspect of the robustness of an estimator whenever the IF is bounded. We derive in this paragraph the influence function of the new MD$\varphi$DE for the class of power divergences. The general case of function $\varphi$ seems to give an incomprehensible formula, and is not as interesting as the case of power divergences.\\

We recall the definition of the IF. Let $C$ be a functional which gives for a probability distribution $P$ the estimator corresponding to the argument of the infimum of $PH(P,\phi)$ defined earlier, i.e.
\[C(P) = \arginf_{\phi\in\Phi}\int{\varphi'\left(\frac{p_{\phi}}{K_{w}*P}\right)(x)p_{\phi}(x)dx} - \int{\varphi^{\#}\left(\frac{p_{\phi}(y)}{K_{w}*P(y)}\right) dP(x)}.\]
Hence, $C(P_n)$ is non other than the estimator given by (\ref{eqn:NewMDphiDE}) for a given $w$. Fisher consistency is translated by $C(P_{\phi^T})=\phi^T$. This is unfortunately not verified in general when the window is supposed to be calculated by an external tool, because the dual formula is a priori a lower bound of $D_{\varphi}(P_{\phi},P_{\phi^T})$, and we cannot be sure that it would verify the same identifiability property, i.e. $D(Q,P)=0$ iff $P=Q$ whenever $\varphi$ is strictly convex. Example \ref{example:GaussConsist} shows, however, a case where Fisher consistency is attained for any value of the window $w$.\\
The influence function measures the impact of a small perturbation in the distribution $P$ on the resulting estimator. It is hence defined by:
\[\text{IF}(P,Q) = \lim_{\varepsilon\rightarrow 0} \frac{C\left((1-\varepsilon)P+\varepsilon Q\right) - C(P)}{\varepsilon}.\]
We generally detect the influence of an outlier $x_0$ by observing what happens when we replace $P$ by $(1-\varepsilon)P + \varepsilon \delta_{x_0}$.\\ 
In the literature on M-estimates, one may derive the IF from the estimating equation. For power divergences, the estimating equation corresponding to $P$ is given by:
\begin{equation}
\frac{\gamma}{\gamma-1}\int{\frac{p_{C(P)}^{\gamma-1}\nabla p_{C(P)}}{(K_w*P)^{\gamma-1}}(x)dx} = \int{\frac{p_{C(P)}^{\gamma-1}\nabla p_{C(P)}}{(K_w*P)^{\gamma}}(x)dP(x)}, 
\label{eqn:EstimEq}
\end{equation}
where the gradient is calculated with respect to $\phi$. The influence function is obtained by "derivation"\footnote{The arginf function is a troublesome function when it comes to continuity and derivatives.} of the two sides with respect to $\varepsilon$ after having replaced $P$ by $(1-\varepsilon)P+\varepsilon Q$. Denote $J_{p_{\phi}}$ the matrix of second derivatives of $p_{\phi}$ with respect to $\phi$. The following result gives the formula of the IF for power divergences when the noise is generated by an arbitrary distribution $Q$ or when an outlier is present. The proof is differed to Appendix \ref{appendix:proofIF}.
\begin{theorem}
\label{theo:IF}
The influence function of the kernel-based MD$\varphi$DE defined by (\ref{eqn:NewMDphiDE}) for a given window is given by:
\begin{multline}
\text{IF}(P_T,Q) = \gamma A^{-1}\int{\frac{p_{C(P_T)}^{\gamma-1}\left[K_w*Q\right]\nabla p_{C(P_T)}}{(K_w*P_T)^{\gamma}}\left(1-\frac{p_T}{K*P_T}\right)(x) dx} \\ + A^{-1}\int{\frac{p_{C(P_T)}^{\gamma-1}\nabla p_{C(P_T)}}{(K_w*P_T)^{\gamma}}(x)dQ(x)}.
\label{eqn:IFGeneralDef}
\end{multline}
If $C$ is Fisher consistent, i.e. $C(P_T) = \phi^T$, then the influence function is given by:
\begin{multline}
\text{IF}(P_T,Q) = \gamma A^{-1}\int{\frac{p_{\phi^T}^{\gamma-1}\left[K_w*Q\right]\nabla p_{\phi^T}}{(K_w*P_T)^{\gamma}}\left(1-\frac{p_T}{K*P_T}\right)(x) dx} \\ + A^{-1}\int{\frac{p_{\phi^T}^{\gamma-1}\nabla p_{\phi^T}}{(K_w*P_T)^{\gamma}}(x)dQ(x)}.
\label{eqn:IFFisherConsist}
\end{multline}
Finally, if $Q = \delta_{x_0}$, then the IF is given by:
\begin{multline}
\text{IF}(P_T,x_0) = \gamma A^{-1}\int{\frac{p_{C(P_T)}^{\gamma-1}\left[K_w*\delta_{x_0}\right]\nabla p_{C(P_T)}}{(K_w*P_T)^{\gamma}}\left(1-\frac{p_T}{K_w*P_T}\right)(x) dx} \\ + A^{-1}\frac{p_{C(P_T)}^{\gamma-1}\nabla p_{C(P_T)}}{(K_w*P_T)^{\gamma}}(x_0),
\label{eqn:IFFisherConsistOutliers}
\end{multline}
where
\begin{equation}
A = \int{\left(\frac{\gamma}{\gamma-1} - \frac{p_T(x)}{K_w*P_T}\right)\frac{\left[(\gamma-1)\nabla p_{C(P_T)}\left(\nabla p_{C(P_T)}\right)^t+p_{C(P_T)}J_{p_{C(P_T)}}\right]p_{C(P_T)}^{\gamma-2}}{(K_w*P_T)^{\gamma-1}}}.
\label{eqn:IFmatrixA}
\end{equation}
\end{theorem}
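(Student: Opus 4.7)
The plan is to apply the standard M-estimator differentiation strategy to the estimating equation \eqref{eqn:EstimEq}. First I would substitute $P = P_\varepsilon := (1-\varepsilon)P_T + \varepsilon Q$ into both sides of \eqref{eqn:EstimEq}, using the linearity of the smoothing operator $K_w *$ to write $K_w * P_\varepsilon = (1-\varepsilon)\, K_w * P_T + \varepsilon\, K_w * Q$. This gives an identity in $\varepsilon$ of the form $L(\varepsilon) = R(\varepsilon)$, where both sides depend on $\varepsilon$ both through the mixture distribution explicitly and through the estimator $\phi_\varepsilon := C(P_\varepsilon)$.

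Next I would differentiate both sides with respect to $\varepsilon$ at $\varepsilon = 0$, applying the chain rule with $\text{IF}(P_T,Q) = \tfrac{d}{d\varepsilon}\phi_\varepsilon\big|_{\varepsilon=0}$. On the left-hand side, differentiating $p_{\phi_\varepsilon}^{\gamma-1}$ contributes a factor $(\gamma-1)p_{\phi_\varepsilon}^{\gamma-2}(\nabla p_{\phi_\varepsilon})^t \text{IF}$, differentiating $\nabla p_{\phi_\varepsilon}$ contributes $J_{p_{\phi_\varepsilon}} \text{IF}$, and differentiating $(K_w * P_\varepsilon)^{\gamma-1}$ in the denominator contributes a term proportional to $K_w*Q - K_w*P_T$. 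On the right-hand side, analogous contributions arise from $p_{\phi_\varepsilon}^{\gamma-1}\nabla p_{\phi_\varepsilon}$ and from the denominator $(K_w*P_\varepsilon)^\gamma$, plus an additional term from differentiating the integrating measure $dP_\varepsilon$, which produces $\int \tfrac{p_{\phi^T}^{\gamma-1}\nabla p_{\phi^T}}{(K_w*P_T)^\gamma}\, d(Q-P_T)$ at $\varepsilon=0$.

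After evaluating at $\varepsilon=0$ and rearranging, the terms involving $\text{IF}(P_T,Q)$ assemble into $A\cdot\text{IF}(P_T,Q)$ where $A$ is precisely the matrix \eqref{eqn:IFmatrixA}. The structure $(\gamma-1)\nabla p_{C(P_T)}(\nabla p_{C(P_T)})^t + p_{C(P_T)} J_{p_{C(P_T)}}$ arises from applying $\nabla_\phi$ to the product $p^{\gamma-1}\nabla p$, while the scalar factor $\tfrac{\gamma}{\gamma-1} - \tfrac{p_T}{K_w*P_T}$ emerges from combining the $\tfrac{\gamma}{\gamma-1}$ prefactor on the left with the integration against $dP_T$ on the right, normalized by the appropriate power of $K_w * P_T$. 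The remaining terms (those not multiplying $\text{IF}$) involve $K_w*Q$ through the denominator differentiation and $dQ$ through the measure differentiation; collecting them yields the right-hand side of \eqref{eqn:IFGeneralDef} after inverting $A$. Identity \eqref{eqn:IFFisherConsist} then follows by setting $C(P_T)=\phi^T$, and \eqref{eqn:IFFisherConsistOutliers} follows by taking $Q = \delta_{x_0}$ so that $\int f\, dQ = f(x_0)$.

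The main obstacle will be the bookkeeping: I need to track carefully how the denominator differentiation on the left (power $\gamma-1$) combines with that on the right (power $\gamma$ and integrated against $P_T$), because it is precisely the mismatch of these two powers that produces the subtraction $1 - \tfrac{p_T}{K_w*P_T}$ appearing in front of $K_w*Q$ in the final formula and, analogously, the factor $\tfrac{\gamma}{\gamma-1} - \tfrac{p_T}{K_w*P_T}$ inside $A$. Once these cancellations are identified and written cleanly, the identity reduces to a linear system in $\text{IF}(P_T,Q)$ whose solution is \eqref{eqn:IFGeneralDef}.
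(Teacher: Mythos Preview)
Your proposal is correct and follows essentially the same approach as the paper: differentiate both sides of the estimating equation \eqref{eqn:EstimEq} at $\varepsilon=0$ after substituting $P_\varepsilon=(1-\varepsilon)P_T+\varepsilon Q$, collect the terms multiplying $\text{IF}$ into the matrix $A$, and solve. One point worth making explicit in your bookkeeping: the residual $P_T$-terms (those coming from $K_w*(-P_T)$ in the denominator derivatives and from $-dP_T$ in the measure derivative) do not cancel among themselves by inspection --- they vanish precisely because the estimating equation \eqref{eqn:EstimEq} holds at $\varepsilon=0$, and the paper invokes it a second time at that stage to obtain the clean form \eqref{eqn:IFGeneralDef}.
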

\begin{remark}
The form of the IF is somewhat similar to the IF of the classical MD$\varphi$DE defined by (\ref{eqn:MDphiDEClassique}). \cite{TomaBronia} show that the IF of the classical MD$\varphi$DE is given by:
\begin{equation}
\text{IF}(P_T,x) = J^{-1}\frac{\nabla p_{\phi^T}}{p_{\phi^T}},
\label{eqn:IFMDphiDEClassique}
\end{equation}
where $J$ is the information matrix given by $\int{\frac{\nabla p_{\phi^T} (\nabla p_{\phi^T})^t}{p_{\phi^T}}}$. Going back to the IF of the new MD$\varphi$DE given by (\ref{eqn:IFFisherConsistOutliers}) and making $w$ goes to zero would replace $K_w*P_T$ by $p_{\phi^T}$. The first term thus disappears, and the IF would give $A^{-1}\frac{\nabla p_{\phi^T}}{p_{\phi^T}}$, where $A = J + \frac{1}{\gamma-1}J_{p_{\phi^T}}$ and $J_{p_{\phi^T}}$ is the matrix of second derivatives of $p_{\phi}$ with respect to $\phi$.\\
On the other hand, a general comparison for $w>0$ is also interesting. Indeed, the second term in (\ref{eqn:IFFisherConsistOutliers}) is a function of $x$ and seems to be the term guiding the boundedness of the IF. We can rewrite it as follows:
\[A^{-1}\frac{p_{C(P_T)}^{\gamma-1}\nabla p_{C(P_T)}}{(K_w*P_T)^{\gamma}}(x_0) = A^{-1}\frac{p_{C(P_T)}^{\gamma}}{(K_w*P_T)^{\gamma}}\frac{\nabla p_{C(P_T)}}{p_{C(P_T)}}(x_0).\]
A direct comparison with (\ref{eqn:IFMDphiDEClassique}) shows that our approach has resulted in the term $\frac{p_{C(P_T)}^{\gamma}}{(K_w*P_T)^{\gamma}}$ which could oblige the IF to be bounded in some cases. This is the ratio between the true density and the smoothed one. When $\gamma>0$, it is surprising that the IF becomes \emph{more} bounded as the ratio between the true distribution and the smoothed one decreases, which means that the smoothing is producing over estimation at the tail of the distribution.  
\end{remark}
\begin{example}
\label{Example:GaussIF}
We resume the univariate Gaussian example. It can be proved that in this model, the kernel-based MD$\varphi$DE is Fisher consistent. Indeed, function $P_TH(P_T,\mu)$ is given by:
\begin{multline*}
P_TH(P_T,\mu)=\frac{1}{\gamma-1}\sqrt{\frac{1+w^2}{1+\gamma w^2}} e^{-\frac{\gamma(1-\gamma)}{2(1+\gamma w^2)}\mu^2} - \frac{1}{\gamma}\sqrt{\frac{1+w^2}{(\gamma+1)w^2+1}} e^{-\frac{\gamma(w^2+1-\gamma)}{2(1+(\gamma+1)w^2)}\mu^2}\\ - \frac{1}{\gamma(\gamma-1)}.
\end{multline*}
This formula holds for $\gamma\in[-1,\infty)\setminus\{0,1\}$ whatever the value of $w$. It also holds for $\gamma<-1$ whenever $w^2\leq -\frac{1}{\gamma}$. This function has a minimum at $\mu=0$ whatever the value of $w$ when $\gamma>0$ (but different from 1), and has a local\footnote{The minimum of $P_TH(P_T,\mu)$ is $-\infty$ and is attained at $\pm\infty$, but when we restrain the set of possible parameters to a compact subset around zero, the estimation procedure becomes possible. Recall also that consistency was only proved when $\mu\in[\mu_{\min},\mu_{\max}]$, see example \ref{example:GaussConsist} in this section.} minimum at zero when $\gamma<0$ whatever the value of $w$, see figure (\ref{fig:ObjFun}).\\
\begin{figure}[ht]
\centering
\includegraphics[scale=0.26]{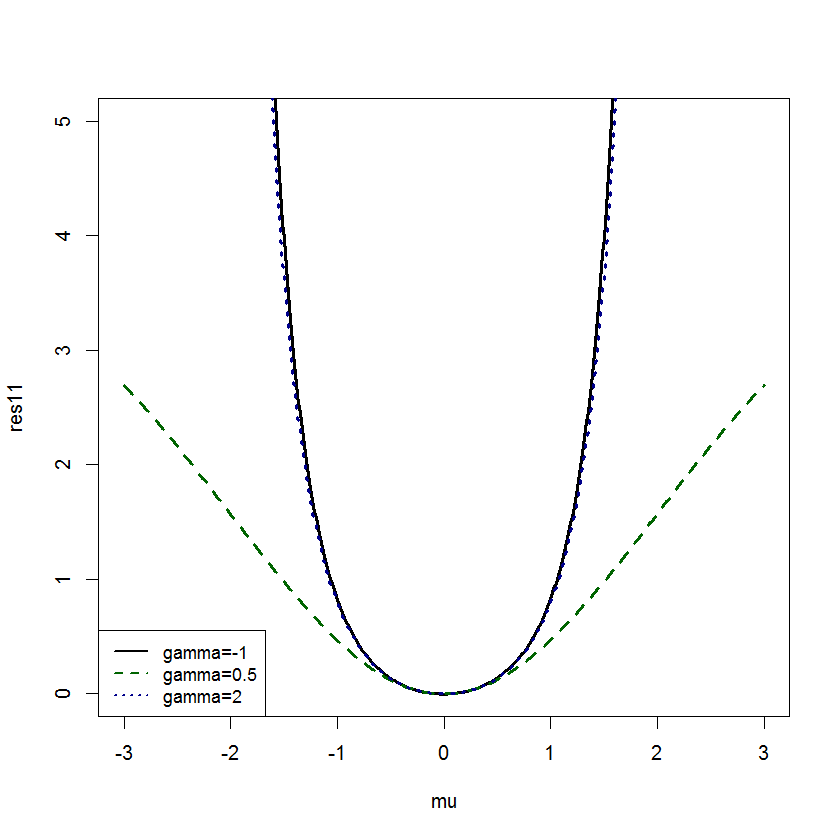}
\caption{Function $P_TH(P_T,\mu)$ for different windows and divergences. They all have an infimum at zero}
\label{fig:ObjFun}
\end{figure}

\noindent Let's calculate the IF given by (\ref{eqn:IFFisherConsistOutliers}). We leave the calculus of the matrix $A$ to the end. The second term is given by:
\begin{equation*}
\frac{p_{\phi^T}^{\gamma-1}\nabla p_{\phi^T}}{(K_w*P)^{\gamma}}(x_0) = (1+w^2)^{\gamma/2} x_0e^{-\frac{\gamma w^2}{2(1+w^2)}x_0^2}.
\end{equation*}
Hence, this quantity is bounded as soon as $\gamma>0$. The second quantity is an integral which needs to exist and be finite. We have:
\begin{multline*}
\frac{p_{\phi^T}^{\gamma-1}(x)K((x-x_0)/w)\nabla p_{\phi^T}(x)}{(K_w*P)^{\gamma}(x)}\left(1-\frac{p}{K_w*P}\right)(x) = \\ \frac{(1+w^2)^{\frac{\gamma+1}{2}}}{w} \exp\left[-\frac{\gamma w^4 + 1}{2w^2(1+w^2)}x^2 + \frac{xx_0}{w^2} -\frac{x_0^2}{2w^2}\right] \times \left(\frac{1}{\sqrt{1+w^2}}e^{-\frac{x^2}{2(1+w^2)}} - e^{-\frac{x^2}{2}}\right).
\end{multline*}
It is clear now that if $\gamma>0$, the integral exists. We should not forget that the integral term also depends on $x_0$. The dominating term is $e^{-x_0^2}$, so that the integral term is bounded as a function of $x_0$ as soon as the integral exists.\\
It remains to show that the term $A$ exists and is invertible. Since $\nabla p_{\phi^T}(x) = xe^{-x^2/2}$, and $J_{p_{\phi^T}} = (1+x^2)e^{-x^2/2}$, then:
\begin{eqnarray*}
A  & = & \sqrt{\frac{1+w^2}{2\pi}}\frac{\gamma}{\gamma-1} \left(\sqrt{\frac{2\pi}{a}} + \gamma\sqrt{\frac{2\pi}{a^3}}\right) - \frac{1+w^2}{\sqrt{2\pi}} \left(\sqrt{\frac{2\pi}{b}} + \gamma\sqrt{\frac{2\pi}{b^3}}\right),
\end{eqnarray*}
where $a = \frac{\gamma w^2 + 1}{1+w^2}$ and $b = \frac{\gamma w^2 + w^2 + 1}{1+w^2}$. It is clear that for $\gamma\in(0,1)$, the two terms constituting $A$ have the same sign, hence $A$ cannot be zero since it is the sum of two negative terms. However, if $\gamma>1$, $A$ may by zero for some cases. Indeed, $A$ is 0 whenever $\gamma^2 (1+\gamma+2\gamma w^2)^2 (1+(\gamma+1)w^2)^3 - (\gamma-1)(1+w^2)(1+\gamma+(\gamma+2)w^2)^2 = 0$. Notice that function $w\mapsto \gamma^2 (1+\gamma+2\gamma w^2)^2 (1+(\gamma+1)w^2)^3 - (\gamma-1)(1+w^2)(1+\gamma+(\gamma+2)w^2)^2$ is equal to $2\gamma-1>0$ when $w=0$, whereas it has a $-\infty$ limit at $+\infty$. Thus, it passes by zero for some $w>0$ since it is a continuous function. \\
Previous arguments permit us to conclude for sure that for $\gamma\in(0,1)$, the influence function of the estimator defined by (\ref{eqn:NewMDphiDE}) is bounded in the Gaussian model independently of the bandwidth of the Gaussian kernel. Moreover, it is unbounded for $\gamma<0$. Hence, one can hope to get a robust estimation when $\gamma\in(0,1)$, whereas further investigations are needed for the case of $\gamma<0$.
\end{example}


\section{Simulation study: comparison}\label{sec:Simulations}
\paragraph{Summary of the estimation methods and error criterion.} We summarize the results of 100 experiments by giving the average of the estimates and the error committed, and the corresponding standard deviation. We consider two error criteria; the total variation distance (TVD) and the Chi square divergence between the true distribution and the estimated one. These criteria are defined as follows:
\begin{eqnarray}
\sqrt{\chi^2(p_{\phi},p_{\phi^T})} & = & \sqrt{\int{\frac{\left(p_{\phi}(y)-p_{\phi^T}(y)\right)^2}{p_{\phi^T}(y)}dy}}; \label{eqn:Chi2Error}\\ 
\text{TVD}(p_{\phi},p_{\phi^T}) & = & \sup_{A\in\mathcal{B}_n(\mathbb{R})}\left|dP_{\phi}(A) - dP_{\phi^T}(A)\right|. \label{eqn:TVDError}
\end{eqnarray}
We prefer to use the Chi square divergence, because it measures the relative error between two probability laws. Hence, the error committed on sets where the true distribution attributes small values is penalized in a similar way to sets where the true distribution attributes large values. We use also the TVD because it has the property of measuring the largest error committed when measuring a set $A$ using the estimated distribution instead of the true one. The TVD can be directly calculated using the $L1$ distance. Indeed, the Scheff\'e lemma (see \cite{Meister} page 129.) states that:
\[\sup_{A\in\mathcal{B}_n(\mathbb{R})}\left|dP_{\phi}(A) - dP_{\phi^T}(A)\right| = \frac{1}{2}\int_{\mathbb{R}}{\left|p_{\phi}(y) - p_{\phi^T}(y)\right|dy}.\]
We consider the Hellinger divergence for estimators based on $\varphi-$divergences. Our preference of the Hellinger divergence is that we hope to obtain robust estimators without loss of efficiency, see \cite{Jimenz}. The parameter vector is estimated using six methods:
\begin{enumerate}
\item Maximum likelihood (MLE) which is calculated using EM for mixture models;
\item The classical MD$\varphi$DE defined by (\ref{eqn:MDphiDEClassique});
\item Our kernel-based MD$\varphi$DE defined by (\ref{eqn:NewMDphiDE}) with different choices for the kernel and its bandwidth;
\item The Basu-Lindsay approach with different choices for the kernel and its bandwidth;
\item The dual $\varphi$--divergence estimator (D$\varphi$DE) defined by (\ref{eqn:DphiDE}) with escort parameter the result of our kernel-based MD$\varphi$DE with the best choice of the kernel and window among presented possibilities; 
\item The minimum power density estimator (MDPD) of \cite{BasuMPD} defined by (\ref{eqn:MDPDdef}) for $a\in\{0.1,0.25,0.5,0.75,1\}$.
\end{enumerate}
We give for each experiment a summary of the results with comments, and precise the used kernels and the corresponding windows choices. We finally give an overall conclusion with some practical remarks.\\
\paragraph{Practical issues:}
Optimization was done using the Nelder-Mead algorithm. Integrations calculus were done using function \texttt{distrExIntegrate} of package \texttt{distrEx} which is a slight modification of the standard function \texttt{integrate}. It performs a Gauss-Legendre quadrature when function \texttt{integrate} returns an error. We have noticed that functions such as \texttt{integral} of package \texttt{pracma}\footnote{Function \texttt{integral} includes a variety of adaptive numerical integration methods such as Kronrod-Gauss quadrature, Romberg's method, Gauss-Richardson quadrature, Clenshaw-Curtis (not adaptive) and (adaptive) Simpson's method.	}, although has a good performance, is slow. Besides, function \texttt{int} of package \texttt{rmutil}, which uses either the Romberg method or algorithm 614 of the collected algorithms from ACM, seems to underestimate the value of the integral in slightly difficult circumstances such as heavy tailed distributions. For example, when we used it to calculate the classical MD$\varphi$DE in the GPD case, it gave robust results because it underestimated the infinity part of the integral (forged thresholding effect). Finally, during some experiences on GPD observations and Weibull distributions based on the Basu-Lindsay approach, function \texttt{distrExIntegrate} failed to converge and function \texttt{integral} was used to attain a result.  \\
\paragraph{Summary of the models and presentation of the results.}
Our simulation study covers the following models:
\begin{enumerate}
\item Gaussian model with unknown mean and variance;
\item Gaussian mixture with two components where the proportion and the two means are unknown;
\item Generalized Pareto distribution with unknown shape and scale;
\item Three Weibull mixtures with two components where the proportion and the two shapes are unknown.
\end{enumerate}
Outliers were added in the original data in many ways which will be specified according to each case. We have either added noise outside the support of the dataset or by dispersing the noise over the whole dataset. We have also used different distributions to produce the noise.\\
For the first two models, we only used a Gaussian kernel with window chosen using either Silverman's rule (nrd0 in the statistical tool R) or Sheather and Jones' rule (SJ). For the heavy tailed models which are defined on the half real line, we needed to use non classical kernels such as asymmetric kernels (RIG: reciprocal inverse Gaussian and GA: gamma kernels) and the varying KDE of \cite{vKDE} denoted here as MT (Mellin transform) defined here above by (\ref{eqn:MTKDE}), followed by the value of the bandwidth $\alpha\in\{5,10,15,20\}$. In the GPD model and the first Weibull mixture, we present a simple comparison between symmetric kernels and other non classical methods and show the advantage of the later in such context. We therefore avoided using symmetric kernels for other Weibull mixtures. For the Basu-Lindsay approach, we did not implement asymmetric kernels, see discussion in paragraph \ref{subsec:SmoothingModelEffect}. We only used the varying KDE.\\
Concerning the rule for deciding the window for the non classical kernels, we have tried out the cross-validation method (CV), but it resulted always in large (small for the varying KDE) and inconvenient windows especially when outliers are inserted. We were, therefore, obliged to use fixed windows in order to obtain good results. For each kernel and method, the window value or the rule used to calculate it is written next to it. More details can be found at each paragraph.
\subsection{Univariate Gaussian model}
We consider the Gaussian distribution $\mathcal{N}(\mu,\sigma^2)$ when both parameters $\mu$ and $\sigma$ are unknown. We generate at each run a 100-sample of the standard Gaussian distribution $\mathcal{N}(0,1)$. Outliers are added simply by replacing the 10 largest values in the sample by the value 10.\\
The maximum likelihood estimator of the parameters are simply the empirical mean and variance $\hat{\mu}=\frac{1}{100}\sum{y_i}, \hat{\sigma}^2 = \frac{1}{99}\sum{(y_i-\hat{\mu})^2}$. For methods which need kernels, we used a Gaussian kernel with two rules for the window; Silverman's rule and Sheather and Jones' one. We calculate the minimum density power divergence estimator (MDPD) for values of the tradeoff parameter $a\in\{0.1,0.25,0.5,0.75,1\}$. The D$\varphi$DE was calculated using the kernel-based MD$\varphi$DE as an escort with Silverman's rule. Estimation results are summarized in table \ref{tab:EstimGauss}. Estimation error is calculated in table \ref{tab:ErrGauss}.\\
When we are under the model, all compared methods give the same result with very slight differences. As we add $10\%$ outliers, the classical MD$\varphi$DE and the MLE give the same result which is positively deviated from the true mean with a large variance. This is already expected by virtue of the result of \citep{Broniatowski2014}. Other methods, ours included, give robust results except for MDPD with $a=0.1$. Our estimator (for both windows choices) is at the same level of efficiency as the MLE under the model. Besides, the window choice seems irrelevant for methods based on kernels but for Beran's method where Silverman's rule is slightly better. The MDPD seems to give the best tradeoff between efficiency and robustness for $a=0.5$ conquering other methods. The kernel-based MD$\varphi$DE and the Basu-Lindsay approaches give slightly better efficiency which is traded with slightly lower robustness in comparison to the result of MDPD with $a=0.5$.\\

\begin{table}[hp]
\centering
\begin{tabular}{|l|c|c|c|c||c|c|c|c|}
\hline
\multirow{2}{2.5cm}{Estimation method} & \multicolumn{4}{|c||}{No Outliers} & \multicolumn{4}{|c|}{$10\%$ Outliers}\\
\cline{2-9}
  & $\mu$ & sd$(\mu)$ & $\sigma$ & sd$(\sigma)$ & $\mu$ & sd$(\mu)$ & $\sigma$ & sd$(\sigma)$\\
 \hline
 \hline
\multicolumn{9}{|c|}{Hellinger} \\
\hline
\hline
Classical MD$\varphi$DE & 0.005 & 0.111 & 0.983 & 0.082 & 0.833 & 0.103 & 3.157 & 0.039\\
\hline
New MD$\varphi$DE - Silverman & 0.005 & 0.113 & 0.967 & 0.081 & -0.187 & 0.114 & 0.810 & 0.069\\
New MD$\varphi$DE - SJ & 0.005 & 0.113 & 0.973 & 0.082 & -0.191 & 0.114 & 0.800 & 0.068\\
\hline
Basu-Lindsay - Silverman & 0.005 & 0.114 & 0.968 & 0.081 & -0.191 & 0.114 & 0.805 & 0.068\\
Basu-Lindsay - SJ & 0.005 & 0.113 & 0.970 & 0.081 & -0.193 & 0.114 & 0.799 & 0.067\\
\hline
Beran - Silverman & 0.005 & 0.113 & 1.024 & 0.087 & -0.191 & 0.114 & 0.878 & 0.075\\
Beran - SJ & 0.005 & 0.112 & 1.048 & 0.089 & -0.192 & 0.114 & 0.853 & 0.073\\
\hline
\hline
MDPD 0.1 & 0.005 & 0.112 & 0.983 & 0.082 & 0.319 & 0.111 & 2.451 & 0.079\\
MDPD 0.25 & 0.006 & 0.112 & 0.983 & 0.083 & -0.145 & 0.114 & 0.854 & 0.074\\
MDPD 0.5 & 0.008 & 0.117 & 0.979 & 0.087 & -0.115 & 0.116 & 0.875 & 0.081\\
MDPD 0.75 & 0.010 & 0.123 & 0.975 & 0.093 & -0.093 & 0.120 & 0.894 & 0.089\\
MDPD 1 & 0.012 & 0.129 & 0.971 & 0.098 & -0.077 & 0.124 & 0.910 & 0.094\\
\hline
\hline
D$\varphi$DE & 0.005 & 0.112 & 0.982 & 0.082 & -0.164 & 0.114 & 0.873 & 0.080\\
\hline
\hline
MLE & 0.005 & 0.111 & 0.988 & 0.082 & 0.833 & 0.103 & 3.172 & 0.039 \\
\hline
\end{tabular}
\caption{The mean value and the standard deviation of the estimates in a 100-run experiment in the standard Gaussian model. The divergence criterion is the Hellinger divergence. The escort parameter of the D$\varphi$DE is taken as the new MD$\varphi$DE with Silverman's rule.}
\label{tab:EstimGauss}
\end{table}

\begin{table}[hp]
\centering
\resizebox{\textwidth}{!}{ 
\begin{tabular}{|l|c|c|c|c||c|c|c|c|}
\hline
\multirow{2}{2.5cm}{Estimation method} & \multicolumn{4}{|c||}{No Outliers} & \multicolumn{4}{|c|}{$10\%$ Outliers}\\
\cline{2-9}
  & $\chi^2$ & sd($\chi^2$) & TVD & sd(TVD) & $\chi^2$ & sd($\chi^2$) & TVD & sd(TVD)\\
 \hline
 \hline
\multicolumn{9}{|c|}{Hellinger} \\
\hline
\hline
Classical MD$\varphi$DE & 0.104 & 0.052 & 0.054 & 0.026 & 8.503 & 0.113 & 0.516 & 0.002\\
\hline
New MD$\varphi$DE - Silverman & 0.106 & 0.052 & 0.056 & 0.028 & 0.230 & 0.063 & 0.136 & 0.041\\
New MD$\varphi$DE - SJ & 0.105 & 0.052 & 0.055 & 0.027 & 0.239 & 0.062 & 0.141 & 0.041\\
\hline
Basu-Lindsay - Silverman & 0.105 & 0.052 & 0.055 & 0.028 & 0.235 & 0.062 & 0.139 & 0.040\\
Basu-Lindsay - SJ & 0.105 & 0.052 & 0.055 & 0.027 & 0.240 & 0.062 & 0.142 & 0.040\\
\hline
Beran - Silverman & 0.114 & 0.063 & 0.054 & 0.025 & 0.191 & 0.067 & 0.110 & 0.042\\
Beran - SJ & 0.125 & 0.076 & 0.057 & 0.026 & 0.205 & 0.066 & 0.119 & 0.042\\
\hline
D$\varphi$DE & 0.104 & 0.052 & 0.054 & 0.026 & 0.183 & 0.068 & 0.105 & 0.042\\
\hline
\hline
MDPD 0.1 & 0.104 & 0.051 & 0.053 & 0.026 & 5.772 & 0.356 & 0.411 & 0.013\\
MDPD 0.25 & 0.105 & 0.052 & 0.054 & 0.026 & 0.185 & 0.066 & 0.107 & 0.042\\
MDPD 0.5 & 0.110 & 0.054 & 0.057 & 0.028 & 0.165 & 0.068 & 0.094 & 0.042\\
MDPD 0.75 & 0.116 & 0.060 & 0.060 & 0.032 & 0.152 & 0.070 & 0.086 & 0.043\\
MDPD 1 & 0.121 & 0.066 & 0.063 & 0.036 & 0.144 & 0.070 & 0.080 & 0.043\\
\hline
\hline
MLE & 0.104 & 0.052 & 0.053 & 0.025 & 8.522 & 0.111 & 0.518 & 0.002\\
\hline
\end{tabular}}
\caption{The mean value of errors committed in a 100-run experiment with the standard deviation. The divergence criterion is the Hellinger divergence. The escort parameter of the D$\varphi$DE is taken as the new MD$\varphi$DE with Silverman's rule.}
\label{tab:ErrGauss}
\end{table}
\clearpage
\subsection{Mixture of two Gaussian components}\label{subsec:GaussMix}
We show in this paragraph several simulations from a two-component Gaussian mixture where the data is contaminated or not by a $10\%$ of outliers. The true values of the mixture parameters are $\lambda = 0.35, \mu_1 = -2, \mu_2 = 1.5$. The variance of both components is supposed to be known and fixed at 1. Contamination was done for the first mixture by adding in the original sample to the 5 lowest values random observations from the uniform distribution $\mathcal{U}[-5,-2]$. We also added to the 5 largest values random observations from the uniform distribution $\mathcal{U}[2,5]$. Estimation results are summarized in table \ref{tab:EstimGaussMix}. Estimation error is calculated in table \ref{tab:ErrGaussMix}. Maximum likelihood estimates are calculated using the EM algorithm. Table \ref{tab:ErrGaussMix1000Obs} contains a simulation with 1000 observations in each sample to illustrate that the comparison holds with higher number of observations. \\
Under the model, all compared methods give the same performance. When outliers are added, both classical MD$\varphi$DE and MLE are not robust and give the same result. Other methods provide robust results. Error values are close for $\varphi-$divergence-based estimators and very close to results obtained by the MDPD which gives slightly better performances.\\
\begin{table}[hp]
\caption{The mean value and the standard deviation of the estimates in a 100-run experiment in the two-component Gaussian mixture}
\label{tab:EstimGaussMix}
\centering
\resizebox{\textwidth}{!}{ 
\begin{tabular}{|l|c|c|c|c|c|c||c|c|c|c|c|c|}
\hline
\multirow{2}{2.5cm}{Estimation method} & \multicolumn{6}{|c||}{No Outliers} & \multicolumn{6}{|c|}{$10\%$ Outliers}\\
\cline{2-13}
  & $\lambda$ & sd($\lambda$) & $\mu_1$ & sd$(\mu_1)$ & $\mu_2$ & sd$(\mu_2)$ & $\lambda$ & sd($\lambda$) & $\mu_1$ & sd$(\mu_1)$ & $\mu_2$ & sd$(\mu_2)$\\
\hline
\hline
Classical MD$\varphi$DE &0.360 & 0.054 & -1.989 & 0.204 & 1.493 & 0.136 & 0.342 & 0.064 & -2.617 &0.288 & 1.713 & 0.172\\
\hline
New MD$\varphi$DE - Gauss:Silverman & 0.360 & 0.054 & -1.993 & 0.208 & 1.499 & 0.133 & 0.349 & 0.058 & -1.767 &0.226 & 1.377 & 0.135\\
New MD$\varphi$DE - Gauss:1.2 & 0.359 & 0.054 & -2.024 & 0.210 & 1.523 & 0.132 & 0.348 & 0.058 & -1.811 &0.218 & 1.411 & 0.132\\
\hline
Basu-Lindsay - Gauss:Silverman & 0.361 & 0.055 & -1.979 & 0.207 & 1.490 & 0.139 & 0.339 & 0.062 & -1.927 &0.305 & 1.377 & 0.158\\
Basu-Lindsay - Gauss:0.9 & 0.361 & 0.055 & -1.976 & 0.215 & 1.489 & 0.143 & 0.334 & 0.066 & -1.987 &0.288 & 1.378 & 0.162\\
\hline
Beran - Gauss:Silverman & 0.371 & 0.050 & -1.985 & 0.203 & 1.546 & 0.132 & 0.369 & 0.053 & -1.788 & 0.218 & 1.477 & 0.134\\
Beran - Gauss:0.9 & 0.381 & 0.046 & -1.968 & 0.202 & 1.594 & 0.127 & 0.375 & 0.048 & -1.785 & 0.218 & 1.502 & 0.130\\
\hline
D$\varphi$DE & 0.361 & 0.054 & -1.988 & 0.203 & 1.492 & 0.136 & 0.355 & 0.056 & -2.132 &0.224 & 1.605 & 0.137\\
\hline
\hline
MDPD 0.1 & 0.360 & 0.054 & -1.991 & 0.207 & 1.493 & 0.134 & 0.346 & 0.059 & -2.052 & 0.243 & 1.452 & 0.144\\
MDPD 0.25 & 0.360 & 0.053 & -1.994 & 0.213 & 1.492 & 0.133 & 0.351 & 0.057 & -1.832 & 0.223 & 1.394 & 0.134\\
MDPD 0.5 & 0.360 & 0.053 & -1.997 & 0.226 & 1.489 & 0.136 & 0.353 & 0.056 & -1.819 & 0.218 & 1.404 & 0.132\\
\hline
\hline
MLE (EM) & 0.360 & 0.054 & -1.989 & 0.204 & 1.493 & 0.136 & 0.342 & 0.064 & -2.617 &0.288 & 1.713 & 0.172\\
\hline
\end{tabular}}
\end{table}

\begin{table}[hp]
\caption{The mean value of errors committed in a 100-run experiment with the standard deviation in the two-component Gaussian mixture}
\label{tab:ErrGaussMix}
\centering
\resizebox{\textwidth}{!}{
\begin{tabular}{|l|c|c|c|c||c|c|c|c|}
\hline
\multirow{2}{2.5cm}{Estimation method} & \multicolumn{4}{|c||}{No Outliers} & \multicolumn{4}{|c|}{$10\%$ Outliers}\\
\cline{2-9}
  & $\sqrt{\chi^2}$ & sd($\sqrt{\chi^2}$) & TVD & sd(TVD) & $\sqrt{\chi^2}$ & sd($\sqrt{\chi^2}$) & TVD & sd(TVD)\\
\hline
\hline
Classical MD$\varphi$DE & 0.113 & 0.044 & 0.064 & 0.025 & 0.335 & 0.102 & 0.150 & 0.034\\
\hline
New MD$\varphi$DE - Gauss:Silverman & 0.113 & 0.045 & 0.064 & 0.025 & 0.155 & 0.059 & 0.087 & 0.033\\
New MD$\varphi$DE - Gauss:1.2 & 0.114 & 0.047 & 0.064 & 0.025 & 0.139 & 0.053 & 0.078 & 0.030\\
\hline
Basu-Lindsay - Gauss:Silverman & 0.115 & 0.043 & 0.065 & 0.024 & 0.155 & 0.073 & 0.085 & 0.033\\
Basu-Lindsay - Gauss:0.9 & 0.118 & 0.043 & 0.067 & 0.024 & 0.147 & 0.059 & 0.083 & 0.034\\
\hline
Beran - Gauss:Silverman & 0.113 & 0.046 & 0.064 & 0.025 & 0.132 & 0.050 & 0.073 & 0.027\\
Beran - Gauss:0.9 & 0.117 & 0.050 & 0.066 & 0.028 & 0.127 & 0.049 & 0.070 & 0.026\\
\hline
D$\varphi$DE & 0.112 & 0.044 & 0.064 & 0.025 & 0.142 & 0.061 & 0.076 & 0.031\\
\hline
\hline
MDPD 0.1 & 0.113 & 0.044 & 0.064 & 0.025 & 0.124 & 0.052 & 0.069 & 0.029\\
MDPD 0.25 & 0.114 & 0.045 & 0.064 & 0.025 & 0.140 & 0.054 & 0.079 & 0.030\\
MDPD 0.5 & 0.117 & 0.047 & 0.065 & 0.025 & 0.138 & 0.053 & 0.078 & 0.030\\
\hline
\hline
MLE & 0.113 & 0.044 & 0.064 & 0.025 & 0.335 & 0.102 & 0.150 & 0.034\\
\hline
\end{tabular}}
\end{table}

\begin{table}[hp]
\label{tab:ErrGaussMix1000Obs}
\centering
\resizebox{\textwidth}{!}{
\begin{tabular}{|l|c|c|c|c||c|c|c|c|}
\hline
\multirow{2}{2.5cm}{Estimation method} & \multicolumn{4}{|c||}{No Outliers} & \multicolumn{4}{|c|}{$10\%$ Outliers}\\
\cline{2-9}
  & $\sqrt{\chi^2}$ & sd($\sqrt{\chi^2}$) & TVD & sd(TVD) & $\sqrt{\chi^2}$ & sd($\sqrt{\chi^2}$) & TVD & sd(TVD)\\
\hline
\hline
Classical MD$\varphi$DE & 0.036 & 0.016 & 0.020 & 0.009 & 0.308 & 0.031 & 0.142 & 0.013\\
\hline
New MD$\varphi$DE - Gauss:Silverman & 0.036 & 0.015 & 0.020 & 0.009 & 0.146 & 0.024 & 0.082 & 0.014\\
New MD$\varphi$DE - Gauss:1.2 & 0.042 & 0.017 & 0.023 & 0.009 & 0.095 & 0.022 & 0.051 & 0.012\\
\hline
Basu-Lindsay - Gauss:Silverman & 0.037 & 0.016 & 0.021 & 0.009 & 0.132 & 0.026 & 0.074 & 0.014\\
Basu-Lindsay - Gauss:1.2 & 0.040 & 0.017 & 0.022 & 0.009 & 0.129 & 0.041 & 0.071 & 0.022\\
Basu-Lindsay - Gauss:1 & 0.039 & 0.017 & 0.022 & 0.009 & 0.076 & 0.022 & 0.045 & 0.014\\
\hline
Beran - Gauss:Silverman & 0.038 & 0.016 & 0.021 & 0.009 & 0.116 & 0.024 & 0.062 & 0.013\\
Beran - Gauss:1.2 & 0.114 & 0.018 & 0.066 & 0.010 & 0.114 & 0.022 & 0.065 & 0.012\\
Beran - Gauss:1 & 0.078 & 0.018 & 0.045 & 0.010 & 0.087 & 0.022 & 0.044 & 0.011\\
\hline
D$\varphi$DE & 0.045 & 0.016 & 0.020 & 0.009 & 0.079 & 0.023 & 0.044 & 0.013\\
\hline
\hline
MDPD 0.1 & 0.036 & 0.015 & 0.020 & 0.009 & 0.046 & 0.016 & 0.027 & 0.010\\
MDPD 0.25 & 0.036 & 0.015 & 0.020 & 0.009 & 0.095 & 0.023 & 0.053 & 0.013\\
MDPD 0.5 & 0.037 & 0.015 & 0.021 & 0.009 & 0.092 & 0.022 & 0.050 & 0.012\\
\hline
\hline
MLE & 0.036 & 0.016 & 0.020 & 0.009 & 0.308 & 0.031 & 0.142 & 0.013\\
\hline
\end{tabular}}
\caption{The mean value of errors committed in a 100-run experiment with the standard deviation in the two-component Gaussian mixture. Number of observations is 1000}
\end{table}

\clearpage
\subsection{Generalized Pareto distribution}\label{subsec:SimulationGPD}
We show in this paragraph several simulations from the generalized Pareto distribution (GPD) where the data is contaminated or not by a $10\%$ of outliers. A GPD with a fixed location at zero, a scale parameter $\sigma>0$ and a shape parameter $\nu>0$ is defined by:
\[p_{\nu,\sigma}(y) = \frac{1}{\sigma}\left(1+\nu\frac{y}{\sigma}\right)^{-1-\frac{1}{\nu}},\quad \text{for } y\geq 0.\]
The true set of parameters is $\nu = 0.7, \sigma=3$. Outliers are added by replacing 10 observations (chosen randomly) from each sample by observations from the distribution GPD($\nu=1,\sigma=10,\mu=500$) where $\mu$ is the location parameter. Estimation results are summarized in table \ref{tab:EstimGPD}. Estimation error is calculated in table \ref{tab:ErrGPD}. The maximum likelihood estimator was calculated using the \texttt{gpd.fit} function of package \texttt{ismev}.\\

\underline{Under the model}, all presented methods except for the Basu-Lindsay approach have close performance to the MLE and sometimes even better for given choices of the kernel or the tradeoff parameter. Our kernel-based MD$\varphi$DE attained a similar performance to the MLE for \emph{all} non classical kernels and the corresponding choices of the window, and attained an even better efficiency than the MLE. Beran's method attained this performance only with the varying KDE. The MDPD attained it only for small values of $a$ (=0.1). The use of a symmetric kernel (here the Gaussian) did not give good results in kernel-based methods except for our kernel-based MD$\varphi$DE with a Silverman's rule for the window\footnote{The Sheather and Jones' rule did not give satisfactory results.}. This may be some indication of low sensitivity to the kernel used. \\
\underline{When outliers are added}, the performance of kernel-based methods was slightly deteriorated whereas other methods (the MDPD included for all values of $a$) were greatly influenced, and the error is at least doubled; MDPD for all cases included. The use of asymmetric kernels seems to be the most convenient for a GPD model. Our kernel-based MD$\varphi$DE seems to give the best result (in $\chi^2$ and TVD) for all kernels and corresponding windows keeping a great gap in its favor in comparison with other methods. \\
\begin{remark}
The nature of the heavy tail of the GPD (slow decrease at infinity) made integration calculus difficult, and some integration functions failed to give fairly correct results. We, therefore, and in order to avoid integration on an infinite interval $[0,\infty)$, propose to use a quantile trick which is translated by the change of variable:
\[\int_{0}^{\infty}{\varphi'\left(\frac{p_{\phi}}{p_{\alpha}}\right)(x) p_{\phi}(x) dx} = \int_{0}^1{\varphi'\left(\frac{p_{\phi}}{p_{\alpha}}\right)p_{\phi}(\mathbb{F}_{\phi}^{-1}(y))dy},\]
where $\mathbb{F}_{\phi}^{-1}(y) = \frac{\sigma}{\nu}((1-y)^{-\nu}-1)$ is the quantile of the GPD probability law $P_{\phi}$. This idea may appear ineffective since it does not change anything in the integral (the quantile function takes back values from $[0,1)$ into $[0,\infty)$). In fact, integration methods perform in general better when integrating on a finite interval than when integrating on an infinite one.
\end{remark}

\begin{table}[hp]
\centering
\resizebox{\textwidth}{!}{ 
\begin{tabular}{|l|c|c|c|c||c|c|c|c|}
\hline
\multirow{2}{2.5cm}{Estimation method} & \multicolumn{4}{|c||}{No Outliers} & \multicolumn{4}{|c|}{$10\%$ Outliers}\\
\cline{2-9}
  & $\nu$ & sd$(\nu)$ & $\sigma$ & sd$(\sigma)$ & $\nu$ & sd$(\nu)$ & $\sigma$ & sd$(\sigma)$\\
 \hline
 \hline
\multicolumn{9}{|c|}{Hellinger} \\
\hline
\hline
Classical MD$\varphi$DE & 0.721  & 0.174 & 3.029 & 0.575 & 1.655 & 0.113 & 2.694 & 0.491\\
\hline
New MD$\varphi$DE - Gauss Silverman & 0.463 & 0.142 & 2.719 & 0.586 & 0.571 & 0.197 & 2.427 & 0.599\\
New MD$\varphi$DE - Gauss SJ & 0.343 & 0.108 & 2.858 & 0.597 &  0.368 & 0.141 & 2.798 & 0.569\\
New MD$\varphi$DE - RIG CV & 0.528 & 0.140 & 3.125 & 0.611 &  0.775 & 0.202 & 2.844 & 0.571\\
New MD$\varphi$DE - RIG Nrd0 & 0.562 & 0.139 & 3.133 & 0.605 & 0.817 & 0.219 & 2.815 & 0.545\\
New MD$\varphi$DE - RIG SJ & 0.522 & 0.129 & 3.138 & 0.616 & 0.688 & 0.191 & 2.903 & 0.574\\
New MD$\varphi$DE - GA CV & 0.530 & 0.139 & 3.117 & 0.610 & 0.766  & 0.204 & 2.833 & 0.577\\
New MD$\varphi$DE - GA Nrd0 & 0.564 & 0.139 & 3.112 & 0.601 & 0.814  & 0.211 & 2.787 & 0.544\\
New MD$\varphi$DE - GA SJ & 0.520 & 0.126 & 3.135 & 0.607 & 0.691  & 0.185 & 2.895 & 0.576\\
New MD$\varphi$DE - MT 5 & 0.641 & 0.156 & 3.217 & 0.615 & 1.202 & 0.161 & 2.806 & 0.510\\
New MD$\varphi$DE - MT 10 & 0.607 & 0.153 & 3.272 & 0.628 & 1.090 & 0.195 & 2.876 & 0.552\\
New MD$\varphi$DE - MT 15 & 0.588 & 0.150 & 3.307 & 0.636 & 1.026 & 0.206 & 2.920 & 0.565\\
New MD$\varphi$DE - MT 20 & 0.573 & 0.148 & 3.331 & 0.643 & 0.979 & 0.212 & 2.956 & 0.577\\
\hline
Basu-Lindsay - Gauss Silverman & 0.128  & 0.125 & 6.022 & 1.522 & 0.122  & 0.109 & 7.151 & 2.025\\
Basu-Lindsay - Gauss SJ & 0.078  & 0.066 & 4.603 & 1.057 &  0.097 & 0.087 & 4.843 & 1.316\\
Basu-Lindsay - MT 5 & 0.833  & 0.156 & 2.232 & 0.651 & 0.765 & 0.189 & 2.937 & 0.666\\
Basu-Lindsay - MT 10 & 0.853  & 0.197 & 2.297 & 0.659 & 0.777 & 0.193 & 2.880 & 0.704\\
Basu-Lindsay - MT 15 &  0.881 & 0.176 & 2.293 & 0.517 & 1.164 & 0.169 & 2.893 & 0.530\\
Basu-Lindsay - MT 20 &  0.907 & 0.180 & 2.337 & 0.603 & 0.936 & 0.206 & 2.694 & 0.580\\
\hline
Beran - Gauss Nrd0 & 0.216  & 0.108 & 5.165 & 1.218 & 0.197 & 0.125 & 6.084 & 1.546\\
Beran - Gauss SJ & 0.231  & 0.108 & 3.988 & 0.919 & 0.229 & 0.134 & 4.135 & 0.939\\
Beran - RIG CV & 0.516  & 0.134 & 3.890 & 0.832 & 0.833 & 0.218 & 3.944 & 0.745\\
Beran - RIG Nrd0 & 0.515  & 0.138 & 4.441 & 1.026 & 0.878 & 0.233 & 4.229 & 0.954\\
Beran - RIG SJ & 0.507  & 0.136 & 3.813 & 0.787 & 0.732 & 0.200 & 3.641 & 1.113\\
Beran - GA CV & 0.486  & 0.134 & 3.936 & 0.847 & 0.745  & 0.207 & 4.097 & 0.822\\
Beran - GA Nrd0 & 0.475  & 0.139 & 4.510 & 0.998 & 0.778 & 0.220 & 4.547 & 1.032\\
Beran - GA SJ & 0.503  & 0.133 & 3.780 & 0.773 & 0.703 & 0.186 & 3.589 & 0.781\\
Beran - MT 5 & 0.711  & 0.150 & 3.384 & 0.640 & 1.339 & 0.140 & 2.979 & 0.551\\
Beran - MT 10 & 0.665  & 0.150 & 3.315 & 0.620 & 1.231 & 0.155 & 2.900 & 0.530\\
Beran - MT 15 & 0.637  & 0.154 & 3.310 & 0.640 & 1.164 & 0.169 & 2.893 & 0.530\\
Beran - MT 20 & 0.627  & 0.156 & 3.302 & 0.637 & 0.936 & 0.206 & 2.694 & 0.580\\
\hline
D$\varphi$DE & 0.720  & 0.179 & 3.026 & 0.580 & 1.45 & 0.290 & 2.749 & 0.524\\
\hline
\hline
MDPD 1 & 0.729 & 0.402 & 3.023 & 0.660 & 1.039 & 0.483  & 3.273 & 0.681 \\
MDPD 0.75 & 0.716 & 0.331 & 3.025 & 0.631 & 1.021 & 0.416  & 3.242 & 0.645 \\
MDPD 0.5 & 0.715 & 0.263 & 3.023 & 0.603 & 1.028 & 0.361  & 3.171 & 0.605 \\
MDPD 0.25 & 0.722 & 0.200 & 3.019 & 0.581 & 1.292 & 0.240  & 2.955 & 0.532 \\
MDPD 0.1 & 0.723 & 0.175 & 3.019 & 0.568 & 1.564 & 0.154  & 2.779 & 0.500 \\
\hline
\hline
MLE &   0.719 & 0.174 & 3.031 & 0.58 & 1.654 & 0.113  & 2.695 & 0.492 \\
\hline
\end{tabular}}
\caption{The mean value and the standard deviation of the estimates in a 100-run experiment in the GPG model. The escort parameter of the D$\varphi$DE is taken as the new MD$\varphi$DE with Silverman's rule.}
\label{tab:EstimGPD}
\end{table}

\begin{table}[hp]
\centering
\resizebox{\textwidth}{!}{ 
\begin{tabular}{|l|c|c|c|c||c|c|c|c|}
\hline
\multirow{2}{2.5cm}{Estimation method} & \multicolumn{4}{|c||}{No Outliers} & \multicolumn{4}{|c|}{$10\%$ Outliers}\\
\cline{2-9}
  & $\chi^2$ & sd($\chi^2$) & TVD & sd(TVD) & $\chi^2$ & sd($\chi^2$) & TVD & sd(TVD)\\
 \hline
 \hline
\multicolumn{9}{|c|}{Hellinger} \\
\hline
\hline
Classical MD$\varphi$DE &  0.099 & 0.077 & 0.044 & 0.026 & 1.027 & 0.195 & 0.142 & 0.014\\
\hline
New MD$\varphi$DE - Silverman & 0.159 & 0.056 & 0.087 & 0.034 & 0.171 & 0.070 & 0.097 & 0.044\\
New MD$\varphi$DE - SJ & 0.189  & 0.052 & 0.100 & 0.035 & 0.183  & 0.066 & 0.098 & 0.042\\
New MD$\varphi$DE - RIG CV & 0.109 & 0.045 & 0.058 & 0.027 & 0.114 & 0.065 & 0.053 & 0.029\\
New MD$\varphi$DE - RIG Nrd0 & 0.100 & 0.044 & 0.054 & 0.027 & 0.142 & 0.130 & 0.056 & 0.029\\
New MD$\varphi$DE - RIG SJ & 0.110 & 0.044 & 0.059 & 0.027 & 0.104 & 0.056 & 0.054 & 0.030\\
New MD$\varphi$DE - GA CV & 0.108 & 0.045 & 0.058 & 0.027 & 0.114 & 0.063 & 0.054 & 0.029\\
New MD$\varphi$DE - GA Nrd0 & 0.100 & 0.044 & 0.054 & 0.027 & 0.132 & 0.092 & 0.056 & 0.028\\
New MD$\varphi$DE - GA SJ & 0.109 & 0.044 & 0.058 & 0.027 & 0.104 & 0.056 & 0.054 & 0.030\\
New MD$\varphi$DE - MT 5 & 0.093 & 0.053 & 0.049 & 0.028 & 0.472 & 0.307 & 0.089 & 0.024\\
New MD$\varphi$DE - MT 10 & 0.095 & 0.050 & 0.051 & 0.028 & 0.336 & 0.243 & 0.078 & 0.026\\
New MD$\varphi$DE - MT 15 & 0.097 & 0.048 & 0.053 & 0.028 & 0.268 & 0.193 & 0.072 & 0.027\\
New MD$\varphi$DE - MT 20 & 0.099 & 0.047 & 0.054 & 0.029 & 0.226 & 0.154 & 0.068 & 0.028\\
\hline
Basu-Lindsay - Silverman & 0.301  & 0.08 & 0.179 & 0.048 & 0.361  & 0.110 & 0.214 & 0.061\\
Basu-Lindsay - SJ & 0.256  & 0.046 & 0.145 & 0.033 & 0.264  & 0.055 & 0.151 & 0.039\\
Basu-Lindsay - MT 5 & 0.155  & 0.082 & 0.090 & 0.047 & 0.100 & 0.077 & 0.051 & 0.036\\
Basu-Lindsay - MT 10 & 0.155  & 0.080 & 0.085 & 0.043 & 0.102 & 0.078 & 0.053 & 0.038\\
Basu-Lindsay - MT 15 & 0.140  & 0.107 & 0.071 & 0.050 & 0.421 & 0.278 & 0.086 & 0.025\\
Basu-Lindsay - MT 20 & 0.157  & 0.085 & 0.078 & 0.044 & 0.160 & 0.083 & 0.059 & 0.031\\
\hline
Beran - Gauss Nrd0 & 0.241  & 0.072 & 0.142 & 0.045 & 0.297 & 0.090 & 0.177 & 0.053\\
Beran - Gauss SJ & 0.199  & 0.049 & 0.109 & 0.034 & 0.207 & 0.044 & 0.114 & 0.032\\
Beran - RIG CV & 0.133  & 0.060 & 0.076 & 0.038 & 0.226 & 0.128 & 0.094 & 0.041\\
Beran - RIG Nrd0 & 0.164  & 0.085 & 0.097 & 0.051 & 0.306 & 0.235 & 0.114 & 0.054\\
Beran - RIG SJ & 0.123  & 0.060 & 0.069 & 0.039 & 0.146 & 0.097 & 0.070 & 0.048\\
Beran - GA CV & 0.136  & 0.060 & 0.078 & 0.038 &  0.195 & 0.100 & 0.094 & 0.044\\
Beran - GA Nrd0 & 0.169  & 0.078 & 0.101 & 0.048 & 0.267 & 0.186 & 0.121 & 0.057\\
Beran - GA SJ &  0.120 & 0.058 & 0.068 & 0.037 & 0.130 & 0.078 & 0.065 & 0.040\\
Beran - MT 5 & 0.103 & 0.067 & 0.052 & 0.030 & 0.915 & 0.729 & 0.111 & 0.022\\
Beran - MT 10 & 0.093 & 0.057 & 0.049 & 0.029 & 0.581 & 0.615 & 0.095 & 0.023\\
Beran - MT 15 & 0.094 & 0.054 & 0.050 & 0.029 & 0.421 & 0.278 & 0.086 & 0.025\\
Beran - MT 20 & 0.095 & 0.055 & 0.051 & 0.029 & 0.371 & 0.298 & 0.081 & 0.026\\
\hline
D$\varphi$DE &  0.099 & 0.077 & 0.048 & 0.028 & 0.843 & 0.407 & 0.120 & 0.030\\
\hline
\hline
MDPD 1 & 0.211 & 0.310 & 0.068 & 0.038 & 0.477 & 0.665 & 0.089 & 0.047\\
MDPD 0.75 & 0.204 & 0.389 & 0.062 & 0.034 & 0.424 & 0.545 & 0.085 & 0.043\\
MDPD 0.5 & 0.141 & 0.160 & 0.056 & 0.030 & 0.419 & 0.515 & 0.082 & 0.039\\
MDPD 0.25 & 0.106 & 0.082 & 0.049 & 0.028 & 0.669 & 0.441 & 0.104 & 0.030\\
MDPD 0.1 & 0.099 & 0.083 & 0.047 & 0.027 & 0.955 & 0.326 & 0.133 & 0.019\\
\hline
\hline
MLE & 0.099 & 0.077 & 0.048 & 0.026 & 1.025  & 0.195 & 0.142 & 0.014\\
\hline
\end{tabular}}
\caption{The mean value of errors committed in a 100-run experiment with the standard deviation for the GPD model. The escort parameter of the D$\varphi$DE is taken as the new MD$\varphi$DE with the gamma kernel.}
\label{tab:ErrGPD}
\end{table}

\clearpage

\subsection{Mixtures of Two Weibull Components}\label{subsec:TwoWeibullSimuPara}
We present the results of estimating three different two-component Weibull mixtures. The model has the following density:
\[p_{\phi}(x) = 2\lambda\nu_1 (2x)^{\nu_1-1} e^{-(2x)^{\nu_1}}+(1-\lambda)\frac{\nu_2}{2}\left(\frac{x}{2}\right)^{\nu_2-1} e^{-\left(\frac{x}{2}\right)^{\nu_2}}.\]
Scale parameters are supposed to be known and equal to $0.5$ for the first component and $2$ for the second component. The proportion is unknown and fixed at $0.35$. Shape parameters are supposed unknown. Our examples cover a variety of cases of a Weibull mixture where the density function has either a finite limit at zero or goes to infinity for one of the components:
\begin{enumerate}
\item a mixture with close modes $\nu_1 = 1.2, \nu_2 = 2$;
\item a mixture with one mode and with limit equal to infinity at zero $\nu_1 = 0.5, \nu_2 = 3$;
\item a mixture with no modes and with limit equal to infinity at zero $\nu_1 = 0.5, \nu_2 = 1$.
\end{enumerate}
We plot these mixtures in figure \ref{fig:ThreeWeibullMixtures}. Outliers were added in different ways to illustrate several scenarios. For the first mixture, outliers were added by replacing 10 observations of each sample chosen randomly by 10 observations drawn independently from a Weibull distribution with shape $\nu = 0.9$ and scale $\sigma = 3$. See tables (\ref{tab:EstimWeibullMixTwoModes}) and (\ref{tab:ErrWeibullMixTwoModes}). For the second mixture, we added to the 10 largest observations of each sample a random observation drawn from the uniform distribution $\mathcal{U}[2,10]$. See tables \ref{tab:EstimWeibullMixOneMode} and \ref{tab:ErrWeibullMixOneMode}. For the third one, outliers were added by replacing 10 observations, chosen randomly, of each sample by observations from the uniform distribution $\mathcal{U}[\max y_i, 75]$ after having verified that no observation in the overall data has exceeded the value 50.  See tables \ref{tab:EstimWeibullMixNoMode} and \ref{tab:ErrWeibullMixNoMode}.\\
\begin{figure}[h]
\centering
\includegraphics[scale=0.4]{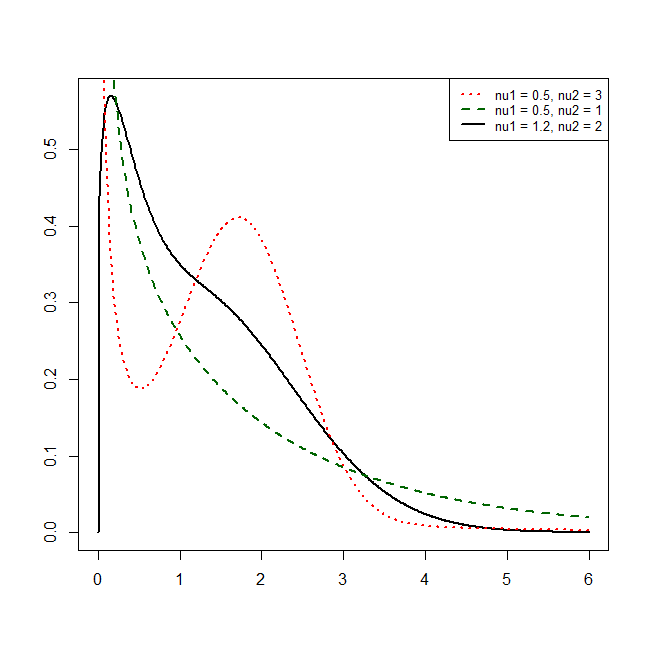}
\caption{The three Weibull mixtures used in our experience.}
\label{fig:ThreeWeibullMixtures}
\end{figure}

\noindent The caclulus of the $\chi^2$ divergence between the estimated model and the true distribution gave often infinity on all mixtures for all estimation methods even under the model. This is because a small bias in the estimation of the shape parameter results in a great relative error in both the tail behavior and near zero. We therefore, only provide the TVD as an error criterion.\\  
The first Weibull mixture was the least complicated case. We were able to get satisfactory results for our kernel-based MD$\varphi$DE using a Gaussian kernel. The two other mixtures were more challenging, and we needed to use asymmetric kernels to solve the problem of the bias near zero. It is worth noting that the Basu-Lindsay approach provided very bad estimates in the three mixtures which keeps it out of the competition. Note also that the use of a Gaussian kernel gave very pleasant results for the first mixture in spite of the boundary bias. We excluded it from mixtures which have infinity limit at zero because it did not work well because of the large bias at zero.\\

\textbf{For the first mixture,} under the model all presented methods provide close results (and sometimes better) to the MLE except for the Basu-Lindsay approach with all available choices and Beran's method with the varying KDE (MT) for windows 5 and 10 which fail. Under contamination, our method gives better results than all other methods and have very close (even slightly better) performance to the MDPD for tradeoff parameter higher than 0.25.\\
\textbf{For the second mixture,} the Basu-Lindsay approach failed again. Beran's method gave good result under the model only in one case; the RIG with window 0.01. The MDPD worked very well only for a tradeoff parameter lower than 0.5 and gave a good compromise between robustness and efficiency. It gave the best compromise in the presented methods. Our kernel-based MD$\varphi$DE has close results to the MDPD with difference of $0.01$ in the TVD. It is worth noting that our kernel-based MD$\varphi$DE gave faire results for the two proposed kernels; the asymmetric kernel RIG for window 0.01 as before and the varying KDE MT for windows 10, 15 and 20. A fact which was not verified for other kernel-based methods showing again a less sensibility towards the kernel.\\
\textbf{For the third mixture,} the Basu-Lindsay approach did not give good results especially under the model. The only satisfactory results (which gave a good tradeoff between robustness and efficiency) were obtained by our kernel-based MD$\varphi$DE for RIG kernel with window 0.01, Beran's method with the same kernel and window and the MDPD for $a=0.5$. Our method and Beran's gave the same result with difference of 0.015 in favor of the MDPD. Better efficiency were obtained by other choices but on the cost of the robustness of the resulting estimator under contamination.
\begin{table}[hp]
\centering
\resizebox{\textwidth}{!}{
\begin{tabular}{|l|c|c|c|c|c|c||c|c|c|c|c|c|}
\hline
\multirow{2}{2.5cm}{Estimation method} & \multicolumn{6}{|c||}{No Outliers} & \multicolumn{6}{|c|}{$10\%$ Outliers}\\
\cline{2-13}
  & $\lambda$ & sd($\lambda$) & $\nu_1$ & sd$(\nu_1)$ & $\nu_2$ & sd$(\nu_2)$ & $\lambda$ & sd($\lambda$) & $\nu_1$ & sd$(\nu_1)$ & $\nu_2$ & sd$(\nu_2)$\\
 \hline
 \hline
\multicolumn{13}{|c|}{Hellinger} \\
\hline
\hline
Classical MD$\varphi$DE & 0.355 & 0.066 & 1.245 & 0.228 & 2.054 & 0.237 & 0.410 & 0.257 & 1.045 & 0.255 & 1.718 & 0.849\\
\hline
New MD$\varphi$DE - Gauss Silverman & 0.384 & 0.067 & 1.221 & 0.244 & 2.138 & 0.291 & 0.348 & 0.076 & 1.121 & 0.265 & 1.822 & 0.319\\
New MD$\varphi$DE - Gauss SJ & 0.387 & 0.067 & 1.227 & 0.240 & 2.188 & 0.308 & 0.356 & 0.076 & 1.133 & 0.261 & 1.905 & 0.319\\
New MD$\varphi$DE - RIG 0.01 & 0.371 & 0.066 & 1.297 & 0.231 & 2.215 & 0.321 & 0.355 & 0.100 & 1.213 & 0.229 & 1.955 & 0.344\\
New MD$\varphi$DE - RIG 0.1 & 0.358 & 0.065 & 1.233 & 0.210 & 2.065 & 0.267 & 0.330 & 0.117 & 1.127 & 0.226 & 1.741 & 0.304\\
New MD$\varphi$DE - RIG SJ & 0.351 & 0.066 & 1.217 & 0.207 & 2.001 & 0.245 & 0.324 & 0.132 & 1.107 & 0.226 & 1.670 & 0.297\\
New MD$\varphi$DE - MT 5 & 0.328 & 0.112 & 1.301 & 0.235 & 1.809 & 0.192 & 0.363 & 0.229 & 1.195 & 0.213 & 1.592 & 0.356\\
New MD$\varphi$DE - MT 10 & 0.330 & 0.091 & 1.355 & 0.235 & 1.923 & 0.220 & 0.351 & 0.204 & 1.247 & 0.230 & 1.645 & 0.285\\
New MD$\varphi$DE - MT 15 & 0.327 & 0.076 & 1.383 & 0.234 & 1.973 & 0.237 & 0.348 & 0.199 & 1.275 & 0.233 & 1.680 & 0.294\\
New MD$\varphi$DE - MT 20 & 0.328 & 0.076 & 1.403 & 0.233 & 2.002 & 0.249 & 0.348 & 0.198 & 1.295 & 0.235 & 1.702 & 0.297\\
\hline
Basu-Lindsay - Gauss Silverman & 0.752 & 0.064 & 2.199 & 0.248 & 38.66 & 8.66 & 0.822  & 0.083 & 1.927 & 0.276 & 32.37 & 13.52 \\
Basu-Lindsay - Gauss SJ & 0.723 & 0.059 & 2.205 & 0.257 & 16.18 & 10.75 & 0.759  & 0.065 & 1.958 & 0.263 & 19.52 & 10.56 \\
Basu-Lindsay - MT 5 & 0.403 & 0.072 & 1.339 & 0.224 & 3.241 & 0.547 & 0.346 & 0.076 & 1.260 & 0.210 & 2.874 & 0.338 \\
Basu-Lindsay - MT 10 & 0.390 & 0.069 & 1.409 & 0.234 & 3.281 & 0.465 & 0.337 & 0.067 & 1.319 & 0.217 & 2.813 & 0.233 \\
Basu-Lindsay - MT 15 & 0.393 & 0.067 & 1.458 & 0.248 & 3.297 & 0.476 & 0.333 & 0.062 & 1.340 & 0.232 & 2.823 & 0.257 \\
Basu-Lindsay - MT 20 & 0.399 & 0.066 & 1.472 & 0.221 & 3.282 & 0.458 & 0.335 & 0.068 & 1.362 & 0.225 & 2.819 & 0.300 \\
\hline
Beran - Gauss Silverman & 0.254 & 0.058 & 1.313 & 0.087 & 2.010 & 0.200 & 0.182  & 0.074 & 1.174 & 0.162 & 1.703 & 0.253\\
Beran - Gauss SJ & 0.295 & 0.067 & 1.371 & 0.104 & 2.085 & 0.225 & 0.240 & 0.079 & 1.284 &0.127 & 1.794 & 0.266\\
Beran - RIG 0.01 & 0.368 & 0.064 & 1.240 & 0.198 & 2.147 & 0.277 & 0.339 & 0.094 & 1.151 &0.200 & 1.858 & 0.332\\
Beran - RIG 0.1 & 0.345 & 0.061 & 1.117 & 0.103 & 1.897 & 0.172 & 0.289 & 0.095 & 1.033 &0.125 & 1.570 & 0.247\\
Beran - RIG SJ & 0.320 & 0.060 & 1.069 & 0.074 & 1.725 & 0.138 & 0.260 & 0.123 & 0.997 &0.088 & 1.416 & 0.203\\
Beran - MT 5 & 0.453 & 0.307 & 1.146 & 0.178 & 1.386 & 0.180 & 0.626 & 0.349 & 1.055 & 0.172 & 1.461 & 0.531\\
Beran - MT 10 & 0.354 & 0.201 & 1.238 & 0.201 & 1.553 & 0.133 & 0.419 & 0.304 & 1.134 & 0.202 & 1.450 & 0.425\\
Beran - MT 15 & 0.334 & 0.153 & 1.286 & 0.211 & 1.664 & 0.143 & 0.404 & 0.277 & 1.178 & 0.188 & 1.500 & 0.370\\
Beran - MT 20 & 0.334 & 0.136 & 1.317 & 0.218 & 1.738 & 0.156 & 0.383 & 0.256 & 1.207 & 0.198 & 1.542 & 0.348\\
\hline
D$\varphi$DE & 0.356 & 0.066 & 1.248 & 0.232 & 2.069 & 0.278 & 0.332 & 0.142 & 1.113 & 0.248 & 1.700 & 0.289\\
\hline
\hline
MDPD 1 & 0.358 & 0.087 & 1.238 & 0.252 & 2.127 & 0.521 & 0.343 & 0.113 & 1.167 & 0.239 & 2.005 & 0.517\\
MDPD 0.75 & 0.353 & 0.073 & 1.236 & 0.237 & 2.088 & 0.397 & 0.341 & 0.108 & 1.164 & 0.235 & 1.951 & 0.432\\
MDPD 0.5 & 0.354 & 0.068 & 1.238 & 0.230 & 2.071 & 0.345 & 0.336 & 0.105 & 1.159 & 0.237 & 1.860 & 0.344\\
MDPD 0.25 & 0.354 & 0.066 & 1.239 & 0.226 & 2.053 & 0.272 & 0.324 & 0.131 & 1.132 & 0.235 & 1.699 & 0.321\\
MDPD 0.1 & 0.355 & 0.066 & 1.242 & 0.227 & 2.048 & 0.238 & 0.394 & 0.241 & 1.091 & 0.215 & 1.780 & 0.792\\
\hline
\hline
MLE (EM) & 0.355 & 0.066 & 1.245 & 0.228 & 2.054 & 0.237 & 0.321 & 0.187 & 0.913 & 0.313 & 1.575 & 0.325\\
\hline
\end{tabular}}
\caption{The mean value and the standard deviation of the estimates in a 100-run experiment on a two-component Weibull mixture ($\lambda=0.35,\nu_1=1.2,\nu_2=2$). The escort parameter of the D$\varphi$DE is taken as the new MD$\varphi$DE with the SJ bandwidth choice.}
\label{tab:EstimWeibullMixTwoModes}
\end{table}

\begin{table}[hp]
\centering
\begin{tabular}{|l|c|c|c||c|c|c|}
\hline
\multirow{2}{2.5cm}{Estimation method} & \multicolumn{3}{|c||}{No Outliers} & \multicolumn{3}{|c|}{$10\%$ Outliers}\\
\cline{2-7}
  & mean & median & sd & mean & median & sd\\
 \hline
 \hline
\multicolumn{7}{|c|}{Hellinger} \\
\hline
\hline
Classical MD$\varphi$DE & 0.052 & 0.048 & 0.025 & 0.108 & 0.094 & 0.099\\
\hline
New MD$\varphi$DE - Gauss Silverman & 0.058 & 0.054 & 0.029 & 0.068 & 0.065 & 0.034\\
New MD$\varphi$DE - Gauss SJ & 0.058 & 0.053 & 0.029 & 0.064 & 0.061 & 0.031\\
New MD$\varphi$DE - RIG 0.01 & 0.058 & 0.052 & 0.030 & 0.059 & 0.057 & 0.030 \\
New MD$\varphi$DE - RIG 0.1 & 0.051 & 0.049 & 0.026 & 0.066 & 0.062 & 0.032 \\
New MD$\varphi$DE - RIG SJ & 0.050 & 0.050 & 0.026 & 0.071 & 0.066 & 0.032 \\
New MD$\varphi$DE - MT 5 & 0.057 & 0.055 & 0.025 & 0.081 & 0.074 & 0.032 \\
New MD$\varphi$DE - MT 10 & 0.054 & 0.053 & 0.026 & 0.075 & 0.071 & 0.032 \\
New MD$\varphi$DE - MT 15 & 0.054 & 0.054 & 0.026 & 0.073 & 0.069 & 0.032 \\
New MD$\varphi$DE - MT 20 & 0.055 & 0.054 & 0.027 & 0.073 & 0.069 & 0.031 \\
\hline
Basu Lindsay - Gauss Silverman & 0.298 & 0.289 & 0.042 & 0.247 & 0.253 & 0.050\\
Basu Lindsay - Gauss SJ & 0.252 & 0.256 & 0.051 & 0.242 & 0.246 & 0.044\\
Basu Lindsay - MT 5 & 0.127 & 0.141 & 0.046 & 0.121 & 0.111 & 0.042\\
Basu Lindsay - MT 10 & 0.133 & 0.136 & 0.039 & 0.117 & 0.111 & 0.036\\
Basu Lindsay - MT 15 & 0.134 & 0.141 & 0.039 & 0.118 & 0.110 & 0.038\\
Basu Lindsay - MT 20 & 0.132 & 0.138 & 0.039 & 0.117 & 0.109 & 0.039\\
\hline
Beran - Gauss Silverman & 0.068 & 0.062 & 0.028 & 0.082 & 0.081 & 0.031\\
Beran - Gauss SJ & 0.060 & 0.054 & 0.028 & 0.067 & 0.065 & 0.029 \\
Beran - RIG 0.01 & 0.052 & 0.048 & 0.026 & 0.060 & 0.058 & 0.029 \\
Beran - RIG 0.1 & 0.042 & 0.039 & 0.020 & 0.067 & 0.061 & 0.030 \\
Beran - RIG SJ & 0.045 & 0.044 & 0.017 & 0.079 & 0.076 & 0.030 \\
Beran - MT 5 & 0.099 & 0.097 & 0.016 & 0.125 & 0.125 & 0.022 \\
Beran - MT 10 & 0.073 & 0.070 & 0.021 & 0.102 & 0.100 & 0.028 \\
Beran - MT 15 & 0.064 & 0.060 & 0.022 & 0.092 & 0.089 & 0.030 \\
Beran - MT 20 & 0.059 & 0.055 & 0.023 & 0.086 & 0.084 & 0.030 \\
\hline
D$\varphi$DE & 0.053 & 0.049 & 0.027 & 0.068 & 0.065 & 0.031\\
\hline
\hline
MDPD 1 & 0.065 & 0.061 & 0.034 & 0.068 & 0.064 & 0.030 \\
MDPD 0.75 & 0.059 & 0.056 & 0.029 & 0.063 & 0.060 & 0.029 \\
MDPD 0.5 & 0.056 & 0.052 & 0.029 & 0.061 & 0.056 & 0.029 \\
MDPD 0.25 & 0.052 & 0.048 & 0.027 & 0.068 & 0.067 & 0.031 \\
MDPD 0.1 & 0.051 & 0.048 & 0.026 & 0.088 & 0.083 & 0.039 \\
\hline
\hline
MLE & 0.052 & 0.048 & 0.025 & 0.095 & 0.098 & 0.035 \\
\hline
\end{tabular}
\caption{The mean value with the standard deviation of the TVA committed in a 100-run experiment on a two-component Weibull mixture ($\lambda=0.35,\nu_1=1.2,\nu_2=2$). The escort parameter of the D$\varphi$DE is taken as the new MD$\varphi$DE with the SJ bandwidth choice.}
\label{tab:ErrWeibullMixTwoModes}
\end{table}

\begin{table}[hp]
\centering
\resizebox{\textwidth}{!}{
\begin{tabular}{|l|c|c|c|c|c|c||c|c|c|c|c|c|}
\hline
\multirow{2}{2.5cm}{Estimation method} & \multicolumn{6}{|c||}{No Outliers} & \multicolumn{6}{|c|}{$10\%$ Outliers}\\
\cline{2-13}
  & $\lambda$ & sd($\lambda$) & $\nu_1$ & sd$(\nu_1)$ & $\nu_2$ & sd$(\nu_2)$ & $\lambda$ & sd($\lambda$) & $\nu_1$ & sd$(\nu_1)$ & $\nu_2$ & sd$(\nu_2)$\\
 \hline
 \hline
\multicolumn{13}{|c|}{Hellinger} \\
\hline
\hline
Classical MD$\varphi$DE & 0.344 & 0.059 & 0.497 & 0.079 & 3.063 & 0.476  & 0.376 & 0.053 & 0.339 & 0.030 & 2.892 & 0.484\\
\hline
New MD$\varphi$DE RIG - 0.01 & 0.330 & 0.061 & 0.540 & 0.140 & 3.170 & 0.503 & 0.338 & 0.061 & 0.432 & 0.105 & 3.055 & 0.583\\
New MD$\varphi$DE RIG - 0.1 & 0.371 & 0.063 & 0.468 & 0.138 & 3.045 & 0.452 & 0.392  & 0.072 & 0.372 & 0.085 & 2.927 & 0.464\\
New MD$\varphi$DE RIG - SJ & 0.395 & 0.072 & 0.442 & 0.134 & 3.013 & 0.443 & 0.424  & 0.086 & 0.354 & 0.082 & 2.916 & 0.459\\
New MD$\varphi$DE MT - 5 & 0.311 & 0.062 & 0.520 & 0.065 & 2.875 & 0.451 & 0.316 & 0.063 & 0.376 & 0.036 & 2.699 & 0.471\\
New MD$\varphi$DE MT - 10 & 0.302 & 0.062 & 0.548 & 0.077 & 2.903 & 0.433 &  0.306 & 0.062 & 0.384 & 0.039 & 2.727 & 0.448\\
New MD$\varphi$DE MT - 15 & 0.295 & 0.063 & 0.564 & 0.084 & 2.927 & 0.434 & 0.298 & 0.063 & 0.388 & 0.042 & 2.745 & 0.450\\
New MD$\varphi$DE MT - 20 & 0.289 & 0.063 & 0.575 & 0.091 & 2.943 & 0.437 & 0.291 & 0.063 & 0.392 & 0.044 & 2.758 & 0.454\\
\hline
Basu-Lindsay MT - 5 & 0.250 & 0.070 & 0.834 & 0.168 & 2.849 & 0.733 & 0.185 & 0.074 & 0.715 & 0.208 & 2.189& 0.155\\
Basu-Lindsay MT - 10 & 0.240 & 0.065 & 0.797 & 0.157 & 2.789 & 0.550 & 0.197 & 0.087 & 0.707 & 0.201 & 2.324& 0.132\\
Basu-Lindsay MT - 15 & 0.254 & 0.073 & 0.745 & 0.140 & 2.915 & 0.584 & 0.204 & 0.078 & 0.674 & 0.181 & 2.352& 0.092\\
\hline
Beran RIG - 0.01 & 0.298 & 0.058 & 0.647 & 0.082 & 3.017 & 0.437 & 0.295 & 0.057 & 0.486 & 0.081 & 2.842 & 0.460\\
Beran RIG - 0.1 & 0.234 & 0.054 & 0.652 & 0.105 & 2.374 & 0.245 & 0.216 & 0.053 & 0.408 & 0.056 & 2.149 & 0.291\\
Beran RIG - SJ & 0.194 & 0.056 & 0.653 & 0.134 & 1.936 & 0.246 & 0.142 & 0.065 & 0.402 & 0.144 & 1.601 & 0.325\\
Beran MT - 5 & 0.250 & 0.070 & 0.463 & 0.058 & 1.603 & 0.140 & 0.245 & 0.083 & 0.340 & 0.062 & 1.494 & 0.208\\
Beran MT - 10 & 0.278 & 0.066 & 0.501 & 0.069 & 2.005 & 0.181 & 0.275 & 0.079 & 0.354 & 0.033 & 1.868 & 0.260\\
Beran MT - 15 & 0.286 & 0.065 & 0.524 & 0.075 & 2.224 & 0.218 & 0.284 & 0.071 & 0.365 & 0.033 & 2.068 & 0.280\\
\hline
D$\varphi$DE & 0.343 & 0.059 & 0.5004 & 0.084 & 3.047 & 0.474 & 0.372 & 0.056 & 0.357 & 0.056 & 2.897 & 0.502\\
\hline
\hline
MDE 0.75 & 0.444 & 0.126 & 0.595 & 0.080 & 3.466 & 0.643 & 0.417 & 0.127 & 0.602 & 0.087 & 3.233 & 0.606\\
MDE 0.5 & 0.376 & 0.067 & 0.551 & 0.093 & 3.159 & 0.488 & 0.357 & 0.067 & 0.555 & 0.097 & 2.980 & 0.484\\
MDE 0.25 & 0.347 & 0.061 & 0.512 & 0.096 & 3.057 & 0.472 & 0.331 & 0.062 & 0.471 & 0.068 & 2.879 & 0.491\\
MDE 0.1 & 0.344 & 0.059 & 0.496 & 0.084 & 3.050 & 0.470 & 0.343 & 0.058 & 0.384 & 0.037 & 2.859 & 0.484\\
\hline
\hline
MLE (EM) & 0.344 & 0.059 & 0.498 & 0.079 & 3.063 & 0.476 & 0.376 & 0.053 & 0.339 & 0.303 & 2.892 & 0.482\\
\hline
\end{tabular}}
\caption{The mean value and the standard deviation of the estimates in a 100-run experiment in a two-component Weibull mixture ($\lambda=0.35,\nu_1=0.5,\nu_2=3$). The escort parameter of the D$\varphi$DE is taken as the new MD$\varphi$DE with Silverman's rule.}
\label{tab:EstimWeibullMixOneMode}
\end{table}

\begin{table}[hp]
\centering
\begin{tabular}{|l|c|c|c||c|c|c|}
\hline
\multirow{2}{2.5cm}{Estimation method} & \multicolumn{3}{|c||}{No Outliers} & \multicolumn{3}{|c|}{$10\%$ Outliers}\\
\cline{2-7}
  & mean & median & sd & mean & median & sd\\
 \hline
 \hline
\multicolumn{7}{|c|}{Hellinger} \\
\hline
\hline
Classical MD$\varphi$DE & 0.060 & 0.055 & 0.024 & 0.096 & 0.094 & 0.025\\
\hline
New MD$\varphi$DE RIG - 0.01 & 0.074 & 0.070 & 0.034 & 0.076 & 0.073 & 0.039\\
New MD$\varphi$DE RIG - 0.1 & 0.079 & 0.064 & 0.053 & 0.099 & 0.086 & 0.062\\
New MD$\varphi$DE RIG - SJ & 0.091 & 0.075 & 0.068 & 0.120 & 0.099 & 0.078\\
New MD$\varphi$DE MT - 5 & 0.062 & 0.061 & 0.027 & 0.081 & 0.073 & 0.031\\
New MD$\varphi$DE MT - 10 & 0.066 & 0.064 & 0.028 & 0.076 & 0.070 & 0.030\\
New MD$\varphi$DE MT - 15 & 0.069 & 0.068 & 0.028 & 0.076 & 0.071 & 0.030\\
New MD$\varphi$DE MT - 20 & 0.072 & 0.073 & 0.029 & 0.076 & 0.071 & 0.030\\
\hline
Basu-Lindsay MT - 5 & 0.119 & 0.114 & 0.039 & 0.131 & 0.121 & 0.029  \\
Basu-Lindsay MT - 10 & 0.109 & 0.106 & 0.033 & 0.119 & 0.100 & 0.038  \\
Basu-Lindsay MT - 15 & 0.107 & 0.103 & 0.030 & 0.112 & 0.097 & 0.033  \\
\hline
Beran RIG - 0.01 & 0.077 & 0.080 & 0.026 & 0.066 & 0.063 & 0.029\\
Beran RIG - 0.1 & 0.105 & 0.104 & 0.025 & 0.112 & 0.108 & 0.038\\
Beran RIG - SJ & 0.157 & 0.032 & 0.032 & 0.193 & 0.180 & 0.053 \\
Beran MT - 5 & 0.182 & 0.183 & 0.025 & 0.207 & 0.202 & 0.032 \\
Beran MT - 10 & 0.127 & 0.127 & 0.028 & 0.153 & 0.146 & 0.037 \\
Beran MT - 15 & 0.102 & 0.104 & 0.029 & 0.126 & 0.121 & 0.036 \\
\hline
D$\varphi$DE & 0.060 & 0.057 & 0.024 & 0.091 & 0.088 & 0.027\\
\hline
\hline
MDP 0.75 & 0.103 & 0.083 & 0.067 & 0.097 & 0.083 & 0.065 \\
MDP 0.5 & 0.068 & 0.067 & 0.029 & 0.069 & 0.067 & 0.028 \\
MDP 0.25 & 0.062 & 0.058 & 0.026 & 0.064 & 0.062 & 0.029 \\
MDP 0.1 & 0.061 & 0.059 & 0.024 & 0.076 & 0.072 & 0.027 \\
\hline
\hline
MLE & 0.060 & 0.056 & 0.024 & 0.096 & 0.094 & 0.024 \\
\hline
\end{tabular}
\caption{The mean value with the standard deviation of the TVA committed in a 100-run experiment on a two-component Weibull mixture ($\lambda=0.35,\nu_1=0.5,\nu_2=3$). The escort parameter of the D$\varphi$DE is taken as the new MD$\varphi$DE with the SJ bandwidth choice.}
\label{tab:ErrWeibullMixOneMode}
\end{table}

\begin{table}[hp]
\centering
\resizebox{\textwidth}{!}{
\begin{tabular}{|l|c|c|c|c|c|c||c|c|c|c|c|c|}
\hline
\multirow{2}{2.5cm}{Estimation method} & \multicolumn{6}{|c||}{No Outliers} & \multicolumn{6}{|c|}{$10\%$ Outliers}\\
\cline{2-13}
  & $\lambda$ & sd($\lambda$) & $\nu_1$ & sd$(\nu_1)$ & $\nu_2$ & sd$(\nu_2)$ & $\lambda$ & sd($\lambda$) & $\nu_1$ & sd$(\nu_1)$ & $\nu_2$ & sd$(\nu_2)$\\
 \hline
 \hline
\multicolumn{13}{|c|}{Hellinger} \\
\hline
\hline
Classical MD$\varphi$DE & 0.367 & 0.102 & 0.550 & 0.104 & 1.054 & 0.194 & 0.352  & 0.158 & 0.273 & 0.050 & 1.051 & 0.407\\
\hline
New MD$\varphi$DE - 0.01 & 0.445 & 0.103 & 0.562 & 0.135 & 1.212 & 0.284 & 0.409 & 0.133 & 0.464 & 0.156 & 1.148 & 0.293\\
New MD$\varphi$DE - 0.1 & 0.432 & 0.101 & 0.502 & 0.141 & 1.139 & 0.241 & 0.460  & 0.210 & 0.378 & 0.125 & 1.114 & 0.302\\
New MD$\varphi$DE - SJ & 0.431 & 0.101 & 0.485 & 0.141 & 1.127 & 0.244 & 0.487 & 0.216 & 0.356 & 0.108 & 1.110 & 0.309\\
New MD$\varphi$DE MT - 5 & 0.350 & 0.158 & 0.619 & 0.134 & 1.006 & 0.211 & 0.436 & 0.313 & 0.375 & 0.121 & 1.245 & 1.177\\
New MD$\varphi$DE MT - 10 & 0.338 & 0.148 & 0.643 & 0.135 & 1.019 & 0.167 & 0.474 & 0.322 & 0.409 & 0.140 & 1.150 & 0.516\\
New MD$\varphi$DE MT - 15 & 0.335 & 0.148 & 0.658 & 0.135 & 1.029 & 0.161 & 0.456 & 0.321 & 0.411 & 0.146 & 1.292 & 1.689\\
\hline
Basu-Lindsay MT - 5 & 0.392 & 0.178 & 0.734 & 0.122 & 1.042 & 0.022 & 0.351 & 0.225 & 0.757 & 0.177 & 1.048 & 0.026\\
Basu-Lindsay MT - 10 & 0.340 & 0.149 & 0.742 & 0.103 & 1.037 & 0.024 & 0.260 & 0.175 & 0.712 & 0.147 & 1.039 & 0.024\\
Basu-Lindsay MT - 15 & 0.340 & 0.149 & 0.742 & 0.103 & 1.037 & 0.024 & 0.222 & 0.126 & 0.696 & 0.125 & 1.043 & 0.016\\
\hline
Beran - 0.01 & 0.370 & 0.098 & 0.685 & 0.091 & 1.125 & 0.188 & 0.381 & 0.211 & 0.572 & 0.183 & 1.058 & 0.215\\
Beran - 0.1 & 0.234 & 0.093 & 0.747 & 0.113 & 1.028 & 0.118 & 0.419 & 0.372 & 0.479 & 0.211 & 1.181 & 0.553\\
Beran RIG - SJ & 0.211 & 0.185 & 0.745 & 0.130 & 1.034 & 0.230 & 0.259 & 0.331 & 0.367 & 0.181 & 1.105 & 0.542\\
Beran MT - 5 & 0.302 & 0.205 & 0.584 & 0.129 & 0.867 & 0.120 & 0.471 & 0.388 & 0.376 & 0.128 & 1.097 & 0.738\\
Beran MT - 10 & 0.327 & 0.175 & 0.610 & 0.132 & 0.929 & 0.121 & 0.490 & 0.347 & 0.394 & 0.131 & 1.155 & 0.803\\
Beran MT - 15 & 0.331 & 0.165 & 0.623 & 0.128 & 0.962 & 0.128 & 0.470 & 0.340 & 0.400 & 0.132 & 1.174 & 0.893\\
\hline
D$\varphi$DE & 0.371 & 0.111 & 0.544 & 0.100 & 1.064 & 0.240 & 0.473 & 0.293 & 0.382 & 0.175 & 1.431 & 1.818\\
\hline
\hline
MDPD 0.75 & 0.494 & 0.181 & 0.619 & 0.089 & 1.341 & 0.689 & 0.505 & 0.243 & 0.625 & 0.087 & 1.313 & 0.641\\
MDPD 0.5 & 0.413 & 0.134 & 0.577 & 0.101 & 1.143 & 0.349 & 0.412 & 0.255 & 0.582 & 0.101 & 1.059 & 0.358\\
MDPD 0.25 & 0.366 & 0.108 & 0.542 & 0.110 & 1.064 & 0.349 & 0.554 & 0.348 & 0.503 & 0.117 & 1.205 & 0.995\\
MDPD 0.1 & 0.368 & 0.109 & 0.539 & 0.106 & 1.059 & 0.237 & 0.451 & 0.322 & 0.370 & 0.111 & 1.280 & 1.407\\
\hline
\hline
MLE (EM) & 0.372 & 0.108 & 0.549 & 0.100 & 1.055 & 0.192 & 0.417 & 0.194 & 0.291 & 0.073 & 1.114 & 0.468\\
\hline
\end{tabular}}
\caption{The mean value and the standard deviation of the estimates in a 100-run experiment in a two-component Weibull mixture ($\lambda=0.35,\nu_1=0.5,\nu_2=1$). The escort parameter of the D$\varphi$DE is taken as the new MD$\varphi$DE with Silverman's rule.}
\label{tab:EstimWeibullMixNoMode}
\end{table}

\begin{table}[hp]
\centering
\begin{tabular}{|l|c|c|c||c|c|c|}
\hline
\multirow{2}{2.5cm}{Estimation method} & \multicolumn{3}{|c||}{No Outliers} & \multicolumn{3}{|c|}{$10\%$ Outliers}\\
\cline{2-7}
  & mean & median & sd & mean & median & sd\\
 \hline
 \hline
\multicolumn{7}{|c|}{Hellinger} \\
\hline
\hline
Classical MD$\varphi$DE & 0.056 & 0.055 & 0.026 & 0.124 & 0.114 & 0.035\\
\hline
New MD$\varphi$DE RIG - 0.01 & 0.079 & 0.073 & 0.039 & 0.090 & 0.082 & 0.044\\
New MD$\varphi$DE RIG - 0.1 &  0.079 & 0.065 & 0.059 & 0.112 & 0.101 & 0.050\\
New MD$\varphi$DE RIG - SJ & 0.076 & 0.065 & 0.041 & 0.129 & 0.117 & 0.065\\
New MD$\varphi$DE MT - 5 & 0.063 & 0.058 & 0.029 & 0.114 & 0.095 & 0.041\\
New MD$\varphi$DE MT - 10 & 0.067 & 0.063 & 0.028 & 0.112 & 0.102 & 0.038\\
New MD$\varphi$DE MT - 15 & 0.069 & 0.067 & 0.028 & 0.111 & 0.105 & 0.036\\
\hline
Basu-Lindsay MT - 5 & 0.095 & 0.067 & 0.078 & 0.118 & 0.087 & 0.088 \\
Basu-Lindsay MT - 10 & 0.094 & 0.074 & 0.073 & 0.112 & 0.088 & 0.080 \\
Basu-Lindsay MT - 15 & 0.093 & 0.072 & 0.067 & 0.103 & 0.088 & 0.063 \\
\hline
Beran RIG 0.01 & 0.079 & 0.081 & 0.028 & 0.089 & 0.087 & 0.033\\
Beran RIG 0.1 & 0.087 & 0.085 & 0.023 & 0.103 & 0.102 & 0.025\\
Beran RIG - SJ & 0.094 & 0.092 & 0.023 & 0.100 & 0.097 & 0.021 \\
Beran MT - 5 & 0.061 & 0.060 & 0.022 & 0.127 & 0.134 & 0.044 \\
Beran MT - 10 & 0.059 & 0.055 & 0.025 & 0.115 & 0.096 & 0.041 \\
Beran MT - 15 & 0.060 & 0.056 & 0.025 & 0.112 & 0.097 & 0.039 \\
\hline
D$\varphi$DE & 0.057 & 0.055 & 0.028 & 0.117 & 0.113 & 0.034\\
\hline
\hline
MDPD 0.75 & 0.102 & 0.091 & 0.050 & 0.093 & 0.088 & 0.039 \\
MDPD 0.5 & 0.072 & 0.067 & 0.032 & 0.075 & 0.074 & 0.033 \\
MDPD 0.25 & 0.061 & 0.056 & 0.028 & 0.092 & 0.090 & 0.039 \\
MDPD 0.1 & 0.058 & 0.055 & 0.027 & 0.108 & 0.087 & 0.039 \\
\hline
\hline
MLE & 0.056 & 0.055 & 0.026 & 0.122 & 0.117 & 0.029 \\
\hline
\end{tabular}
\caption{The mean value with the standard deviation of errors committed in a 100-run experiment on a two-component Weibull mixture ($\lambda=0.35,\nu_1=0.5,\nu_2=1$). The escort parameter of the D$\varphi$DE is taken as the new MD$\varphi$DE with the SJ bandwidth choice.}
\label{tab:ErrWeibullMixNoMode}
\end{table}
\clearpage
\subsection{Concluding remarks and comments}
We summarize the most important remarks based on our simulations presented above.
\begin{itemize}
\item[$\bullet$] Our kernel-based MD$\varphi$DE gave very good results in all situations. It has the best performance especially in difficult situations, and the lowest sensitivity to the choice of the kernel and the window in the class of $\varphi-$divergence-based estimators. In comparison to the MDPD, results were close in the Gaussian and the Weibull mixture, and the MDPD had slightly better results. In the GPD model, our kernel-based MD$\varphi$DE had clearly better results than the MDPD making it a good competitor.\\
\item[$\bullet$] The execution time of the compared methods varies. Both the classical MD$\varphi$DE and the Basu-Lindsay approach were the most time consuming. The MLE and the MDPD were the best in execution time, whereas both our new kernel-based MD$\varphi$DE and Beran's approach were in the middle with close execution time.\\
\item[$\bullet$] Both the MLE and the classical MD$\varphi$DE have the best performance under the model even in \emph{difficult} models with heavy tails where kernel-based approaches could not give a satisfactory result. In regular situations such as the Gaussian mixture model, all methods were equivalent under the model.\\
\item[$\bullet$] When contamination is present, the compared estimators gave results as expected. Both the MLE and the classical MD$\varphi$DE are not robust against contamination. The D$\varphi$DE guided by our kernel-based MD$\varphi$DE gave very good results under the model. However, when contamination is present, there was no improvement and sometimes a deterioration in the performance in comparison to the escort parameter. This is the case of the Weibull mixtures and the GPD model. The obtained results are still better than MLE and the classical MD$\varphi$DE. \\
\item[$\bullet$] The Basu-Lindsay approach worked very well in regular situations and even showed a slight improvement in efficiency in comparison to the Beran's method which is concordant to the result of \cite{BasuSarkar}. It gave surprisingly good results in the GPD model under contamination when we used the varying KDE in comparison to the situation under the model. Unfortunately, it did not give satisfactory results in the Weibull mixtures. This method seems very sensitive to the kernel under difficult situations since the model is already influenced by the kernel creating a loss of information.\\
\item[$\bullet$] The minimum density power divergence gave very good results in all situations except for the GPD. The best tradeoff parameter from our set of candidates was $a=0.5$. \\
\item[$\bullet$] The Beran's method gave very good tradeoff (and many times the best) between robustness and performance under the model in most of the situations, but not very well in the GPD model. The best choice of the kernel for the GPD and the Weibull mixture was the RIG with window 0.01. It was sensitive to the choice of the kernel and its window in many situations.\\
\item[$\bullet$] The applicability of our kernel-based MD$\varphi$DE to multivariate situations is bound by the use of integration in higher dimensions which is the case of other $\varphi-$divergence-based estimators and the case of the MDPD when applied to mixture models except for the L2 distance (a=1) which still has its limitations. A general solution is to use Monte-Carlo approximation for the integral.\\
\item[$\bullet$] The results obtained using a fixed window for symmetric or asymmetric kernels give rise to an interesting question about the choice of the window. This question will be discussed in future work.
\end{itemize}


\section{Appendix: Proofs}


\subsection{Proof of Theorem \ref{theo:HellingerConsist}} \label{appendix:proofHellinger}

\begin{proof}
Let $\varepsilon>0$. We want to prove that $\lim_{n\rightarrow\infty} \mathbb{P}\left(\sup_{\phi\in\Phi} |P_nH(P_n,\phi) - P_nh(P_T,\phi)|<\varepsilon\right) = 1$. Since $\varphi$ is twice differentiable (which also implies the differentiability of $\varphi^{\#}$), then by the mean value theorem, there exist two functions $\lambda_1,\lambda_2:\mathbb{R}\rightarrow (0,1)$ such that:
\begin{eqnarray*}
\varphi'\left(\frac{p_{\phi}}{K_{w,n}}\right)(x)-\varphi'\left(\frac{p_{\phi}}{p_T}\right)(x) & = & \varphi''\left(\lambda_1(x)\frac{p_{\phi}}{K_{w,n}}(x)+(1-\lambda_1(x))\frac{p_{\phi}}{p_T}(x)\right) \left[\frac{p_{\phi}}{K_{w,n}}(x) - \frac{p_{\phi}}{p_T}(x)\right], \\
 & = & \varphi''\left(\lambda_1(x)\frac{p_{\phi}}{K_{w,n}}(x)+(1-\lambda_1(x))\frac{p_{\phi}}{p_T}(x)\right)\frac{p_{\phi}}{K_{w,n} p_T}(x) \left[p_T - K_{w,n}\right](x) \\
 & = & \mathcal{A}_n (x,\phi) \left[p_T - K_{w,n}\right](x) \\
\varphi^{\#}\left(\frac{p_{\phi}}{K_{w,n}}\right)(y_i)-\varphi^{\#}\left(\frac{p_{\phi}}{p_T}\right)(y_i) & = & \left(\varphi^{\#}\right)'\left(\lambda_2(y_i)\frac{p_{\phi}}{K_{w,n}}(y_i)+(1-\lambda_2(y_i))\frac{p_{\phi}}{p_T}(y_i)\right) \left[\frac{p_{\phi}}{K_{w,n}} - \frac{p_{\phi}}{p_T}\right](y_i) \\
& = & \left(\varphi^{\#}\right)'\left(\lambda_2(y_i)\frac{p_{\phi}}{K_{w,n}}(y_i)+(1-\lambda_2(y_i))\frac{p_{\phi}}{p_T}(y_i)\right) \frac{p_{\phi}}{K_{w,n}p_T}(y_i)\\
& & \hspace{6cm}\times\left[p_T - K_{w,n}\right](y_i) \\
 & = & \mathcal{B}_n(y_i,\phi) \left[p_T - K_{w,n}\right](y_i).
\end{eqnarray*}
We denoted:
\begin{eqnarray*}
\mathcal{A}_n (x,\phi) & = & \varphi''\left(\lambda_1(x)\frac{p_{\phi}}{K_{w,n}}(x)+(1-\lambda_1(x))\frac{p_{\phi}}{p_T}(x)\right)\frac{p_{\phi}}{K_{w,n} p_T}(x) \\
\mathcal{B}_n(y_i,\phi) & = & \left(\varphi^{\#}\right)'\left(\lambda_2(y_i)\frac{p_{\phi}}{K_{w,n}}(y_i)+(1-\lambda_2(y_i))\frac{p_{\phi}}{p_T}(y_i)\right) \frac{p_{\phi}}{K_{w,n}p_T}(y_i).
\end{eqnarray*}
Let $n$ be sufficiently large such that:
\[\sup_{x}\left|K_{w}*P_n(x) - p_T(x)\right| \leq \min\left(\varepsilon,\frac{\varepsilon}{\mathcal{A}_n},\frac{\varepsilon}{\mathcal{B}_n}\right) \]
where $\mathcal{A}_n = \sup_{\phi}\int{\mathcal{A}_n(x)dx}$ and $\mathcal{B}_n = \sup_{\phi}\frac{1}{n}\sum{\mathcal{B}_n(y_i)}$ which exist by virtue of assumption 3 of the present theorem on the one hand and on the other hand the fact that functions $x\mapsto\lambda_1(x)$ and $x\mapsto\lambda_2(x)$ are bounded uniformly inside $(0,1)$. This event occurs with probability $1-\eta_n$ with $\eta_n\rightarrow 0$ by the strong consistency assumption (point 2). This implies that both events:
\begin{eqnarray*}
\left|\int{\left[\varphi'\left(\frac{p_{\phi}}{K_{w}*P_n}\right)-\varphi'\left(\frac{p_{\phi}}{p_T}\right)\right]p_{\phi}}\right| &\leq & \varepsilon, \\
\left|\frac{1}{n}\sum_{i=1}^n{\varphi^{\#}\left(\frac{p_{\phi}}{K_{w}*P_n}\right)(y_i) - \varphi^{\#}\left(\frac{p_{\phi}}{p_T}\right)(y_i)}\right| & \leq & \varepsilon
\end{eqnarray*}
happen with probability greater than $1-\eta_n$ independently of $\phi$. Finally, we conclude that
\[\mathbb{P}\left(\sup_{\phi\in\Phi} |P_nH(P_n,\phi) - P_nh(P_T,\phi)|<2\varepsilon\right) \geq 1-\eta_n,\]
and hence $\sup_{\phi\in\Phi} |P_nH(P_n,\phi) - P_nh(P_T,\phi)|\rightarrow 0$ in probability. To end the proof, we use assumption 3 together with the result in Example 19.9 in \cite{Vaart} Chap. 19 which imply that $\{\varphi^{\#}\left(\frac{p_{\phi}}{p_T}\right),\phi\in\Phi\}$ is a Glivenko-Cantelli class of functions. Hence, $\sup_{\phi\in\Phi} |P_Th(P_T,\phi) - P_nh(P_T,\phi)|\rightarrow 0$ in probability. Using inequality (\ref{eqn:DecompProofIneq}), we conclude that $\sup_{\phi\in\Phi} |P_nH(P_n,\phi) - P_Th(P_T,\phi)| \rightarrow 0$ in probability. Finally, the previous arguments prove the first point (\ref{eqn:ConsistP1}) in Theorem \ref{theo:VanderVaart}. The second point in Theorem \ref{theo:VanderVaart} is the same as assumption 4 of the present theorem. By definition of the kernel-based MD$\varphi$DE as a minimum of the criterion function $\phi\mapsto P_nH(P_n,\phi)$, Theorem 1 entails the consistency of our new estimator. $\quad \Box$
\end{proof}
\subsection{Proof of Theorem \ref{theo:NeymChi2Consist}}\label{appendix:proofNeymChi2}
We follow the same idea of the proof of Theorem \ref{theo:HellingerConsist}. In order to treat the second term in the right hand side of equation (\ref{eqn:PHdifferenceNeymChi2}), we use the uniform continuity of function $t\mapsto t^{-\gamma}$. Indeed, if $|K_w*P_n(x) - p_T(x)|<\delta_2$, then:
\[|(K_w*P_n)^{-\gamma}(y_i) - p_T^{-\gamma}(y_i)|<\frac{\varepsilon}{\sup_{\phi}\frac{1}{n}\sum{p_{\phi}^{\gamma}(y_i)}}.\]
By the consistency of the kernel estimator, the previous inequality happens with probability $1-\eta_n$ with $\eta_n\rightarrow 0$. Thus, $\mathcal{B}_n$ of Theorem \ref{theo:HellingerConsist} is now replaced by the simpler quantity 
\begin{equation}
\mathcal{B}_n = \sup_{\phi}\frac{1}{n}\sum_{i=1}^n{p_{\phi}^{\gamma}(y_i)}.
\label{eqn:BnIneq}
\end{equation}
On the other hand, in order to treat the first term in the right hand side of equation (\ref{eqn:PHdifferenceNeymChi2}), we rewrite the integral as follows:
\begin{multline*}
\int{\frac{\left(K_w*P_n\right)^{-\gamma+1}(x) - p_T^{-\gamma+1}(x)}{p_{\phi}^{-\gamma}(x)}dx} = \\
\int{\frac{\left[\left(K_w*P_n\right)^{\frac{-\gamma+1}{2}}(x) - p_T^{\frac{-\gamma+1}{2}}(x)\right]\left[\left(K_w*P_n\right)^{\frac{-\gamma+1}{2}}(x) + p_T^{\frac{-\gamma+1}{2}}(x)\right]}{p_{\phi}^{-\gamma}(x)}dx}.
\end{multline*}
Now, using the uniform continuity of function\footnote{notice that $\frac{-\gamma+1}{2}\in(0,1)$ since $\gamma\in(-1,0)$.} $t\mapsto t^{\frac{-\gamma+1}{2}}$, we may deduce that if $|K_w*P_n(x) - p_T(x)|<\delta_1$, then:
\begin{equation}
|(K_w*P_n)^{\frac{-\gamma+1}{2}}(x) - p_T^{\frac{-\gamma+1}{2}}(x)|<\frac{\varepsilon}{\sup_{\phi}\int{\frac{\left(K_w*P_n\right)^{\frac{-\gamma+1}{2}}(x) + p_T^{\frac{-\gamma+1}{2}}(x)}{p_{\phi}^{-\gamma}(x)}dx}}. 
\label{eqn:AnIneq}
\end{equation}
Again, by the consistency of the kernel estimator, the previous inequality happens with probability $1-\eta_n$ with $\eta_n\rightarrow 0$. Thus $\mathcal{A}_n$ of Theorem \ref{theo:HellingerConsist} is now replaced by the quantity
\[\mathcal{A}_n = \sup_{\phi}\int{\frac{\left(K_w*P_n\right)^{\frac{-\gamma+1}{2}}(x) + p_T^{\frac{-\gamma+1}{2}}(x)}{p_{\phi}^{-\gamma}(x)}dx}.\]
Existness and finitness of both $\mathcal{A}_n$ and $\mathcal{B}_n$ in probability are ensured by assumptions 3 and 4. Now, using inequalities (\ref{eqn:AnIneq}) and (\ref{eqn:BnIneq}), both events 
\begin{eqnarray*}
\left|\int{\frac{\left(K_{w}*P_n\right)^{1-\gamma}-p_T^{1-\gamma}}{p_{\phi}^{-\gamma}}(x)dx}\right|& < & \varepsilon; \\
\left|\frac{1}{n}\sum_{i=1}^n{\frac{\left(K_{w}*P_n\right)^{-\gamma}-p_T^{-\gamma}}{p_{\phi}^{-\gamma}}(y_i)}\right|& < & \varepsilon ,
\end{eqnarray*}
happen with probability greater than $1-\eta_n$ independently of $\phi$. Finally, we conclude that
\[\mathbb{P}\left(\sup_{\phi\in\Phi} |P_nH(P_n,\phi) - P_nh(P_T,\phi)|<2\varepsilon\right) \geq 1-\eta_n,\]
and hence $\sup_{\phi\in\Phi} |P_nH(P_n,\phi) - P_nh(P_T,\phi)|\rightarrow 0$ in probability. To end the proof, we use assumption 2 together with the Glivenko-Cantelli theorem to deduce that $\sup_{\phi\in\Phi} |P_Th(P_T,\phi) - P_nh(P_T,\phi)|\rightarrow 0$ in probability. Using inequality (\ref{eqn:DecompProofIneq}), we conclude that $\sup_{\phi\in\Phi} |P_nH(P_n,\phi) - P_Th(P_T,\phi)| \rightarrow 0$ in probability. Finally, the previous arguments prove the first point (\ref{eqn:ConsistP1}) in Theorem \ref{theo:VanderVaart}. The second point in Theorem \ref{theo:VanderVaart} is the same as assumption 4 of the present theorem. By definition of the kernel-based MD$\varphi$DE as a minimum of the criterion function $\phi\mapsto P_nH(P_n,\phi)$, Theorem 1 entails the consistency of our new estimator. $\quad \Box$


\subsection{Proof of Theorem \ref{theo:NormalAsyptotNewMD}}\label{Append:TheoNormalAsyptot}
In the whole proof, the index $T$ will be omitted from $\phi^T$ for the sake of clarity. We start with calculating the gradient $\nabla P_nH(P_n,\phi)$.
\begin{equation}
\nabla P_nH(P_n,\phi) = \frac{\gamma}{\gamma-1}\int{\nabla p_{\phi} \frac{p_{\phi}^{\gamma-1}}{K_{n,w}^{\gamma-1}}dx} - \frac{1}{n}\sum_{i=1}^n{\nabla p_{\phi}\frac{p_{\phi}^{\gamma-1}}{K_{n,w}^{\gamma}}(y_i)}.
\label{eqn:GradHnAsymptotNorm}
\end{equation}
We treat each term separately. The first term can be rewritten as:
\[\int{\nabla p_{\phi} \frac{p_{\phi}^{\gamma-1}}{K_{n,w}^{\gamma-1}}dx} = \int{\nabla p_{\phi} p_{\phi}^{\gamma-1}\left[K_{n,w}^{1-\gamma} - p_{\phi}^{1-\gamma}\right]dx} + \int{\nabla p_{\phi}dx}.\]
The second term in the right hand side is zero because $p_{\phi}$ is a density, provided changeability between integration and differentiation. For the first term, we write a second order Taylor expansion of function $t\mapsto t^{1-\gamma}$:
\[K_{n,w}^{1-\gamma} - p_{\phi}^{1-\gamma} = (1-\gamma)(K_{n,w}-p_{\phi})p_{\phi}^{-\gamma} + \frac{-\gamma}{2}\left(K_n-p_{\phi}\right)^2M_n(x)^{-\gamma-1},\]
where $M_n(x)$ is a point in between $K_{n,w}(x)$ and $p_{\phi}(x)$. We now have:
\begin{multline}
\int{\nabla p_{\phi} p_{\phi}^{\gamma-1}\left[K_{n,w}^{1-\gamma} - p_{\phi}^{1-\gamma}\right]dx} = (1-\gamma)\int{\nabla p_{\phi} p_{\phi}^{-1}\left[K_{n,w} - p_{\phi}\right]dx} + \\ \frac{-\gamma}{2}\int{\nabla p_{\phi} p_{\phi}^{-1}M_n(x)^{-\gamma-1}\left[K_{n,w} - p_{\phi}\right]^2dx}.
\label{eqn:AsymptotDiffIntegrals}
\end{multline}
Using equations (3.11-3.13) from \cite{Beran}, we may write:
\begin{equation}
\sqrt{n}\int{\nabla p_{\phi} p_{\phi}^{-1}\left[K_{n,w} - p_{\phi}\right]dx} \xrightarrow[\mathcal{L}]{} \mathcal{N}(0,S),
\label{eqn:FirstPart1stterm}
\end{equation}
where $S=\int{\nabla p_{\phi}\nabla p_{\phi}^t dx}$.\\
The second term will be handled in a similar way to equations (3.11-3.13) from \cite{Beran}. Let $K_n(x) = K(x/w_n)/w_n$. Write
\begin{multline}
\int{\nabla p_{\phi} p_{\phi}^{-1}M_n(x)^{-\gamma-1}\left[K_{n,w} - p_{\phi}\right]^2dx} = \int{\nabla p_{\phi} p_{\phi}^{-1}M_n(x)^{-\gamma-1}\left[K_{n,w} - K_n*P_{\phi}\right]^2dx} + \\ \resizebox{\textwidth}{!}{$2\int{\nabla p_{\phi} p_{\phi}^{-1}M_n(x)^{-\gamma-1}\left[K_{n,w} - K_n*P_{\phi}\right]\left[K_n*P_{\phi} - p_{\phi}\right]dx} + \int{\nabla p_{\phi} p_{\phi}^{-1}M_n(x)^{-\gamma-1}\left[K_n*P_{\phi} - p_{\phi}\right]^2dx}$},
\label{eqn:AsymptotSquaredTerms}
\end{multline}
and prove that each term has a limit equal to zero when multiplied by $\sqrt{n}$. There are two essential arguments. The first one uses equation (3.11) from \cite{Beran} to write:
\begin{equation}
\sup_{x}\left|K_n*P_{\phi} - p_{\phi}\right| \leq \frac{w^2}{2}\sup_{x}\left|p_{\phi}''(x)\right|\int{x^2K(x)dx}.
\label{eqn:Beran311}
\end{equation}
The second one is a result of Corollary 5 from \cite{WiedWeibbach}:
\begin{equation}
\lim_{n\rightarrow\infty} \sqrt{\frac{nw}{-2\log{w}}} \sup_{x}\frac{\left|K_{n,w}-K_n*P_{\phi}\right|}{\sqrt{p_{\phi}}} = \left(\int{K^2(y)dy}\right).
\label{eqn:Cor5Wied}
\end{equation}
We treat the first term in equation (\ref{eqn:AsymptotSquaredTerms}) using equation (\ref{eqn:Cor5Wied}).
\begin{eqnarray*}
\sqrt{n}\int{\left|\nabla p_{\phi}\right| p_{\phi}^{-1}M_n(x)^{-\gamma-1}\left[K_{n,w} - K_n*P_{\phi}\right]^2dx} & \leq & \left[\sqrt{\frac{nw}{-2\log{w}}} \sup_{x}\frac{\left|K_{n,w}-K_n*P_{\phi}\right|}{\sqrt{p_{\phi}}}\right]^2 \\
&  & \times \frac{-2\log(w)}{n^{1/2}w} \int{\left|\nabla p_{\phi}\right|M_n(x)^{-\gamma-1}dx} \\
& = & \mathcal{O}\left(\frac{-2\log(w)}{n^{1/2}w}\right).
\end{eqnarray*}
We treat the second term in equation (\ref{eqn:AsymptotSquaredTerms}) using equations (\ref{eqn:Cor5Wied}) and (\ref{eqn:Beran311}).
\begin{eqnarray*}
\sqrt{n}\int{\frac{|\nabla p_{\phi}|}{p_{\phi}M_n(x)^{\gamma+1}}\left[K_{n,w} - K_n*P_{\phi}\right]\left[K_n*P_{\phi} - p_{\phi}\right]dx} & \leq & \sqrt{\frac{nw}{-2\log{w}}} \sup_{x}\frac{\left|K_{n,w}-K_n*P_{\phi}\right|}{\sqrt{p_{\phi}}} \\
 \times \sup_{x}\left|p_{\phi}''(x)\right| \int{x^2K(x)dx} \sqrt{-2\log(w)} & \frac{w^{3/2}}{2} &  \int{|\nabla p_{\phi}| p_{\phi}^{-1/2}M_n(x)^{-\gamma-1}dx} \\
& = & \mathcal{O}\left(w^{3/2}\sqrt{-2\log(w)}\right).
\end{eqnarray*}
We treat the third term in equation (\ref{eqn:AsymptotSquaredTerms}) using equation (\ref{eqn:Beran311}).
\begin{eqnarray*}
\sqrt{n}\int{|\nabla p_{\phi}| p_{\phi}^{-1}M_n(x)^{-\gamma-1}\left[K_n*P_{\phi} - p_{\phi}\right]^2dx} & \leq & \sqrt{n}\frac{h^4}{2}\sup_x\left|p_{\phi}''(x)\right|\left[\int{x^2K(x)dx}\right]^2 \\
 & & \times \int{\frac{|\nabla p_{\phi}|}{p_{\phi}M_n(x)^{\gamma+1}}dx} \\
& = & \mathcal{O}\left(n^{1/2}h^4\right).
\end{eqnarray*}
We conclude using assumption 3 that :
\[\sqrt{n}\int{\nabla p_{\phi} p_{\phi}^{-1}M_n(x)^{-\gamma-1}\left[K_{n,w} - p_{\phi}\right]^2dx} \xrightarrow[\mathbb{P}]{} 0.\]
This entails together with (\ref{eqn:FirstPart1stterm}) that the first term in $P_nH(P_n,\phi)$ multiplied by $\sqrt{n}$ is a centered multivariate Gaussian with covariance matrix $S$.\\
The sum term in $P_nH(P_n,\phi)$ can be treated similarly. Firstly, write:
\[\frac{1}{\sqrt{n}}\sum_{i=1}^n{\nabla p_{\phi}\frac{p_{\phi}^{\gamma-1}}{K_{n,w}^{\gamma}}(y_i)} = \frac{1}{\sqrt{n}}\sum_{i=1}^n{\nabla p_{\phi}p_{\phi}^{\gamma-1}\left[K_{n,w}^{-\gamma} - p_{\phi}^{-\gamma}\right](y_i)} + \frac{1}{\sqrt{n}}\sum_{i=1}^n{\frac{\nabla p_{\phi}}{p_{\phi}}(y_i)}.\]
Now the second term in the right hand side, is asymptotically Gaussian with mean zero and covariance matrix equal to $S$. For the first term, we apply the mean value theorem on function $z^{-\gamma}$. There exists a bounded function $M_n(y_i)$ in between $K_{n,w}(y_i)$ and $p_{\phi}(y_i)$ such that:
\begin{multline*}
\frac{1}{\sqrt{n}}\sum_{i=1}^n{\nabla p_{\phi}p_{\phi}^{\gamma-1}\left[K_{n,w}^{-\gamma} - p_{\phi}^{-\gamma}\right](y_i)} = \frac{-\gamma}{\sqrt{n}}\sum_{i=1}^n{\frac{\nabla p_{\phi}}{p_{\phi}^2}\left[K_{n,w} - p_{\phi}\right](y_i)} \\ + \frac{\gamma(\gamma+1)}{\sqrt{n}}\sum_{i=1}^n{\frac{\nabla p_{\phi}}{p_{\phi}^3}\left[K_{n,w} - p_{\phi}\right]^2(y_i)}
\end{multline*}
The treatment of the second term in the right hand side can be done similarly to the second term in equation (\ref{eqn:AsymptotDiffIntegrals}) and thus converges to zero when multiplied by $\sqrt{n}$. The first term will be proved to have the same asymptotic behavior to the first term in equation (\ref{eqn:AsymptotDiffIntegrals}). Write the difference between these terms. Let $\psi(x) = \frac{\nabla p_{\phi}}{p_{\phi}^2}$.
\begin{eqnarray*}
\sqrt{n}\left|\frac{-1}{\sqrt{n}}\sum_{i=1}^n{\psi(x)\left[K_{n,w} - p_{\phi}\right](y_i)} - \int{\psi(x)\left[K_{n,w} - p_{\phi}\right]p_{\phi}(x)dx}\right| & \leq &\\
\sup_x|K_{n,w}(x) - p_{\phi}(x)|\
\sqrt{n}\left|\int{\psi(x)\left[K_{n,w} - p_{\phi}\right](dP_n - dP_{\phi})(x)}\right| & = & \\
\sqrt{n}\left|\int{(\mathbb{F}_n - \mathbb{F}_{\phi})(x) d\left(\psi(x)\left[K_{n,w} - p_{\phi}\right]\right)(x)}\right| & \leq & \\
\sqrt{n}\sup|\mathbb{F}_n-\mathbb{F}_{\phi}|\left[\sup|K_{n,w}' - p_{\phi}'|\int{\psi(x)dx}+\sup|K_{n,w} - p_{\phi}|\int{\psi'(x)dx} \right] & . &
\end{eqnarray*}
Now, using rates of convergence of the empirical distribution function (see for example \cite{Vaart} p. 268), the kernel density estimator (see for example \cite{Bordes10} Lemma 3.1) and the derivative of the kernel density estimator (see \cite{Schuster} Theorem 2.5), we prove easily that the right hand side of the inequality in the previous display tends to zero in probability. This proves our claim. Now it remains to use the asymptotic normality limit in equation (\ref{eqn:FirstPart1stterm}) to deduce that:
\[\sqrt{n}\left(\frac{-1}{\sqrt{n}}\sum_{i=1}^n{\frac{\nabla p_{\phi}}{p_{\phi}^2}\left[K_{n,w} - p_{\phi}\right](y_i)}\right) \xrightarrow[\mathcal{L}]{} \mathcal{N}(0,S).\]
Collecting the three pieces which generate the asymptotic normality in the whole calculus, we may conclude that:
\[\sqrt{n}\nabla P_nH(P_n,\phi) \xrightarrow[\mathcal{L}]{} \mathcal{N}\left(0,(2\gamma^2+1)\int{\nabla p_{\phi}\nabla p_{\phi}^t}\right).\]
The matrix of second order partial derivatives $J_{P_nH(P_n,.)}$ can be treated in an easier way than the vector $\nabla P_nH(P_n,\phi)$. It can be shown using similar techniques to those used here above that $J_{P_nH(P_n,.)}$ converges in probability at rate $o_P(n^{-1/2})$. We may conclude now that the asymptotic normality result (\ref{eqn:AsymptotNormalResNewMDphiDE}) holds.
\subsection{Proof of Theorem \ref{theo:IF}}\label{appendix:proofIF}
For a clearer writing, we omit the index $T$ from $P_T$ in this proof. Deriving the left hand side of the estimating equation (\ref{eqn:EstimEq}) gives:
\begin{multline*}
\frac{\gamma}{\gamma-1} \int{\frac{\left[(\gamma-1)\nabla p_{C(P)}\left(\nabla p_{C(P)}\right)^t+p_{C(P)}J_{p_{C(P)}}\right]p_{C(P)}^{\gamma-2}}{(K_w*P)^{\gamma-1}}} \text{IF}(P,Q) \\ - \gamma\int{\frac{p_{C(P)}^{\gamma-1}\left[K_w*(Q-P)\right]\nabla p_{C(P)}}{(K_w*P)^{\gamma}}(x) dx}.
\end{multline*}
Deriving the right hand side of the estimating equation (\ref{eqn:EstimEq}) gives:
\begin{multline*}
\int{\frac{\left[(\gamma-1)\nabla p_{C(P)}\left(\nabla p_{C(P)}\right)^tp_{C(P)}^{\gamma-2}+p_{C(P)}^{\gamma-1}J_{p_{C(P)}}\right]}{(K_w*P)^{\gamma}}(x)dP(x)}\text{IF}(P,Q)\\ - \gamma\int{\frac{p_{C(P)}^{\gamma-1}\left[K_w*(Q-P)\right]\nabla p_{C(P)}}{(K_w*P)^{\gamma+1}}(x) dP(x)}
+ \int{\frac{p_{C(P)}^{\gamma-1}\nabla p_{C(P)}}{(K_w*P)^{\gamma}}(x)(dQ-dP)(x)}.
\end{multline*}
We have now:
\begin{multline*}
A\; \text{IF}(P,Q) = \gamma\int{\frac{p_{C(P)}^{\gamma-1}\left[K_w*(Q-P)\right]\nabla p_{C(P)}}{(K_w*P)^{\gamma}}(x) dx} + \int{\frac{p_{C(P)}^{\gamma-1}\nabla p_{C(P)}}{(K_w*P)^{\gamma}}(x)(dQ-dP)(x)} \\
-\gamma\int{\frac{p_{C(P)}^{\gamma-1}\left[K_w*(Q-P)\right]\nabla p_{C(P)}}{(K_w*P)^{\gamma+1}}(x) dP(x)},
\end{multline*}
where $A$ is defined by formula (\ref{eqn:IFmatrixA}). Assuming that $A$ is invertible and using the estimating equation (\ref{eqn:EstimEq}), we can write:
\[
\text{IF}(P,Q) = \gamma A^{-1}\int{\frac{p_{C(P)}^{\gamma-1}\left[K_w*Q\right]\nabla p_{C(P)}}{(K_w*P)^{\gamma}}\left(1-\frac{p}{K*P}\right)(x) dx} + A^{-1}\int{\frac{p_{C(P)}^{\gamma-1}\nabla p_{C(P)}}{(K_w*P)^{\gamma}}(x)dQ(x)}.
\]
The remaining of the proof is a simple substitution of $C(P)$ by $\phi^T$ when $P=P_{\phi^T}$, and replacing $Q$ by the dirac measure on a point $x_0$. $\quad \Box$

\chapter{Iterative Proximal-Point Algorithm for the Calculus of Divergence-Based Estimators with Application to Mixture Models}
In the previous chapter, we have presented and introduced several estimators; an estimator based on Beran's approach (\ref{eqn:BeranEstimator}), an estimator based on the Basu-Lindsay approach (\ref{eqn:BasuLindsayDiv}), the MD$\varphi$DE (\ref{eqn:MDphiDEClassique}), the D$\varphi$DE (\ref{eqn:DphiDE}), our new kernel-based MD$\varphi$DE (\ref{eqn:NewMDphiDE}) and the MDPD (\ref{eqn:MDPDdef}). All these estimators, the MLE included, are in general non convex (or non concave for the D$\varphi$DE) optimization problems. The calculus of these estimators in general is then not guaranteed to give a good result for a finite-sample setup when we use any standard optimization algorithm. There exist several optimization algorithms such as Gradient descent algorithms (first and second order gradient descent and gradient-conjugate algorithms), the BFGS algorithm, the Nelder-Mead's algorithm, Brent's algorithm among others, see \cite{OptimKenneth}. These algorithms guarantee the convergence of the iterative procedure to a global optimum whose objective function it is a strictly convex (or concave) function. If it is not the case, the algorithm converges to a local optimum. Each optimization method has its own advantages and drawbacks. There are also some algorithms which treat functions which can be written as the difference of two convex functions called convex-concave optimization algorithms, see \cite{CCCP}. These algorithms, for example, give in general better results than convex optimization algorithms for this kind of functions. \\

There is on the other hand, another type of optimization algorithms which attack a modified version of the objective function, say $D(\phi) + g(\phi,\phi^k)$, where $D$ is the objective function and $g$ is a perturbation function which depends on the current iteration $k$. A perturbation of the objective function has a goal of giving it a "better form". The iterative procedure then proceeds to optimize the modified function iteratively as the perturbation becomes less and less important as the number of the iteration increases. This kind of algorithms is called proximal-point algorithms. It was first proposed by \cite{Martinet} who used a perturbation of the form $g(\phi,\phi^k)=\|\phi-\phi^{k}\|$. Generally, the proximal term has a regularization effect in the sense that a proximal point algorithm is more stable and frequently outperforms classical optimization algorithms, see \cite{Goldstein}. Furthermore, and as mentioned in \citep{ChretienHeroProxGener}, proximal point algorithms permit to avoid saddle points.\\
The EM algorithm is a very interesting example of proximal point algorithms, see paragraph \ref{subsec:EMProximalForm} for a detailed calculus or the papers of \cite{ChretienHero} and \cite{Tseng}. Indeed, one may rewrite the conditional expectation of the complete log-likelihood as a sum of the log-likelihood function and a distance-like function over the conditional densities of the labels provided an observation. Thus, the EM algorithm has the log-likelihood as an objective function which is being perturbed by a distance-like function. Chr\'etien and Hero \cite{ChretienHeroAccel} proved superlinear convergence of a proximal point algorithm derived by the EM algorithm. Notice that EM-type algorithms usually enjoy no more than linear convergence.\\

Taking into consideration the need for robust estimators, and the fact that the MLE is the least robust estimator among the class of divergence-type estimators, we generalize the EM algorithm (and the version in \cite{Tseng}) by replacing the log-likelihood function by an estimator of a $\varphi-$divergence between the true distribution of the data and the model. We, thus, propose to calculate divergence-based estimators mentioned here above using a proximal-point algorithm based on the work of \cite{Tseng} on the log-likelihood function. This proximal-point algorithm extends the EM algorithm. Our convergence proof of the iterative procedure requires some regularity of the estimated divergence with respect to the parameter vector which can be easily checked using Lebesgue theorems except for the dual formula (\ref{eqn:DivergenceDef}). Indeed, the supremal form of the estimated divergence in the dual formula complicates the situation. Recent results in \cite{Rockafellar} provide sufficient conditions to solve this problem. It may at time be very difficult to prove that the objective function is differentiable with respect to $\phi$, therefore, our results cover the case when the objective function is not differentiable. \\
We also propose a two-step iterative algorithm to calculate divergence-based estimators for mixture models motivated by the EM algorithm; a step to calculate the proportion and a step to calculate the parameters of the components. Proofs for this simplified version become more technical. The goal of this simplification is to reduce the dimension over which we optimize since in lower dimensions, optimization procedures are more efficient\footnote{This does not cover all optimization methods. For example, the Nelder-Mead algorithm is considered as "unreliable" in univariate optimization. The Brent method can be used as an alternative. Note that these two algorithms are suitable for non differentiable functions since they only use function values to reach an optimum.}. \\
Another contribution of this work concerns the assumptions ensuring the convergence of the algorithm. In the previous works on such type of proximal algorithms such as the papers of \cite{Tseng} and \cite{ChretienHero}, the proximal term is supposed to verify an identifiability property. In other words $g(\phi,\phi')=0$ if and only if $\phi=\phi'$. We show that such property is difficult to verify and it is often not fulfilled in mixture models. We provide a way to relax such condition without imposing further assumptions.

%
%

\section{Development of the proximal-point algorithm from the EM algorithm}\label{sec:IntrodPart}
\subsection{General context and notations}
Let $(X,Y)$ be a couple of random variables with joint probability density function $f(x,y|\phi)$ parametrized by a vector of parameters $\phi\in\Phi\subset\mathbb{R}^d$. Let $(X_1,Y_1),\cdots,(X_n,Y_n)$ be n copies of $(X,Y)$ independently and identically distributed. Finally, let $(x_1,y_1),\cdots,(x_n,y_n)$ be n realizations of the n copies of $(X,Y)$. The $x_i$'s are the unobserved data (labels) and the $y_i$'s are the observations. The vector of parameters $\phi$ is unknown and need to be estimated. \\
The observed data $y_i$ are supposed to be real vectors and the labels $x_i$ belong to a space $\mathcal{X}$ not necessarily finite unless mentioned otherwise. Denote $dx$ the measure on the label space $\mathcal{X}$ (for example the counting measure if $\mathcal{X}$ is discrete). The marginal density of the observed data is given by $p_{\phi}(y)=\int{f(x,y|\phi)}dx$.\\
For a parametrized function $f$ with a parameter $a$, we write $f(x|a)$. We use the notation $\phi^k$ for sequences with the index above. Derivatives of a real valued function $\psi$ defined on $\mathbb{R}$ are written as $\psi',\psi'',$ etc. We use $\nabla f$ for the gradient of real function $f$ defined on $\mathbb{R}^d$, $\partial f$ to its subgradient and $J_f$ to the matrix of second order partial derivatives. For a generic function $H$ of two variables $(\phi,\theta)$, $\nabla_1 H(\phi,\theta)$ denotes the gradient with respect to the first (vectorial) variable $\phi$. 
\subsection{EM algorithm and Tseng's generalization}\label{subsec:EMProximalForm}
The EM algorithm is a well-known method for calculating the maximum likelihood estimator of a model where incomplete data is considered. For example, when working with mixture models in the context of clustering, the labels or classes of observations are unknown during the training phase. Several variants of the EM algorithm were proposed, see \cite{McLachlanEM}. The EM algorithm estimates the unknown parameter vector by generating the sequence (see \citep{Dempster}):
\begin{eqnarray*}
\phi^{k+1} & = & \argmax_{\Phi} Q(\phi,\phi^k) \\
					 & = & \argmax_{\Phi} \mathbb{E}\left[\log(f(\textbf{X},\textbf{Y}|\phi)) \left| \textbf{Y}=\textbf{y},\phi^k\right.\right],
\end{eqnarray*}
where $\textbf{X} = (X_1,\cdots,X_n)$, $\textbf{Y} = (Y_1,\cdots,Y_n)$ and $\textbf{y}=(y_1,\cdots,y_n)$. By independence between the couples $(X_i,Y_i)$'s, the previous iteration may be rewritten as:
\begin{eqnarray}
\phi^{k+1} & = & \argmax_{\Phi} \sum_{i=1}^n{\mathbb{E}\left[\log(f(X_i,Y_i|\phi)) \left| Y_i=y_i,\phi^k\right.\right]} \nonumber\\
					 & = & \argmax_{\Phi} \sum_{i=1}^n\int_{\mathcal{X}}{\log(f(x,y_i|\phi)) h_i(x|\phi^k) dx},
\label{eqn:EMAlgo}
\end{eqnarray}
where $h_i(x|\phi^k)$ is the conditional density of the labels (at step $k$) provided $y_i$. It is given by:
\begin{equation}
h_i(x|\phi^k) = \frac{f(x,y_i|\phi^k)}{p_{\phi^k}(y_i)}.
\label{eqn:ConditionalDensLabel}
\end{equation}
This justifies the recurrence equation given by \citep{Tseng}. It is slightly different from the EM recurrence defined in \citep{Dempster}. The conditional expectation of the logarithm of the complete likelihood provided the data and the parameter vector of the previous iteration is calculated, here, on the vector of observed data. The expectation is replaced by an integral against the corresponding conditional density of the labels. \\
It is well-known that the EM iterations can be rewritten as a difference between the log-likelihood and a \emph{Kullback-Liebler} distance-like function. Indeed, using (\ref{eqn:ConditionalDensLabel}) in (\ref{eqn:EMAlgo}), one can write:
\begin{eqnarray*}
\phi^{k+1} & = & \argmax_{\Phi} \sum_{i=1}^n\int_{\mathcal{X}}{\log\left(h_i(x|\phi)\times p_{\phi}(y_i)\right) h_i(x|\phi^k) dx} \\
 					 & = & \argmax_{\Phi} \sum_{i=1}^n\int_{\mathcal{X}}{\log\left(p_{\phi}(y_i)\right) h_i(x|\phi^k) dx} + \sum_{i=1}^n\int_{\mathcal{X}}{\log\left(h_i(x|\phi)\right) h_i(x|\phi^k) dx} \\
           & = & \argmax_{\Phi} \sum_{i=1}^n{\log\left(p_{\phi}(y_i)\right)} + \sum_{i=1}^n\int_{\mathcal{X}}{\log\left(\frac{h_i(x|\phi)}{h_i(x|\phi^k)}\right) h_i(x|\phi^k) dx}\\
					& & \qquad \qquad \qquad \qquad + \sum_{i=1}^n\int_{\mathcal{X}}{\log\left(h_i(x|\phi^k)\right) h_i(x|\phi^k) dx}.
\end{eqnarray*}
The final line is justified by the fact that $h_i(x|\phi)$ is a density, therefore it integrates to 1. The additional term does not depend on $\phi$ and, hence, can be omitted. We now have the following iterative procedure:
\begin{equation}
\phi^{k+1} = \argmax_{\Phi} \sum_{i=1}^n{\log\left(p_{\phi}(y_i)\right)} + \sum_{i=1}^n\int_{\mathcal{X}}{\log\left(\frac{h_i(x|\phi)}{h_i(x|\phi^k)}\right) h_i(x|\phi^k) dx}.
\label{eqn:EMProximal}
\end{equation}
As stated in \citep{Tseng}, the previous iteration has the form of a proximal point maximization of the log-likelihood, i.e. a perturbation of the log-likelihood by a (modified) Kullback distance-like function defined on the conditional densities of the labels. Tseng proposed to generalize the Kullback distance-like term into other types of divergences. Tseng's recurrence is now defined by:
\begin{equation}
\phi^{k+1} = \argsup_{\phi} J(\phi) - D_{\psi}(\phi,\phi^k),
\label{eqn:TsengAlgo}
\end{equation}
where $J$ is the log-likelihood function and $D_{\psi}$ is a distance-like function defined on the conditional probabilities of the classes provided the observations and is given by:
\begin{equation}
D_{\psi}(\phi,\phi^k) = \sum_{i=1}^n\int_{\mathcal{X}}{\psi\left(\frac{h_i(x|\phi)}{h_i(x|\phi^k)}\right)h_i(x|\phi^k)dx},
\label{eqn:DivergenceClassesNtNorm}
\end{equation}
for a real positive convex function $\psi$ such that $\psi(1)=\psi'(1)=0$. $D_{\psi}(\phi_1,\phi_2)$ is positive and equals zero if $\phi_1=\phi_2$. Moreover, $D_{\psi}(\phi_1,\phi_2)=0$ if and only if $\forall i, h_i(x|\phi_1) = h_i(x|\phi_2)$ $dx-$almost everywhere. Clearly, (\ref{eqn:TsengAlgo}) and (\ref{eqn:EMProximal}) are equivalent for $\psi(t)=-\log(t)+t-1$.\\

\subsection{Generalization of Tseng's algorithm}\label{subsec:OurAlgo}
We use the relation between maximizing the log-likelihood and minimizing the Kullback-Liebler divergence to generalize the previous algorithm. We therefore replace the log-likelihood function by a $\varphi-$divergence $D_{\varphi}$ (in the sense of \citep{Csiszar}) between the true density of the data $p_{\phi_T}$ and the model $p_{\phi}$. Since the value of the divergence depends on the true density which is unknown, an estimator of the divergence needs to be considered. We may use any estimator among (\ref{eqn:BeranEstimator}), (\ref{eqn:BasuLindsayDiv}), (\ref{eqn:DivergenceDef}) or (\ref{eqn:EmpiricalNewDualForm}). Our new algorithm is defined by the following recurrence:
\begin{equation}
\phi^{k+1} = \arginf_{\phi} \hat{D}_{\varphi}(p_{\phi},p_{\phi_T}) + \frac{1}{n}D_{\psi}(\phi,\phi^k)
\label{eqn:DivergenceAlgoPreVersion}
\end{equation}
where $D_{\psi}(\phi,\phi^k)$ is defined by (\ref{eqn:DivergenceClassesNtNorm}). When $\varphi(t) = -\log(t)+t-1$, it is easy to see that we get recurrence (\ref{eqn:TsengAlgo}). Take for example the case of the approximation (\ref{eqn:DivergenceDef}). Since $\varphi'(t) = \frac{-1}{t} + 1$, we have $\int{\varphi'\left(\frac{p_{\phi}}{p_{\alpha}}\right)p_{\phi}dx} = 0$. Hence,
\[\hat{D}_{\varphi}(p_{\phi},p_{\phi_T}) = \sup_{\alpha} \frac{1}{n}\sum_{i=1}^n{\log(p_{\alpha}(y_i))} - \frac{1}{n}\sum_{i=1}^n{\log(p_{\phi}(y_i))}.\]
Using the fact that the first term in $\hat{D}_{\varphi}(p_{\phi},p_{\phi_T})$ does not depend on $\phi$, so it does not count in the $\arginf$ defining $\phi^{k+1}$, we may rewrite (\ref{eqn:DivergenceAlgoPreVersion}) as:
\begin{eqnarray*}
\phi^{k+1} & = & \arginf_{\phi}\left\{\sup_{\alpha} \frac{1}{n}\sum_{i=1}^n{\log(p_{\alpha}(y_i))} - \frac{1}{n}\sum_{i=1}^n{\log(p_{\phi}(y_i))} + \frac{1}{n}
D_{\psi}(\phi,\phi^k)\right\} \\
           & = & \arginf_{\phi}\left\{-\frac{1}{n}\sum_{i=1}^n{\log(p_{\phi}(y_i))} +\frac{1}{n}D_{\psi}(\phi,\phi^k)\right\} \\
           & = & \argsup_{\phi}\left\{\frac{1}{n}\sum_{i=1}^n{\log(p_{\phi}(y_i))} - \frac{1}{n}D_{\psi}(\phi,\phi^k)\right\} \\
           & = & \argsup_{\phi} J(\phi) - D_{\psi}(\phi,\phi^k).
\end{eqnarray*}
For notational simplicity, from now on, we redefine $D_{\psi}$ with a normalization by $n$, i.e. 
\begin{equation}
D_{\psi}(\phi,\phi^k) = \frac{1}{n} \sum_{i=1}^n\int_{\mathcal{X}}{\psi\left(\frac{h_i(x|\phi)}{h_i(x|\phi^k)}\right)h_i(x|\phi^k)dx}.
\label{eqn:DivergenceClasses}
\end{equation}
Hence, our set of algorithms is redefined by:
\begin{equation}
\phi^{k+1} = \arginf_{\phi} \hat{D}_{\varphi}(p_{\phi},p_{\phi_T}) + D_{\psi}(\phi,\phi^k).
\label{eqn:DivergenceAlgo}
\end{equation}
We will see later that this iteration forces the estimated divergence to decrease and that under suitable conditions, it converges to a (local) minimum of $\hat{D}_{\varphi}(p_{\phi},p_{\phi_T})$. It results that, algorithm (\ref{eqn:DivergenceAlgo}) is a way to calculate the minimum $\varphi-$divergence estimator defined by (\ref{eqn:BeranEstimator}), (\ref{eqn:BasuLindsayDiv}), (\ref{eqn:MDphiDEClassique}) or (\ref{eqn:NewMDphiDE}).\\
Before proceeding to study the convergence properties of such algorithm, we will propose another algorithm for the case of mixture models. In the EM algorithm, the estimation of the parameters of a mixture model is done mainly by two steps, see paragraph \ref{subsec:EMMixtures}. The first step estimates the proportions of the classes whereas the second step estimates the parameters defining the classes. Our idea is based on a directional optimization of the objective function in (\ref{eqn:DivergenceAlgo}). Convergence properties of the two-step algorithm will also be studied, but the proofs are more technical.
%
%
\section{Two-step Algorithm for mixtures}\label{sec:TwoStepAlgo}
Let $p_{\phi}$ be a mixture model with $s$ components:
\begin{equation}
p_{\phi}(y) = \sum_{i=1}^s{\lambda_i f_i(y|\theta_i)}.
\label{eqn:MixModelDef}
\end{equation}
Here, $\phi = (\lambda,\theta)$ with $\lambda = (\lambda_1,\cdots,\lambda_s)\in [0,1]^s$ such that $\sum_j{\lambda_j}=1$, and $\theta = (\theta_1,\cdots,\theta_s)\in\Theta\subset\mathbb{R}^{d-s}$ such that $\Phi \subset [0,1]^s\times\Theta$. In the EM algorithm, the corresponding optimization to (\ref{eqn:DivergenceAlgo}) can be solved by calculating an estimate of the $\lambda$'s as the proportions of classes, and then proceed to optimize on the $\theta$'s (see for example \citep{Titterington}). This simplifies the optimization in terms of complexity (optimization in lower spaces) and clarity (separate proportions from classes parameters). We want to build an algorithm with the same property and divide the optimization problem into two parts. One which estimates the proportions $\lambda$ and another which estimates the parameters defining the form of each component $\theta$. We propose the following algorithm:
\begin{eqnarray}
\lambda^{k+1} & = & \arginf_{\lambda\in[0,1]^s, s.t. (\lambda,\theta^k)\in\Phi} \hat{D}_{\varphi}(p_{\lambda,\theta^k},p_{\phi_T}) + D_{\psi}((\lambda,\theta^k),\phi^k); \label{eqn:DivergenceAlgoSimp1} \\
\theta^{k+1} & = & \arginf_{\theta\in\Theta, s.t. (\lambda^{k+1},\theta)\in\Phi} \hat{D}_{\varphi}(p_{\lambda^{k+1},\theta},p_{\phi_T}) + D_{\psi}((\lambda^{k+1},\theta),\phi^k).
\label{eqn:DivergenceAlgoSimp2}
\end{eqnarray}
This algorithm corresponds to a directional optimization for recurrence (\ref{eqn:DivergenceAlgo}) by considering simply the unit vectors as directions. We can therefore prove analogously that the estimated divergence between the model and the true density decreases as we proceed with the recurrence.\\

We end the first part of this chapter by three remarks:
\begin{itemize}
\item[$\bullet$] Function $\psi$ defining the distance-like proximal term $D_{\psi}$ needs not to be convex as in \cite{Tseng}. As we will see in the convergence proofs, the only properties needed are: $\psi$ is a non negative function defined on $\mathbb{R}_+$ verifying $\psi(t)=0$ iff $t=1$, and $\psi'(t)=0$ iff $t=1$.
\item[$\bullet$] The simplified version is not restricted to mixture models. Indeed, any parametric model, whose vector of parameters can be separated into two independent parts, can be estimated using the simplified version.
\item[$\bullet$] As we will see in the proofs, results on the simplified version (\ref{eqn:DivergenceAlgoSimp1}, \ref{eqn:DivergenceAlgoSimp2}) can be extended to a further simplified one. In other words, one may even consider an algorithm which attack a lower level of optimization. We may optimize on each class of the mixture model instead of the whole set of parameters. Since the analytic separation is not evident, one should expect some loss of quality as a cost of a less optimization time.
\end{itemize}
The remaining of the chapter is devoted entirely to the study of the convergence of the sequences generated by either of the two sets of algorithms (\ref{eqn:DivergenceAlgo}) and (\ref{eqn:DivergenceAlgoSimp1}, \ref{eqn:DivergenceAlgoSimp2}) presented above. A key feature which will be needed in the proofs is the regularity of the objective function $\hat{D}_{\varphi}(p_{\phi},p_{\phi_T})$. Regularity of all divergence estimators mentioned at the begining of this chapter can be checked using Lebesgue theorems except for the dual formula (\ref{eqn:DivergenceDef}). Indeed, continuity and differentiability are not simple since the dual formula is defined through a supremum. The following section is devoted to the study of the regularity of a function written as the supremum of a bivariable function.

%
%

\section{Analytical properties of the dual formula of \texorpdfstring{$\varphi-$}{phi-}divergences}\label{sec:AnalyticalDiscuss}

The dual formula defining the estimator of the divergence between the true density and the model defined by (\ref{eqn:DivergenceDef}) seems quite complicated. This is basically because of a functional integral and a supremum over it. Continuity and differentiation of the integral is resolved by Lebesgue theorems. We only need that the integrand as well as its partial derivatives to be uniformly bounded with respect to the parameter. However, continuity or differentiability of the supremum is more subtle. Indeed, even if the optimized function is $\mathcal{C}^{\infty}$, it does not imply the continuity of its supremum. Take for example function $f(x,u)=-e^{xu}$. We have:
\[\sup_{x} f(x,u) = \left\{\begin{array}{ccc} -1 & \text{if} & u=0; \\
                                              0 & \text{if} & u\neq 0.
\end{array}\right.\]
On the basis of the theory presented in \citep{Rockafellar} about parametric optimization, we present two ways for studying continuity and differentiability of $\hat{D}_{\varphi}(p_{\phi},p_{\phi_T})$ defined through (\ref{eqn:DivergenceDef}). The first one is the most important because it is easier and demands less mathematical notations. In the first approach, we provide sufficient conditions in order to prove continuity and differentiability almost everywhere of the dual estimator of the divergence. This approach will be used in the study of the convergence of our proximal-point algorithm, see Section \ref{sec:Examples}. The second approach is presented for the sake of completness of the study. We give sufficient conditions which permit to prove the differentiability \emph{everywhere}.\\ 
We recall first the definition of a subgradient of a real valued function $f$.
\begin{definition}[Definition 8.3 in \cite{Rockafellar}]
Consider a function $f:\mathbb{R}^d\rightarrow\bar{\mathbb{R}}$ and a point $\phi^*$ with $f(\phi^*)$ finite. For a vector $v$ in $\mathbb{R}^d$, one says that:
\begin{itemize}
\item[(a)] $v$ is a regular subgradient of $f$ at $\phi^*$, written $v\in\hat{\partial} f(\phi^*)$, if:
\[f(\alpha) \geq f(\phi^*) + <v,\alpha-\phi^*> + o\left(|\alpha-\phi^*|\right);\]
\item[(b)] $v$ is a (general) subgradient of $f$ at $\phi^*$, written $v\in\partial f(\phi^*)$, if there are sequences $\alpha^{n}\rightarrow\phi^*$ with $f(\alpha^n)\rightarrow f(\phi^*)$, and $v^n\in\hat{\partial}f(\alpha^n)$ with $v^n\rightarrow v$.
\end{itemize}
\end{definition}
\subsection{A result of differentiability almost everywhere : Lower-\texorpdfstring{$\mathcal{C}^1$}{TEXT} functions}\label{para:LowerC1}
\begin{definition} [\citep{Rockafellar} Chap 10.] A function $D:\Phi\rightarrow\mathbb{R}$, where $\Phi$ is an open set in $\mathbb{R}^d$, is said to be lower-$\mathcal{C}^1$ on $\Phi$, if on some neighborhood $V$ of each $\phi$ there is a representation
\[D(\phi) = \sup_{\alpha\in T} f(\alpha,\phi)\]
in which the functions $\alpha\mapsto f(\alpha,\phi)$ are of class $\mathcal{C}^1$ on $V$ and the set $T$ is a compact set such that $f(\alpha,\phi)$ and $\nabla_{\phi}f(\alpha,\phi)$ depend continuously not just on $\phi\in \Phi$ but jointly on $(\alpha,\phi)\in T\times V$.
\end{definition}
\noindent In our case, the supremum form is globally defined. Moreover, $T = \Phi$. In case $\Phi$ is bounded, it suffices then to take $T=cl(\Phi)$ the closure of $\Phi$ since $\alpha\mapsto f(\alpha,\phi)$ is continuous. The condition on $T$ to be compact is essential here, and can not be compromised, so that it is necessary to reduce in a way or in another the optimization on $\alpha$ into a compact or at least a bounded set. For example, one may prove that the values of $\alpha\mapsto f(\alpha,\phi)$ near infinity are lower than some value inside $\Phi$ independently of $\phi$.\\
\begin{theorem}[Theorem 10.31 in \citep{Rockafellar}] \label{theo:LowerC1}
Any lower-$\mathcal{C}^1$ function $D$ on an open set $\Phi\subset\mathbb{R}^d$ is both (strictly\footnote{A strictly continuous function $f$ is a local Lipschitz continuous function, i.e. for each $x_0\in\text{int}{\Phi}$, the following limit exists and is finite \[\limsup_{x,x'\rightarrow x_0} \frac{|f(x')-f(x)|}{x'-x}\]}) continuous and continuously differentiable where it is differentiable. Moreover, if $\Delta$ consists of the points where $D$ is differentiable, then $\Phi\setminus\Delta$ is negligible\footnote{A set is called negligible if for every $\varepsilon>0$, there is a family of boxes $\{B_k\}_k$ with $d-$dimensional volumes $\varepsilon_k$ such that $A\subset\cup_{k}{B_k}$ and $\sum_{k}{\varepsilon_k}<\varepsilon$.}.
\end{theorem}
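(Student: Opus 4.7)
The statement bundles three assertions about a lower-$\mathcal{C}^1$ function: strict (i.e.\ local Lipschitz) continuity, continuous differentiability on the set $\Delta$ where $D$ is differentiable, and negligibility of $\Phi\setminus\Delta$. My plan is to treat these three points in that order, exploiting the local representation $D(\phi)=\sup_{\alpha\in T}f(\alpha,\phi)$ with $T$ compact and $(\alpha,\phi)\mapsto (f(\alpha,\phi),\nabla_\phi f(\alpha,\phi))$ jointly continuous on $T\times V$.

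For strict continuity, I would fix $\phi_0\in\Phi$ and choose a closed ball $\bar B\subset V$ around $\phi_0$. Joint continuity of $\nabla_\phi f$ on the compact set $T\times \bar B$ yields a constant $L$ with $\|\nabla_\phi f(\alpha,\phi)\|\leq L$ for all $(\alpha,\phi)\in T\times\bar B$. The mean value theorem then gives $|f(\alpha,\phi_1)-f(\alpha,\phi_2)|\leq L\|\phi_1-\phi_2\|$ uniformly in $\alpha$, and taking the supremum preserves this, so $D$ is Lipschitz on $\bar B$. As this holds around every $\phi_0$, $D$ is strictly continuous on $\Phi$. The negligibility of $\Phi\setminus\Delta$ then follows by invoking Rademacher's theorem: a locally Lipschitz function on an open subset of $\mathbb{R}^d$ is differentiable Lebesgue-almost everywhere, and Lebesgue-null sets are negligible in the sense defined in the excerpt.

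The main hurdle is the continuous differentiability on $\Delta$. Here I would rely on a Danskin-type description of the subdifferential of a max function: if one defines the active set $M(\phi)=\{\alpha\in T:f(\alpha,\phi)=D(\phi)\}$, then $\partial D(\phi)\subseteq \overline{\mathrm{conv}}\,\{\nabla_\phi f(\alpha,\phi):\alpha\in M(\phi)\}$, which is a standard consequence of the smooth maximum formulas in Chapter~9 of \citep{Rockafellar}. At a point $\phi\in\Delta$, the subdifferential reduces to the singleton $\{\nabla D(\phi)\}$, so in particular $\nabla D(\phi)=\nabla_\phi f(\alpha^*,\phi)$ for any $\alpha^*\in M(\phi)$. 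To check continuity of $\nabla D$ on $\Delta$, take $\phi_n\to\phi$ inside $\Delta$ and pick $\alpha_n\in M(\phi_n)$. By compactness of $T$ we may extract a subsequence $\alpha_n\to\alpha^*$; Berge's maximum theorem (whose hypotheses are furnished by the joint continuity of $f$ on $T\times V$) shows that $M$ is upper hemicontinuous, so $\alpha^*\in M(\phi)$. Joint continuity of $\nabla_\phi f$ then gives $\nabla D(\phi_n)=\nabla_\phi f(\alpha_n,\phi_n)\to\nabla_\phi f(\alpha^*,\phi)=\nabla D(\phi)$, and since the limit is independent of the chosen subsequence, the whole sequence converges.

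The delicate step in this plan is the subdifferential inclusion together with the implication ``singleton subdifferential at $\phi$ implies differentiability at $\phi$''. For locally Lipschitz $D$ this last implication is a classical fact (the Clarke subdifferential agrees with the convex hull of limits of classical gradients, and collapsing to a singleton forces Fr\'echet differentiability). Once this is in place, the three pieces above assemble into the theorem.
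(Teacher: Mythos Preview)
The paper does not prove this theorem; it is quoted verbatim as Theorem~10.31 from \cite{Rockafellar} and used as a black box. So there is no ``paper's own proof'' to compare against.

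That said, your plan is correct and is essentially the standard route to this result. The three pieces (uniform gradient bound on $T\times\bar B$ giving local Lipschitz continuity, Rademacher's theorem giving almost-everywhere differentiability, and the Danskin/Berge combination giving continuity of $\nabla D$ on $\Delta$) are exactly the ingredients used in Rockafellar's own treatment. One small clarification: in the third step you do not actually need the implication ``singleton Clarke subdifferential $\Rightarrow$ differentiability'', since you are already working on $\Delta$ where differentiability is assumed. What you need (and implicitly use) is the converse direction: at a point $\phi\in\Delta$, the Danskin directional-derivative formula $D'(\phi;v)=\max_{\alpha\in M(\phi)}\langle\nabla_\phi f(\alpha,\phi),v\rangle$ must be linear in $v$, which forces all $\nabla_\phi f(\alpha,\phi)$ with $\alpha\in M(\phi)$ to coincide with $\nabla D(\phi)$. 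Once that identification is in hand, your Berge/compactness argument for continuity of $\nabla D$ along $\Delta$ goes through cleanly.
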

The stated result can be ensured by simple hypotheses on the model $p_{\phi}$ and the function $\varphi$. Unfortunately, since the estimated divergence $\hat{D}_{\varphi}(p_{\phi},p_{\phi_T})$ will not be everywhere differentiable, we can no longer talk about the stationarity of $\hat{D}_{\varphi}(p_{\phi},p_{\phi_T})$ at a limit point of the sequence $\phi^k$ generated for example by (\ref{eqn:DivergenceAlgo}). We therefore, use the notion of subgradients. Indeed, when a function $g$ is not differentiable, a necessary condition for $x_0$ to be a local minimum of $g$ is that $0\in\partial g(x_0)$ and it becomes sufficient whenever $g$ is proper convex\footnote{See \citep{Rockafellar} theorem 10.1.}. Moreover, as $g$ becomes differentiable at $x_0$, then $\nabla g(x_0)\in\partial g(x_0)$ with equality if and only if $g$ is $\mathcal{C}^1$. In other words, proving that $0\in\partial \hat{D}_{\varphi}(p_{\hat{\phi}},p_{\phi_T})$ means that $\hat{\phi}$ is a sort of a \emph{generalized stationary point} of $\phi\mapsto\hat{D}_{\varphi}(p_{\phi},p_{\phi_T})$.\\
We will be studying later on in paragraphs (\ref{Example:CauchyZeroLoc}) and (\ref{Example:DivergenceMixture}) examples where we verify with more details the previous conditions and see the resulting consequences on the sequence $(\phi^k)_k$.
\subsection{A result of everywhere differentiability: Level-bounded functions}\label{para:LevelBoundFun}
\begin{definition}[\citep{Rockafellar} Chap 1.]
A function $f:\mathbb{R}^d\times\mathbb{R}^d\rightarrow\bar{\mathbb{R}}$ with values $f(\alpha,\phi)$ is (upper) level-bounded in $\alpha$ locally uniformly in $\phi$ if for each $\phi_0$ and $a\in\mathbb{R}$ there is a neighborhood $V$ for $\phi_0$ such that the set $\{(\alpha,\phi)|\phi\in V, f(\alpha,\phi)\geq a\}$ is bounded in $\mathbb{R}^d\times\mathbb{R}^d$ for every $a\in\mathbb{R}$.
\end{definition}
For a fixed $\phi$, the level-boundedness property corresponds to having $f(\alpha,\phi)\rightarrow -\infty$ as $\|\alpha\|\rightarrow\infty$. In order to state the main result for this case, let $\phi_0$ be a point at which we need to study continuity and differentiability of $\phi\mapsto\sup_{\alpha}{f(\alpha,\phi)}$. A first result gives sufficient conditions under which the supremum function is continuous. We state it as follows:
\begin{theorem}[\citep{Rockafellar} Theorem 1.17] \label{theo:LevelBoundedCont}
Let $f:\mathbb{R}^n\times\mathbb{R}^m\rightarrow\bar{\mathbb{R}}$ be an upper semicontinuous function. Suppose that $f(\alpha,\phi)$ is level-bounded in $\alpha$ locally uniformly in $\phi$. For function $\phi\mapsto \sup_{\alpha}f(\alpha,\phi)$ to be continuous at $\phi_0$, a sufficient condition is the existence of $\alpha_0\in\argmax_{\alpha}f(\alpha,\phi_0)$ such that $\phi\mapsto f(\alpha_0,\phi)$ is continuous at $\phi_0$.
\end{theorem}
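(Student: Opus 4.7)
I would split continuity of $M(\phi):=\sup_\alpha f(\alpha,\phi)$ at $\phi_0$ into the two half-inequalities and treat them separately, since level-boundedness plays no role for the lower bound but is the essential ingredient for the upper bound.

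For the lower bound, $M(\phi)\geq f(\alpha_0,\phi)$ for every $\phi$ by definition of the supremum, so the assumed continuity of $\phi\mapsto f(\alpha_0,\phi)$ at $\phi_0$ combined with $f(\alpha_0,\phi_0)=M(\phi_0)$ yields immediately
\[
\liminf_{\phi\to\phi_0} M(\phi) \;\geq\; \liminf_{\phi\to\phi_0} f(\alpha_0,\phi) \;=\; f(\alpha_0,\phi_0) \;=\; M(\phi_0).
\]
In particular $M(\phi_0)$ is real-valued and $M$ is lower semicontinuous at $\phi_0$.

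For the upper bound, take a sequence $\phi_k\to\phi_0$ along which $M(\phi_k)\to L:=\limsup_k M(\phi_k)$; from the step above, $L\geq M(\phi_0)>-\infty$. I would then invoke the locally uniform level-boundedness to fix a neighbourhood $V$ of $\phi_0$ and a threshold $a<M(\phi_0)$ such that $S_a:=\{(\alpha,\phi):\phi\in V,\ f(\alpha,\phi)\geq a\}$ is bounded in $\mathbb{R}^n\times\mathbb{R}^m$. For $k$ large, $\phi_k\in V$ and $M(\phi_k)>a$, so I can select $\alpha_k$ with $f(\alpha_k,\phi_k)\geq \max\{M(\phi_k)-1/k,\,a+1\}$ when $M(\phi_k)<+\infty$, and $f(\alpha_k,\phi_k)\geq k$ otherwise. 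In either case $(\alpha_k,\phi_k)\in S_a$ for large $k$, so $(\alpha_k)$ is bounded; by Bolzano--Weierstrass extract a convergent subsequence $\alpha_{k_j}\to\bar\alpha$. Joint upper semicontinuity of $f$ at $(\bar\alpha,\phi_0)$ then yields
\[
L \;=\; \lim_{j} f(\alpha_{k_j},\phi_{k_j}) \;\leq\; f(\bar\alpha,\phi_0) \;\leq\; M(\phi_0),
\]
closing the loop. The hypothetical case $L=+\infty$ is ruled out along the way, since the same inequality would force $f(\bar\alpha,\phi_0)=+\infty$, contradicting $M(\phi_0)<+\infty$.

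The key obstacle, as expected, is the upper semicontinuity. Without local uniform level-boundedness, the near-maximizers $\alpha_k$ could escape to infinity as $\phi_k\to\phi_0$, and joint upper semicontinuity of $f$, being a local property, could never be brought to bear. Level-boundedness is precisely what confines the relevant maximizing sequences to a bounded region and makes the Bolzano--Weierstrass extraction legitimate; notably, no convexity, concavity, or smoothness hypothesis on $f$ is needed.
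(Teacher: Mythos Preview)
The paper does not provide its own proof of this statement; it is quoted as Theorem~1.17 from Rockafellar and Wets' \emph{Variational Analysis} and used as a black box. Your argument is the standard one and is essentially correct: lower semicontinuity of $M$ at $\phi_0$ from the pointwise bound $M(\phi)\geq f(\alpha_0,\phi)$ together with continuity of $\phi\mapsto f(\alpha_0,\phi)$, and upper semicontinuity from a Bolzano--Weierstrass extraction of near-maximizers, enabled by the uniform level-boundedness and closed via joint upper semicontinuity of $f$.

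Two minor imprecisions, neither of which affects the substance. First, your claim that $M(\phi_0)$ is real-valued is not justified by the hypotheses as stated (functions are $\bar{\mathbb{R}}$-valued); the case $M(\phi_0)=+\infty$ should be dismissed separately, which is trivial since the lower-bound inequality already forces $M(\phi)\to+\infty$. Second, the threshold $a+1$ in your selection of $\alpha_k$ may be unattainable when $M(\phi_k)\in(a,a+1)$; the clean fix is to take $\alpha_k$ with $f(\alpha_k,\phi_k)>\max\{M(\phi_k)-1/k,\,a\}$, both strict inequalities being feasible once $M(\phi_k)>a$.
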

Since in general, we do not know exactly where the supremum will be, one proves the continuity of $\phi\mapsto f(\alpha,\phi)$ for every $\alpha$.\\
A Further result about continuity and differentiability of the supremum function can also be stated. Define, at first, the sets $Y(\phi_0)$ and $Y_{\infty}(\phi_0)$ as follows:
\begin{eqnarray*}
Y(\phi_0) & = & \bigcup_{\alpha\in\argsup_{\beta} f(\beta,\phi_0)} M(\alpha,\phi_0),\quad \text{for } M(\alpha,\phi_0) = \{a|(0,a)\in\partial f(\alpha,\phi_0)\} \\
Y_{\infty}(\phi_0) & = & \bigcup_{\alpha\in\argsup_{\beta} f(\beta,\phi_0)} M_{\infty}(\alpha,\phi_0),\quad \text{for } M_{\infty}(\alpha,\phi_0) = \{a|(0,a)\in\partial^{\infty} f(\alpha,\phi_0)\}
\end{eqnarray*}
where $\partial^{\infty} f$ is the horizon subgradient, see Definition 8.3 (c) in \cite{Rockafellar}. We avoided to mention the definition here in order to keep the text clearer. Furthermore, in the whole chapter, the horizon subgradient will always be equal to the set $\{0\}$.\\
\begin{theorem}[Corollary 10.14 in \citep{Rockafellar}] 
\label{theo:levelbounded}
For a proper upper semicontinuous function $f:\mathbb{R}^d\times\mathbb{R}^d\rightarrow \bar{\mathbb{R}}$ such that $f(\alpha,\phi)$ is level-bounded in $\alpha$ locally uniformly in $\phi$, and for $\phi_0\in$ dom $\sup_{\alpha}f(\alpha,\phi)$:
\begin{itemize}
\item[(a)] If $Y_{\infty}(\phi_0) = \{0\}$, then $\phi\mapsto\sup_{\alpha}f(\alpha,\phi)$ is strictly continuous at $\phi_0$;
\item[(b)] if $Y(\phi_0) = \{a\}$ too, then\footnote{In the statement of the corollary in \citep{Rockafellar}, the supremum function becomes strictly differentiable, but to avoid extra vocabularies, we replaced it with an equivalent property.} $\phi\mapsto\sup_{\alpha}f(\alpha,\phi)$ is $\mathcal{C}^1$ at $\phi_0$ with $\nabla \sup_{\alpha}f(\alpha,\phi) = a$.
\end{itemize}
\end{theorem}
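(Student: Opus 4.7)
The plan is to exploit the parametric optimization machinery underlying Rockafellar's variational analysis. First I would establish that, under local uniform level-boundedness together with upper semicontinuity of $f$, the argmax mapping
\[
\phi \mapsto A(\phi) := \argmax_{\alpha} f(\alpha,\phi)
\]
is nonempty and compact-valued in a neighborhood of $\phi_0$, and is outer semicontinuous there (this is essentially the parametric Weierstrass theorem, e.g.\ Theorem 1.17 already cited as Theorem \ref{theo:LevelBoundedCont}). As a by-product one obtains that $g(\phi) := \sup_{\alpha} f(\alpha,\phi)$ is upper semicontinuous at $\phi_0$ and finite on a neighborhood of $\phi_0$, since $\phi_0 \in \mathrm{dom}\, g$ and level-boundedness gives local boundedness of $g$ from above.

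The heart of the proof is the parametric subgradient calculus for a max function (Theorem 10.13 in Rockafellar), which under the above hypotheses yields the inclusions
\[
\partial g(\phi_0) \subset Y(\phi_0), \qquad \partial^{\infty} g(\phi_0) \subset Y_{\infty}(\phi_0).
\]
The proof of this inclusion proceeds by relating subgradients of $g$ to normal vectors of the epigraph of $f$ at points $(\alpha,\phi_0,g(\phi_0))$ with $\alpha \in A(\phi_0)$, using the projection-of-normal-cones formula together with the horizon cone characterization $(0,a) \in \partial^{\infty} f(\alpha,\phi_0)$ governing unboundedness phenomena. This is where level-boundedness is crucial: it rules out escape to infinity of near-maximizers and guarantees that the projection of the normal cone is closed.

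For part~(a), the assumption $Y_{\infty}(\phi_0) = \{0\}$ combined with the above inclusion forces $\partial^{\infty} g(\phi_0) = \{0\}$. By the Lipschitz characterization of subdifferentials (Theorem 9.13 in Rockafellar), a proper function whose horizon subgradient at a point reduces to $\{0\}$ is strictly continuous at that point, which gives (a). For part~(b), the additional hypothesis $Y(\phi_0) = \{a\}$ together with the inclusion forces $\partial g(\phi_0) \subset \{a\}$; strict continuity from (a) ensures $\partial g(\phi_0)$ is nonempty, hence equal to the singleton $\{a\}$. A strictly continuous function whose (general) subgradient at a point is a singleton is strictly differentiable there with gradient equal to that singleton (Theorem 9.18 in Rockafellar), and continuous differentiability on a neighborhood follows from outer semicontinuity of $\partial g$.

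The principal obstacle is the subgradient inclusion for the parametric max function, which is not elementary: it requires matching the normal cone to $\mathrm{epi}\,g$ with the image under projection of the normal cone to $\mathrm{epi}\,f$, and controlling limits of sequences $(\alpha^n,\phi^n)$ with $\alpha^n$ near-maximizers for $\phi^n$. Once that inclusion is in hand, the translation from ``subgradient structure'' to ``continuity and smoothness'' is a routine application of the Lipschitz and differentiability criteria built from subgradients. Hence my plan concentrates the work into one delicate variational step, after which parts (a) and (b) fall out as successive strengthenings of the same inclusion.
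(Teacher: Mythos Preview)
The paper does not prove this statement at all: it is quoted verbatim as Corollary~10.14 from \cite{Rockafellar} and used as a black box in the subsequent analysis. There is therefore no ``paper's own proof'' to compare against. Your sketch is a faithful outline of the argument as it appears in Rockafellar and Wets --- reducing to the parametric subgradient inclusion of their Theorem~10.13, then invoking Theorems~9.13 and~9.18 to convert the conditions $\partial^{\infty} g(\phi_0)=\{0\}$ and $\partial g(\phi_0)=\{a\}$ into strict continuity and strict differentiability respectively --- so in that sense your approach is correct and essentially the standard one.
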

In our examples, $f$ will be a continuous function and even $\mathcal{C}^1(\Phi\times\Phi)$. This implies that $\partial^{\infty} f(\alpha, \phi) = \{0\}$ and $\partial f(\alpha,\phi) = \{\nabla f(\alpha,\phi)\}$, see Exercise 8.8 in \cite{Rockafellar}. Hence, $Y_{\infty}(\phi_0) = \{0\}$ whatever $\phi_0$ in $\Phi$. Moreover $M(\alpha,\phi_0) = \{\nabla_{\phi} f(\alpha,\phi_0)\}$ so that $Y(\phi_0)=\bigcup\{\nabla_{\phi} f(\alpha,\phi_0)\}$ and the union is on the set of suprema of $\alpha\mapsto f(\alpha,\phi_0)$. If $f(\alpha,\phi)$ is level-bounded in $\alpha$ locally uniformly in $\phi$, then the supremum function becomes strictly continuous. Moreover, if the function $f$ has the same gradient with respect to $\phi$ for all the suprema of $\alpha\mapsto f(\alpha,\phi)$, then $\sup_{\alpha} f(\alpha,\phi)$ becomes continuously differentiable. This is for example the case when function $\alpha\mapsto f(\alpha,\phi)$ has a unique global supremum for a fixed $\phi$, which is for example the case of a strictly concave function (with respect to $\alpha$ for a fixed $\phi$).\\

\begin{example}
Let $(p_{\phi})_{\phi}$ be an exponential model defined by:
\[p_{\phi}(x) = \exp\left[T(x).\phi - C(\phi)\right].\]
Let $\varphi(t)=t\log(t)-t+1$. The dual representation of the divergence (formula (\ref{eqn:DivergenceDef})) is then given by:
\[\hat{D}_{\varphi}(p_{\phi},p_{\phi*}) = \sup_{\alpha} \left\{\mathbb{E}_{p_{\phi}}[T(X)].(\phi-\alpha)+C(\alpha)- C(\phi) - \frac{1}{n}\sum_{i=1}^n{e^{T(y_i).(\phi-\alpha)+C(\alpha)-C(\phi)}}\right\}+1.\]
In order to prove that the optimized function is level-bounded in $\alpha$ locally uniformly in $\phi$, we take a bounded open neighborhood around $\phi$, and we prove that the optimized function tends to $-\infty$ as $\|\alpha\|$ tends to infinity. For example, for the Gaussian case with the mean $\mu$ as the parameter of interest, we have:
\[\hat{D}_{\varphi}(p_{\mu},p_{\mu*}) = \sup_{\beta\in\mathbb{R}} \frac{1}{2}\mu^2-\mu\beta + \frac{1}{2}\beta^2 - \frac{1}{n}\sum_{i=1}^n{e^{y_i(\mu-\beta)+\frac{1}{2}\beta^2-\frac{1}{2}\mu^2}}+1.\]
It is clear that $e^{\beta^2}$ is the dominant term at $\infty$, and by putting $\mu$ in a bounded interval, the limit of the optimized function when $\beta$ tends to infinity is easily calculated and equals $-\infty$.\\
For the exponential case $p_a(x)=ae^{-ax}$ with $a>0$ the parameter of interest, we have:
\[\hat{D}_{\varphi}(p_{a},p_{a^*}) = \sup_{b>0} \frac{b}{a}-\log(b)+\log(a) - \frac{1}{n}\sum_{i=1}^n{e^{-y_i(a-b)-\log(b)+\log(a)}}.\]
Here again, the dominant term is $e^{y_i b}$ at infinity. Since an observation of an exponential law is positive, the limit when $b$ tends to infinity is hence easily calculated and equals $-\infty$.	\\
For part (a) of Theorem \ref{theo:levelbounded} to be verified, we still need to prove that $Y_{\infty}(\phi_0)=\{0\}$. However, this is verified because the optimized function, here, is continuously differentiable, so that it is strictly continuous. This implies that $Y_{\infty}(\phi_0)=\{0\}$. Hence, $\hat{D}_{\varphi}(p_{\phi},p_{\phi^T})$ is strictly continuous.\\
To prove that it is also $\mathcal{C}^1$, we need to prove that $Y(\phi_0)$ contains but one element. First of all, since the optimized function is differentiable, $Y(\phi_0) = \bigcup\{\nabla_{\phi}f(\alpha,\phi_0)\}$. The union is over the set $\{\argmax_{\alpha} f(\alpha,\phi_0)\}$. Let's calculate the jacobian matrix with respect to $\alpha$ and see when it might be definite negative, and hence function $\alpha\mapsto f(\alpha,\phi)$ would be strictly concave and would only have one maximum whenever it exists\footnote{Notice that for both the Gaussian and exponential example, the derivative passes by zero as will be explained later on. Therefore, strict concavity would imply the existence of one maximum.}. If for any $\phi$, it is definite negative, this should be sufficient to prove the claim.\\
\begin{eqnarray*}
\nabla_{\alpha} f(\alpha,\phi) & = & -\mathbb{E}_{p_{\phi}}[T(X)] + \nabla C(\alpha) - \frac{1}{n}\sum_{i=1}^n{(\nabla C(\alpha) - T(y_i))e^{T(y_i).(\phi-\alpha)+C(\alpha)-C(\phi)}} \\
J_f(\alpha,\phi) & = & J_C(\alpha) - \frac{1}{n}\sum_{i=1}^n{\left(J_C(\alpha) + (\nabla C(\alpha) - T(y_i)).(\nabla C(\alpha) - T(y_i))^t\right)e^{T(y_i).(\phi-\alpha)+C(\alpha)-C(\phi)}}
\end{eqnarray*}
For the Gaussian example, we have:
\begin{eqnarray*}
\frac{\partial f}{\partial \beta}(\beta,\mu) & =  & -\mu +\beta - \frac{1}{n}\sum_{i=1}^n{(\beta-y_i)e^{y_i(\mu-\beta)+\frac{1}{2}\beta^2-\frac{1}{2}\mu^2}} \\
\frac{\partial^2 f}{\partial \beta^2}(\beta,\mu) & = & 1-\frac{1}{n}\sum_{i=1}^n{\left(1+(\beta-y_i)^2\right)e^{y_i(\mu-\beta)+\frac{1}{2}\beta^2-\frac{1}{2}\mu^2}}
\end{eqnarray*}
The gradient has at least one zero since it is continuous and has $+\infty$ limit at $-\infty$ and $-\infty$ limit at $+\infty$. The second derivative with respect to $\beta$ is unfortunately not necessarily negative so that function $\beta\mapsto f(\beta,\mu)$ is not concave. An analytical study of function $f$ seems very difficult. Let's simulate a 10-sample of the standard Gaussian probability law, and let a mathematical tool such as Mathematica do the painting. Table (\ref{tab:GaussDerivEx}) shows the dataset used.\\
\begin{table}[h]
\centering
\begin{tabular}{|c|c|c|c|c|c|c|c|c|c|c|}
$y_i$ & 0.644 & -3.144 & -1.029 & -0.367 & 0.353 & -0.704 & 1.148 & 0.674 & 0.148 & -0.721\\
\hline
\end{tabular}
\caption{A 10-sample Gaussian dataset.}
\label{tab:GaussDerivEx}
\end{table}

\noindent We make a 3D plot for $f(\beta,\mu)$ in two parts. The first part for $\mu>0$ and the second is for $\mu<0$ to get a clear view about what happens when $\mu$ changes, see figure (\ref{fig:GaussExample}). Although the second derivative with respect to $\beta$ is not necessarily negative, function $\beta\mapsto f(\beta,\mu)$ has only one maximum point. We conclude that function $\sup_{\beta} f(\beta,\mu)$ is continuously derivable \emph{for the dataset provided in table (\ref{tab:GaussDerivEx}).}\\
\begin{figure}[ht]
\centering
\includegraphics[scale = 0.5]{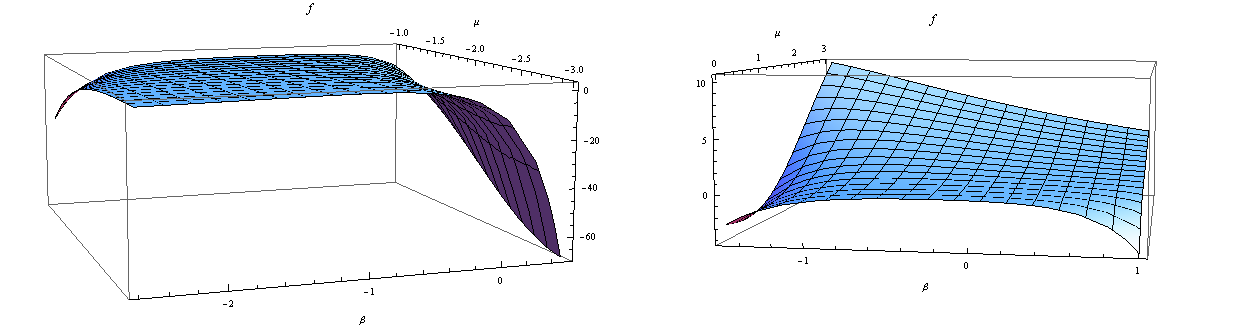}
\caption{A 3D plot of function $f$ in the Gaussian example shows that there is only one maximum for each value of $\mu$.}
\label{fig:GaussExample}
\end{figure}

\noindent For the exponential case, we have:
\begin{eqnarray*}
\frac{\partial f}{\partial b} & = & \frac{1}{a}-\frac{1}{b} - \frac{1}{n}\sum_{i=1}^n{\left(by_i-\frac{1}{b}\right)e^{-y_i(a-b)-\log(b)+\log(a)}}\\
\frac{\partial^2 f}{\partial b^2} & = & \frac{1}{b^2} - \frac{1}{n}\sum_{i=1}^n{\left(y_i+\frac{1}{b^2}+\left(by_i-\frac{1}{b}\right)^2\right)e^{-y_i(a-b)-\log(b)+\log(a)}}
\end{eqnarray*}
We similarly have the same previous problem. The second derivative is not necessarily negative, however, when simulating a dataset and plotting function $f$, we may conclude that it has only one maximum whenever $a$ is fixed. Hence function $\sup_b f(a,b)$ becomes continuously derivable.\\
These two examples, although very simple, shows the difficulty in proving differentiability. Our proof, earlier, depends heavily on \emph{graphical tools}, which may still appear not totally convincing. Another undesirable aspect is that even if we admit the \emph{graphical} indications about the existence of a unique maximum, the final conclusion stays related to the dataset we are working on. An analytical proof for previous examples remain an open problem for further work.
\end{example}
\begin{remark}[An implicit function point of view]
One may try in case $f(\alpha,\phi)$ is concave to calculate the gradient with respect to $\alpha$. At the supremum, whenever it exists, we have $\nabla_{\alpha}f(\alpha,\phi)=0$. The solution in $\alpha$ is \emph{a priori} a function of $\phi$, say $\alpha(\phi)$. The implicit function theorem provides a way to prove the existence of such function and gives sufficient conditions for continuity and differentiability. Notice that a \emph{global} version of the implicit function theorem is needed here in order to define $\alpha(\phi)$ on the whole $\Phi$ and not locally. As soon as we have such a function, we may write $f(\alpha(\phi),\phi) = \hat{D}_{\varphi}(p_{\phi},p_{\phi_T})$, and the divergence becomes differentiable using a simple chain rule. The problem with this solution is that the conditions to ensure a global function $\alpha(\phi)$ are not simple, see for example \citep{Cristea} and the references therein.
\end{remark}

%
%

\section{Convergence properties}\label{sec:Proofs}
We adapt the ideas given in \citep{Tseng} to develop a suitable proof for our proximal algorithm. We present some propositions which show how according to some possible situations one may prove convergence of the algorithms defined by recurrences (\ref{eqn:DivergenceAlgo}) and (\ref{eqn:DivergenceAlgoSimp1}, \ref{eqn:DivergenceAlgoSimp2}). Let $\phi^0=(\lambda^0,\theta^0)$ be a given initialization for the parameters, and define the following set
\begin{equation}
\Phi^0 = \{\phi\in\Phi: \hat{D}_{\varphi}(p_{\phi},p_{\phi_T})\leq \hat{D}_{\varphi}(\phi^0,\phi_T)\}
\label{eqn:SetPhis0}
\end{equation} 
where $\hat{D}_{\varphi}(\phi,\phi_T)$ is any estimator of the $\varphi-$divergence among (\ref{eqn:BeranEstimator}), (\ref{eqn:BasuLindsayDiv}), (\ref{eqn:DivergenceDef}) or (\ref{eqn:EmpiricalNewDualForm}). We suppose that $\Phi^0$ is a subset of $int(\Phi)$. The idea of defining such a set in this context is inherited from the paper of \citep{Wu} which provided the first \emph{correct proof} of convergence for the EM algorithm. Before going any further, we recall the following definition of a (generalized) stationary point.\\ 
\begin{definition}
Let $f:\mathbb{R}^d\rightarrow\mathbb{R}$ be a real valued function. If $f$ is differentiable at a point $\phi^*$ such that $\nabla f(\phi^*)=0$, we then say that $\phi^*$ is a stationary point of $f$. If $f$ is not differentiable at $\phi^*$ but the subgradient of $f$at $\phi^*$, say $\partial f(\phi^*)$, exists such that $0\in\partial f(\phi^*)$, then $\phi^*$ is called a generalized stationary point of $f$.
\end{definition}
Using continuity and differentiability assumptions on both $\hat{D}_{\varphi}$ and $D_{\psi}$, we will prove the following results:
\begin{itemize}
\item[$\bullet$] For both algorithms (\ref{eqn:DivergenceAlgo}) and (\ref{eqn:DivergenceAlgoSimp1}, \ref{eqn:DivergenceAlgoSimp2}), if $\Phi^0$ is closed and $\{\phi^{k+1}-\phi^k\}\rightarrow 0$, then any limit point of $(\phi^k)_k$ is a stationary point of the objective function $\hat{D}_{\varphi}(\phi,\phi_T)$;
\item For algorithm (\ref{eqn:DivergenceAlgo}), if we only have $\Phi^0$ is compact, then any limit point is a stationary point of the objective function;
\item For algorithm (\ref{eqn:DivergenceAlgoSimp1}, \ref{eqn:DivergenceAlgoSimp2}), if $\Phi^0$ is compact and $\|\lambda^{k+1}-\lambda^k\|\rightarrow 0$, then any limit point is a stationary point of the objective function;
\item[$\bullet$] For both algorithms (\ref{eqn:DivergenceAlgo}) and (\ref{eqn:DivergenceAlgoSimp1}, \ref{eqn:DivergenceAlgoSimp2}), if $\Phi^0$ is compact and $D_{\psi}(\phi,\phi')>0$ iff $\phi\neq\phi'$, then $\{\phi^{k+1}-\phi^k\}\rightarrow 0$ and any limit point is a stationary point of the objective function.
\item In case the objective function $\phi\mapsto\hat{D}_{\varphi}(p_{\phi}|p_{\phi_T})$ is not continuously differentiable, we prove previous points for algorithm (\ref{eqn:DivergenceAlgo}) with generalized stationary point instead of stationary point.
\end{itemize}
We will be using the following assumptions which will be checked in several examples later on.
\begin{itemize}
\item[A0.] Functions $\phi\mapsto\hat{D}_{\varphi}(p_{\phi}|p_{\phi_T}), D_{\psi}$ are lower semicontinuous;
\item[A1.] Functions $\phi\mapsto\hat{D}_{\varphi}(p_{\phi}|p_{\phi_T}), D_{\psi}$ and $\nabla_1 D_{\psi}$ are defined and continuous on, respectively, $\Phi, \Phi\times\Phi$ and $\Phi\times\Phi$;
\item[AC.] $\nabla \hat{D}_{\varphi}(p_{\phi}|p_{\phi_T})$ is defined and continuous on $\Phi$;
\item[A2.] $\Phi^0$ is a compact subset of int($\Phi$);
\item[A3.] $D_{\psi}(\phi,\bar{\phi})>0$ for all $\bar{\phi}\neq \phi \in \Phi$.
\end{itemize}
Recall also the assumptions on functions $h_i$ defining $D_{\psi}$. We suppose that $h_i(x|\phi)>0, dx-a.e.$, and $\psi(t)=0$ iff $t=1$. Besides $\psi'(t)=0$ iff $t = 1$.\\
Concerning assumptions A1 and AC, we have previously discussed the analytical properties of $\hat{D}_{\varphi}(p_{\phi}|p_{\phi_T})$ after Section \ref{sec:TwoStepAlgo} and in Section \ref{sec:AnalyticalDiscuss}. In what concerns $D_{\psi}$, continuity and differentiability can be obtained merely by fulfilling Lebesgue theorems conditions. For example, if $h_i(x,\phi)$ is continuous and bounded uniformly away from 0 independently of $\phi$, then continuity is guaranteed as soon as $\psi$ is continuous. If we also suppose that $\nabla_{\phi} h_i(x,\phi)$ exists, is continuous and is uniformly bounded independently of $\phi$, then as soon as $\psi$ is continuously differentiable, $D_{\psi}$ becomes continuously differentiable. For assumption A2, there is no universal method. Still, in all the examples that will be discussed later, we use the fact that the inverse image of a closed set by a continuous function is closed. Boundedness is usually ensured using a \emph{suitable} choice of $\phi^0$. Finally, assumption A3 is checked using Lemma 2 proved in \citep{Tseng} which we restate here.
\begin{lemma}[\cite{Tseng} Lemma 2] \label{Lem:Tseng}
Suppose $\psi$ to be a continuous non negative function such that $\psi(t)=0$ iff $t=1$. For any $\phi$ and $\phi'$ in $\Phi$, if $h_i(x|\phi)\neq h_i(x|\phi')$ for some $i\in\{1,\cdots,n\}$ and some $x\in int(X)$ at which both $h_i(.|\phi)$ and $h_i(.|\phi')$ are continuous, then $D_{\psi}(\phi,\phi')>0$.
\end{lemma}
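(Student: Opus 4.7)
The plan is to exploit the pointwise assumption at the distinguished point $x$ to produce an open neighborhood on which the integrand defining $D_{\psi}$ is bounded away from zero, which immediately forces the integral to be strictly positive. First I would reduce to a single summand: every term in the sum $\frac{1}{n}\sum_{i=1}^n\int_{\mathcal{X}}\psi(h_i(y|\phi)/h_i(y|\phi'))\,h_i(y|\phi')\,dy$ is nonnegative because $\psi\geq 0$ and $h_i(\cdot|\phi')\geq 0$, so it is enough to prove strict positivity of the single $i$-th integral provided by the hypothesis.

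Next I would pass from a pointwise inequality to a neighborhood inequality. Using the standing assumption that $h_i(\cdot|\phi')>0$ almost everywhere, combined with continuity of $h_i(\cdot|\phi')$ at the interior point $x$, one deduces $h_i(x|\phi')>0$: otherwise continuity would force $h_i(\cdot|\phi')$ to vanish on a whole neighborhood of $x$, contradicting a.e.\ positivity. Hence the ratio $r(y):=h_i(y|\phi)/h_i(y|\phi')$ is well-defined and continuous on some open ball $V\subset\mathrm{int}(\mathcal{X})$ around $x$, and by assumption $r(x)\neq 1$. Continuity of $r$ at $x$ gives $\delta>0$ and a possibly smaller ball $V'\subset V$ on which $|r(y)-1|\geq\delta$ uniformly.

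Then I would pull in the structural properties of $\psi$. Since $\psi$ is continuous, nonnegative, and vanishes only at $1$, and since $r(\overline{V'})$ is compact and contained in $\{t:|t-1|\geq\delta\}$, the constant $c:=\inf_{y\in V'}\psi(r(y))$ is strictly positive. A similar continuity/positivity argument gives a uniform lower bound $m>0$ for $h_i(\cdot|\phi')$ on $V'$. Therefore
\[
\int_{\mathcal{X}}\psi\!\left(\frac{h_i(y|\phi)}{h_i(y|\phi')}\right)h_i(y|\phi')\,dy \;\geq\; \int_{V'} c\,m\,dy \;=\; c\,m\,\mathrm{vol}(V') \;>\; 0,
\]
because $V'$ is an open ball inside $\mathrm{int}(\mathcal{X})$ and thus has positive measure. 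Summing over $i$ yields $D_{\psi}(\phi,\phi')>0$.

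The only delicate point is the well-definedness of the ratio $r$ at $x$, i.e.\ ruling out $h_i(x|\phi')=0$; I would handle this, as sketched above, by combining continuity at $x$ with the a.e.\ positivity of $h_i$. If one wishes to avoid the a.e.\ assumption altogether, one can adopt the standard $\varphi$-divergence convention $\psi(+\infty)\cdot 0 := \lim_{t\to\infty}\psi(t)/t$ and treat separately the case $h_i(x|\phi')=0<h_i(x|\phi)$, in which the integrand blows up on a neighborhood of $x$ and makes the integral strictly positive (in fact $+\infty$) by an analogous continuity argument. Everything else in the proof is routine continuity and measure-theoretic bookkeeping.
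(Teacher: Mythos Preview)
Your proof is correct and follows the natural argument; the paper itself does not prove this lemma but simply cites it as Lemma~2 of \cite{Tseng}, whose proof proceeds along the same lines you describe. One small imprecision worth tightening: the hypothesis only gives continuity of $h_i(\cdot|\phi)$ and $h_i(\cdot|\phi')$ \emph{at} the point $x$, not on a neighborhood, so you cannot assert that $r$ is continuous on $V$ or that $r(\overline{V'})$ is compact; instead, continuity at $x$ alone yields a ball $V'$ on which $r(y)\in[r(x)-\eta,r(x)+\eta]$, and this compact interval (which excludes $1$) is where you take the infimum of $\psi$.
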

\noindent In section (\ref{sec:Examples}), we present three different examples; a two-component Gaussian mixture, a two-component Weibull mixture and a Cauchy model. We will see that the Cauchy example verifies assumption A3. However, the Gaussian mixture does not seem to verify it. Indeed, the same fact stays true for any mixture of the exponential family.\\
We start by providing some general facts about the sequence $(\phi^k)_k$ and its existence. We also prove convergence of the sequence $(\hat{D}_{\varphi}(p_{\phi^k}|p_{\phi_T}))_k$.
\begin{remark}
All results concerning algorithm (\ref{eqn:DivergenceAlgo}) are proved even when assumption AC is not fulfilled. We give proofs using the subgradient of the estimated $\varphi-$divergence. In the case of the two-step algorithm (\ref{eqn:DivergenceAlgoSimp1}, \ref{eqn:DivergenceAlgoSimp2}), it was not possible and thus remains an open problem. The difficulty resides in manipulating the \emph{partial} subgradients with respect to $\lambda$ and $\theta$ which cannot be handled in a similar way to the partial derivatives.
\end{remark}
\begin{remark}
Convergence properties are proved without using the special form of the estimated $\varphi-$divergence. Thus, our theoretical approach applies to any optimization problem whose objective is to minimize a function $\phi\mapsto D(\phi)$. For example, our approach can be applied on density power divergences (\cite{BasuMPD}), Bregman divergences, S-divergences (\cite{GoshSDivergence}), etc.
\end{remark}
\begin{proposition}\label{prop:DecreaseDphi}
We assume that recurrences (\ref{eqn:DivergenceAlgo}) and (\ref{eqn:DivergenceAlgoSimp1}, \ref{eqn:DivergenceAlgoSimp2}) are well defined in $\Phi$. For both algorithms, the sequence $(\phi^{k})_k$ verifies the following properties:
\begin{itemize}
\item[(a)] $\hat{D}_{\varphi}(p_{\phi^{k+1}}|p_{\phi_T})\leq \hat{D}_{\varphi}(p_{\phi^k}|p_{\phi_T})$;
\item[(b)] $\forall k, \phi^k \in \Phi^0$;
\item[(c)] Suppose that assumptions A0 and A2 are fulfilled, then the sequence $(\phi^k)_k$ is defined and bounded. Moreover, the sequence $\left(\hat{D}_{\varphi}(p_{\phi^k}|p_{\phi_T})\right)_k$ converges.
\end{itemize}
\end{proposition}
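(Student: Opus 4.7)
The plan is to establish the three claims in order, relying only on the variational definition of the iterates together with the elementary facts $D_{\psi}\geq 0$ and $D_{\psi}(\phi,\phi)=0$ (which follows from $\psi\geq 0$ and $\psi(1)=0$).

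For claim (a), I would argue directly from the optimality defining each iterate. Under the one-step recursion (\ref{eqn:DivergenceAlgo}), since $\phi^{k+1}$ minimizes $\phi\mapsto \hat{D}_{\varphi}(p_{\phi},p_{\phi_T}) + D_{\psi}(\phi,\phi^k)$, testing against $\phi=\phi^k$ and using $D_{\psi}(\phi^k,\phi^k)=0$ gives
\[
\hat{D}_{\varphi}(p_{\phi^{k+1}},p_{\phi_T}) + D_{\psi}(\phi^{k+1},\phi^k) \;\leq\; \hat{D}_{\varphi}(p_{\phi^k},p_{\phi_T}),
\]
and (a) follows because $D_{\psi}(\phi^{k+1},\phi^k)\geq 0$. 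For the two-step recursion (\ref{eqn:DivergenceAlgoSimp1}, \ref{eqn:DivergenceAlgoSimp2}) the same idea applies twice: the first half-step, compared against the admissible point $(\lambda^k,\theta^k)=\phi^k$, yields $\hat{D}_{\varphi}(p_{\lambda^{k+1},\theta^k},p_{\phi_T}) + D_{\psi}((\lambda^{k+1},\theta^k),\phi^k) \leq \hat{D}_{\varphi}(p_{\phi^k},p_{\phi_T})$; the second half-step, compared against $(\lambda^{k+1},\theta^k)$, gives $\hat{D}_{\varphi}(p_{\phi^{k+1}},p_{\phi_T}) + D_{\psi}(\phi^{k+1},\phi^k) \leq \hat{D}_{\varphi}(p_{\lambda^{k+1},\theta^k},p_{\phi_T}) + D_{\psi}((\lambda^{k+1},\theta^k),\phi^k)$. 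Chaining the two inequalities and dropping the non-negative proximal term yields (a). Observe that the crucial feature making this telescoping work is that the proximal center $\phi^k$ is the same in both half-steps.

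Claim (b) is then an immediate induction: $\phi^0\in\Phi^0$ by definition, and if $\phi^k\in\Phi^0$, then (a) gives $\hat{D}_{\varphi}(p_{\phi^{k+1}},p_{\phi_T})\leq \hat{D}_{\varphi}(p_{\phi^k},p_{\phi_T})\leq \hat{D}_{\varphi}(p_{\phi^0},p_{\phi_T})$, so $\phi^{k+1}\in\Phi^0$.

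For claim (c), under A0 and A2 I would first argue that each iterate is well-defined by restricting the optimization to a compact set. At step $k$, any candidate minimizer $\phi$ satisfies $\hat{D}_{\varphi}(p_{\phi},p_{\phi_T}) + D_{\psi}(\phi,\phi^k)\leq \hat{D}_{\varphi}(p_{\phi^k},p_{\phi_T})$ (by comparison with $\phi^k$ itself), so, using $D_{\psi}\geq 0$, we deduce $\hat{D}_{\varphi}(p_{\phi},p_{\phi_T})\leq \hat{D}_{\varphi}(p_{\phi^k},p_{\phi_T})\leq \hat{D}_{\varphi}(p_{\phi^0},p_{\phi_T})$, hence $\phi\in\Phi^0$. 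The search thus reduces to
\[
C_k \;=\; \bigl\{\phi\in\Phi^0 : \hat{D}_{\varphi}(p_{\phi},p_{\phi_T}) + D_{\psi}(\phi,\phi^k) \leq \hat{D}_{\varphi}(p_{\phi^k},p_{\phi_T})\bigr\},
\]
which is non-empty (it contains $\phi^k$), closed by lower semicontinuity (A0), and contained in the compact set $\Phi^0$ (A2); hence $C_k$ is compact. Weierstrass' theorem for lsc functions on compact sets then guarantees that the infimum is attained, so $\phi^{k+1}$ exists; the analogous argument applies to each half-step of the two-step recursion. Boundedness of $(\phi^k)_k$ is then an immediate consequence of (b) and A2. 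Finally, the scalar sequence $\left(\hat{D}_{\varphi}(p_{\phi^k},p_{\phi_T})\right)_k$ is non-increasing by (a) and bounded below, since by A0 and A2 the lsc function $\phi\mapsto \hat{D}_{\varphi}(p_{\phi},p_{\phi_T})$ attains a finite infimum on the compact set $\Phi^0$; hence it converges.

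No real obstacle is anticipated: everything reduces to combining the defining variational inequality with $D_{\psi}\geq 0$ and $D_{\psi}(\phi,\phi)=0$. The only point requiring care is the correct chaining of the two half-steps in the simplified algorithm, which hinges on the proximal center being held fixed at $\phi^k$ throughout the whole iteration.
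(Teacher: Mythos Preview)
Your proposal is correct and follows essentially the same approach as the paper: both exploit the variational inequality together with $D_{\psi}\geq 0$ and $D_{\psi}(\phi,\phi)=0$ for (a), deduce (b) by induction, and for (c) restrict the search to a compact sublevel set inside $\Phi^0$ on which the lower semicontinuous objective attains its infimum, then conclude convergence from monotonicity plus a finite lower bound. The only cosmetic difference is that, for the two-step algorithm in (c), the paper spells out the compactness argument for the coordinate slices $\Lambda_k=\{\lambda:(\lambda,\theta^k)\in\Phi^0\}$ and $\Theta_k=\{\theta:(\lambda^{k+1},\theta)\in\Phi^0\}$ separately, whereas you correctly note that the analogous restriction works for each half-step without writing it out.
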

The proof of this proposition is differed to Appendix \ref{AppendProximal:Prop1}.
\noindent The interest of Proposition \ref{prop:DecreaseDphi} is that the objective function is ensured, under mild assumptions, to decrease alongside the sequence $(\phi^k)_k$. This permits to build a stop criterion for the algorithm since in general there is no guarantee that the whole sequence $(\phi^k)_k$ converges. It may also continue to fluctuate in a neighborhood of an optimum. The following result provides a first characterization about the properties of the limit of the sequence $(\phi^k)_k$ as (generalized) a stationary point of the estimated $\varphi-$divergence. The proof is differed to Appendix \ref{AppendProximal:Prop2}.
\begin{proposition}
\label{prop:StationaryPhiDiff}
Suppose that A1 is verified, and assume that $\Phi^0$ is closed and $\{\phi^{k+1}-\phi^k\}\rightarrow 0$.
\begin{itemize}
\item[(a)] For both algorithms (\ref{eqn:DivergenceAlgo}) and (\ref{eqn:DivergenceAlgoSimp1},\ref{eqn:DivergenceAlgoSimp2}), if AC is verified, then the limit of every convergent subsequence is a stationary point of $\hat{D}_{\varphi}(.|p_{\phi_T})$;
\item[(b)] For the first algorithm (\ref{eqn:DivergenceAlgo}), if $\hat{D}_{\varphi}(.|p_{\phi_T})$ is not differentiable,  then the limit of every convergent subsequence is a "generalized" stationary point of $\hat{D}_{\varphi}(.|p_{\phi_T})$, i.e. zero belongs to the subgradient of $\hat{D}_{\varphi}(.|p_{\phi_T})$ calculated at the limit point;
\end{itemize}
\end{proposition}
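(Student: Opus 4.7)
The plan is to exploit the fact that each iterate satisfies a first-order optimality condition, pass to the limit along the convergent subsequence, and then use that the proximal gradient $\nabla_1 D_\psi(\phi,\phi)$ vanishes on the diagonal.

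More precisely, for algorithm (\ref{eqn:DivergenceAlgoPreVersion}) under assumption AC, the minimum defining $\phi^{k+1}$ is interior to $\Phi$ (since $\Phi^0\subset\mathrm{int}(\Phi)$ by Proposition \ref{prop:DecreaseDphi}(b)) and the objective is differentiable, so the necessary first-order condition reads
\[
\nabla \hat{D}_{\varphi}(p_{\phi^{k+1}},p_{\phi_T}) + \nabla_1 D_{\psi}(\phi^{k+1},\phi^{k}) \;=\; 0.
\]
The first key observation is that $\nabla_1 D_\psi(\phi,\phi)=0$ identically: differentiating under the integral in the definition (\ref{eqn:DivergenceClasses}) gives an integrand containing the factor $\psi'(h_i(x|\phi)/h_i(x|\phi))=\psi'(1)=0$. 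Now let $\phi^{k_j}\to\phi^\infty$ be a convergent subsequence with limit in $\Phi^0$ (closed by hypothesis). Because $\phi^{k+1}-\phi^k\to 0$, the shifted subsequence $\phi^{k_j+1}$ also converges to $\phi^\infty$. Continuity of $\nabla\hat{D}_\varphi$ (from AC) and of $\nabla_1 D_\psi$ (from A1) allow us to pass to the limit in the above equation and obtain $\nabla \hat{D}_{\varphi}(p_{\phi^\infty},p_{\phi_T})+\nabla_1 D_\psi(\phi^\infty,\phi^\infty)=0$, i.e.\ $\nabla \hat{D}_{\varphi}(p_{\phi^\infty},p_{\phi_T})=0$, which is the claim of part~(a).

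For the two-step algorithm (\ref{eqn:DivergenceAlgoSimp1Intro},\ref{eqn:DivergenceAlgoSimp2Intro}), the same strategy is applied to the two partial optimality conditions
\[
\nabla_\lambda \hat{D}_\varphi(p_{\lambda^{k+1},\theta^k},p_{\phi_T}) + \nabla_\lambda D_\psi\big((\lambda^{k+1},\theta^k),\phi^k\big)=0,
\]
\[
\nabla_\theta \hat{D}_\varphi(p_{\lambda^{k+1},\theta^{k+1}},p_{\phi_T}) + \nabla_\theta D_\psi\big((\lambda^{k+1},\theta^{k+1}),\phi^k\big)=0.
\]
The hypothesis $\phi^{k+1}-\phi^k\to 0$ forces both $\lambda^{k_j+1}-\lambda^{k_j}\to 0$ and $\theta^{k_j+1}-\theta^{k_j}\to 0$, so in each equation the two arguments of $D_\psi$ tend to the same limit $\phi^\infty$; continuity of the partial gradients then yields that both partial gradients of $\hat{D}_\varphi$ at $\phi^\infty$ vanish, hence $\nabla\hat{D}_\varphi(p_{\phi^\infty},p_{\phi_T})=0$.

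For part (b), AC is dropped, so the optimality condition must be written in terms of the subgradient. Since $D_\psi(\cdot,\phi^k)$ is continuously differentiable and $\phi^{k+1}$ is a (global, hence local) minimizer of the sum, the sum rule for general subgradients (\cite{Rockafellar}, Exercise 8.8 and Corollary 10.9) gives
\[
-\nabla_1 D_\psi(\phi^{k+1},\phi^k)\;\in\;\partial \hat{D}_\varphi(p_{\phi^{k+1}},p_{\phi_T}).
\]
The main obstacle here is to pass to the limit inside the subdifferential. This is precisely what the definition of the general subgradient (Definition 8.3(b) in the excerpt) is built for: it suffices to check that $\phi^{k_j+1}\to\phi^\infty$ and $\hat{D}_\varphi(p_{\phi^{k_j+1}},p_{\phi_T})\to \hat{D}_\varphi(p_{\phi^\infty},p_{\phi_T})$, the latter being granted by continuity from A1. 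Combined with $\nabla_1 D_\psi(\phi^{k_j+1},\phi^{k_j})\to 0$ by continuity and the vanishing-on-the-diagonal identity, this yields $0\in\partial \hat{D}_\varphi(p_{\phi^\infty},p_{\phi_T})$, which is the generalized stationarity asserted in~(b). The whole argument is thus short once one notices the identity $\nabla_1 D_\psi(\phi,\phi)\equiv 0$; the only delicate point is the closedness of the general subgradient under convergence, which is built into its definition rather than requiring an extra hypothesis.
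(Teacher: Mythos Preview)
Your proof is correct and follows essentially the same route as the paper's: the first-order optimality condition at $\phi^{k+1}$, the observation that $\nabla_1 D_\psi(\phi,\phi)=0$ via $\psi'(1)=0$, the shifted-subsequence argument from $\phi^{k+1}-\phi^k\to 0$, and passage to the limit by continuity. For part~(b) there is a minor stylistic difference: the paper invokes the outer semicontinuity of $\partial \hat D_\varphi$ (Rockafellar, Proposition~8.7) to pass to the limit, whereas you appeal directly to Definition~8.3(b); note that the latter literally requires the approximating vectors to lie in the \emph{regular} subgradient $\hat{\partial}$, which is indeed what Fermat's rule at a local minimizer plus the smooth sum rule (Exercise~8.8) gives you, so your citation could be sharpened accordingly.
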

\noindent Assumption $\{\phi^{k+1}-\phi^k\}\rightarrow 0$ used in Proposition \ref{prop:StationaryPhiDiff} is not easy to be checked unless one has a close formula of $\phi^k$. This is the case of the EM algorithm applied on a Gaussian mixture, see \cite{Tseng} Section 5. In general, we prove that $\{\phi^{k+1}-\phi^k\}\rightarrow 0$ by imposing an identifiability assumption over the proximal term, see \cite{ChretienHero} Lemma 5 or \cite{Tseng} Lemma 1. The following proposition is a mere adaptation of such results to the context of $\varphi-$divergences and the two-step algorithm. The proof is differed to Appendix \ref{AppendProximal:Prop3}. We will present later a result which does not need such assumption.

\begin{proposition}
\label{prop:PhiDiffConverge}
For both algorithms defined by (\ref{eqn:DivergenceAlgo}) and (\ref{eqn:DivergenceAlgoSimp1},\ref{eqn:DivergenceAlgoSimp2}), assume A1, A2 and A3 verified, then $\{\phi^{k+1}-\phi^k\}\rightarrow 0$. Thus, by proposition 2 (according to whether AC is verified or not) implies that any limit point of the sequence $\phi^k$ is a (generalized)\footnote{The case where AC is not verified is only proved for the first algorithm (\ref{eqn:DivergenceAlgo})} stationary point of $\hat{D}_{\varphi}(.|p_{\phi_T})$.
\end{proposition}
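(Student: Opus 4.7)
The plan is to reduce the claim to a one-line consequence of the monotonicity statement in Proposition \ref{prop:DecreaseDphi}, combined with the identifiability assumption A3 and a standard compactness–continuity argument; the conclusion on stationarity will then be immediate from Proposition \ref{prop:StationaryPhiDiff}.

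First I would extract from the defining recurrence the key one-step inequality. For algorithm (\ref{eqn:DivergenceAlgo}), plugging $\phi=\phi^k$ into the definition of $\phi^{k+1}$ as a minimizer and using $D_{\psi}(\phi^k,\phi^k)=0$ (which follows from $\psi(1)=0$ and the fact that $h_i(\cdot|\phi^k)/h_i(\cdot|\phi^k)\equiv 1$) yields
\[
\hat{D}_{\varphi}(p_{\phi^{k+1}},p_{\phi_T}) + D_{\psi}(\phi^{k+1},\phi^k) \;\leq\; \hat{D}_{\varphi}(p_{\phi^k},p_{\phi_T}).
\]
For the two-step version (\ref{eqn:DivergenceAlgoSimp1})–(\ref{eqn:DivergenceAlgoSimp2}), the same inequality follows by chaining the two defining sub-steps: the $\lambda$-step, tested at $\lambda=\lambda^k$, gives a descent bound to the intermediate point $(\lambda^{k+1},\theta^k)$, and the $\theta$-step, tested at $\theta=\theta^k$, carries us from that intermediate point to $\phi^{k+1}$, again using that $D_{\psi}((\lambda^{k+1},\theta^k),\phi^k)\geq 0$ and $D_{\psi}(\phi^k,\phi^k)=0$.

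Next, under A1 and A2, Proposition \ref{prop:DecreaseDphi} (c) guarantees that $(\phi^k)_k\subset\Phi^0$ with $\Phi^0$ compact, and that the nonincreasing sequence $\hat{D}_{\varphi}(p_{\phi^k},p_{\phi_T})$ converges. Rearranging the inequality above and passing to the limit then forces
\[
D_{\psi}(\phi^{k+1},\phi^k) \;\longrightarrow\; 0.
\]
From here I would argue by contradiction: suppose $\phi^{k+1}-\phi^k\not\to 0$. Then there exist $\varepsilon>0$ and indices $k_j\uparrow\infty$ with $\|\phi^{k_j+1}-\phi^{k_j}\|\geq \varepsilon$. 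By A2, the sequences $(\phi^{k_j})$ and $(\phi^{k_j+1})$ both lie in the compact set $\Phi^0$, so after a further subextraction I can assume $\phi^{k_j}\to \phi^*$ and $\phi^{k_j+1}\to \phi^{**}$ with $\phi^*\neq \phi^{**}$. The continuity of $D_{\psi}$ on $\Phi\times\Phi$ (assumption A1) then gives $D_{\psi}(\phi^{**},\phi^*)=\lim_j D_{\psi}(\phi^{k_j+1},\phi^{k_j})=0$, which contradicts A3.

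Hence $\phi^{k+1}-\phi^k\to 0$. Combined with compactness of $\Phi^0$ (so that $\Phi^0$ is in particular closed) this puts us exactly in the hypotheses of Proposition \ref{prop:StationaryPhiDiff}, which yields that every limit point of $(\phi^k)$ is a stationary point of $\hat{D}_{\varphi}(\cdot,p_{\phi_T})$ when AC holds, and a generalized stationary point (zero lies in the subgradient) for algorithm (\ref{eqn:DivergenceAlgo}) when AC fails. The only mildly delicate point in the whole plan is checking the combined descent inequality cleanly for the two-step scheme—one must be careful that the proximal term in (\ref{eqn:DivergenceAlgoSimp2}) is anchored at $\phi^k$ and not at the intermediate $(\lambda^{k+1},\theta^k)$, so that the telescoping of the two sub-steps indeed produces a single bound controlling $D_{\psi}(\phi^{k+1},\phi^k)$; once this is written down, the rest of the argument is a direct compactness plus continuity routine.
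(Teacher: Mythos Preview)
Your proposal is correct and follows essentially the same route as the paper's proof: both derive the one-step descent inequality $\hat{D}_{\varphi}(p_{\phi^{k+1}},p_{\phi_T}) + D_{\psi}(\phi^{k+1},\phi^k)\leq \hat{D}_{\varphi}(p_{\phi^{k}},p_{\phi_T})$ (for the two-step scheme by chaining the two sub-steps exactly as you indicate), combine it with the convergence of $\hat{D}_{\varphi}(p_{\phi^k},p_{\phi_T})$ from Proposition~\ref{prop:DecreaseDphi}(c) to force $D_{\psi}(\phi^{k+1},\phi^k)\to 0$, and then argue by contradiction via compactness of $\Phi^0$, continuity of $D_{\psi}$, and A3. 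Your presentation is in fact slightly cleaner in that you isolate the step $D_{\psi}(\phi^{k+1},\phi^k)\to 0$ before launching the subsequence argument, whereas the paper passes to the limit along the extracted subsequence directly; but the ingredients and logic are identical.
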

We can go further in exploring the properties of the sequence $(\phi^k)_k$, but we need to impose more assumptions. The following corollary provides a convergence result of the \emph{whole} sequence and not only some subsequence. The convergence is also towards a local minimum as soon as the estimated divergence is locally strictly convex. The proof of the following result is differed to Appendix \ref{AppendProximal:Cor}.
\begin{corollary} 
\label{Cor:TotalConverg}
Under the assumptions of Proposition 3, the set of accumulation points of $(\phi^k)_k$ is a connected compact set. Moreover, if $\hat{D}(p_{\phi},p_{\phi_T})$ is strictly convex in a neighborhood of a limit point\footnote{This assumption can be replaced by local strict convexity since \emph{a priori}, we have no idea where might find a limit point of the sequence $(\phi^k)_k$.} of the sequence $(\phi^k)_k$, then the whole sequence $(\phi^k)_k$ converges to a local minimum of $\hat{D}(p_{\phi},p_{\phi_T})$.
\end{corollary}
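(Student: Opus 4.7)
The plan is to handle the two assertions separately, relying on Proposition 3 (which gives $\phi^{k+1}-\phi^k\to 0$, boundedness of $(\phi^k)_k$ in the compact $\Phi^0$, and that every accumulation point is a (generalized) stationary point of $\hat D(\cdot,p_{\phi_T})$).

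For the first assertion, denote by $\Omega$ the set of accumulation points of $(\phi^k)_k$. I would first observe that $\Omega\subset\Phi^0$ is automatically bounded (assumption A2) and closed (as the intersection $\bigcap_{N}\overline{\{\phi^k:k\geq N\}}$), hence compact. The core of this part is the standard Ostrowski-type argument: assume for contradiction that $\Omega=A_1\sqcup A_2$ with $\mathrm{dist}(A_1,A_2)=3\delta>0$, and form the open $\delta$-neighborhoods $N_1,N_2$ of $A_1,A_2$. The sequence visits each $N_i$ infinitely often; since $\|\phi^{k+1}-\phi^k\|\to 0$, for $k$ large enough every single step has length $<\delta$, so between two consecutive visits to $N_1$ and $N_2$ the sequence must pass through the compact ``annular'' set $\Phi^0\setminus(N_1\cup N_2)$. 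Extracting a convergent subsequence from these transit points yields a limit point outside $A_1\cup A_2=\Omega$, contradicting the definition of $\Omega$. Hence $\Omega$ is connected.

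For the second assertion, let $\phi^*\in\Omega$ be a limit point such that $\hat D(\cdot,p_{\phi_T})$ is strictly convex on an open neighborhood $V$ of $\phi^*$. By Proposition 3, $\phi^*$ is a (generalized) stationary point, i.e. $0\in\partial\hat D(\phi^*,p_{\phi_T})$ (or $\nabla\hat D(\phi^*,p_{\phi_T})=0$ when AC holds). Strict convexity on $V$ combined with the first-order optimality condition $0\in\partial\hat D(\phi^*,p_{\phi_T})$ forces $\phi^*$ to be the unique minimizer of $\hat D$ on $V$, and therefore the unique (generalized) stationary point of $\hat D$ on $V$. Consequently $V\cap\Omega=\{\phi^*\}$: the point $\phi^*$ is \emph{isolated} in $\Omega$. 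Since $\{\phi^*\}$ is then both open and closed in $\Omega$ and $\Omega$ is connected and nonempty, I conclude $\Omega=\{\phi^*\}$. A bounded sequence with a unique accumulation point converges, so $\phi^k\to\phi^*$, and strict convexity on $V$ certifies $\phi^*$ as a local minimum of $\hat D(\cdot,p_{\phi_T})$.

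The main obstacle I anticipate is the treatment of the non-differentiable case (where AC fails, relevant only for algorithm \eqref{eqn:DivergenceAlgo}): one must be careful that the first-order condition $0\in\partial\hat D(\phi^*,p_{\phi_T})$, combined with strict convexity on $V$, still implies uniqueness of the minimizer on $V$. This follows from the standard fact that for a proper strictly convex function on a convex neighborhood, the subgradient characterization of minimizers yields a unique minimum; but it is worth recording explicitly because Proposition 3 only delivers a \emph{generalized} stationary point. A secondary technical point is ensuring $V$ can be chosen small enough that $V\subset\Phi$ and that strict convexity is available throughout $V$; this is harmless under A2 since $\phi^*\in\Phi^0\subset\operatorname{int}(\Phi)$, so such a neighborhood exists inside the open set $\Phi$.
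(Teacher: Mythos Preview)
Your proof is correct and follows essentially the same route as the paper. For the first assertion the paper simply invokes Ostrowski's Theorem 28.1 (bounded sequence with $\phi^{k+1}-\phi^k\to 0$ has a connected compact limit set), whereas you spell out the standard annular-passage argument that proves it; for the second assertion both you and the paper argue that strict convexity forces the chosen limit point to be isolated in $\Omega$, then use connectedness to collapse $\Omega$ to a singleton and conclude convergence. Your version is actually more explicit than the paper's about \emph{why} strict convexity yields isolation---you invoke the stationarity conclusion of Proposition~3 and the uniqueness of (generalized) stationary points of a strictly convex function on $V$, while the paper states the isolation step without unpacking it---so your treatment of the non-differentiable case is, if anything, cleaner.
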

\noindent Proposition \ref{prop:PhiDiffConverge} although provides a general solution to prove that $\{\phi^{k+1}-\phi^k\}\rightarrow 0$, the identifiability assumption over the proximal term is hard to be fulfilled. It is not verified in the most simple mixtures such as a two component Gaussian mixture, see Section (\ref{Example:GaussMix}).\\
This was the reason behind our next result. We prove that we do not need to assume identifiability of the proximal term in order to prove that any convergent subsequence of $(\phi^k)_k$ is a (generalized) stationary point of the estimated $\varphi-$divergence.\\
A similar idea was employed in \citep{ChretienHeroProxGener} who studied a proximal algorithm for the log-likelihood function with a relaxation parameter\footnote{A sequence of decreasing positive numbers multiplied by the proximal term.}. Their work however requires that the log-likelihood has $-\infty$ limit as $\|\phi\|\rightarrow\infty$ which is not verified on several mixture models (e.g. the Gaussian mixture model). Our result treat the problem from another approach based on the introduction of the set $\Phi^0$.\\
\begin{proposition}\label{prop:NewRes}
Assume A1, AC and A2 verified. For the algorithm defined by (\ref{eqn:DivergenceAlgo}), any convergent subsequence converges to a stationary point of the objective function $\phi\rightarrow\hat{D}(p_{\phi},p_{\phi_T})$. If AC is dropped, then 0  belongs to the subgradient of $\phi\mapsto\hat{D}(p_{\phi},p_{\phi_T})$ at the limit point.
\end{proposition}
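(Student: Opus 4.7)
The plan is to exploit the convergence of the sequence $(\hat{D}_\varphi(p_{\phi^k}, p_{\phi_T}))_k$ established in Proposition~\ref{prop:DecreaseDphi}(c) together with the compactness of $\Phi^0$ to bypass the identifiability assumption A3 entirely. Let me fix a convergent subsequence $\phi^{k_n} \to \phi^*$. Since $\phi^{k_n+1} \in \Phi^0$ which is compact (A2), I can extract a further subsequence (still indexed by $n$) so that $\phi^{k_n+1} \to \bar\phi$ for some $\bar\phi \in \Phi^0$. The issue is that without A3 we cannot conclude $\bar\phi = \phi^*$, so the strategy is instead to show that $\phi^*$ itself satisfies the same optimality condition as $\phi^{k+1}$ with $\phi^k$ replaced by $\phi^*$.

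First, plugging $\phi = \phi^k$ into the minimization defining $\phi^{k+1}$ gives the basic inequality
\[
\hat{D}_\varphi(p_{\phi^{k+1}}, p_{\phi_T}) + D_\psi(\phi^{k+1}, \phi^k) \le \hat{D}_\varphi(p_{\phi^k}, p_{\phi_T}),
\]
which combined with Proposition~\ref{prop:DecreaseDphi}(c) yields $D_\psi(\phi^{k+1}, \phi^k) \to 0$. Passing to the limit along our subsequence using A1 gives $D_\psi(\bar\phi, \phi^*) = 0$. Next I would write the full optimality of $\phi^{k_n+1}$, namely
\[
\hat{D}_\varphi(p_{\phi^{k_n+1}}, p_{\phi_T}) + D_\psi(\phi^{k_n+1}, \phi^{k_n}) \le \hat{D}_\varphi(p_\phi, p_{\phi_T}) + D_\psi(\phi, \phi^{k_n}), \quad \forall \phi \in \Phi,
\]
and pass to the limit in $n$, using the continuity assumption A1, to obtain $\hat{D}_\varphi(p_{\bar\phi}, p_{\phi_T}) \le \hat{D}_\varphi(p_\phi, p_{\phi_T}) + D_\psi(\phi, \phi^*)$ for every $\phi \in \Phi$.

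The crucial observation is then that $\hat{D}_\varphi(p_{\phi^{k_n+1}}, p_{\phi_T})$ and $\hat{D}_\varphi(p_{\phi^{k_n}}, p_{\phi_T})$ share the same limit $L$ (by Proposition~\ref{prop:DecreaseDphi}(c), the whole sequence $\hat{D}_\varphi(p_{\phi^k}, p_{\phi_T})$ converges), so $\hat{D}_\varphi(p_{\bar\phi}, p_{\phi_T}) = \hat{D}_\varphi(p_{\phi^*}, p_{\phi_T}) = L$. Plugging this equality into the inequality above, and using $D_\psi(\phi^*, \phi^*) = 0$, one reads that
\[
\hat{D}_\varphi(p_{\phi^*}, p_{\phi_T}) + D_\psi(\phi^*, \phi^*) \le \hat{D}_\varphi(p_\phi, p_{\phi_T}) + D_\psi(\phi, \phi^*), \quad \forall \phi \in \Phi,
\]
so $\phi^*$ is itself a global minimizer of $\phi \mapsto \hat{D}_\varphi(p_\phi, p_{\phi_T}) + D_\psi(\phi, \phi^*)$ (even though the algorithm did not select it at any iteration). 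Since $\phi^* \in \mathrm{int}(\Phi)$ by A2, the first-order optimality condition at $\phi^*$ reads $\nabla \hat{D}_\varphi(p_{\phi^*}, p_{\phi_T}) + \nabla_1 D_\psi(\phi^*, \phi^*) = 0$. A direct computation of $\nabla_1 D_\psi$, combined with $\psi'(1) = 0$ and differentiation under the integral sign (justified by A1), shows $\nabla_1 D_\psi(\phi^*, \phi^*) = 0$, so $\nabla \hat{D}_\varphi(p_{\phi^*}, p_{\phi_T}) = 0$ as claimed.

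When AC is dropped, the same argument runs verbatim except that I would replace $\nabla \hat{D}_\varphi$ by the subgradient and invoke a sum rule: since $D_\psi(\cdot, \phi^*)$ is continuously differentiable at $\phi^*$ with zero gradient (by A1 and $\psi'(1)=0$), the subgradient of the sum at $\phi^*$ is $\partial \hat{D}_\varphi(p_{\phi^*}, p_{\phi_T})$, and the optimality condition $0 \in \partial(\hat{D}_\varphi + D_\psi(\cdot, \phi^*))(\phi^*)$ yields $0 \in \partial \hat{D}_\varphi(p_{\phi^*}, p_{\phi_T})$. The main conceptual obstacle is precisely the identification $\hat{D}_\varphi(p_{\bar\phi}, p_{\phi_T}) = \hat{D}_\varphi(p_{\phi^*}, p_{\phi_T})$: it is the decoupling of the monotone convergence of the objective values from any convergence of the iterates themselves that removes the need for A3 and replaces the previously required control $\phi^{k+1} - \phi^k \to 0$.
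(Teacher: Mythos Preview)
Your proof is correct and takes a genuinely different route from the paper's.

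The paper works directly with the first-order optimality condition along the iterates: from $\nabla\hat D_\varphi(p_{\phi^{k+1}},p_{\phi_T}) + \nabla_1 D_\psi(\phi^{k+1},\phi^k)=0$ it passes to the limit to get $\nabla\hat D_\varphi(p_{\tilde\phi},p_{\phi_T}) + \nabla_1 D_\psi(\tilde\phi,\bar\phi)=0$, where $\tilde\phi$ is the limit of the given subsequence and $\bar\phi$ the limit of the shifted one. The crux is then to show $\nabla_1 D_\psi(\tilde\phi,\bar\phi)=0$ \emph{even though $\tilde\phi\neq\bar\phi$}: the paper unpacks the integral structure of $D_\psi$, deduces from $D_\psi(\tilde\phi,\bar\phi)=0$ that $h_i(x|\tilde\phi)=h_i(x|\bar\phi)$ $dx$-a.e., and then computes the gradient explicitly using $\psi'(1)=0$. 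In the non-differentiable case the paper instead invokes outer semicontinuity of the subgradient map.

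You lift the argument one level, from first-order to zeroth-order: you keep the full minimality inequality for $\phi^{k_n+1}$, pass to the limit, and then use the equality $\hat D_\varphi(p_{\bar\phi},p_{\phi_T})=\hat D_\varphi(p_{\phi^*},p_{\phi_T})$ (both equal the common limit $L$ of the monotone objective values) to conclude that $\phi^*$ itself minimizes $\phi\mapsto\hat D_\varphi(p_\phi,p_{\phi_T})+D_\psi(\phi,\phi^*)$. The first-order condition is then read at the diagonal, where $\nabla_1 D_\psi(\phi^*,\phi^*)=0$ is immediate. This is cleaner: it never opens the integral defining $D_\psi$, and would go through for any proximal term satisfying $D_\psi\ge 0$, $D_\psi(\phi,\phi)=0$, and $\nabla_1 D_\psi(\phi,\phi)=0$. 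Incidentally, your Step~3 ($D_\psi(\bar\phi,\phi^*)=0$) is not even needed, since $D_\psi\ge 0$ already gives $\hat D_\varphi(p_{\phi^*},p_{\phi_T})\le \hat D_\varphi(p_{\bar\phi},p_{\phi_T})+D_\psi(\bar\phi,\phi^*)$. The paper's approach, by contrast, keeps closer contact with the iterative optimality conditions, which is what one would naturally try to mimic for the two-step variant in Proposition~\ref{prop:NewResTwoStep}.
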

\noindent The proof is differed to Appendix \ref{AppendProximal:Prop4} We could not perform the same idea on the two-step algorithm (\ref{eqn:DivergenceAlgoSimp1},\ref{eqn:DivergenceAlgoSimp2}) without assuming that the difference between two consecutive terms of either the sequence of weights $(\lambda^k)_k$ or the sequence of form parameters $(\theta^k)_k$ converges to zero. The proof is differed to Appendix \ref{AppendProximal:Prop5}.
\begin{proposition} \label{prop:NewResTwoStep}
Assume A1 and A2 verified. For the algorithm defined by (\ref{eqn:DivergenceAlgoSimp1},\ref{eqn:DivergenceAlgoSimp2}). If $\|\theta^{k+1} - \theta^k\|\rightarrow 0$, then any convergent subsequence $(\phi^{N(k)})_k$ converges to a stationary point of the objective function $\phi\rightarrow\hat{D}(p_{\phi},p_{\phi_T})$.
\end{proposition}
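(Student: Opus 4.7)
My plan is to mirror the strategy of Proposition \ref{prop:NewRes}, exploiting the sandwich inequality produced by the two sub-minimizations together with the monotone decrease of the objective, and then reading off stationarity from the first-order conditions of each sub-iteration. The essential twist compared with the one-step algorithm is that the $\theta$-step is anchored at the freshly updated $\lambda^{k+1}$, so the natural limits of its optimality condition involve an accumulation point of $(\lambda^{k+1})$ rather than $\lambda^{*}$; the hypothesis $\|\theta^{k+1}-\theta^{k}\|\to 0$ is tailored precisely to close this gap.

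To set up the subsequences, I start from $\phi^{N(k)}=(\lambda^{N(k)},\theta^{N(k)})\to\phi^{*}=(\lambda^{*},\theta^{*})$. Since $\phi^{k+1}\in\Phi^{0}$ (Proposition \ref{prop:DecreaseDphi}) and $\Phi^{0}$ is compact (A2), I extract along a further subsequence, still indexed by $N(k)$, a limit $\lambda^{N(k)+1}\to\bar\lambda$; the hypothesis $\|\theta^{k+1}-\theta^{k}\|\to 0$ then forces $\theta^{N(k)+1}\to\theta^{*}$, and hence $\phi^{N(k)+1}\to(\bar\lambda,\theta^{*})$. The chain of inequalities
\[
\hat{D}_{\varphi}(p_{\phi^{k+1}},p_{\phi_{T}}) + D_{\psi}(\phi^{k+1},\phi^{k}) \leq \hat{D}_{\varphi}(p_{\lambda^{k+1},\theta^{k}},p_{\phi_{T}}) + D_{\psi}((\lambda^{k+1},\theta^{k}),\phi^{k}) \leq \hat{D}_{\varphi}(p_{\phi^{k}},p_{\phi_{T}}),
\]
combined with the convergence of $(\hat{D}_{\varphi}(p_{\phi^{k}},p_{\phi_{T}}))_{k}$ to some $L$ (Proposition \ref{prop:DecreaseDphi}) and continuity (A1), forces the three quantities to coincide with $L$ in the limit and both proximal terms to vanish. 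I obtain $\hat{D}_{\varphi}(p_{\bar\lambda,\theta^{*}},p_{\phi_{T}})=\hat{D}_{\varphi}(p_{\phi^{*}},p_{\phi_{T}})$ and $D_{\psi}((\bar\lambda,\theta^{*}),\phi^{*})=0$. By Lemma \ref{Lem:Tseng} this equality forces $h_{i}(\cdot|\bar\lambda,\theta^{*})=h_{i}(\cdot|\phi^{*})$ almost everywhere for every $i$, and in the mixture setting (\ref{eqn:MixModelDef}) an elementary computation on the ratios $\bar\lambda_{x}/\lambda^{*}_{x}$ (which must be constant in both $x$ and $i$) combined with $\sum_{x}\bar\lambda_{x}=\sum_{x}\lambda^{*}_{x}=1$ identifies $\bar\lambda=\lambda^{*}$.

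With $\bar\lambda=\lambda^{*}$, the first-order conditions close the proof. For the $\lambda$-step, the global minimality of $\lambda^{k+1}$ evaluated at $\lambda^{*}+tv$ ($t>0$ small, $v$ arbitrary) gives, after the limit along $N(k)$ and using the identities above,
\[
\hat{D}_{\varphi}(p_{\phi^{*}},p_{\phi_{T}}) \leq \hat{D}_{\varphi}(p_{\lambda^{*}+tv,\theta^{*}},p_{\phi_{T}}) + D_{\psi}((\lambda^{*}+tv,\theta^{*}),\phi^{*}).
\]
Dividing by $t$ and letting $t\to 0^{+}$, with $\nabla_{1}D_{\psi}(\phi^{*},\phi^{*})=0$ (a consequence of $\psi'(1)=0$), yields $\nabla_{\lambda}\hat{D}_{\varphi}(p_{\phi^{*}},p_{\phi_{T}})\cdot v\geq 0$ for every $v$, hence $\nabla_{\lambda}\hat{D}_{\varphi}(p_{\phi^{*}},p_{\phi_{T}})=0$. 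The same recipe applied to the $\theta$-step produces
\[
\hat{D}_{\varphi}(p_{\phi^{*}},p_{\phi_{T}}) \leq \hat{D}_{\varphi}(p_{\bar\lambda,\theta^{*}+tw},p_{\phi_{T}}) + D_{\psi}((\bar\lambda,\theta^{*}+tw),\phi^{*}),
\]
and substituting $\bar\lambda=\lambda^{*}$ followed by the same limiting argument gives $\nabla_{\theta}\hat{D}_{\varphi}(p_{\phi^{*}},p_{\phi_{T}})=0$. Together the two partial gradients show that $\phi^{*}$ is a stationary point.

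The hard part will be the identification $\bar\lambda=\lambda^{*}$. Without it, the limiting inequality in the $\theta$-step keeps $\bar\lambda$ inside $\hat{D}_{\varphi}$ on the right-hand side and the derivative cannot be pulled back to $\phi^{*}$; this is exactly where the hypothesis $\|\theta^{k+1}-\theta^{k}\|\to 0$ is essential, because it is what makes $D_{\psi}((\bar\lambda,\theta^{*}),\phi^{*})=0$ come out of the sandwich, after which mixture identifiability does the rest. The symmetric hypothesis $\|\lambda^{k+1}-\lambda^{k}\|\to 0$ would work more directly, since then the $\theta$-step limit already contains $\lambda^{*}$ and no identification step is needed, which is presumably why the remark preceding the proposition allows either condition.
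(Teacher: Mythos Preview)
Your proof is correct, but it takes a genuinely different route from the paper's. The key divergence is the direction in which you extract the auxiliary subsequence: you pass from $\phi^{N(k)}$ to its \emph{successors} $\phi^{N(k)+1}$, so that the limit $\phi^{*}$ you must analyze is the \emph{predecessor} in each sandwich; the paper instead extracts from the \emph{predecessors} $\phi^{N(k)-1}$, making the target limit $\tilde\phi$ the successor. This single choice has a real consequence. In the paper's setup the first-order conditions of the two sub-steps, after passing to the limit, land directly at $\tilde\phi$ once one uses the hypothesis to identify $\bar\theta=\tilde\theta$; no identification on the $\lambda$-side is ever needed. In your setup the $\theta$-step optimality limits to the point $(\bar\lambda,\theta^{*})$, forcing you to prove $\bar\lambda=\lambda^{*}$ before you can read off $\nabla_{\theta}\hat D_{\varphi}$ at $\phi^{*}$. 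You do this via a clean mixture-specific argument (fixed $\theta$, $h_i$-equality $\Rightarrow$ the ratios $\bar\lambda_x/\lambda^{*}_x$ are constant, and the simplex constraint forces the constant to be $1$). That argument is valid and is actually an interesting observation not recorded in the paper: the failure of A3 in mixtures comes from the interaction between $\lambda$ and $\theta$, and disappears once $\theta$ is frozen.

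Two small remarks. First, you cite Lemma~\ref{Lem:Tseng} for ``$D_{\psi}=0\Rightarrow h_i$ equal'', but that lemma is the converse direction; the implication you need is immediate from the integrand and $\psi(t)=0\Leftrightarrow t=1$. Second, your closing comment on the symmetric hypothesis $\|\lambda^{k+1}-\lambda^{k}\|\to 0$ is slightly off: the paper's remark notes that one must also \emph{swap the order} of the two sub-steps for that version to go through, not merely swap the hypothesis.
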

\begin{remark}
The previous proposition demands a condition on the distance between two consecutive members of the sequence $(\theta^k)_k$ which is \emph{a priori} weaker than the same condition on the whole sequence $\phi^k=(\lambda^k,\theta^k)$. Still, as the regularization term $D_{\psi}$ does not verify the identifiability condition A3, it stays an open problem for a further work. It is interesting to notice that condition $\|\theta^{k+1}-\theta^k\|\rightarrow 0$ can be replaced by $\|\lambda^{k+1}-\lambda^k\|\rightarrow 0$, but we then need to change the order of steps (\ref{eqn:DivergenceAlgoSimp1}) and (\ref{eqn:DivergenceAlgoSimp2}). A condition over the proportions seems to be \emph{simpler}.
\end{remark}
\begin{remark}
We can define an algorithm which converges to a global infimum of the estimated $\varphi-$divergence (see paragraph \ref{subsec:GlobaInf}). The idea is very simple. We need to multiply the proximal term by a sequence of positive numbers which decreases to zero, say $1/k$. The justification of such variant can be deduced from Theorem 3.2.4 in \citep{ChretienHeroProxGener}. The problem with this approach is that it depends heavily on the fact that the supremum on each step of the algorithm is calculated exactly. This does not happen in general unless function $\hat{D}_{\varphi}(p_{\phi},p_{\phi^T}) + \beta_k D_{\psi}(\phi,\phi^k)$ is strictly convex. Although in our approach, we use similar assumption to prove the consecutive decreasing of $\hat{D}_{\varphi}(p_{\phi},p_{\phi^T})$, we can replace the infimum calculus in (\ref{eqn:DivergenceAlgo}) by two things. We require at each step that we find a local infimum of $\hat{D}_{\varphi}(p_{\phi},p_{\phi^T}) + D_{\psi}(\phi,\phi^k)$ whose evaluation with $\phi\mapsto\hat{D}_{\varphi}(p_{\phi},p_{\phi^T})$ is less than the previous term of the sequence $\phi^k$. If we can no longer find any local maxima verifying the claim, the procedure stops with $\phi^{k+1}=\phi^k$. This ensures the availability of all proofs presented in this paper with no further changes.
\end{remark}

%
%

\section{Case Studies and Variants of the algorithm}\label{sec:CaseStudies}
\subsection{An algorithm with theoretically global infimum attainment}\label{subsec:GlobaInf}
We present a variant of algorithm (\ref{eqn:DivergenceAlgo}) which ensures \emph{theoretically} convergence to a global infimum of the objective function $\hat{D}_{\varphi}(p_{\phi},p_{\phi^T})$ as long as there exists a convergent subsequence. The idea is the same as Theorem 3.2.4 in \citep{ChretienHeroProxGener}. Define $\phi^{k+1}$ by:
\[\phi^{k+1} = \arginf_{\phi} \hat{D}_{\varphi}(p_{\phi},p_{\phi^T}) + \beta_k D_{\psi}(\phi,\phi^k).\]
The proof of convergence is very simple and does not depend on the differentiability of any of the two functions $\hat{D}_{\varphi}$ or $D_{\psi}$. We only assume A1 and A2 to be verified. Let $(\phi^{N(k)})_k$ be a convergent subsequence. Let $\phi^{\infty}$ be its limit. This is guaranteed by the compactness of $\Phi^0$ and the fact that the whole sequence $(\phi^k)_k$ resides in $\Phi^0$ (see Proposition \ref{prop:DecreaseDphi}-b). Suppose also that the sequence $(\beta_k)_k$ converges to 0 as $k$ goes to infinity.\\ 
Let $\phi$ by a vector of $\Phi$ which has a value of $\hat{D}_{\varphi}(p_{\phi},p_{\phi^T})$ strictly inferior to the value of the same function at $\phi^{\infty}$, i.e.
\begin{equation}
\hat{D}_{\varphi}(p_{\phi},p_{\phi^T}) < \hat{D}_{\varphi}(p_{\phi^{\infty}},p_{\phi^T}).
\label{eqn:GlobalInfAlgo1}
\end{equation}
By definition of $\phi^{N(k)}$, we have:
\[\hat{D}_{\varphi}(p_{\phi^{N(k)}},p_{\phi^T}) + \beta_{N(k)-1} D_{\psi}(\phi^{N(k)},\phi^{N(k)-1}) \leq \hat{D}_{\varphi}(p_{\phi},p_{\phi^T}) + \beta_{N(k)-1} D_{\psi}(\phi,\phi^{N(k)}),\]
which holds for every $\phi$ in the whole set $\Phi$. Using the non negativity of the term $\beta_{N(k)-1}D_{\psi}(\phi^{N(k)},\phi^{N(k)-1})$, one can write:
\begin{equation}
\hat{D}_{\varphi}(p_{\phi^{N(k)}},p_{\phi^T}) \leq \hat{D}_{\varphi}(p_{\phi},p_{\phi^T}) + \beta_{N(k)} D_{\psi}(\phi,\phi^{N(k)}).
\label{eqn:GlobalInfAlgo2}
\end{equation}
As we pass to the limit on $k$, recall firstly that $(\beta_k)_k$ converges to 0, so that any subsequence $(\beta_{N(k)})_k$ also converges to 0. Secondly, the continuity assumption on $D_{\psi}$ implies that, since $\phi^{N(k)}\rightarrow\phi^{\infty}$, $D_{\psi}(\phi,\phi^{N(k)})$ converges to $D_{\psi}(\phi,\phi^{\infty})$. By compactness of $\Phi^0$ and Proposition \ref{prop:DecreaseDphi}-b, we have $\phi^{\infty}\in\Phi^0$. The continuity again of $D_{\psi}$ will imply that the quantity $D_{\psi}(\phi,\phi^{\infty})$ is finite. Finally, inequality (\ref{eqn:GlobalInfAlgo2}) now implies that:
\[\hat{D}_{\varphi}(p_{\phi^{\infty}},p_{\phi^T}) \leq \hat{D}_{\varphi}(p_{\phi},p_{\phi^T})\]
which contradicts with the choice of $\phi$ verifying (\ref{eqn:GlobalInfAlgo1}). Hence, $\phi^{\infty}$ is a global infimum on $\Phi$.\\
The problem with this approach is that it depends heavily on the fact that the supremum on each step of the algorithm is calculated exactly. This does not happen in general unless function $\hat{D}_{\varphi}(p_{\phi},p_{\phi^T}) + \beta_k D_{\psi}(\phi,\phi^k)$ is convex or that we dispose of an algorithm which can solve perfectly non convex optimization problems\footnote{In this case, there is no meaning in applying an iterative proximal algorithm. We would have used the optimization algorithm directly on the objective function $\hat{D}_{\varphi}(p_{\phi},p_{\phi^T})$}. Although in our approach, we use similar assumption to prove the consecutive decreasing of $\hat{D}_{\varphi}(p_{\phi},p_{\phi^T})$, we can replace the infimum calculus in (\ref{eqn:DivergenceAlgo}) by two things. We demand that at each step that we find a local infimum of $\hat{D}_{\varphi}(p_{\phi},p_{\phi^T}) + D_{\psi}(\phi,\phi^k)$ whose evaluation with $\phi\mapsto\hat{D}_{\varphi}(p_{\phi},p_{\phi^T})$ is less than the previous term of the sequence $\phi^k$. If we can no longer find any local infima verifying the claim, the procedure stops with $\phi^{k+1}=\phi^k$. This ensures the availability of all the proofs presented in this paper with no change.
\subsection{The EM algorithm in the context of mixture models}\label{subsec:EMMixtures}
In the case of mixture models (\ref{eqn:MixModelDef}), the EM recurrence can be rewritten in two parts; a part where the maximization is on the proportions, and a part on the parameters describing the form of classes. We code the $s$ classes directly by their indices $\{1,\cdots,s\}$. Starting from (\ref{eqn:EMAlgo}), one may insert directly the constraint on the $\lambda$'s into the optimized function as follows:
\begin{multline*}(\lambda_1^{k+1},\cdots,\lambda_{s-1}^{k+1},\theta_1^{k+1},\cdots,\theta_s^{k+1}) =  \argsup_{\lambda_1\geq 0,\cdots,\lambda_{s-1}\geq 0,(\theta_1,\cdots\theta_s)\in\Theta}  \sum_{i=1}^n\sum_{j=1}^{s-1}{\log\left(\lambda_j p(y_i|\theta_j)\right)h_i(j|\phi^k)} \\ + \sum_{i=1}^n{\log\left(\left(1-\sum_{j=1}^{s-1}{\lambda_j}\right) p(y_i|\theta_j)\right)h_i(j|\phi^k)}
\end{multline*}
where
\[ h_i(x_j|\phi^k)=\frac{\lambda_j^k f_j(y_i|\theta_j^k)}{\sum_j{\lambda_j^k f_j(y_i|\theta_j^k)}}.\]
Now, the property of the logarithmic function in transforming the product into a sum is the cornerstone in the simplification of the optimization.
\begin{multline*}
(\lambda_1^{k+1},\cdots,\lambda_{s-1}^{k+1},\theta_1^{k+1},\cdots,\theta_s^{k+1}) =  \argsup_{\lambda_1\geq 0,\cdots,\lambda_{s-1}\geq 0,(\theta_1,\cdots\theta_s)\in\Theta}  \sum_{i=1}^n\sum_{j=1}^{s-1}{\log(\lambda_j)h_i(j|\phi^k)}  +\\ \sum_{i=1}^n{\log\left(1-\sum_{j=1}^{s-1}{\lambda_j}\right)h_i(s|\phi^k)} + \sum_{i=1}^n\sum_{j=1}^{s-1}{\log\left(\lambda_j p(y_i|\theta_j)\right)h_i(j|\phi^k)} + \sum_{i=1}^n{\log\left(p(y_i|\theta_j)\right)h_i(s|\phi^k)}.
\end{multline*}
The optimized function is, thus, written as the sum of two independent functions in the sense that the first one contains only proportions parameters whereas the second contains other parameters. Since the parameters (proportions and the others) are independent from each others\footnote{There is no common constraint between them.}, one can rewrite the previous optimization problem as the sum of two optimization problems:
\begin{eqnarray*}
(\lambda_1^{k+1},\cdots,\lambda_{s-1}^{k+1}) & = & \argsup_{\lambda_1\geq 0,\cdots,\lambda_{s-1}\geq 0} 
\sum_{i=1}^n\sum_{j=1}^{s-1}{\log(\lambda_j)h_i(j|\phi^k)}  + \sum_{i=1}^n{\log\left(1-\sum_{j=1}^{s-1}{\lambda_j}\right)h_i(s|\phi^k)};  \\
(\theta_1^{k+1},\cdots,\theta_s^{k+1}) & = & \argsup_{(\theta_1,\cdots\theta_s)\in\Theta} \sum_{i=1}^n\sum_{j=1}^{s-1}{\log\left(\lambda_j p(y_i|\theta_j)\right)h_i(j|\phi^k)} + \sum_{i=1}^n{\log\left(p(y_i|\theta_j)\right)h_i(s|\phi^k)}.
\end{eqnarray*}
The first step can be explicitly calculated. Solving the gradient equation gives:
\begin{eqnarray}
\frac{1}{\lambda_j} \sum_{i=1}^n{h_i(j|\phi^k)} - \frac{1}{\sum_{l=1}^{s-1}{\lambda_l}}\sum_{i=1}^n{h_i(s|\phi^k)} & = & 0 \quad  \;\; \forall j\in\{1,\cdots,s-1\};\nonumber\\
\frac{\sum_{i=1}^n{h_i(s|\phi^k)}}{\sum_{i=1}^n{h_i(j|\phi^k)}}\lambda_j & = & 1 - \sum_{l=1}^{s-1} \quad \forall j\in\{1,\cdots,s-1\}; \label{eqn:EMProp1} \\
\lambda_1+\cdots+\left(1+\frac{\sum_{i=1}^n{h_i(s|\phi^k)}}{\sum_{i=1}^n{h_i(j|\phi^k)}}\right)\lambda_j+\lambda_{s-1} & = & 1 \quad  \forall j\in\{1,\cdots,s-1\}; \label{eqn:EMProp2}
\end{eqnarray}
Equation (\ref{eqn:EMProp1}) implies that:
\begin{equation}
\forall j\in\{1,\cdots,s-1\},\quad \frac{1}{\sum_{i=1}^n{h_i(j|\phi^k)}}\lambda_j = \frac{1}{\sum_{i=1}^n{h_i(j|\phi^k)}}\lambda_1.
\label{eqn:EMProp3}
\end{equation}
Rewriting equation (\ref{eqn:EMProp2}) for $j=1$ and using previous identities gives:
\begin{eqnarray*}
1 & = & \left(1+\frac{\sum_{i=1}^n{h_i(s|\phi^k)}}{\sum_{i=1}^n{h_i(1|\phi^k)}}\right)\lambda_1 + \sum_{l=2}^{s-1}{\frac{\sum_{i=1}^n{h_i(l|\phi^k)}}{\sum_{i=1}^n{h_i(1|\phi^k)}}\lambda_1}; \\
1 & = & \lambda_1 \left[1+\frac{n-\sum_{i=1}^n{h_i(1,\phi^k)}}{\sum_{i=1}^n{h_i(1,\phi^k)}}\right]; \\
\lambda_1 & = & \frac{1}{n}\sum_{i=1}^n{h_i(1,\phi^k)}.
\end{eqnarray*}
In the second line, we used the fact that $h_i(s|\phi^k) = 1-\sum_{l=1}^{s-1}{h_i(l|\phi^k)}$. Finally, we use (\ref{eqn:EMProp3}) to deduce that:
\[\lambda_1  = \frac{1}{n}\sum_{i=1}^n{h_i(1,\phi^k)} \forall j\in\{1,\cdots,s-1\}.\]
Now, the EM recurrence is given by:
\begin{eqnarray*}
\lambda_j^{k+1} & = & \frac{1}{n}\sum_{i=1}^n{h_i(x_j|\phi^k)} \quad j\in\{1,\cdots,s-1\};\\
\theta^{k+1} & = & \argsup_{\theta}\sum_{i=1}^n\sum_{j=1}^s{\log\left(f_j(y_i|\theta_j)\right)h_i(x_j|\phi^k)}.
\end{eqnarray*}
This was the idea behind our algorithm defined by (\ref{eqn:DivergenceAlgoSimp1},\ref{eqn:DivergenceAlgoSimp2}). Furthermore, the second part of the optimization can be simplified more than that. We may write an optimization corresponding to each class since the optimized function is a sum of terms each of which depends only of the parameter vector $\theta_j$ defining the corresponding class. The EM algorithm can be rewritten as follows:
\begin{eqnarray*}
\lambda_j^{k+1} & = & \frac{1}{n}\sum_{i=1}^n{h_i(x_j|\phi^k)} j\in\{1,\cdots,s-1\}\\
\theta_j^{k+1} & = & \argsup_{\theta_j}\sum_{i=1}^n{\log\left(f_j(y_i|\theta_j)\right)h_i(x_j|\phi^k)} \quad j\in\{1,\cdots,s\}
\end{eqnarray*}
This suggests to go further in algorithm (\ref{eqn:DivergenceAlgoSimp1},\ref{eqn:DivergenceAlgoSimp2}) and use the same idea of directional optimization on the second part (\ref{eqn:DivergenceAlgoSimp2}). The convergence results can be extended to this variant without any additional assumptions.

%
%

\section{Theoretical study of convergence on some mixtures with application to the EM algorithm}\label{sec:Examples}
In this section, we present three examples where we check assumptions A0-A3 and AC and study the convergence properties of the sequence $\phi^k$. We only consider for the estimated divergence the two dual formula presented in paragraphs \ref{subsec:ClassicalDualFormula} and \ref{subsec:KernelSolution}. Other $\varphi-$divergence-based estimators; Beran's and Basu-Linsday's approaches, can be treated in a similar way to the kernel-based MD$\varphi$DE.
\subsection{two-component Gaussian mixture}\label{Example:GaussMix}
We suppose that the model $(p_{\phi})_{\phi\in\Phi}$ is a mixture of two Gaussian densities, and suppose that we are only interested in estimating the means $\mu=(\mu_1,\mu_2)\in\mathbb{R}^2$ and the proportions $\lambda = (\lambda_1,\lambda_2)\in[\eta,1-\eta]^2$. The use of $\eta$ is to avoid cancellation of any of the two components and to keep the hypothesis about the conditional densities $h_i$ true, i.e. $h_i(x|\phi)>0$ for $x=1,2$. We also suppose to simplify the calculus that the components variances are reduced ($\sigma_i = 1$). The model takes the form:
\begin{equation}
p_{\lambda,\mu}(x) = \frac{\lambda}{\sqrt{2\pi}} e^{-\frac{1}{2}(x-\mu_1)^2} + \frac{1-\lambda}{\sqrt{2\pi}} e^{-\frac{1}{2}(x-\mu_2)^2},
\label{eqn:GaussMixModel}
\end{equation}
where $\Phi = [\eta,1-\eta]^s\times\mathbb{R}^s$. Here $\phi=(\lambda,\mu_1,\mu_2)$. The distance-like function $D_{\psi}$ is defined by:
\[D_{\psi}(\phi,\phi^k) = \sum_{i=1}^n{\psi\left(\frac{h_i(1|\phi)}{h_i(1|\phi^k)}\right)h_i(1|\phi^k)} + \sum_{i=1}^n{\psi\left(\frac{h_i(2|\phi)}{h_i(2|\phi^k)}\right)h_i(2|\phi^k)},\]
where:
\[h_i(1|\phi) = \frac{\lambda e^{-\frac{1}{2}(y_i-\mu_1)^2}}{\lambda e^{-\frac{1}{2}(y_i-\mu_1)^2} + (1-\lambda) e^{-\frac{1}{2}(y_i-\mu_2)^2}}, \quad h_i(2|\phi) = 1-h_i(1|\phi).\]
It is clear that functions $h_i$ are of class $\mathcal{C}^1$ on (int($\Phi$)), and as a consequence, $D_{\psi}$ is also of class $\mathcal{C}^1$ on (int($\Phi$)).\\
\textbf{If we are using the dual estimator of the $\varphi-$divergence given by (\ref{eqn:DivergenceDef})}, then assumption A0 can be verified using the maximum theorem of \cite{Berge}. There is still a great difficulty in studying the properties (closedness or compactness) of the set $\Phi^0$. Moreover, all convergence properties of the sequence $\phi^k$ require the continuity of the estimated $\varphi-$divergence $\hat{D}_{\varphi}(p_{\phi},p_{\phi^T})$ with respect to $\phi$. In the context of paragraph \ref{sec:AnalyticalDiscuss}, $\hat{D}_{\varphi}(p_{\phi},p_{\phi^T})=\sup_{\alpha\in\Phi}f(\alpha,\phi)$ for the Gaussian mixture cannot be treated directly using any of the two presented approaches. We propose to assume that $\Phi$ is compact, i.e. assume that the means are included in an interval of the form $[\mu_{\min},\mu_{\max}]$. Now $\hat{D}_{\varphi}(p_{\phi},p_{\phi^T})=\sup_{\alpha\in\Phi}f(\alpha,\phi)$ is a lower$-\mathcal{C}^1$ function since $f(\alpha,\phi)$ is of class $\mathcal{C}^1(\Phi)$ using Lebesgue theorems. Thus, using Theorem \ref{theo:LowerC1}, $\hat{D}_{\varphi}(p_{\phi},p_{\phi^T})$ is continuous and differentiable almost everywhere with respect to $\phi$. \\
The compactness assumption of $\Phi$ implies directly the compactness of $\Phi^0$. Indeed
\begin{eqnarray*}
\Phi^0 & = & \left\{\phi\in\Phi, \hat{D}_{\varphi}(p_{\phi},p_{\phi^T})\leq \hat{D}_{\varphi}(p_{\phi^0},p_{\phi^T})\right\} \\
			 & = & \hat{D}_{\varphi}(p_{\phi},p_{\phi^T})^{-1}\left((-\infty, \hat{D}_{\varphi}(p_{\phi^0},p_{\phi^T})]\right).
\end{eqnarray*}
$\Phi^0$ is then the inverse image by a continuous function of a closed set, so it is closed in $\Phi$. Hence, it is compact.
\begin{conclusion}
\label{conc:conclusion1}
Using Propositions \ref{prop:NewRes} and \ref{prop:DecreaseDphi}, if $\Phi=[\eta,1-\eta]\times [\mu_{\min},\mu_{\max}]^2$, the sequence $(\hat{D}_{\varphi}(p_{\phi^k},p_{\phi^T}))_k$ defined through formula (\ref{eqn:DivergenceDef}) converges and there exists a subsequence $(\phi^{N(k)})$ which converges to a stationary point of the estimated divergence. Moreover, every limit point of the sequence $(\phi^k)_k$ is a stationary point of the estimated divergence. 
\end{conclusion}
\textbf{If we are using the kernel-based dual estimator given by (\ref{eqn:EmpiricalNewDualForm})} with a Gaussian kernel density estimator, then function $\phi\mapsto \hat{D}_{\varphi}(p_{\phi},p_{\phi^T})$ is continuously differentiable over $\Phi$ even if the means $\mu_1$ and $\mu_2$ are not bounded. For example, take $\varphi = \varphi_{\gamma}$ defined by (\ref{eqn:CressieReadPhi}). There is one condition which relates the window of the kernel, say $w$, with the value of $\gamma$; $\gamma(w^2-1)>-1$. For $\gamma=2$ (the Pearson's $\chi^2$), we need that $w^2>1/2$. For $\gamma=1/2$ (the Hellinger), there is no condition on $w$.\\
Closedness of $\Phi^0$ is proved similarly to the previous case. Boundedness is however must be treated differently since $\Phi$ is not necessarily compact and is supposed to be $\Phi = [\eta,1-\eta]^s\times\mathbb{R}^s$. For simplicity take $\varphi=\varphi_{\gamma}$. The idea is to choose $\phi^0$ an initialization for the proximal algorithm in a way that $\Phi^0$ does not include unbounded values of the means. Continuity of $\phi\mapsto\hat{D}_{\varphi}(p_{\phi},p_{\phi^T})$ permits to calculate the limits when either (or both) of the means tends to infinity. If both means goes to infinity, then $p_{\phi}(x)\rightarrow 0,\forall x$. Thus, for $\gamma\in(0,\infty)\setminus \{1\}$, we have $\hat{D}_{\varphi}(p_{\phi},p_{\phi^T})\rightarrow \frac{1}{\gamma(\gamma-1)}$. For $\gamma<0$, the limit is infinity. If only one of the means tends to $\infty$, then the corresponding component vanishes from the mixture. Thus, if we choose $\phi^0$ such that:
\begin{eqnarray}
\hat{D}_{\varphi}(p_{\phi^0},p_{\phi^T}) & < & \min\left(\frac{1}{\gamma(\gamma-1)}, \inf_{\lambda,\mu}\hat{D}_{\varphi}(p_{(\lambda,\infty,\mu)},p_{\phi^T})\right) \text{ if } \gamma \in(0,\infty)\setminus \{1\};\label{eqn:CondGaussMixNewDual1} \\
\hat{D}_{\varphi}(p_{\phi^0},p_{\phi^T}) & < & \inf_{\lambda,\mu}\hat{D}_{\varphi}(p_{(\lambda,\infty,\mu)},p_{\phi^T})\qquad \text{ if } \gamma <0,
\label{eqn:CondGaussMixNewDual2}
\end{eqnarray}
then the algorithm starts at a point of $\Phi$ whose function value is inferior to the limits of $\hat{D}_{\varphi}(p_{\phi},p_{\phi^T})$ at infinity. By Proposition \ref{prop:DecreaseDphi}, the algorithm will continue to decrease the value of $\hat{D}_{\varphi}(p_{\phi},p_{\phi^T})$ and never goes back to the limits at infinity. Besides, the definition of $\Phi^0$ permits to conclude that if $\phi^0$ is chosen according to condition (\ref{eqn:CondGaussMixNewDual1},\ref{eqn:CondGaussMixNewDual2}), then $\Phi^0$ is bounded. Thus, $\Phi^0$ becomes compact. Unfortunately the value of $\inf_{\lambda,\mu}\hat{D}_{\varphi}(p_{(\lambda,\infty,\mu)},p_{\phi^T})$ can be calculated but numerically. We will see next that in the case of Likelihood function, a similar condition will be imposed for the compactness of $\Phi^0$, and there will be no need for any numerical calculus.
\begin{conclusion}
\label{conc:conclusion2}
Using Propositions \ref{prop:NewRes} and \ref{prop:DecreaseDphi}, under condition (\ref{eqn:CondGaussMixNewDual1}, \ref{eqn:CondGaussMixNewDual2}) the sequence $(\hat{D}_{\varphi}(p_{\phi^k},p_{\phi^T}))_k$ defined through formula (\ref{eqn:EmpiricalNewDualForm}) converges and there exists a subsequence $(\phi^{N(k)})$ which converges to a stationary point of the estimated divergence. Moreover, every limit point of the sequence $(\phi^k)_k$ is a stationary point of the estimated divergence. 
\end{conclusion}
\textbf{In the case of the likelihood $\varphi(t)=-\log(t)+t-1$}, the set $\Phi^0$ can be written as:
\begin{eqnarray*}
\Phi^0 & = & \left\{\phi\in\Phi, J(\phi)\geq J(\phi^0)\right\} \\
			 & = & J^{-1}\left([J(\phi^0),+\infty)\right),
\end{eqnarray*}
where $J$ is the log-likelihood function. Function $J$ is clearly of class $\mathcal{C}^1$ on (int($\Phi$)). We prove that $\Phi^0$ is closed and bounded which is sufficient to conclude its compactness, since the space $[\eta,1-\eta]^s\times\mathbb{R}^s$ provided with the euclidean distance is complete.\\ 
\textbf{Closedness.} The set $\Phi^0$ is the inverse image by a continuous function (the log-likelihood). Therefore it is closed in $[\eta,1-\eta]^s\times\mathbb{R}^s$.\\
\textbf{Boundedness.} By contradiction, suppose that $\Phi^0$ is unbounded, then there exists a sequence $(\phi^l)_l$ which tends to infinity. Since $\lambda^l\in[\eta,1-\eta]$, then either of $\mu_1^l$ or $\mu_2^l$ tends to infinity. Suppose that both $\mu_1^l$ and $\mu_2^l$ tend to infinity, we then have $J(\phi^l) \rightarrow -\infty$. Any finite initialization $\phi^0$ will imply that $J(\phi^0)>-\infty$ so that $\forall \phi\in\Phi^0, J(\phi)\geq J(\phi^0)>-\infty$. Thus, it is impossible for both $\mu_1^l$ and $\mu_2^l$ to go to infinity.\\
Suppose that $\mu_1^l \rightarrow \infty$, and that $\mu_2^l$ converges\footnote{Normally, $\mu_2^l$ is bounded; still, we can extract a subsequence which converges.} to $\mu2$. The limit of the likelihood has the form:
\[L(\lambda, \infty, \phi_2) = \prod_{i=1}^n{\frac{(1-\lambda)}{\sqrt{2\pi}}e^{-\frac{1}{2}(y_i-\mu_2)^2}},\]
which is bounded by its value for $\lambda = 0$ and $\mu_2 = \frac{1}{n}\sum_{i=1}^n{y_i}$. Indeed, since $1-\lambda\leq 1$, we have:
\[L(\lambda, \infty, \phi_2) \leq \prod_{i=1}^n{\frac{1}{\sqrt{2\pi}}e^{-\frac{1}{2}(y_i-\mu_2)^2}}.\]
The right hand side of this inequality is the likelihood of a Gaussian model $\mathcal{N}(\mu_2,0)$, so that it is maximized when $\mu_2=\frac{1}{n}\sum_{i=1}^n{y_i}$. Thus, if $\phi^0$ is chosen in a way that $J(\phi^0)>J\left(0,\infty,\frac{1}{n}\sum_{i=1}^n{y_i}\right)$, the case when $\mu_1$ tends to infinity and $\mu_2$ is bounded would never be allowed. For the other case where $\mu_2\rightarrow\infty$ and $\mu_1$ is bounded, we choose $\phi^0$ in a way that $J(\phi^0)>J\left(1,\frac{1}{n}\sum_{i=1}^n{y_i},\infty\right)$. In conclusion, with a choice of $\phi^0$ such that:
\begin{equation}
J(\phi^0)>\max\left[J\left(0,\infty,\frac{1}{n}\sum_{i=1}^n{y_i}\right),\; J\left(1,\frac{1}{n}\sum_{i=1}^n{y_i},\infty\right)\right]
\label{eqn:TwoGaussMixCond}
\end{equation}
the set $\Phi^0$ is bounded.\\
This condition on $\phi^0$ is very natural and means that we need to begin at a point at least better than the extreme cases where we only have one component in the mixture. This can be easily verified by choosing a random vector $\phi^0$, and calculate the corresponding log-likelihood value. If $J(\phi^0)$ does not verify the previous condition, we draw again another random vector until satisfaction.
\begin{conclusion}
\label{conc:conclusion3}
Using Propositions \ref{prop:NewRes} and \ref{prop:DecreaseDphi}, under condition (\ref{eqn:TwoGaussMixCond}) the sequence $(J(\phi^k))_k$ converges and there exists a subsequence $(\phi^{N(k)})$ which converges to a stationary point of the likelihood function. Moreover, every limit point of the sequence $(\phi^k)_k$ is a stationary point of the likelihood. 
\end{conclusion}
\textbf{Assumption A3 is not fulfilled} (this part applies for all aforementioned situations). As mentioned in the paper of \citep{Tseng}, for the two Gaussian mixture example, by changing $\mu_1$ and $\mu_2$ by the same amount and suitably adjusting $\lambda$, the value of $h_i(x|\phi)$ would be unchanged. We explore this more thoroughly by writing the corresponding equations. Let's suppose, by absurd, that for distinct $\phi$ and $\phi'$ we have $D_{\psi}(\phi|\phi')=0$. By definition of $D_{\psi}$, it is given by a sum of non negative terms, which implies that all terms need to be equal to zero. The following lines are equivalent $\forall i \in \{1,\cdots,n\}$:
\begin{eqnarray*}
h_i(0|\lambda,\mu_1,\mu_2) & = & h_i(0|\lambda',\mu'_1,\mu'_2) \\
\frac{\lambda e^{-\frac{1}{2}(y_i-\mu_1)^2}}{\lambda e^{-\frac{1}{2}(y_i-\mu_1)^2} + (1-\lambda) e^{-\frac{1}{2}(y_i-\mu_2)^2}} & = & \frac{\lambda' e^{-\frac{1}{2}(y_i-\mu'_1)^2}}{\lambda' e^{-\frac{1}{2}(y_i-\mu'_1)^2} + (1-\lambda') e^{-\frac{1}{2}(y_i-\mu'_2)^2}} \\
\log\left(\frac{1-\lambda}{\lambda}\right) - \frac{1}{2}(y_i-\mu_2)^2 + \frac{1}{2}(y_i-\mu_1)^2 & = & \log\left(\frac{1-\lambda'}{\lambda'}\right) - \frac{1}{2}(y_i-\mu'_2)^2 + \frac{1}{2}(y_i-\mu'_1)^2
\end{eqnarray*}
Looking at this set of $n$ equations as an equality of two polynomials on $y$ of degree 1 at $n$ points, we deduce that as we dispose of two distinct observations, say, $y_1$ and $y_2$, the two polynomials need to have the same coefficients. Thus the set of $n$ equations is equivalent to the following two equations:
\begin{equation}
\left\{\begin{array}{ccc}\mu_1-\mu_2 & = & \mu'_1-\mu'_2 \\
		\log\left(\frac{1-\lambda}{\lambda}\right) + \frac{1}{2}\mu_1^2 - \frac{1}{2}\mu_2^2 & = & \log\left(\frac{1-\lambda'}{\lambda'}\right) + \frac{1}{2}{\mu'_1}^2 - \frac{1}{2}{\mu'_2}^2	
		\end{array}\right.
\label{eqn:EqSysGaussMix}
\end{equation}
These two equations with three variables have an infinite number of solutions. Take for example $\mu_1 = 0,\mu_2=1,\lambda=\frac{2}{3},\mu'_1=\frac{1}{2}, \mu'_2=\frac{3}{2},\lambda'=\frac{1}{2}$. \\
\begin{remark} 
The previous conclusion can be extended to any two-component mixture of exponential families having the form:
\[p_{\phi}(y) = \lambda e^{\sum_{i=1}^{m_1}{\theta_{1,i}y^{i}} - F(\theta_1)} + (1-\lambda)e^{\sum_{i=1}^{m_2}{\theta_{2,i}y^{i}} - F(\theta_2)}.\]
One may write the corresponding $n$ equations. The polynomial of $y_i$ has a degree of at most $\max(m_1,m_2)$. Thus, if one disposes of $\max(m_1,m_2)+1$ distinct observations, the two polynomials will have the same set of coefficients. Finally, if $(\theta_1,\theta_2)\in\mathbb{R}^{d-1}$ with $d>\max(m_1,m_2)$, then assumption A3 does not hold.
\end{remark}
This conclusion holds for both algorithms (\ref{eqn:DivergenceAlgo}) or (\ref{eqn:DivergenceAlgoSimp1},\ref{eqn:DivergenceAlgoSimp2}). Unfortunately, we have no an information about the difference between consecutive terms $\|\phi^{k+1}-\phi^k\|$ except for the case of $\psi(t) = \varphi(t)=-\log(t)+t-1$ which corresponds to the classical EM recurrence:
\[\lambda^{k+1} = \frac{1}{n}\sum_{i=1}^n{h_i(0|\phi^k)},\quad \mu_1^{k+1} = \frac{\sum_{i=1}^n{y_ih_i(0|\phi^k)}}{\sum_{i=1}^n{h_i(0|\phi^k)}}\quad \mu_1^{k+1} = \frac{\sum_{i=1}^n{y_ih_i(1|\phi^k)}}{\sum_{i=1}^n{h_i(1|\phi^k)}}.\]
\cite{Tseng} has shown that we can prove directly that $\phi^{k+1}-\phi^k$ converges to 0.

\subsection{Two-component Weibull mixture}\label{subsec:WeibullMixEx}
Let $p_{\phi}$ be a two-component Weibull mixture:
\begin{equation}
p_{\phi}(x) = 2\lambda\alpha_1 (2x)^{\alpha_1-1} e^{-(2x)^{\alpha_1}}+(1-\lambda)\frac{\alpha_2}{2}\left(\frac{x}{2}\right)^{\alpha_2-1} e^{-\left(\frac{x}{2}\right)^{\alpha_2}}
\label{eqn:WeibullMixture}
\end{equation}
We have $\Phi = (0,1)\times\mathbb{R}_+^*\times\mathbb{R}_+^*$. Similarly to the Gaussian example, we will study convergence properties in light of our theoretical approach. We will only be interested in divergences with the class of Cressie-Read functions $\varphi=\varphi_{\gamma}$ given by (\ref{eqn:CressieReadPhi}).\\
The weight functions $h_i$ are given by:
\[h_i(1|\phi) = \frac{2\lambda\alpha_1 (2x)^{\alpha_1-1} e^{-(2x)^{\alpha_1}}}{2\lambda\alpha_1 (2x)^{\alpha_1-1} e^{-(2x)^{\alpha_1}}+(1-\lambda)\frac{\alpha_2}{2}\left(\frac{x}{2}\right)^{\alpha_2-1} e^{-\left(\frac{x}{2}\right)^{\alpha_2}}},\quad h_i(2|\phi)=1-h_i(1|\phi).\]
It is clear the functions $h_i$ are of class $\mathcal{C}^1(\text{int}(\Phi))$ and so does $\phi\mapsto D_{\psi}(\phi,\phi')$ for any $\phi'\in\Phi$.\\
\textbf{If we are using the dual estimator defined by (\ref{eqn:DivergenceDef})}, then continuity can be treated similarly to the case of the Gaussian example. Here, however, the continuity and differentiability of the optimized function $f(\alpha,\phi)$, where $\hat{D}_{\varphi}(p_{\phi},p_{\phi^T}) = \sup_{\alpha}f(\alpha,\phi)$, are more technical. We list the following three results without any proof, because it suffices to study the integral term in the formula. Suppose, without loss of generality, that $\phi_1<\phi_2$ and $\alpha_1<\alpha_2$.
\begin{enumerate}
\item For $\gamma>1$, which includes the Pearson's $\chi^2$ case, the dual representation is \emph{not} well defined since $\sup_{\alpha}f(\alpha,\phi)=\infty$;
\item For $\gamma\in(0,1)$, function $f(\alpha,\phi)$ is continuous. 
\item For $\gamma<0$, function $f(\alpha,\phi)$ is continuous and well defined for $\phi_1<\frac{\gamma-1}{\gamma}\alpha_1$ and $\alpha_2\geq \phi_2$. Otherwise $f(\alpha,\phi)=-\infty$, but the supremum $\sup_{\alpha}f(\alpha,\phi)$ is still well defined.
\end{enumerate}
In both cases 2 and 3, differentiability of function $f(\alpha,\phi)$ holds only on a subset of $\Phi\times\Phi$ which cannot be written as $A\times B$, and thus the theoretical approaches presented in Section \ref{sec:AnalyticalDiscuss} are not suitable. In order to end this part, we emphasize the fact that, similarly to the Gaussian example, even continuity of the estimated divergence $\hat{D}_{\varphi}(p_{\phi},p_{\phi^T})$ with respect to $\phi$ cannot be treated by our theoretical approaches unless we suppose that $\Phi$ is compact. If $\Phi$ is compact, function $f(\alpha,\phi)$ becomes level-bounded and Theorem \ref{theo:LevelBoundedCont} applies and we can deduce that the estimated divergence is continuous. Differentiability is far more subtle if we use Theorem \ref{theo:LowerC1}.\\
Similar conclusion as Conclusion \ref{conc:conclusion1} can be stated here with no changes except for the fact that assumption AC is not fulfilled. This entails that our conclusion will be about the subgradient of the estimated divergence.\\
\begin{remark}[Strict continuity of $\hat{D}_{\varphi}$ under boundedness assumption of the shape parameters]
When $\gamma<0$, we can prove strict continuity of $\phi\mapsto\hat{D}_{\varphi}(p_{\phi},p_{\phi^T})$ using Theorem \ref{theo:levelbounded}. We need to calculate $Y_{\infty}$ defined by:
\[Y_{\infty}(\phi_0) = \bigcup_{\alpha\in\argsup_{\beta} f(\beta,\phi_0)} M_{\infty}(\alpha,\phi_0),\quad \text{for } M_{\infty}(\alpha,\phi_0) = \{a|(0,a)\in\partial^{\infty} f(\alpha,\phi_0)\}.\]
Let $\alpha_0\in\argsup_{\beta} f(\beta,\phi_0)$. Since the value of $f(\beta,\phi)$ is $-\infty$ on the set $\{(\alpha,\phi)\in\Phi|\alpha_2<\phi_2\}$, then its supremum over $\beta$ is attained outside of it. Consequently, $(\alpha_0,\phi_0)$ belongs to the set $\{(\alpha,\phi)\in\Phi|\alpha_2\geq\phi_2\}$ where the integral in function $f$ is of class $\mathcal{C}^1$ which implies that $f$ is also of class $\mathcal{C}^1$. This entails that $f(\alpha,\phi)$ is strictly continuous at $(\alpha_0,\phi_0)$ which is, by Theorem 9.13 in \citep{Rockafellar}, equivalent to $\partial^{\infty}f(\alpha_0,\phi_0)=\{0\}$. Now, we may conclude that $M_{\infty}(\alpha,\phi_0)=\{0\}$, and hence $Y_{\infty}=\{0\}$. All ingredients of Theorem \ref{theo:LowerC1} are ready, and we conclude that the dual representation of the divergence $\phi\mapsto\hat{D}_{\varphi}(p_{\phi},p_{\phi^T})$ is strictly continuous. \\
If we could prove that the set $Y(\phi)$ contains only one element given a vector $\phi$, then the differentiability of the estimated divergence would be obtained using point ($b$) of Theorem \ref{theo:levelbounded}. This demands however a great effort since the characterization of the set $Y(\phi)$ demands an investigation of the form of the estimated divergence and the model used.
\end{remark}
\textbf{If we are using the kernel-based dual estimator given by (\ref{eqn:EmpiricalNewDualForm})} with a Gaussian kernel density estimator, then things are a lot simplified. We need only to treat the integral term. From an analytic point of view, the study of continuity depends on the kernel used; more specifically its tail behavior. If we take a Gaussian kernel, then we have:
\begin{itemize}
\item[$\bullet$] For $\gamma>1$, it is necessary that $\min(\phi_1,\phi_2)>2$, otherwise the estimated divergence is infinity. Thus, it is necessary for either of the true values of the shapes to be inferior to 2 in order for the estimation to be valid;
\item[$\bullet$] For $\gamma\in(0,1)$, then the estimated divergence is $\mathcal{C}^1(\text{int}(\Phi))$;
\item[$\bullet$] For $\gamma<0$, it is necessary that $\min(\phi_1,\phi_2)<1-\frac{1}{\gamma}$ and $\max(\phi_1,\phi_2)<2$. If these conditions do not hold, then the estimated divergence is minimized at $-\infty$ at any vector of parameters which does not verify the previous condition.
\end{itemize}
In the first case, if we use a heavier-tailed kernel such as the Cauchy Kernel, the estimated divergence becomes $\mathcal{C}^1(\text{int}(\Phi))$. In the third case, if we use a compact-supported kernel such as the Epanechnikov's kernel, the condition is reduced to only $\min(\phi_1,\phi_2)<1-\frac{1}{\gamma}$.\\
Similar conditions to (\ref{eqn:CondGaussMixNewDual1},\ref{eqn:CondGaussMixNewDual2}) can be obtained and we have the same conclusion as Conclusion \ref{conc:conclusion2}. \\
\textbf{In the case of the Likelihood} $\varphi(t)=-\log(t)+t-1$, we illustrate the convergence of the EM algorithm through our theoretical approach. Assumptions A1 and AC are clearly verified since both the log-likelihood and the proximal term are sums of continuously differentiable functions, and integrals do not intervene here. The set $\Phi^0$ is given by: 
\begin{eqnarray*}
\Phi^0 & = & \left\{\phi\in\Phi, J(\phi)\geq J(\phi^0)\right\} \\
       & = & J^{-1}\left([J(\phi^0),\infty)\right) \\
			 & = & \left\{\phi\in\Phi, L(\phi)\geq L(\phi^0)\right\}
\end{eqnarray*}
where $L(\phi)$ is the likelihood of the model, and $J(\phi) = \log(L(\phi))$ is the log-likelihood function. We will show that under similar conditions to the Gaussian mixture, the set $\Phi^0$ is compact. \\
\textbf{Closedness of $\Phi^0$}. Since the shape parameter is supposed to be positive, continuity of the log-likelihood would imply only that $\Phi^0$ is closed in $[0,1]\times\mathbb{R_+^*}\times\mathbb{R_+^*}$, a space which is not closed and hence is not complete. We therefore, propose to extend the definition of shape parameter on 0. From a statistical point of view, this extension is not reasonable since the density function of Weibull distribution with a shape parameter equal to 0 is the zero function which is not a probability density. Besides, identifiability problems would appear for a mixture model. Nevertheless, our need is only for analytical purpose. We will add suitable conditions on $\phi^0$ in order to avoid such subtlety keeping in hand the closedness property.\\
We suppose now that the shape parameter can have values in $\mathbb{R_+}$. The set $\Phi^0$ is now the inverse image of $[L(\phi^0),\infty)$ by the likelihood function\footnote{We do not use this time the log-likelihood function since it is not defined when both shape parameters are zero.} which is continuous on $[0,1]\times\mathbb{R_+}\times\mathbb{R_+}$. Hence, it is closed in the space $[0,1]\times\mathbb{R_+}\times\mathbb{R_+}$ embedded with the euclidean norm which is complete. It suffices then to prove that $\Phi^0$ is bounded.\\
\textbf{Boundedness of $\Phi^0$.} We will make similar arguments to the case of the Gaussian mixture example. We need to calculate the limit at infinity when the shape parameter of either of the two components tends to infinity. If both $\alpha_1$ and $\alpha_2$ goes to infinity, the log-likelihood tends to $-\infty$. Hence any choice of a finite $\phi^0$ can avoid this case. Suppose now that $\alpha_1$ goes to infinity whereas $\alpha_2$ stays bounded. The corresponding limit of the log-likelihood functions is given by:
\[J(\lambda,\infty,\alpha_2) = \sum_{i=1}^n{\log\left((1-\lambda)\frac{\alpha_2}{2}\left(\frac{y_i}{2}\right)^{\alpha_2-1} e^{-\left(\frac{y_i}{2}\right)^{\alpha_2}}\right)}\]
if there is no observation $y_i$ equal to $\frac{1}{2}$. In fact, if there is $y_i=\frac{1}{2}$, the limit is $+\infty$ and the set $\Phi^0$ cannot be bounded. However, it is improbable to get such an observation since the probability of getting an observation equal to $\frac{1}{2}$ is zero. The case when $\alpha_2$ goes to infinity whereas $\alpha_1$ stays bounded is treated similarly. \\
To avoid the two previous scenarios, one should choose the initial point of the algorithm $\phi^0$ in a way that it verifies:
\begin{equation}
J(\phi^0)> \max\left(\sup_{\lambda,\alpha_2} J(\lambda,\infty,\alpha_2), \sup_{\lambda,\alpha_1} J(\lambda,\alpha_1,\infty)\right).
\label{eqn:ConditionWeibullMix}
\end{equation}
Since all vectors of $\Phi^0$ have a log-likelihood value greater than $J(\phi^0)$, the previous choice permits the set $\Phi^0$ to avoid non finite values of $\phi$. Thus it becomes bounded whenever $\phi_0$ is chosen according to condition (\ref{eqn:ConditionWeibullMix}). Finally, the calculus of both terms $\sup_{\lambda,\alpha_1} J(\lambda,\alpha_1,\infty)$ and $\sup_{\lambda,\alpha_2} J(\lambda,\alpha_2,\infty)$ is not feasible but numerically. Those, however, can be simplified a little. One can notice by writing these terms without the logarithm (as a product), the term which has $\lambda$ is maximized when it is equal to 1. The remaining of the calculus is a maximization of the likelihood function of a Weibull model\footnote{In a Weibull model, the calculus of the MLE cannot be done but numerically when the parameter of interest is the shape parameter.}.\\
We conclude that the set $\Phi^0$ is compact under condition (\ref{eqn:ConditionWeibullMix}). Finally, it is important to notice that condition (\ref{eqn:ConditionWeibullMix}) permits also to avoid the border values which corresponds to $\alpha_1=0$ or $\alpha_2=0$. Indeed, when either of the shape parameters is zero, the corresponding component vanishes and the corresponding log-likelihood value is less than the upper bound in condition (\ref{eqn:ConditionWeibullMix}). The same conclusion as Conclusion \ref{conc:conclusion3} can be stated here for the Weibull mixture model.\\
Notice that the verification of assumption A3 is a hard task here because it results in a set of $n$ non-linear equations.
\subsection{Pearson's \texorpdfstring{$\chi^2$}{chi square} algorithm for a Cauchy model}\label{Example:Cauchy}
Let $\{(x_i,y_i),i=0,\cdots,n\}$ be an n-sample drawn from the joint probability law defined by the density function:
\[f(x,y|a,x_0) = \frac{a(y-x_0)^2e^x}{\pi\left(a^2+(y-x_0)^2e^x\right)^2}, \quad x\in[0,\infty), y\in\mathbb{R}\]
where $a\in[\varepsilon,\infty)$, with $\varepsilon>0$, denotes a scale parameter and $x_0\in\mathbb{R}$ denotes a location parameter. We define an exponential probability law with parameter $\frac{1}{2}$ on the labels. It is given by the density function:
\[q(x)=\frac{1}{2}e^{-x/2}.\]
Now, the model defined on the observed data becomes a Cauchy model with two parameters:
\[p_{(a,x_0)}(y) = \int_{0}^{\infty}{f(x,y|a,x_0)dx} = \frac{a}{\pi(a^2+(y-x_0)^2)},\quad a\geq\varepsilon>0, x_0\in\mathbb{R}.\]
The goal of this example is to show how we prove assumptions A1-3 and AC in order to explore the convergence properties of the sequence $\phi^k$ generated by either of the algorithms (\ref{eqn:DivergenceAlgo}) and (\ref{eqn:DivergenceAlgoSimp1},\ref{eqn:DivergenceAlgoSimp2}). We also discuss the analytical properties of the dual representation of the divergence.\\
In this example, we only focus on the dual representation of the divergence given by (\ref{eqn:DivergenceDef}) because the resulting MD$\varphi$DE is robust against outliers (so does the MLE). Thus there is no need to use a robust estimator such as the kernel-based MD$\varphi$DE which needs a choice of a suitable kernel and window.
\subsubsection{Cauchy model with zero location}\label{Example:CauchyZeroLoc}
 We suppose here that $x_0=0$, and we are only interested in estimating the scale parameter $a$. The Pearson's $\chi^2$ divergence is given by:
\[D(p_{a},p_{a^*}) = \frac{1}{2}\int{\left[\frac{p_{a}(y)}{p_{a^*}}-1\right]^2p_{a^*}(y)dy}.\]
Let's rewrite the dual representation of the Chi square divergence:
\[\hat{D}(p_{a},p_{a^*}) = \sup_{b\geq\varepsilon}\left\{\int_{\mathbb{R}}{\frac{p_b^2(x)}{p_a(x)}dx} - \frac{1}{2n}\sum_{i=1}^n{\frac{p_b^2(y_i)}{p_a^2(y_i)}}\right\} - \frac{1}{2}.\]
A simple calculus shows:
\[\int_{\mathbb{R}}{\frac{p_b^2(x)}{p_a(x)}dx} = \frac{(a^2+b^2)\pi}{2ab}.\]
This implies a simpler form for the dual representation of the divergence:
\begin{equation}
\hat{D}(p_{a},p_{a^*}) = \sup_{b\geq\varepsilon}\left\{\frac{(a^2+b^2)}{2ab} - \frac{1}{2n}\sum_{i=1}^n{\frac{a^2(b^2+y_i^2)^2}{b^2(a^2+y_i^2)^2}}\right\} - \frac{1}{2}.
\label{eqn:DivergenceCauchy}
\end{equation}
Let $f(a,b)$ denote the optimized function in the above formula. We calculate the first derivative with respect to $b$:
\[\frac{\partial f}{\partial b}(a,b) = -\frac{\pi a}{2b^2} + \frac{\pi}{2a} - \frac{1}{2n}\sum_{i=1}^n{\frac{a^2}{(a^2+y_i^2)^2}\left(2b-\frac{2y_i^4}{b^3}\right)}.\]
Notice that as $a\geq\varepsilon$ the term $\frac{\pi}{2a}$ stays bounded away from infinity uniformly. Therefore, it suffices then that $b$ exceeds a finite value $b_0$ in order that the derivative becomes negative. Hence, there exists $b_0$ such that $b\mapsto f(a,b)$ becomes decreasing independently of $a$. On the other hand $\forall a>0, \lim_{b\rightarrow\infty} f(a,b) = -\infty$. It results that all values of the function $b\mapsto f(a,b)$ for $b>b_0$ does not have any use in the calculus of the supremum in (\ref{eqn:DivergenceCauchy}), since, by the decreasing property if $b\mapsto f(a,b)$, they all should have values less than the value at $b_0$. We may now rewrite the dual representation of the Chi square divergence as :
\begin{equation}
\hat{D}(p_{a},p_{a^*}) = \sup_{b\in[\varepsilon,b_0]}\left\{\frac{(a^2+b^2)}{2ab} - \frac{1}{2n}\sum_{i=1}^n{\frac{a^2(b^2+y_i^2)^2}{b^2(a^2+y_i^2)}}\right\} - \frac{1}{2}.
\label{eqn:DualRepCauchyLocation}
\end{equation}
We have now two pieces of information about $f(a,b)$. The first is that it is level-bounded locally in $b$ uniformly in $a$ (see paragraph (\ref{para:LevelBoundFun})). The second is that we are exactly in the context of lower$-\mathcal{C}^1$ functions (\ref{para:LowerC1}). First of all, function $f$ is $\mathcal{C}^1([\varepsilon,\infty)\times[\varepsilon,\infty))$ function, so that part (a) of Theorem \ref{theo:levelbounded} is verified and the function $a\mapsto\hat{D}(p_{a},p_{a^*})$ is strictly continuous. To prove it is continuously differentiable, we need to prove that the set 
\[Y(a)=\bigcup_{b\in\argmax_{b'}f(a,b)}\left\{\frac{\partial f}{\partial a}(a,b)\right\}\]
contains but one element. From a theoretic point of view, two possible methods are available: Prove that either there is a unique maximum for a fixed $a$, or that the derivative with respect to $a$ at all maxima \emph{does not depend} on $a$ (they have the same value). In our example, function $b\mapsto f(a,b)$ is not concave. We may also plot it using any mathematical tool provided that we already have the data set. We tried out a simple example and generated a 10-sample of the standard Cauchy distribution ($a=1$), see table (\ref{tab:CauchyDerivEx}). We used Mathematica to draw a 3D figure of function $f$, see figure (\ref{fig:CauchyExample3D}).
\begin{table}[h]
\centering
\begin{tabular}{|c|c|c|c|c|c|c|c|c|c|c|}
$y_i$ & 0.534 & -18.197 & 0.726 & -0.439 & -1.945 & 0.0119 & 12.376 & -0.953 & 0.698 & 0.818\\
\hline
\end{tabular}
\caption{A 10-sample Cauchy dataset.}
\label{tab:CauchyDerivEx}
\end{table}

\begin{figure}[ht]
\centering
\includegraphics[scale = 0.35]{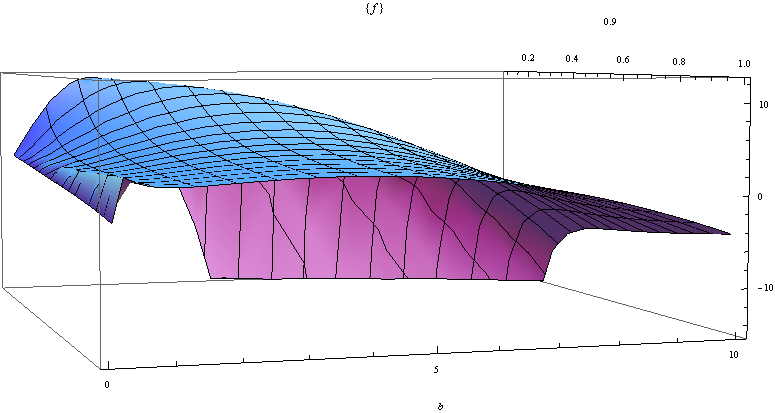}
\caption{A 3D plot of function $f(a,b)$ for a 10-sample of the standard Cauchy distribution.}
\label{fig:CauchyExample3D}
\end{figure}
It is clear that for a fixed $a$, the function $b\mapsto f(a,b)$ has two maxima which may both be global maxima. For example for $a=0.9$, one gets figure (\ref{fig:CauchyExample2D}). It is clearer now that conditions of Theorem \ref{theo:levelbounded} are not fulfilled, and we cannot prove that function $\hat{D}(p_{a},p_{a^*})$ is continuously differentiable every where.\\
\begin{figure}[ht]
\centering
\includegraphics[scale = 0.2]{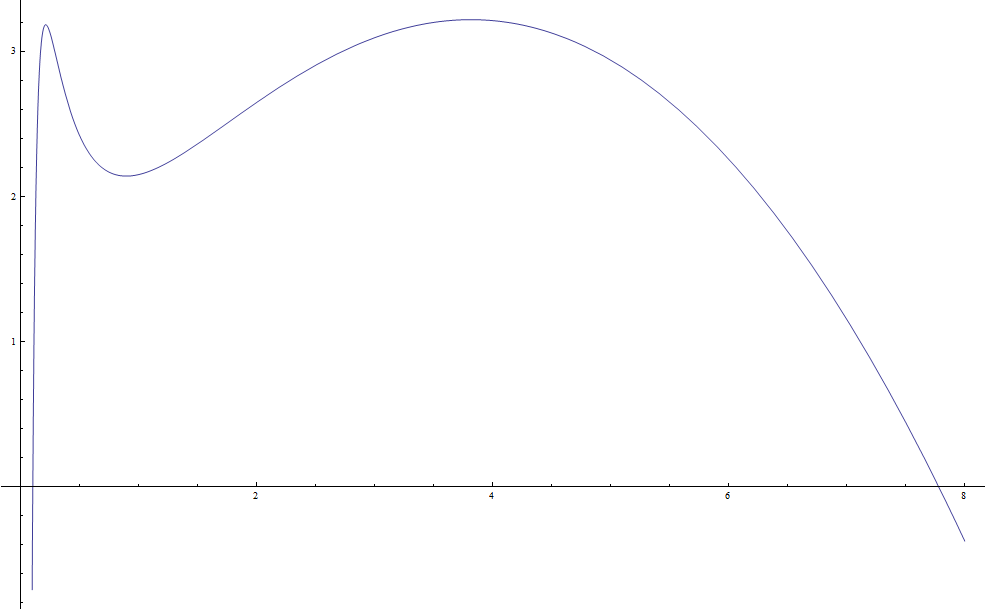}
\caption{A 2D plot of function $f(0.9,b)$ for a 10-sample of the standard Cauchy distribution.}
\label{fig:CauchyExample2D}
\end{figure}
\noindent It is however not the end of the road. We still have the results presented in paragraph (\ref{para:LowerC1}). Function $\hat{D}(p_{a},p_{a^*})$ is lower-$\mathcal{C}^1$. Therefore, it is strictly continuous and almost everywhere continuously differentiable. Hence, we may hope that the limit points of the sequence $(\phi^k)_k$ for algorithm (\ref{eqn:DivergenceAlgo}) are in the set of points where the dual representation of the Chi square divergence is $\mathcal{C}^1$, or be more reasonable and state any further result on the sequence in terms of the subgradient of $\hat{D}(p_{a},p_{a^*})$.\\
\paragraph{Compactness of $\Phi^0$.} We check when the set $\Phi^0=\{a|\hat{D}(p_{a},p_{a^*})\leq\hat{D}(p_{a_0},p_{a^*})\}$ is closed and bounded in $[\varepsilon,\infty)$ for an initial point $a_0$. \textbf{Closedness} is proved using continuity of $\hat{D}(p_{a},p_{a^*})$. Indeed,
\[\Phi^0 = \hat{D}^{-1}(p_{a},p_{a^*})\left((-\infty,\hat{D}(p_{a_0},p_{a^*})]\right).\]
\textbf{Boundedness} is proved by contradiction. Suppose that $\Phi^0$ is unbounded, then there exists a sequence $(a^l)_l$ of points of $\Phi^0$ which goes to infinity. Formula (\ref{eqn:DualRepCauchyLocation}) shows that $b$ stays in a bounded set during the calculus of the supremum. Hence the continuity of $\hat{D}(p_{a},p_{a^*})$ implies: 
\[\lim_{a\rightarrow\infty} \hat{D}(p_{a},p_{a^*}) = +\infty.\]
This shows that by choosing any finite $a_0$, the set $\Phi^0$ becomes bounded. Indeed, the relation defining $\Phi^0$ implies that $\forall l, \hat{D}(p_{a^l},p_{a^*})\leq\hat{D}(p_{a_0},p_{a^*})<\infty$, and a contradiction is reached by taking the limit of each part of this inequality. Hence $\Phi^0$ is closed and bounded in the space $[\varepsilon,\infty)$ which is complete provided with the euclidean distance. We conclude that $\Phi^0$ is compact\footnote{If we are to use a result which concerns the differentiability of $\hat{D}(p_{a},p_{a^*})$, one should consider the case when $\Phi^0$ shares a boundary with $\Phi$. A possible solution to avoid this is to consider an initial point $a^0$ such that $\hat{D}(p_{\varepsilon},p_{a^*})>\hat{D}(p_{a_0},p_{a^*})$. This expels the the boundary from the possible values of $\Phi^0$.}.\\
In this simple example, we only can use algorithm (\ref{eqn:DivergenceAlgo}) since there is only one parameter of interest. Proposition \ref{prop:NewRes} can be used to deduce convergence of any convergent subsequence to a generalized stationary point of $\hat{D}(p_{a},p_{a^*})$.\\
To deduce more results about the sequence $(a^k)_k$, we may try and verify assumption A3 using Lemma \ref{Lem:Tseng}. Let's write functions $h_i$.
\[h_i(x|a) = \frac{f(x,y_i|a)}{p_a(y_i)} = \frac{y_i^2e^x(a^2+y_i^2)}{(a^2+e^xy_i^2)^2}.\]
Clearly, for any $i\in\{1,\cdots,n\}$ and $a\geq\varepsilon$, function $x\mapsto h_i(x|a)$ is continuous. Let $a,b\geq\varepsilon$ such that $a\neq b$. Suppose that:
\[\forall i, \quad h_i(x|a)=h_i(x|b) \qquad \forall x\geq 0.\]
This entails that:
\[a^2b^4-a^4b2+(b^4-a^4)y_i^2+\left(a^2e^{2x}+2b^2e^x-b^2e^{2x}-2a^2e^x\right)y_i^4=0,\qquad i=1,\cdots,n.\]
This is a polynomial on $y_i$ of degree 4 which coincides with the zero polynomial on $n$ points. If there exists 5 distinct observations\footnote{If one uses the point $x=0$, the result follows directly without supposing the existence of distinct observations.}, then the two polynomials will have the same coefficients. Hence, we have $b^4-a^4=0$. This implies that $a=b$ since they are both positive real numbers. We conclude that $D_{\psi}(a,b)=0$ whenever $a=b$ which is equivalent to assumption A3. Proposition \ref{prop:PhiDiffConverge} can now be applied to deduce that sequence $(a^k)$ defined by (\ref{eqn:DivergenceAlgo}) (with $\phi^k$ replaced by $a^k$) is well defined and bounded. Furthermore, it verifies $a^{k+1}-a^k\rightarrow 0$, and the limit of any convergent subsequence is a generalized stationary point of $\hat{D}(p_{a},p_{a^*})$. The existence of such subsequence is guaranteed by the compactness of $\Phi^0$ and the fact that $\forall k, a^k\in\Phi^0$.
\subsubsection{Cauchy model with both parameters}
The model is now defined by:
\[p_{(a,x_0)}(y) = \frac{a}{\pi(a^2+(y-x_0)^2)},\quad a\geq\varepsilon>0, x_0\in\mathbb{R}. \]
Formula (\ref{eqn:DivergenceCauchy}) of the dual representation of the Chi square divergence becomes:
\begin{equation}
\hat{D}(p_{a,x_0},p_{a^*,x_0}) = \sup_{b\geq\varepsilon,x_1\in\mathbb{R}}\left\{\frac{(a^2+b^2+(x_1-x_0)^2)}{2ab} - \frac{1}{2n}\sum_{i=1}^n{\frac{a^2(b^2+(y_i-x_1)^2)^2}{b^2(a^2+(y_i-x_0)^2)^2}}\right\} - \frac{1}{2}.
\label{eqn:DivergenceCauchyAllpara}
\end{equation}
Let $f(a,b,x_0,x_1)$ be the optimized function in the previous formula. This time, it does not seem easy to prove that the supremum can be calculated on a compact set. We, therefore, work on the second approach to study continuity of $\hat{D}(p_{a,x_0},p_{a^*,x_0})$, i.e. level-boundedness approach (see paragraph (\ref{para:LevelBoundFun})). For $a$, let $(a-\tilde{a},a+\tilde{a})\subset[\varepsilon,\infty)$ be an open neighborhood around $a$, and for $x_0$, let $(x_0-\tilde{x},x_0+\tilde{x})$ be an open neighborhood around $x_0$. It is clear that as either $b\rightarrow\infty$ or $x_1\rightarrow\pm\infty$, we have $f(a,b,x_0,x_1)\rightarrow -\infty$ since the first term in $f$ is of order $b$ (resp. $x_1^2$) whereas the second term in $f$ is of order $b^2$ (resp. $x_1^4$) as long as $a$ is bounded away from zero and $x_0$ is supposed to be bounded. Finally, when both $b$ and $x_1$ goes to infinity, the important terms in calculating the limit are of order $b-b^2$ and $\frac{x_1^2}{b} - \left(\frac{x_1^2}{b}\right)^2$. Hence the limit is \emph{a fortiori} $-\infty$. We conclude that:
\[f(a,b,x_0,x_1)\xrightarrow[\|(b,x_1)\|\rightarrow\infty]{}-\infty\]
Now that $f$ is level-bounded in $(b,x_1)$ locally uniformly in $(a,x_0)$, and since $f$ is readily continuous (so it is upper semicontinuous), all ingredients for Theorem \ref{theo:levelbounded} part $(a)$ are ready. Hence $\hat{D}(p_{a,x_0},p_{a^T,x_0^T})$ is strictly continuous\footnote{In order to get the same results we have on lower-$\mathcal{C}^1$ functions, we need to prove that $\hat{D}(p_{a,x_0},p_{a^T,x_0^T})$ is also regular in the sense of Clarke at all points of its domain; a result which we get by a theorem of Rademarcher, see \citep{Rockafellar} Chapter 9.}.\\
Now that $\hat{D}(p_{a,x_0},p_{a^*,x_0})$ is continuous, we may use analogous arguments to those given in the previous paragraph to prove closedness and boundedness of $\Phi^0$. Boundedness is treated a bit differently since the supremum is no longer calculated over a bounded set. By definition of the supremum, one can write:
\begin{eqnarray*}
\sup_{b\geq\varepsilon,x_1\in\mathbb{R}}\left\{\frac{((a^l)^2+b^2+(x_1-x^l_0)^2)}{2a^lb} - \frac{1}{2n}\sum_{i=1}^n{\frac{(a^l)^2(b^2+(y_i-x_1)^2)^2}{b^2((a^l)^2+(y_i-x^l_0)^2)^2}}\right\} - \frac{1}{2} \geq \\
\left\{\frac{((a^l)^2+b'^2+(x'_1-x_0^l)^2)}{2a^lb'} - \frac{1}{2n}\sum_{i=1}^n{\frac{(a^l)^2(b'^2+(y_i-x'_1)^2)^2}{b'^2((a^l)^2+(y_i-x_0^l)^2)^2}}\right\} - \frac{1}{2}
\end{eqnarray*}
for any $b'\geq \varepsilon$ and $x'_1\in\mathbb{R}$. As the sequence $(a^l,x_0^l)$ goes to infinity, the second hand of the previous inequality tends to infinity. Hence the limit of the left hand is also infinity. Thus, $\lim_{l\rightarrow\infty} \hat{D}(p_{a^l,x_0^l},p_{a^T,x_0^T}) = \infty$. We conclude that by choosing any finite initialization, the set $\Phi^0$ becomes bounded. As we could not give any argument about the differentiability of $\hat{D}(p_{a,x_0},p_{a^*,x_0})$, the only theoretical results we may state are about the subgradient of $\hat{D}(p_{a,x_0},p_{a^*,x_0})$.\\
Finally, we prove that $D_{\psi}((a,x_0),(b,x_1))>0$ as $(a,x_0)\neq (b,x_1)$. We use Lemma \ref{Lem:Tseng}. Let $x=0$; a point at which $h_i$ is (right)\footnote{In the proof of the lemma, we use continuity to deduce a certain result in a neighborhood of a point at which function $h_i$ is continuous. Here, right continuity still gives us a neighborhood of the form $[x,x+\epsilon)$ which suffices to complete the proof.} continuous. We need to prove that there exists $i$ such that if $h_i(0|a,x_0)= h_i(0|b,x_1)$ then $a=b$ and $x_0=x_1$. We have $h_i(0|a,x_0)= h_i(0|b,x_1)$ is equivalent to
\[\frac{(y_i-x_0)^2(a^2+(y_i-x_0)^2)}{(a^2+(y_i-x_0)^2)^2} = \frac{(y_i-x_1)^2(b^2+(y_i-x_1)^2)}{(b^2+(y_i-x_1)^2)^2}, \quad \forall i\in\{1,\cdots,n\},\]
or equivalently
\begin{multline*}
\left(b^2+x_0^2+x_1^2+4x_1x_0\right)y_i^2-2\left(b^2x_0+x_0x_1^2+x_1x_0^2\right)y_i+b^2x_0^2+x_1^2x_0^2 = \\
\left(a^2+x_1^2+x_0^2+4x_0x_1\right)y_i^2-2\left(a^2x_1+x_1x_0^2+x_0x_1^2\right)y_i+a^2x_1^2+x_0^2x_1^2.
\end{multline*}
Suppose that there are at least three distinct observations. The previous identities can be rewritten as an identity between two polynomial of degree 2 in $y_i$. Since they are equal at three distinct roots, they must be identical and all coefficients are equal with their corresponding ones. The coefficient of $y_i^2$ suffices to deduce that is $a=b$. Identify now the coefficients of $y_i$ to get that $b^2x_0=a^2x_1$. Since $a,b>0$, then $x_0=x_1$.\\
We finally conclude using Proposition \ref{prop:PhiDiffConverge} that if we use algorithm (\ref{eqn:DivergenceAlgo}) or algorithm (\ref{eqn:DivergenceAlgoSimp1},\ref{eqn:DivergenceAlgoSimp2}), the distance between two consecutive terms of the sequence $(a^k,x_0^k)_k$ tends to 0 and any limit point of the sequence is a generalized stationary point of $\hat{D}(p_{a,x_0},p_{a^T,x_0^T})$.
%
%

\section{Simulation study}\label{sec:SimulationsProximal}
We summarize the results of 100 experiments on $100$-samples (with and without outliers) from two-component Gaussian and Weibull mixtures by giving the average of the error committed with the corresponding standard deviation. The error criterion is mainly the total variation distance (TVD) defined by (\ref{eqn:TVDError}). We also provide for the Gaussian mixture the values of the $\chi^2$ divergence between the estimated model and the true mixture, defined by (\ref{eqn:Chi2Error}), since it gave infinite values for the Weibull experiment.\\
We used different $\varphi-$divergences to estimate the parameters and compared the performances of the two dual formulas of estimating a $\varphi-$divergence (\ref{eqn:DivergenceDef}) and (\ref{eqn:EmpiricalNewDualForm}). We also included the MDPD of \cite{BasuMPD} defined by (\ref{eqn:MDPDdef}) for a tradeoff parameter $a=0.5$. This parameter resulted in very good results throughout all simulations carried in the previous chapter. Other estimators of the divergence could also be considered in a future work, see the simulations of the previous chapter for a detailed comparison between these estimators. For the Gaussian mixture, we used the Pearson's $\chi^2$ and the Hellinger divergences, whereas in the Weibull mixture, we used the Neymann's $\chi^2$ and the Hellinger divergences. For the proximal term, we used $\psi(t)=\frac{1}{2}(\sqrt{t}-1)^2$. We illustrate also the performance of the EM method in the light of our method, i.e. using initializations verifying conditions (\ref{eqn:TwoGaussMixCond}) for the Gaussian mixture and conditions (\ref{eqn:ConditionWeibullMix}) for the Weibull one. When outliers were added, these initializations did not give always good results and the convergence of the proportion was towards the border $\eta=0.1$ or $1-\eta=0.9$. In such situations, the EM algorithm was initialized using another starting point manually.\\
We used the Nelder-Mead algorithm (see \citep{NelderMead}) for all optimization calculus. The method proved to be more efficient in our context than other optimization algorithms although having a slow convergence. Such method is derivative-free and applies even if the the objective function is not differentiable which may be the case of the estimated divergence defined through (\ref{eqn:DivergenceDef}). The Nelder-Mead algorithm is known to give good results in problems with dimension at least 2 and does not perform well in dimension 1. We thus used Brent's method in such cases. It is also a derivative-free method which works only in a compact subset from $\mathbb{R}$. The calculus was done under the statistical tool \citep{Rtool}. In what concerns numerical integrations, see Section \ref{sec:Simulations}.
\subsection{The two-component Gaussian mixture revisited}\label{Example:DivergenceMixture}
We consider the Gaussian mixture (\ref{eqn:GaussMixModel}) presented earlier with true parameters $\lambda=0.35,\mu_1=2,\mu_2=1.5$ and fixed variances $\sigma_1=\sigma_2=1$. Since we are using an error function criterion, label-switching problems do not interfere. Figure (\ref{fig:DecreaseDivGaussChi2Chi2}) shows the values of the estimated divergence for both formulas (\ref{eqn:DivergenceDef}) and (\ref{eqn:EmpiricalNewDualForm}) on a logarithmic scale at each iteration of the algorithm. The 1-step algorithm refers to algorithm (\ref{eqn:DivergenceAlgo}), whereas 2-step refers to algorithm (\ref{eqn:DivergenceAlgoSimp1},\ref{eqn:DivergenceAlgoSimp2}). We omitted the initial point in order to produce a clear image of the decrease of the objective function. For the kernel-based dual formula, we used a Gaussian kernel. Results are given in table (\ref{tab:ErrGauss100RunEx}).\\
We used the same data simulated in paragraph \ref{subsec:GaussMix}, so that contamination was done by adding in the original sample to the 5 lowest values random observations from the uniform distribution $\mathcal{U}[-5,-2]$. We also added to the 5 largest values random observations from the uniform distribution $\mathcal{U}[2,5]$. Results are presented in table (\ref{tab:ErrGauss00RunExOutl}).\\
It is clear that the kernel-based MD$\varphi$DE is more robust than the EM algorithm and the classical MD$\varphi$DE for both the Pearson's $\chi^2$ and the Hellinger divergences. Differences between the two divergences are not significant for both estimation methods of the divergence. Besides, in comparison with the results obtained with a direct optimization in paragraph \ref{subsec:GaussMix}, we find no significant differences. The proximal point algorithm wored as well on the density power divergence. The MDPD produced robust estimates with minor differences with respect to the kernel-based MD$\varphi$DE in favor of the former.

\begin{figure}[ht]
\centering
\includegraphics[scale = 0.3]{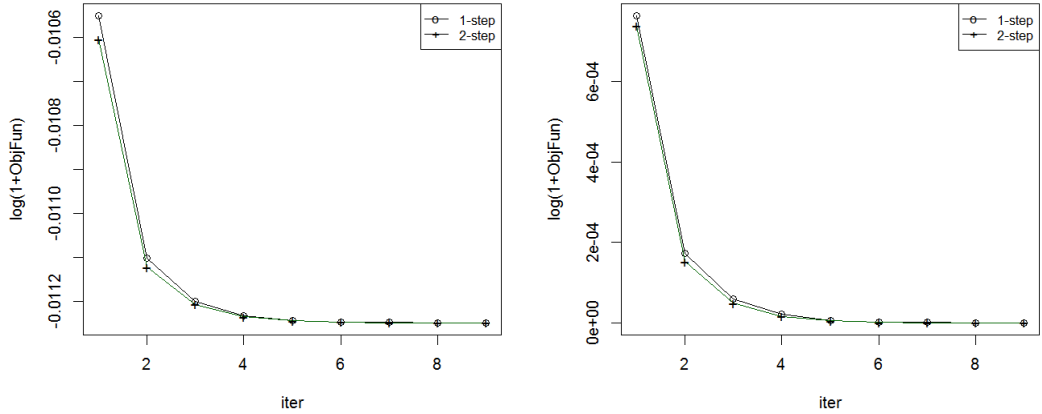}
\caption{Decrease of the (estimated) Hellinger divergence between the true density and the estimated model at each iteration in the Gaussian mixture. The figure to the left is the curve of the values of the kernel-based dual formula (\ref{eqn:EmpiricalNewDualForm}). The figure to the right is the curve of values of the classical dual formula (\ref{eqn:DivergenceDef}). Values are taken at a logarithmic scale $\log(1+x)$.}
\label{fig:DecreaseDivGaussChi2Chi2}
\end{figure}

\begin{table}[hp]
\centering
\begin{tabular}{|c|c|c|c|}
\hline
\multicolumn{2}{|c|}{\multirow{2}{2.5cm}{Estimation method}} & \multicolumn{2}{|c|}{Error criterion}\\
\cline{3-4}
 \multicolumn{2}{|c|}{} & $\sqrt{\chi^2}$ & TVD\\
 \hline
 \hline
\multicolumn{4}{|c|}{Chi square} \\
 \hline
 \hline
\multirow{2}{2.5cm}{Algorithm (\ref{eqn:DivergenceAlgo})}& MD$\varphi$DE & 0.108, sd = 0.052 & 0.061, sd = 0.029\\
& kernel-based MD$\varphi$DE & 0.118 , sd = 0.052 & 0.066 ,sd= 0.027\\
\hline
\multirow{2}{2.5cm}{Algorithm (\ref{eqn:DivergenceAlgoSimp1},\ref{eqn:DivergenceAlgoSimp2})}& MD$\varphi$DE & 0.108, sd = 0.052 & 0.061, sd = 0.029\\
& kernel-based MD$\varphi$DE & 0.118, sd = 0.051 & 0.066 ,sd= 0.027\\
\hline
\hline
\multicolumn{4}{|c|}{Hellinger} \\
\hline
\hline
\multirow{2}{2.5cm}{Algorithm (\ref{eqn:DivergenceAlgo})}& MD$\varphi$DE & 0.108, sd = 0.052 & 0.050 , sd=0.025\\
& kernel-based MD$\varphi$DE & 0.113, sd = 0.044 & 0.064 ,sd=0.025\\
\hline
\multirow{2}{3cm}{Algorithm (\ref{eqn:DivergenceAlgoSimp1},\ref{eqn:DivergenceAlgoSimp2})}& MD$\varphi$DE & 0.108, sd = 0.052 & 0.061, sd = 0.029\\
& kernel-based MD$\varphi$DE & 0.113, sd = 0.045 & 0.064 ,sd=0.025\\
\hline
\hline
\multicolumn{2}{|c|}{MDPD $a=0.5$ - Algorithm (\ref{eqn:DivergenceAlgo})} & 0.117, sd = 0.049 & 0.065, sd = 0.025 \\
\multicolumn{2}{|c|}{MDPD $a=0.5$ - Algorithm (\ref{eqn:DivergenceAlgoSimp1},\ref{eqn:DivergenceAlgoSimp2})}  & 0.117, sd = 0.047 & 0.065, sd = 0.025 \\
\hline
\hline
\multicolumn{2}{|c|}{EM} 	& 0.113, sd = 0.044 & 0.064 , sd = 0.025\\
\hline
\end{tabular}
\caption{The mean value of errors committed in a 100-run experiment with the standard deviation. No outliers are considered here. The divergence criterion is the Chi square divergence or the Hellinger. The proximal term is calculated with $\psi(t) = \frac{1}{2} (\sqrt{t}-1)^2$.}
\label{tab:ErrGauss100RunEx}
\end{table}
\begin{table}[hp]
\centering
\begin{tabular}{|c|c|c|c|}
\hline
\multicolumn{2}{|c|}{\multirow{2}{2.5cm}{Estimation method}} & \multicolumn{2}{|c|}{Error criterion}\\
\cline{3-4}
 \multicolumn{2}{|c|}{} & $\chi^2$ & TVD\\  \hline
 \hline
\multicolumn{4}{|c|}{Chi square} \\
 \hline
 \hline
\multirow{2}{2.5cm}{Algorithm (\ref{eqn:DivergenceAlgo})}& MD$\varphi$DE & 0.334, sd = 0.097 & 0.146,sd=0.036\\
 & kernel-based MD$\varphi$DE & 0.149 , sd = 0.059 & 0.084 ,sd=0.033\\
\hline
\multirow{2}{3cm}{Algorithm (\ref{eqn:DivergenceAlgoSimp1},\ref{eqn:DivergenceAlgoSimp2})}& MD$\varphi$DE & 0.333, sd = 0.097 & 0.149, sd = 0.033\\
& kernel-based MD$\varphi$DE & 0.149 , sd = 0.059 & 0.084, sd=0.033\\
\hline
\hline
\multicolumn{4}{|c|}{Hellinger} \\
\hline
\hline
\multirow{2}{2.5cm}{Algorithm (\ref{eqn:DivergenceAlgo})}& MD$\varphi$DE & 0.321, sd = 0.096 & 0.146, sd=0.034\\
& kernel-based MD$\varphi$DE & 0.155 , sd = 0.059 & 0.087 ,sd=0.033\\
\hline
\multirow{2}{3cm}{Algorithm (\ref{eqn:DivergenceAlgoSimp1},\ref{eqn:DivergenceAlgoSimp2})}& MD$\varphi$DE & 0.322, sd = 0.097 & 0.147, sd = 0.034\\
& kernel-based MD$\varphi$DE & 0.156 , sd = 0.059 & 0.087 ,sd=0.033\\
\hline
\hline
\multicolumn{2}{|c|}{MDPD $a=0.5$ - Algorithm (\ref{eqn:DivergenceAlgo})}  & 0.129, sd = 0.049 & 0.065, sd = 0.025 \\
\multicolumn{2}{|c|}{MDPD $a=0.5$ - Algorithm (\ref{eqn:DivergenceAlgoSimp1},\ref{eqn:DivergenceAlgoSimp2})}  & 0.138, sd = 0.053 & 0.078, sd = 0.030 \\
\hline
\hline
\multicolumn{2}{|c|}{EM} & 0.335, sd = 0.102 & 0.150, sd = 0.034 \\
\hline
\end{tabular}
\caption{Error committed in estimating the parameters of a 2-component Gaussian mixture with $10\%$ outliers. The divergence criterion is the Chi square divergence or the Hellinger. The proximal term is calculated with $\psi(t) = \frac{1}{2} (\sqrt{t}-1)^2$.}
\label{tab:ErrGauss00RunExOutl}
\end{table}
\subsection{The two-component Weibull mixture model revisited}
We consider the Weibull mixture (\ref{eqn:WeibullMixture}) with shapes $\phi_1 = 0.5, \phi_2 = 3$ and $\lambda = 0.35$ which are supposed to be unknown during the estimation procedure. Here, we denote $\phi=(\phi_1,\phi_2)$ ($\alpha=(\alpha_1,\alpha_2)$, respectively) the shapes of the Weibull mixture model $p_{(\lambda,\phi)}$ ($p_{(\lambda,\alpha)}$, respectively). We used the same data simulated in paragraph \ref{subsec:TwoWeibullSimuPara}, so that contamination was done by replacing 10 observations of each sample chosen randomly by 10 i.i.d. observations drawn from a Weibull distribution with shape $\nu = 0.9$ and scale $\sigma = 3$. Results are presented in tables (\ref{tab:ErrWeibull100RunEx}) and (\ref{tab:ErrWeibull100RunOutliersEx}).\\ 
Manipulating the optimization procedure for the Neymann's $\chi^2$ was difficult because of the numerical integration calculus and the fact that for a subset of $\Phi$ (or $\Phi\times\Phi$ according to whether we use the estimator (\ref{eqn:DivergenceDef}) or the estimator (\ref{eqn:EmpiricalNewDualForm})) the integral term produces infinity, see paragraph \ref{subsec:WeibullMixEx} for more details. We therefore needed to keep the optimization from approaching the border in order to avoid numerical problems. For the Hellinger divergence, there is no particular remark.\\
For the case of the estimated divergence (\ref{eqn:DivergenceDef}), if $\gamma=-1$, i.e. the Neymann $\chi^2$, we need that $\alpha_1 < \phi_1/2$, otherwise the integral term is equal to infinity. In order to avoid numerical complications, we optimized over $\alpha_1\leq 0.05+\phi_1/2$. The value $0.05$ ensures a small deviation from the border. \\
For the case of the estimated divergence (\ref{eqn:EmpiricalNewDualForm}), we used a Gaussian kernel for the Hellinger divergence. For the Neymann's $\chi^2$ divergence, we used the Epanechnikov's kernel to avoid problems at infinity. Besides, it permits to integrate only over $[0,\max(Y)+w]$, where $w$ is the window of the kernel, instead of $[0,\infty)$. In order to avoid problems near zero, it is necessary that $\min(\phi_1,\phi_2)<1-\frac{1}{\gamma}=2$.\\
\paragraph{Comments on the tables:} Experimental results show a clear robustness of the estimators calculated using the kernel-based MD$\varphi$DE in comparison to other estimators using the Hellinger divergence. When we are under the model, all estimation methods have the same performance. On the other hand, using the Neymann $\chi^2$ divergence, results are different in the presence of outliers. The classical MD$\varphi$DE calculated using formula (\ref{eqn:DivergenceDef}) shows better robustness than other estimators, but is still not as good as the robustness of the kernel-based MD$\varphi$DE using the Hellinger. Lack of robustness of the kernel-based MD$\varphi$DE is not very surprising since the influence function of the kernel-based MD$\varphi$DE is unbounded when we use the Neymann $\chi^2$ divergence in simple models such as the Gaussian model, see Example \ref{Example:GaussIF}.\\
In what concerns the proximal algorithm, there is no significant difference between the results obtained using the 1-step algorithm (\ref{eqn:DivergenceAlgo}) and the ones obtained using the 2-step algorithm (\ref{eqn:DivergenceAlgoSimp1},\ref{eqn:DivergenceAlgoSimp2}) using the Hellinger divergence. Differences appear when we used the Neymann $\chi^2$ divergence with the classical MD$\varphi$DE. This shows again the difficulty in handling the supermal form of the dual formal (\ref{eqn:DivergenceDef}). Finally, in comparison to the results obtained with a direct optimization in paragraph \ref{subsec:TwoWeibullSimuPara}, there is no significant differences.\\
The proximal-point algorithm worked as well using the density power divergence. The MDPD produced robust and efficient estimates which are slightly better than the results obtained by the kernel-based MD$\varphi$DE using the Hellinger divergence and clearly better than the results obtained using the Neymann $\chi^2$.

\begin{table}[hp]
\centering
\begin{tabular}{|c|c|c|}
\hline
\multicolumn{2}{|c|}{\multirow{2}{2.5cm}{Estimation method}} & Error criterion\\
\cline{3-3}
 \multicolumn{2}{|c|}{} & TVD\\
 \hline
 \hline
\multicolumn{3}{|c|}{Neymann Chi square} \\
 \hline
 \hline
\multirow{2}{2.5cm}{Algorithm (\ref{eqn:DivergenceAlgo})}& MD$\varphi$DE & 0.114 , sd = 0.032\\
									& kernel-based MD$\varphi$DE & 0.057, sd = 0.028\\
\hline
\multirow{2}{2.5cm}{Algorithm (\ref{eqn:DivergenceAlgoSimp1},\ref{eqn:DivergenceAlgoSimp2})}& MD$\varphi$DE &  0.131, sd =  0.042\\
									& kernel-based MD$\varphi$DE & 0.056, sd = 0.026\\
\hline
\hline
\multicolumn{3}{|c|}{Hellinger} \\
\hline
\hline
\multirow{2}{2.5cm}{Algorithm (\ref{eqn:DivergenceAlgo})}& MD$\varphi$DE & 0.059, sd = 0.024\\
									& kernel-based MD$\varphi$DE & 0.057, sd = 0.029 \\
\hline
\multirow{2}{2.5cm}{Algorithm (\ref{eqn:DivergenceAlgoSimp1},\ref{eqn:DivergenceAlgoSimp2})}& MD$\varphi$DE & 0.061, sd = 0.026\\
									& kernel-based MD$\varphi$DE & 0.057, sd = 0.029\\
\hline
\hline
\multicolumn{2}{|c|}{MDPD $a=0.5$ - Algorithm (\ref{eqn:DivergenceAlgo})}  & 0.056, sd = 0.029 \\
\multicolumn{2}{|c|}{MDPD $a=0.5$ - Algorithm (\ref{eqn:DivergenceAlgoSimp1},\ref{eqn:DivergenceAlgoSimp2})}  & 0.056, sd = 0.029 \\
\hline
\hline
\multicolumn{2}{|c|}{EM} & 0.059, sd = 0.024\\
\hline
\end{tabular}
\caption{The mean value of errors committed in a 100-run experiment of a two-component Weibull mixture with the standard deviation. No outliers are considered. The divergence criterion is the Neymann's $\chi^2$ divergence or the Hellinger. The proximal term is taken with $\psi(t) = \frac{1}{2} (\sqrt{t}-1)^2$.}
\label{tab:ErrWeibull100RunEx}
\end{table}

\begin{table}[hp]
\centering
\begin{tabular}{|c|c|c|}
\hline
\multicolumn{2}{|c|}{\multirow{2}{2.5cm}{Estimation method}} & Error criterion\\
\cline{3-3}
 \multicolumn{2}{|c|}{} & TVD\\
 \hline
 \hline
\multicolumn{3}{|c|}{Neymann Chi square} \\
 \hline
 \hline
\multirow{2}{2.5cm}{Algorithm (\ref{eqn:DivergenceAlgo})}& MD$\varphi$DE & 0.085,  sd = 0.036\\
									& kernel-based MD$\varphi$DE & 0.138, sd = 0.066 \\
\hline
\multirow{2}{2.5cm}{Algorithm (\ref{eqn:DivergenceAlgoSimp1},\ref{eqn:DivergenceAlgoSimp2})}& MD$\varphi$DE & 0.096, sd = 0.057\\
									& kernel-based MD$\varphi$DE & 0.127, sd = 0.056\\
\hline
\hline
\multicolumn{3}{|c|}{Hellinger} \\
\hline
\hline
\multirow{2}{2.5cm}{Algorithm (\ref{eqn:DivergenceAlgo})}& MD$\varphi$DE & 0.120, sd = 0.034\\
									& kernel-based MD$\varphi$DE & 0.068, sd = 0.034 \\
\hline
\multirow{2}{2.5cm}{Algorithm (\ref{eqn:DivergenceAlgoSimp1},\ref{eqn:DivergenceAlgoSimp2})}& MD$\varphi$DE & 0.121, sd = 0.034\\
									& kernel-based MD$\varphi$DE & 0.068, sd = 0.034\\
\hline
\hline
\multicolumn{2}{|c|}{MDPD $a=0.5$ - Algorithm (\ref{eqn:DivergenceAlgo})}  & 0.060, sd = 0.029 \\
\multicolumn{2}{|c|}{MDPD $a=0.5$ - Algorithm (\ref{eqn:DivergenceAlgoSimp1},\ref{eqn:DivergenceAlgoSimp2})}  & 0.061, sd = 0.029 \\
\hline
\hline
\multicolumn{2}{|c|}{EM} & 0.129, sd = 0.046\\
\hline
\end{tabular}
\caption{The mean value of errors committed in a 100-run experiment of a two-component Weibull mixture with the standard deviation. $10\%$ outliers are considered. The divergence criterion is the Neymann's $\chi^2$ divergence or the Hellinger. The proximal term is taken with $\psi(t) = \frac{1}{2} (\sqrt{t}-1)^2$.}
\label{tab:ErrWeibull100RunOutliersEx}
\end{table}
\clearpage

\section{Conclusions}
We presented in this chapter a proximal-point algorithm whose objective was the minimization of (an estimate of) a $\varphi-$divergence. The set of algorithms proposed here contains by construction the EM algorithm. We provided in several examples a proof of convergence of the EM algorithm in the spirit of our approach. We also showed how we may prove convergence for the two estimates of the $\varphi-$divergence (\ref{eqn:DivergenceDef}) and (\ref{eqn:EmpiricalNewDualForm}). We reestablished similar results to the ones in \cite{Tseng} in the context of $\varphi-$divergences, and provided a new result by relaxing the identifiability condition on the proximal term. Although our simulations do not permit to confirm the practical gain in comparison to direct methods, they are sufficient to conclude that the proximal algorithm works. The two-step algorithm (\ref{eqn:DivergenceAlgoSimp1},\ref{eqn:DivergenceAlgoSimp2}) showed only slight deterioration in performance comparing to the original one (\ref{eqn:DivergenceAlgo}) which is very encouraging especially that the dimension of the optimization is reduced at each step. Simulations have shown again the robustness of $\varphi-$divergences against outliers in comparison to the MLE.

%
%

\section{Appendix: Proofs}
\subsection{Proof of Proposition \ref{prop:DecreaseDphi}}\label{AppendProximal:Prop1}
\begin{proof}
\underline{We prove $(a)$}. \textbf{For the first algorithm} defined by (\ref{eqn:DivergenceAlgo}), we have by definition of the arginf:
\[\hat{D}_{\varphi}(p_{\phi^{k+1}},p_{\phi_T}) + D_{\psi}(\phi^{k+1},\phi^k) \leq \hat{D}_{\varphi}(p_{\phi^k},p_{\phi_T}) + D_{\psi}(\phi^k,\phi^k).\]
We use the fact that $D_{\psi}(\phi^k,\phi^k)=0$ for the right hand and that $D_{\psi}(\phi^{k+1},\phi^k)\geq 0$ for the left hand side of the previous inequality. Hence $\hat{D}_{\varphi}(p_{\phi^{k+1}},p_{\phi_T})\leq \hat{D}_{\varphi}(p_{\phi^k},p_{\phi_T})$.\\
\textbf{For the simplified algorithm} defined by (\ref{eqn:DivergenceAlgoSimp1}, \ref{eqn:DivergenceAlgoSimp2}), recurrence (\ref{eqn:DivergenceAlgoSimp1}) and the definition of the arginf give:
\begin{eqnarray}
\hat{D}_{\varphi}(p_{\lambda^{k+1},\theta^k},p_{\phi_T}) + D_{\psi}((\lambda^{k+1},\theta^k),\phi^k) & \leq & \hat{D}_{\varphi}(p_{\lambda^k,\theta^k},p_{\phi_T}) + D_{\psi}((\lambda^k,\theta^k),\phi^k) \nonumber\\
					& \leq & \hat{D}_{\varphi}(p_{\lambda^k,\theta^k},p_{\phi_T}).
\label{eqn:PartialDecrease1}					
\end{eqnarray}
The second inequality is obtained using the fact that $D_{\psi}(\phi,\phi)=0$. Using recurrence (\ref{eqn:DivergenceAlgoSimp2}), we get:
\begin{eqnarray}
\hat{D}_{\varphi}(p_{\lambda^{k+1},\theta^k},p_{\phi_T}) + D_{\psi}((\lambda^{k+1},\theta^k),\phi^k) & \geq & \hat{D}_{\varphi}(p_{\lambda^{k+1},\theta^{k+1}},p_{\phi_T}) + D_{\psi}((\lambda^{k+1},\theta^{k+1}),\phi^k) \\
  					& \geq & \hat{D}_{\varphi}(p_{\lambda^{k+1},\theta^{k+1}},p_{\phi_T}).
\label{eqn:PartialDecrease2}
\end{eqnarray}
The second inequality is obtained using the fact that $D(\phi|\phi')\geq 0$. The conclusion is reached by combining the two inequalities (\ref{eqn:PartialDecrease1}) and (\ref{eqn:PartialDecrease2}).\\
\underline{We prove $(b)$}. Using the decreasing property previously proved in (a), we have by recurrence $\forall k, \hat{D}_{\varphi}(p_{\phi^{k+1}},p_{\phi_T})\leq \hat{D}_{\varphi}(p_{\phi^k},p_{\phi_T})\leq\cdots \leq \hat{D}_{\varphi}(p_{\phi^0},p_{\phi_T})$. The result follows for both algorithms directly by definition of $\Phi^0$.\\
\underline{We prove $(c)$}. By induction on $k$. For $k=0$, clearly $\phi^0 = (\lambda^0,\theta^0)$ is well defined (a choice we make\footnote{The choice of the initial point of the sequence may influence the convergence of the sequence. See the example of the Gaussian mixture in paragraph (\ref{Example:GaussMix}).}). Suppose for some $k\geq 0$ that $\phi^k = (\lambda^k,\theta^k)$ exists. \textbf{For the first algorithm} defined by (\ref{eqn:DivergenceAlgo}), we prove that the infimum is attained in $\Phi^0$. Let $\phi\in\Phi$ be any vector at which the value of the optimized function has a value less than its value at $\phi^k$, i.e. $\hat{D}_{\varphi}(p_{\phi},p_{\phi_T}) + D_{\psi}(\phi,\phi^k)\leq \hat{D}_{\varphi}(p_{\phi^k},p_{\phi_T}) + D_{\psi}(\phi^k,\phi^k)$. We have:
\begin{eqnarray*}
\hat{D}_{\varphi}(p_{\phi},p_{\phi_T}) & \leq & \hat{D}_{\varphi}(p_{\phi},p_{\phi_T}) + D_{\psi}(\phi,\phi^k) \\ 
& \leq & \hat{D}_{\varphi}(p_{\phi^k},p_{\phi_T}) + D_{\psi}(\phi^k,\phi^k) \\
  & \leq & \hat{D}_{\varphi}(p_{\phi^k},p_{\phi_T}) \\
  & \leq & \hat{D}_{\varphi}(p_{\phi^0},p_{\phi_T}).
\end{eqnarray*}
The first line follows from the non negativity of $D_{\psi}$. As $\hat{D}_{\varphi}(p_{\phi},p_{\phi_T})\leq \hat{D}_{\varphi}(p_{\phi^0},p_{\phi_T})$, then $\phi\in\Phi^0$. Thus, the infimum can be calculated for vectors in $\Phi^0$ instead of $\Phi$. Since $\Phi^0$ is compact and the optimized function is lower semicontinuous (the sum of two lower semicontinuous functions), then  the infimum exists and is attained in $\Phi^0$. We may now define $\phi^{k+1}$ to be a vector whose corresponding value is equal to the infimum.\\
\textbf{For the second algorithm} defined by (\ref{eqn:DivergenceAlgoSimp1},\ref{eqn:DivergenceAlgoSimp2}). Similarly, the infimum in (\ref{eqn:DivergenceAlgoSimp1}) can be calculated on $\lambda$'s such that $(\lambda,\theta^k)\in\Phi^0$. Indeed, suppose there exists a $\lambda$ at which the value of the optimized function is less than its value at $\lambda^k$, i.e. $\hat{D}_{\varphi}(p_{\lambda,\theta^k},p_{\phi_T}) + D_{\psi}((\lambda,\theta^k),\phi^k) \leq \hat{D}_{\varphi}(p_{\lambda^k,\theta^k},p_{\phi_T}) + D_{\psi}((\lambda^k,\theta^k),\phi^k)$. We have:
\begin{eqnarray*}
\hat{D}_{\varphi}(p_{\lambda,\theta^k},p_{\phi_T}) & \leq & \hat{D}_{\varphi}(p_{\lambda,\theta^k},p_{\phi_T}) + D_{\psi}((\lambda,\theta^k),\phi^k) \\
& \leq & \hat{D}_{\varphi}(p_{\lambda^k,\theta^k},p_{\phi_T}) + D_{\psi}((\lambda^k,\theta^k),\phi^k) \\
  & \leq & \hat{D}_{\varphi}(p_{\lambda^k,\theta^k},p_{\phi_T}) \\
  & \leq & \hat{D}_{\varphi}(p_{\phi^0},p_{\phi_T}).
\end{eqnarray*}
This means that $(\lambda,\theta^k)\in\Phi^0$ and that the infimum needs not to be calculated for all values of $\lambda$ in $\Phi$, and can be restrained onto values which verify $(\lambda,\theta^k)\in\Phi^0$.\\
Define now $\Lambda_k = \{\lambda\in [0,1]^s| (\lambda,\theta^k)\in\Phi^0\}$. First of all, $\lambda^k\in\Lambda_k$ since $(\lambda^k,\theta^k)\in\Phi^0$. Therefore, $\Lambda_k$ is not empty. Moreover, it is compact. Indeed, let $(\lambda^l)_l$ be a sequence of elements of $\Lambda_k$, then the sequence $((\lambda^l,\theta^k))_l$ is a sequence of elements of $\Phi^0$. By compactness of $\Phi^0$, there exists a subsequence which converges in $\Phi^0$ to an element of the form $(\lambda^{\infty},\theta^k)$ which clearly belongs to $\Lambda_k$. This proves that $\Lambda_k$ is compact. Finally, since by assumption A0, the optimized function is lower semicontinuous so that it attains its infimum on the compact set $\Lambda_k$. We may now define $\lambda^{k+1}$ as any vector verifying this infimum.\\
The second part of the proof treats the definition of $\theta^{k+1}$. Let $\theta$ be any vector such that $(\lambda^{k+1},\theta)\in\Phi$ and at which the value of the optimized function in (\ref{eqn:DivergenceAlgoSimp2}) is less than its value at $\phi^k$. We have
\begin{eqnarray*}
\hat{D}_{\varphi}(p_{\lambda^{k+1},\theta},p_{\phi_T}) & \leq & \hat{D}_{\varphi}(p_{\lambda^{k+1},\theta},p_{\phi_T}) + D_{\psi}((\lambda^{k+1},\theta),\phi^k) \\
& \leq & \hat{D}_{\varphi}(p_{\lambda^{k+1},\theta^k},p_{\phi_T}) + D_{\psi}((\lambda^{k+1},\theta^k),\phi^k) \\
& \leq & \hat{D}_{\varphi}(p_{\lambda^k,\theta^k},p_{\phi_T}) + D_{\psi}((\lambda^k,\theta^k),\phi^k) \\
& \leq & \hat{D}_{\varphi}(p_{\lambda^k,\theta^k},p_{\phi_T}) \\
& \leq & \hat{D}_{\varphi}(p_{\phi^0},p_{\phi_T})
\end{eqnarray*}
The third line comes from the previous definition of $\lambda^{k+1}$ as an infimum of (\ref{eqn:DivergenceAlgoSimp1}). This means that $(\lambda^{k+1},\theta)\in\Phi^0$, and that the infimum in (\ref{eqn:DivergenceAlgoSimp2}) can be calculated with respect to values $\theta$ which verifies $(\theta,\lambda^{k+1})\in\Phi^0$. Define now $\Theta_k = \{\theta\in\mathbb{R}^{d-s}| (\lambda^{k+1},\theta)\in\Phi^0\}$. One can prove analogously to $\Lambda_k$, that it is compact. The optimized function in (\ref{eqn:DivergenceAlgoSimp2}) is, by assumption A0, lower semicontinuous so that its infimum is attained on the compact $\Theta_k$. We may now define $\theta^{k+1}$ as any vector verifying this infimum.\\
Convergence of the sequence $(\hat{D}_{\varphi}(p_{\phi^k},p_{\phi_T}))_k$ in both algorithms comes from the fact that it is non increasing and bounded. It is non increasing by virtue of (a). Boundedness comes from the lower semicontinuity of $\phi\mapsto\hat{D}_{\varphi}(p_{\phi},p_{\phi_T})$. Indeed, $\forall k, \hat{D}_{\varphi}(p_{\phi^k},p_{\phi_T}) \geq \inf_{\phi\in\Phi^0}\hat{D}_{\varphi}(p_{\phi},p_{\phi_T})$. The infimum of a proper lower semicontinuous function on a compact set exists and is attained on this set. Hence, the quantity $\inf_{\phi\in\Phi^0}\hat{D}_{\varphi}(p_{\phi},p_{\phi_T})$ exists and is finite. This ends the proof.
\end{proof}

\subsection{Proof of Proposition \ref{prop:StationaryPhiDiff}}\label{AppendProximal:Prop2}
\begin{proof} 
\underline{We prove $(a)$}. Let $(\phi^{n_k})_k$ be a convergent subsequence of $(\phi^k)_k$ which converges to $\phi^{\infty}$. First, $\phi^{\infty} \in \Phi^0$, because $\Phi^0$ is closed and the subsequence $(\phi^{n_k})$ is a sequence of elements of $\Phi^0$ (proved in Proposition \ref{prop:DecreaseDphi}.b).\\
Let's show now that the subsequence $(\phi^{n_k+1})$ also converges to $\phi^{\infty}$. We simply have:
\begin{eqnarray*}
\|\phi^{n_k+1} - \phi^{\infty}\| & \leq & \|\phi^{n_k} - \phi^{\infty}\| + \|\phi^{n_k+1} - \phi^{n_k}\|
\end{eqnarray*}
Since $\phi^{k+1} - \phi^k \rightarrow 0$ and $\phi^{n_k} \rightarrow \phi^{\infty}$, we conclude that $\phi^{n_k+1} \rightarrow \phi^{\infty}$.\\
\textbf{Let's start with the first algorithm (\ref{eqn:DivergenceAlgo})}. By definition of $\phi^{n_k+1}$, it verifies the infimum in recurrence (\ref{eqn:DivergenceAlgo}), so that the gradient of the optimized function is zero:
\[\nabla \hat{D}_{\varphi}(p_{\phi^{n_k+1}},p_{\phi_T}) + \nabla D_{\psi}(\phi^{n_k+1},\phi^{n_k}) = 0\]
Using the continuity assumptions A1 and AC of the gradients, one can pass to the limit with no problem:
\[\nabla \hat{D}_{\varphi}(p_{\phi^{\infty}},p_{\phi_T}) + \nabla D_{\psi}(\phi^{\infty},\phi^{\infty}) = 0\]
However, the gradient $\nabla D_{\psi}(\phi^{\infty},\phi^{\infty})=0$ because (recall that $\psi'(1)=0$):
\[\nabla D_{\psi}(\phi,\phi) = \sum_{i=1}^n\int_{\mathcal{X}}{\frac{\nabla h_i(x|\phi)}{h_i(x|\phi)}\psi'\left(\frac{h_i(x|\phi)}{h_i(x|\phi)}\right)h_i(x|\phi)dx}=\sum_{i=1}^n\int_{\mathcal{X}}{\nabla h_i(x|\phi)\psi'(1)dx}\]
Hence the gradient $\nabla D_{\psi}(\phi,\phi)=0$. This implies that $\nabla \hat{D}_{\varphi}(p_{\phi^{\infty}},p_{\phi_T})=0$.\\
\textbf{For the second algorithm (\ref{eqn:DivergenceAlgoSimp1},\ref{eqn:DivergenceAlgoSimp2})}, by definition of $\lambda^{n_k+1}$ and $\theta^{n_k+1}$, they verify the infimum respectively in recurrences (\ref{eqn:DivergenceAlgoSimp1}) and (\ref{eqn:DivergenceAlgoSimp2}). Therefore, the gradient of the optimized function is zero for each step. In other words:
\begin{eqnarray*}
\nabla_{\lambda} \hat{D}_{\varphi}(p_{\lambda^{n_k+1},\theta^{n_k}},p_{\phi_T}) + \nabla_{\lambda} D_{\psi}((\lambda^{n_k+1},\theta^{n_k}),\phi^{n_k}) & = & 0 \\
\nabla_{\theta} \hat{D}_{\varphi}(p_{\lambda^{n_k+1},\theta^{n_k+1}},p_{\phi_T}) + \nabla_{\theta} D_{\psi}((\lambda^{n_k+1},\theta^{n_k+1}),\phi^{n_k}) & = & 0
\end{eqnarray*}
Since both $(\phi^{n_k+1})$ and $(\phi^{n_k})$ converge to the same limit $\phi^{\infty}$, then setting $\phi^{\infty} = (\lambda^{\infty},\theta^{\infty})$, we get $\lambda^{n_k+1}$ and $\lambda^{n_k}$ tends to $\lambda^{\infty}$. We also have $\theta^{n_k+1}$ and $\theta^{n_k}$ tends to $\theta^{\infty}$. The continuity of the two gradients (assumptions A1 and AC) implies that:
\begin{eqnarray*}
\nabla_{\lambda} \hat{D}_{\varphi}(p_{\lambda^{\infty},\theta^{\infty}},p_{\phi_T}) + \nabla_{\lambda} D_{\psi}((\lambda^{\infty},\theta^{\infty}),\phi^{\infty}) & = & 0 \\
\nabla_{\theta} \hat{D}_{\varphi}(p_{\lambda^{\infty},\theta^{\infty}},p_{\phi_T}) + \nabla_{\theta} D_{\psi}((\lambda^{\infty},\theta^{\infty}),\phi^{\infty}) & = & 0
\end{eqnarray*}
However, $\nabla D_{\psi}(\phi,\phi) = 0$, so that $\nabla_{\lambda} \hat{D}_{\varphi}(p_{\phi^{\infty}},p_{\phi_T})=0$ and $\nabla_{\theta} \hat{D}_{\varphi}(p_{\phi^{\infty}},p_{\phi_T})=0$. Hence $\nabla \hat{D}_{\varphi}(p_{\phi^{\infty}},p_{\phi_T})=0$.\\
\underline{We prove (b)}. For the first algorithm, we use again the definition of the arginf. As the optimized function is not necessarily differentiable at the points of the sequence $\phi^k$, a necessary condition for $\phi^{k+1}$ to be an infimum is that 0 belongs to the subgradient of the function on $\phi^{k+1}$. Since $D_{\psi}(\phi,\phi^k)$ is assumed to be differentiable, the optimality condition is translated into:
\[-\nabla D_{\psi}(\phi^{k+1},\phi^k) \in \partial \hat{D}_{\varphi}(p_{\phi^{k+1}},p_{\phi_T})\quad \forall k\]
Since $\hat{D}_{\varphi}(p_{\phi},p_{\phi_T})$ is continuous, then its subgradient is outer semicontinuous (see \citep{Rockafellar} Chap 8, proposition 7). We use the same arguments presented in (a) to conclude the existence of two subsequences $(\phi^{n_k})_k$ and $(\phi^{n_k+1})_k$ which converge to the same limit $\phi^{\infty}$. By definition of outer semicontinuity, and since $\phi^{n_k+1}\rightarrow\phi^{\infty}$, we have:
\begin{equation}
\limsup_{\phi^{n_k+1}\rightarrow\phi^{\infty}} \partial \hat{D}_{\varphi}(p_{\phi^{n_k+1}},p_{\phi_T})\subset \partial \hat{D}_{\varphi}(p_{\phi^{\infty}},p_{\phi_T})
\label{eqn:subgradInc}
\end{equation}
We want to prove that $0\in\limsup_{\phi^{n_k+1}\rightarrow\phi^{\infty}} \partial \hat{D}_{\varphi}(p_{\phi^{n_k+1}},p_{\phi_T})$. By definition of limsup\footnote{We use here the definition corresponding to the outer limit, see \citep{Rockafellar} Chap 4, definition 1 or Chap 5-B.}:
\[\limsup_{\phi\rightarrow\phi^{\infty}} \partial \hat{D}_{\varphi}(p_{\phi},p_{\phi_T}) = \left\{ u|\exists \phi^k\rightarrow\phi^{\infty},\exists u^k\rightarrow u \text{ with } u^k\in \partial \hat{D}_{\varphi}(p_{\phi^k},p_{\phi_T})\right\}\]
In our scenario, $\phi = \phi^{n_k+1}$, $\phi^k = \phi^{n_k+1}$, $u = 0$ and $u^k = \nabla_1 D_{\psi}(\phi^{n_k+1},\phi^{n_k})$. The continuity of $\nabla_1 D_{\psi}$ with respect to both arguments and the fact that the two subsequences $\phi^{n_k+1}$ and $\phi^{n_k}$ converge to the same limit, imply that $u^k\rightarrow\nabla_1 D_{\psi}(\phi^{\infty},\phi^{\infty}) = 0$. Hence $u = 0\in\limsup_{\phi^{n_k+1}\rightarrow\phi^{\infty}} \partial \hat{D}_{\varphi}(p_{\phi^{n_k+1}},p_{\phi_T})$. By inclusion (\ref{eqn:subgradInc}), we get our result:
\[0\in \partial \hat{D}_{\varphi}(p_{\phi^{\infty}},p_{\phi_T})\]
\end{proof}

\subsection{Proof of Proposition \ref{prop:PhiDiffConverge}}\label{AppendProximal:Prop3}
\begin{proof}
The arguments presented are the same for both algorithms (\ref{eqn:DivergenceAlgo}) and (\ref{eqn:DivergenceAlgoSimp1},\ref{eqn:DivergenceAlgoSimp2}). By contradiction, let's suppose that $\phi^{k+1}-\phi^k$ does not converge to 0. There exists a subsequence such that $\|\phi^{N_0(k)+1}-\phi^{N_0(k)}\| > \varepsilon,\; \forall k\geq k_0$. Since $(\phi^k)_k$ belongs to the compact set $\Phi^0$, there exists a convergent subsequence $(\phi^{N_1\circ N_0(k)})_k$ such that $\phi^{N_1\circ N_0(k)}\rightarrow \bar{\phi}$. The sequence $(\phi^{N_1\circ N_0(k)+1})_k$ belongs to the compact set $\Phi^0$, therefore we can extract a further subsequence $(\phi^{N_2\circ N_1\circ N_0(k)+1})_k$ such that $\phi^{N_2\circ N_1\circ N_0(k)+1}\rightarrow \tilde{\phi}$. Besides $\hat{\phi}\neq \tilde{\phi}$. Finally since the sequence $(\phi^{N_1\circ N_0(k)})_k$ is convergent, a further subsequence also converges to the same limit $\bar{\phi}$. We have proved the existence of a subsequence of $(\phi^k)_k$ such that $\phi^{N(k)+1}-\phi^{N(k)}$ does not converge to 0 and such that $\phi^{N(k)+1} \rightarrow \tilde{\phi}$, $\phi^{N(k)} \rightarrow \bar{\phi}$ with $\bar{\phi} \neq \tilde{\phi}$.\\
The real sequence $\hat{D}_{\varphi}(p_{\phi^k},p_{\phi_T})_k$ converges as proved in Proposition \ref{prop:DecreaseDphi}-c. As a result, both sequences $\hat{D}_{\varphi}(p_{\phi^{N(k)+1}},p_{\phi_T})$ and $\hat{D}_{\varphi}(p_{\phi^{N(k)}},p_{\phi_T})$ converge to the same limit being subsequences of the same convergent sequence. In the proof of Proposition \ref{prop:DecreaseDphi}, we can deduce the following inequality:
\begin{equation}
\hat{D}(p_{\lambda^{k+1},\theta^{k+1}},p_{\phi_T}) + D_{\psi}((\lambda^{k+1},\theta^{k+1}),\phi^k) \leq \hat{D}(p_{\lambda^k,\theta^k},p_{\phi_T})
\label{eqn:DivergenceDecreaseSeq}
\end{equation}
which is also verified to any substitution of $k$ by $N(k)$. By passing to the limit on k, we get $D_{\psi}(\tilde{\phi},\bar{\phi}) \leq 0$. However, the distance-like function $D_{\psi}$ is positive, so that it becomes zero. Using assumption A3, $D_{\psi}(\tilde{\phi},\bar{\phi}) = 0$ implies that $\tilde{\phi} = \bar{\phi}$. This contradicts the hypothesis that $\phi^{k+1}-\phi^k$ does not converge to 0.\\
The second part of the proposition is a direct result of Proposition \ref{prop:StationaryPhiDiff}.
\end{proof}
\subsection{Proof of Corollary \ref{Cor:TotalConverg}}\label{AppendProximal:Cor}
\begin{proof}
Since the sequence $(\phi)_k$ is bounded and verifies $\phi^{k+1}-\phi^k\rightarrow 0$, then Theorem 28.1 in \citep{Ostrowski} implies that the set of accumulation points of $(\phi^k)_k$ is a connected compact set. It is not empty since $\Phi^0$ is compact. Let $\phi^{\infty}$ be a limit point of $(\phi^k)_k$. The assumption about strict convexity of $\hat{D}(p_{\phi},p_{\phi_T})$ in a neighborhood of $\phi^{\infty}$ implies that it is isolated in the sense that if there are another limit point $\tilde{\phi}$, then there is $\varepsilon>0$ such that $\|\phi^{\infty} - \tilde{\phi}\|>\varepsilon$. Hence, the set of accumulation points can be written as the union of at least two disjoint open sets which contradicts the connectedness property. Thus, $\phi^{\infty}$ is the only limit point of the sequence $(\phi^k)$. To end the proof, we need to show that the whole sequence converge. By contradiction, if it does not converge, there exists then $\varepsilon>0$ and an infinity of terms which verifies $\|\phi^{N_0(k)} - \phi^{\infty}\|>\varepsilon$. By compactness of $\Phi^0$, one may extract a subsequence of $(\phi^{N_0(k)})_k$, say $(\phi^{N_1\circ N_0(k)})_k$, which converges to some $\hat{\phi}$. Moreover, by continuity of the euclidean norm, $\|\phi^{N_1\circ N_0(k)} - \phi^{\infty}\|\rightarrow \|\hat{\phi} - \phi^{\infty}\|$. Hence $\|\hat{\phi} - \phi^{\infty}\|\geq\varepsilon$. Contradiction is reached by uniqueness of the limit point of the sequence $(\phi^k)_k$. 
\end{proof}
\subsection{Proof of Proposition \ref{prop:NewRes} } \label{AppendProximal:Prop4}
\begin{proof} 
If $(\phi^k)_k$ converges to, say, $\phi^{\infty}$, the result falls simply from Proposition 2.\\
If $(\phi^k)_k$ does not converge. Since $\Phi^0$ is compact and $\forall k, \phi^k\in\Phi^0$ (proved in Proposition 1), there exists a subsequence $(\phi^{N_0(k)})_k$ such that $\phi^{N_0(k)}\rightarrow\tilde{\phi}$. Let's take the subsequence $(\phi^{N_0(k)-1})_k$. This subsequence does not necessarily converge; still it is contained in the compact $\Phi^0$, so that we can extract a further subsequence $(\phi^{N_1\circ N_0(k)-1})_k$ which converges to, say, $\bar{\phi}$. Now, the subsequence $(\phi^{N_1\circ N_0(k)})_k$ converges to $\tilde{\phi}$, because it is a subsequence of $(\phi^{N_0(k)})_k$. We have proved until now the existence of two convergent subsequences $\phi^{N(k)-1}$ and $\phi^{N(k)}$ with \emph{a priori} different limits. For simplicity and without any loss of generality, we will consider these subsequences to be $\phi^k$ and $\phi^{k+1}$ respectively.\\
Conserving previous notations, suppose that $\phi^{k+1}\rightarrow \tilde{\phi}$ and $\phi^{k}\rightarrow \bar{\phi}$. We use again inequality (\ref{eqn:DivergenceDecreaseSeq}):
\[\hat{D}(p_{\phi^{k+1}},p_{\phi_T}) + D_{\psi}(\phi^{k+1},\phi^k) \leq \hat{D}(p_{\lambda^k,\theta^k},p_{\phi_T})\]
By taking the limits of the two parts of the inequality as $k$ tends to infinity, and using the continuity of the two functions, we have 
\[\hat{D}(p_{\tilde{\phi}},p_{\phi_T}) + D_{\psi}(\tilde{\phi},\bar{\phi}) \leq \hat{D}(p_{\bar{\phi}},p_{\phi_T})\]
Recall that under A1-2, the sequence $\left(\hat{D}_{\varphi}(p_{\phi^k},p_{\phi_T})\right)_k$ converges, so that it has the same limit for any subsequence, i.e. $\hat{D}(p_{\tilde{\phi}},p_{\phi_T}) = \hat{D}(p_{\bar{\phi}},p_{\phi_T})$. We also use the fact that the distance-like function $D_{\psi}$ is non negative to deduce that $D_{\psi}(\tilde{\phi},\bar{\phi}) = 0$. Looking closely at the definition of this divergence (\ref{eqn:DivergenceClasses}), we get that if the sum is zero, then each term is also zero since all terms are non negative. This means that:
\[\forall i\in\{1,\cdots,n\}, \quad \int_{\mathcal{X}}{\psi\left(\frac{h_i(x|\tilde{\phi})}{h_i(x|\bar{\phi})}\right)h_i(x|\bar{\phi})dx} = 0\]
The integrands are non negative functions, so they vanish almost ever where with respect to the measure $dx$ defined on the space of labels.
\[\forall i\in\{1,\cdots,n\}, \quad \psi\left(\frac{h_i(x|\tilde{\phi})}{h_i(x|\bar{\phi})}\right)h_i(x|\bar{\phi}) = 0\quad dx-a.e.\]
The conditional densities $h_i$ are supposed to be positive\footnote{In the case of two Gaussian (or more generally exponential) components, this is justified by virtue of a suitable choice of the initial condition.}, i.e. $ h_i(x|\bar{\phi})>0, dx-a.e.$. Hence, $\psi\left(\frac{h_i(x|\tilde{\phi})}{h_i(x|\bar{\phi})}\right) = 0, dx-a.e.$. On the other hand, $\psi$ is chosen in a way that $\psi(z)=0$ iff $z=1$, therefore :
\begin{equation}
\forall i\in\{1,\cdots,n\},\quad h_i(x|\tilde{\phi}) = h_i(x|\bar{\phi}) \quad dx-a.e.
\label{eqn:ProportionsEquality}
\end{equation}
Since $\phi^{k+1}$ is, by definition, an infimum of $\phi\mapsto\hat{D}(p_{\phi},p_{\phi_T}) + D_{\psi}(\phi,\phi^k)$, then the gradient of this function is zero on $\phi^{k+1}$. It results that:
\[\nabla \hat{D}(p_{\phi^{k+1}},p_{\phi_T}) + \nabla D_{\psi}(\phi^{k+1},\phi^k) = 0,\quad \forall k\]
Taking the limit on $k$, and using the continuity of the derivatives, we get that:
\begin{equation}
\nabla \hat{D}(p_{\tilde{\phi}},p_{\phi_T}) + \nabla D_{\psi}(\tilde{\phi},\bar{\phi}) = 0
\label{eqn:GradientLimit}
\end{equation}
Let's write explicitly the gradient of the second divergence:
\[\nabla D_{\psi}(\tilde{\phi},\bar{\phi}) = \sum_{i=1}^n\int_{\mathcal{X}}{\frac{\nabla h_i(x|\tilde{\phi})}{h_i(x|\bar{\phi})}\psi'\left(\frac{h_i(x|\tilde{\phi})}{h_i(x|\bar{\phi})}\right)h_i(x|\bar{\phi})}\]
We use now the identities (\ref{eqn:ProportionsEquality}), and the fact that $\psi'(1)=0$, to deduce that:
\[\nabla D_{\psi}(\tilde{\phi},\bar{\phi}) = 0 \]
This entails using (\ref{eqn:GradientLimit}) that $\nabla \hat{D}(p_{\tilde{\phi}},p_{\phi_T}) = 0$.\\ 
Comparing the proved result with the notation considered at the beginning of the proof, we have proved that the limit of the subsequence $(\phi^{N_1\circ N_0(k)})_k$ is a stationary point of the objective function. Therefore, The final step is to deduce the same result on the original convergent subsequence $(\phi^{N_0(k)})_k$. This is simply due to the fact that $(\phi^{N_1\circ N_0(k)})_k$ is a subsequence of the convergent sequence $(\phi^{N_0(k)})_k$, hence they have the same limit. \\
\textbf{When assumption AC is dropped,} similar arguments to those used in the proof of Proposition 2-b. are employed. The optimality condition in (\ref{eqn:DivergenceAlgo}) implies :
\[-\nabla D_{\psi}(\phi^{k+1},\phi^k) \in \partial \hat{D}_{\varphi}(p_{\phi^{k+1}},p_{\phi_T})\quad \forall k\]
Function $\phi\mapsto\hat{D}_{\varphi}(p_{\phi},p_{\phi_T})$ is continuous, hence its subgradient is outer semicontinuous and:
\begin{equation}
\limsup_{\phi^{k+1}\rightarrow\phi^{\infty}} \partial \hat{D}_{\varphi}(p_{\phi^{k+1}},p_{\phi_T})\subset \partial \hat{D}_{\varphi}(p_{\tilde{\phi}},p_{\phi_T})
\label{eqn:OSCInclusion}
\end{equation}
By definition of limsup:
\[\limsup_{\phi\rightarrow\phi^{\infty}} \partial \hat{D}_{\varphi}(p_{\phi},p_{\phi_T}) = \left\{ u|\exists \phi^k\rightarrow\phi^{\infty},\exists u^k\rightarrow u \text{ with } u^k\in \partial \hat{D}_{\varphi}(p_{\phi^k},p_{\phi_T})\right\}\]
In our scenario, $\phi = \phi^{k+1}$, $\phi^k = \phi^{k+1}$, $u = 0$ and $u^k = \nabla_1 D_{\psi}(\phi^{k+1},\phi^{k})$. We have proved above in this proof that $\nabla_1 D_{\psi}(\tilde{\phi},\bar{\phi}) = 0$ using only convergence of $(\hat{D}_{\varphi}(p_{\phi^k},p_{\phi_T}))_k$, inequality (\ref{eqn:DivergenceDecreaseSeq}) and some properties of $D_{\psi}$. Assumption AC was not needed. Hence, $u^k\rightarrow 0$. This proves that, $u = 0\in\limsup_{\phi^{k+1}\rightarrow\phi^{\infty}} \partial \hat{D}_{\varphi}(p_{\phi^{n_k+1}},p_{\phi_T})$. Finally, using the inclusion (\ref{eqn:OSCInclusion}), we get our result:
\[0\in \partial \hat{D}_{\varphi}(p_{\tilde{\phi}},p_{\phi_T})\]
\end{proof}


\subsection{Proof of Proposition \ref{prop:NewResTwoStep}}\label{AppendProximal:Prop5}
\begin{proof}
We use the same lines from the previous proof to deduce the existence of two convergent subsequences $\phi^{N(k)-1}$ and $\phi^{N(k)}$ with \emph{a priori} different limits. For simplicity and without any loss of generality, we will consider these subsequences to be $\phi^k$ and $\phi^{k+1}$ respectively. Suppose that $\phi^k\rightarrow\bar{\phi} = (\bar{\lambda},\bar{\theta})$ and $\phi^{k+1}\rightarrow\tilde{\phi} = (\tilde{\lambda},\tilde{\theta})$.\\
We first use inequality (\ref{eqn:DivergenceDecreaseSeq}) as in the previous proposition, the convergence of the sequence $(\hat{D}_{\varphi}(p_{\lambda^k,\theta^k},p_{\phi_T}))_k$ and some basic properties of $D_{\psi}$ to deduce that:
\begin{equation}
\forall i\in\{1,\cdots,n\},\quad h_i(x|\tilde{\phi}) = h_i(x|\bar{\phi}) \quad dx-a.e.
\label{eqn:ProportionsEqualityAll}
\end{equation}
Let's calculate the gradient of the objective function with respect to $\lambda$ and $\theta$ separately at the limit of $(\phi^{k+1})_k$. By definition of $\theta^{k+1}$ as an arginf in (\ref{eqn:DivergenceAlgoSimp2}), we have:
\[\frac{\partial}{\partial \theta}\hat{D}_{\varphi}(p_{\lambda^{k+1},\theta^{k+1}},p_{\phi_T}) + \frac{\partial}{\partial \theta} D_{\psi}((\lambda^{k+1},\theta^{k+1}),\phi^k) = 0\quad \forall k\]
Using the continuity of the derivatives (Assumptions A1 and AC), we may pass to the limit inside the gradients:
\[\frac{\partial}{\partial \theta}\hat{D}_{\varphi}(p_{\tilde{\lambda},\tilde{\theta}},p_{\phi_T}) + \frac{\partial}{\partial \theta} D_{\psi}((\tilde{\lambda},\tilde{\theta}),\bar{\phi}) = 0\quad \forall k\]
As in the proof of Proposition 3, all terms in the gradient of $D_{\psi}$ depend on $\psi'\left(\frac{h_i(x|\tilde{\lambda}, \tilde{\theta})}{h_i(x|\bar{\phi})}\right)$ which is zero by virtue of (\ref{eqn:ProportionsEqualityAll}). Hence $\frac{\partial}{\partial \theta}\hat{D}_{\varphi}(p_{\tilde{\lambda},\tilde{\theta}},p_{\phi_T}) = 0$.\\
We prove now that $\frac{\partial}{\partial \lambda}\hat{D}_{\varphi}(p_{\tilde{\lambda},\tilde{\theta}},p_{\phi_T}) = 0$. This is basically ensured by recurrence (\ref{eqn:DivergenceAlgoSimp1}), identities (\ref{eqn:ProportionsEqualityAll}), assumptions A1-AC and the fact that $\psi'(1)=0$. Indeed, using recurrence (\ref{eqn:DivergenceAlgoSimp1}), $\lambda^{k+1}$ is an optimum so that the gradient of the objective function is zero:
\[\frac{\partial}{\partial\lambda} \hat{D}_{\varphi}(p_{\lambda^{k+1},\theta^k},p_{\phi_T}) + \frac{\partial}{\partial\lambda} D_{\psi}((\lambda^{k+1},\theta^k),\lambda^k,\theta^k) = 0, \quad \forall k\]
Since $\|\theta^{k+1}-\theta^k\|\rightarrow 0$, then $\bar{\theta} = \tilde{\theta}$. By passing to the limit in the previous identity and using the continuity of the derivatives, we have:
\[\frac{\partial}{\partial\lambda} \hat{D}_{\varphi}(p_{\tilde{\lambda},\bar{\theta}},p_{\phi_T}) +  \frac{\partial}{\partial\lambda} D_{\psi}((\tilde{\lambda},\tilde{\theta}),\bar{\lambda},\bar{\theta}) = 0\]
Since the derivative of $D_{\psi}$ is a sum of terms which depend all on $\psi'(\frac{h_i(x|\tilde{\lambda},\bar{\theta})}{h_i(|\bar{\lambda},\bar{\theta})})$, and using identities (\ref{eqn:ProportionsEqualityAll}), we conclude that $\psi'(\frac{h_i(|\tilde{\lambda},\bar{\theta})}{h_i(|\bar{\lambda},\bar{\theta})})=\psi'(1)=0$ and $\frac{\partial}{\partial\lambda} D_{\psi}((\tilde{\lambda},\bar{\theta}),\bar{\lambda},\bar{\theta}) = 0$. Finally, $\bar{\theta}=\tilde{\theta}$ implies that $\frac{\partial}{\partial\lambda} \hat{D}_{\varphi}(p_{\tilde{\lambda},\hat{\theta}},p_{\phi_T}) = 0$.\\
We have proved that $\frac{\partial}{\partial \lambda}\hat{D}_{\varphi}(p_{\tilde{\lambda},\tilde{\theta}},p_{\phi_T}) = 0$ and $\frac{\partial}{\partial \theta}\hat{D}_{\varphi}(p_{\tilde{\lambda},\tilde{\theta}},p_{\phi_T}) = 0$, so the gradient is zero and the stated result is proved.
\end{proof}

\part{Two-component Semiparametric Mixture Models When One Component is Unknown}
\chapter{Semiparametric two-component mixture models where one component is defined through linear constraints on its distribution function}
A two-component mixture model with an unknown component is defined by:
\begin{equation}
f(x) = \lambda f_1(x|\theta) + (1-\lambda) f_0(x), \qquad \text{for } x\in\mathbb{R}^r
\label{eqn:GeneralSemiParaMix}
\end{equation}
for $\lambda\in(0,1)$ and $\theta\in\mathbb{R}^d$ to be estimated and the density $f_0$ is considered to be unknown. Such model appears in the study of gene expression data coming from microarray analysis. An application to two bovine gestation mode comparison is performed in \cite{Bordes06b}. The authors suppose that $\theta$ is known, $f_0$ is symmetric around an unknown $\mu$ and that $r=1$. \cite{Xiang} studied a more general setup by considering $\theta$ unknown and applied model (\ref{eqn:GeneralSemiParaMix}) on the Iris data by considering only the first principle component for each observed vector. Another application of model (\ref{eqn:GeneralSemiParaMix}) in genetics can be found in \cite{JunMaDiscret}. See also \cite{PatraRSSB} for applications arising in astronomy and from microarray experiment. \\ 
\cite{Robin} used the semiparametric model (supposing that $\theta$ is known) in multiple testing procedures in order to estimate the posterior population probabilities and the local false rate discovery. \cite{Song} studied a similar setup where $\theta$ is unknown without further assumptions on $f_0$. They applied the semiparametric model in sequential clustering algorithms as a second step. After a sequential clustering algorithm finds the center of a cluster, the next step is to identify the observations belonging to this cluster. If we assume that the center of the cluster is known and that the distribution of observations not belonging to the cluster is unknown, the problem of identifying observations in the cluster is similar to the problem of estimating the mixing proportion in a special two-component mixture model. The mixing proportion can be considered as the proportion of observations belonging to the cluster. Finally, model (\ref{eqn:GeneralSemiParaMix}) can also be regarded as a contamination model, see \cite{Titterington} or \cite{FinMixModMclachlan} for further applications of general mixture models. \\

Existing estimation methods for model (\ref{eqn:GeneralSemiParaMix}) were proved or illustrated to work but only in specific situations and the only simulated example was a dataset generated by a Gaussian mixture. The paper of \cite{Xiang} provides a comparison of several estimation methods for the semiparametric model. The compared methods give satisfactory results in most simulations, but no method performs uniformly good on all simulated mixtures. In all these simulations the authors consider $\theta$ to be given. We noticed that as we add $\theta$ to the set of unknown parameters, things become different. The performance of these methods depend on the proportion of the parametric component if it is high or low. They may have very poor performances sometimes even if the number of observations is high as we will demonstrate in the simulation section.\\
It is important for an estimation method to be applicable in contexts where the parametric component is not fully known. For example, the parametric component may be a signal whereas the unknown component is a noise. We need to extract the location of the signal or any other information concerning its shape and not only the proportion of the noise. We believe that the failure of the existing methods when the parametric component is not fully known comes from the degree of difficulty of the semiparametric model, i.e we do not possess sufficient information about the model in order to estimate it. \cite{Bordes10} considered a symmetric assumption on the unknown component (see also \cite{Bordes06b}). This gave the model a structure and permitted to improve the estimation and made the study of the asymptotic properties of the resulting estimators tractable. Moreover, they were able to give sufficient conditions under which the semiparametric model is identifiable. Nevertheless, such assumption is very restrictive and cannot be applied for example in the context of distributions defined on a subset of $\mathbb{R}$. It appears that the addition of prior information should be helpful and may lead to a more understandable theory and better estimation results. We propose to add some information about $f_0$ in a way that we stay in between a (restrictive) fully parametric settings and a (complex) fully semiparametric one.\\
In this chapter, we introduce a method which permits to add a, relatively general, prior information about the unknown component in order to decrease the degree of difficulty of the model and be able to better estimate it. Such information needs to apply linearly on the distribution function of the unknown component such as moment-type information. For example, we may have an information relating the first and the second moments of $f_0$ such as $\int{xf_0(x)}=\alpha$ and $\int{x^2f_0(x)dx} = m(\alpha)$, see \cite{BroniaKeziou12} and the references therein. Such information adds some structure to the model without precising the value of the moments. More examples will be discussed later on.\\
Unfortunately, the incorporation of linear constraints on the distribution function cannot be done directly in existing methods because the optimization\footnote{Not all existing methods, as we will see in the next paragraph, are defined through an optimization procedure. Hence, it becomes more difficult to introduce this kind of constraints inside the estimation procedure.} will be carried over a (possibly) infinite dimensional space, and we need a new approach. Convex analysis offers a way using the Fenchel-Legendre duality to transform an optimization problem over an infinite dimensional space. On the other hand, $\varphi-$divergences offer a way by their convexity properties to exploit this duality result. The paper of \cite{BroniaKeziou12} gives a complete study of this problem in the non mixture case, see also \cite{AlexisThesis} Chap. 1 and \cite{KeziouThesis} Chap. 3. We will exploit these results to build upon a new estimation procedure which takes into account linear information over the unknown component's distribution.\\
%
%
\section{Semiparametric two-component mixture models in the literature}\label{sec:LiteratureSemiparaMix}
The literature on semiparametric mixture models contains several methods which permit to estimate efficiently the parameters with or without estimating the unknown component. All these methods were never tested on difficult situations except for datasets generated from a mixture of two Gaussian components with very close means (difference of means equal to 1.5) when the first component $f_1$ is \emph{fully known}. \\
We present in this section the principle estimation methods in the literature. We present the method of \cite{Bordes10} which is based on a symmetry constraint over the unknown component. We present also two EM-type algorithms introduced by \cite{Robin} and \cite{Song}, and an SEM-type algorithm developed by \cite{BordesStochEM}. There is also an interesting method developed in \cite{Song} based on the identifiability of a two-component mixture model when $f_1$ is Gaussian, called the $\pi-$maximizing algorithm. Finally, a method based on the Hellinger divergence was developed by \cite{Xiang}. However, the algorithm presented in their article is not clear and contains difficult integration calculus which cannot be calculated directly by a numerical method. The authors did not give any further explanations on how to do the calculus. We therefore prefer not to discuss it here. \\
We advise the reader to consult the simulation results in \cite{Xiang}. The article contains a comparison between some of these methods in a two-component Gaussian mixture. We will also be testing all these methods on further simulations and different models to explore their capacities in estimating the semiparametric mixture model.
\subsection{Semiparametric two-component mixture models under a symmetry assumption}
\cite{Bordes06b} proposed to study the semiparametric model (\ref{eqn:GeneralSemiParaMix}) when $r=1$, $\theta$ is given and the unknown component is supposed to be symmetric. Thus, the semiparametric model (\ref{eqn:GeneralSemiParaMix}) can be rewritten as:
\begin{equation}
f(x |\lambda,\theta)  =  \lambda f_1(x|\theta) + (1-\lambda) f_0(x-\mu_0), \qquad \forall x\in\mathbb{R}.
\label{eqn:SemiParaMixSymDef}
\end{equation}
\cite{Bordes06b} have studied the identifiability of this simplified model by imposing either a symmetry assumption over $f_1$, conditions over the characteristic function or conditions on the tail behavior of the components. Let us summarize their estimation procedure. Let $\mathbb{F}_0, \mathbb{F}_1$ and $\mathbb{F}$ be the cumulative distribution functions (cdf) of $f_0, f_1$ and $f$ respectively. We have:
\[\mathbb{F}_0(x) = \frac{1}{1-\lambda} \mathbb{F}(x+\mu_0|\lambda,\theta) - \frac{\lambda}{1-\lambda} \mathbb{F}_1(x+\mu_0|\theta).\]
Define the following functions:
\begin{eqnarray*}
H_1(x|\lambda,\theta,\mathbb{F}) & = & \frac{1}{1-\lambda} \mathbb{F}(x+\mu_0|\lambda,\theta) - \frac{\lambda}{1-\lambda} \mathbb{F}_1(x+\mu_0|\theta), \\
H_2(x|\lambda,\theta,\mathbb{F}) & = & 1-\frac{1}{1-\lambda} \mathbb{F}(\mu_0-x|\lambda,\theta) + \frac{\lambda}{1-\lambda} \mathbb{F}_1(\mu_0-x|\theta).
\end{eqnarray*}
By symmetry of $f_0$, we have $\mathbb{F}_0(x) = 1-\mathbb{F}_0(-x)$, and thus $H_1(x|\lambda,\theta,\mathbb{F}) = H_2(x|\lambda,\theta,\mathbb{F})$. This means that if $d$ is a distance mapping, then $d(H_1(.|\lambda,\theta,\mathbb{F}),H_2(.|\tilde{\lambda},\tilde{\theta},\mathbb{F}))=0$ if $\lambda=\tilde{\lambda}$ and $\theta=\tilde{\theta}$. Otherwise the distance will be positive if $d$ is chosen properly. In \cite{Bordes06b}, the authors propose to use an $L^q$ distance and define the estimation procedure by:
\[(\hat{\lambda},\hat{\theta}) = \arginf_{\lambda,\theta}\left(\int{\left|H_1(x|\lambda,\theta,\hat{\mathbb{F}}_n) - H_2(x|\lambda,\theta,\hat{\mathbb{F}}_n)\right|^qdx}\right)^{1/q}\]
where $\hat{\mathbb{F}}_n$ is an estimator of $\mathbb{F}$. The authors proved consistency of this estimation procedure, but were unable to prove its asymptotic normality. Besides, \cite{Bordes10} argue that the use of an $L^q$ distance leads to numerical instability. They also propose to use the alternative distance $L^2(d\mathbb{F})$. The new procedure is now defined by:
\begin{equation}
(\hat{\lambda},\hat{\theta}) = \arginf_{\lambda,\theta}\sum_{i=1}^n{\left[H_1(x_i|\lambda,\theta,\hat{\mathbb{F}}_n) - H_2(x_i|\lambda,\theta,\hat{\mathbb{F}}_n)\right]^2}.
\label{eqn:BoresSymmetryAlgo}
\end{equation}
\cite{Bordes10} prove that the above estimator is consistent and asymptotically Gaussian. This method produces in practice good estimates even in difficult situations such as two-component Gaussian mixture when the two components are close provided that we restrict the proportion parameter inside an interval of the form $(\eta,1-\eta)$ for a small $\eta$, say $0.1$. It is however unusable in the context of distributions which are defined on a subset of $\mathbb{R}$. Besides, a generalization to the multivariate case does not seem simple.\\
Notice that, in the original approach presented by \cite{Bordes06b} or \cite{Bordes10} we suppose that $\theta$ is given, so that the parametric component is fully known. We find however no problem in writing the same algorithm for the case when $\theta$ is considered unknown. In what concerns the theoretical results developed in these papers, we did not check their validity when $\theta$ is unknown. We mention the work of \cite{Maiboroda2012} who use the approach of \cite{Bordes10} and propose several methods to estimate the unknown component. They also study the theoretical properties of their estimators such as L-consistency and rates of convergence.
\subsection{EM-type algorithms}
This kind of algorithms is based on defining a vector of weights $w=(w_1,\cdots,w_n)$ for the observations and then estimate the unknown component using a weighted kernel estimator as follows:
\[\hat{f}_0(x|w) = \frac{1}{nh}\frac{1}{\sum{1-w_i}}\sum_{i=1}^n{(1-w_i) K\left(\frac{x-x_i}{h}\right)}.\]
The proportion is then, similarly to the EM algorithm, estimated by averaging these weights whereas the parameters of the known component are calculated by maximizing a weighted likelihood function. Such methods were proposed by several authors such as \cite{Song}, \cite{Robin} and \cite{JunMaDiscret}. They differ by how to calculate the weights. In t\cite{Robin} and \cite{Song}, we calculate the weights by an iterative procedure in the same way as the EM algorithm does, i.e. as the quotient of the probability of being in the first component to the probability of being in the mixture. In \cite{JunMaDiscret}, the authors propose a weighted histogram to estimate the unknown component where the bins and their number are chosen before the estimation procedure (prior guess). They calculate after that the weights by maximizing a likelihood-like function related to the discretized model. \\
The method of \cite{JunMaDiscret} has many drawbacks and practical issues. We need at first to precise a close interval for the values of the unknown component, then a good guess for the number of bins. Besides, if one thinks about increasing the bins in order to give a closer estimate of $f_0$, it will cost much on the optimization step as each bin has its own parameter. Besides, in multivariate situations, the number of bins explodes easily and the optimization over the unknown weights becomes very difficult.\\
Other EM-type algorithms are very simple to implement and have good execution time when the number of observations is small, however, they do not perform very well in situations when the two components are close enough.\\
We present briefly the algorithms of \cite{Song} and \cite{Robin}, and their analytical properties. 
\paragraph{\cite{Robin} EM-type algorithm.} The authors propose to estimate the weight vector $w$ by the following iterative algorithm. For some initial value of $\lambda$ and $\theta$, say $\lambda^0,\theta^0$, define at iteration $k+1$ for $k\geq 0$:
\begin{equation}
w_i^{(k+1)} = \frac{\lambda^{(k)} f_1(x_i|\theta^{(k)})}{\lambda^{(k)} f_1(x_i|\theta^{(k)})+(1-\lambda^{(k)}) \hat{f}_{0}(x_i|w^{(k)})}.
\label{eqn:RobinWeights}
\end{equation}
When the vector $\theta$ is known, the algorithm is proved to converge under mild conditions, see Theorem 1 in \cite{Robin}. We only need that the proportion to be inside the interval $(0,1)$ and that for all $j$ and $i$ the quantities $K((x_i-x_j)/h)$ are positive, which is \emph{theoretically} fulfilled as long as we are \emph{not} using a compacted-support kernel and there is no ties in the dataset\footnote{Which is simply ensured if the data is distributed from a continuous probability distribution.}. On the other hand, the algorithm is proved to converge towards a fixed point of function $\psi$ defined by:
\begin{eqnarray*}
\psi & = & (\psi_1,\cdots,\psi_n) \\
\psi_j(w_1,\cdots,w_n) & = & \frac{(1-\lambda)\sum_{i=1}^n{(1-w_i)K((x_i-x_j)/h)/h}}{(1-\lambda)\sum_{i=1}^n{(1-w_i)K((x_i-x_j)/h)/h} + \sum_{i=1}^n{w_i}f_1(x_j|\theta)},
\end{eqnarray*}
provided that $\theta$ is known. In what concerns our objective, $\theta$ will be unknown, and this theoretical result may not hold. The algorithm is still applicable. Besides, the theoretical result supposes that the proportion is known. The authors propose to estimate the proportion by $\sum{w_i}/n$ as a natural choice, but state that this can easily lead the algorithm to converge to a proportion equals to 0 or 1. They propose another estimator suitable to the application they considered (estimation of the FDR) by adding an assumption that the distribution of the unknown component is defined on a semi-closed interval $(-\infty,a)$. Besides, the cdf of the parametric component must verify $\mathbb{F}_1(a)<1$. This means that the semiparametric component distribution must have a lighter tail than the parametric component one. An R package, \texttt{kerfdr}, was written about this approach where the data is supposed to be bounded, see \cite{GuedjRobin}. In our examples and simulations, the distribution of both components is the same; either $(0,\infty)$ or the whole real line $\mathbb{R}$. Thus, we cannot adapt this methodology.\\
We adapt the following algorithm based on the ideas of \cite{Robin} as follows. Let $D$ be a kernel function and $h$ be a pre-chosen window, and denote $D_i(x)=D(\frac{x-x_i}{h})$. Initialize the algorithm with $\lambda^{(0)}$, $\theta^{(0)}$ and a vector of weights for the observations say $(w_1^{(0)},\cdots,w_n^{(0)})$. At iteration $k$ calculate
\begin{eqnarray*}
\hat{p}_0^{(k)}(y_i) & = & \frac{1}{\sum_j{w_j^{(k-1)}}}\sum_l{w_l^{(k-1)}D_i(y_l)};\\
\lambda^{(k)} & = & \frac{1}{n}\sum_l{w_l^{(k-1)}};\\
\theta^{(k)} & = & \argmax_\theta \sum_i w_i^{(k-1)} \log(p_1(y_i|\theta));\\
w_i^{(k)} & = & \frac{\lambda^{(k)} p_1(y_i|\theta^{(k)})}{\lambda^{(k)} p_1(y_i|\theta^{(k)})+(1-\lambda^{(k)}) \hat{p}_0^{(k)}(y_i)}.
\end{eqnarray*}
Repeat this iteration until convergence.\\
\paragraph{\cite{Song} EM-type algorithm.} The authors propose to estimate the weight vector without the need to estimate the unknown component and merely using an estimate of the whole mixture. The authors propose two methods with no proofs of convergence. Let $\hat{f}_n$ be an estimator of the mixture density on the basis of an $n-$sample. The first recurrence proposed by the authors is given by:
\begin{equation}
w_i^{(k+1)} = \min\left[1, \quad \frac{\lambda^{(k)} f_1(x_i|\theta^{(k)})}{\hat{f}_n} \right].
\label{eqn:SongEM1}
\end{equation}
The authors argue that this formulation does not stabilize well and propose the alternative iteration:
\begin{equation}
w_i^{(k+1)} = \min\left[1, \quad \frac{2\lambda^{(k)} f_1(x_i|\theta^{(k)})}{\lambda^{(k)} f_1(x_i|\theta^{(k)})+\hat{f}_n} \right],
\label{eqn:SongEM2}
\end{equation}
and state that this new formulation is better without any theoretical justification. It is worth noting that these algorithms were only proposed in the context of a mixture of a two Gaussian components when one component is unknown. We, however, see no problem in using them in a more general context and even in the multivariate case since no particular constraint on $f_1$ is needed in order to write the iterative procedure.
\subsection{Stochastic EM-type method}
\cite{BordesStochEM} have proposed a stochastic EM-type algorithm to estimate the parameters in model (\ref{eqn:SemiParaMixSymDef}), i.e. under the symetry assumption. There is however no problem in using their algorithm in the general context of model (\ref{eqn:GeneralSemiParaMix}). We give the general form for this algorithm which can also be used in a multivariate context as we will see in the simulation section.\\ 
The algorithm starts by giving an initial Bernoulli vector $Z^{(0)}$ which attributes a zero to coordinate $Z_i$ if the observation $x_i$ is drawn according to the unknown component $f_0$ and one if $x_i$ is drawn according to the parametric component $f_1(.|\theta)$. We then calculate the initial proportion $\lambda^{(0)} = \sum_{i=1}^n{Z_i^{(0)}}/n$, and give an initial value for the vector $\theta$, say $\theta^{(0)}$.\\
At iteration $k+1$, calculate the kernel density estimator of the unknown component as follows:
\[\hat{f}_0(x|Z^{(k)}) = \frac{1}{h(1-\sum{Z_i^{(k)}})}\sum_{i=1}^n{(1-Z_i^{(k)})K\left(\frac{x-x_i}{h}\right)}.\]
Calculate the weights by:
\[w_i^{(k)} = \frac{\lambda^{(k)} f_1(x_i|\theta^{(k)})}{\lambda^{(k)} f_1(x_i|\theta^{(k)}) + (1-\lambda^{(k)})\hat{f}_0(x_i|Z^{(k)})}.\]
Generate now a Bernoulli vector $Z^{(k+1)}$ with probabilities $(w_1^{(k)},\cdots,w_n^{(k)})$, and calculate now the new proportion:
\[\lambda^{(k+1)} = \frac{1}{n}\sum_{i=1}^n{Z_i^{(k+1)}}.\]
Finally, we calculate the new vector of parameters $\theta^{(k+1)}$ by maximum likelihood using the observations for which $Z^{(k+1)}_i$ is equal to 1. We repeat this procedure until the sequence of proportions $\lambda^{(k)}$ stabilizes. It is preferable in the context of the stochastic EM algorithm not to keep the final iteration, but to average the results of the $n_0$ final iterations instead. For our simulations, we performed 5000 iterations and averaged the 4000 final iterations, see figure (\ref{fig:SEMsemipara}) for a better understanding.\\
\begin{figure}[ht]
\centering
\includegraphics[scale=0.35]{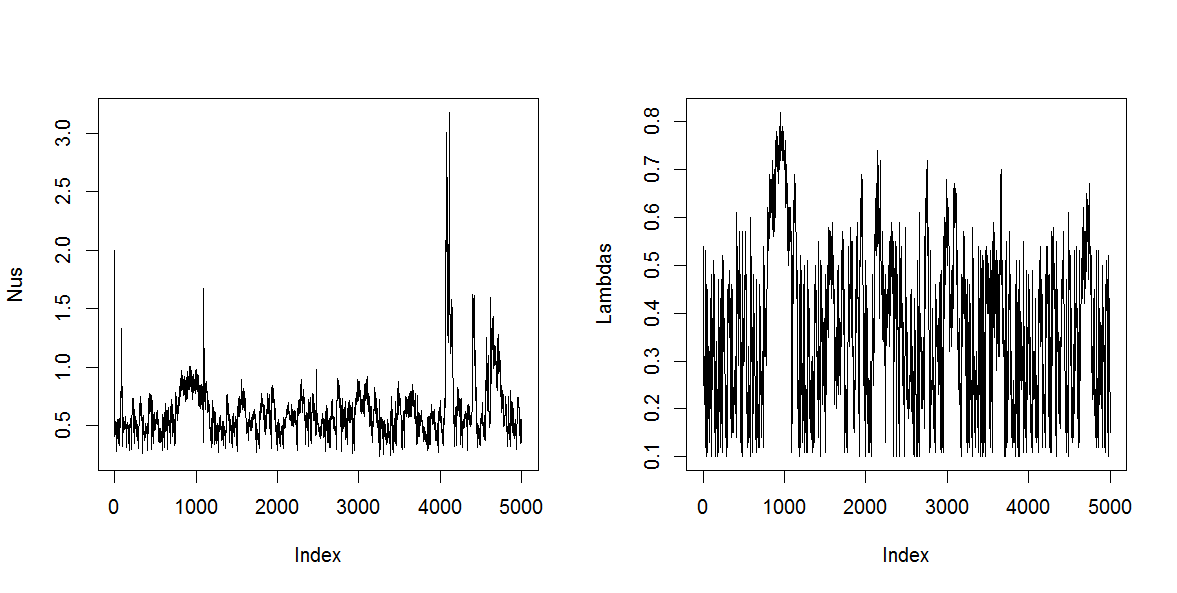}
\caption{Fluctuations in a trajectory of the semiparametric SEM algorithm in a Weibull mixture.}
\label{fig:SEMsemipara}
\end{figure}

\noindent This method has a good performance in regular situations when the two components do not overlap. Besides, the asymptotic behavior of the sequence of points which generates the algorithm remains an open problem.
\subsection{\texorpdfstring{$\pi-$}{pi-}maximizing method}
\cite{Song} propose another kind of algorithm which is based on the identifiability condition of a mixture of two Gaussian components with known means. They state that no estimating procedure can distinguish between two sets of parameters $(\lambda_1,\theta_1)$ and $(\lambda_2,\theta_2)$ as long as they both verify:
\[\lambda_i f_1(x|\theta_i)<f(x), \forall x,\quad \text{ for } i = 1,2.\]
This is because the unknown component whose form is not specified by any prior condition can take the form of a mixture to fill the gap between $\lambda f_1(x|\theta)$ and $f(x)$, see figure (\ref{fig:SongNonIndentifiable}).
\begin{figure}[h]
\centering
\includegraphics[scale=0.8]{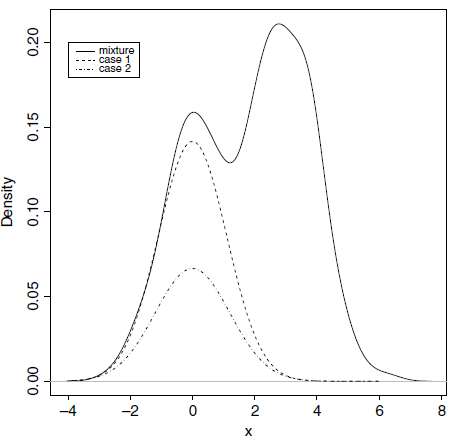}
\caption{The solid line is the density estimation of the whole mixture. The dotted lines are two examples of normal densities that fit under the mixture density. These two cases can not be distinguished by an estimating procedure. Figure copied from \cite{Song}.}
\label{fig:SongNonIndentifiable}
\end{figure}

\noindent Based on this idea, \cite{Song} propose the following estimation procedure:
\begin{eqnarray}
\hat{\lambda} & = & \sup_{\theta}\min_{x_i} \frac{\hat{f}(x_i)}{f_1(x_i|\theta)} \label{eqn:SongMaxi1}\\
\hat{\theta} & = & \argsup_{\theta}\min_{x_i} \frac{\hat{f}(x_i)}{f_1(x_i|\theta)}. \label{eqn:SongMaxi2}
\end{eqnarray}
The authors prove that if $f_0(0)=0$, $f_1$ is a centered Gaussian distribution with unknown variance $\sigma^2$ and that the tail of the mixture is not heavier than the tail of the Gaussian component, then:
\[\lambda^* = \sup_{\sigma>0}\inf_{x}\frac{f(x)}{f_1(x|\sigma)},\]
where the supremum is attained at $\sigma=\sigma^*$ the true value of the scale. This constitutes a simple theoretical justification to the estimation procedure (\ref{eqn:SongMaxi1}, \ref{eqn:SongMaxi2}). The authors do not provide, however, any \emph{real} proof of consistency or a generalization to other distributions other than the Gaussian. Notice that a proof of consistency does not seem to be simple mainly because we are dealing with a double optimization procedure.\\

%
%

\section{Semiparametric models defined through linear constraints}\label{sec:GeneralModelDef}
Now that the idea of the semiparametric mixture model is presented, we proceed to propose our new model. We want to integrate linear information in the semiparametric model and propose an estimation procedure which permits to retrieve the true vector of parameters defining the model on the basis of a given i.i.d. sample $X_1,\cdots,X_n$ drawn from the mixture distribution $P_T$. \\
We prefer to proceed step by step. The previous paragraphs introduced semiparametric mixture models. Now we will present models which can be defined through a linear information. These models are not necessarily mixtures of distributions. Besides, the constraints or the linear information defining the model will apply over the whole model, i.e. if the model is a mixture then the constraints apply over the whole mixture and not only over one component. We give in this section a brief idea of what the literature offers us to study such model. In the next section we will proceed to aggregate the two ideas, i.e. mixture models and semiparametric models defined through linear constraints, in order to introduce our semiparametric mixture model where a component is parametric (but not fully known) and a component is defined through linear constraints.
\subsection{Definition and examples}
Denote by $M^{+}$ the set of all probability measures (p.m.) defined on the same measurable space as $P_T$, i.e. $(\mathbb{R}^{r},\mathscr{B}(\mathbb{R}^{r}))$.
\begin{definition}
Let $X_1,\cdots,X_n$ be random variables drawn independently from the probability distribution $P_T$. A semiparametric model is a collection of probability measures $\mathcal{M}_{\alpha}(P_T)$, for $\alpha\in\mathcal{A}\subset\mathbb{R}^s$, absolutely continuous with respect to $P_T$ which verifies a set of linear constraints, i.e.
\begin{equation}
\mathcal{M}_{\alpha}(P_T) = \left\{ Q\in M^{+} \; \; \textrm{such that} \;\; Q \ll P_T, \;\; \int g(x)dQ(x) = m(\alpha)\right\},
\label{eqn:SimpleConstraints}
\end{equation}
where $g : \mathbb{R}^{r} \rightarrow \mathbb{R}^{\ell}$ and $m:\mathcal{A}\rightarrow \mathbb{R}^{\ell}$ are specified vector-valued functions.
\end{definition}
\noindent This semiparametric model was studied by many authors, see \cite{BroniaKeziou12}, \cite{AlexisGSI13} (with $dP$ replaced by $dP^{-1}$ the quantile measure), \cite{Owen} (in the empirical likelihood context) and \cite{ChenQin} (for finite population problems). It is possible in the above definition to make $g$ depend on the parameter vector $\alpha$, but we stay for the sake of simplicity with the assumption that $g$ does not depend on $\alpha$. The theoretical approach we present in this chapter remains valid if $g$ depends on $\alpha$ with slight modification on the assumptions and more technicalities at the level of the proofs.
\begin{example}
A simple and standard example is a model defined through moment constraints. Let $P_T$ be the Weibull distribution with scale $a^*$ and shape $b^*$. We define $\mathcal{M}_{\alpha}$ with $\alpha = (a,b)\in(0,\infty)^2$ to be the set of all probability measures whose first three moments are given by:
\[\int{x^{i}dQ(x)} = a^{i}\Gamma(1+i/b), \qquad i=1,2,3.\]
The set $\mathcal{M}_{\alpha^*}$ is a "neighborhood" of probability measures of the Weibull distribution $P_T$. It contains all probability measures absolutely continuous with respect to the Weibull mixture $P_T$ and which share the first three moments with it. The union of the sets $\mathcal{M}_{\alpha}$ contains all probability measures whose first three moments share the same analytic form as a Weibull distribution.
\end{example}
If the true distribution $P_T$ verifies the set of $\ell$ constraints (\ref{eqn:SimpleConstraints}) for some $\alpha^*$, then the set 
\begin{equation}
\mathcal{M}_{\alpha^*}(P_T) = \left\lbrace Q\in M^{+} \; \; \textrm{such that} \;\; Q \ll P_T, \;\; \int g(x)dQ(x) = m(\alpha^*) \right\rbrace
\label{eqn:SetMProbaConstr}
\end{equation}
constitutes a "neighborhood" of probability measures of $P_T$. Generally, one would rather consider the larger "neighborhood" defined by
\begin{equation*}
\mathcal{M} = \bigcup_{\alpha \in \mathcal{A}} \mathcal{M}_{\alpha},
\end{equation*}
because the value of $\alpha^*$ is unknown and needs to be estimated. The estimation procedure aims at finding $\alpha^*$ the ("best") vector for which $P_T\in M_{\alpha^*}$. This is generally done by either solving the set of equations (\ref{eqn:SimpleConstraints}) defining the constraints for $Q$ replaced by (an estimate of) $P_T$ or by minimizing a suitable distance-like function between the set $\mathcal{M}$ and (an estimate of) $P_T$. In other words, we search for the "projection" of $P_T$ on $\mathcal{M}$. Solving the set of equations (\ref{eqn:SimpleConstraints}) is in general a difficult task since it is a set of nonlinear equations. In the literature, similar problems were solved using the Fenchel-Legendre duality. \cite{BroniaKeziou12} proposed to estimate the value of $\alpha^*$ using $\varphi-$divergences developing an efficient and a simple estimation method using the duality of Fenchel-Legendre. In the next chapter, we will see similar semiparametric models defined through linear constraints over the quantile measures. \cite{AlexisGSI13} proposed also to use $\varphi-$divergences and basing on the duality of Fenchel-Legendre to estimate his "semiparametric linear quantile models", see Chapter 4.\\

In the next paragraph, we present the duality technique which will be essential in the development of our estimation method.
\subsection{Estimation using \texorpdfstring{$\varphi-$}{phi-}divergences and the duality technique}\label{subsec:DualTechRes}
As mentioned in the previous paragraph, $\varphi-$divergences offer an efficient tool to handle the projection of a probability measure on a set of probability measures. This remains also valid for finite signed measures. We will explain how we may use $\varphi-$divergences to find the "best" vector $\alpha^*$ such that $P_T\in\mathcal{M}_{\alpha^*}$. The optimality of the solution is absolute if the distribution $P_T$ verifies the constraints for some value $\alpha^*$. Otherwise, the vector $\alpha^*$ is sub-optimum in the sens that optimality is considered merely from a point of view of $\varphi-$projections (see definitions below). The following definitions concern the notion of $\varphi$-projection of finite signed measures over a set of finite signed measures and are essential in order to clearly present the estimation procedure and introduce our new estimation methodology. The context of semiparametric models presented earlier can be extended to finite signed measures, see the theory in \cite{BroniaKeziou12}. For our study, the use of finite \emph{signed} measures and not only probability measures is essential as will be demonstrated in the next section.
\begin{definition}
\label{def:phiDistance}
Let $\mathcal{M}$ be some subset of $M$, the space of finite signed measures. The $\varphi$-divergence between the set $\mathcal{M}$ and some finite signed measure $P$, noted as $D_{\varphi}(\mathcal{M},P)$, is given by
\begin{equation}
  D_{\varphi}\left(\mathcal{M},P\right)  := \inf_{Q \in \mathcal{M}}D_{\varphi}(Q,P).
	\label{eqn:PhiDistanceEleSet}
\end{equation}
Furthermore, we define the $\varphi-$divergence between two subsets of $M$, say $\mathcal{M}$ and $\mathcal{N}$ by:
\begin{equation*}
  D_{\varphi}\left(\mathcal{M},\mathcal{N}\right)  := \inf_{Q \in \mathcal{M}}\inf_{P\in\mathcal{N}}D_{\varphi}(Q,P).
\end{equation*}
\end{definition}

\begin{definition} \label{def:phiProj}
Assume that $D_{\varphi}(\mathcal{M},P)$ is finite. A measure $Q^* \in \mathcal{M}$ such that
\begin{equation*}
D_{\varphi}(Q^*,P) \leq D_{\varphi}(Q,P), \;\; \textrm{for all}\;\; Q \in \mathcal{M}
\end{equation*}
is called a $\varphi$-projection of $P$ onto $\mathcal{M}$. This projection may not exist, or may not be defined uniquely.
\end{definition}
Estimation of the semiparametric model using $\varphi-$divergences is summarized by the following optimization problem:
\begin{equation}
\alpha^* = \arginf_{\alpha\in\mathcal{A}}\inf_{Q\in\mathcal{M}_{\alpha}} D_{\varphi}(Q,P_T).
\label{eqn:SemiParaEstimGeneral}
\end{equation}
We are, then, searching for the projection of $P_T$ on the set $\mathcal{M}=\cup_{\alpha}\mathcal{M}_{\alpha}$. We are more formally interested in the vector $\alpha^*$ for which the projection of $P_T$ on $\mathcal{M}$ belongs to the set $\mathcal{M}_{\alpha^*}$. Notice that the sets $\mathcal{M}_{\alpha}$ need to be disjoint so that the projection cannot belong to several sets in the same time. \\
The magical property of $\varphi-$divergences stems from their characterization of the projection of a finite signed measure $P$ onto a set $\mathcal{M}$ of finite signed measures, see \cite{BroniaKeziou2006} Theorem 3.4. Such characterization permits to transform the search of a projection in an infinite dimensional space to the search of a vector $\xi$ in $\mathbb{R}^{\ell}$ through the duality of Fenchel-Legendre and thus simplify the optimization problem (\ref{eqn:SemiParaEstimGeneral}). Note that Theorem 3.4 from \cite{BroniaKeziou2006} provides a formal characterization of the projection, but we will only use it implicitly.\\ 
Let $\varphi$ be a strictly convex function which verifies the same properties mentioned in the definition of a $\varphi-$divergence, see paragraph \ref{subsec:DefPhiDiv}. The Fenchel-Legendre transform
of $\varphi$, say $\psi$ is defined by:
\begin{equation*}
\psi(t) = \sup_{x \in \mathbb{R}}\left\lbrace tx - \varphi(x) \right\rbrace, \qquad  \forall t	\in\mathbb{R}.
\end{equation*}
\noindent We are concerned with the convex optimization problem
\begin{equation}
\label{primal problem}
(\mathcal{P}) \qquad \inf_{Q \in \mathcal{M}_{\alpha}} D_{\varphi}(Q,P_T).
\end{equation}
\noindent We associate to $(\mathcal{P})$ the following dual problem
\begin{equation}
\label{dual problem}
(\mathcal{P}^*) \qquad \sup_{\xi\in\mathbb{R}^{\ell}} \xi^t m(\alpha)  - \int{\psi\left(\xi^t g(x)\right) dP_T(x)}.
\end{equation}
\noindent We require that $\varphi$ is differentiable. Assume furthermore that $\int |g_i(x)|dP_T(x) < \infty$ for all $i=1,\ldots, \ell$ and there exists some measure $Q_T$ a.c.w.r.t. $P_T$ such that $D_{\varphi}(Q_T,P_T) < \infty$. According to Proposition 1.4 in \cite{AlexisThesis} (see also Proposition 4.2 in \cite{BroniaKeziou12})  we have a strong duality attainment, i.e. $(\mathcal{P}) = (\mathcal{P}^*)$. In other words,
\begin{equation}
\inf_{Q\in\mathcal{M}_{\alpha}} D_{\varphi}\left(Q,P_T\right)=\sup_{\xi\in\mathbb{R}^{\ell}} \xi^t m(\alpha)  - \int{\psi\left(\xi^t g(x)\right) dP_T(x)}.
\label{strong duality}
\end{equation}
The estimation procedure of the semiparametric model (\ref{eqn:SemiParaEstimGeneral}) is now simplified into the following finite-dimensional optimization problem
\begin{eqnarray*}
\alpha^* & = & \arginf_{\alpha\in\mathcal{A}} D_{\varphi}(\mathcal{M}_{\alpha},P_T) \\
  & = & \arginf_{\alpha\in\mathcal{A}} \sup_{\xi\in\mathbb{R}^{\ell}} \xi^t m(\alpha)  - \int{\psi\left(\xi^t g(x)\right) dP_T(x)}.
\end{eqnarray*}
This is indeed a feasible procedure since we only need to optimize a real function over $\mathbb{R}^{\ell}$. Examples of such procedures can be found in \cite{BroniaKeziou12}, \cite{AlexisGSI13}, \cite{NeweySmith} and the references therein. Robustness of this procedure was studied theoretically by \cite{TomaRobustLinConstr} and was shown numerically\footnote{The results in \cite{AlexisGSI13} show that his estimator is not robust against outliers, but robust against misspecification.} in \cite{AlexisGSI13}. \\

Now that all notions and analytical tools are presented, we proceed to the main objective of this chapter; semiparametric mixtures models. The following section defines such models and presents a method to estimate them using $\varphi-$divergences. We study after that the asymptotic properties of the vector of estimates.

%
%

\section{Semiparametric two-component mixture models when one component is defined through linear constraints}\label{sec:SemiparaMixturesConstr}
\subsection{Definition and identifiability}
\begin{definition} 
\label{def:SemiParaModel}
Let $X$ be a random variable taking values in $\mathbb{R}^{r}$ distributed from a probability measure $P$. We say that $P(.|\phi)$ with $\phi=(\lambda,\theta,\alpha)$ is a two-component semiparametric mixture model subject to linear constraints if it can be written as follows:
\begin{eqnarray}
P(.| \phi) & = &  \lambda P_1(.|\theta) + (1-\lambda) P_0 \quad \text{s.t. } \nonumber\\
P_0\in\mathcal{M}_{\alpha} & = & \left\{Q \in M \text{ s.t. } \int_{\mathbb{R}^r}{dQ(x)}=1, \int_{\mathbb{R}^r}{g(x)dQ(x)}=m(\alpha) \right\}
\label{eqn:SetMalpha}
\end{eqnarray}
for $\lambda\in(0,1)$ the proportion of the parametric component, $\theta\in\Theta\subset\mathbb{R}^{d}$ a set of parameters defining the parametric component, $\alpha\in\mathcal{A}\subset\mathbb{R}^{s}$ is the constraints parameter vector and finally $m(\alpha)=(m_1(\alpha),\cdots,m_{\ell}(\alpha))$ is a vector-valued function determining the value of the constraints.
\end{definition}

The identifiability of the model was not questioned in the context of Section \ref{sec:GeneralModelDef} because it suffices that the sets $\mathcal{M}_{\alpha}$ are disjoint (the function $m(\alpha)$ is one-to-one). However, in the context of this semiparametric mixture model, identifiability cannot be achieved only by supposing that the sets $\mathcal{M}_{\alpha}$ are disjoint. \\
\begin{definition} \label{def:identifiabilitySemipParaMom}
We say that the two-component semiparametric mixture model subject to linear constraints is identifiable if it verifies the following assertion. For two triplets $(\lambda,\theta,\alpha)$ and $(\tilde{\lambda},\tilde{\theta},\tilde{\alpha})$ in $\Phi=(0,1)\times\mathcal{\Theta}\times\mathcal{A}$, if
\begin{equation}
\lambda P_1(.|\theta) + (1-\lambda)P_0 = \tilde{\lambda} P_1(.|\tilde{\theta}) + (1-\tilde{\lambda}) \tilde{P}_0,\quad \text{with } P_0\in\mathcal{M}_{\alpha}, \tilde{P}_0\in\mathcal{M}_{\tilde{\alpha}}, 
\label{eqn:IdenitifiabilityDefEq}
\end{equation}
then $\lambda = \tilde{\lambda},\theta = \tilde{\theta}$ and $P_0=\tilde{P}_0$ (and hence $\alpha=\tilde{\alpha}$).
\end{definition}
This is the same identifiability concept considered in \cite{Bordes06b} where the authors exploited their symmetry assumption over $\mathbb{P}_0$ and built a system of moments equations. They proved that if $P_1$ is also symmetric, then equation (\ref{eqn:IdenitifiabilityDefEq}) has two solutions, otherwise it has three solutions. Their idea appears here in a natural way in order to prove the identifiability of our semiparametric mixture model (\ref{eqn:SetMalpha}).\\
\begin{proposition}
\label{prop:identifiabilityMixture}
For a given mixture distribution $P_T = P(.| \phi^*)$, suppose that the system of equations:
\[\frac{1}{1-\lambda} m^* - \frac{\lambda}{1-\lambda}m_1(\theta) = m_0(\alpha)\]
where $m^*=\int g(x)dP_T(x)$ and $m_1(\theta) = \int g(x) dP_1(x|\theta)$, has a unique solution $(\lambda^*,\theta^*,\alpha^*)$. Then, equation (\ref{eqn:IdenitifiabilityDefEq}) has a unique solution, i.e. $\lambda = \tilde{\lambda},\theta = \tilde{\theta}$ and $P_0=\tilde{P}_0$, and the semiparametric mixture model $P_T = P(.| \phi^*)$ is identifiable.
\end{proposition}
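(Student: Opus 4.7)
The plan is to reduce the identifiability question to the assumed uniqueness of the moment system. Both sides of equation (\ref{eqn:IdenitifiabilityDefEq}) equal $P_T$, so I start by writing
\[
P_0 = \frac{1}{1-\lambda}P_T - \frac{\lambda}{1-\lambda}P_1(.|\theta),\qquad \tilde P_0 = \frac{1}{1-\tilde\lambda}P_T - \frac{\tilde\lambda}{1-\tilde\lambda}P_1(.|\tilde\theta),
\]
which is legitimate because $\lambda,\tilde\lambda\in(0,1)$. These are finite signed measures, and the representation makes the proportions and parametric parameters the only free data determining $P_0$ and $\tilde P_0$.

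Next I would integrate the vector-valued function $g$ against both identities. By linearity of the integral and by the definitions $m^*=\int g\,dP_T$ and $m_1(\theta)=\int g\,dP_1(.|\theta)$, this yields
\[
\int g\,dP_0 = \frac{1}{1-\lambda}m^* - \frac{\lambda}{1-\lambda}m_1(\theta),
\]
and similarly for $\tilde P_0$ with $(\tilde\lambda,\tilde\theta)$. Using the constraints $P_0\in\mathcal{M}_\alpha$ and $\tilde P_0\in\mathcal{M}_{\tilde\alpha}$, the left-hand sides equal $m(\alpha)$ and $m(\tilde\alpha)$ respectively, so both triples $(\lambda,\theta,\alpha)$ and $(\tilde\lambda,\tilde\theta,\tilde\alpha)$ are solutions of the moment system
\[
\frac{1}{1-\mu}m^* - \frac{\mu}{1-\mu}m_1(\vartheta) = m(\beta).
\]
Since $P_T=P(.|\phi^*)$, the triple $(\lambda^*,\theta^*,\alpha^*)$ is also a solution; by the standing uniqueness hypothesis, all three triples coincide, hence $\lambda=\tilde\lambda$, $\theta=\tilde\theta$ and $\alpha=\tilde\alpha$.

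Finally, to conclude $P_0=\tilde P_0$, I simply substitute $\lambda=\tilde\lambda$ and $\theta=\tilde\theta$ into the two explicit representations above: the right-hand sides become identical, so $P_0=\tilde P_0$ as signed measures (and in particular as probability measures). The argument never uses the specific form of $g$ beyond $P_T$-integrability, so no analytical obstacle arises; the only nontrivial step is recognising that the integrated constraint collapses the infinite-dimensional identification problem onto the finite-dimensional moment system, after which uniqueness is assumed rather than proved. The genuine difficulty, which is hidden in the hypothesis, is checking in concrete models that the moment system indeed has a unique root — this would typically be done by an implicit-function or monotonicity argument tailored to the parametric family $P_1(.|\theta)$ and to the functional $m(\alpha)$, but that verification lies outside the scope of the proposition itself.
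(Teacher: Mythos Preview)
Your proof is correct and follows essentially the same route as the paper: integrate $g$ against the mixture identity, invoke the constraint $P_0\in\mathcal{M}_\alpha$ to turn the left-hand side into $m(\alpha)$, and then appeal to the assumed uniqueness of the moment system. The only cosmetic difference is that the paper phrases the reduction via the map $G(\lambda,\theta,\alpha)=\lambda m_1(\theta)+(1-\lambda)m(\alpha)$ and its injectivity, whereas you first isolate $P_0$ and $\tilde P_0$ before integrating; the two presentations are algebraically equivalent.
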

The proof is differed to Appendix \ref{AppendSemiPara:PropIdenitifiability}.
\begin{example}[Semiparametric two-component Gaussian mixture]
Suppose that $P_1(.|\theta)$ is a Gaussian model $\mathcal{N}(\mu_1,1)$. Suppose also that the set of constraints is defined as follows:
\[\mathcal{M}_{\mu_0^*} = \left\{f_0 \text{ s.t. } \int{f_0(x)dx}=1,\quad\int_{\mathbb{R}}{xf_0(x)dx} = \mu_0^*,\quad \int_{\mathbb{R}}{x^2f_0(x)dx} = 1+{\mu_0^*}^2\right\}.\]
We would like to study the identifiability of the two-component semiparametric Gaussian mixture whose unknown component $P_0$ shares the first two moments with the Gaussian distribution $\mathcal{N}(\mu_0^*,1)$ for a known $\mu_0^*$. Using Proposition \ref{prop:identifiabilityMixture}, it suffices to study the system of equations 
\begin{eqnarray*}
\frac{1}{1-\lambda} \int{xf(x|\mu_1^*,\mu_0^*,\lambda^*)dx} - \frac{\lambda}{1-\lambda} \int{xf_1(x|\mu_1)dx} & = & \mu_0^* \\
\frac{1}{1-\lambda} \int{x^2f(x|\mu_1^*,\mu_0^*,\lambda^*)dx} - \frac{\lambda}{1-\lambda} \int{x^2f_1(x|\mu_1)dx} & = & 1+{\mu_0^*}^2.
\end{eqnarray*}
Recall that $\int{xf(x|\mu_1^*,\mu_0^*,\lambda^*)dx} = \lambda^*\mu_1^*+(1-\lambda^*)\mu_0^*$ and $\int{x^2f(x|\mu_1^*,\mu_0^*,\lambda^*)dx}=1+\lambda^*{\mu_1^*}^2+(1-\lambda^*){\mu_0^*}^2$. The first equation in the previous system entails that:
\begin{equation}
\lambda\mu_1 - \lambda\mu_0^* = \lambda^*\mu_1^* -\lambda^*\mu_0^*.
\label{eqn:GaussExampleCond1}
\end{equation}
The second equation gives:
\begin{equation}
\lambda^*(1+{\mu_1^*}^2)-\lambda^*(1+{\mu_0^*}^2) = \lambda\mu_1^2 - \lambda{\mu_0^*}^2
\label{eqn:GaussExampleCond2}
\end{equation}
The nonlinear system of equations (\ref{eqn:GaussExampleCond1}, \ref{eqn:GaussExampleCond2}) has a solution for $\mu_1=\mu_1^*, \lambda=\lambda^*$. Suppose by contradiction that $\mu_1\neq\mu_1^*$ and check if there are other solutions. The system (\ref{eqn:GaussExampleCond1}, \ref{eqn:GaussExampleCond2}) implies:
\[
\lambda = \frac{\lambda^*(1+{\mu_1^*}^2)-\lambda^*(1+{\mu_0^*}^2)}{(\mu_1-\mu_0^*)(\mu_1+\mu_0^*)} = \frac{\lambda^*\mu_1^* - \lambda^*\mu_0^*}{\mu_1-\mu_0^*}
\]
This entails that 
\[\mu_1 + \mu_0^* = \frac{\lambda^*\left[(1+{\mu_1^*}^2)-(1+{\mu_0^*}^2)\right]}{\lambda^*\left[\mu_1^* - \mu_0^*\right]}=\mu_1^*+\mu_0^*\] 
Hence, $\mu_1=\mu_1^*$ which contradicts what we have assumed. Thus $\mu_1=\mu_1^*, \lambda=\lambda^*$ is the only solution. We conclude that if $\mu_0^*$ is known and that we impose two moments constraints over $f_0$, then the semiparametric two-component Gaussian mixture model is identifiable.\\
Notice that imposing only one condition on the first moment is not sufficient since any value of $\lambda\in(0,1)$ would produce a corresponding solution for $\mu_1$ in equation (\ref{eqn:GaussExampleCond1}). We therefore are in need for the second constraint. Notice also that if $\lambda=\lambda^*$, then $\mu_1=\mu_1^*$. This means, by continuity of the equation over $(\lambda,\mu_1)$, if $\lambda$ is initialized in a close neighborhood of $\lambda^*$, then $\mu_1$ would be estimated near $\mu_1^*$. This may represent a remedy if we could not impose but one moment constraint.

\end{example}
\subsection{An algorithm for the Estimation of the semiparametric mixture model}\label{subsec:procIntrod}
We have seen in paragraph \ref{subsec:DualTechRes} that it is possible to use $\varphi-$divergences to estimate a semiparametric model as long as the constraints apply over $P(.|\phi)$, i.e. the whole mixture. In our case, the constraints apply only on a component of the mixture. It is thus reasonable to consider a "model" expressed through $P_0$ instead of $P$. We have:
\[P_0 = \frac{1}{1-\lambda} P(.|\phi) - \frac{\lambda}{1-\lambda} P_1(.|\theta).\]
Denote $P_T=P(.|\phi^*)$ with $\phi^*=(\lambda^*,\theta^*,\alpha^*)$ the distribution which generates the observed data. Denote also $P_0^*$ to the true semiparametric component of the mixture $P_T$. The only information we hold about $P_0^*$ is that it belongs to a set $\mathcal{M}_{\alpha^*}$ for some (possibly unknown) $\alpha^*\in\mathcal{A}$. Besides, it verifies:
\begin{equation}
P_0^* = \frac{1}{1-\lambda^*} P_T - \frac{\lambda^*}{1-\lambda^*} P_1(.|\theta^*).
\label{eqn:TrueP0model}
\end{equation}
We would like to retrieve the value of the vector $\phi^*=(\lambda^*,\theta^*,\alpha^*)$ provided a sample $X_1,\cdots,X_n$ drawn from $P_T$ and that $P_0^*\in\cup_{\alpha}\mathcal{M}_{\alpha}$. Consider the set of signed measures:
\begin{equation}
\mathcal{N} = \left\{Q=\frac{1}{1-\lambda} P_T - \frac{\lambda}{1-\lambda} P_1(.|\theta), \quad \lambda\in(0,1),\theta\in\Theta\right\}.
\label{eqn:SetNnewModel}
\end{equation}
Notice that $P_0^*$ belongs to this set for $\lambda=\lambda^*$ and $\theta=\theta^*$. On the other hand, $P_0^*$ is supposed, for simplicity, to belong to the union $\cup_{\alpha\in\mathcal{A}}\mathcal{M}_{\alpha}$. We may now write,
\[P_0^*\in \mathcal{N} \bigcap \cup_{\alpha\in\mathcal{A}}\mathcal{M}_{\alpha}.\]
If we suppose now that the intersection $\mathcal{N} \bigcap \cup_{\alpha\in\mathcal{A}}\mathcal{M}_{\alpha}$ contains only one element (see paragraph \ref{subsec:UniquenessSolMom} for a discussion) which would be a fortiori $P_0^*$, then it is very reasonable to consider an estimation procedure by calculating some "distance" between the two sets $\mathcal{N}$ and $\cup_{\alpha\in\mathcal{A}}\mathcal{M}_{\alpha}$. Such distance can be measured using a $\varphi-$divergence by (see Definition \ref{def:phiDistance}):
\begin{equation}
D_{\varphi}(\mathcal{M},\mathcal{N}) = \inf_{Q\in\mathcal{N}}\inf_{P_0\in\mathcal{M}}D_{\varphi}(P_0,Q).
\label{eqn:DistanceTwoMeasureSets}
\end{equation}
We may reparametrize this distance using the definition of $\mathcal{N}$. Indeed,
\begin{eqnarray}
D_{\varphi}(\cup_{\alpha}\mathcal{M}_{\alpha},\mathcal{N}) & = & \inf_{Q\in\mathcal{N}}\inf_{P_0\in\cup_{\alpha}\mathcal{M}_{\alpha}}D_{\varphi}(P_0,Q) \nonumber\\
 & = & \inf_{\lambda,\theta}\inf_{\alpha, P_0\in\mathcal{M}_{\alpha}}D_{\varphi}\left(P_0,\frac{1}{1-\lambda} P_T - \frac{\lambda}{1-\lambda} P_1(.|\theta)\right).
\label{eqn:DistanceModelConstraint}
\end{eqnarray}
If we still have $P_0^*$ as the only signed measure which belongs to both $\mathcal{N}$ and $\cup_{\alpha}\mathcal{M}_{\alpha}$, then, the argument of the infimum in (\ref{eqn:DistanceModelConstraint}) is none other than $(\lambda^*,\theta^*,\alpha^*)$, i.e.
\begin{equation}
(\lambda^*,\theta^*,\alpha^*) = \arginf_{\lambda,\theta,\alpha}\inf_{P_0\in\mathcal{M}_{\alpha}}D_{\varphi}\left(P_0,\frac{1}{1-\lambda} P(.|\phi^*) - \frac{\lambda}{1-\lambda} P_1(.|\theta)\right).
\label{eqn:MomentTrueEstimProc}
\end{equation}
It is important to notice that if $P_0^*\notin\cup\mathcal{M}_{\alpha}$, then the procedure still makes sense. Indeed, we are searching for the best measure of the form $\frac{1}{1-\lambda}P_T - \frac{\lambda}{1-\lambda}P_1(.|\theta)$ which verifies the constraints.
\subsection{The algorithm in practice : Estimation using the duality technique and plug-in estimate}
The Fenchel-Legendre duality permits to transform the problem of minimizing under linear constraints in a possibly infinite dimensional space into an unconstrained optimization problem in the space of Lagrangian parameters over $\mathbb{R}^{\ell+1}$, where $\ell+1$ is the number of constraints. We will apply the duality result presenter earlier in paragraph \ref{subsec:DualTechRes} on the inner optimization in equation (\ref{eqn:DistanceModelConstraint}). Redefine the function $m$ as $m(\alpha)=(m_0(\alpha),m_1(\alpha),\cdots,m_{\ell}(\alpha))$ where $m_0(\alpha)=1$. We have:
\begin{eqnarray*}
\inf_{Q\in\mathcal{M}_{\alpha}} D_{\varphi}\left(Q,\frac{1}{\lambda-1} P_T - \frac{\lambda}{1-\lambda} P_1(.|\theta)\right) & = & \sup_{\xi\in\mathbb{R}^{l+1}} \xi^t m(\alpha)  - \frac{1}{1-\lambda}\int{\psi\left(\xi^t g(x)\right) \left(dP_T(x) - \lambda dP_1(x|\theta)\right)}\\
 & = & \sup_{\xi\in\mathbb{R}^{l+1}} \xi^t m(\alpha)  - \frac{1}{1-\lambda}\int{\psi\left(\xi^t g(x)\right) dP_T(x)} \\
	&  & + \frac{\lambda}{1-\lambda} \int{\psi\left(\xi^t g(x)\right) dP_1(x|\theta)}.
\end{eqnarray*}
Inserting this result in (\ref{eqn:MomentTrueEstimProc}) gives that:
\begin{eqnarray*}
(\lambda^*,\theta^*,\alpha^*)  & = &  \arginf_{\phi}\inf_{Q\in\mathcal{M}_{\alpha}} D_{\varphi}\left(Q,\frac{1}{\lambda-1} P_T - \frac{\lambda}{1-\lambda} P_1(.|\theta)\right) \\
& = & \arginf_{\phi}\sup_{\xi\in\mathbb{R}^{l+1}} \xi^t m(\alpha)  - \frac{1}{1-\lambda}\int{\psi\left(\xi^t g(x)\right) dP_T(x)} \\
	&  & + \frac{\lambda}{1-\lambda} \int{\psi\left(\xi^t g(x)\right) dP_1(x|\theta)}.
\end{eqnarray*}
The right hand side can be estimated on the basis of an $n-$sample drawn from $P_T$, say $X_1,\cdots,X_n$, by a simple plug-in of the empirical measure $P_n$. The resulting procedure can now be written as:
\begin{eqnarray}
(\hat{\lambda}, \hat{\theta},\hat{\alpha}) & = & \arginf_{\lambda, \theta,\alpha} \sup_{\xi\in\mathbb{R}^{l+1}} \xi^t m(\alpha)  - \frac{1}{1-\lambda}\frac{1}{n}\sum_{i=1}^n{\psi\left(\xi^t g(X_i)\right)} \nonumber\\
 &  & \qquad \qquad \qquad + \frac{\lambda}{1-\lambda} \int{\psi\left(\xi^t g(x)\right) dP_1(x|\theta)}.
\label{eqn:MomentEstimProc}
\end{eqnarray}
This is a feasible procedure in the sens that we only need the data, the set of constraints and the model of the parametric component.
\begin{example}[Chi square] 
\label{ex:Chi2LinConstr}
Let's take the case of the $\chi^2$ divergence for which $\varphi(t)=(t-1)^2/2$. The Convex conjugate of $\varphi$ is given by $\psi(t)=t^2/2+t$. For $(\lambda,\theta,\alpha)\in\Phi$, we have:
\begin{multline*}
\inf_{Q\in\mathcal{M}_{\alpha}} D_{\varphi}\left(Q,\frac{1}{\lambda-1} P_T - \frac{\lambda}{1-\lambda} P_1(.|\theta)\right)
 =  \sup_{\xi\in\mathbb{R}^{l+1}} \xi^t m(\alpha)  - \frac{1}{1-\lambda}\int{\left[\frac{1}{2}\left(\xi^t g(x)\right)^2+\xi^t g(x)\right] dP_T(x)} \\
	+ \frac{\lambda}{1-\lambda} \int{\left[\frac{1}{2}\left(\xi^t g(x)\right)^2+\xi^t g(x)\right] dP_1(x|\theta)}.
\end{multline*}
It is interesting to note that the supremum over $\xi$ can be calculated explicitly. Clearly, the optimized function is a polynomial of $\xi$ and thus infinitely differentiable. The Hessian matrix is equal to $-\Omega$ where:
\begin{equation}
\Omega = \int{g(x)g(x)^t\left(\frac{1}{1-\lambda}dP(x)-\frac{\lambda}{1-\lambda}dP_1(x|\theta)\right)}. \\
\label{eqn:OmegaMat}
\end{equation}
If the measure $\frac{1}{1-\lambda}dP-\frac{\lambda}{1-\lambda}dP_1(.|\theta)$ is positive, then $\Omega$ is symmetric definite positive (s.d.p) and the Hessian matrix is symmetric definite negative. Consequently, the supremum over $\xi$ is ensured to exist. If it is a signed measure, then the supremum might be infinity. We may now write:
\[
\xi(\phi) = \Omega^{-1}\left(m(\alpha) - \int{g(x)\left(\frac{1}{1-\lambda}dP(x)-\frac{\lambda}{1-\lambda}dP_1(x|\theta)\right)}\right), \quad \text{if } \Omega \text{ is s.d.p}
\]
For the empirical criterion, we define similarly $\Omega_n$ by:
\begin{equation}
\Omega_n = \frac{1}{n}\frac{1}{1-\lambda}\sum_{i=1}^n{g(X_i)g(X_i)^t}-\frac{\lambda}{1-\lambda}\int{g(x)g(x)^tdP_1(x|\theta)}. \\
\label{eqn:OmeganMat}
\end{equation}
The solution to the corresponding supremum over $\xi$ is given by:
\[
\xi_n(\phi) = \Omega_n^{-1}\left(m(\alpha) - \frac{1}{n}\frac{1}{1-\lambda}\sum_{i=1}^n{g(X_i)}+\frac{\lambda}{1-\lambda}\int{g(x)dP_1(x|\theta)}\right), \quad \text{if } \Omega_n \text{ is s.d.p}\]
\end{example}
\subsection{Uniqueness of the solution "under the model"}\label{subsec:UniquenessSolMom}
By a unique solution we mean that only one measure, which can be written in the form of $\frac{1}{1-\lambda}P_T-\frac{\lambda}{1-\lambda}P_1(.|\theta)$, verifies the constraints with a unique triplet $(\lambda^*,\theta^*,\alpha^*)$. The existence of a unique solution is essential in order to ensure that the procedure (\ref{eqn:MomentTrueEstimProc}) is a reasonable estimation method. We provide next a result ensuring the uniqueness of the solution. The idea is based on the identification of the intersection of the set $\mathcal{N}\cap\mathcal{M}$. The proof is differed to Appendix \ref{AppendSemiPara:Prop1}.
\begin{proposition}
\label{prop:identifiability}
Assume that $P_0^*\in\mathcal{M}=\cup_{\alpha}\mathcal{M}_{\alpha}$. Suppose also that:
\begin{enumerate}
\item the system of equations:
\begin{equation}
\int{g_i(x)\left(dP(x|\phi^*) - \lambda dP_1(x|\theta)\right)} = (1-\lambda)m_i(\alpha), \qquad  i=1,\cdots,\ell
\label{eqn:NlnSys}
\end{equation}
has a unique solution $(\lambda^*,\theta^*,\alpha^*)$;
\item the function $\alpha\mapsto m(\alpha)$ is one-to-one;
\item for any $\theta\in\Theta$ we have :
\[\lim_{\|x\|\rightarrow \infty} \frac{dP_1(x|\theta)}{dP_T(x)} = c,\quad \text{ with } c\in [0,\infty)\setminus\{1\};\]
\item the parametric component is identifiable, i.e. if $P_1(.|\theta) = P_1(.|\theta')\;\; dP_T-$a.e. then $\theta=\theta'$,
\end{enumerate}
then, the intersection $\mathcal{N}\cap\mathcal{M}$ contains a unique measure $P_0^*$, and there exists a unique vector $(\lambda^*,\theta^*,\alpha^*)$ such that $P_T = \lambda^*P_1(.|\theta^*)+(1-\lambda)P_0^*$ where $P_0^*$ is given by (\ref{eqn:TrueP0model}) and belongs to $\mathcal{M}_{\alpha^*}$. Moreover, provided assumptions 2-4, the conclusion holds if and only if assumption 1 is fulfilled.
\end{proposition}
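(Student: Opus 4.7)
The plan is to translate membership in $\mathcal{N}\cap\mathcal{M}$ into the nonlinear system (\ref{eqn:NlnSys}), and then exploit each of the four assumptions in the stated roles. Assumption 1 takes care of uniqueness at the level of the parametrizing triplet, assumption 3 together with 4 upgrades this to uniqueness at the level of the measure $Q$, and assumption 2 finally identifies $\alpha$.

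First I would observe that any $Q\in\mathcal{N}$ automatically satisfies $\int dQ = \frac{1}{1-\lambda}-\frac{\lambda}{1-\lambda}=1$, so the normalization part of the definition of $\mathcal{M}_\alpha$ is free. Hence $Q = \frac{1}{1-\lambda}P_T - \frac{\lambda}{1-\lambda}P_1(.|\theta)$ lies in $\mathcal{M}_\alpha$ if and only if $\int g\,dQ = m(\alpha)$, which, after clearing denominators, is exactly the system (\ref{eqn:NlnSys}). Consequently, the triplets $(\lambda,\theta,\alpha)$ yielding an element of $\mathcal{N}\cap\mathcal{M}$ are in one-to-one correspondence with the solutions of (\ref{eqn:NlnSys}). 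Since $(\lambda^*,\theta^*,\alpha^*)$ is by construction one such solution (the decomposition (\ref{eqn:TrueP0model}) together with $P_0^*\in\mathcal{M}_{\alpha^*}$), assumption 1 forces it to be the only one.

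The next step, which I expect to be the main technical obstacle, is to pass from uniqueness of the triplet to uniqueness of the measure $P_0^*$, i.e.\ to show that the map $(\lambda,\theta)\mapsto\frac{1}{1-\lambda}P_T-\frac{\lambda}{1-\lambda}P_1(.|\theta)$ is injective on $(0,1)\times\Theta$. Suppose two pairs $(\lambda,\theta)$ and $(\lambda',\theta')$ give the same measure. Since $P_1(.|\theta)\ll P_T$, I can take Radon-Nikodym derivatives with respect to $P_T$ to obtain
\begin{equation*}
\frac{1}{1-\lambda}-\frac{1}{1-\lambda'} = \frac{\lambda}{1-\lambda}\frac{dP_1(.|\theta)}{dP_T} - \frac{\lambda'}{1-\lambda'}\frac{dP_1(.|\theta')}{dP_T}\qquad P_T\text{-a.e.}
\end{equation*}
Letting $\|x\|\to\infty$ and using assumption 3 with the common limit $c\neq 1$, together with the identity $\frac{1}{1-\lambda}=1+\frac{\lambda}{1-\lambda}$, yields $(1-c)\bigl(\tfrac{\lambda}{1-\lambda}-\tfrac{\lambda'}{1-\lambda'}\bigr)=0$, hence $\lambda=\lambda'$. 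Plugging this back gives $P_1(.|\theta)=P_1(.|\theta')$ $P_T$-a.e., and assumption 4 yields $\theta=\theta'$. Assumption 2 then identifies $\alpha=\alpha'$ from the constraint equation. The delicate point here is that the limit needs to be taken in a rigorous almost-everywhere sense on the unbounded tail, so I would insist that the equality is $P_T$-a.e.\ and choose a sequence escaping to infinity along which both derivatives converge.

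For the ``if and only if'' statement, I would argue by contraposition: if (\ref{eqn:NlnSys}) admitted two distinct solutions $(\lambda_1,\theta_1,\alpha_1)$ and $(\lambda_2,\theta_2,\alpha_2)$, the injectivity established in the previous paragraph would produce two distinct measures in $\mathcal{N}\cap\mathcal{M}$, contradicting uniqueness of $P_0^*$ in the intersection. This closes the equivalence under assumptions 2--4 and completes the proof.
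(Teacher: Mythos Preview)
Your proof is correct and matches the paper's approach: translate intersection membership into the system (\ref{eqn:NlnSys}), establish injectivity of $(\lambda,\theta)\mapsto Q$ via the tail limit in assumption~3 combined with the identifiability in assumption~4, and use assumption~2 to pin down $\alpha$. One minor slip in your narrative: the injectivity step is actually needed for the \emph{converse} direction (distinct solutions yield distinct measures), not to pass from a unique triplet to a unique measure in the forward direction---that implication is automatic since the parametrization is a function, so your forward argument is already complete before you invoke injectivity.
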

There is no general result for a non linear system of equations to have a unique solution; still, it is necessary to ensure that $\ell\geq d+s+1$, otherwise there would be an infinite number of signed measures in the intersection $\mathcal{N} \bigcap \cup_{\alpha\in\mathcal{A}}\mathcal{M}_{\alpha}$.
\begin{remark}
Assumptions 3 and 4 of Proposition \ref{prop:identifiability} are used to prove the identifiability of the "model" $\left(\frac{1}{1-\lambda}P_T - \frac{\lambda}{1-\lambda}P_1(.|\theta)\right)_{\lambda,\theta}$. Thus, according to the considered situation we may find simpler ones for particular cases (or even for the general case). Our assumptions remain sufficient but not necessary for the proof.
\end{remark}
\begin{example}
One of the most popular models in clustering is the Gaussian multivariate mixture (GMM). Suppose that we have two classes. Linear discriminant analysis (LDA) is based on the hypothesis that the covariance matrix of the two classes is the same. Let $X$ be a random variable which takes its values in $\mathbb{R}^2$ and is drawn from a mixture model of two components. In the context of LDA, the model has the form:
\[f(x,y|\lambda,\mu_1,\mu_2,\Sigma) = \lambda f_1(x,y|\mu_1,\Sigma) + (1-\lambda)f_1(x,y|\mu_2,\Sigma),\]
with:
\[f_1(x,y|\mu_1,\Sigma)=\frac{1}{2\pi\sqrt{|\text{det}(\Sigma)|}}\exp\left[-\frac{1}{2}((x,y)^t-\mu_1)^t\Sigma((x,y)^t-\mu_1)\right], \quad \Sigma = \left(\begin{array}{cc}\sigma^2 & \rho \\ \rho & \sigma^2\end{array}\right).\]
We would like to relax the assumption over the second component by keeping the fact that the covariance matrix is the same as the one of the first component. We will start by imposing the very natural constraints on the second component. 
\begin{eqnarray*}
\int{xf_0(x,y)dxdy} & = & \mu_{2,1}, \\
\int{yf_0(x,y)dxdy} & = & \mu_{2,2},\\ 
\int{x^2f_0(x,y)dxdy} & = & \sigma^2, \\
\int{y^2f_0(x,y)dxdy} & = & \sigma^2, \\
\int{xyf_0(x,y)dxdy} & = & \rho + \mu_{2,1}\mu_{2,2} - \mu_{2,1}^2 - \mu_{2,2}^2.
\end{eqnarray*}
These constraints concern only the fact that the covariance matrix $\Sigma$ is the same as the one of the Gaussian component (the parameteric one). In order to see whether this set of constraints is sufficient for the existence of a unique measure in the intersection $\mathcal{N}\cap\mathcal{M}$, we need to write the set of equations corresponding to (\ref{eqn:NlnSys}) in Proposition \ref{prop:identifiability}.
\begin{eqnarray*}
\int{x\left[\frac{1}{1-\lambda}f(x,y) - \frac{\lambda}{1-\lambda}f_1(x,y|\mu_1,\sigma,\rho)\right]dxdy} & = & \mu_{2,1}, \\
\int{y\left[\frac{1}{1-\lambda}f(x,y) - \frac{\lambda}{1-\lambda}f_1(x,y|\mu_1,\sigma,\rho)\right]dxdy} & = & \mu_{2,2}, \\ 
\int{x^2\left[\frac{1}{1-\lambda}f(x,y) - \frac{\lambda}{1-\lambda}f_1(x,y|\mu_1,\sigma,\rho)\right]dxdy} & = & \sigma^2, \\
\int{y^2\left[\frac{1}{1-\lambda}f(x,y) - \frac{\lambda}{1-\lambda}f_1(x,y|\mu_1,\sigma,\rho)\right]dxdy} & = & \sigma^2, \\
\int{xy\left[\frac{1}{1-\lambda}f(x,y) - \frac{\lambda}{1-\lambda}f_1(x,y|\mu_1,\sigma,\rho)\right]dxdy} & = & \rho + \mu_{2,1}\mu_{2,2} - \mu_{2,1}^2 - \mu_{2,2}^2,
\end{eqnarray*}
The number of parameters is 7, and we only have 5 equations. In order for the problem to have a unique solution, it is necessary to either add two other constraints or to consider for example $\mu_1 = (\mu_{1,1},\mu_{1,2})$ to be known\footnote{or estimated by another procedure such as $k-$means.}. Other solutions exist, but depend on the prior information. We may imagine an assumption of the form $\mu_{1,1}=a\mu_{1,2}$ and $\mu_{2,1}=b\mu_{2,2}$ for given constants $a$ and $b$.\\
The gain from relaxing the normality assumption on the second component is that we are building a model which is not constrained to a Gaussian form for the second component, but rather to a form which suits the data. The price we pay is the number of relevant constraints which must be at least equal to the number of unknown parameters.
\end{example}

%
%

\section{Asymptotic properties of the new estimator}\label{sec:AsymptotResults}
\subsection{Consistency}
The double optimization procedure defining the estimator $\hat{\phi}$ defined by (\ref{eqn:MomentEstimProc}) does not permit us to use M-estimates methods to prove consistency. In \cite{KeziouThesis} Proposition 3.7 and in \cite{BroniaKeziou09} Proposition 3.4, the authors propose a method which can simply be generalized to any double optimization procedure since the idea of the proof slightly depends on the form of the optimized function. In order to restate this result here and give an exhaustive and a general proof, suppose that our estimator $\hat{\phi}$ is defined through the following double optimization procedure. Let $H$ and $H_n$ be two generic functions such that $H_n(\phi,\xi) \rightarrow H(\phi,\xi)$ in probability for any couple $(\phi,\xi)$. Define $\hat{\phi}$ and $\phi^*$ as follows:
\begin{eqnarray*}
\hat{\phi} & = & \arginf_{\phi} \sup_{\xi} H_n(\phi,\xi);\\
\phi^* & = & \arginf_{\phi} \sup_{\xi} H(\phi,\xi).
\end{eqnarray*}
We adapt the following notation:
\[\xi(\phi) = \argsup_{t} H(\phi,t), \qquad \xi_n(\phi) = \argsup_{t} H_n(\phi,t)\]
The following theorem provides sufficient conditions for consistency of $\hat{\phi}$ towards $\phi^*$. This result will then be applied to the case of our estimator.\\
Assumptions:
\begin{itemize}
\item[A1.] the estimate $\hat{\phi}$ exists (even if it is not unique);
\item[A2.] $\sup_{\xi,\phi} \left|H_n(\phi,\xi) - H(\phi,\xi)\right|$ tends to 0 in probability;
\item[A3.] for any $\phi$, the supremum of $H$ over $\xi$ is unique and isolated, i.e. $\forall \varepsilon>0, \forall \tilde{\xi}$ such that $\|\tilde{\xi}-\xi(\phi)\|>\varepsilon$, then there exists $\eta>0$ such that $H(\phi,\xi(\phi)) - H(\phi,\tilde{\xi})>\eta$;
\item[A4.] the infimum of $\phi\mapsto H(\phi,\xi(\phi))$ is unique and isolated, i.e. $\forall \varepsilon>0, \forall \phi$ such that $\|\phi-\phi^*\|>\varepsilon$, there exists $\eta>0$ such that $H(\phi,\xi(\phi))-H(\phi^*,\xi(\phi^*))>\eta$;
\item[A5.] for any $\phi$ in $\Phi$, function $\xi\mapsto H(\phi,\xi)$ is continuous.
\end{itemize}
In assumption A4, we suppose the existence and uniqueness of $\phi^*$. It does not, however, imply the uniqueness of $\hat{\phi}$. This is not a problem for our consistency result. The vector $\hat{\phi}$ may be any point which verifies the minimum of function $\phi\mapsto\sup_{\xi} H_n(\phi,\xi)$. Our consistency result shows that all vectors verifying the minimum of $\phi\mapsto\sup_{\xi} H_n(\phi,\xi)$ converge to the unique vector $\phi^*$. We also prove an asymptotic normality result which shows that even if $\hat{\phi}$ is not unique, all possible values should be in a neighborhood of radius $\mathcal{O}(n^{-1/2})$ centered at $\phi^*$.\\
The following lemma establishes a uniform convergence result for the argument of the supremum over $\xi$ of function $H_n(\phi,\xi)$ towards the one of function $H(\phi,\xi)$. It constitutes a first step towards the proof of convergence of $\hat{\phi}$ towards $\phi^*$. The proof is differed to Appendix \ref{AppendSemiPara:Lem1}.
\begin{lemma}
\label{lem:SupXiPhiDiff}
Assume A2 and A3 are verified, then 
\[\sup_{\phi}\|\xi_n(\phi) - \xi(\phi)\| \rightarrow 0 ,\qquad \textit{in probability.}\] 
\end{lemma}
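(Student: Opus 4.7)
The plan is to follow the classical M-estimator consistency template (as in van der Vaart Theorem 5.7, already used earlier in the thesis as Theorem 1 on page dedicated to consistency of the kernel-based MD$\varphi$DE), but pushed through uniformly in $\phi$. The key algebraic identity is the standard ``sandwich'': by the very definition of $\xi_n(\phi)$ as the maximizer of $H_n(\phi,\cdot)$, we have $H_n(\phi,\xi_n(\phi))\geq H_n(\phi,\xi(\phi))$, and so
\[
H(\phi,\xi(\phi))-H(\phi,\xi_n(\phi))
= \bigl[H(\phi,\xi(\phi))-H_n(\phi,\xi(\phi))\bigr]
+\bigl[H_n(\phi,\xi(\phi))-H_n(\phi,\xi_n(\phi))\bigr]
+\bigl[H_n(\phi,\xi_n(\phi))-H(\phi,\xi_n(\phi))\bigr]
\leq 2\sup_{\phi,\xi}\lvert H_n(\phi,\xi)-H(\phi,\xi)\rvert.
\]
Taking the supremum over $\phi$ on the left-hand side and using A2, the first step is to conclude that $\sup_{\phi}\bigl[H(\phi,\xi(\phi))-H(\phi,\xi_n(\phi))\bigr]\to 0$ in probability.

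Second, I would use A3 (the well-separatedness of $\xi(\phi)$ as the maximizer of $H(\phi,\cdot)$) to invert this functional closeness into closeness of arguments. Fix $\varepsilon>0$; by A3, for each $\phi$ the event $\|\xi_n(\phi)-\xi(\phi)\|>\varepsilon$ forces $H(\phi,\xi(\phi))-H(\phi,\xi_n(\phi))>\eta(\phi,\varepsilon)$ for some $\eta(\phi,\varepsilon)>0$. If we can set $\eta(\varepsilon):=\inf_{\phi\in\Phi}\eta(\phi,\varepsilon)>0$, then the inclusion
\[
\Bigl\{\sup_{\phi}\|\xi_n(\phi)-\xi(\phi)\|>\varepsilon\Bigr\}
\subset
\Bigl\{2\sup_{\phi,\xi}\lvert H_n(\phi,\xi)-H(\phi,\xi)\rvert>\eta(\varepsilon)\Bigr\}
\]
together with A2 yields the desired convergence in probability.

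The main obstacle is exactly this last point: assumption A3 provides $\eta$ pointwise in $\phi$, whereas the conclusion of the lemma is uniform in $\phi$. To bridge this gap I would argue by contradiction, assuming there is a sequence $\phi_n\in\Phi$ and $\xi_n(\phi_n)$ with $\|\xi_n(\phi_n)-\xi(\phi_n)\|>\varepsilon$ while $H(\phi_n,\xi(\phi_n))-H(\phi_n,\xi_n(\phi_n))\to 0$; using compactness of $\Phi$ (which is available in the applications to (\ref{eqn:MomentEstimProc}) since $\Phi\subset (0,1)\times\Theta\times\mathcal{A}$ with $\Theta,\mathcal{A}$ bounded in the examples considered) together with the continuity assumption A5 on $\xi\mapsto H(\phi,\xi)$, I would extract convergent subsequences $\phi_n\to\phi_\infty$, $\xi_n(\phi_n)\to\xi_\infty$ and derive $\|\xi_\infty-\xi(\phi_\infty)\|\geq\varepsilon$ while $H(\phi_\infty,\xi(\phi_\infty))=H(\phi_\infty,\xi_\infty)$, contradicting the uniqueness clause of A3.

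Finally, I would remark that the joint uniform convergence in A2 is essential: one cannot replace it by pointwise convergence in $\xi$ at fixed $\phi$, because the whole argument hinges on evaluating $|H_n-H|$ at the data-dependent point $\xi_n(\phi)$. Once the lemma is proved, it feeds directly into the subsequent consistency theorem for $\hat{\phi}$, by allowing one to replace $\xi_n(\phi)$ by $\xi(\phi)$ inside the profiled criterion $\phi\mapsto H_n(\phi,\xi_n(\phi))$ and then apply a second, analogous well-separatedness argument (A4) on the $\phi$-coordinate.
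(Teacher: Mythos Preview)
Your proposal is correct and follows essentially the same route as the paper: the sandwich inequality $H(\phi,\xi(\phi))-H(\phi,\xi_n(\phi))\leq 2\sup_{\phi,\xi}|H_n-H|$ combined with A2, then A3 to convert smallness of the profiled gap into smallness of $\|\xi_n(\phi)-\xi(\phi)\|$. The paper organizes the same pieces through a contradiction argument, picking a point $a_k$ where the supremum is (nearly) realized and bounding $H(a_k,\xi(a_k))-H(a_k,\xi_n(a_k))$ by $\sup_{\phi,\xi}|H_n-H|+o_P(1)$.

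One point worth noting: you explicitly identify and address the uniformity-in-$\phi$ of $\eta$ in A3, proposing a compactness/continuity argument (using $\Phi$ compact and A5) to ensure $\inf_\phi\eta(\phi,\varepsilon)>0$. The paper's proof does not discuss this and effectively treats A3 as if $\eta$ depended only on $\varepsilon$; it simply writes ``by assumption A3, there exists $\eta>0$'' at the point $a_k$ and proceeds. So your treatment is more scrupulous here, though be aware that compactness of $\Phi$ and A5 are not among the stated hypotheses of the lemma itself (only A2 and A3 are); they enter later in Theorem~\ref{theo:MainTheoremMomConstr} and in the applications. Under the lemma's hypotheses as written, the uniformity of $\eta$ is being implicitly assumed in both your argument and the paper's.
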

\noindent We proceed now to announce our consistency theorem. The proof is differed to Appendix \ref{AppendSemiPara:Theo1}.
\begin{theorem}
\label{theo:MainTheorem}
Let $\xi(\phi)$ be the argument of the supremum of $\xi\mapsto H(\phi,\xi)$ for a fixed $\phi$. Assume that A1-A5 are verified, then $\hat{\phi}$ tends to $\phi^*$ in probability.
\end{theorem}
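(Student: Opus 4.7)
The plan is to reduce the double-optimization consistency problem to a standard well-separated-minimum argument on the profile function $\phi \mapsto H(\phi,\xi(\phi))$, using assumption A4 as the identifiability-type tool and A2 as the uniform convergence tool. Lemma \ref{lem:SupXiPhiDiff} is already established, but somewhat surprisingly we will not need to invoke it directly, because the sup-operator in $\xi$ is non-expansive.

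First, I would prove the key uniform convergence statement
\[
\sup_{\phi \in \Phi}\bigl| H_n(\phi,\xi_n(\phi)) - H(\phi,\xi(\phi)) \bigr| \;\xrightarrow{\mathbb{P}}\; 0.
\]
By definition $H_n(\phi,\xi_n(\phi)) = \sup_\xi H_n(\phi,\xi)$ and $H(\phi,\xi(\phi)) = \sup_\xi H(\phi,\xi)$, so for every $\phi$,
\[
\bigl| \sup_\xi H_n(\phi,\xi) - \sup_\xi H(\phi,\xi) \bigr| \leq \sup_\xi \bigl| H_n(\phi,\xi) - H(\phi,\xi)\bigr|.
\]
Taking the supremum over $\phi$ and applying A2 gives the desired convergence. (The role of A5 and of Lemma \ref{lem:SupXiPhiDiff} is implicitly to guarantee that $\xi(\phi)$ and $\xi_n(\phi)$ are genuine maximizers so that both suprema are attained; they are not needed explicitly in this step.)

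Next, I would combine this uniform convergence with A4 via the standard ``sandwich'' argument. Fix $\varepsilon > 0$. By A4 there exists $\eta > 0$ such that
\[
\|\phi - \phi^*\| > \varepsilon \;\Longrightarrow\; H(\phi,\xi(\phi)) - H(\phi^*,\xi(\phi^*)) > \eta.
\]
By the first step, the event $A_n := \{\sup_\phi | H_n(\phi,\xi_n(\phi)) - H(\phi,\xi(\phi))| < \eta/3\}$ has probability tending to $1$. On $A_n$, using the definition of $\hat{\phi}$ as an arginf and the non-expansiveness of the sup just established, we chain
\[
H(\hat{\phi},\xi(\hat{\phi})) \leq H_n(\hat{\phi},\xi_n(\hat{\phi})) + \tfrac{\eta}{3} \leq H_n(\phi^*,\xi_n(\phi^*)) + \tfrac{\eta}{3} \leq H(\phi^*,\xi(\phi^*)) + \tfrac{2\eta}{3}.
\]
Hence $H(\hat{\phi},\xi(\hat{\phi})) - H(\phi^*,\xi(\phi^*)) \leq 2\eta/3 < \eta$, and by the contrapositive of A4 we conclude $\|\hat{\phi} - \phi^*\| \leq \varepsilon$. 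Since $\mathbb{P}(A_n) \to 1$, this gives $\mathbb{P}(\|\hat{\phi} - \phi^*\| > \varepsilon) \to 0$, which is the required convergence in probability.

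The main subtlety, rather than the argument itself, lies in being sure that the suprema defining $\xi(\phi)$ and $\xi_n(\phi)$ are actually attained so that one can legitimately write $H(\phi,\xi(\phi)) = \sup_\xi H(\phi,\xi)$ at every $\phi$; this is where A3 and A5 play their role (A3 ensures uniqueness of the argsup, A5 ensures it is a limit of any maximizing sequence). Assumption A1 guarantees that $\hat{\phi}$ is not vacuous. Note that A3 and Lemma \ref{lem:SupXiPhiDiff} are not needed for the chain of inequalities above; they would, however, become essential for the companion asymptotic normality result, where one needs to control $\xi_n(\hat{\phi})$ around $\xi(\phi^*)$ at rate $n^{-1/2}$. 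Here, by contrast, the sup-of-sup collapse makes consistency follow from A2 and A4 alone in a single, clean step.
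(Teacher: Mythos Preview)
Your proof is correct and takes a genuinely more elementary route than the paper's. The paper does \emph{not} use the non-expansiveness inequality $|\sup_\xi H_n-\sup_\xi H|\le \sup_\xi|H_n-H|$; instead it explicitly invokes Lemma~\ref{lem:SupXiPhiDiff} (hence A3) together with the continuity assumption A5 twice, once at $\phi^*$ and once at $\hat\phi$, to pass from $H(\cdot,\xi_n(\cdot))$ to $H(\cdot,\xi(\cdot))$ via the intermediate point $\xi_n(\phi)$. Concretely, the paper decomposes $H(\hat\phi,\xi(\hat\phi))-H_n(\hat\phi,\xi_n(\hat\phi))$ through $H(\hat\phi,\xi_n(\hat\phi))$ and controls the first piece by continuity of $\xi\mapsto H(\hat\phi,\xi)$ and the uniform convergence $\xi_n\to\xi$ of the lemma. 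Your collapse of the profile functions bypasses this completely: once A2 holds, the two profile objectives are uniformly close and the classical well-separated-minimum sandwich (A4) finishes the job. In effect you show that A5 (and the explicit use of Lemma~\ref{lem:SupXiPhiDiff}) are not needed for the consistency statement itself; A3 enters only to guarantee that $\xi(\phi)$ exists so that A4 is meaningful. What the paper's longer route buys is that Lemma~\ref{lem:SupXiPhiDiff} is established as an independent result, which is then reused in the asymptotic normality proof to show $\xi_n(\hat\phi)\to\xi(\phi^*)=0$; your remark at the end anticipates this correctly.
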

\noindent Let's now go back to our optimization problem (\ref{eqn:MomentEstimProc}) in order to simplify the previous assumptions. First of all, we need to specify functions $H$ and $H_n$. Define function $h$ as follows. Let $\phi = (\lambda,\theta,\alpha)$,
\[h(\phi,\xi, z) = \xi^t m(\alpha)  - \frac{1}{1-\lambda}\psi\left(\xi^t g(z)\right) + \frac{\lambda}{1-\lambda} \int{\psi\left(\xi^t g(x)\right) dP_1(x|\theta)}.\]
Functions $H$ and $H_n$ can now be defined through $h$ by:
\[H(\phi,\xi) = P_T h(\phi,\xi,.),\qquad H_n(\phi,\xi) = P_n h(\phi,\xi,.).\]
In example \ref{ex:Chi2LinConstr}, we considered the the case of the Pearson's $\chi^2$. The supremum is infinity whenever the matrix $\Omega$ defined by (\ref{eqn:OmegaMat}) is s.d.p. It is thus interesting to define the \emph{effective} set of parameters. Define the set $\Phi^+$ by
\[\Phi^+ = \{\phi\in\Phi \text{ s.t. } \xi\mapsto H(\phi,\xi) \text{ is strictly concave}\}\]
Outside the set $\Phi^+$, function $\xi\mapsto H(\phi,\xi)$ is not upper bounded. 
\begin{theorem}
\label{theo:MainTheoremMomConstr}
Assume that A1, A4 and A5 are verified for $\Phi$ replaced by $\Phi^+$. Suppose also that 
\begin{equation}
\sup_{\xi\in\mathbb{R}^{\ell+1}}\left|\int{\psi\left(\xi^t g(x)\right)dP_T(x)} - \frac{1}{n}\sum_{i=1}^n{\psi\left(\xi^t g(X_i)\right)}\right| \xrightarrow[\mathbb{P}]{n\rightarrow\infty} 0,
\label{eqn:AssumptionB2}
\end{equation} 
then the estimator defined by (\ref{eqn:MomentEstimProc}) is consistent.
\end{theorem}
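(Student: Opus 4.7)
The plan is to deduce Theorem \ref{theo:MainTheoremMomConstr} from the general consistency result Theorem \ref{theo:MainTheorem} by verifying the two remaining assumptions A2 and A3 on the set $\Phi^+$. Since A1, A4 and A5 are assumed directly on $\Phi^+$, only A2 (uniform convergence of $H_n$ to $H$) and A3 (uniqueness and isolation of the inner supremum over $\xi$) need to be established.

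For A3, I would exploit the very definition of $\Phi^+$: for every $\phi \in \Phi^+$, the map $\xi \mapsto H(\phi,\xi)$ is strictly concave on $\mathbb{R}^{\ell+1}$. Combined with A5 (continuity in $\xi$) and the fact that strict concavity together with existence of an argmax forces the maximum to be unique and well separated on every ball, this gives A3 immediately: for any $\varepsilon>0$, $\sup_{\|\tilde\xi - \xi(\phi)\|\geq\varepsilon} H(\phi,\tilde\xi) < H(\phi,\xi(\phi))$, and the gap can be taken to be a positive $\eta$ (depending on $\phi$, which is all that A3 requires).

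For A2, I would use the explicit form of $H_n - H$. Since the terms $\xi^t m(\alpha)$ and $\frac{\lambda}{1-\lambda}\int \psi(\xi^t g(x))\,dP_1(x|\theta)$ are identical in $H$ and $H_n$, the difference collapses to
\[
H_n(\phi,\xi) - H(\phi,\xi) = \frac{1}{1-\lambda}\left[\int \psi(\xi^t g(x))\,dP_T(x) - \frac{1}{n}\sum_{i=1}^n \psi(\xi^t g(X_i))\right].
\]
Therefore
\[
\sup_{\phi\in\Phi^+,\,\xi\in\mathbb{R}^{\ell+1}} |H_n(\phi,\xi) - H(\phi,\xi)| \;\leq\; \Bigl(\sup_{\phi\in\Phi^+} \tfrac{1}{1-\lambda}\Bigr)\cdot \sup_{\xi} \Bigl|\int \psi(\xi^t g)\,dP_T - \tfrac{1}{n}\sum_{i=1}^n \psi(\xi^t g(X_i))\Bigr|.
\]
Hypothesis (\ref{eqn:AssumptionB2}) forces the second factor to go to zero in probability. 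The first factor is bounded as soon as $\lambda$ stays away from $1$ on $\Phi^+$, which is the step I expect to be the main (minor) obstacle: one either needs to assume $\lambda \in (0,1-\eta]$ in the definition of $\Phi$, or to observe that as $\lambda \to 1$ the signed measure $\frac{1}{1-\lambda}P_T - \frac{\lambda}{1-\lambda}P_1(\cdot|\theta)$ blows up so $\xi\mapsto H(\phi,\xi)$ ceases to be strictly concave, placing such $\phi$ outside $\Phi^+$. Either way, $1/(1-\lambda)$ is uniformly bounded on $\Phi^+$, yielding A2.

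With A1--A5 all verified on $\Phi^+$, Theorem \ref{theo:MainTheorem} applies verbatim and gives $\hat\phi \to \phi^*$ in probability, which is the desired consistency. The whole argument is essentially a verification, the only non-routine point being the passage from the assumption (\ref{eqn:AssumptionB2}), which is a Glivenko--Cantelli-type condition on the single class $\{\psi(\xi^t g(\cdot)) : \xi \in \mathbb{R}^{\ell+1}\}$, to the joint uniform convergence in $(\phi,\xi)$; the factorization above shows that this reduction is in fact free, provided $\lambda$ is controlled away from $1$.
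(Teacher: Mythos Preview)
Your proposal is correct and follows essentially the same route as the paper: verify A2 and A3 on $\Phi^+$, then invoke Theorem~\ref{theo:MainTheorem}. Your treatment of A2 via the explicit factorization $H_n(\phi,\xi)-H(\phi,\xi)=\tfrac{1}{1-\lambda}\bigl[P_T\psi(\xi^tg)-P_n\psi(\xi^tg)\bigr]$ is actually more careful than the paper's, which simply asserts that the class $\{h(\phi,\xi,\cdot)\}$ is Glivenko--Cantelli and moves on; conversely, the paper spends more effort on A3, deriving the strict concavity of $\xi\mapsto H(\phi,\xi)$ from strict convexity of $\psi$ and the sign of the underlying measure, and explaining that for $\phi\notin\Phi^+$ the inner supremum is $+\infty$ so such $\phi$ are irrelevant to the outer infimum---whereas you take strict concavity as given by the definition of $\Phi^+$. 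The boundedness of $1/(1-\lambda)$ that you flag is a genuine loose end that the paper does not address either; in practice it is handled by restricting $\lambda$ to a compact subinterval of $(0,1)$, which is implicit in the later compactness assumptions on $\Phi$ (cf.\ Corollary~\ref{cor:GlivenkoCantelliClass}). Your second justification for this bound (that $\phi$ with $\lambda$ near $1$ would fall outside $\Phi^+$) is not correct as stated: scaling a positive measure by a large constant does not destroy strict concavity of $\xi\mapsto H(\phi,\xi)$, so one really does need the compactness route.
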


\noindent The proof is differed to Appendix \ref{AppendSemiPara:Theo1}. Assumption A5 could be handled using Lebesgue's continuity theorem if one finds a $P_T-$integrable function $\tilde{h}$ such that $|\psi\left(\xi^t g(z)\right)|\leq \tilde{h}(z)$. This is, however, not possible in general unless we restrain $\xi$ to a compact set. Otherwise, we need to verify this assumption according the situation we have in hand, see example \ref{ex:Chi2Consistency} below for more details. The uniform limit (\ref{eqn:AssumptionB2}) can be treated according to the divergence and the constraints which we would like to impose. A general method is to prove that the class of functions $\{x\mapsto \psi\left(\xi^t g(x)\right),\xi\in\mathbb{R}^{\ell+1}\}$ is a Glivenko-Cantelli class of functions, see \cite{Vaart} Chap. 19 Section 2 and the examples therein for some possibilities.
\begin{remark}
\label{rem:PhiPlusJacobMat}
Under suitable differentiability assumptions, the set $\Phi^+$ defined earlier can be rewritten as:
\[\Phi^+ = \Phi\cap \left\{\phi: \quad J_{H(\phi,.)} \text{ is definite negative}\right\},\]
where $J_{H(\phi,.)}$ is the Hessian matrix of function $\xi\mapsto H(\phi,\xi)$ and is given by:
\begin{equation}
J_{H(\phi,.)} = -\int{g(x)g(x)^t\psi''(\xi^tg(x)) \left(\frac{1}{1-\lambda}dP_T - \frac{\lambda}{1-\lambda}dP_1\right)(x)}.
\label{eqn:JacobHMomentConstr}
\end{equation}
\end{remark}
\noindent The problem with using the set $\Phi^+$ is that if we take a point $\phi$ in the interior of $\Phi$, there is no guarantee that it would be an interior point of $\Phi^+$. This will impose more difficulties in the proof of the asymptotic normality. We prove in the next proposition that this is however true for $\phi^*$. Besides, the set $\Phi^+$ is open as soon as $\int{\Phi}$ is not void. The proof is differed to Appendix \ref{AppendSemiPara:Prop2}.
\begin{proposition}
\label{prop:ContinDiffxiMom}
Assume that function $\xi\mapsto H(\phi,\xi)$ is of class $\mathcal{C}^2$ for any $\phi\in\Phi^+$. Suppose that $\phi^*$ is an interior point of $\Phi$, then there exists a neighborhood $\mathcal{V}$ of $\phi^*$ such that for any $\phi\in\mathcal{V}$, $J_{H(\phi,.)}$ is definite negative and thus $\xi(\phi)$ exists and is finite. Moreover, function $\phi\mapsto\xi(\phi)$ is continuously differentiable on $\mathcal{V}$.
\end{proposition}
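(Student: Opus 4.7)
The plan is to combine a continuity argument for the Hessian matrix $J_{H(\phi,\cdot)}$ with a classical application of the implicit function theorem. First I would check that $\phi^{*}\in\Phi^{+}$: at $\phi^{*}$ the inner divergence $\inf_{Q\in\mathcal{M}_{\alpha^{*}}}D_{\varphi}(Q,P_{0}^{*})$ equals $0$, so by the duality identity \eqref{strong duality} the dual supremum is finite and attained at some $\xi(\phi^{*})$ solving the first-order condition $\nabla_{\xi}H(\phi^{*},\xi(\phi^{*}))=0$. Since $\xi\mapsto H(\phi^{*},\xi)$ is $\mathcal{C}^{2}$ and strictly concave, its Hessian $J_{H(\phi^{*},\cdot)}$, given by formula \eqref{eqn:JacobHMomentConstr}, is symmetric definite negative, which is exactly the characterization of $\Phi^{+}$ recalled in Remark \ref{rem:PhiPlusJacobMat}.

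Next I would establish that $\phi\mapsto J_{H(\phi,\cdot)}$ is continuous at $\phi^{*}$. From \eqref{eqn:JacobHMomentConstr}, each entry is an integral against $\frac{1}{1-\lambda}dP_{T}-\frac{\lambda}{1-\lambda}dP_{1}(\cdot|\theta)$ with an integrand depending smoothly on $\phi$ through $\xi(\phi^{*})$ and $\theta$; applying Lebesgue's dominated convergence theorem (using continuity of $\theta\mapsto dP_{1}(\cdot|\theta)$ and local boundedness of $\psi''(\xi^{t}g(\cdot))$) yields continuity of the Hessian in $\phi$. Because the eigenvalues of a symmetric matrix depend continuously on its entries, the largest eigenvalue of $J_{H(\phi,\cdot)}$ is a continuous function of $\phi$ near $\phi^{*}$ and is strictly negative at $\phi^{*}$. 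Hence there exists an open neighborhood $\mathcal{V}_{1}\subset\Phi^{+}$ of $\phi^{*}$ on which $J_{H(\phi,\cdot)}$ remains definite negative, so that $\xi\mapsto H(\phi,\xi)$ is strictly concave for every $\phi\in\mathcal{V}_{1}$.

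Finally I would apply the implicit function theorem to the $\mathcal{C}^{1}$ map $F(\phi,\xi)=\nabla_{\xi}H(\phi,\xi)$. At $(\phi^{*},\xi(\phi^{*}))$ one has $F=0$, and the partial Jacobian $\partial_{\xi}F(\phi^{*},\xi(\phi^{*}))=J_{H(\phi^{*},\cdot)}$ is invertible (definite negative). The theorem then furnishes a neighborhood $\mathcal{V}\subset\mathcal{V}_{1}$ of $\phi^{*}$ and a unique $\mathcal{C}^{1}$ map $\phi\mapsto\xi(\phi)$ on $\mathcal{V}$ with $F(\phi,\xi(\phi))=0$. By the strict concavity established on $\mathcal{V}_{1}$, this unique critical point is actually the unique global maximum of $\xi\mapsto H(\phi,\xi)$, hence coincides with $\xi(\phi)=\argsup_{\xi}H(\phi,\xi)$; in particular it is finite and continuously differentiable on $\mathcal{V}$.

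The main obstacle, rather than the implicit function step itself, will be rigorously justifying the continuity of $\phi\mapsto J_{H(\phi,\cdot)}$ on $\Phi^{+}$ from the regularity of the parametric family $P_{1}(\cdot|\theta)$, the function $\psi$ and the vector of constraints $g$; once an $L^{1}(P_{T})$ and $L^{1}(P_{1})$ domination is exhibited locally in $\phi$, the remainder of the argument is routine.
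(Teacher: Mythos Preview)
Your overall strategy---continuity of the Hessian $\phi\mapsto J_{H(\phi,\cdot)}$ to propagate negative definiteness to a neighborhood, then the implicit function theorem applied to $\nabla_{\xi}H=0$---is exactly the route the paper takes, and the implicit-function step and the openness/eigenvalue-continuity argument are fine.

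There is, however, a circularity in your first paragraph when you show $\phi^{*}\in\Phi^{+}$. You argue: the primal infimum is $0$, hence by duality the dual supremum is finite and attained; ``since $\xi\mapsto H(\phi^{*},\xi)$ is $\mathcal{C}^{2}$ and strictly concave, its Hessian is definite negative, which is the characterization of $\Phi^{+}$.'' But strict concavity of $\xi\mapsto H(\phi^{*},\xi)$ is precisely what $\phi^{*}\in\Phi^{+}$ means, so you are assuming what you want to prove. Attainment of the dual supremum does not by itself force the Hessian to be \emph{definite} negative (it could be merely semi-definite at the maximizer), and the $\mathcal{C}^{2}$ hypothesis of the proposition is stated only for $\phi\in\Phi^{+}$, so you cannot even invoke it at $\phi^{*}$ before establishing membership.

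The paper closes this gap directly from the model structure: at $\phi^{*}$ one has $\frac{1}{1-\lambda^{*}}P_{T}-\frac{\lambda^{*}}{1-\lambda^{*}}P_{1}(\cdot|\theta^{*})=P_{0}^{*}$, which is a genuine probability measure. Since $\psi''>0$, the integral in \eqref{eqn:JacobHMomentConstr} is then an integral of $g g^{t}\psi''(\xi^{t}g)$ against a positive measure, hence $J_{H(\phi^{*},\cdot)}$ is symmetric definite negative for every $\xi$, and $\phi^{*}\in\Phi^{+}$. Once you make this observation, the rest of your argument goes through unchanged and coincides with the paper's.
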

\begin{corollary}
\label{cor:GlivenkoCantelliClass}
Assume that function $\xi\mapsto H(\phi,\xi)$ is of class $\mathcal{C}^2$ for any $\phi\in\Phi^+$. If $\Phi$ is bounded, then there exists a compact neighborhood $\bar{\mathcal{V}}$ of $\phi^*$ such that $\xi(\bar{\mathcal{V}})$ is bounded and $\{x\mapsto \psi\left(\xi^t g(x)\right),\xi\in\xi(\bar{\mathcal{V}})\}$ is a Glivenko-Cantelli class of functions.
\end{corollary}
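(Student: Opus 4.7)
The plan is to chain Proposition \ref{prop:ContinDiffxiMom} with standard uniform law of large numbers arguments. First, I would apply Proposition \ref{prop:ContinDiffxiMom} to obtain an open neighborhood $\mathcal{V}$ of $\phi^*$ on which $\phi \mapsto \xi(\phi)$ is defined, finite and continuously differentiable. Since $\phi^*$ is an interior point of $\Phi$ (implicit in the proposition), I can select $\bar{\mathcal{V}} \subset \mathcal{V}$ as a closed ball of small enough radius around $\phi^*$; because $\Phi$ is bounded, this $\bar{\mathcal{V}}$ is closed and bounded in $\mathbb{R}^{d+s+1}$, hence compact. The continuous image of the compact $\bar{\mathcal{V}}$ by the continuous map $\phi \mapsto \xi(\phi)$ is then compact in $\mathbb{R}^{\ell+1}$, which gives boundedness of $\xi(\bar{\mathcal{V}})$.

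For the Glivenko--Cantelli part, I would denote $K := \xi(\bar{\mathcal{V}}) \subset \mathbb{R}^{\ell+1}$ and consider the class
\[
\mathcal{F} = \left\{ f_\xi : x \mapsto \psi\bigl(\xi^t g(x)\bigr),\ \xi \in K \right\}.
\]
Since $\psi$ is (at least) continuously differentiable as the convex conjugate of the strictly convex $\varphi$, and $K$ is compact, for each fixed $x$ the map $\xi \mapsto \psi(\xi^t g(x))$ is continuous on $K$. The standard route is then to exhibit a measurable, $P_T$-integrable envelope $F(x) \geq \sup_{\xi \in K} |\psi(\xi^t g(x))|$: by the mean value theorem applied on the segment joining $0$ to $\xi$,
\[
\bigl|\psi(\xi^t g(x))\bigr| \leq |\psi(0)| + \sup_{t \in [0,1]} \bigl|\psi'(t\xi^t g(x))\bigr| \cdot |\xi^t g(x)|,
\]
which, combined with $\sup_{\xi \in K}\|\xi\| < \infty$, reduces envelope integrability to a condition on the growth of $\psi'$ against $g$ under $P_T$ (automatic for classical power divergences with the typical polynomial constraints, since $g$ is already assumed $P_T$-integrable in the primal/dual setup).

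Once an integrable envelope is in hand, I would invoke the Jennrich/van der Vaart uniform law of large numbers (see Example 19.8 in \cite{Vaart}): a class of continuous functions indexed by a compact parameter set and dominated by an integrable envelope is Glivenko--Cantelli. This yields
\[
\sup_{\xi \in \xi(\bar{\mathcal{V}})} \left| \int \psi(\xi^t g(x))\, dP_T(x) - \frac{1}{n}\sum_{i=1}^n \psi(\xi^t g(X_i)) \right| \xrightarrow[\mathbb{P}]{n\to\infty} 0,
\]
which is exactly the GC property. The main obstacle will be the envelope integrability, which is not automatic at the level of generality of this section: it depends on the interplay between the tail behavior of $\psi$ (controlled by the choice of $\varphi$) and the moment-type functions $g$ under $P_T$. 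I would state this as the one substantive hypothesis needed, note that it is verified for the standard Cressie--Read family together with polynomial $g$ whenever $P_T$ has enough moments, and otherwise fall back on a truncation argument restricting $\xi$ further within $K$ to make the envelope bounded.
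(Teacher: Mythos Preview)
Your approach is essentially the same as the paper's: invoke Proposition \ref{prop:ContinDiffxiMom} for the compact neighborhood and continuity of $\phi\mapsto\xi(\phi)$, then apply a standard parametric Glivenko--Cantelli result from \cite{Vaart} (the paper cites Example 19.7 rather than 19.8, but both rest on a compact index set plus continuity plus an integrable envelope). You are in fact more careful than the paper, which simply cites the example without discussing the envelope; your observation that envelope integrability is the one substantive hypothesis --- automatic for the Cressie--Read/$\chi^2$ family with polynomial $g$ and enough moments, but not in full generality --- is a fair point that the paper's two-line proof glosses over.
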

\begin{proof}
The first part of the corollary is an immediate result of Proposition \ref{prop:ContinDiffxiMom} and the continuity of function $\phi\mapsto\xi(\phi)$ over $\text{int}(\Phi^+)$. The implicit functions theorem permits to conclude that $\xi(\phi)$ is continuously differentiable over $\text{int}(\Phi^+)$. The second part is an immediate result of Example 19.7 page 271 from \cite{Vaart}.
\end{proof}
\noindent This corollary suggests that in order to prove the consistency of $\hat{\phi}$, it suffices to restrict the values of $\phi$ on $\Phi^+$ and the values of $\xi$ on $\xi(\text{int}(\Phi^+))$ in the definition of $\hat{\phi}$ (\ref{eqn:MomentEstimProc}). Besides, since $\{x\mapsto \psi\left(\xi^t g(x)\right),\xi\in\xi(\text{int}(\Phi^+))\}$ is a Glivenko-Cantelli class of functions, the uniform limit (\ref{eqn:AssumptionB2}) is verified by the Glivenko-Cantelli theorem.

\begin{remark}
There is a great difference between the set $\Phi^+$ where $\Omega$ is s.d.p. ($J_H$ is s.d.n.) and the set where only $\frac{1}{1-\lambda}dP_T - \frac{\lambda}{1-\lambda}dP_1$ is a probability measure. Indeed, there is a strict inclusion in the sense that if $\frac{1}{1-\lambda}dP_T - \frac{\lambda}{1-\lambda}dP_1$ is a probability measure, then $\Omega$ is s.d.p., but the inverse is not right. Figure (\ref{fig:PosSetVsPosDefSet}) shows this difference. Furthermore, it is clearly simpler to check for a vector $\phi$ if the matrix $\Omega$ is s.d.p. It suffices to calculate the integral\footnote{If function $g$ is a polynomial, i.e. moment constraints, then the integral is a mere subtractions between the moments of $P_T$ and the ones of $P_1$.} (even numerically) and then use some rule such as Sylvester's rule to check if it is definite negative, see the example below. However, in order to check if the measure $\frac{1}{1-\lambda}dP_T - \frac{\lambda}{1-\lambda}dP_1$ is positive, we need to verify it on all $\mathbb{R}^r$.
\begin{figure}[h]
\centering
\includegraphics[scale=0.4]{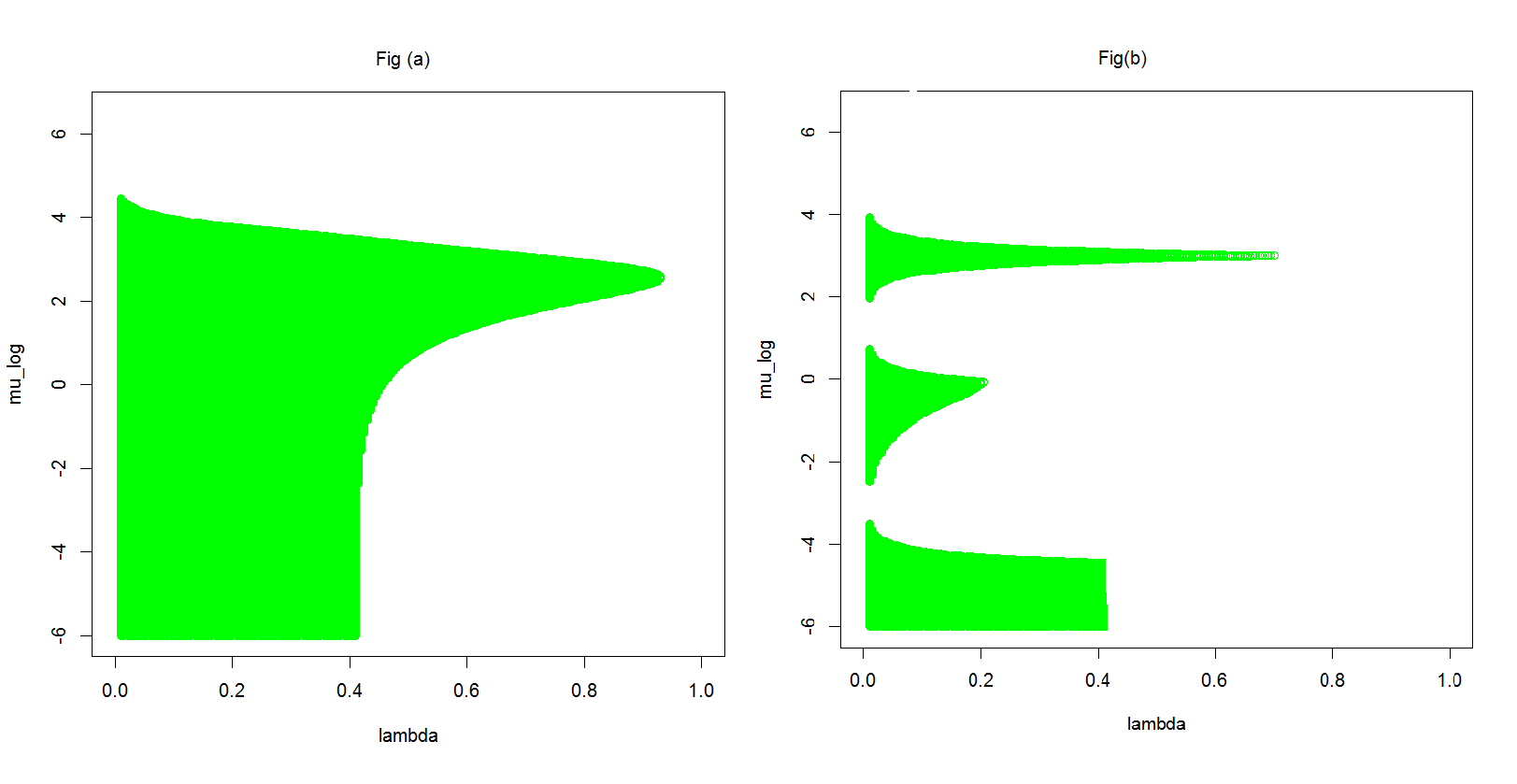}
\caption{Differences between the set where $\frac{1}{1-\lambda}dP_T - \frac{\lambda}{1-\lambda}dP_1$ is positive (Fig (b)) and the set $\Phi^+$ (Fig (a)) in a Weibull--Lognormal mixture.}
\label{fig:PosSetVsPosDefSet}
\end{figure}
\end{remark}
\begin{remark}
The previous remark shows the interest of adapting a methodology based on signed measures and not only positive ones. We have a larger and better space to search inside for the triplet $(\lambda^*,\theta^*,\alpha^*)$. For example, in figure (\ref{fig:PosSetVsPosDefSet}), the optimization algorithm which tries to solve (\ref{eqn:MomentEstimProc}) gets stuck if we only search in the set of parameters for which $\frac{1}{1-\lambda}dP_T - \frac{\lambda}{1-\lambda}dP_1$ is a probability measure. This does not happen if we search in the set $\Phi^+$. Moreover, even if the algorithm returns a triplet $(\hat{\lambda},\hat{\theta},\hat{\alpha})$ for which the semiparametric component $P_0=\frac{1}{1-\lambda}dP_T - \frac{\lambda}{1-\lambda}dP_1$ is not a probability measure, it should not mean that the procedure failed. This is because we are looking for the parameters and not to estimate $P_0$. Besides, it is still possible to threshold the negative values from the density and then regularize in order to integrate to one. 
\end{remark}
\begin{example}[$\chi^2$ case]
\label{ex:Chi2Consistency}
Consider the case of a two-component semiparapetric mixture model where $P_0$ is defined through its first three moments. In other words, the set of constraints $\mathcal{M}_{\alpha}$ is given by:
\[\mathcal{M}_{\alpha} = \left\{Q: \int{dQ(x)}=1,\; \int{xdQ(x)}=m_1(\alpha),\; \int{x^2dQ(x)}=m_2(\alpha),\; \int{x^3dQ(x)}=m_3(\alpha)\right\}.\]
We have already seen in example \ref{ex:Chi2LinConstr} that if $\psi(t)=t^2/2+t$, the Pearson's $\chi^2$ convex conjugate, then the optimization over $\xi$ can be solved and the solution is given by:
\[\xi(\phi) = \Omega^{-1}\left(m(\alpha) - \int{g(x)\left(\frac{1}{1-\lambda}dP(x)-\frac{\lambda}{1-\lambda}dP_1(x|\theta)\right)}\right), \text{ for } \phi\in\Phi^+.\]
Let $M_i$ denotes the moment of order $i$ of $P_T$. Denote also $M_i^{(1)}(\theta)$ the moment of order $i$ of the parametric component $P_1(.|\theta)$.
\[M_i = \mathbb{E}_{P_T}[X^i],\qquad M_i^{(1)}(\theta)=\mathbb{E}_{P_1(.|\theta)}[X^i].\]
A simple calculus shows that:
\begin{eqnarray*}
\Omega & = & \int{g(x)g(x)^t\left(\frac{1}{1-\lambda}dP(x)-\frac{\lambda}{1-\lambda}dP_1(x|\theta)\right)} \\
  & = & \left[\frac{1}{1-\lambda}M_{i+j-2} - \frac{\lambda}{1-\lambda}M_{i+j-2}^{(1)}(\theta)\right]_{i,j\in\{1,\cdots,4\}}.
\end{eqnarray*}
The solution holds for any $\phi\in\text{int}(\Phi^+)$. Continuity assumption A5 over $\xi\mapsto H(\phi,\xi)$ is simplified here because function $H$ is a polynomial of degree 2. We have:
\begin{multline*}
H(\phi,\xi) = \xi^tm(\alpha) - \left[\frac{1}{2}\xi_1^2+\xi_1+(\xi_1\xi_2+\xi_2)\left(\frac{1}{1-\lambda}M_1-\frac{\lambda}{1-\lambda}M_1^{(1)}(\theta)\right)\right. \\ +(\xi_2^2/2+\xi_1\xi_2+\xi_3)\left(\frac{1}{1-\lambda}M_2-\frac{\lambda}{1-\lambda}M_2^{(1)}(\theta)\right) + (\xi_1\xi_4+\xi_2\xi_3+\xi_4)\left(\frac{1}{1-\lambda}M_3-\frac{\lambda}{1-\lambda}M_3^{(1)}(\theta)\right) \\
+ (\xi_3^2/2+\xi_2\xi_4)\left(\frac{1}{1-\lambda}M_4-\frac{\lambda}{1-\lambda}M_4^{(1)}(\theta)\right) + \xi_3\xi_4\left(\frac{1}{1-\lambda}M_5-\frac{\lambda}{1-\lambda}M_5^{(1)}(\theta)\right) \\
\left. + \xi_4^2/2 \left(\frac{1}{1-\lambda}M_6-\frac{\lambda}{1-\lambda}M_6^{(1)}(\theta)\right)\right].
\end{multline*}
Regularity of function $\phi\mapsto\xi(\phi)$ is directly tied by the regularity of the moments of $P_1(.|\theta)$ with respect to $\theta$. If $M_i^{(1)}$ is continuous with respect to $\theta$ and $m(\alpha)$ is continuous with respect to $\alpha$, then the existence of $\phi^*$ becomes immediate as soon as the set $\Phi$ is compact.\\
If $\phi^*$ is an interior point of $\Phi$, then Proposition \ref{prop:ContinDiffxiMom} and Corollary \ref{cor:GlivenkoCantelliClass} apply. Thus int$(\Phi^+)$ is non void and the class $\{x\mapsto \psi\left(\xi^t g(x)\right),\xi\in\xi(\text{int}(\Phi^+))\}$ is a Glivenko-Cantelli class of functions. Assumption A4 remains specific to the model we consider. \\
The previous calculus shows that our procedure for estimating $\hat{\phi}$ can be done efficiently and the complexity of the calculus does not depend on the dimension of the data. Besides, no numerical integration is needed.
\end{example}
\subsection{Asymptotic normality}
We will suppose that the model $p_{\phi}$ is $\mathcal{C}^2(\text{int}(\Phi^+))$ and that $\psi$ is $\mathcal{C}^2(\mathbb{R})$. In order to simplify the formula below, we suppose that $\psi'(0)=1$ and $\psi''(0)=1$. These are not restrictive assumptions and can be relaxed. Recall that they are both verified in the class of Cressie-Read functions (\ref{eqn:CressieReadPhi}).\\
Define the following matrices:
\begin{eqnarray}
J_{\phi^*,\xi^*} & = & \resizebox{0.75\textwidth}{!}{ $\left( \frac{1}{(1-\lambda^*)^2}\left[-\mathbb{E}_{P_T}\left[g(X)\right] + \mathbb{E}_{P_1(.|\theta^*)}\left[g(X)\right]\right], \frac{\lambda^*}{1-\lambda^*}\int{g(x)\nabla_{\theta}p_1(x|\theta^*)dx}, \nabla m(\alpha^*) \right)$}; \label{eqn:NormalAsymMomJ1} \\
J_{\xi^*,\xi^*} & = & \mathbb{E}_{P_0^*}\left[g(X)g(X)^t\right]; \label{eqn:NormalAsymMomJ2}\\
\Sigma & = & \left(J_{\phi^*,\xi^*}^t J_{\xi^*,\xi^*} J_{\phi^*,\xi^*}\right)^{-1}; \label{eqn:NormalAsymMomSigma}\\
H & = & \Sigma J_{\phi^*,\xi^*}^t J_{\xi^*,\xi^*}^{-1}; \label{eqn:NormalAsymMomH}\\  
W & = & J_{\xi^*,\xi^*}^{-1} - J_{\xi^*,\xi^*}^{-1} J_{\phi^*,\xi^*} \Sigma J_{\phi^*,\xi^*}^t J_{\xi^*,\xi^*}^{-1}. \label{eqn:NormalAsymMomP}
\end{eqnarray}
Recall the definition of $\Phi^+$ and define similarly the set $\Phi_n^+$
\begin{eqnarray}
\Phi^+ & = & \left\{\phi: \quad \int{g(x)g(x)^t\left(\frac{1}{1-\lambda}dP_T - \frac{\lambda}{1-\lambda}dP_1\right)(x|\theta)}\text{ is s.p.d.}\right\}; \label{eqn:PhiPlus}\\
\Phi_n^+ & = & \left\{\phi: \quad \frac{1}{n}\frac{1}{1-\lambda}\sum_{i=1}^n{g(X_i)g(X_i)^t} - \frac{\lambda}{1-\lambda}\int{g(x)g(x)^tdP_1(x|\theta)} \text{ is s.p.d.} \right\} \label{eqn:PhiPlusn}.
\end{eqnarray}
These two sets are the feasible sets of parameters for the optimization problems (\ref{eqn:MomentTrueEstimProc}) and (\ref{eqn:MomentEstimProc}) respectively. In other words, outside of the set $\Phi^+$, we have $H(\phi,\xi(\phi))=\infty$. Similarly, outside of the set $\Phi_n^+$, we have $H_n(\phi,\xi_n(\phi))=\infty$.
\begin{theorem}
\label{theo:AsymptotNormalMomConstr}
Suppose that:
\begin{enumerate}
\item $\hat{\phi}$ is consistent and $\phi^*\in\text{int}\left(\Phi\right)$;
\item the function $\alpha\mapsto m(\alpha)$ is $\mathcal{C}^2$;
\item $\forall \phi\in B(\phi^*,\tilde{r})$ and any $\xi\in\xi(B(\phi^*,\tilde{r}))$, there exist functions $h_{1,1},h_{1,2}\in L^1(p_1(.|\theta))$ such that $\left\|\psi'\left(\xi^t g(x)\right)g(x)\right\|\leq h_{1,1}(x)$ and $\left\|\psi''\left(\xi^t g(x)\right)g(x)g(x)^t\right\|\leq h_{1,2}(x)$;
\item $\forall \xi\in\xi(B(\phi^*,\tilde{r}))$, there exist functions $h_{2,1},h_{2,2}\in L^1(dx)$ such that $\left\|\psi\left(\xi^t g(x)\right)\nabla_{\theta}p_1(x|\theta)\right\|\leq h_2(x)$ and $\left\|\psi\left(\xi^t g(x)\right)J_{p_1(.|\theta)}\right\|\leq h_2(x)$;
\item for any couple $(\phi,\xi)\in B(\phi^*,\tilde{r})\times\xi(B(\phi^*,\tilde{r}))$, there exists a function $h_3\in L^1(dx)$ such that $\left\|\psi'\left(\xi^t g(x)\right)g(x)\nabla_{\theta}p_1(x|\theta)^t\right\|\leq h_3(x)$;
\item finite second order moment of $g$ under $P_T$, i.e. $\mathbb{E}_{P_T}\left[g_i(X)g_j(X)\right]<\infty$ for $i,j\leq\ell$;
\item matrices $J_{\xi^*,\xi^*}$ and $J_{\phi^*,\xi^*}^t J_{\xi^*,\xi^*} J_{\phi^*,\xi^*}$ are invertible,
\end{enumerate}
then
\[\left(\begin{array}{c}  \sqrt{n}\left(\hat{\phi}-\phi^*\right) \\ \sqrt{n}\xi_n(\hat{\phi})\end{array}\right) \xrightarrow[\mathcal{L}]{} \mathcal{N}\left(0,\frac{1}{(1-\lambda^*)^2}\left(\begin{array}{c}H \\ W\end{array}\right) \text{Var}_{P_T}(g(X)) \left(H^t\quad W^t\right)\right),\]
where $H$ and $P$ are given by formulas (\ref{eqn:NormalAsymMomH}) and (\ref{eqn:NormalAsymMomP}).
\end{theorem}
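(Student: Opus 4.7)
The plan is to turn the saddle point characterization of $(\hat\phi,\hat\xi)$, with $\hat\xi=\xi_n(\hat\phi)$, into a linear system by a Taylor expansion of the joint gradient map around $(\phi^*,\xi^*)$, where $\xi^*=0$. I start by writing the first order conditions. The inner supremum in $\xi$ yields $\nabla_{\xi}H_n(\hat\phi,\hat\xi)=0$, and the envelope theorem applied to the outer minimization in $\phi$ (justified since $\xi\mapsto H_n(\phi,\xi)$ is $\mathcal{C}^1$ with unique maximizer near $\phi^*$ by Proposition~\ref{prop:ContinDiffxiMom}) yields $\nabla_{\phi}H_n(\hat\phi,\hat\xi)=0$.

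Next I identify $\xi^*$. Under the model, $P_T=\lambda^*P_1(.|\theta^*)+(1-\lambda^*)P_0^*$ with $P_0^*\in\mathcal{M}_{\alpha^*}$, so the inner infimum $\inf_{Q\in\mathcal{M}_{\alpha^*}}D_\varphi(Q,P_0^*)$ is attained at $Q=P_0^*$ with divergence zero; by the duality identity of paragraph~\ref{subsec:DualTechRes} the dual supremum is attained at $\xi^*=0$, where $\psi(0)=0$, $\psi'(0)=1$, $\psi''(0)=1$. A direct computation then gives (i) $\nabla_{\phi}H_n(\phi^*,0)\equiv 0$ pointwise, not merely in expectation, because every $\phi$-derivative of $H_n$ carries the factor $\psi(\xi^tg)$ evaluated at $\xi=0$; (ii) $\nabla_{\xi}H_n(\phi^*,0)=-\tfrac{1}{1-\lambda^*}\bigl(\tfrac1n\sum g(X_i)-\mathbb{E}_{P_T}[g]\bigr)$, which by the classical CLT, under assumption~6, satisfies $\sqrt n\,\nabla_{\xi}H_n(\phi^*,0)\xrightarrow{\mathcal L}\mathcal N\!\bigl(0,\tfrac{1}{(1-\lambda^*)^2}\mathrm{Var}_{P_T}(g(X))\bigr)$; (iii) the Hessian of $H$ at $(\phi^*,0)$ has block form $J_F(\phi^*,0)=\bigl(\begin{smallmatrix}0 & J_{\phi^*,\xi^*}^t\\ J_{\phi^*,\xi^*} & -J_{\xi^*,\xi^*}\end{smallmatrix}\bigr)$, exactly matching~(\ref{eqn:NormalAsymMomJ1})--(\ref{eqn:NormalAsymMomJ2}), with vanishing $(1,1)$ block again because $\psi(0)=0$.

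I now Taylor expand the joint gradient $F_n=(\nabla_{\phi}H_n,\nabla_{\xi}H_n)^t$ at $(\hat\phi,\hat\xi)$ around $(\phi^*,0)$, to obtain
\[
0=F_n(\phi^*,0)+J_F(\phi^*,0)\binom{\hat\phi-\phi^*}{\hat\xi}+R_n,
\]
where $R_n$ collects second order remainders of the form $(\hat\phi-\phi^*)^{\otimes 2}$, $(\hat\phi-\phi^*)\hat\xi$, $\hat\xi^{\otimes 2}$, with coefficients dominated uniformly on a neighbourhood of $(\phi^*,0)$ by the envelopes $h_{1,1},h_{1,2},h_{2,1},h_{2,2},h_3$ of assumptions~3--5. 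Consistency of $\hat\phi$ (assumption~1) together with Lemma~\ref{lem:SupXiPhiDiff} ensures $\hat\xi=\xi_n(\hat\phi)\xrightarrow{\mathbb P}\xi(\phi^*)=0$, so $(\hat\phi-\phi^*,\hat\xi)=o_P(1)$. Under the invertibility of $J_F(\phi^*,0)$, which is equivalent to assumption~7 via a Schur complement on the $(2,2)$ block, a standard two stage argument then upgrades this to $(\hat\phi-\phi^*,\hat\xi)=O_P(n^{-1/2})$, hence $R_n=o_P(n^{-1/2})$. Inverting the block system and substituting the explicit form of $F_n(\phi^*,0)$ yields
\[
\sqrt n\binom{\hat\phi-\phi^*}{\hat\xi}=\tfrac{1}{1-\lambda^*}\binom{H}{W}\sqrt n\bigl(\tfrac1n\textstyle\sum g(X_i)-\mathbb{E}_{P_T}[g]\bigr)+o_P(1),
\]
with $H,W$ read off from the Schur complement formulas, and Slutsky plus the multivariate CLT give the announced Gaussian limit.

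The main obstacle is controlling the remainder $R_n$ and justifying the Taylor expansion uniformly on a shrinking neighbourhood of $(\phi^*,0)$. The subtlety is that the $(1,1)$ block of $J_F$ vanishes identically, so a naive first order expansion is degenerate in the $\phi$ direction, and one must exploit the cross block $J_{\phi^*,\xi^*}$ together with the inner first order condition in $\xi$ to transfer information between the two directions: this is precisely what invertibility of $J_{\phi^*,\xi^*}^tJ_{\xi^*,\xi^*}^{-1}J_{\phi^*,\xi^*}$ (assumption~7) encodes. The dominations of assumptions~3--5, combined with the differentiability of $\phi\mapsto\xi(\phi)$ near $\phi^*$ provided by Proposition~\ref{prop:ContinDiffxiMom} and the $\mathcal C^2$ regularity of $m(\alpha)$, permit the required uniform control of second order partials of $H_n$ in a neighbourhood of $(\phi^*,0)$, via dominated convergence for the integrals against $dP_1(.|\theta)$ and a standard Glivenko--Cantelli argument on the empirical terms analogous to Corollary~\ref{cor:GlivenkoCantelliClass}.
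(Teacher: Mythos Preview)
Your proposal is correct and follows essentially the same route as the paper, which also establishes the joint first-order conditions, evaluates the score and Hessian at $(\phi^*,0)$ (obtaining the same block structure with vanishing $(1,1)$ block), and applies the CLT to $\nabla_\xi H_n(\phi^*,0)$. The only technical difference is that the paper, following Newey--Smith, uses a mean-value (Lagrange) expansion evaluating the full Hessian $J_{H_n}(\bar\phi,\bar\xi)$ at an intermediate point and lets it converge to $J_H$ by consistency, which spares you the separate two-stage $O_P(n^{-1/2})$ bootstrap on the remainder; both variants are standard and lead to the same linear representation.
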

The proof is differed to Appendix \ref{AppendSemiPara:Theo3}. Assumption 3 entails the differentiability of function $H_n(\xi,\phi)$ up to second order with respect to $\xi$ whatever the value of $\phi$ in a neighborhood of $\phi^*$. Assumption 4 entails the differentiability of function $H_n(\xi,\phi)$ up to second order with respect to $\theta$ in a neighborhood of $\theta^*$ inside $\xi(B(\phi^*,\tilde{r}))$. Finally, assumption 5 implies the cross-differentiability of function $H_n(\xi,\phi)$ with respect to $\xi$ and $\theta$.\\ 
Differentiability assumptions in Theorem \ref{theo:AsymptotNormalMomConstr} can be relaxed in the case of the Pearson's $\chi^2$ since all integrals in functions $H_n$ and $H$ can be calculated. Our result covers the general case and thus we need to ensure differentiability of the integrals using Lebesgue theorems which requires the existence of integrable functions which upperbound the integrands.

%
%

\section{Simulation study}\label{sec:SemiParaSimulations}
We perform several simulations in univariate and multivariate situations and show how prior information about the moments of the distribution of the semiparametric component $P_0$ can help us better estimate the set of parameters $(\lambda^*,\theta^*,\alpha^*)$ in regular examples, i.e. the components of the mixture can be clearly distinguished when we plot the probability density function. We also show how our approach permits to estimate even in difficult situations when the proportion of the parametric component is very low; such cases could \emph{not} be estimated using existing methods.\\
Another important problem in existing methods is their quadratic complexity. For example, an EM-type method such as \cite{Robin}'s algorithm or its stochastic version introduced by \cite{BordesStochEM} performs
$n^2+3n$ operations in order to complete a single iteration. An EM-type algorithm for semiparametric mixture models needs in average 100 iterations to converge and may attain 1000 iterations\footnote{This was the case of the Weibull mixture.} for each sample. To conclude, the estimation procedure performs at least $100(n^2+3n)$ operations. In a signal-noise situations where the signal has a very low proportion around $0.05$, we need a greater number of observations say $n=10^5$. Such experiences cannot be performed using an EM-type method such as \cite{Robin}'s algorithm or its stochastic version introduced by \cite{BordesStochEM} unless one has a "super computer". The method of \cite{Bordes10} shares similar complexity\footnote{we need more than 24 hours to estimate the parameters of one sample with $10^5$ observations.} $\mathcal{O}(n^2)$.
Last but not least, the EM-type method of \cite{Song} and their $\pi-$maximizing one have the advantage over other methods, because we need only to calculate a kernel density estimator once and for all, then use it at each iteration\footnote{We were able to perform simulations with $n=10^5$ observations but needed about 5 days on an i7 laptop clocked at 2.5 GHz with 8GB of RAM. For \cite{Robin}'s algorithm, a few iterations took about one day. One can imagine the time needed to estimate 100 samples with $10^5$ observations in each sample.}. Nevertheless, the method has still a complexity of order $n^2$.\\
Our approach, although has a double optimization procedure, it can be implemented when $g$ is polynomial and $\varphi$ corresponds to the Pearson's $\chi^2$ in a way that it has a linear complexity $\mathcal{O}(n)$. First of all, using the $\chi^2$ divergence, the optimization over $\xi$ in (\ref{eqn:MomentEstimProc}) can be calculated directly. On the other hand, all integrals are mere calculus of empirical moments and moments of the parametric part, see Example \ref{ex:Chi2Consistency}. Empirical moments can be calculated once and for all whereas moments of the parametric part can be calculated using direct formulas available for a large class of probability distributions. What remains is the optimization over $\phi$. In the simulations below, our method produced the estimates instantly even for a number of observations of order $10^7$ whereas other existing methods needed from several hours (algorithms of \cite{Song}) to several days (for other algorithms). It is however important to notice that if the number of constraints is large enough, say a function of $n$, then we no longer have a linear complexity.\\
Because of the very long execution time of existing methods, we restricted the comparison to simulations in regular situations with $n<10^4$. Experiments with greater number of observations were only treated using our method and the methods in \cite{Song}. In all tables presented hereafter, we performed 100 experiments and calculated the average of resulting estimators. We provided also the standard deviation of the 100 experiments in order to get a measure of preference in case the different estimation methods gave close results.\\ 
Our experiments cover the following models:
\begin{itemize}
\item[$\bullet$] A two-component Weibull mixture;
\item[$\bullet$] A two-component Weibull - Lognormal mixture;
\item[$\bullet$] A two-component Gaussian -- Two-sided Weibull mixture;
\item[$\bullet$] A two-component bivariate Gaussian mixture.
\end{itemize}
We apply the several estimation methods from Section \ref{sec:LiteratureSemiparaMix}. We have chosen a variety of values for the parameters especially the proportion. The second model may represent a problem from queue theory where the left component represents the impatient customers whereas the right component represents the regular customers. The third model stems from a signal-noise application where the signal is centered at zero whereas the noise is repartitioned at both sides. The fourth model appears in clustering and is only presented to show how our method performs in multivariate contexts.\\
In all our experiments, no numerical integration was used since they can be easily calculated as functions of the empirical moments of the data and the moments of the parametric component, see Example \ref{ex:Chi2Consistency}. Simulations were done using the \cite{Rtool}. Optimization was performed using the Nelder-Mead algorithm, see \cite{NelderMead}. For the $\pi-$maximizing algorithm of \cite{Song}, we used the Brent's method because the optimization was carried over one parameter.\\ 
For our procedure, we only used the $\chi^2$ divergence, because the optimization over $\xi$ can be calculated without numerical methods\footnote{We noticed no great difference when using a Hellinger divergence.}. Recall that the optimized function over $\xi$ is not always strictly concave and the Hessian matrix may be definite positive, see remark \ref{rem:PhiPlusJacobMat}. It is thus important to check for each vector $\phi=(\lambda,\theta,\alpha)$ if the Hessian matrix is still definite negative for example using Sylvester's criterion. If it is not, we set the objective function to a value such as $10^2$. Besides, since the resulting function $\phi\mapsto H_n(\phi,\xi_n(\phi))$ as a function of $\phi$ is not ensured to be strictly convex, we used 10 random initial feasible points inside the set $\Phi_n^+$ defined by (\ref{eqn:PhiPlusn}). We then ran the Nelder-Mead algorithm and chose the vector of parameters for which the objective function has the lowest value. We applied a similar procedure on the algorithm of \cite{Bordes10} in order to ensure a \emph{good and fair} optimization.\\
\begin{remark}
In the literature on the stochastic EM algorithm, it is advised that we iterate the algorithm for some time until it reaches a stable state, then continue iterating long enough and average the values obtained in the second part. The trajectories of the algorithm were very erratic especially for the estimation of the proportion. For us, we iterated for the stochastic EM-type algorithm of \cite{BordesStochEM} 5000 times and averaged the 4000 final iterations.
\end{remark}
\begin{remark}
Initialization of both the EM-type algorithm of \cite{Song} and the SEM-type algorithm of \cite{BordesStochEM} was not very important, and we got the same results when the vector of weights was initialized uniformly or in a "good" way. The method of \cite{Robin} was more influenced by such initialization and we used most of the time a good starting points.
\end{remark}
\begin{remark}
For the methods of \cite{Song}, we need to estimate mixture's distribution using a kernel density estimator. For the data generated from a Weibull mixture and the data generated from a Weibull Lognormal mixture, we used a reciprocal inverse Gaussian kernel density estimator with a window equal to 0.01 according to our simulations in Chapter 1.
\end{remark}
\begin{remark}
Matrix inversion was done manually using direct inversion methods, because the function \texttt{solve} in the statistical program R produced errors sometimes because the matrix was highly sensible at some point during the optimization. For matrices of dimension $4\times 4$ and $5\times 5$ we used block matrix inversion, see for example \cite{BlockMat}. The inverse of a $3\times 3$ was calculated using a direct formula.
\end{remark}

%
\subsection{Data generated from a two-component Weibull mixture modeled by a semiparametric Weibull mixture}
We consider a mixture of two Weibull components with scales $\sigma_1 = 0.5,\sigma_2=1$ and shapes $\nu_1=2,\nu_2=1$ in order to generate the dataset. In the semiparametric mixture model, the parametric component will be "the one to the right", i.e. the component whose true set of parameters is $(\nu_1=2,\sigma_1=0.5)$. We illustrate several values of the proportion $\lambda\in\{0.7,0.3\}$, see figure (\ref{fig:WeibullMixtures}). This constitutes a difficult example for both our method and existing methods such as EM-type methods or the $\pi-$maximizing algorithm of \cite{Song}. We therefore, simulated 10000 samples and fixed both scales during estimation. We estimate the proportion and the shapes of both components. For our method, the variance of the estimator of $\nu_1$ was high and we needed to use 4 moments to reduce it to an acceptable range. Of course, as the number of observations increases, the variance reduces. We, however, avoided greater number of observations because methods such as \cite{Robin} need very long execution time for even one sample. The method of \cite{Bordes10} cannot be applied here since the support of the mixture is $\mathbb{R}_+$.\\
Moments of the Weibull distribution are given by:
\[\mathbb{E}[X^i] = \sigma^i\Gamma(1+i/\nu),\qquad\forall i\in\mathbb{N}.\]
For our method , function $g$ is a vector of polynomials that is $g(x)=(1,x,x^2,x^3)^t$ when we use only 3 moment constraints. It suffices to add $x^4$ when the number of constraints is 4. On the other hand, $m(\alpha)=(1,\sigma\Gamma(1+1/\nu),\sigma^2\Gamma(1+2/\nu),\sigma^3\Gamma(1+3/\nu))$. The results of our method are clearly better than existing methods which practically failed and could not see but one main component with shape in between the two shapes, see table (\ref{tab:3by3ResultsWeibullMoment}). Although our method presents an inconvenient greater variance for $\nu_1$, the Monte-Carlo mean of the hundred experiences is still unbiased. We believe that the use of other types of constraints would have resulted in better results without the need to add one more constraint. 
\begin{table}[ht]
\centering
\begin{tabular}{|c|c|c|c|c|c|c|}
\hline
Nb of observations & $\lambda$ & sd$(\lambda)$ & $\nu_1$ & sd($\nu_1$) & $\nu_2$ & sd($\nu_2$)\\
\hline
\hline
\multicolumn{7}{|c|}{Mixture 1 : $n=10^4$ $\lambda^* = 0.7$, $\nu_1^*=2$, $\sigma_1^*=0.5$(fixed), $\nu_2^*=1$, $\sigma_2^*=1$(fixed) }\\
\hline
Pearson's $\chi^2$ 3 moments & 0.700 & 0.010 & 2.006 & 0.217 & 1.005 & 0.024 \\
Pearson's $\chi^2$ 4 moments & 0.701 & 0.010 & 2.014 & 0.086 & 1.013 & 0.024 \\
Robin & 0.654 & 0.101 & 1.591 & 0.085 & --- & --- \\
Song EM-type & 0.907 & 0.004 & 1.675 & 0.020 & --- & --- \\
Song $\pi-$maximizing & 0.782 & 0.006 & 1.443 & 0.012 & --- & --- \\
\hline
\hline
\multicolumn{7}{|c|}{Mixture 1 : $n=10^4$ $\lambda^* = 0.3$, $\nu_1^*=2$, $\sigma_1^*=0.5$(fixed), $\nu_2^*=1$, $\sigma_2^*=1$(fixed) }\\
\hline
Pearson's $\chi^2$ 3 moments & 0.304 & 0.016 & 2.191 & 0.887 & 0.998 & 0.013 \\
Pearson's $\chi^2$ 4 moments & 0.303 & 0.016 & 2.120 & 0.285 & 1.001 & 0.013 \\
Robin & 0.604 & 0.029 & 1.256  & 0.037 & --- & --- \\
Song EM-type & 0.806 & 0.005 & 1.185 & 0.018 & --- & --- \\
Song $\pi-$maximizing & 0.624 & 0.007 & 1.312 & 0.013 & --- & --- \\
\hline
\end{tabular}
\caption{The mean value with the standard deviation of estimates in a 100-run experiment on a two-component Weibull mixture.}
\label{tab:3by3ResultsWeibullMoment}
\end{table}

\begin{figure}[ht]
\centering
\includegraphics[scale=0.3]{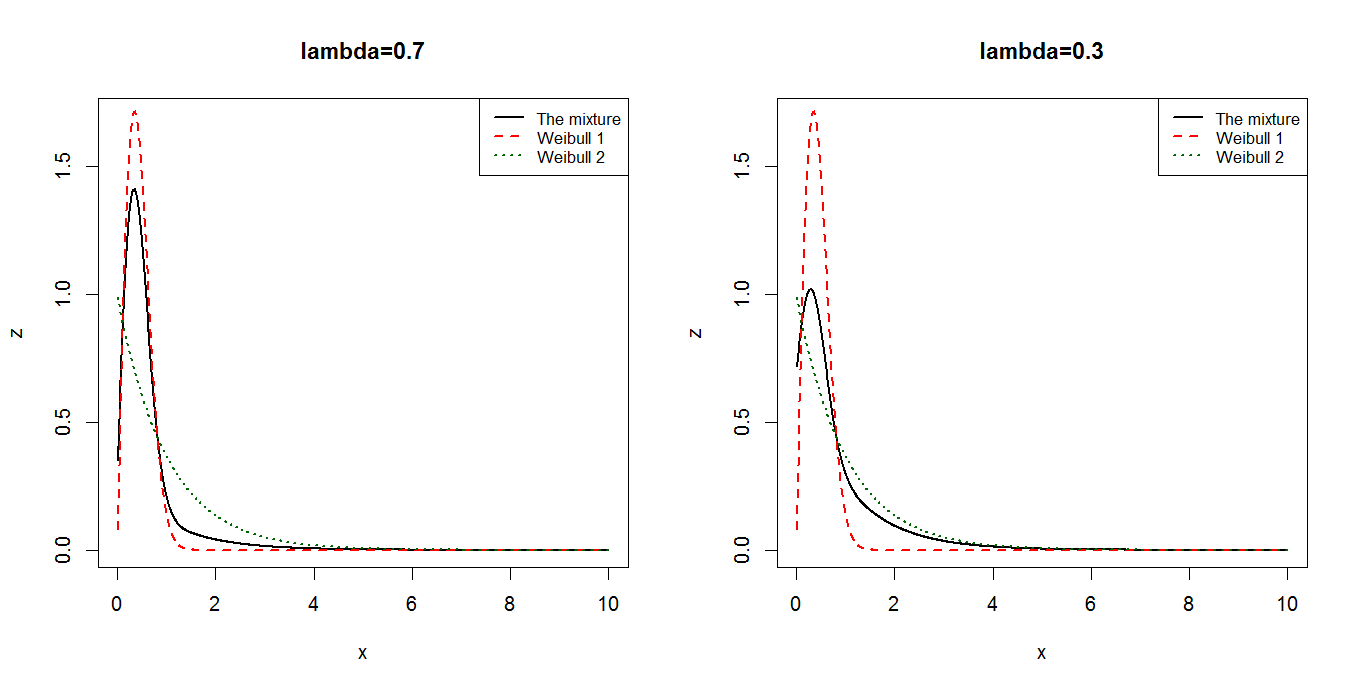}
\caption{The Weibull mixtures, see table (\ref{tab:3by3ResultsWeibullMoment})}
\label{fig:WeibullMixtures}
\end{figure}
\clearpage
\subsection{Data generated from a two-component Weibull-LogNormal mixture modeled by a semiparametric Weibull-LogNormal mixture}\label{subsec:WeibLognormMom}
We consider a mixture of two components; a Weibull component with scale $=1$ and shape $=1.5$, and a log-normal component with meanlog $=3$ and scale $=0.5$, see figure (\ref{fig:WeibullLogNormalMixtures}) for the considered cases. In the results of table (\ref{tab:ConsistWeibullLognormMoment}), the parametric part is considered to be the Lognormal distribution. In the results of table (\ref{tab:3by3ResultsWeibullLognormMoment}), the parametric part of the semiparametric model is considered to be the Weibull distribution.\\
The number of observations for each mixture depends on its subtlety. As the proportion of the parametric component becomes lower, we needed more observations to produce reasonable estimates. We chose the number of observations in a way that the standard deviation of estimated parameters does not exceed 1.\\
We used the first three moments such that we can estimate three parameters; the proportion of the parameteric component, the shape of the Weibull component and the mean-parameter of the Lognormal component. Thus, the scales of both components are supposed to be known during the estimation procedure. The moments of these two distributions are given by:
\begin{eqnarray*}
\text{Weibull:}\qquad \mathbb{E}[X^i] & = &  \sigma^i\Gamma(1+i/\nu); \\
\text{Lognormal:}\qquad \mathbb{E}[X^i] & = & e^{i\mu+i^2\sigma^2/2}.
\end{eqnarray*}
For our method , function $g$ is a vector of polynomials that is $g(x)=(1,x,x^2,x^3)^t$. Function $m(\alpha)$ is given by $(1,\mathbb{E}[X],\mathbb{E}[X^2],\mathbb{E}[X^3])^t$ with the corresponding moments according to whether we consider the Weibull or the log-normal as the semiprametric component.\\
Our new method seems to produce high variance of the shape of the Weibull component. This should not be surprising, because the part which influences on the moments of the model is the Lognormal component. Its moments have an exponential form and small differences in the mean-parameter could compensate for a great differences in the shape of the Weibull component. The results are still satisfactory since we get to estimate an information of the semiparametric component at a great precision together with the proportion.
\begin{table}[ht]
\centering
\begin{tabular}{|c|c|c|c|c|c|c|}
\hline
Nb of observations & $\lambda$ & sd$(\lambda)$ & $\mu$ & sd($\mu$) & $\nu$ & sd($\nu$)\\
\hline
\hline
\multicolumn{7}{|c|}{Mixture 1 : $\lambda^* = 0.7$, $\mu^*=3$, $\sigma_2^*=0.5$(fixed), $\nu^*=1.5$, $\sigma_1^*=1$(fixed) }\\
\hline
$n=10^2$ & 0.384 & 0.117 & 2.654 & 0.153 & 0.488 & 0.018 \\
$n=10^3$ & 0.518 & 0.068 & 2.806 & 0.099 & 0.473 & 0.014 \\
$n=10^4$ & 0.605 & 0.044 & 2.903 & 0.069 & 0.531 & 0.326 \\
$n=10^5$ & 0.651 & 0.030 & 2.957 & 0.041 & 0.809 & 0.630 \\
$n=10^6$ & 0.682 & 0.018 & 2.979 & 0.022 & 1.638 & 0.813 \\
\hline
\end{tabular}
\caption{The mean value with the standard deviation of estimates produced by our procedure with three moments constraints in a 100-run experiment on a two-component Weibull-- Lognormal mixture. The parametric component is the Lognormal distribution.}
\label{tab:ConsistWeibullLognormMoment}
\end{table}

\begin{table}[ht]
\centering
\begin{tabular}{|c|c|c|c|c|c|c|}
\hline
Nb of observations & $\lambda$ & sd$(\lambda)$ & $\nu$ & sd($\nu$) & $\mu$ & sd($\mu$)\\
\hline
\hline
\multicolumn{7}{|c|}{Mixture 1 : $n=10^3$, $\lambda^* = 0.3$, $\nu^*=1.5$, $\sigma_1^*=1$(fixed), $\mu^*=3$, $\sigma_2^*=0.5$(fixed) }\\
\hline
Pearson's $\chi^2 $& 0.308 & 0.017 & 1.484 & 0.624 & 3.002 & 0.026 \\
Robin & 0.296 & 0.015 & 1.557 & 0.068 & --- & --- \\
Song EM-type & 0.291 & 0.015 & 1.614 & 0.087 & --- & --- \\
Song $\pi-$maximizing & 0.230 & 0.022 & 1.662 & 0.251 & --- & --- \\
SEM & 0.284 & 0.041 & 1.570 & 0.263 & --- & ---\\
\hline
\hline
\multicolumn{7}{|c|}{Mixture 2 : $n=10^4$, $\lambda^* = 0.1$, $\nu^*=1$, $\sigma_1^*=1$(fixed), $\mu^*=3$, $\sigma_2^*=0.5$(fixed) }\\
\hline
Pearson's $\chi^2$ & 0.103 & 0.006 & 1.284 & 0.677 & 3.001 & 0.007 \\
Robin & 0.095 & 0.003 & 1.049 & 0.031 & --- & --- \\
Song EM-type & 0.100 & 0.004 & 0.894 & 0.039 & --- & --- \\
Song $\pi-$maximizing & 0.085 & 0.005 & 1.024 & 0.055 & --- & --- \\
SEM & 0.094 & 0.015 & 1.054 & 0.228 & --- & --- \\
\hline
\hline
\multicolumn{7}{|c|}{Mixture 3 : $n=10^4$, $\lambda^* = 0.05$, $\nu^*=1$, $\sigma_1^*=1$(fixed), $\mu^*=3$, $\sigma_2^*=0.5$(fixed) }\\
\hline
Pearson's $\chi^2$ & 0.052 & 0.004 & 1.312 & 0.703 & 3.001 & 0.006 \\
Song EM-type & 0.050 & 0.003 & 0.855 & 0.068 & --- & --- \\
Song $\pi-$maximizing & 0.042 & 0.003 & 1.013 & 0.052 & --- & --- \\
\hline
\hline
\multicolumn{7}{|c|}{Mixture 4 : $n=5\times 10^4$, $\lambda^* = 0.05$, $\nu^*=0.4$, $\sigma_1^*=1$(fixed), $\mu^*=3$, $\sigma_2^*=0.5$(fixed) }\\
\hline
Pearson's $\chi^2$ & 0.049 & 0.002 & 0.629 & 0.438 & 3.001 & 0.004 \\
Song EM-type & 0.064 & 0.001 & 0.345 & 0.004 & --- & --- \\
Song $\pi-$maximizing & 0.024 & 0.001 & 0.773 & 0.010 & --- & --- \\
\hline
\end{tabular}
\caption{The mean value with the standard deviation of estimates in a 100-run experiment on a two-component Weibull-log normal mixture. The parametric component is the Weibull distribution.}
\label{tab:3by3ResultsWeibullLognormMoment}
\end{table}

\begin{figure}[ht]
\centering
\includegraphics[scale=0.4]{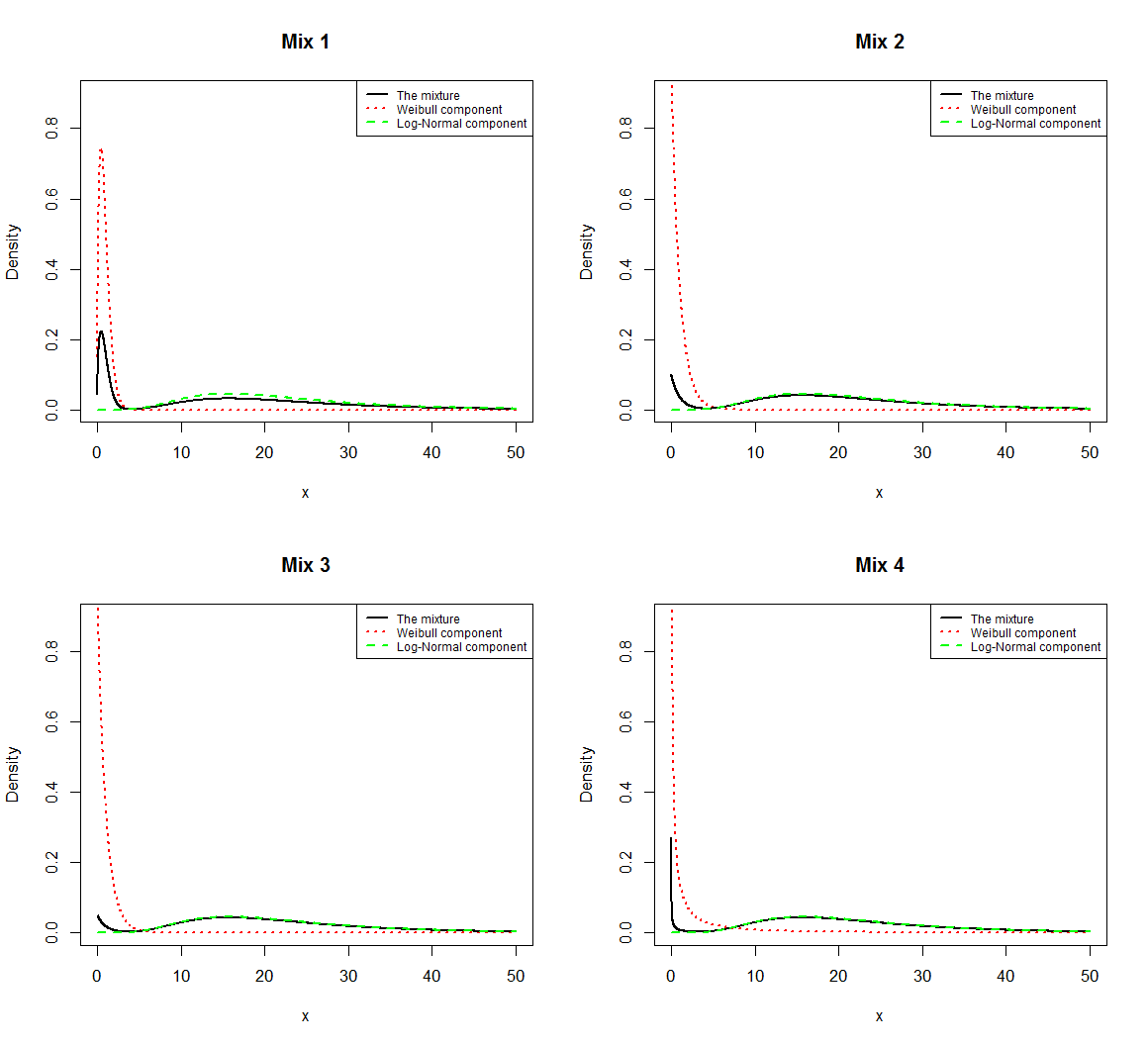}
\caption{The Weibull -- Lognormal mixtures, see tables (\ref{tab:ConsistWeibullLognormMoment},\ref{tab:3by3ResultsWeibullLognormMoment})}
\label{fig:WeibullLogNormalMixtures}
\end{figure}
\clearpage
\subsection{Data generated from a two-sided Weibull Gaussian mixture modeled by a semiparametric two-sided Weibull Gaussian mixture}\label{subsec:TwoSidGaussMom}
The (symmetric) two-sided Weibull distribution can be considered as a generalization of the Laplace distribution and can be defined through either its density or its distribution function as follows:
\[f(x|\nu,\sigma) = \frac{1}{2}\frac{\sigma}{\nu}\left(\frac{|x|}{\sigma}\right)^{\nu-1}e^{-\left(\frac{|x|}{\sigma}\right)^{\nu}}, \qquad \mathbb{F}(x|\nu,\sigma) = \left\{ \begin{array}{cc} 1-\frac{1}{2}e^{-\left(\frac{x}{\sigma}\right)^{\nu}} & x\geq 0 \\ 
e^{-\left(\frac{-x}{\sigma}\right)^{\nu}} & x< 0\end{array} \right.\]
We can also define a skewed form of the two-sided Weibull distribution by attributing different scale and shape parameters to the positive and the negative parts, and then normalizing in a suitable way so that $f(x)$ integrates to one; see \cite{Chen2sideWeibull}. The moments of the symmetric two-sided Weibull distribution we consider here are given by:
\begin{eqnarray*}
\mathbb{E}[X^{2k}] & = & \sigma^{2k}\Gamma(1+2k/\nu) \\
\mathbb{E}[X^{2k+1}] & = & 0, \forall k\in\mathbb{N}.
\end{eqnarray*}
\begin{figure}[h]
\centering
\includegraphics[scale=0.4]{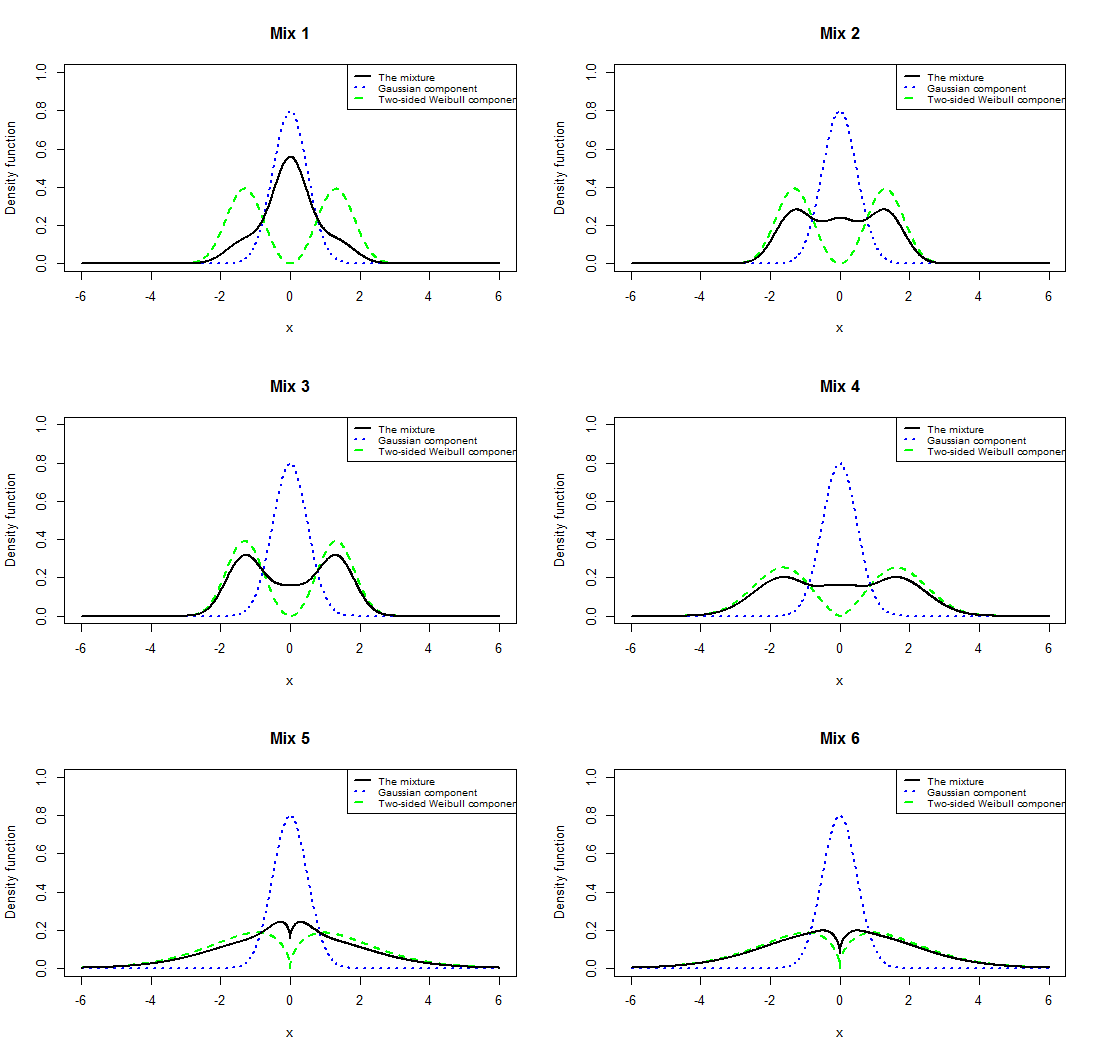}
\caption{Mixtures of two-sided Weibull -- Gaussian with low and high proportion of the parametric part. See table (\ref{tab:3by3ResultsTwoSideWeibullGaussMom})}
\label{fig:TwoSidedWeibullGaussMixure}
\end{figure}

\noindent We simulate different samples from a two-component mixture with a parametric component $f_1$ a Gaussian $\mathcal{N}(\mu=0,\sigma=\sigma_1=0.5)$ and a semiparametric component $f_0$ a (symmetric) two-sided Weibull distribution with parameters $\nu\in\{3,2.5,1.5\}$ and a scale $\sigma_0\in\{1.5,2\}$, see figure (\ref{fig:TwoSidedWeibullGaussMixure}) for different choices of the proportion. We perform different experiments to estimate the proportion and the mean of the parametric part (the Gaussian) and the shape of the semiparametric component. The values of the scale of the two components are considered to be known during estimation. We consider the following two sets of constraints:
\begin{eqnarray*}
\mathcal{M}_{1:3} & = & \left\{f_0:\int_{\mathbb{R}}{f_0(x)dx}=1,\mathbb{E}_{f_0}[X] = 0, \mathbb{E}_{f_0}[X^2]=\sigma_0^2\Gamma(1+2/\nu), \mathbb{E}_{f_0}[X^3]=0, \nu>0\right\}; \\
\mathcal{M}_{2:4} & = & \left\{f_0: \int_{\mathbb{R}}{f_0(x)dx}=1,\mathbb{E}_{f_0}[X^2]=\sigma_0^2\Gamma(1+2/\nu), \mathbb{E}_{f_0}[X^3]=0,\right. \\
&  & \text{\hspace{8cm}} \left.  \mathbb{E}_{f_0}[X^4]=\sigma_0^4\Gamma(1+4/\nu), \nu>0\right\}.
\end{eqnarray*}
The first set imposes that the semiparametric component is centered around zero whereas the second one does not impose it. \\
The first set of constraints is not really suitable for estimation especially when the number of observations is high enough. The reason is simple and is based on the original idea behind our procedure, see paragraph \ref{subsec:procIntrod}. The first and the third moment constraints are practically the same constraint. Indeed, the number of models of the form $\frac{1}{1-\lambda}f(.) - \frac{\lambda}{1-\lambda}f_1(x|\theta)$ verifying the constraints of $\mathcal{M}_{1:3}$ is infinite because the first and the third constraints give rise to the following equations:
\begin{eqnarray*}
\lambda\mu & = & 0 \\
\lambda\mu\left(\mu^2+3\sigma_1^2\right) & = & 0.
\end{eqnarray*}
The zero in the right hand side comes from the fact that the first and the third true moments of the whole mixture are zero. These are two equations in $\lambda$ and $\mu$ (since $\sigma_1$ is supposed to be known) with infinite number of solutions $(\mu,\lambda)\in\{0\}\times[0,1]$. This entails that theoretically, there is an infinite number of models of the form $\frac{1}{1-\lambda}f(.) - \frac{\lambda}{1-\lambda}f_1(x|\theta)$ in the intersection $\mathcal{N}\cap M_{1:3}$. Still, the empirical version of these equations is
\begin{eqnarray*}
\lambda\mu & = & \frac{1}{n}\sum_{i=1}^n{X_i}; \\
\lambda\mu\left(\mu^2+3\sigma_1^2\right) & = & \frac{1}{n}\sum_{i=1}^n{X_i^3}.
\end{eqnarray*}
As the number of observations is very small, the right hand side of both equations is biased enough from zero and it is highly possible that the number of solutions becomes not only finite but reduced to one. As the number of observations increases, the law of large numbers implies directly that the right hand side becomes arbitrarily close to zero and the set of solutions becomes infinite. This is exactly what happened in the simulation results in table (\ref{tab:3by3ResultsTwoSideWeibullGaussMom}) below. The algorithm favored the value zero for the estimate of the proportion as the true proportion of the parametric component became close to zero, whereas the estimates of the mean took values very dispersed centered around zero but with a high standard deviation. The set of constraints $\mathcal{M}_{2:4}$ gave clear better results even for very low proportions. On the other hand, our method outperforms other semiparametric algorithms without prior information especially when the proportion of the parameteric component is low. This shows once more the interest of incorporating a prior information in the estimation procedure.
\begin{table}[ht]
\centering
\resizebox{\textwidth}{!}{ 
\begin{tabular}{|c|c|c|c|c|c|c|}
\hline
Estimation method & $\lambda$ & sd$(\lambda)$ & $\mu$ & sd($\mu$) & $\nu$ & sd($\nu$)\\
\hline
\hline
\multicolumn{7}{|c|}{Mixture 1 :$n=100$ $\lambda^* = 0.7$, $\mu^*=0$, $\sigma_1^*=0.5$(fixed), $\nu^*=3$, $\sigma_0^*=1.5$(fixed) }\\
\hline
Pearson's $\chi^2$ under $\mathcal{M}_{1:3}$ & 0.713 & 0.064 & -0.0003 & 0.085 & 4.315  & 0.118 \\
Pearson's $\chi^2$ under $\mathcal{M}_{2:4}$ & 0.764 & 0.067 & -0.012 & 0.342 & 2.893  & 0.731 \\
Bordes symmetry Triangular Kernel & 0.309 & 0.226 & 0.240 & 0.609 & $\mu_2=-$0.220 & sd$(\mu_2)=$0.398 \\
Bordes symmetry Gaussian Kernel & 0.211 & 0.133 & 0.106 & 0.533 & $\mu_2=-$0.035 & sd$(\mu_2)=$0.203 \\
Robin et al. & 0.488 & 0.137 & -0.005 & 0.114 & --- & --- \\
EM-type Song et al. & 0.762 & 0.040 & -0.005 & 0.092 & --- & --- \\
$\pi-$maximizing Song et al. & 0.717 & 0.156 & -0.161 & 2.301 & --- & --- \\
Stochastic EM & 0.539 & 0.083 & -0.005 & 0.112 & --- & --- \\
\hline
\hline
\multicolumn{7}{|c|}{Mixture 2 :$n=100$ $\lambda^* = 0.3$, $\mu^*=0$, $\sigma_1^*=0.5$(fixed), $\nu^*=3$, $\sigma_0^*=1.5$(fixed) }\\
\hline
Pearson's $\chi^2$ under $\mathcal{M}_{1:3}$ & 0.333 & 0.079 & 0.001 & 0.316 & 4.243  & 0.442 \\
Pearson's $\chi^2$ under $\mathcal{M}_{2:4}$ & 0.407 & 0.077 & 0.012 & 0.575 & 2.925  & 0.454 \\
Bordes symmetry Triangular Kernel & 0.272 & 0.119 & 0.773 & 0.947 & $\mu_2=-$0.430 & sd$(\mu_2)=$0.393 \\
Bordes symmetry Gaussian Kernel & 0.206 & 0.104 & 0.855 & 0.911 & $\mu_2=-$0.308 & sd$(\mu_2)=$0.350 \\
Robin et al. & 0.203 & 0.078 & -0.109 & 0.947 & --- & --- \\
EM-type Song et al. & 0.494 & 0.035 & -0.132 & 0.806 & --- & --- \\
$\pi-$maximizing Song et al. & 0.384 & 0.129 & 0.014 & 1.321 & --- & --- \\
Stochastic EM & 0.263 & 0.040 & -0.062 & 0.646 & --- & --- \\
\hline
\hline
\multicolumn{7}{|c|}{Mixture 3 :$n=300$ $\lambda^* = 0.2$, $\mu^*=0$, $\sigma_1^*=0.5$(fixed), $\nu^*=3$, $\sigma_0^*=1.5$(fixed) }\\
\hline
Pearson's $\chi^2$ under $\mathcal{M}_{1:3}$ & 0.200 & 0.058 & 0.004 & 0.215 & 4.058  & 0.684 \\
Pearson's $\chi^2$ under $\mathcal{M}_{2:4}$ & 0.252 & 0.055 & 0.069 & 0.573 & 2.932  & 0.200 \\
Bordes symmetry Triangular Kernel & 0.439 & 0.108 & -0.972 & 0.328 & $\mu_2=$1.036 & sd$(\mu_2)=$0.496 \\
Bordes symmetry Gaussian Kernel & 0.414 & 0.096 & -0.928 & 0.289 & $\mu_2=-$1.125 & sd$(\mu_2)=$0.470 \\
Robin et al. & 0.278 & 0.068 & -0.062 & 1.253 & --- & --- \\
EM-type Song et al. & 0.461 & 0.023 & 0.162 & 1.128 & --- & --- \\
$\pi-$maximizing Song et al. & 0.362 & 0.020 & 0.025 & 1.224 & --- & --- \\
Stochastic EM & 0.292 & 0.057 & 0.118 & 1.027 & --- & --- \\
\hline
\hline
\multicolumn{7}{|c|}{Mixture 4 :$n=10^5$ $\lambda^* = 0.2$, $\mu^*=0$, $\sigma_1^*=0.5$(fixed), $\nu^*=2.5$, $\sigma_0^*=2$(fixed) }\\
\hline
Pearson's $\chi^2$ under $\mathcal{M}_{1:3}$ & 0.161 & 0.010 & -0.002 & 0.019 & 3.874  & 0.661 \\
Pearson's $\chi^2$ under $\mathcal{M}_{2:4}$ & 0.203 & 0.004 & -0.018 & 0.213 & 2.492  & 0.012 \\
EM-type Song et al. & 0.325 & 0.012 & -0.061 & 1.469 & --- & --- \\
$\pi-$maximizing Song et al. & 0.251 & 0.002 & -0.158 & 1.592 & --- & --- \\
\hline
\hline
\multicolumn{7}{|c|}{Mixture 5 :$n=10^5$ $\lambda^* = 0.2$, $\mu^*=0$, $\sigma_1^*=0.5$(fixed), $\nu^*=1.5$, $\sigma_0^*=2$(fixed) }\\
\hline
Pearson's $\chi^2$ under $\mathcal{M}_{1:3}$ & 0.015 & 0.030 & 0.203 & 2.381 & 2.150  & 0.138 \\
Pearson's $\chi^2$ under $\mathcal{M}_{2:4}$ & 0.213 & 0.013 & -0.004 & 0.436 & 1.494  & 0.009 \\
EM-type Song et al. & 0.397 & 0.002 & 0.001 & 0.021 & --- & --- \\
\hline
\hline
\multicolumn{7}{|c|}{Mixture 6 :$n=10^5$ $\lambda^* = 0.05$, $\mu^*=0$, $\sigma_1^*=0.5$(fixed), $\nu^*=1.5$, $\sigma_0^*=2$(fixed) }\\
\hline
Pearson's $\chi^2$ under $\mathcal{M}_{1:3}$ & 0.005 & 0.033 & -0.105 & 2.693 & 1.581  & 0.056 \\
Pearson's $\chi^2$ under $\mathcal{M}_{2:4}$ & 0.066 & 0.013 & -0.036 & 0.857 & 1.493 & 0.008 \\
EM-type Song et al. & 0.304 & 0.014 & -0.030 & 0.910 & --- & --- \\
$\pi-$maximizing Song et al. & 0.231 & 0.002 & 0.017 & 0.801 & --- & ---\\
\hline
\hline
\multicolumn{7}{|c|}{Mixture 7 :$n=10^7$ $\lambda^* = 0.05$, $\mu^*=0$, $\sigma_1^*=0.5$(fixed), $\nu^*=1.5$, $\sigma_0^*=2$(fixed) }\\
\hline
Pearson's $\chi^2$ under $\mathcal{M}_{1:3}$ & 0.006 & 0.010 & 0.024 & 0.197 & 1.500  & 0.019 \\
Pearson's $\chi^2$ under $\mathcal{M}_{2:4}$ & 0.051 & 0.001 & 0.002 & 0.259 & 1.500  & 0.001 \\
\hline
\hline
\multicolumn{7}{|c|}{Mixture 8 :$n=10^7$ $\lambda^* = 0.01$, $\mu^*=0$, $\sigma_1^*=0.5$(fixed), $\nu^*=1.5$, $\sigma_0^*=2$(fixed) }\\
\hline
Pearson's $\chi^2$ under $\mathcal{M}_{1:3}$ & 0.005 & 0.002 & -0.011 & 0.162 & 1.509  & 0.004 \\
Pearson's $\chi^2$ under $\mathcal{M}_{2:4}$ & 0.011 & 0.001 & -0.013 & 0.594 & 1.499  & 0.001 \\
\hline
\end{tabular}}
\caption{The mean value with the standard deviation of estimates in a 100-run experiment on a two-component two-sided Weibull--Gaussian mixture.}
\label{tab:3by3ResultsTwoSideWeibullGaussMom}
\end{table}
\clearpage
\subsection{Data generated from a bivariate Gaussian mixture and modeled by a semiparametric bivariate Gaussian mixture}
We generate 1000 i.i.d. observations from a bivariate Gaussian mixture with proportion $\lambda=0.7$ for the parametric component. The parametric component is a bivariate Gaussian with mean $(0,-1)$ and covariance matrix $I_2$. The unknown component is a bivariate Gaussian with mean $(3,3)$ and covariance matrix:
\[\Sigma_2 = \left(\begin{array}{cc}{\sigma_2^*}^2 & \rho^* \\ \rho^* & {\sigma_2^*}^2\end{array}\right),\qquad {\sigma_2^*}^2 = 0.5,\quad \rho^* \in\{0, 0.25\}.\]
In a first experiment, we suppose that we know the whole parametric component, and that the unknown component belongs to the set $\mathcal{M}_1$
\begin{equation*}
\mathcal{M}_1 = \left\{\int_{\mathbb{R}^2}{f_0(x,y)dxdy}=1,\quad \int_{\mathbb{R}^2}{xf_0(x,y)dxdy}=\int_{\mathbb{R}^2}{yf_0(x,y)dydx}=\theta,\quad \theta\in\mathbb{R}\right\}.
\end{equation*}
We suppose that the only unknown parameters are the center of the unknown cluster $(\theta,\theta)$ and the proportion of the parametric component.\\
In a second experiment, we suppose that the center of the parametric component is unknown but given by $(\mu,\mu-1)$ for some unknown $\mu\in\mathbb{R}$. The set of constraints is now replaced with $\mathcal{M}_2$ given by
\begin{eqnarray*}
\mathcal{M}_2 & = &  \left\{\int_{\mathbb{R}^2}{f_0(x,y)dxdy}=1,\quad \int_{\mathbb{R}^2}{xf_0(x,y)dxdy}=\int_{\mathbb{R}^2}{yf_0(x,y)dydx}=\theta, \right.\\
 &  & \left.\qquad \qquad \qquad \int_{\mathbb{R}^2}{xyf_0(x,y)dxdy}=\theta^2+\rho^*,\theta\in\mathbb{R}\right\}.
\end{eqnarray*}
The covariance between the two coordinates $\rho^*$ in the unknown component is supposed to be known. We tested two values for $\rho^*= 0$ and $\rho^*=0.25$, see figure (\ref{fig:BivariateGaussMix}).\\
Although existing methods were only proposed for univariate cases, we see no problem in using them in multivariate cases without any changes. The only method which cannot be used directly is the method of \cite{Bordes10} because it is based on the symmetry of the density function, so it remained out of the competition.\\
For methods which use a kernel estimator, we used a kernel estimator for each coordinate of the random observations, i.e. $K_{w_x,w_y}(x,y) = K_{w_x}(x)K_{w_y}(y)$. The EM-type algorithm of \cite{Song} performs as good as our algorithm. The SEM algorithm of \cite{BordesStochEM} gives also good results. The algorithm of \cite{Robin} and the $\pi-$maximizing algorithm of \cite{Song} failed to give satisfactory results.
\begin{figure}[h]
\centering
\includegraphics[scale=0.35]{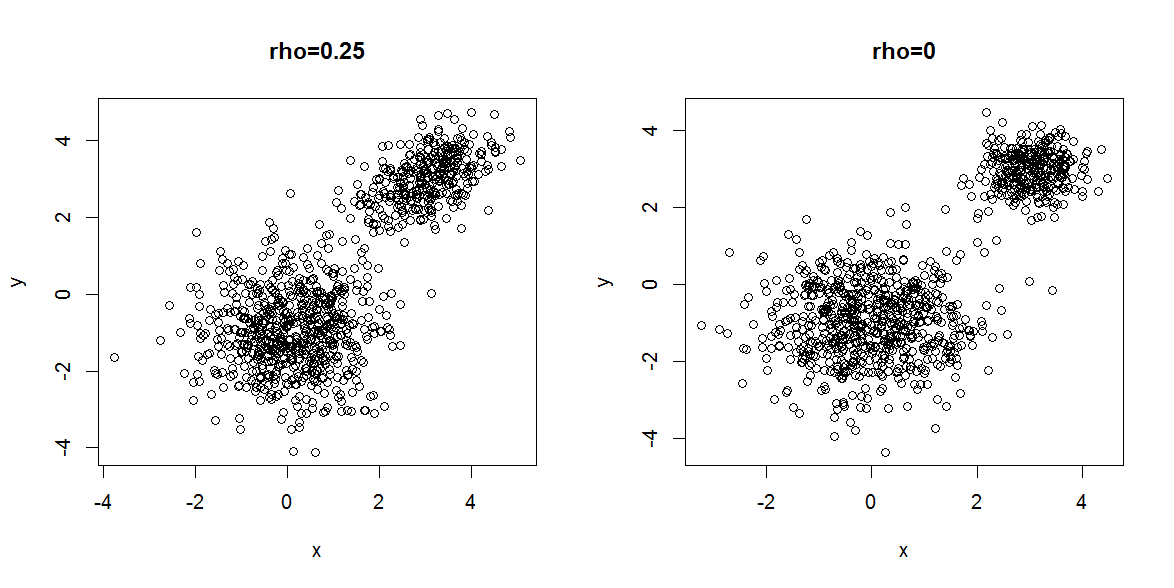}
\caption{The two bivariate Gaussian mixtures.}
\label{fig:BivariateGaussMix}
\end{figure}

\begin{table}[h]
\centering
\begin{tabular}{|c|c|c|c|c|c|c|}
\hline
Estimation method & $\lambda$ & sd$(\lambda)$ & $\mu$ & sd($\mu$) & $\theta$ & sd($\theta$)\\
\hline
\hline
\multicolumn{7}{|c|}{\bf{Mixture 2} : $\rho^*=0$ and $\mu_1 = (\mu,1-\mu)$ is unknown }\\
\hline
Pearson's $\chi^2$ under $\mathcal{M}_1$ & 0.680 & 0.027 & --- & --- & 2.854 & 0.233 \\
Pearson's $\chi^2$ under $\mathcal{M}_2$ & 0.694 & 0.019 & 0.016 & 0.035 & 3.034 & 0.045 \\
SEM  & 0.724 & 0.015 & 0.090 & 0.043 & $\mu_{1,2}=$ -0.880 & sd$(\mu_{1,2})=$0.053 \\
Robin & 0.954 & 0.064 & 0.779 & 0.212 & $\mu_{1,2}=$ -0.221 & sd$(\mu_{1,2})=$0.218 \\
Song EM  & 0.697 & 0.014 & 0.003 & 0.038 & $\mu_{1,2}=$ -0.996 & sd$(\mu_{1,2})=$0.039 \\
Song $\pi-$maximizing & 0.114 & 0.297 & 0.538 & 1.810 & $\mu_{1,2}=$ -0.463 & sd$(\mu_{1,2})=$1.810 \\
\hline
\hline
\multicolumn{7}{|c|}{\bf{Mixture 3} : $\rho^*=0.25$ and $\mu_1 = (\mu,1-\mu)$ is unknown }\\
\hline
Pearson's $\chi^2$ under $\mathcal{M}_2$ & 0.704 & 0.026 & 0.033 & 0.060 & 3.071 & 0.101 \\
SEM & 0.730 & 0.016 & 0.083 & 0.052 & $\mu_{1,2}=$-0.878 & sd$(\mu_{1,2})=$0.055 \\
Robin & 0.890 & 0.025 & 0.566 & 0.117 & $\mu_{1,2}=$ -0.434 & sd$(\mu_{1,2})=$0.117 \\
Song EM & 0.704 & 0.015 & 0.016 & 0.047 & $\mu_{1,2}=$ -0.973 & sd$(\mu_{1,2})=$0.040 \\
Song $\pi-$maximizing & 0.095 & 0.268 & 0.564 & 1.606 & $\mu_{1,2}=$ -0.436 & sd$(\mu_{1,2})=$1.606 \\
\hline

\end{tabular}
\caption{The mean value with the standard deviation of estimates in a 100-run experiment on a two-component bivariate normal mixture.}
\label{tab:3by3ResultsBivaraiteGaussMom}
\end{table}
\clearpage

\section{Conclusions}
In this chapter, we proposed a structure for a two-component semiparametric mixture models where one component is parameteric with unknown parameter, and a component defined by linear constraints on its distribution function. These constraints may be moments constraints for example. We proposed also an algorithm which estimates the parameters of this model and showed how we can implement it efficiently even in multivariate contexts. The algorithm has a linear complexity when we use the Pearson's $\chi^2$ divergence and the constraints are polynomials (thus moments constraints). We provided sufficient conditions in order to prove the consistency and the asymptotic normality of the resulting estimators.\\
Simulations show the gain we have by adding moments constraints in comparison to existing methods which do not consider any prior information. The method give clear good results even if the proportion of the parametric component is very low (equal to 0.01). In signal-noise applications, this can be interpreted otherwise. As long as we are able to estimate with relatively high precision the proportion of the signal (parametric component), we are proving the existence of the signal in a very heavy noise ($99\%$ of the data) even if the position of the signal is not accurately estimated. We showed in a simple example that our model can be applied in multivariate contexts. The new model shows encouraging properties and results, and should be tested further on real datasets.

\section{Appendix: Proofs}
\subsection{Proof of Proposition \ref{prop:identifiabilityMixture}}\label{AppendSemiPara:PropIdenitifiability}
\begin{proof}
Based on equation (\ref{eqn:IdenitifiabilityDefEq}), we may write the corresponding constraints equations, which are a fortiori equal:
\[\lambda \int{g(x)dP_1(x|\theta)} + (1-\lambda)m(\alpha) = \tilde{\lambda} \int{g(x)dP_1(x|\tilde{\theta})} + (1-\tilde{\lambda})m(\tilde{\alpha}).\]
Define the following function:
\[G:\mathbb{R}^d\rightarrow\mathbb{R}^{\ell}: (\lambda,\theta,\alpha)\mapsto \lambda \int{g(x)dP_1(x|\theta)} + (1-\lambda)m(\alpha).\]
The solution to the previous system of equations is now equivalent to the fact that function $G$ is one-to-one. This means that for a fixed $m^*$, we need that the nonlinear system of equations:
\begin{equation}
\frac{1}{1-\lambda} m^* - \frac{\lambda}{1-\lambda}m_1(\theta) = m_0(\alpha)
\label{eqn:IdentSys}
\end{equation}
has a unique solution $(\lambda,\theta,\alpha)$. The value of $m^*$ is given by $\int{g(x)dP_T}$ where $P_T$ is the mixture we are considering. To conclude, suppose that the system (\ref{eqn:IdentSys}) has a unique solution $(\lambda^*,\theta^*,\alpha^*)$ for each given $m^*$, then function $G$ is one-to-one and the constraints equations imply that $\lambda = \tilde{\lambda},\theta = \tilde{\theta}$ and $\alpha=\tilde{\alpha}$. Finally, using (\ref{eqn:IdenitifiabilityDefEq}), we may deduce that $P_0 = \tilde{P}_0$. Thus, the semiparametric mixture model is identifiable as soon as the nonlinear system of equations (\ref{eqn:IdentSys}) has a unique solution $(\lambda^*,\theta^*,\alpha^*)$.
\end{proof}


\subsection{Proof of Proposition \ref{prop:identifiability}}\label{AppendSemiPara:Prop1}
\begin{proof}
Let $P_0$ be some signed measure which belongs to the intersection $\mathcal{N} \cap \mathcal{M}$. Since $P_0$ belongs to $\mathcal{N}$, there exists a couple $(\lambda,\theta)$ such that:
\begin{equation}
P_0 = \frac{1}{1-\lambda} P_T - \frac{\lambda}{1-\lambda} P_1(.|\theta).
\label{eqn:SetNelement}
\end{equation}
This couple is unique by virtue of assumptions 3 and 4. Indeed, let $(\lambda,\theta)$	and $(\tilde{\lambda},\tilde{\theta})$ be two couples such that:
\begin{equation}
\frac{1}{1-\lambda} P_T - \frac{\lambda}{1-\lambda} P_1(.|\theta) = \frac{1}{1-\tilde{\lambda}} P_T - \frac{\tilde{\lambda}}{1-\tilde{\lambda}} P_1(.|\tilde{\theta})\quad dP_T-a.e.
\label{eqn:identifEquality}
\end{equation}
This entails that:
\[\frac{1}{1-\lambda} - \frac{\lambda}{1-\lambda} \frac{dP_1(x|\theta)}{dP_T(x)} = \frac{1}{1-\tilde{\lambda}} - \frac{\tilde{\lambda}}{1-\tilde{\lambda}} \frac{dP_1(x|\tilde{\theta})}{dP_T(x)}.\]
Taking the limit as $\|x\|$ tends to $\infty$ results in:
\[\frac{1-c\lambda}{1-\lambda}  = \frac{1-\tilde{c}\lambda}{1-\tilde{\lambda}}.\]
Note that function $z\mapsto (1-cz)/(1-z)$ is strictly monotone as long as $c\neq 1$. Hence, it is a one-to-one map. Thus $\lambda=\tilde{\lambda}$. Inserting this result in equation (\ref{eqn:identifEquality}) entails that:
\[P_1(.|\theta) = P_1(.|\tilde{\theta})\qquad dP_T-a.e.\]
Using the identifiability of $P_1$ (assumption 4), we get $\theta=\tilde{\theta}$ which proves the existence of a unique couple $(\lambda,\theta)$ in (\ref{eqn:SetNelement}).\\
On the other hand, since $P_0$ belongs to $\mathcal{M}$, there exists a unique $\alpha$ such that $P_0\in\mathcal{M}_{\alpha}$. Uniqueness comes from the fact that function $\alpha\mapsto m(\alpha)$ is one-to-one (assumption 2). Thus, $P_0$ verifies the constraints
\[\int{dP_0(x)} = 1,\qquad \int{g_i(x)dP_0(x)} = m_i(\alpha),\quad \forall i=1,\cdots,\ell.\]
Combining this with (\ref{eqn:SetNelement}), we get:
\begin{equation}
\int{\left(\frac{1}{1-\lambda} dP_T - \frac{\lambda}{1-\lambda} dP_1(x|\theta)\right)} = 1,\; \int{g_i(x)\left(\frac{1}{1-\lambda} dP_T - \frac{\lambda}{1-\lambda} dP_1(x|\theta)\right)} = m_i(\alpha),
\label{eqn:NlnSysMalpha}
\end{equation}
for all $i=1,\cdots,\ell$. This is a non linear system of equations with $\ell+1$ equations. The first one is verified for any couple $(\lambda,\theta)$ since both $P(.|\phi^*)$ and $P_1$ are probability measures. This reduces the system to $\ell$ nonlinear equations.\\
Now, let $P_0$ and $\tilde{P}_0$ be two elements in $\mathcal{N}\cap\mathcal{M}$, then there exist two couples $(\lambda,\theta)$ and $(\tilde{\lambda},\tilde{\theta})$ with $\lambda\neq\tilde{\lambda}$ or $\theta\neq\tilde{\theta}$. Since $P_0\in\mathcal{M}$, there exists $\alpha$ such that $P_0\in\mathcal{M}_{\alpha}$. Similarly, there exists $\tilde{\alpha}$ possibly different from $\alpha$. Now, $(\lambda,\theta,\alpha)$ and $(\tilde{\lambda},\tilde{\theta},\tilde{\alpha})$ are two solutions to the system of equations (\ref{eqn:NlnSysMalpha}) which contradicts with assumption 1 of the present proposition.\\
We may now conclude that, if a signed measure $P_0$ belongs to the intersection $\mathcal{N} \cap \mathcal{M}$, then it has the representation (\ref{eqn:SetNelement}) for a unique couple $(\lambda,\theta)$ and there exists a unique $\alpha$ such that the triplet $(\lambda,\theta,\alpha)$ is a solution to the non linear system (\ref{eqn:NlnSysMalpha}). Conversely, if there exists a triplet $(\lambda,\theta,\alpha)$ which solves the non linear system (\ref{eqn:NlnSysMalpha}), then the signed measure $P_0$ defined by $P_0 = \frac{1}{1-\lambda} P(.|\phi^*) - \frac{\lambda}{1-\lambda} P_1(.|\theta)$ belongs to the intersection $\mathcal{N} \cap \mathcal{M}$. This is because on the one hand, it clearly belongs to $\mathcal{N}$ by its definition and on the other hand, it belongs to $\mathcal{M}_{\alpha}$ since it verifies the constraints and thus belongs to $\mathcal{M}$.\\
It is now reasonable to conclude that under assumptions 2-4, the intersection $\mathcal{N} \cap \mathcal{M}$ includes a \emph{unique} signed measure $P_0$ if and only if the set of $\ell$ non linear equations (\ref{eqn:NlnSysMalpha}) has a unique solution $(\lambda,\theta,\alpha)$.
\end{proof}

\subsection{Proof of Lemma \ref{lem:SupXiPhiDiff}}\label{AppendSemiPara:Lem1}
\begin{proof}
The proof is based partially on the proof of Proposition 3.7 part (ii) in \cite{KeziouThesis}.\\
We proceed by contradiction. Let $\varepsilon>0$ be such that $\sup_{\phi}\|\xi_n(\phi) - \xi(\phi)\|>\varepsilon$. Then, there exists a sequence $a_k\in\Phi$ such that $\|\xi_n(a_k) - \xi(a_k)\|>\varepsilon$. By assumption A3, there exists $\eta>0$ such that:
\[H(a_k,\xi(a_k)) - H(a_k,\xi_n(a_k))>\eta.\]
Thus,
\begin{equation}
\mathbb{P}\left(\sup_{\phi}\|\xi_n(\phi) - \xi(\phi)\|>\varepsilon\right) \leq \mathbb{P}\left(H(a_k,\xi(a_k)) - H(a_k,\xi_n(a_k))>\eta\right).
\label{eqn:ProofPart1InteriorConsis}
\end{equation}
Let's prove that the right hand side tends to zero as $n$ goes to infinity which is sufficient to accomplish our claim.\\
By definition of $\xi_n(a_k)$ and assumption A2, we can write:
\begin{eqnarray*}
H_n(a_k,\xi_n(a_k)) & \geq & H_n(a_k,\xi(a_k)) \\
	& \geq & H(a_k,\xi(a_k)) - o_P(1)
\end{eqnarray*}
where $o_P(1)$ does not depend upon $a_k$ by virtue of A2. Now we have:
\begin{eqnarray*}
H(a_k,,\xi(a_k)) - H(a_k,\xi_n(a_k)) & \leq & H_n(a_k,\xi_n(a_k)) - H(a_k,,\xi_n(a_k)) + o_P(1) \\
 & \leq & \sup_{\xi,\phi} \left|H_n(\phi,\xi) - H(\phi,\xi)\right| + o_P(1).
\end{eqnarray*}
Last but not least, assumption A2 permits to conclude that the right hand side tends to zero in probability. Since the left hand side is already nonnegative by definition of $\xi(a_k)$, then by the previous result we conclude that $H(a_k,,\xi(a_k)) - H(a_k,\xi_n(a_k))$ tends to zero in probability. Employing this final result in inequality (\ref{eqn:ProofPart1InteriorConsis}), we get that $\sup_{\phi}\|\xi_n(\phi) - \xi(\phi)\|$ tends to zero in probability.
\end{proof}

\subsection{Proof of Theorem \ref{theo:MainTheorem}}\label{AppendSemiPara:Theo1}
\begin{proof}
We proceed by contradiction in a similar way to the proof of Lemma \ref{lem:SupXiPhiDiff}. Let $\kappa>0$ be such that $\|\phi^*-\hat{\phi}\|>\kappa$, then by assumption A4, there exists $\eta>0$ such that :
\[H(\hat{\phi},\xi(\hat{\phi})) - H(\phi^*,\xi(\phi^*)) > \eta.\]
This can be rewritten as:
\begin{equation}
\mathbb{P}\left(\|\phi^*-\hat{\phi}\|>\kappa\right) \leq \mathbb{P}\left(H(\hat{\phi},\xi(\hat{\phi})) - H(\phi^*,\xi(\phi^*)) > \eta\right).
\label{eqn:ProofPart2ExtConsis}
\end{equation}
We now demonstrate that the right hand side tends to zero as $n$ goes to infinity. Let $\varepsilon>0$ be such that for $n$ sufficiently large, we have $\sup_{\xi,\phi} \left|H(\phi,\xi)-H_n(\phi,\xi)\right|<\varepsilon$. This is possible by virtue of assumption A2. The definition of $\hat{\phi}$ together with assumption A2 will now imply:
\begin{eqnarray}
H_n(\hat{\phi},\xi_n(\hat{\phi})) & \leq & H_n(\phi^*,\xi_n(\phi^*)) \nonumber\\
  & \leq & H(\phi^*,\xi_n(\phi^*)) + \sup_{\xi,\phi} \left|H(\phi,\xi)-H_n(\phi,\xi)\right| \nonumber\\
	& \leq & H(\phi^*,\xi_n(\phi^*)) + \varepsilon.
	\label{eqn:ProofPart2ExtConsisIneq}
\end{eqnarray}
We use now the continuity assumption A5 of function $\xi\mapsto H(\phi^*,\xi)$ at $\xi(\phi^*)$. For the $\varepsilon$ chosen earlier, there exists $\delta(\phi^*,\varepsilon)$ such that if $\|\xi(\phi^*)-\xi_n(\phi^*)\|<\delta(\phi^*,\varepsilon)$, then:
\[|H(\phi^*,\xi_n(\phi^*)) - H(\phi^*,\xi(\phi^*))|<\varepsilon.\]
This is possible for sufficiently large $n$ since $\sup_{\phi}\|\xi(\phi^*)-\xi_n(\phi^*)\|$ tends to zero in probability by Lemma \ref{lem:SupXiPhiDiff}. Inserting this result in (\ref{eqn:ProofPart2ExtConsisIneq}) gives:
\[H_n(\hat{\phi},\xi_n(\hat{\phi})) \leq H(\phi^*,\xi(\phi^*)) + 2\varepsilon.\]
We now have:
\begin{eqnarray*}
H(\hat{\phi},\xi(\hat{\phi})) - H(\phi^*,\xi(\phi^*)) & \leq & H(\hat{\phi},\xi(\hat{\phi})) - H_n(\hat{\phi},\xi_n(\hat{\phi})) + 2\varepsilon \\
 & \leq & H(\hat{\phi},\xi(\hat{\phi})) - H(\hat{\phi},\xi_n(\hat{\phi})) + H(\hat{\phi},\xi_n(\hat{\phi})) - H_n(\hat{\phi},\xi_n(\hat{\phi})) + 2\varepsilon.
\end{eqnarray*}
Continuity assumption of $H$ implies that for $\varepsilon>0$, there exists $\delta(\hat{\phi},\varepsilon)>0$ such that if $\|\xi(\hat{\phi}) - \xi_n(\hat{\phi})\|<\delta(\hat{\phi},\varepsilon)$, then:
\[\left|H(\hat{\phi},\xi(\hat{\phi})) - H(\hat{\phi},\xi_n(\hat{\phi}))\right| \leq \varepsilon.\]
This is again possible for sufficiently large $n$ since $\sup_{\phi}\|\xi(\phi^*)-\xi_n(\phi^*)\|$ tends to zero in probability by Lemma \ref{lem:SupXiPhiDiff}. This entails that:
\begin{eqnarray*}
H(\hat{\phi},\xi(\hat{\phi})) - H(\phi^*,\xi(\phi^*)) & \leq & H(\hat{\phi},\xi_n(\hat{\phi})) - H_n(\hat{\phi},\xi_n(\hat{\phi})) + 3\varepsilon \\
  & \leq & \sup_{\xi,\phi} |H(\phi,\xi) - H_n(\phi,\xi)| + 3\varepsilon \\
	& \leq & 4\varepsilon
\end{eqnarray*}
We conclude that the right hand side in (\ref{eqn:ProofPart2ExtConsis}) goes to zero and the proof is completed.
\end{proof}

\subsection{Proof of Theorem \ref{theo:MainTheoremMomConstr}}\label{AppendSemiPara:Theo2}
\begin{proof}
We will use Theorem \ref{theo:MainTheorem}. We need to verify assumptions A2 and A3. Since the class of functions $\{(\phi,\xi)\mapsto h(\phi,\xi,.)\}$ is a Glivenko-Cantelli class of functions, then assumption A2 is fulfilled by the Glivenko-Cantelli theorem. Finally, assumption A3 can be checked by strict concavity of function $\xi\mapsto H(\phi,\xi)$. Indeed, for any $\eta\in(0,1)$ and any $\xi_1,\xi_2$, we have by strict convexity of $\psi$ :
\[\psi\left(\eta\xi_1^tg(x)+(1-\eta)\xi_2^tg(x)\right)<\eta\psi\left(\xi_1^tg(x)\right)+(1-\eta)\psi\left(\xi_2^tg(x)\right).\]
If the measure $dP/(1-\lambda) - \lambda dP_1(.|\theta)/(1-\lambda)$ is positive\footnote{This measure can never be zero since it integrates to one, thus we do not need to suppose that it is nonnegative.}, we may write:
\begin{multline*}
\int{\psi\left(\eta\xi_1^tg(x)+(1-\eta)\xi_2^tg(x)\right)\left(\frac{1}{1-\lambda}dP(x)-\frac{\lambda}{1-\lambda}dP_1(x|\theta)\right)}< \\ \eta\int{\psi\left(\xi_1^tg(x)\right)\left(\frac{1}{1-\lambda}dP(x)-\frac{\lambda}{1-\lambda}dP_1(x|\theta)\right)} + (1-\eta)\int{\psi\left(\xi_2^tg(x)\right)\left(\frac{1}{1-\lambda}dP(x)-\frac{\lambda}{1-\lambda}dP_1(x|\theta)\right)},
\end{multline*}
which entails that
\[H(\phi,\eta\xi_1+(1-\eta)\xi_2)> \eta H(\phi,\xi_1)+(1-\eta)H(\phi,\xi_2),\]
and function $\xi\mapsto H(\phi,\xi)$ becomes strictly concave. However, the measure $dP/(1-\lambda) - \lambda dP_1(.|\theta)/(1-\lambda)$ is in general a signed measure and the previous implication does not hold. This is not dramatic because function $\xi\mapsto H(\phi,\xi)$ has only two choices; it is either strictly convex or strictly concave. In case function $\xi\mapsto H(\phi,\xi)$ is strictly convex, then its supremum is infinity and the corresponding vector $\phi$ does not count in the calculus of the infimum after all. This means that the only vectors $\phi\in\Phi$ which interest us are those for which function $\xi\mapsto H(\phi,\xi)$ is strictly concave. In other words, the infimum in (\ref{eqn:MomentEstimProc}) can be calculated over the set:
\[\Phi^+ = \Phi\cap \left\{\phi: \quad \xi\mapsto H(\phi,\xi) \text{ is strictly concave}\right\}\]
instead of over $\Phi$. All assumptions of Theorem \ref{theo:MainTheorem} are now fulfilled and $\hat{\phi}$ converges in probability to $\phi^*$.
\end{proof}

\subsection{Proof of Proposition \ref{prop:ContinDiffxiMom}}\label{AppendSemiPara:Prop2}
\begin{proof}
We already have:
\[\frac{1}{1-\lambda^*}P_T - \frac{\lambda^*}{1-\lambda^*}P_1(.|\theta^*)=P_0^*,\]
and since $P_0^*$ is supposed to be a probability measure, the matrix $J_{H(\phi^*,.)}$ is definite negative. Thus $\phi^*\in\Phi^+$. Since the set of negative definite matrices is an open set (see for example page 36 in \cite{OptimKenneth}), there exists a ball $\mathcal{U}$ of negative definite matrices centered at $J_{H(\phi^*,.)}$. Continuity of $\phi\mapsto J_{H(\phi,.)}$ permits\footnote{To see this, consider Sylvester's rule which is based on a test using the determinant of the sub-matrices of $J_{H}$. Each determinant needs to be negative. The continuity of the determinant function together with the continuity of $\phi\mapsto J_{H(\phi,.)}$ will imply that we may move around $J_(H(\phi^*,.))$ in a small neighborhood in a way that the determinants of the sub-matrices stay negative.} to find a ball $B(\phi^*,\tilde{r})$ such that the subset $\{J_{H(\phi,.)}: \phi\in B(\phi^*,\tilde{r})\}$ is inside $\mathcal{U}$. Now the neighborhood we are looking at is the ball $B(\phi^*,\tilde{r})$.\\
For the second part of the proposition, the existence and finiteness of $\xi(\phi)$ for $\phi\in\mathcal{V}=B(\phi^*,\tilde{r})$ is immediate since function $\xi\mapsto H(\phi,\xi)$ is strictly concave. Besides the the differentiability of the function $\phi\mapsto\xi(\phi)$ is a direct result of the implicit function theorem applied on the equation $\xi\mapsto \nabla H(\phi,.)$. Notice that the Hessian matrix of $H(\phi,.)$ is invertible since it is symmetric definite negative.
\end{proof}

\subsection{Proof of Theorem \ref{theo:AsymptotNormalMomConstr}}\label{AppendSemiPara:Theo3}
\begin{proof}
We follow the steps of Theorem 3.2 in \cite{NeweySmith}. The idea behind the proof is a mean value expansion with Lagrange remainder of the estimating equations. \\
We need at first to verify if $\hat{\phi}$ belongs to the interior of $\Phi^+$ in order to be able to differentiate $\phi\mapsto H_n(\phi,\xi)$. This can be done similarly to Proposition \ref{prop:ContinDiffxiMom}. We also can prove (by replacing $H$ by $H_n$ and $\xi(\phi)$ by $\xi_n(\phi)$) that $\phi\mapsto\xi_n(\phi)$ is continuously differentiable in a neighborhood of $\phi^*$.\\
We may now proceed to the mean value expansion. By the very definition of $\xi_n(\phi)$, we have:
\[\frac{\partial H_n}{\partial \xi}(\phi,\xi_n(\phi)) = 0\qquad \forall \phi\in\text{int}(\Phi^+),\]
which also holds for $\phi=\hat{\phi}$, i.e. 
\[\frac{\partial H_n}{\partial \xi}(\hat{\phi},\xi_n(\hat{\phi})) = 0.\]
On the other hand, the definition of $\hat{\phi}$ implies that:
\[\left.\frac{\partial}{\partial \phi}H_n(\phi,\xi_n(\phi))\right|_{\phi=\hat{\phi}} = 0.\]
Since function $\phi\mapsto\xi_n(\phi)$ is continuously differentiable. A simple chain rule implies
\begin{eqnarray*}
\left.\frac{\partial}{\partial \phi}\left(H_n(\phi,\xi_n(\phi))\right)\right|_{\phi=\hat{\phi}} & = &  \frac{\partial}{\partial \phi}H_n(\hat{\phi},\xi_n(\hat{\phi})) + \frac{\partial}{\partial \xi} H_n(\hat{\phi},\xi_n(\hat{\phi})) \frac{\partial \xi_n}{\partial \phi}(\hat{\phi}) \\
 & = & \frac{\partial}{\partial \phi}H_n(\hat{\phi},\xi_n(\hat{\phi})).
\end{eqnarray*}
The second line comes from the definition of $\xi_n(\phi)$ as the argument of the supremum of function $\xi\mapsto H_n(\phi,\xi)$. Now, the estimating equations are given simply by:
\begin{eqnarray*}
\frac{\partial H_n}{\partial \xi}(\hat{\phi},\xi_n(\hat{\phi})) & = & 0; \\
\frac{\partial H_n}{\partial \phi}(\hat{\phi},\xi_n(\hat{\phi})) & = & 0.
\end{eqnarray*}
We need to calculate these partial derivatives. We start by the derivative with respect to $\xi$:
\begin{equation}
\frac{\partial H_n}{\partial \xi}(\phi,\xi) = m(\alpha) - \frac{1}{1-\lambda}\frac{1}{n}\sum_{i=1}^n{\psi'\left(\xi^tg(x_i)\right)g(x_i)} + \frac{\lambda}{1-\lambda}\int{\psi'\left(\xi^tg(x)\right)g(x)p_1(x|\theta)dx}
\label{eqn:DerivWRTxiMom}
\end{equation}
We calculate the partial derivatives with respect to $\alpha,\lambda$ and $\theta$:
\begin{eqnarray}
\frac{\partial H_n}{\partial \alpha} & = &  \xi^t \nabla m(\alpha) \label{eqn:DerivWRTalphaMom}\\
\frac{\partial H_n}{\partial \lambda} & = &  -\frac{1}{(1-\lambda)^2} \frac{1}{n}\sum_{i=1}^n{\psi\left(\xi^tg(x_i)\right)} + \frac{1}{(1-\lambda)^2}\int{\psi\left(\xi^tg(x)\right)p_1(x|\theta)dx} \label{eqn:DerivWRTlambdaMom}\\
\frac{\partial H_n}{\partial \theta} & = &  \frac{\lambda}{1-\lambda}\int{\psi\left(\xi^tg(x)\right)\nabla_{\theta} p_1(x|\theta)dx} \label{eqn:DerivWRTthetaMom}
\end{eqnarray}
Notice that by Lemma \ref{lem:SupXiPhiDiff}, the continuity of $\phi\mapsto\xi(\phi)$ and the consistency of $\hat{\phi}$ towards $\phi^*$, we have $\xi_n(\hat{\phi})\rightarrow \xi(\phi^*)=0$ in probability. A mean value expansion of the estimating equation between $(\hat{\phi},\xi_n(\hat{\phi}))$ and $(\phi^*,0)$ implies that there exists $(\bar{\phi},\bar{\xi})$ on the line between these two points such that:
\begin{equation}
\left(\begin{array}{c} \frac{\partial H_n}{\partial \phi}(\hat{\phi},\xi_n(\hat{\phi})) \\ \frac{\partial H_n}{\partial \xi}(\hat{\phi},\xi_n(\hat{\phi})) \end{array}\right) = \left(\begin{array}{c}  \frac{\partial H_n}{\partial \phi}(\phi^*,0) \\\frac{\partial H_n}{\partial \xi}(\phi^*,0) \end{array}\right)  + J_{H_n}(\bar{\phi},\bar{\xi}) \left(\begin{array}{c} \hat{\phi}-\phi^* \\ \xi_n(\hat{\phi})\end{array}\right),
\label{eqn:StochExpansion}
\end{equation}
where $J_{H_n}(\bar{\phi},\bar{\xi})$ is the matrix of second derivatives of $H_n$ calculated at the mid point $(\bar{\phi},\bar{\xi})$. The left hand side is zero, so we need to calculate the first vector in the right hand side. We have by simple substitution in formula (\ref{eqn:DerivWRTxiMom}):
\[\frac{\partial H_n}{\partial \xi}(\phi^*,0) = m(\alpha^*) - \frac{1}{1-\lambda^*}\frac{1}{n}\sum_{i=1}^n{g(x_i)} + \frac{\lambda^*}{1-\lambda^*}\int{g(x)p_1(x|\theta^*)dx}.\]
Using the assumption that the model (\ref{eqn:TrueP0model}) verify the set of constraints defining $\mathcal{M}_{\alpha}$ together with the CLT, we write:
\begin{equation}
\sqrt{n}\frac{\partial H_n}{\partial \xi}(\phi^*,0) \xrightarrow[\mathcal{L}]{} \mathcal{N}\left(0,\frac{1}{(1-\lambda^*)^2}\text{Var}_{P_T}(g(X))\right).
\label{eqn:LimitLawPartialDerivHn}
\end{equation}
Using formulas (\ref{eqn:DerivWRTalphaMom}), (\ref{eqn:DerivWRTlambdaMom}) and (\ref{eqn:DerivWRTthetaMom}), we may write:
\begin{eqnarray*}
\frac{\partial H_n}{\partial \alpha}(\phi^*,0) & = & 0; \\
\frac{\partial H_n}{\partial \lambda}(\phi^*,0) & = & -\frac{1}{(1-\lambda^*)^2} + \frac{1}{(1-\lambda^*)^2} = 0;\\
\frac{\partial H_n}{\partial \theta}(\phi^*,0) & = & \frac{\lambda^*}{1-\lambda^*}\int{\nabla_{\theta} p_1(x|\theta^*)dx} = \frac{\lambda^*}{1-\lambda^*}\nabla_{\theta} \int{p_1(x|\theta^*)dx} = 0.
\end{eqnarray*}
The final line holds since by Lebesgue's differentiability theorem using assumption 5 for $\xi=0$, we can change between the sign of integration and derivation. Combine this with the fact that $p_1(x|\theta^*)$ is a probability density function which integrates to 1, gives the result in the last line.\\
We need now to write an explicit form for the matrix $J_{H_n}(\bar{\phi},\bar{\xi})$ and study its limit in probability. It contains the second order partial derivatives of function $H_n$ with respect to its parameters. We start by the double derivatives. Using formulas (\ref{eqn:DerivWRTxiMom}), (\ref{eqn:DerivWRTalphaMom}), (\ref{eqn:DerivWRTlambdaMom}) and (\ref{eqn:DerivWRTthetaMom}), we write:
\begin{eqnarray*}
\frac{\partial^2H_n}{\partial \xi^2} & = & - \frac{1}{1-\lambda}\frac{1}{n}\sum_{i=1}^n{\psi''\left(\xi^tg(x_i)\right)g(x_i)g(x_i)^t} + \frac{\lambda}{1-\lambda}\int{\psi''\left(\xi^tg(x)\right)g(x)g(x)^tp_1(x|\theta)dx};\\
\frac{\partial^2H_n}{\partial \alpha^2} & = & \xi^t J_m(\alpha);\\
\frac{\partial^2H_n}{\partial \lambda^2} & = & -\frac{2}{(1-\lambda)^3} \frac{1}{n}\sum_{i=1}^n{\psi\left(\xi^tg(x_i)\right)} + \frac{2}{(1-\lambda)^3}\int{\psi\left(\xi^tg(x)\right)p_1(x|\theta)dx};\\
\frac{\partial^2H_n}{\partial \theta^2} & = & \frac{\lambda}{1-\lambda}\int{\psi\left(\xi^tg(x)\right)J_{p_1(x|\theta)}dx}; \\
\frac{\partial^2H_n}{\partial \xi\partial\alpha} & = & \nabla m(\alpha); \\
\frac{\partial^2H_n}{\partial \xi\partial\lambda} & = &  - \frac{1}{(1-\lambda)^2}\frac{1}{n}\sum_{i=1}^n{\psi'\left(\xi^tg(x_i)\right)g(x_i)} + \frac{1}{(1-\lambda)^2}\int{\psi'\left(\xi^tg(x)\right)g(x)p_1(x|\theta)dx};\\
\frac{\partial^2H_n}{\partial \xi\partial\theta} & = & \frac{\lambda}{1-\lambda}\int{\psi'\left(\xi^tg(x)\right)g(x)\nabla_{\theta}p_1(x|\theta)^tdx}; \\
\frac{\partial^2H_n}{\partial \alpha\partial\lambda} & = & 0; \\
\frac{\partial^2H_n}{\partial \alpha\partial\theta} & = & 0; \\
\frac{\partial^2H_n}{\partial \lambda\partial\theta} & = &  \frac{1}{(1-\lambda)^2}\int{\psi\left(\xi^tg(x)\right)\nabla_{\theta}p_1(x|\theta)dx}.
\end{eqnarray*}
As $n$ goes to infinity, we have $\bar{\xi}\rightarrow 0$ and $\bar{\phi}\rightarrow \phi^*$. Then, under regularity assumptions of the present theorem, we can calculate the limit in probability of the matrix $J_{H_n}(\bar{\phi},\bar{\xi})$. The blocks limits are given by:
\[
\frac{\partial^2H_n}{\partial \xi^2} \stackrel{\mathbb{P}}{\rightarrow} - \mathbb{E}_{P_0^*}\left[g(X)g(X)^t\right],\qquad \frac{\partial^2H_n}{\partial \alpha^2} \stackrel{\mathbb{P}}{\rightarrow} 0, \qquad \frac{\partial^2H_n}{\partial \lambda^2} \stackrel{\mathbb{P}}{\rightarrow} 0, \qquad \frac{\partial^2H_n}{\partial \theta^2} \stackrel{\mathbb{P}}{\rightarrow} 0,\qquad \frac{\partial^2H_n}{\partial \xi\partial\alpha} \stackrel{\mathbb{P}}{\rightarrow} \nabla m(\alpha^*) \]

\[\frac{\partial^2H_n}{\partial \xi\partial\lambda} \stackrel{\mathbb{P}}{\rightarrow}  - \frac{1}{(1-\lambda^*)^2}\mathbb{E}_{P_T}\left[g(X)\right] + \frac{1}{(1-\lambda^*)^2}\int{g(x)p_1(x|\theta^*)dx}\]

\[\frac{\partial^2H_n}{\partial \xi\partial\theta} \stackrel{\mathbb{P}}{\rightarrow} \frac{\lambda^*}{1-\lambda^*}\int{g(x)\nabla_{\theta}p_1(x|\theta^*)dx},\qquad \frac{\partial^2H_n}{\partial \alpha\partial\lambda} \stackrel{\mathbb{P}}{\rightarrow} 0, \qquad
\frac{\partial^2H_n}{\partial \alpha\partial\theta}  \stackrel{\mathbb{P}}{\rightarrow} 0, \qquad \frac{\partial^2H_n}{\partial \lambda\partial\theta} \stackrel{\mathbb{P}}{\rightarrow}  0, \]
taking into account that $\psi(0)=0,\psi'(0)=1$ and $\psi''(0)=1$. The limit in probability of the matrix $J_{H_n}(\bar{\phi},\bar{\xi})$ can be written in the form:
\[ J_H = \left[\begin{array}{cc}
0 & J_{\phi^*,\xi^*}^t \\
J_{\phi^*,\xi^*} & J_{\xi^*,\xi^*}
\end{array}\right],\]
where $J_{\phi^*,\xi^*}$ and $J_{\xi^*,\xi^*}$ are given by (\ref{eqn:NormalAsymMomJ1}) and (\ref{eqn:NormalAsymMomJ2}). The inverse of matrix $J_H$ has the form:
\[J_H^{-1} = \left(\begin{array}{cc} -\Sigma & H \\ H^t & W\end{array}\right),\]
where
\[
\Sigma = \left(J_{\phi^*,\xi^*}^t J_{\xi^*,\xi^*} J_{\phi^*,\xi^*}\right)^{-1},\quad  H = \Sigma J_{\phi^*,\xi^*}^t J_{\xi^*,\xi^*}^{-1},\quad  W = J_{\xi^*,\xi^*}^{-1} - J_{\xi^*,\xi^*}^{-1} J_{\phi^*,\xi^*} \Sigma J_{\phi^*,\xi^*}^t J_{\xi^*,\xi^*}^{-1}.
\]
Going back to (\ref{eqn:StochExpansion}), we have:
\begin{equation*}
\left(\begin{array}{c} 0 \\ 0 \end{array}\right) = \left(\begin{array}{c}  0 \\ \frac{\partial H_n}{\partial \xi}(\phi^*,0) \end{array}\right)  + J_{H_n}(\bar{\phi},\bar{\xi}) \left(\begin{array}{c}  \hat{\phi}-\phi^* \\ \xi_n(\hat{\phi}) \end{array}\right).
\end{equation*}
Solving this equation in $\phi$ and $\xi$ gives:
\begin{equation*}
\left(\begin{array}{c}  \sqrt{n}\left(\hat{\phi}-\phi^*\right) \\ \sqrt{n}\xi_n(\hat{\phi})\end{array}\right) = J_H^{-1}\left(\begin{array}{c}  0 \\ \sqrt{n}\frac{\partial H_n}{\partial \xi}(\phi^*,0) \end{array}\right) + o_P(1).
\end{equation*}
Finally, using (\ref{eqn:LimitLawPartialDerivHn}), we get that:
\[\left(\begin{array}{c}  \sqrt{n}\left(\hat{\phi}-\phi^*\right) \\ \sqrt{n}\xi_n(\hat{\phi})\end{array}\right) \xrightarrow[\mathcal{L}]{} \mathcal{N}\left(0,S\right)\]
where 
\[S = \frac{1}{(1-\lambda^*)^2}\left(\begin{array}{c}H \\ W\end{array}\right) \text{Var}_{P_T}(g(X)) \left(H^t\quad W^t\right).\]
This ends the proof.
\end{proof}

%
%

\chapter{Semiparametric two-component mixture models where one component is defined through L-moments constraints}
Recall that a semiparametric two-component mixutre model is defined by:
\begin{equation}
f(x) = \lambda f_1(x|\theta) + (1-\lambda) f_0(x), \qquad \text{for } x\in\mathbb{R}
\end{equation}
for $\lambda\in(0,1)$ and $\theta\in\mathbb{R}^d$ to be estimated and the density $f_0$ is considered to be unknown. We have proposed in Chapter 3 a method which incorporates moment-type constraints on the unknown component. The method outperforms other semiparametric methods which do not use prior information encouraging the use of a suitable prior information. Moment-type constraints are not suitable for positive-support mixtures especially when the density does not decrease fast enough. The method needs more observations to be able to estimate such mixtures. \\
We thus propose here to use L-moments constraints. L-moments have become classical tools alternative to central moments for the description of dispersion, skewness and kurtosis of a univariate heavy-tailed distribution. Distributions such as the Lognormal, the Pareto and the Weibull distributions are standard examples of such distributions. The use of L-moments is evolving since their introduction by \cite{Hoskings}. One of the main interests of L-moments is that they can be defined as soon as the expectation of the random variable exists. \cite{AlexisGSI13} has proposed a structure and an estimation procedure for semiparametric models defined through L-moments conditions. The resulting estimators performe well under the model, and they outperforme existing methods in misspecification contexts.\\
Similarly to the case of moment constraints seen in the previous chapter, the incorporation of L-moments constraints cannot be done directly in existing methods for semiparametric mixtures (see paragraph \ref{sec:LiteratureSemiparaMix}) because the optimization will be carried over a (possibly) infinite dimensional space on the one hand, and on the other hand, existing methods use either the distribution function or the probability density function and cannot adapt an approach based on the quantile function. Our approach introduced in the previous chapter cannot be used either because L-moments are not linear functions of the distribution function as we will see in paragraph \ref{subsec:LmomDefProper}. We thus need a new tool. Convex analysis offer away using Fenchel-Legendre duality to transform an optimization problem over an infinite dimensional space to the space of Lagrangian parameters (finite dimensional one). $\varphi-$divergences, by their convexity properties, are suitable tools in order to use the duality result. Chap 1 in the PhD thesis of \cite{AlexisThesis} introduced a method based on $\varphi-$divergences to estimate a semiparametric model defined subject to L-moments constraints. We will exploit his methodology to build a new estimation procedure which takes into account L-moments constraints over the unknown component's distribution.\\
%
%

\section{Semiparametric models defined through L-moments constraints}
In this section, we present a definition of a semiparametric model subject to L-moments constraints in a similar way to semiparametric models defined through moments constraints. An essential part to begin with is the definition of L-moments. We will keep this part brief and one can consult \cite{AlexisThesis} Chap. 1 or \cite{Hoskings} for more details.\\
We recall two important notions; the quantile function and the quantile measure. Let $X_{1}, \ldots  X_n$ be $n$ i.i.d. copies of a random variable $X$  taking values in $\mathbb{R}$ with unknown cumulative distribution function (cdf) $\mathbb{F}$. Denote by $\mathbb{F}^{-1}(u)$ for $u \in (0,1)$ the associated quantile function of the cdf $\mathbb{F}$ defined by
\begin{equation*}
 \mathbb{F}^{-1}(u) = \inf\left\lbrace x \in \mathbb{R},\;\; s.t. \;\; \mathbb{F}(x) \geq u \right\rbrace, \;\; u \in (0,1).
\end{equation*}
We can associate to $\mathbb{F}^{-1}$ a measure ${\bf{F}}^{-1}$ on $\mathcal{B}([0,1])$ given by
\[{\bf{F}}^{-1}(B)=\int_0^1{\ind{x\in B}d\mathbb{F}^{-1}(x)} \in\mathbb{R}\cup\{-\infty,+\infty\}.\]
The integral here is a Riemann-Stieltjes one. $\F^{-1}$ is a $\sigma-$finite measure since $\mathbb{F}^{-1}$ has bounded variations on every subinterval $[a,b]$ from $(0,1)$.\\
In this section, we suppose that $\mathbb{E}|X| < \infty$ and $\int{|x|dF(x)}<\infty$. We adapt the standard notation for the cumulative distribution function (cdf) and measures, i.e. a measure $P$ has a cdf $\mathbb{F}$, a density $p$ with respect to the Lebesgue measure and a quantile measure $\F^{-1}$, and a measure $Q$ has a cdf $\mathbb{Q}$, a density $q$ with respect to the Lebesgue measure and a quantile measure $\Q$.
\subsection{L-moments: Definition and first properties}\label{subsec:LmomDefProper}
Let $X_{1:n} < \ldots < X_{n:n} $ be the order statistics associated to the sample $X_1,\cdots,X_n$.
\begin{definition}
The L-moment of order $r$, denoted $\lambda_r$, $r=1,2,\ldots$ is defined as a linear combination of the expectation of order statistics:
\begin{equation*}
  \lambda_r = \frac{1}{r} \sum_{k=0}^{r-1}(-1)^k \binom{r-1}{k}\mathbb{E}\left( X_{r-k:r} \right).
\end{equation*}
\end{definition}
\n If $\mathbb{F}$ is continuous, then the expectation of the $j$-th order statistic is given by
\begin{equation}
  \mathbb{E}\left[ X_{j:r} \right]  = \dfrac{r!}{(j-1)!(r-j)!} \int_{\mathbb{R}} x\mathbb{F}(x)^{j-1}\left[1-\mathbb{F}(x)\right]^{r-j}d\mathbb{F}(x).
	\label{eqn:OrderStatLaw}
\end{equation}
\n In particular, the first three L-moments are
\begin{eqnarray*}
\lambda_{1} &=& \mathbb{E}[X];\\
\lambda_{2} &=& \left( \mathbb{E}\left[X_{2:2}\right] - \mathbb{E}\left[X_{1:2}\right] \right)/2; \\
\lambda_{3} &=& \left( \mathbb{E}\left[X_{3:3}\right] - 2\mathbb{E}\left[X_{2:3}\right] + \mathbb{E}\left[X_{1:3}\right] \right)/3.
\end{eqnarray*}
\n Using formula (\ref{eqn:OrderStatLaw}), L-moments can be expressed using the quantile function $\mathbb{F}^{-1}$ (see Proposition 1.1. from \cite{AlexisThesis}) as follows:
\[\lambda_r = \int_0^1{\mathbb{F}^{-1}(u)L_{r-1}(u)du}\qquad \forall r\geq 1,\]
where $L_r$ is the shifted Legendre polynomial of order $r$ and is given by:
\[L_r(u) = \sum_{k=0}^r{(-1)^{r-k}\binom{r}{k}\binom{r+k}{k}u^k}.\]
Moreover, for $r\geq 2$:
\begin{equation}
\lambda_r = -\int_{\mathbb{R}}{K_r(t)d\mathbb{F}^{-1}(t)},
\label{eqn:LmomRepIntShiftLeg}
\end{equation}
where
\begin{equation}
K_r(t) = \int_0^t{L_{r-1}(u)du} = \sum_{k=0}^{r-1}{\frac{(-1)^{r-k}}{k+1}\binom{r}{k}\binom{r+k}{k}t^{k+1}}
\label{eqn:IntShiftLegPoly}
\end{equation}
is the integrated shifted Legendre polynomial (see Proposition 1.2 in \cite{AlexisThesis}). Notice that L-moments are polynomials in the cdf and linear in the quantile measure.


\subsection{Semiparametric Linear Quantile Models (SPLQ)}
SPLQ models were introduced by \cite{AlexisGSI13}. The definition passes by the quantile measures instead of the distribution function. It is possible to define semiparametric models subject to L-moments constraints using the distribution function. However, their estimation would be very difficult because the constraints are not linear in the distribution function. They are instead linear in the quantile measure. This will become clearer as we go further in this subject. Denote $M^{-1}$ the set of all $\sigma-$finite measures. \\
\begin{definition} A semiparametric linear quantile model related to some quantile measure $\F_T^{-1}$ is a collection of quantile measures absolutely continuous with respect to $\F_T^{-1}$ sharing the same form of L-moments, i.e.
\[\mathcal{M} = \bigcup_{\alpha\in\mathcal{A}} \left\{\F^{-1}\ll \F_T^{-1},\; s.t.\; \int_{0}^1{K_r(t)\F^{-1}(dt)} = m(\alpha)\right\},\]
where $m(\alpha) = (-\lambda_2,\cdots,-\lambda_{\ell})$ and $\alpha\in\mathcal{A}\subset\mathbb{R}^s$.
\end{definition}
\begin{example}[\cite{AlexisThesis}]
\label{Example:Weibull}
Consider the model which is the family of all the distributions of a r.v. $X$ whose second, third and fourth L-moments satisfy:
\begin{eqnarray*}
\lambda_{2} &=& \sigma\left( 1-2^{-1/\nu} \right)\Gamma\left( 1 + \dfrac{1}{\nu} \right)\\
\lambda_{3} &=& \lambda_{2}\left[ 3 - 2\dfrac{1-3^{1/\nu}}{1-2^{-1/\nu}} \right]\\
\lambda_{4} &=& \lambda_{2}\left[ 6 + \dfrac{5(1-4^{-1/\nu})-10(1-3^{-1/\nu})}{1-2^{-1/\nu}} \right],
\end{eqnarray*}
for $\sigma > 0$, $\nu > 0$. These distributions share their first L-moments of order 2, 3 and 4 with those of a Weibull distribution with scale and shape parameter; $\sigma$, $\nu$.
\end{example}
%
%
In SPLQ models, the objective is to estimate the value of $\alpha^*$ for which the true quantile measure $\F^{-1}_T$ of the data belongs to $\mathcal{M}_{\alpha^*}$ on the basis of a sample $X_1,\cdots,X_n$. The estimation procedure is generally done by either solving the set of equations defining the constraints or by minimizing a suitable distance-like function between the set $\mathcal{M}$ and some estimator of $\F^{-1}_T$ based on an observed sample. In other words, we search for the "projection" of $\F^{-1}_T$ on $\mathcal{M}$.\\
We have seen in the previous chapter that $\varphi-$divergences offer a way to calculate a "projection" of a finite signed measure on a set of finite signed measures, see definitions \ref{def:phiDistance} and \ref{def:phiProj}. $\varphi-$divergences can still be used to identify some distance between a $\sigma-$finite measure and a set of $\sigma-$finite measures. We may write:
\begin{eqnarray}
\alpha^* & = & \arginf_{\alpha\in\mathcal{A}} D_{\varphi}\left(\mathcal{M}_{\alpha},\F_T^{-1}\right) \nonumber \\
 & = & \arginf_{\alpha\in\mathcal{A}} \inf_{\F^{-1}\in\mathcal{M}_{\alpha}}D_{\varphi}\left(\F^{-1},\F_T^{-1}\right).
\label{eqn:EstimSPLQ}
\end{eqnarray}
Of course, if $\F_T^{-1}\in\mathcal{M}_{\alpha^*}$ for some $\alpha^*\in\mathcal{A}$, then $D_{\varphi}\left(\cup_{\alpha}\mathcal{M}_{\alpha},\F_T^{-1}\right)=0$. Otherwise, $\alpha^*$ corresponds to the parameter of the closest set $\mathcal{M}_{\alpha}$ from the $\varphi-$divergence point of view to the quantile measure $\F_T^{-1}$.

\subsection{Estimation using the duality technique}\label{subsec:SPLQDuality}
The estimation procedure (\ref{eqn:EstimSPLQ}) is not feasible because it concerns the minimization over a subset of possibly infinite dimension. The duality technique presented in paragraph \ref{subsec:DualTechRes} can be applied here too in order to transform the calculus of the projection from an optimization problem over a possibly infinite dimensional space into an optimization problem over $\mathbb{R}^{\ell-1}$, where $\ell-1$ is the number of constraints defining the set $\mathcal{M}_{\alpha}$. We recall briefly this techniques by applying it directly in the context of quantile measures. Corollary 1.1 from \cite{AlexisThesis} states the following. If there exists some $\F^{-1}$ in $\mathcal{M}_{\alpha}$ such that $a_{\varphi}<d\F^{-1}/d\F_T^{-1}<b_{\varphi}$ $\F_T^{-1}$-a.s. where dom$\varphi = (a_{\varphi},b_{\varphi})$ then,
\begin{equation}
\inf_{\F^{-1}\in\mathcal{M}_{\alpha}}\int_{0}^1{\varphi\left(\frac{d\F^{-1}}{d\F_T^{-1}}\right)(u)\F_T^{-1}(du)} = \sup_{\xi\in\mathbb{R}^{\ell-1}} \xi^t m(\alpha) - \int_0^1{\psi\left(\xi^tK(u)\right)\F_T^{-1}(du)}.
\label{eqn:DualityLmom}
\end{equation}
This formula permits to build a plug-in estimate for $\alpha$ by considering a sample $X_1,\cdots,X_n$, see Remark 1.15 in \cite{AlexisThesis}.
\begin{equation}
\hat{\alpha} = \arginf_{\alpha} \sup_{\xi\in\mathbb{R}^{\ell-1}} \xi^t m(\alpha) - \sum_{i=1}^{n-1}{\psi\left(\xi^tK\left(\frac{i}{n}\right)\right)\left(X_{i+1:n} - X_{i:n}\right)}.
\label{eqn:DualityLmomEmpirical}
\end{equation}
This plug-in estimate is very interesting in its own, because it does not need any numerical integration. Besides, if we take $\varphi$ to be the $\chi^2$ generator, i.e. $\varphi(t)=(t-1)^2/2$ whose convex conjugate is $\psi(t)=t^2/2+t$, the optimization over $\xi$ can be solved directly in a similar way to Example \ref{ex:Chi2LinConstr}, see also Example 1.12 in \cite{AlexisThesis}. We will get back to this interesting case study later on.

%
%

\section{Semiparametric two-component mixture models when one component is defined through L-moments constraints}

\subsection{Definition and identifiability}
\begin{definition}
\label{def:SemiParaModelLmom}
Let $X$ be a random variable taking values in $\mathbb{R}$ distributed from a probability measure $P$ whose cdf is $\mathbb{F}$. We say that $P(.|\phi)$ with $\phi=(\lambda,\theta,\alpha)$ is a two-component semiparametric mixture model subject to L-moments constraints if it can be written as follows:
\begin{eqnarray}
P(.| \phi) & = &  \lambda P_1(.|\theta) + (1-\lambda) P_0 \quad \text{s.t. } \nonumber\\
\F_0^{-1}\in\mathcal{M}_{\alpha} & = & \left\{\Q^{-1} \in M^{-1}, \Q^{-1}\ll\F_0^{-1} \text{ s.t. } \int_{0}^1{K(u)\Q^{-1}(du)}=m(\alpha)\right\}
\label{eqn:SetMalphaLmom}
\end{eqnarray}
for $\lambda\in(0,1)$ the proportion of the parametric component, $\theta\in\Theta\subset\mathbb{R}^{d}$ a set of parameters defining the parametric component, $\alpha\in\mathcal{A}\subset\mathbb{R}^{s}$ is the constraints parameter, $K=(K_2,...,K_{\ell})$ is defined through formula (\ref{eqn:IntShiftLegPoly}) and finally $m(\alpha)=(m_2(\alpha),\cdots,m_{\ell}(\alpha))$ is a vector-valued function determining the values of the L-moments.
\end{definition}

Notice that $m(\alpha)$ must contain the negative values of the L-moments by equation (\ref{eqn:LmomRepIntShiftLeg}), i.e $m_r(\alpha)=-\lambda_r$. In this definition, it may appear that we have mixed quantiles with probabilities. This is however necessary in order to show the structure of the mixture model which generates the data. This structure is uniquely defined through the distribution function and does not have a "proper" writing using the quantile measure. In general, there is no formula which gives the quantile of a mixture model, and in practice, statisticians use approximations to calculate the quantile of a mixture model. Thus, working with the quantiles will make us lose the linearity property relating the two components with the mixture's distribution. In the previous chapter, this linearity played an essential role in the estimation procedure and simplified the calculus of the estimator on several levels. We will get back to this idea later on, and a "partial" solution will be proposed in order to get back to the mixture distribution instead of its quantile.\\
It is important to recall that the use of quantiles in the definition of semiparametric models subject to L-moments constraints stems from the fact that the constraints are linear functionals in the quantiles. Thus, an estimation procedure which employs the quantiles instead of the distribution function can be solved using the Fenchel-Legendre duality in a similar way to paragraph \ref{subsec:SPLQDuality}.

The identifiability of the model was not questioned in the context of SPLQ models because it suffices that the sets $\mathcal{M}_{\alpha}$ are disjoint (the function $m(\alpha)$ is one-to-one). However, in the context of this semiparametric mixture model, identifiability cannot be achieved only by supposing that the sets $\mathcal{M}_{\alpha}$ are disjoint. \\
\begin{definition}
We say that the two-component semiparametric mixture model subject to L-moments constraints is identifiable if it verifies the following assertion. If
\begin{equation}
\lambda P_1(.|\theta) + (1-\lambda)P_0 = \tilde{\lambda} P_1(.|\tilde{\theta}) + (1-\tilde{\lambda}) \tilde{P}_0,\quad \text{with } \F_0^{-1}\in\mathcal{M}_{\alpha}, \tilde{\F}_0^{-1}\in\mathcal{M}_{\tilde{\alpha}}, 
\label{eqn:IdenitifiabilityDefEqLmom}
\end{equation}
then $\lambda = \tilde{\lambda},\theta = \tilde{\theta}$ and $P_0=\tilde{P}_0$ (and hence $\alpha=\tilde{\alpha}$).
\end{definition}
This is the same identifiability concept considered in Definition \ref{def:identifiabilitySemipParaMom} (and by \cite{Bordes06b}) except that the unknown component's quantile belongs to the set $\mathcal{M}_{\alpha}$.\\
\begin{proposition}
\label{prop:identifiabilityMixtureLmom}
For a given mixture distribution $P_T = P(.| \phi^*)$ whose cdf is $\mathbb{F}_T$, suppose that the system of equations:
\begin{equation}
\int_{0}^1{K(u)\left(\frac{1}{1-\lambda}\F_T - \frac{\lambda}{1-\lambda}\F_1(.|\theta)\right)^{-1}(du)} = m(\alpha)
\label{eqn:SysLmomEq}
\end{equation}
has a unique solution $(\lambda^*,\theta^*,\alpha^*)$. Then, equation (\ref{eqn:IdenitifiabilityDefEqLmom}) has a unique solution, i.e. $\lambda = \tilde{\lambda},\theta = \tilde{\theta}$ and $P_0=\tilde{P}_0$, and the semiparametric mixture model $P_T = P(.| \phi^*)$ is identifiable.
\end{proposition}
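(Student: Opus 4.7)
The plan is to adapt directly the proof of Proposition~\ref{prop:identifiabilityMixture} (the analogue for moment-type constraints), exploiting the fact that the L-moments constraints, although nonlinear in the cdf, are linear in the quantile measure, and the quantile measure of the unknown component is determined by the mixture identity.

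First, I would start from equation (\ref{eqn:IdenitifiabilityDefEqLmom}) and call the common value of both sides $P_T$, with cdf $\mathbb{F}_T$. Solving the mixture identity $P_T = \lambda P_1(.|\theta) + (1-\lambda)P_0 = \tilde{\lambda} P_1(.|\tilde{\theta}) + (1-\tilde{\lambda})\tilde{P}_0$ for $\mathbb{F}_0$ and $\tilde{\mathbb{F}}_0$ yields
\[
\mathbb{F}_0 = \frac{1}{1-\lambda}\mathbb{F}_T - \frac{\lambda}{1-\lambda}\mathbb{F}_1(.|\theta),\qquad
\tilde{\mathbb{F}}_0 = \frac{1}{1-\tilde{\lambda}}\mathbb{F}_T - \frac{\tilde{\lambda}}{1-\tilde{\lambda}}\mathbb{F}_1(.|\tilde{\theta}).
\]
Since $\mathbb{F}_0$ and $\tilde{\mathbb{F}}_0$ are genuine cdfs, their quantile measures $\mathbb{F}_0^{-1}$ and $\tilde{\mathbb{F}}_0^{-1}$ are well defined, and by the membership hypothesis $\mathbb{F}_0^{-1}\in\mathcal{M}_{\alpha}$, $\tilde{\mathbb{F}}_0^{-1}\in\mathcal{M}_{\tilde\alpha}$. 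Rewriting these constraints via the representation (\ref{eqn:LmomRepIntShiftLegIntro}) gives exactly the system (\ref{eqn:SysLmomEq}) evaluated at $(\lambda,\theta,\alpha)$ on the one hand and at $(\tilde\lambda,\tilde\theta,\tilde\alpha)$ on the other, both with the same $\mathbb{F}_T$ in the left-hand side.

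Next, invoking the uniqueness hypothesis on system (\ref{eqn:SysLmomEq}) immediately yields $(\lambda,\theta,\alpha)=(\tilde\lambda,\tilde\theta,\tilde\alpha)$. Plugging this equality back into the formulas for $\mathbb{F}_0$ and $\tilde{\mathbb{F}}_0$ gives $\mathbb{F}_0=\tilde{\mathbb{F}}_0$, hence $P_0=\tilde{P}_0$, which is the identifiability conclusion.

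The main delicate point in the argument, and the place where one should be careful, is the transition from the abstract mixture identity (\ref{eqn:IdenitifiabilityDefEqLmom}) to the pointwise formula for $\mathbb{F}_0$ on which the quantile and hence the L-moments constraints are evaluated. In the moments setting (Proposition~\ref{prop:identifiabilityMixture}) one could integrate test functions $g$ directly against $dP_T-\lambda\, dP_1$ without caring about the sign of the resulting measure, but here the constraint functional $\int_0^1 K(u)\,d\mathbb{Q}^{-1}(u)$ is meaningful only through the quantile associated to a bona fide cdf. The subtlety is resolved, however, by the observation that the two triplets in (\ref{eqn:IdenitifiabilityDefEqLmom}) are assumed to come from actual mixtures, so that $\mathbb{F}_0$ and $\tilde{\mathbb{F}}_0$ are honest cdfs on $\mathbb{R}$ whose quantile measures are automatically well defined; this is precisely the content of the restriction to $\Phi^+$ introduced later in the chapter. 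Apart from this single verification, the proof is a direct calque of the argument in Appendix~\ref{AppendSemiPara:PropIdenitifiability}, with the role of the linear functional $Q\mapsto \int g\,dQ$ played by the linear functional $\mathbb{Q}^{-1}\mapsto \int_0^1 K\,d\mathbb{Q}^{-1}$.
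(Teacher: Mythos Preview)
Your proof is correct and follows essentially the same approach as the paper. The paper's version wraps the argument in the language of a map $G:(\lambda,\theta,P_0)\mapsto \lambda P_1(.|\theta)+(1-\lambda)P_0$ being one-to-one and argues by contradiction, but the core steps---solving the mixture identity for $\mathbb{F}_0$ and $\tilde{\mathbb{F}}_0$, reading off two solutions of system~(\ref{eqn:SysLmomEq}) from the membership constraints, invoking uniqueness, and concluding $P_0=\tilde P_0$---are identical to yours; your remark on why the quantile measures are well defined also mirrors the paper's restriction to probability measures in $M^+$.
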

The proof is differed to Appendix \ref{AppendSemiPara:PropIdenitifiabilityLmom}. Note that the proof although has a close idea to the proof of Proposition \ref{prop:identifiabilityMixture} is different with more technical difficulties.
\begin{example}[Two-component exponential mixture]
We propose to look at an exponential mixture defined by:
\[f(x|\lambda^*,a_1^*) = \lambda^* a_1^* e^{-a_1^* x} + (1-\lambda^*) a_0^* e^{-a_0^* x}\]
where $a_1^*=1.5, a_0^*=0.5$ and $\lambda^*\in\{0.3,0.5,0.7,0.85\}$. This is considered to be the distribution generating the observed data. Suppose that the second component $f_0^*(x)=a_0^* e^{-a_0^* x}$ is unknown during the estimation. Furthermore, suppose that we hold an information about $f_0^*$ that its quantile ${\F_0^*}^{-1}$ belongs to the following class of functions:
\[\mathcal{M} = \left\{\F^{-1}\ll{\F_0^*}^{-1},\quad \int_0^1{u(1-u)\F^{-1}(du)} = \frac{1}{2a_0^*}\right\}.\]
This set contains all probability distributions whose second L-moment has the value $\frac{1}{2a_0^*}$. We would like to check the identifiability of the semiparametric mixture model subject to the second L-moment constraint of the exponential distribution $\mathcal{E}(a_0^*)$. The system of equations (\ref{eqn:SysLmomEq}) is given by:
\[\int_0^1{u(1-u)\left(\frac{1}{1-\lambda}\F_T - \frac{\lambda}{1-\lambda}\F_1(.|a_1)\right)^{-1}(du)} = \frac{1}{2a_0^*}.\]
In order to calculate the left hand side, we use the alternative definition of the second L-moment $\lambda_2 = \left(\mathbb{E}[X_{2:2}] - \mathbb{E}[X_{1:2}]\right)/2$ and exploit formula (\ref{eqn:OrderStatLaw}). We have
\begin{multline*} \int_0^1{u(1-u)\left(\frac{1}{1-\lambda}\F_T - \frac{\lambda}{1-\lambda}\F_1(.|a_1)\right)^{-1}(du)} = \\ 
\int_{\mathbb{R}_+}{x\left[2\left(\frac{1}{1-\lambda}\mathbb{F}_T(x) - \frac{\lambda}{1-\lambda}\mathbb{F}_1(x|a_1)\right)-1\right]\left(\frac{1}{1-\lambda}p_T(x) - \frac{\lambda}{1-\lambda}p_1(x|a_1)\right)dx} 
\end{multline*}
A direct calculus of the right hand side shows:
\begin{multline*}
\int_{\mathbb{R}_+}{x\left[2\left(\frac{1}{1-\lambda}\mathbb{F}_T(x) - \frac{\lambda}{1-\lambda}\mathbb{F}_1(x|a_1)\right)-1\right]\left(\frac{1}{1-\lambda}p_T(x) - \frac{\lambda}{1-\lambda}p_1(x|a_1)\right)dx}  = \\ \frac{2C_1-(\lambda+1)C_2}{(1-\lambda)^2} + \frac{\lambda^2-2\lambda}{2a_1(1-\lambda)^2} + \frac{2\lambda^*\lambda}{(1-\lambda)^2(a_1+a_1^*)} + \frac{2\lambda(1-\lambda^*)}{(1-\lambda)^2(a_1+a_0^*)}
\end{multline*}
where
\begin{eqnarray*}
C2 & = & \frac{\lambda^*}{a_1^*} + \frac{1-\lambda^*}{a_0^*} \\
C1 & = & \frac{\lambda^*}{a_1^*} + \frac{1-\lambda^*}{a_0^*} - \frac{{\lambda^*}^2}{4a_1^*} - \frac{(1-\lambda^*)^2}{4a_0^*} - \frac{\lambda^*(1-\lambda^*)}{a_1^*+a_0^*}.
\end{eqnarray*}
In figure (\ref{fig:2ndLmomConstr}), we show the set of solutions of the following equation:
\begin{equation}
\frac{2C_1-(\lambda+1)C_2}{(1-\lambda)^2} + \frac{\lambda^2-2\lambda}{2a_1(1-\lambda)^2} + \frac{2\lambda^*\lambda}{(1-\lambda)^2(a_1+a_1^*)} + \frac{2\lambda(1-\lambda^*)}{(1-\lambda)^2(a_1+a_0^*)} = \frac{1}{2a_0^*},
\label{eqn:LmomentExpoEquation}
\end{equation}
for several values of $\lambda^*$ in the figure to the left. The figure to the right shows the intersection between the set of solutions and the set $\Phi^+=\{(\lambda,a), \text{ s.t. }\frac{1}{1-\lambda}\mathbb{F}_T(x) - \frac{\lambda}{1-\lambda}\mathbb{F}_1(x|a_1) \text{ is a cdf}\}$. It is clear that the nonlinear system of equations (\ref{eqn:SysLmomEq}) has an infinite number of solutions. In order to reduce the number of solutions into one, we need to consider another L-moment constraint. We do not pursue this here because the calculus is already complicated even in this simple model.\\ 
Note that the set of solutions is shrinking as the proportion of the unknown component $f_0$ becomes smaller (the value of $\lambda^*$ increases). This gives rise to a difficult and an important question; what happens if we have a number of constraints inferior to the number of parameters. This question is not pursued here.
\begin{figure}[ht]
\centering
\includegraphics[scale=0.45]{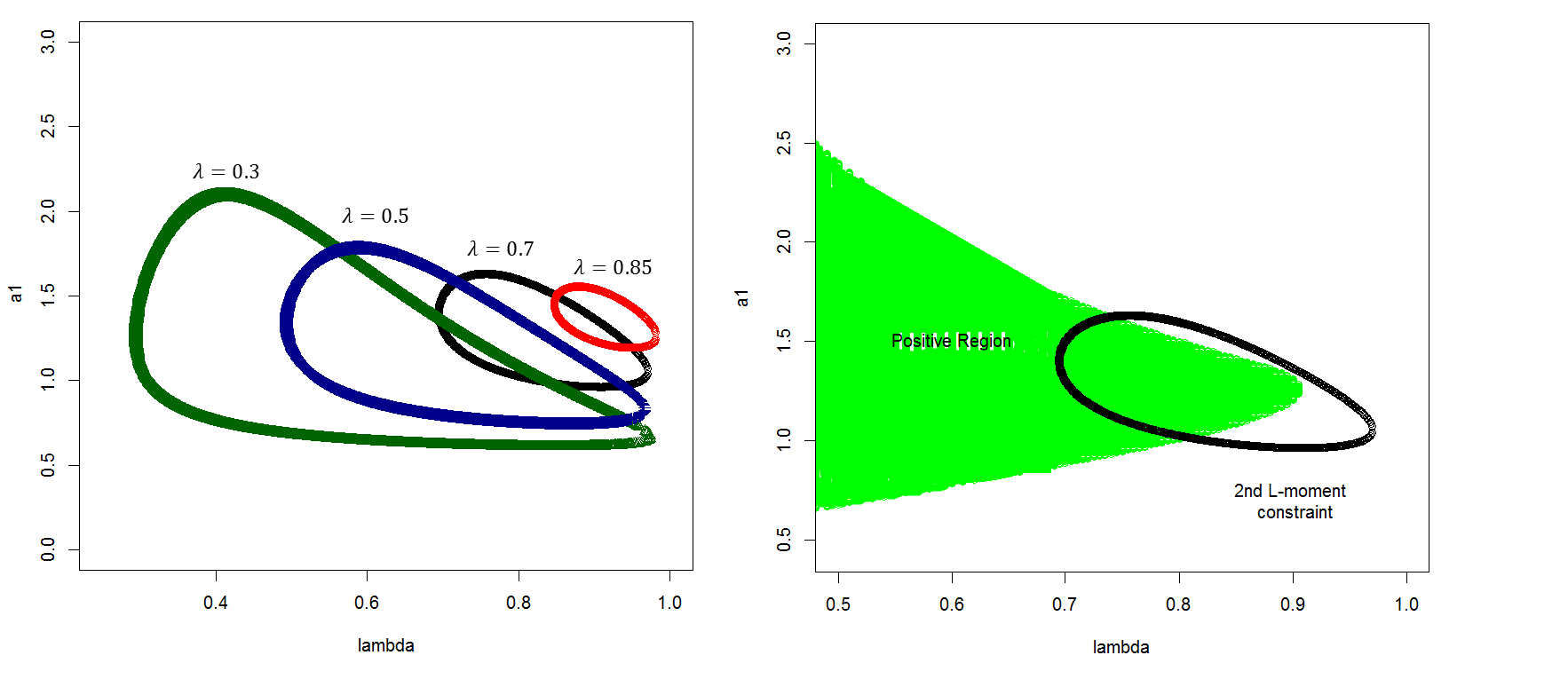}
\caption{The set of solutions under a constraint over the second L-moment. Each closed trajectory corresponds to a value of the proportion of the parametric part indicated above of it. The figure to the left represents the whole set of solutions of the equation (\ref{eqn:LmomentExpoEquation}) for different values of the true proportion $\lambda^*$. The figure to the right represents the intersection between the set of solutions of equation (\ref{eqn:LmomentExpoEquation}) for $\lambda^*=0.7$ with the set $\Phi^+$.}
\label{fig:2ndLmomConstr}
\end{figure}

\end{example}
\subsection{An algorithm for the estimation of the semiparametric mixture model}
In the context of our semiparametric mixture model, we want to estimate the parameters $(\lambda,\theta,\alpha)$ on the basis of two pieces of information; an i.i.d. sample $X_1,\cdots,X_n$ drawn from $P_T$ and the fact that ${\F_0^*}^{-1}$ belongs to the set $\mathcal{M}$. For SPLQ models, we have seen that using $\varphi-$divergences, we were able to construct an estimation procedure by minimizing some distance between the set of constraints and the distribution generating the data. The resulting estimation procedure is an optimizing problem over an infinite dimensional space. We exploited the linearity of the constraints and transformed the estimation procedure into a feasible optimization problem over $\mathbb{R}^{\ell-1}$ using the Fenchel-Legendre duality. \\
In order to use the Fenchel-Legendre duality, the constraints need to apply over the whole mixture. In our semiparametric mixture model, the constraints apply over the quantile of only one component; $\F_0^{-1}$. We thus propose to define another "model" based on $\F_0^{-1}$. We have:
\[{\mathbb{F}_0^*}^{-1} = \left(\frac{1}{1-\lambda^*}\mathbb{F}_T(.|\phi^*) - \frac{\lambda^*}{1-\lambda^*}\mathbb{F}_1(|\theta^*)\right)^{-1}.\]
Denote the associated quantile measure 
\[{\F_0^*}^{-1} = \left(\frac{1}{1-\lambda^*}\F_T(.|\phi^*) - \frac{\lambda^*}{1-\lambda^*}\F_1(|\theta^*)\right)^{-1}.\]
Define the set $\mathcal{N}^{-1}$ by:
\[\mathcal{N}^{-1} = \left\{\Q^{-1} \in M^{-1}\; : \; \exists (\lambda,\theta)\in(0,1)\times\Theta \text{ s.t. } \Q^{-1} = \left(\frac{1}{1-\lambda}\F_T - \frac{\lambda}{1-\lambda}\F_1(.|\theta)\right)^{-1}\right\}.\]
Notice that the set $\mathcal{N}^{-1}$ here is different from the set $\mathcal{N}$ defined in the previous chapter by (\ref{eqn:SetNnewModel}). Here, not all the couples $(\lambda,\theta)$ in $(0,1)\times\Theta$ are accepted, because function $\frac{1}{1-\lambda}\mathbb{F}_T - \frac{\lambda}{1-\lambda}\mathbb{F}_1(|\theta)$ may not be a cdf for for these couples. Define the set of effective parameters $\Phi^+$ by:
\begin{equation}
\Phi^+=\left\{(\lambda,\theta)\in(0,1)\times\Theta : \frac{1}{1-\lambda}\mathbb{F}_T - \frac{\lambda}{1-\lambda}\mathbb{F}_1(.|\theta) \text{ is a cdf}\right\}.
\end{equation}
Now, the set $\mathcal{N}^{-1}$ can be characterized using $\Phi^+$ by:
\[\mathcal{N}^{-1} = \left\{\left(\frac{1}{1-\lambda}\F_T - \frac{\lambda}{1-\lambda}\F_1(.|\theta)\right)^{-1}, \text{ for } (\lambda,\theta)\in\Phi^+\right\}.\]
The introduction of the set $\Phi^+$ is only temporary, and we will not need it at the end of this section in order to build our estimation procedure. Notice now that ${\F_0^*}^{-1}$ is a member of $\mathcal{N}^{-1}$ for $(\lambda,\theta)=(\lambda^*,\theta^*)$. On the other hand, and by definition of the semiparametric mixture model, ${\F_0^*}^{-1} \in\mathcal{M}_{\alpha^*}$. We may write:
\begin{equation}
{\F_0^*}^{-1}\in \mathcal{N}^{-1} \bigcap \cup_{\alpha}\mathcal{M}_{\alpha}.
\label{eqn:F0InIntersect}
\end{equation}
If we suppose that the intersection (which is not void) contains only one element which will be ${\F_0^*}^{-1}$ (see paragraph \ref{subsec:UniquenessSolLmom} for a discussion on the uniqueness), then it becomes reasonable to consider an estimation procedure by calculating a "distance" between the two sets $\cup_{\alpha}\mathcal{M}_{\alpha}$ and $\mathcal{N}^{-1}$. Using definition \ref{def:phiDistance}, we may write:
\begin{eqnarray}
D_{\varphi}\left(\cup_{\alpha}\mathcal{M}_{\alpha}, \mathcal{N}^{-1}\right) & = & \inf_{\Q^{-1}\in\mathcal{N}^{-1}}\inf_{\F_0^{-1}\in\cup_{\alpha}\mathcal{M}_{\alpha}} D_{\varphi}\left(\F_0^{-1},\Q^{-1}\right) \nonumber\\
 & = & \inf_{(\lambda,\theta)\in\Phi^+,\alpha\in\mathcal{A}}\inf_{\F_0^{-1}\in\mathcal{M}_{\alpha}} D_{\varphi}\left(\F_0^{-1},\left(\frac{1}{1-\lambda}\F_T - \frac{\lambda}{1-\lambda}\F_1(.|\theta)\right)^{-1}\right).\nonumber\\
\label{eqn:EstimProcLmomNotDual}
\end{eqnarray}
Now by virture of (\ref{eqn:F0InIntersect}), it holds that
\begin{equation}
(\lambda^*,\theta^*,\alpha^*) \in \arginf_{(\lambda,\theta,\alpha)\in\Phi^+}\inf_{\F_0^{-1}\in\mathcal{M}_{\alpha}} D_{\varphi}\left(\F_0^{-1},\left(\frac{1}{1-\lambda}\F_T - \frac{\lambda}{1-\lambda}\F_1(.|\theta)\right)^{-1}\right).
\label{eqn:EstimProcQuantileFuns}
\end{equation}
Next, we will treat this estimation procedure using the Fenchel duality in order to write a feasible optimization procedure, and then proceed to build upon a plug-in estimator based on an observed dataset $X_1,\cdots,X_n$.
\subsection{Estimation using the duality technique}
Applying the duality result (\ref{eqn:DualityLmom}) on the estimation procedure (\ref{eqn:EstimProcLmomNotDual}) gives:
\begin{equation}
D_{\varphi}\left(\cup_{\alpha}\mathcal{M}_{\alpha}, \mathcal{N}^{-1}\right) =  \inf_{(\lambda,\theta,\alpha)\in\Phi^+}\sup_{\xi\in\mathbb{R}^{\ell-1}} \xi^t m(\alpha) - \int_0^1{\psi\left(\xi^tK(u)\right)\left(\frac{1}{1-\lambda}\F_T - \frac{\lambda}{1-\lambda}\F_1(.|\theta)\right)^{-1}(du)}.
\label{eqn:DualityApplyLmom}
\end{equation}
\noindent In order to keep formulas clearer, we adapt the following notation:
\[
\mathbb{F}_0(y|\phi) = \frac{1}{1-\lambda} \mathbb{F}_T(y) - \frac{\lambda}{1-\lambda} \mathbb{F}_1(y|\theta)
\]
Note that we must ensure the integrability condition 
\[\int{\|K\left(\mathbb{F}_0(y|\phi)\right)\|} dx <\infty,\]
in order to be able to use the duality technique. This is ensured by the definition of the polynomial vector $K$. Indeed, there exists a constant $c$ such that:
\[\left\|K\left(\mathbb{F}_0(y|\phi)\right)\right\| \leq c\left(\mathbb{F}_0(y|\phi)\right)\left(1-\mathbb{F}_0(y|\phi)\right).\]
Since $\mathbb{F}_0(y|\phi) = \left(\frac{1}{1-\lambda}\mathbb{F}_T(y) - \frac{\lambda}{1-\lambda}\mathbb{F}_1(y|\theta)\right)$ is supposed here to be a cdf because $(\lambda,\theta,\alpha)\in\Phi^+$, it suffices then that $\mathbb{F}_0(y|\phi)$ has a finite expectation so that the previous integral becomes finite.\\
This formulation is only useful when one has a sample of i.i.d. observations of the distribution $\frac{1}{1-\lambda}P_T - \frac{\lambda}{1-\lambda}P_1(.|\theta)$ for every $\lambda$ and $\theta$, because the integral can be approximated directly using the order statistics as in formula (\ref{eqn:DualityLmomEmpirical}). We need, however, a formula which shows directly the cdf because it would permit to approximate directly the objective function and avoid the calculus of the inverse of $\frac{1}{1-\lambda}\mathbb{F}_T - \frac{\lambda}{1-\lambda}\mathbb{F}_1(.|\theta)$. Besides, the replacement of the true cdf by the empirical one does not guarantee that the difference $\frac{1}{1-\lambda}\mathbb{F}_T - \frac{\lambda}{1-\lambda}\mathbb{F}_1(.|\theta)$ remains a cdf and more complications would appear in the proof of the consistency.\\ 
Using Lemma 1.2 from \cite{AlexisThesis} we may make the change of variable desired.
\begin{equation}
\int_0^1{\psi\left(\xi^tK(u)\right)\left(\frac{1}{1-\lambda}\F_T - \frac{\lambda}{1-\lambda}\F_1(.|\theta)\right)^{-1}(du)} = \int_{\mathbb{R}}{\psi\left(\xi^tK\left(\frac{1}{1-\lambda}\mathbb{F}_T(x) - \frac{\lambda}{1-\lambda}\mathbb{F}_1(x|\theta)\right)\right)dx}.
\label{eqn:ChangeVarLmom}
\end{equation}
Employing (\ref{eqn:ChangeVarLmom}) and (\ref{eqn:DualityApplyLmom}) in (\ref{eqn:EstimProcQuantileFuns}), we may write:
\begin{equation}
(\lambda^*,\theta^*,\alpha^*) \in \arginf_{(\lambda,\theta,\alpha)\in\Phi^+}\sup_{\xi\in\mathbb{R}^{\ell-1}} \xi^tm(\alpha) - \int_{\mathbb{R}}{\psi\left(\xi^tK\left(\frac{1}{1-\lambda}\mathbb{F}_T(x) - \frac{\lambda}{1-\lambda}\mathbb{F}_1(x|\theta)\right)\right)dx}.
\label{eqn:EstimProcPhiPLus}
\end{equation}
We may now construct an estimator of $\phi^*$ by replacing $\mathbb{F}_T$ by the empirical cdf calculated on the basis of an i.i.d. sample $X_1,\cdots,X_n$. The resulting estimation procedure is still very complicated. This is mainly because we need to characterize the set $\Phi^+$. It is possible but is very expensive. For example, we may think about checking if the derivative with respect to $x$ $\frac{1}{1-\lambda}p_T(x) - \frac{\lambda}{1-\lambda}p_1(x|\theta)$ is non negative at a large randomly selected set of points. On the other hand, the set $\Phi^+$ can take \emph{fearful} forms for some mixtures. In Figure (\ref{fig:DiffFormPhiPlus}), we have two examples of $\Phi^+$. In the exponential mixture (the figure to the left), $\Phi^+$ has a "good" form in the sense that it is convex and contains $(\lambda^*,\theta^*)=(0.7,1.5)$ with a sufficiently large neighborhood around it. Thus, optimization procedures should not face any problem finding the optimum. However, in the Weibull-Lognormal mixture (the figure to the right) with $(\lambda^*,\mu^*)=(0.7,3)$, the set $\Phi^*$ is not even connected. Besides, there is not a sufficient neighborhood around $(\lambda^*,\mu^*)$ which permits an optimization algorithm to move around. During my simulations on data generated from a Weibull-Lognormal mixture distribution, the optimization algorithms could not reach such optimum and were always stuck at the initial point. A solution will be proposed in the next paragraph where we introduce the final step in the sequel of this estimation procedure.
\begin{figure}[ht]
\centering
\includegraphics[scale=0.45]{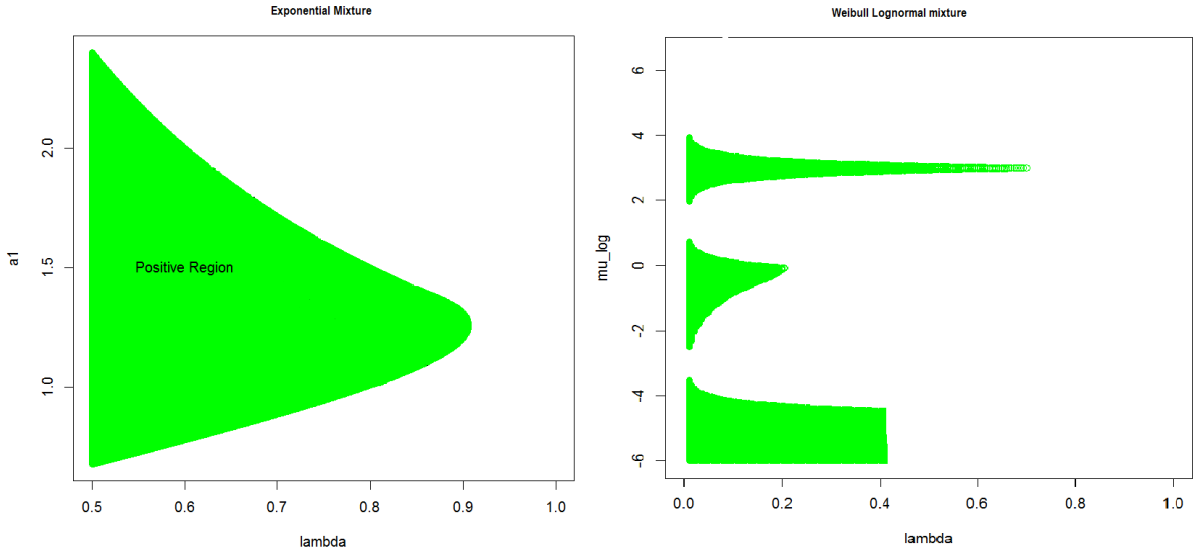}
\caption{Different forms of the set $\Phi^+$. For the Weibull-Lognormal mixture, the Weibull is the semiparametric component.}
\label{fig:DiffFormPhiPlus}
\end{figure}
\subsection{The algorithm in practice and a plug-in estimate}
The problem with the estimation procedure (\ref{eqn:EstimProcPhiPLus}) is that the optimization is over the set $\Phi^+$ which may take "non-practical forms" as explained in the previous paragraph. The problem can be reread otherwise. The difficulty comes mainly from the fact that function $\frac{1}{1-\lambda}\mathbb{F}_T - \frac{\lambda}{1-\lambda}\mathbb{F}_1(.|\theta)$ may not be a cdf and the quantile would not exist. Thus, the estimation procedure in formula (\ref{eqn:EstimProcQuantileFuns}) cannot be used. We have, however, made disappear the quantiles in formula (\ref{eqn:EstimProcPhiPLus}) using a change of variable. Besides, there is no problem in calculating the optimized function in formula (\ref{eqn:EstimProcPhiPLus}) for any triplet $(\lambda,\theta,\alpha)\in\Phi$ even if the parameters do not define a proper cdf for function $\frac{1}{1-\lambda}\mathbb{F}_T - \frac{\lambda}{1-\lambda}\mathbb{F}_1(.|\theta)$. Besides, and more importantly, $\phi^*$ is a global infimum of the objective function $H(\phi,\xi(\phi))$ over the whole set $\Phi$ (and not only overy $\Phi^+$) where:
\[H(\phi,\xi)  =  \xi^t m(\alpha) - \int{\psi\left[\xi^tK\left(\frac{1}{1-\lambda} \mathbb{F}_T(y) - \frac{\lambda}{1-\lambda} \mathbb{F}_1(y|\theta)\right)\right]dy}\]
and $\xi(\phi) = \argsup_{\xi\in\mathbb{R}^{\ell-1}} H(\phi,\xi)$. Indeed, for any $\phi\in\Phi$, we have:
\[H(\phi,\xi(\phi)) \geq H(\phi,0) = 0.\]
Besides, using the duality attainment at $\phi=\phi^*$, we may write
\begin{eqnarray*}
H(\phi^*,\xi(\phi^*)) & = & \inf_{\F_0^{-1}\in\mathcal{M}_{\alpha^*}}D_{\varphi}\left(\F_0^{-1},\left(\frac{1}{1-\lambda^*}\F_T - \frac{\lambda^*}{1-\lambda^*}\F_1(.|\theta^*)\right)^{-1}\right) \\
 & = & \inf_{\F_0^{-1}\in\mathcal{M}_{\alpha^*}}D_{\varphi}\left(\F_0^{-1},{\F_0^*}^{-1}\right)\\
 & = & 0.
\end{eqnarray*}
Thus, if function $H(\phi,\xi(\phi))$ does not have several global infima inside $\Phi$, $(\lambda^*,\theta^*,\alpha^*)$ will hold as the only global minimum of it. In other words
\begin{equation}
\phi^* = \arginf_{(\alpha,\theta,\lambda)\in\Phi}\sup_{\xi\in\mathbb{R}^{\ell-1}} \xi^t m(\alpha) - \int{\psi\left[\xi^tK\left(\frac{1}{1-\lambda} \mathbb{F}_T(x) - \frac{\lambda}{1-\lambda} \mathbb{F}_1(x|\theta)\right)\right]dx}.
\label{eqn:EstimProcLmomDualCDFVersionTrue}
\end{equation}
Provided an i.i.d. sample $X_1,\cdots,X_n$ distributed from $P_T$, the cdf $\mathbb{F}_T$ can be approximated by its empirical version $\frac{1}{n}\sum{\ind{X_i\leq x}}$. Hence, $\phi^*$ can be estimated by:
\begin{equation}
\hat{\phi} = \arginf_{(\alpha,\theta,\lambda)\in\Phi}\sup_{\xi\in\mathbb{R}^{\ell-1}} \xi^t m(\alpha) - \int{\psi\left[\xi^tK\left(\frac{1}{1-\lambda} \mathbb{F}_n(x) - \frac{\lambda}{1-\lambda} \mathbb{F}_1(x|\theta)\right)\right]dx}.
\label{eqn:EstimProcLmomDualCDFVersion}
\end{equation}
\begin{remark}
Notice that the dual attainment no longer holds on the complementary set $\Phi\setminus\Phi^+$ since we are working with "signed cumulative functions". Our idea is to offer the optimization algorithm a larger neighborhood around the optimum in order to be able to find it. The important fact in the extended procedure is that $\phi^*$ is a \emph{global} infimum of the objective function. Our simulation study shows that the extension to $\Phi$ does not affect the results in several examples, and the estimator $\hat{\phi}$ is not biased and has an acceptable variance, see Section \ref{sec:SimulationLmom} for more details.
\end{remark}
\subsection{Uniqueness of the solution "under the model"}\label{subsec:UniquenessSolLmom}
By a unique solution we mean that only one quantile measure, which can be written in the form of $\left(\frac{1}{1-\lambda}\F_T-\frac{\lambda}{1-\lambda}\F_1(.|\theta)\right)^{-1}$ for $(\lambda,\theta)\in\Phi^+$, verifies the L-moments constraints with a unique triplet $(\lambda^*,\theta^*,\alpha^*)$. The existence of a unique solution is essential in order to ensure that the procedure (\ref{eqn:EstimProcPhiPLus}) is a reasonable estimation method. We provide next a result ensuring the uniqueness of the solution. The proof is differed to Appendix \ref{AppendSemiPara:Prop1Lmom}. The proof does not provide sufficient conditions for the existence of a unique solution over $\Phi$ because in the proof we only study the intersection $\mathcal{N}^{-1}\cap\mathcal{M}$ and characterize it without using the Fenchel duality.
\begin{proposition}
\label{prop:UniqueSolLmom}
Assume that ${\F_0^*}^{-1}\in\mathcal{M}=\cup_{\alpha}\mathcal{M}_{\alpha}$. Suppose also that:
\begin{enumerate}
\item the system of equations:
\begin{equation}
\int_{0}^1{K(u)\left(\frac{1}{1-\lambda}\F_T - \frac{\lambda}{1-\lambda}\F_1(.|\theta)\right)^{-1}(du)} = m(\alpha)
\label{eqn:NlnSysLmom}
\end{equation}
has a unique solution $(\lambda^*,\theta^*,\alpha^*)$;
\item the function $\alpha\mapsto m(\alpha)$ is one-to-one;
\item for any $\theta\in\Theta$ we have :
\[\lim_{x\rightarrow \infty} \frac{p_1(x|\theta)}{p_T(x)} = c,\quad \text{ with } c\in [0,\infty)\setminus\{1\};\]
\item the parametric component is identifiable, i.e. if $p_1(.|\theta) = p_1(.|\theta')\;\; dP_T-$a.e. then $\theta=\theta'$,
\end{enumerate}
then, the intersection $\mathcal{N}^{-1}\cap\mathcal{M}$ contains a unique measure ${\F_0^*}^{-1}$, and there exists a unique vector $(\lambda^*,\theta^*,\alpha^*)$ such that $P_T = \lambda^*P_1(.|\theta^*)+(1-\lambda)P_0^*$ where $P_0^*$ is given by (\ref{eqn:TrueP0model}) and belongs to $\mathcal{M}_{\alpha^*}$. Moreover, provided assumptions 2-4, the conclusion holds if and only if assumption 1 is fulfilled.
\end{proposition}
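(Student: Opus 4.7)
The plan is to mimic the structure of the proof of Proposition \ref{prop:identifiability} in Chapter 3, but adapted to the quantile setting. First I would take an arbitrary element $\F_0^{-1}$ of the intersection $\mathcal{N}^{-1}\cap\mathcal{M}$ and show that it is necessarily unique and equal to ${\F_0^*}^{-1}$. Since $\F_0^{-1}\in\mathcal{N}^{-1}$, there exists a couple $(\lambda,\theta)\in\Phi^+$ with $\F_0^{-1}=\bigl(\tfrac{1}{1-\lambda}\F_T-\tfrac{\lambda}{1-\lambda}\F_1(.|\theta)\bigr)^{-1}$, and since $\F_0^{-1}\in\mathcal{M}$, there is at least one $\alpha\in\mathcal{A}$ with $\F_0^{-1}\in\mathcal{M}_\alpha$, with $\alpha$ unique by the one-to-one assumption 2.

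Next I would establish that the representing couple $(\lambda,\theta)$ is unique. If two couples $(\lambda,\theta)$ and $(\tilde\lambda,\tilde\theta)$ give rise to the same quantile measure, then by inversion the corresponding cdfs coincide $dP_T$-almost everywhere, i.e.
\[
\tfrac{1}{1-\lambda}\mathbb{F}_T-\tfrac{\lambda}{1-\lambda}\mathbb{F}_1(.|\theta)=\tfrac{1}{1-\tilde\lambda}\mathbb{F}_T-\tfrac{\tilde\lambda}{1-\tilde\lambda}\mathbb{F}_1(.|\tilde\theta).
\]
Differentiating (or passing to the corresponding Radon-Nikodym derivatives with respect to $dP_T$) and letting $x\to\infty$, assumption 3 on the ratio $p_1/p_T$ gives a limit equation of the form $\tfrac{1-c\lambda}{1-\lambda}=\tfrac{1-c\tilde\lambda}{1-\tilde\lambda}$; since $z\mapsto(1-cz)/(1-z)$ is strictly monotone whenever $c\neq 1$, this forces $\lambda=\tilde\lambda$, and then assumption 4 (identifiability of the parametric component) forces $\theta=\tilde\theta$. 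This is the step I expect to be the main obstacle because the natural asymptotic argument is about densities, while the statement involves quantile measures, so a little care is needed to justify inverting back through the cdf and to handle sets where the cdf may not be strictly increasing.

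Having the unique representation $\F_0^{-1}\leftrightarrow(\lambda,\theta,\alpha)$ at hand, I would then rewrite the membership in $\mathcal{M}_\alpha$ as the constraints system
\[
\int_0^1 K(u)\,\Bigl(\tfrac{1}{1-\lambda}\F_T-\tfrac{\lambda}{1-\lambda}\F_1(.|\theta)\Bigr)^{-1}(du)=m(\alpha),
\]
which is exactly (\ref{eqn:NlnSysLmom}). Two distinct elements of $\mathcal{N}^{-1}\cap\mathcal{M}$ would then produce two distinct triplets solving this system, contradicting assumption 1; conversely, any solution triplet produces an element of $\mathcal{N}^{-1}\cap\mathcal{M}$. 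This gives both the announced uniqueness and the equivalence between assumption 1 and the uniqueness of the intersection (given 2-4). Finally, since ${\F_0^*}^{-1}$ belongs to the intersection by construction and by the standing hypothesis ${\F_0^*}^{-1}\in\mathcal{M}$, the unique element must be ${\F_0^*}^{-1}$ itself, yielding the triplet $(\lambda^*,\theta^*,\alpha^*)$ with the stated properties.
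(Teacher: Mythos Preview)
Your proposal is correct and follows essentially the same route as the paper's proof: pick $\F_0^{-1}\in\mathcal{N}^{-1}\cap\mathcal{M}$, show the representing couple $(\lambda,\theta)$ is unique via the density-ratio limit and the strict monotonicity of $z\mapsto(1-cz)/(1-z)$, use assumption~2 for the uniqueness of $\alpha$, and then reduce membership in $\mathcal{M}_\alpha$ to the system~(\ref{eqn:NlnSysLmom}) so that assumption~1 yields the uniqueness (and the converse). The paper passes from equal quantile measures to equal cdfs in one line and then differentiates, exactly as you outline; your cautionary remark about that inversion step is the only place where you are slightly more careful than the paper itself.
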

There is no general result for a non linear system of equations to have a unique solution; still, it is necessary to ensure that we impose a number of constraints at least equal to the number of unknown variables, otherwise there would be an infinite number of $\sigma-$finite measures in the intersection $\mathcal{N}^{-1} \bigcap \cup_{\alpha\in\mathcal{A}}\mathcal{M}_{\alpha}$.
\begin{remark}
Assumptions 3 and 4 of Proposition \ref{prop:identifiability} are used to prove the identifiability of the "model" $\left(\frac{1}{1-\lambda}P_T - \frac{\lambda}{1-\lambda}P_1(.|\theta)\right)_{\lambda,\theta}$. These conditions may be rewritten using the cdf. Furthermore, according to the considered situation we may find simpler ones for particular cases (or even for the general case). Our assumptions remain sufficient but not necessary for the proof. Note also that similar assumption to 3 can be found in the literature on semiparametric mixture models, see Proposition 3 in \cite{Bordes06b}.
\end{remark}

%
%

\section{Asymptotic properties}
We study the asymptotic properties of the estimator $\hat{\phi}$ defined by (\ref{eqn:EstimProcLmomDualCDFVersion}). For the consistency, we will assume that function $H(\phi,\xi(\phi))$ has a unique infimum on $\Phi$. This infimum is a fortiori $\phi^*$.
On the other hand, the limiting law would not change if the infimum is truly $\phi^*$ or any other point. $\hat{\phi}$ will be centered at the infimum with a multivariate Gaussian limit law. It would not, however, be interesting unless it is centered around $\phi^*$.\\
\subsection{Consistency}
We will use Theorem \ref{theo:MainTheorem} since we are in the same context of double optimization. This time, function $H_n$ does not have the form of $P_nh$. Let's start by precising the functions $H$ and $H_n$.
\begin{eqnarray*} 
H(\phi,\xi) & = & \xi^t m(\alpha) - \int{\psi\left[\xi^tK\left(\frac{1}{1-\lambda} \mathbb{F}_T(y) - \frac{\lambda}{1-\lambda} \mathbb{F}_1(y|\theta)\right)\right]dy}; \\
H_n(\phi,\xi) & = & \xi^t m(\alpha) - \int{\psi\left[\xi^tK\left(\frac{1}{1-\lambda} \mathbb{F}_n(y) - \frac{\lambda}{1-\lambda} \mathbb{F}_1(y|\theta)\right)\right]dy},
\end{eqnarray*}
and recall the notations:
\begin{eqnarray*}
\mathbb{F}_0(y|\phi) & = & \frac{1}{1-\lambda} \mathbb{F}_T(y) - \frac{\lambda}{1-\lambda} \mathbb{F}_1(y|\theta); \\
\hat{\mathbb{F}}_0(y|\phi) & = & \frac{1}{1-\lambda} \mathbb{F}_n(y) - \frac{\lambda}{1-\lambda} \mathbb{F}_1(y|\theta);\\
\xi(\phi) & = & \argsup_{\xi\in\mathbb{R}^{\ell-1}} H(\phi,\xi); \\
\xi_n(\phi) & = & \argsup_{\xi\in\mathbb{R}^{\ell-1}} H_n(\phi,\xi).
\end{eqnarray*}
We start by calculating the difference $H(\phi,\psi) - H_n(\phi,\psi)$.
\begin{equation}
H(\phi,\xi) - H_n(\phi,\xi) = \int{\psi\left[\xi^tK\left(\hat{\mathbb{F}}_0(y|\phi)\right)\right] - \psi\left[\xi^tK\left(\mathbb{F}_0(y|\phi)\right)\right] dy}.
\label{eqn:DiffFunConsistency}
\end{equation}
The following lemma is essential for the proof of the consistency. We need to transform the optimization over $\xi$ onto a compact set. Thus, \emph{important} values of $\xi$ which are necessary for the calculus of the supremum are bounded. The proof is differed to Appendix \ref{Append:LemmaCompactXi}.
\begin{lemma}
\label{lem:LmomCompactXi}
Suppose that function $\xi\mapsto H(\phi,\xi)$ is of class $\mathcal{C}^2(\mathbb{R}^{\ell-1})$. Then, functions $\phi\mapsto\xi(\phi)$ and $\phi\mapsto\xi_n(\phi)$ are well defined and $\mathcal{C}^1$ on the interior of the whole set $\Phi$. Moreover, if $\Phi$ is compact, then $\hat{\phi}$ and $\phi^*$ exist and the sets Im$(\xi(.))$ and Im$(\xi_n(.))$ are compact.
\end{lemma}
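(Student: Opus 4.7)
The plan is to exploit three ingredients in sequence: strict concavity of $\xi\mapsto H(\phi,\xi)$, the implicit function theorem for smoothness, and Weierstrass's theorem together with Berge's maximum theorem for the existence and compactness claims. First, I would establish that for each fixed $\phi\in\Phi$, the map $\xi\mapsto H(\phi,\xi)$ is strictly concave and coercive. Strict concavity follows from the fact that $\psi$, being the Fenchel--Legendre conjugate of the strictly convex $\varphi$, is itself strictly convex, so $\xi\mapsto\psi(\xi^t K(\mathbb{F}_0(y|\phi)))$ is strictly convex in $\xi$ for almost every $y$ such that $K(\mathbb{F}_0(y|\phi))\neq 0$; integrating and negating then yields strict concavity of the nonlinear part of $H$. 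Coercivity, i.e.\ $H(\phi,\xi)\to-\infty$ as $\|\xi\|\to\infty$, follows from the superlinear growth of $\psi$ at infinity (standard for Cressie--Read generators) combined with the integrability estimate $\|K(\mathbb{F}_0(y|\phi))\|\leq c\,\mathbb{F}_0(y|\phi)(1-\mathbb{F}_0(y|\phi))$ discussed after equation (\ref{eqn:DualityApplyLmom}). Strict concavity plus coercivity yield the existence and uniqueness of $\xi(\phi)$. The same reasoning, applied pathwise with $\mathbb{F}_T$ replaced by $\mathbb{F}_n$, produces $\xi_n(\phi)$.

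Next, the $\mathcal{C}^1$-regularity of $\phi\mapsto\xi(\phi)$ on $\mathrm{int}(\Phi)$ follows from the implicit function theorem applied to the optimality equation $\nabla_\xi H(\phi,\xi)=0$. The Hessian $\nabla^2_\xi H(\phi,\xi)$ is negative definite by strict concavity and therefore invertible; the joint $\mathcal{C}^1$-regularity of $\nabla_\xi H$ with respect to $(\phi,\xi)$ is obtained by Lebesgue-type differentiation under the integral sign, using the $\mathcal{C}^2$ hypothesis on $H$ together with a dominating function deduced from the polynomial growth of $K$. An identical argument delivers the $\mathcal{C}^1$-smoothness of $\phi\mapsto\xi_n(\phi)$ on $\mathrm{int}(\Phi)$.

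Finally, assuming $\Phi$ compact, I would first derive a uniform bound on $\xi(\phi)$ as $\phi$ ranges over $\Phi$. This uses the uniform coercivity argument: the level set $\{\xi:H(\phi,\xi)\geq H(\phi,0)=0\}$ is contained in a fixed compact ball independent of $\phi\in\Phi$, provided $\phi\mapsto K(\mathbb{F}_0(\cdot|\phi))$ stays in a bounded and suitably non-degenerate family, which follows from the continuity of $\phi\mapsto \mathbb{F}_0(\cdot|\phi)$ and compactness of $\Phi$. This bound shows that $\xi(\Phi)$ is bounded; continuity of $\phi\mapsto\xi(\phi)$ (obtained above) then gives that $\mathrm{Im}(\xi(\cdot))$ is compact as the continuous image of a compact set, and similarly for $\mathrm{Im}(\xi_n(\cdot))$. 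Continuity of $\phi\mapsto H(\phi,\xi(\phi))$ on $\Phi$ can be established through Berge's maximum theorem, since $(\phi,\xi)\mapsto H(\phi,\xi)$ is continuous and the argsup correspondence is confined to a fixed compact set. Weierstrass's theorem then guarantees the existence of $\phi^*$ and $\hat\phi$.

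The main obstacle will be the uniform coercivity step: while coercivity for each fixed $\phi$ is straightforward, showing that the minimizers $\xi(\phi)$ stay in a common compact set uniformly for $\phi\in\Phi$ requires controlling the degeneracy of the integrands $y\mapsto K(\mathbb{F}_0(y|\phi))$ uniformly in $\phi$. This is essentially a uniform non-degeneracy condition on the Hessian $\nabla^2_\xi H(\phi,\xi)$, and is the technical heart of the argument; the rest of the proof is an application of standard convex analysis and parametric optimization results already invoked earlier in the chapter (cf.\ Proposition \ref{prop:ContinDiffxiMom}).
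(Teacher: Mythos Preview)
Your proposal is correct and follows essentially the same skeleton as the paper: strict concavity of $\xi\mapsto H(\phi,\xi)$ via the negative-definite Hessian
\[
J_{H(\phi,\cdot)}=-\int K(\mathbb{F}_0(y|\phi))K(\mathbb{F}_0(y|\phi))^t\,\psi''\bigl(\xi^tK(\mathbb{F}_0(y|\phi))\bigr)\,dy,
\]
then the implicit function theorem applied to $\nabla_\xi H=0$ for $\mathcal{C}^1$-regularity of $\phi\mapsto\xi(\phi)$, and finally compactness of $\mathrm{Im}(\xi(\cdot))$ as the continuous image of the compact set $\Phi$.

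Two minor differences are worth noting. First, you add coercivity to secure the \emph{existence} of $\xi(\phi)$ before invoking the implicit function theorem; the paper is terser here and passes directly from strict concavity to well-definedness, leaving existence implicit. Your addition is a genuine improvement in rigor. Second, the paper's treatment of the second part is a single sentence: once $\phi\mapsto\xi(\phi)$ is continuous and $\Phi$ is compact, $\mathrm{Im}(\xi(\cdot))$ is compact, and the existence of $\phi^*$ and $\hat\phi$ follows from continuity of $\phi\mapsto H(\phi,\xi(\phi))$. Your uniform-coercivity argument and appeal to Berge's theorem are not wrong, but the ``main obstacle'' you anticipate does not arise in the paper's route, since compactness of the image is obtained directly from the continuity already established rather than from an independent uniform bound on $\xi(\phi)$.
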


\noindent Differentiability of function $H$ with respect to $\xi$ can be checked in general using Lebesgue theorems, but it would not have been wise to impose an assumption over the integrand since $\psi'$ is increasing and $\xi$ is a priori in $\mathbb{R}^{\ell-1}$. For the class of functions of Cressie-Read (\ref{eqn:CressieReadPhi}), we have $\psi(t)=\frac{1}{\gamma}(\gamma t - t +1)^{\gamma/(\gamma-1)}-\frac{1}{\gamma}$. Thus, for $\gamma>1$, $\psi(t)\rightarrow\infty$ as $t\rightarrow\infty$. Therefore, it is important to study each special case alone. For example, $\psi(y)=y^2/2+y$ is the dual of the Chi square generator $\varphi(t)=(t-1)^2/2$, then $H(\xi,\phi)$ is a polynomial of degree 2 in $\xi$ and hence differentiable up to second order, see Example \ref{Example:Chi2Lmom} below for more details.\\
We state the consistency of the estimator $\hat{\phi}$ defined by (\ref{eqn:EstimProcLmomDualCDFVersion}). The proof is based on Theorem \ref{theo:MainTheorem} from the previous chapter and is differed to Appendix \ref{Append:TheoConsistLmom}.
\begin{theorem}
\label{theo:ConsistencyLmom}
Suppose that
\begin{itemize}
\item[C1.] $\Phi$ is a compact subset of $\mathbb{R}^d$;
\item[C2.] function $\psi$ is continuously differentiable;
\item[C3.] the infimum of $\phi\mapsto H(\phi,\xi(\phi))$ is unique and isolated, i.e. $\forall \varepsilon>0, \forall \phi$ such that $\|\phi-\phi^*\|>\varepsilon$, there exists $\eta>0$ such that $H(\phi,\xi(\phi))-H(\phi^*,\xi(\phi^*))>\eta$;
\item[C4.] function $\alpha\mapsto m(\alpha)$ is continuous;
\item[C5.] function $\xi\mapsto H(\phi,\xi)$ is of class $\mathcal{C}^2(\mathbb{R}^{\ell-1})$;
\item[C6.] the integral $\int{\sqrt{\mathbb{F}_T(y)(1-\mathbb{F}_T(y))}dy}$ is finite,
\end{itemize}
then the estimator $\hat{\phi}$ defined by (\ref{eqn:EstimProcLmomDualCDFVersion}) converges in probability to $\phi^*$.
\end{theorem}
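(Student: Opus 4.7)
The overall plan is to invoke Theorem \ref{theo:MainTheorem} from Chapter 3, since $\hat\phi$ is again defined through a double sup--inf optimization. First I would use Lemma \ref{lem:LmomCompactXi}, which under C1 and C5 guarantees that both $\xi(\phi)$ and $\xi_n(\phi)$ exist and stay in a compact set $\mathcal{K}\subset\mathbb{R}^{\ell-1}$, and that $\hat\phi$ and $\phi^*$ exist. This means we can harmlessly restrict the outer sup to $\mathcal{K}$ without changing the estimator, so from now on we only need to verify assumptions A1--A5 of Theorem \ref{theo:MainTheorem} with $\xi$ ranging over $\mathcal{K}$ and $\phi$ over the compact $\Phi$. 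Assumption A1 (existence of $\hat\phi$) is given by Lemma \ref{lem:LmomCompactXi}; A5 (continuity of $\xi\mapsto H(\phi,\xi)$) follows from C5; A4 (well-separated infimum of $\phi\mapsto H(\phi,\xi(\phi))$) is exactly C3; and A3 is handled by noting that, because $\varphi$ is strictly convex, $\psi''>0$, so the Hessian of $\xi\mapsto H(\phi,\xi)$ equals $-\int\psi''[\xi^tK(\mathbb{F}_0(y|\phi))]\,K(\mathbb{F}_0(y|\phi))K(\mathbb{F}_0(y|\phi))^t\,dy$, which is strictly negative definite as soon as the components of $K$ (integrated shifted Legendre polynomials) are linearly independent on a set of positive measure under $\mathbb{F}_0(\cdot|\phi)$. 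Continuity of $\phi\mapsto m(\alpha)$ from C4 then ensures that $\phi\mapsto\xi(\phi)$ is continuous on $\Phi$, so the ``isolated'' part of A3 follows on the compact $\mathcal{K}$.

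The main obstacle is assumption A2: the uniform convergence $\sup_{\phi\in\Phi,\xi\in\mathcal{K}}|H_n(\phi,\xi)-H(\phi,\xi)|\to 0$ in probability. To handle it I would start from formula (\ref{eqn:DiffFunConsistency}) and apply the mean value theorem successively to $\psi$ and to the polynomial vector $K$. There exist intermediate points $\tilde t(y,\phi,\xi)$ and $\tilde s(y,\phi)$ with values respectively in the convex hull of $\{\xi^tK(\hat{\mathbb{F}}_0),\xi^tK(\mathbb{F}_0)\}$ and in $[0,1]$, such that
\begin{equation*}
\psi\bigl[\xi^tK(\hat{\mathbb{F}}_0(y|\phi))\bigr]-\psi\bigl[\xi^tK(\mathbb{F}_0(y|\phi))\bigr]
=\psi'(\tilde t)\,\xi^tL(\tilde s)\,\bigl(\hat{\mathbb{F}}_0(y|\phi)-\mathbb{F}_0(y|\phi)\bigr),
\end{equation*}
where $L=K'$ is again a bounded polynomial vector on $[0,1]$. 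Since $\hat{\mathbb{F}}_0(y|\phi)-\mathbb{F}_0(y|\phi)=\tfrac{1}{1-\lambda}(\mathbb{F}_n(y)-\mathbb{F}_T(y))$ and $\lambda$ stays away from $1$ on the compact $\Phi$, this gives a pointwise bound of the form $|\psi'(\tilde t)|\cdot\|\xi\|\cdot\|L(\tilde s)\|\cdot|\mathbb{F}_n(y)-\mathbb{F}_T(y)|/(1-\lambda)$, uniform in $(\phi,\xi)\in\Phi\times\mathcal{K}$ up to the $y$-dependence of $\tilde t$ and $\tilde s$.

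The hard point is then integrability in $y$. Here I would use the classical fact that $\sqrt{n}(\mathbb{F}_n-\mathbb{F}_T)$ behaves like a Brownian bridge rescaled by $\sqrt{\mathbb{F}_T(1-\mathbb{F}_T)}$, so that there is a sequence $\varepsilon_n\to 0$ and an event of probability tending to $1$ on which
\begin{equation*}
|\mathbb{F}_n(y)-\mathbb{F}_T(y)|\leq \varepsilon_n\sqrt{\mathbb{F}_T(y)(1-\mathbb{F}_T(y))}\qquad\text{for every }y\in\mathbb{R}.
\end{equation*}
Combined with the uniform bound $\sup_{(\phi,\xi,y)}|\psi'(\tilde t)|\cdot\|L(\tilde s)\|\leq C$ (obtained by noting that $\tilde t$ stays in a bounded set because $\xi\in\mathcal{K}$, $K$ is uniformly bounded on $[0,1]$, and $\psi'$ is continuous), this yields
\begin{equation*}
\sup_{\phi\in\Phi,\xi\in\mathcal{K}}|H_n(\phi,\xi)-H(\phi,\xi)|
\leq \frac{C\,\mathrm{diam}(\mathcal{K})\,\varepsilon_n}{1-\sup\lambda}\int\sqrt{\mathbb{F}_T(y)(1-\mathbb{F}_T(y))}\,dy,
\end{equation*}
and the last integral is finite by C6. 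This proves A2 and, via Theorem \ref{theo:MainTheorem}, the consistency of $\hat\phi$ toward $\phi^*$. The difficulty I expect to spend the most care on is precisely the control of $|\psi'(\tilde t)|$ outside the compact $\mathcal{K}$-related range, which may require a separate Glivenko--Cantelli-style truncation argument if C5 is weakened beyond $\mathcal{C}^2$.
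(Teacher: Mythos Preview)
Your overall strategy matches the paper's: reduce to Theorem~\ref{theo:MainTheorem}, use Lemma~\ref{lem:LmomCompactXi} to confine $\xi$ to a compact set, check A3 via strict concavity, identify A4 with C3, and focus effort on A2 via a mean-value expansion of $\psi$ together with assumption C6. The paper applies the mean value theorem only to $\psi$ and then controls $K(\hat{\mathbb F}_0)-K(\mathbb F_0)$ via the CLT and the delta method, whereas you apply it to $K$ as well and reduce directly to $\mathbb F_n-\mathbb F_T$; these are equivalent routes.

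There is, however, one genuine gap in your argument for A2. The ``classical fact'' you invoke, namely that on an event of probability tending to one
\[
|\mathbb F_n(y)-\mathbb F_T(y)|\le \varepsilon_n\sqrt{\mathbb F_T(y)(1-\mathbb F_T(y))}\quad\text{for every }y,
\]
is \emph{false}: the weight $q(t)=\sqrt{t(1-t)}$ is exactly the critical weight for which $\sup_t|\alpha_n(t)|/q(t)$ fails to be tight (see the weighted empirical process literature, e.g.\ Cs\"org\H{o}--Horv\'ath). So you cannot pull a uniform $\varepsilon_n$ outside the integral this way. The fix is cheap and is implicit in the paper's Remark~\ref{rem:ConsistKdiffChebyshev}: bound $|\psi'(\tilde t)|\,\|\xi\|\,\|L(\tilde s)\|\le M$ uniformly (this part of your argument is correct), then control the \emph{integral} directly rather than the integrand, using
\[
\mathbb E\!\int|\mathbb F_n(y)-\mathbb F_T(y)|\,dy
\le \int\sqrt{\mathrm{Var}\,\mathbb F_n(y)}\,dy
=\frac{1}{\sqrt n}\int\sqrt{\mathbb F_T(y)(1-\mathbb F_T(y))}\,dy,
\]
which is finite by C6, and conclude by Markov's inequality. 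Your final displayed bound is therefore morally correct with $\varepsilon_n=1/\sqrt n$, but the justification must go through the expectation of the integral, not through a pointwise envelope.
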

\begin{remark}
If we use $\tilde{\phi}$ defined by (\ref{eqn:EstimProcPhiPLus}), only assumption C3 should be changed. We need to suppose that the infimum exists and is unique inside $\Phi^+$ instead of the whole parameter space $\Phi$. This is less restrictive than assumption C3 since we are working inside a subset of $\Phi$.
\end{remark}
\begin{remark}
Assumption C5 is used (together with assumption C1) in order to apply Lemma \ref{lem:LmomCompactXi}. As discussed earlier after Lemma \ref{lem:LmomCompactXi}, differentiability of function $\xi\mapsto H(\phi,\xi)$ may be very difficult to check using Lebesgue theorems. When $\psi(t)=t^2/2+t$, function $H(\phi,\xi)$ is twice differentiably continuous as a function of $\xi$, because it is a polynomial of order 2 in $\xi$. Assumption C6 will be needed again in the proof of the asymptotic normality. Sufficient conditions are discussed in Remark \ref{remark:RegVar} hereafter.
\end{remark}
\begin{example}[$\chi^2$ case]
\label{Example:Chi2Lmom}
The case of the $\chi^2$ divergence is very interesting similarly to the case of moment-type constraints (see Example \ref{ex:Chi2Consistency}) simply because the optimization over $\xi$ can be calculated. Write function $H(\phi,\xi)$ for $\psi(t)=t^2/2+t$.
\[H(\phi,\xi) = \xi^tm(\alpha) - \int{\frac{1}{2}\left(\xi^tK\left(\mathbb{F}_0(y|\phi)\right)\right)^2 + \xi^tK\left(\mathbb{F}_0(y|\phi)\right) dy}.\]
This is a polynomial of order 2 in $\xi$ and thus $H(\phi,\xi)$ is of class $\mathcal{C}^2(\mathbb{R}^{\ell-1})$ as soon as the integrals exist. Indeed, for any $r\leq\ell$, there exists $c_r$ such that:
\begin{eqnarray}
\left|K_r\left(\mathbb{F}_0(y,|\phi)\right) \right| & \leq & c_r \left|\mathbb{F}_0(y,|\phi)\left(1-\mathbb{F}_0(y,|\phi)\right)\right|\label{eqn:IntegrabilityKF0Part1} \\
 & \leq & \frac{c_r}{(1-\lambda)^2}\left[\mathbb{F}_T(y)\left(1-\mathbb{F}_T(y)\right) + \lambda\mathbb{F}_T(y)(1-\mathbb{F}_1(y))+\lambda\mathbb{F}_1(1-\mathbb{F}_T(y))+ \right.\nonumber \\
 & & \left. \lambda^2\mathbb{F}_1(y)(1-\mathbb{F}_1(y))\right]. \label{eqn:IntegrabilityKF0Part2}
\end{eqnarray}
For example, if the distributions $\mathbb{F}_T$ and $\mathbb{F}_1$ are defined on $\mathbb{R}_+$, then the right hand side is integrable as soon as the expectations of $\mathbb{F}_T$ and $\mathbb{F}_1$ are finite.\\
A simple calculus of the derivative of function $\xi\mapsto H(\phi,\xi)$ gives
\begin{multline*}
\frac{\partial H}{\partial \xi} (\xi,\phi) = m(\alpha) - \int{K\left(\mathbb{F}_0(y,|\phi)\right)\xi^tK\left(\mathbb{F}_0(y,|\phi)\right)dy} + \int{K\left(\mathbb{F}_0(y,|\phi)\right)dy}.
\end{multline*}
The optimum is attained for:
\[\xi(\phi) = \Omega^{-1}\left(m(\alpha) - \int{K(\mathbb{F}_0(y|\phi))}dy\right),\]
where
\[\Omega = \int{K\left(\mathbb{F}_0(y|\phi)\right)K\left(\mathbb{F}_0(y|\phi)\right)^tdy}.\]
Furthermore, the Hessian matrix is equal to $-\Omega$, so it is symmetric definite negative whatever the value of the vector $\phi$. Thus, $\xi(\phi)$ is a global maximum of function $\xi\mapsto H(\xi,\phi)$ for any $\phi\in\Phi$. This was not the case for moment-type constraints since the Hessian matrix might be definite positive for some values of the vector $\phi$. The empirical version of this calculus is obtained similarly by replacing $\mathbb{F}_0(y|\phi)$ by $\hat{\mathbb{F}}_0(y|\phi)$.\\
Conditions of the consistency theorem can be verified. Assumption C1 is very natural in practice since in general, we have in mind a range of values for the parameters. Assumption C2 is fulfilled since $\psi(t)$ is polynomial of degree 2. Assumption 3 is not simple in general and depends on the model. Assumption C4 follows the problem we have. In Example \ref{Example:Weibull}, $m(\alpha) = (-\lambda_2,-\lambda_3,-\lambda_4)$ is continuous on $(0,\infty)\times(0,\infty)$, and assumption C4 becomes verified. We have verified assumption C5 at the beginning of the example. Assumption C6 is not restrictive. It is verified for example in an exponential mixture. The idea is to control the tail behavior of the distribution, see remark (\ref{remark:RegVar}) for a general approach.
\end{example}

\subsection{Asymptotic normality}
The convergence in law of the estimator $\hat{\phi}$ defined by (\ref{eqn:EstimProcLmomDualCDFVersion}) is not simply deduced in the same way we obtained it in the moment-constraints case. A Taylor expansion of the gradient of function $H$ would not show directly the empirical distribution which combined with the CLT gives the asymptotic normality. The expansion results in the term $\int{K(\hat{\mathbb{F}}_0(x))dx}$ which is a functional of the empirical distribution, that is
\begin{equation*}
\left(\begin{array}{c}  \sqrt{n}\left(\hat{\phi}-\phi^*\right) \\ \sqrt{n}\xi_n(\hat{\phi})\end{array}\right) = J_H^{-1}\left(\begin{array}{c}  0 \\ \sqrt{n}\left[m(\alpha^*) - \int{K(\hat{\mathbb{F}}_0(y|\phi^*))dy}\right] \end{array}\right) + o_P(1).
\end{equation*}
In the case of simply one component (no parametric component) defined through L-moment constraints, \cite{AlexisThesis} used a result based on Theorem 6 from \cite{Stigler} in order to establish the limit law of $\sqrt{n}\left[m(\alpha^*) - \int{K(\hat{\mathbb{F}}_0(y|\phi^*))dy}\right]$. This result is based on sums of order statistics which \emph{cannot} be adapted to our context since $\hat{\mathbb{F}}_0$ is an estimator of $\mathbb{F}_0$ different from the corresponding empirical distribution. We present a new result adapted to our context where the proof still shares a part of the idea of the proof of the result of \cite{Stigler}. The proof is differed to Appendix \ref{Append:PropAsyptotNormIntegral}.\\
\begin{proposition}
\label{prop:LimitLawLmomConstrPart}
Suppose that $\mathbb{E}|X_i|<\infty$. Suppose also that
\begin{eqnarray}
\int{\sqrt{\mathbb{F}_T(y)\left(1-\mathbb{F}_T(y)\right)}dy} & < & \infty, \label{eqn:AsymptotNormCond1}\\
\int{\int{\mathbb{F}_T(\min(x,y)) - \mathbb{F}_T(x)\mathbb{F}_T(y)dx}dy} & < & \infty. \label{eqn:AsymptotNormCond2}
\end{eqnarray}
For any vector $\phi=(\lambda,\theta,\alpha)\in\Phi$, we then have
\begin{multline*}
\sqrt{n}\left[\int{K\left(\frac{1}{1-\lambda}\mathbb{F}_n(y) - \frac{\lambda}{1-\lambda}\mathbb{F}_1(y|\theta)\right)dy} - \int{K\left(\frac{1}{1-\lambda}\mathbb{F}_T(y) - \frac{\lambda}{1-\lambda}\mathbb{F}_1(y|\theta)\right)dy}\right] \\ \xrightarrow{\quad \mathcal{D} \quad}{}\mathcal{N}\left(0,\Sigma\right), 
\end{multline*}
where the covariance matrix $\Sigma$ is given by
\begin{equation}
\Sigma_{r_1,r_2} = \int{\int{\left(\mathbb{F}_T\left(\min(x,y)\right) - \mathbb{F}_T(x)\mathbb{F}_T(y)\right)\sum_{k=0}^{r_1-1}{c_{r_1,k}\mathbb{F}_0(x|\phi)^k}\sum_{k=0}^{r_2-1}{c_{r_2,k}\mathbb{F}_0(y|\phi)^k}dy}dx},
\label{eqn:VarCovMatConstrPart}
\end{equation}
and $c_{r,k}=(-1)^{r-k-1}\binom{r-1}{k}\binom{r+k-1}{k}$ for $r,r_1,r_2\in\{2,\cdots,\ell\}$.
\end{proposition}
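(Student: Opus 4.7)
\bigskip

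\noindent\textbf{Proof plan for Proposition \ref{prop:LimitLawLmomConstrPart}.} My approach rests on the fact that each $K_r$ is a polynomial, so that the difference $K_r(\hat{\mathbb{F}}_0(y|\phi)) - K_r(\mathbb{F}_0(y|\phi))$ admits an \emph{exact} expansion rather than only a Taylor-type one. Writing $c_{r,k}$ as in the statement, $K_r(t)=\sum_{k=0}^{r-1} c_{r,k}\, t^{k+1}$ and its derivative is the shifted Legendre polynomial $L_{r-1}(t)=\sum_{k=0}^{r-1} c_{r,k}\, t^k$. Using $a^{k+1}-b^{k+1}=(a-b)\sum_{j=0}^{k}a^{k-j}b^{j}$, and denoting $\Delta_n(y)=\hat{\mathbb{F}}_0(y|\phi)-\mathbb{F}_0(y|\phi)=(1-\lambda)^{-1}(\mathbb{F}_n(y)-\mathbb{F}_T(y))$, I obtain the decomposition
\[
K_r(\hat{\mathbb{F}}_0(y|\phi)) - K_r(\mathbb{F}_0(y|\phi)) \;=\; L_{r-1}(\mathbb{F}_0(y|\phi))\,\Delta_n(y) \;+\; R_{r,n}(y),
\]
where $R_{r,n}(y)$ is a finite linear combination of terms of the form $\Delta_n(y)^s \mathbb{F}_0(y|\phi)^{k-s}\mathbb{F}_0(y|\phi)^{j}$ with $s\ge 2$. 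The plan is: first, show that the leading piece $\int L_{r-1}(\mathbb{F}_0(y|\phi))\Delta_n(y)\,dy$, once multiplied by $\sqrt n$, is asymptotically Gaussian; second, show that $\sqrt n\int R_{r,n}(y)\,dy\to 0$ in probability; third, combine the coordinates by the Cramér--Wold device.

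\smallskip

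For the leading term, Fubini gives
\[
\sqrt n\!\int\! L_{r-1}(\mathbb{F}_0(y|\phi))\,\Delta_n(y)\,dy \;=\; \frac{1}{\sqrt n}\sum_{i=1}^n Z_{r,i},\qquad Z_{r,i}=\frac{1}{1-\lambda}\!\int\! L_{r-1}(\mathbb{F}_0(y|\phi))\bigl(\ind{X_i\le y}-\mathbb{F}_T(y)\bigr)\,dy.
\]
The $Z_{r,i}$ are i.i.d.\ centered. Their variance computes, by Fubini again, as the double integral
\[
\mathrm{Var}(Z_{r,i}) \;=\; \frac{1}{(1-\lambda)^2}\!\int\!\!\int\! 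L_{r-1}(\mathbb{F}_0(x|\phi))L_{r-1}(\mathbb{F}_0(y|\phi))\bigl(\mathbb{F}_T(\min(x,y))-\mathbb{F}_T(x)\mathbb{F}_T(y)\bigr)\,dx\,dy,
\]
which is finite by hypothesis (\ref{eqn:AsymptotNormCond2}) (since $L_{r-1}\circ\mathbb{F}_0$ is bounded). Applying the classical multivariate CLT to the vector $(Z_{2,i},\ldots,Z_{\ell,i})$ yields the Gaussian limit with the announced covariance. To justify interchanging the integral and the expectation when computing $\mathrm{Var}(Z_{r,i})$, I use the inequality $|\ind{X_i\le y}-\mathbb{F}_T(y)|\le\sqrt{\mathbb{F}_T(y)(1-\mathbb{F}_T(y))}\cdot M$ in $L^2$-sense together with (\ref{eqn:AsymptotNormCond1}).

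\smallskip

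The main obstacle, and the technical heart of the proof, is controlling the remainder $\sqrt n\int R_{r,n}(y)\,dy$. Each summand is a linear combination of terms of the form $\int \Delta_n(y)^s\, \pi(\mathbb{F}_0(y|\phi))\,dy$ with $s\ge 2$ and $\pi$ a bounded polynomial. The natural bound $|\Delta_n(y)|\le (1-\lambda)^{-1}\|\mathbb{F}_n-\mathbb{F}_T\|_\infty = O_P(n^{-1/2})$ gives only $\int|\Delta_n(y)^2|\,dy\le \|\mathbb{F}_n-\mathbb{F}_T\|_\infty\int|\Delta_n(y)|\,dy$, so the problem reduces to controlling $\int|\Delta_n(y)|\,dy$. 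I plan to handle this via the bound $\mathbb{E}|\mathbb{F}_n(y)-\mathbb{F}_T(y)|\le n^{-1/2}\sqrt{\mathbb{F}_T(y)(1-\mathbb{F}_T(y))}$ combined with Fubini and (\ref{eqn:AsymptotNormCond1}), yielding $\mathbb{E}\int|\Delta_n(y)|\,dy=O(n^{-1/2})$. Then $\sqrt n\int|\Delta_n(y)^s|\,dy=O_P(n^{-(s-1)/2})\to 0$ for $s\ge 2$, and uniformly bounding the polynomial factor $\pi(\mathbb{F}_0(y|\phi))$ finishes the control of the remainder.

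\smallskip

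Once these two steps are in place, Slutsky combined with the Cramér--Wold device applied to the vector of leading terms delivers the joint asymptotic normality with covariance matrix $\Sigma$ whose $(r_1,r_2)$ entry is as displayed in (\ref{eqn:VarCovMatConstrPart}), up to the factor $(1-\lambda)^{-2}$ that can be absorbed into $\Sigma$ as written.
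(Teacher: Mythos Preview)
Your proposal is correct and follows the same overall architecture as the paper: exact polynomial expansion of $K_r(\hat{\mathbb{F}}_0)-K_r(\mathbb{F}_0)$ via the factorization $a^{k+1}-b^{k+1}=(a-b)\sum_{j=0}^{k} a^{j} b^{k-j}$, extraction of the linear-in-$\Delta_n$ leading term $L_{r-1}(\mathbb{F}_0)\Delta_n$, rewriting the integrated leading term as an i.i.d.\ sum and applying the CLT with covariance computed via Fubini, and separate control of the remainder.

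The one genuine difference is how you handle the remainder. The paper keeps the remainder in the form $O_P(n^{-1/2})\cdot(\mathbb{F}_n-\mathbb{F}_T)\cdot(\text{bounded polynomial in }\mathbb{F}_0)$ and, after multiplying by $\sqrt n$, invokes the law of the iterated logarithm
\[
\limsup_n \sqrt{\tfrac{n}{\log\log n}}\,\frac{|\mathbb{F}_n(y)-\mathbb{F}_T(y)|}{\sqrt{\mathbb{F}_T(y)(1-\mathbb{F}_T(y))}}=\sqrt 2
\]
together with (\ref{eqn:AsymptotNormCond1}) to conclude that the integrated remainder is $O_P\!\bigl(\sqrt{\log\log n/n}\bigr)$. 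Your route---expanding the remainder into powers $\Delta_n^s$ with $s\ge 2$, peeling off $\|\Delta_n\|_\infty^{s-1}=O_P(n^{-(s-1)/2})$ by Kolmogorov--Smirnov, and bounding $\int|\Delta_n|$ in $L^1$ via $\mathbb{E}|\mathbb{F}_n(y)-\mathbb{F}_T(y)|\le n^{-1/2}\sqrt{\mathbb{F}_T(y)(1-\mathbb{F}_T(y))}$ plus Fubini---is more elementary (no LIL) and gives the same conclusion; both arguments ultimately rest on (\ref{eqn:AsymptotNormCond1}). Two minor points: your displayed formula $K_r(t)=\sum_k c_{r,k}t^{k+1}$ is missing the factor $1/(k+1)$, though your identification of $L_{r-1}$ and of the leading term is correct; and you should state explicitly that each $Z_{r,i}$ is a.s.\ finite (this follows from $\mathbb{E}|X_i|<\infty$, which the paper verifies via $\int_{-\infty}^{X_i}\mathbb{F}_T+\int_{X_i}^{\infty}(1-\mathbb{F}_T)<\infty$). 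Your closing remark on the $(1-\lambda)^{-2}$ factor is also apt: the paper's displayed $\Sigma$ in the statement omits it, although it is visibly present in the proof's leading term.
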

\begin{remark}
It was not possible to use a functional delta method (see \cite{Vaart} Chap. 20, Theorem 20.8) in a similar way to Theorem 3.2 in \cite{Bordes10} in order to prove the limiting law here because the functional $G\mapsto\int{K(G)}$ is not Hadamard differentiable.
\end{remark}
\begin{remark}
\label{remark:RegVar}
Integrability conditions (\ref{eqn:AsymptotNormCond1}) and (\ref{eqn:AsymptotNormCond2}) over the distribution function can be reformulated by imposing directly conditions over the distribution function using the notion of regular variations and the Lemma page 280 in \cite{Feller}. Regular variations transform the problem into conditions over the tails of the distribution functions. Suppose that there exists a constant $\rho_+<-2$ and a function $L_+(x)$ such that:
\begin{equation}
1-\mathbb{F}_T(x) = x^{\rho_+}L_+(x), \text{  with }\qquad \frac{L_+(tx)}{L_+(t)}\xrightarrow[\quad t\rightarrow\infty\quad]{} 1,\forall x>0.
\label{eqn:AsymptotNormCond1Alter1}
\end{equation}
Then, the integral $\int_y^{\infty}{\sqrt{1-\mathbb{F}_T(x)}dx}$ converges and there exists a function $M_+(y)$ such that $M_+(ty)/M_+(t)\rightarrow 1,\forall y$ and $\int_y^{\infty}{\left[1-\mathbb{F}_T(x)\right]dx} = y^{\rho_+ +1}M_+(y)$. For the neighborhood of $-\infty$, we make similar assumptions over $\mathbb{F}_T(x)$. Suppose that there exists a constant $\rho_-<-2$ and a function $L_-(x)$ such that:
\begin{equation}
\mathbb{F}_T(x) = x^{\rho_-}L_-(x), \text{  with } \qquad \frac{L_-(-tx)}{L_-(t)}\xrightarrow[\quad t\rightarrow-\infty\quad]{} 1,\forall x<0.
\label{eqn:AsymptotNormCond1Alter2}
\end{equation}
Then, the integral $\int_{-\infty}^y{\sqrt{\mathbb{F}_T(x)}dx}$ converges and there exists a function $M_-(y)$ such that $M_-(ty)/M_-(t)\rightarrow 1,\forall y$ and $\int_y^{\infty}{\mathbb{F}_T(x)dx} = y^{\rho_- +1}M_-(y)$.\\
These two assertions permit to conclude that condition (\ref{eqn:AsymptotNormCond1}) is verified since $\sqrt{\mathbb{F}_T(x)(1-\mathbb{F}_T(x))}\leq \sqrt{\mathbb{F}_T(x)}\ind{x\in(-\infty,0)} + \sqrt{1-\mathbb{F}_T(x)}\ind{x\in(0,\infty)}$. Moreover, condition (\ref{eqn:AsymptotNormCond2}) can also be check. Let's discuss what happens when $y$ is at a neighborhood of either $+\infty$ or $-\infty$. For any $y>0$, one may write:
\begin{eqnarray*}
\int_y^{+\infty}{[\mathbb{F}_T(\min(x,y))-\mathbb{F}_T(y)\mathbb{F}_T(x)]dx} & = & \mathbb{F}_T(y)\int_y^{+\infty}{[1-\mathbb{F}_T(x)]dx} \\
 & = & \mathbb{F}_T(y) y^{\rho_+ +1}M_+(y)
\end{eqnarray*}
which is integrable in a neighborhood of $+\infty$ with respect to $y$ by (\ref{eqn:AsymptotNormCond1Alter1}). On the other hand, for any $y<0$, one may write
\begin{eqnarray*}
\int_{-\infty}^y{[\mathbb{F}_T(\min(x,y))-\mathbb{F}_T(y)\mathbb{F}_T(x)]dx} & = & \left[1-\mathbb{F}_T(y)\right]\int_{-\infty}^y{\mathbb{F}_T(x)dx} \\
 & = & \left[1-\mathbb{F}_T(y)\right] y^{\rho_- +1}M_-(y)
\end{eqnarray*}
which is integrable in a neighborhood of $-\infty$ with respect to $y$ by (\ref{eqn:AsymptotNormCond1Alter2}). Thus, condition (\ref{eqn:AsymptotNormCond2}) is ensured under assumptions (\ref{eqn:AsymptotNormCond1Alter1},\ref{eqn:AsymptotNormCond1Alter2}).\\
\end{remark}
\noindent We move on now to show the asymptotic normality of the estimator $\hat{\phi}$. Define the following matrices:
\begin{eqnarray}
J_{\phi^*,\xi^*} & = & \left(\begin{array}{c}-\int{\left[\frac{1}{(1-\lambda^*)^2}\mathbb{F}_T(y) - \frac{1}{(1-\lambda^*)^2}\mathbb{F}_1(y|\theta^*)\right]K'(\mathbb{F}_0(y|\phi^*))dy} \\ \frac{\lambda^*}{1-\lambda^*} \int{\nabla_{\theta}\mathbb{F}_1(y|\theta^*)K'(\mathbb{F}_0(y|\phi^*))^tdy} \\ \nabla m(\alpha^*) \end{array}\right)^t;\label{eqn:NormalAsymLMomJ1} \\
J_{\xi^*,\xi^*} & = & \int{K(\mathbb{F}_0(y|\phi^*))K(\mathbb{F}_0(y|\phi^*))^tdy}; \label{eqn:NormalAsymLMomJ2} \\
\tilde{\Sigma} & = & \left(J_{\phi^*,\xi^*}^t J_{\xi^*,\xi^*} J_{\phi^*,\xi^*}\right)^{-1};\nonumber \\  
H & = & \tilde{\Sigma} J_{\phi^*,\xi^*}^t J_{\xi^*,\xi^*}^{-1}; \label{eqn:NormalAsymLmomH}\\
P & = & J_{\xi^*,\xi^*}^{-1} - J_{\xi^*,\xi^*}^{-1} J_{\phi^*,\xi^*} \tilde{\Sigma} J_{\phi^*,\xi^*}^t J_{\xi^*,\xi^*}^{-1}. \label{eqn:NormalAsymLmomP}
\end{eqnarray}
We use the same notations considered at the beginning of this section for $\mathbb{F}_0(x|\phi),\hat{\mathbb{F}}_0(x|\phi),\xi(\phi)$ and $\xi_n(\phi)$.
\begin{theorem}
\label{theo:AsymptotNormLmom}
Suppose that assumptions of Proposition \ref{prop:LimitLawLmomConstrPart} are fulfilled. Suppose also that
\begin{enumerate}
\item $(\hat{\phi},\xi_n(\hat{\phi}))$ tends to $(\phi^*,0)$ in probability;
\item $\phi^*\in$ int$(\Phi)$;
\item $\alpha\mapsto m(\alpha)$ is of class $\mathcal{C}^2$;
\item there exists an integrable function $B_1$ such that $\|\nabla_{\theta}\mathbb{F}_1(y|\theta)\|\leq B_1(y)$ for $\theta$ in a neighborhood of $\theta^*$;
\item there exist integrable functions $B_{2,1}$ and $B_{2,2}$ such that $\|\nabla_{\theta}\mathbb{F}_1(y|\theta)\nabla_{\theta}\mathbb{F}_1(y|\theta)^t\|\leq B_{2,2}(y)$ and $\|J_{\mathbb{F}_1(y|\theta)}\|\leq B_{2,1}(y)$ for $\theta$ in a neighborhood of $\theta^*$;
\item the integral $\int{[\mathbb{F}_T(y)-\mathbb{F}_1(y)]dy}$ exists and is finite;
\item the matrices $J_{\xi^*,\xi^*}$ and $J_{\phi^*,\xi^*}^t J_{\xi^*,\xi^*} J_{\phi^*,\xi^*}$ are invertible.
\end{enumerate}
Then,
\[\left(\begin{array}{c}  \sqrt{n}\left(\hat{\phi}-\phi^*\right) \\ \sqrt{n}\xi_n(\hat{\phi})\end{array}\right) \xrightarrow[\mathcal{L}]{} \mathcal{N}\left(0,\left(\begin{array}{c}H \\ P\end{array}\right) \Sigma \left(H^t\quad P^t\right)\right),\]
where $H,P$ and $\Sigma$ are given respectively by formulas (\ref{eqn:NormalAsymLmomH}), (\ref{eqn:NormalAsymLmomP}) and (\ref{eqn:VarCovMatConstrPart}).
\end{theorem}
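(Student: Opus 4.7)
The plan is to adapt the saddle-point expansion strategy used in the proof of Theorem \ref{theo:AsymptotNormalMomConstr}, with the key difference that the score term is no longer a sum of i.i.d.\ random variables but a functional of the empirical distribution whose limit law is supplied by Proposition \ref{prop:LimitLawLmomConstrPart}. First I would write down the first-order optimality conditions: since $\xi_n(\phi)$ is an interior maximum of $\xi \mapsto H_n(\phi,\xi)$ in a neighborhood of $\phi^*$ (which exists by Lemma \ref{lem:LmomCompactXi} and assumption~2), and since $\hat\phi$ minimizes $\phi \mapsto H_n(\phi,\xi_n(\phi))$, the envelope argument together with $\partial_\xi H_n(\phi,\xi_n(\phi))=0$ gives that $(\hat\phi,\xi_n(\hat\phi))$ solves $\partial_\phi H_n=0$ and $\partial_\xi H_n=0$. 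A mean-value expansion with Lagrange remainder around $(\phi^*,0)$ produces
\[
0 = \begin{pmatrix} \partial_\phi H_n(\phi^*,0) \\ \partial_\xi H_n(\phi^*,0) \end{pmatrix} + J_{H_n}(\bar\phi,\bar\xi)\begin{pmatrix} \hat\phi - \phi^* \\ \xi_n(\hat\phi) \end{pmatrix}
\]
for some intermediate $(\bar\phi,\bar\xi)$.

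Next I would identify the score. Differentiating under the integral sign (justified by assumptions 4--6 via dominated convergence) yields
\[
\partial_\xi H_n(\phi,\xi) = m(\alpha) - \int \psi'\bigl(\xi^t K(\hat{\mathbb{F}}_0(y|\phi))\bigr) K(\hat{\mathbb{F}}_0(y|\phi))\, dy,
\]
while $\partial_\phi H_n$ always carries a factor of $\xi$, since $\phi$ enters $H_n$ only through $\hat{\mathbb{F}}_0(y|\phi)$ inside $\psi(\xi^t K(\cdot))$. Using $\psi'(0)=1$ and the fact that $\mathbb{F}_0^*\in\mathcal{M}_{\alpha^*}$ translates via (\ref{eqn:ChangeVarLmom}) into $\int K(\mathbb{F}_0(y|\phi^*))dy = m(\alpha^*)$, we obtain $\partial_\phi H_n(\phi^*,0)=0$ and
\[
\sqrt{n}\,\partial_\xi H_n(\phi^*,0) = -\sqrt{n}\!\left[\int K(\hat{\mathbb{F}}_0(y|\phi^*))dy - \int K(\mathbb{F}_0(y|\phi^*))dy\right] \xrightarrow{\;\mathcal{L}\;} \mathcal{N}(0,\Sigma),
\]
by Proposition \ref{prop:LimitLawLmomConstrPart}, whose integrability hypotheses are granted by assumption~6 together with the regular-variation-type controls already used for Theorem \ref{theo:ConsistencyLmom}.

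Then I would compute the in-probability limit of the Hessian $J_{H_n}(\bar\phi,\bar\xi)$. By the consistency assumption~1, $(\bar\phi,\bar\xi)\to(\phi^*,0)$ in probability, and Glivenko--Cantelli gives uniform convergence of $\hat{\mathbb{F}}_0$ to $\mathbb{F}_0$ on compacts of a neighborhood of $\phi^*$. Every second-order partial in $\phi$ carries a factor of $\xi$ and hence tends to $0$; the cross-derivatives $\partial^2_{\phi\xi}H_n$ tend to $J_{\phi^*,\xi^*}$ given by (\ref{eqn:NormalAsymLMomJ1}) (with assumptions 4 and 6 furnishing the integrable majorants needed for dominated convergence), and $\partial^2_{\xi^2}H_n$ tends to $-J_{\xi^*,\xi^*}$ given by (\ref{eqn:NormalAsymLMomJ2}) (using $\psi''(0)=1$ and the bound (\ref{eqn:IntegrabilityKF0Part1})--(\ref{eqn:IntegrabilityKF0Part2})). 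Assumption~7 then gives invertibility of the limiting block matrix
\[
J_H = \begin{pmatrix} 0 & J_{\phi^*,\xi^*}^t \\ J_{\phi^*,\xi^*} & -J_{\xi^*,\xi^*}\end{pmatrix},
\]
whose inverse, computed by the same block-inversion formula as in the proof of Theorem \ref{theo:AsymptotNormalMomConstr}, is $\begin{pmatrix} -\tilde\Sigma & H \\ H^t & P \end{pmatrix}$ with $H,P$ as in (\ref{eqn:NormalAsymLmomH})--(\ref{eqn:NormalAsymLmomP}). Solving the expansion for $(\sqrt n(\hat\phi-\phi^*),\sqrt n\,\xi_n(\hat\phi))$ and invoking Slutsky yields the announced limit with covariance $\bigl(\begin{smallmatrix}H\\P\end{smallmatrix}\bigr)\,\Sigma\,(H^t\ P^t)$.

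The main obstacle is the convergence in probability of $J_{H_n}(\bar\phi,\bar\xi)$ to $J_H$: the integrands depend on the empirical process $\hat{\mathbb{F}}_0$ rather than on $\mathbb{F}_0$, and the integrals run over all of $\mathbb{R}$, so exchanging the limit with the integral requires an integrable majorant that is uniform in $(\bar\phi,\bar\xi)$ over a shrinking neighborhood of $(\phi^*,0)$. Assumptions 4--6 produce dominating functions of the form $\|K'(\hat{\mathbb{F}}_0)\|\lesssim\hat{\mathbb{F}}_0(1-\hat{\mathbb{F}}_0)$ and the bounds on $\nabla_\theta\mathbb{F}_1$ and $J_{\mathbb{F}_1}$, but controlling the tail contributions uniformly in the empirical perturbation will require the same regular-variation / integrability devices used in Remark \ref{remark:RegVar} and in the proof of Proposition \ref{prop:LimitLawLmomConstrPart}.
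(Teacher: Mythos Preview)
Your proposal is correct and follows essentially the same route as the paper: mean-value expansion of the estimating equations around $(\phi^*,0)$, identification of the score as $m(\alpha^*)-\int K(\hat{\mathbb{F}}_0(y|\phi^*))\,dy$ whose limit law comes from Proposition~\ref{prop:LimitLawLmomConstrPart}, convergence of the Hessian to the block matrix $J_H$ via dominated convergence (using assumptions 4--6), and block inversion exactly as in Theorem~\ref{theo:AsymptotNormalMomConstr}. Your remark that every $\partial_\phi$-term carries a factor of $\xi$ (hence vanishes at $\xi=0$) is precisely the mechanism the paper uses, and your identification of the uniform-integrable-majorant issue for $J_{H_n}(\bar\phi,\bar\xi)\to J_H$ is the right technical point.
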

\noindent The proof of this theorem is differed to Appendix \ref{Append:TheoNormalAsymptotLmom}. In assumption 1, we could only demand the consistency of $\hat{\phi}$, since the consistency of $\xi_n(\hat{\phi})$ can be deduced from it using the continuity of $\phi\mapsto\xi(\phi)$ and the uniform convergence of $\xi_n(.)$ towards $\xi(.)$, see Lemma \ref{lem:SupXiPhiDiff}. Assumptions 4-6 are used in the proof to ensure the differentiability up to second order with respect to $\xi$ and $\phi$ of $H_n(\phi,\xi)$ for any $n$.
\section{Simulation study}\label{sec:SimulationLmom}
We perform several simulations and show how a prior information about the distribution of the semiparametric component $P_0$ can help us better estimate the set of parameters $(\lambda^*,\theta^*,\alpha^*)$ in regular examples, i.e. the components of the mixture can be clearly distinguished when we plot the probability density function. We also show how our approach permits to estimate even in difficult situations when the proportion of the parametric component is very low; such cases could \emph{not} be estimated using existing methods. We show also the advantage of using L-moments constraints over moment constraints using the approach developed in the previous chapter. \\
In our experiments, the datasets were generated by the following mixtures:
\begin{itemize}
\item[$\bullet$] A two-component Weibull mixture;
\item[$\bullet$] A two-component Weibull -- Lognormal mixture;
\item[$\bullet$] A two-component Gaussian -- Two-sided Weibull mixture;
\end{itemize}
We have chosen a variety of values for the parameters especially the proportion. Programming tools are the same as in the case of the moment-type constraints. We only used the $\chi^2$ divergence, because the optimization over $\xi$ can be calculated without numerical methods, see Examples \ref{Example:Chi2Lmom} and \ref{ex:Chi2Consistency}. Since the objective function $\phi\mapsto H_n(\phi,\xi_n(\phi))$ as a function of $\phi$ is not ensured to be strictly convex, we used 6 fixed initial points which we specify for each example separately. We then ran the Nelder-Mead algorithm and chose the vector of parameters for which the objective function has the lowest value. We applied a similar procedure on the algorithm of \cite{Bordes10} in order to ensure a \emph{fair} comparison.\\
All numerical integrations were calculated using function \texttt{integral} of package \texttt{pracma}. It was the only function that converged on all the calculus, see Section \ref{sec:Simulations} for more details about other numerical integration functions.\\
We did not use any function error criterion here because the compared methods do not provide the same set of parameters. For example, the method of \cite{Bordes10} estimates a mean value for the unknown component whereas our approach estimates a shape parameter. Other existing methods do not estimate any information about the parameters of the unknown component.
\subsection{Data generated from a two-component Weibull mixture modeled by a semiparametric Weibull mixture}
We consider a mixture of two Weibull components with scales $\sigma_1 = 0.5,\sigma_2=1$ and shapes $\nu_1=2,\nu_2=1$ in order to generate the dataset. In the semiparametric mixture model, the parametric component will be "the one to the right", i.e. the component whose true set of parameters is $(\nu_1=2,\sigma_1=0.5)$.\\
We impose on the unknown component three L-moments constraints; the second, the third and the fourth Weibull L-moments. They are given in Example \ref{Example:Weibull}. We thus have 
\[m(\alpha=\nu) = \left(\begin{array}{c}
-\lambda_2=-\sigma\left(1-2^{-1/\nu}\right)\Gamma(1+1/\nu)\\
-\lambda_3=-\lambda_2\times\left(3-2\frac{1-3^{-1/\nu}}{1-2^{-1/\nu}}\right)\\
-\lambda_4=-\lambda_2\times \left(6+\frac{5(1-4^{-1/\nu})-10*(1-3^{-1/\nu})}{1-2^{-1/\nu}}\right)
\end{array}
\right) \]
and $K(t) = (t(t-1),t(t-1)(2t-1),t(t-1)(1+5(t-1)+5(t-1)^2))^t$. This mixture was not easily estimated by either our estimation procedure or the semiparametric methods from the literature. Our estimator, although has a higher variance, is still not biased in the same way estimates of other methods are. The L-moment constraints gave an estimator with less variance than the estimator based on moments constraints, but with slightly higher bias on the proportion.
\begin{table}[ht]
\centering
\begin{tabular}{|c|c|c|c|c|c|c|}
\hline
Nb of observations & $\lambda$ & sd$(\lambda)$ & $\nu_1$ & sd($\nu_1$) & $\nu_2$ & sd($\nu_2$)\\
\hline
\hline
\multicolumn{7}{|c|}{Mixture 1 : $n=10^4$ $\lambda^* = 0.3$, $\nu_1^*=2$, $\sigma_1^*=0.5$(fixed), $\nu_2^*=1$, $\sigma_2^*=1$(fixed) }\\
\hline
Pearson's $\chi^2$ 3 moments & 0.304 & 0.016 & 2.191 & 0.887 & 0.998 & 0.013 \\
Pearson's $\chi^2$ 3 L-moments & 0.348 & 0.062 & 1.828 & 0.648 & 0.984 & 0.021 \\
Robin & 0.604 & 0.029 & 1.256  & 0.037 & --- & --- \\
Song EM-type & 0.806 & 0.005 & 1.185 & 0.018 & --- & --- \\
Song $\pi-$maximizing & 0.624 & 0.007 & 1.312 & 0.013 & --- & --- \\
\hline
\end{tabular}
\caption{The mean value with the standard deviation of estimates in a 100-run experiment on a two-component Weibull mixture.}
\label{tab:3by3ResultsWeibullLMoment}
\end{table}


\subsection{Data generated from a two-component Weibull-LogNormal mixture modeled by a semiparametric Weibull-LogNormal mixture}
We consider a dataset generated from a mixture of a Weibull and a Lognormal distributions. The Weibull component has a scale $\sigma_1^*=1$ and a shape $\nu_1^*\in\{1.5,1,0.4\}$ in order to illustrate several scenarios; a distribution whose pdf explodes to infinity at zero, a distribution whose pdf has finite value at zero and a distribution whose pdf goes back to zero at zero. The Lognormal component has a scale $\sigma_2^*=0.5$ and a mean parameter $\mu^*=3$. The Lognormal distribution has a heavy tail which is inherited in the mixture distribution.\\
In a first part, we perform a comparison of convergence speed between the method under moments constraints and the method under L-moments constraints as we increase the number of observations $n$.  Details about the simulations under moments constraints can be found in paragraph \ref{subsec:WeibLognormMom}. The Weibull component is considered as the unknown component during estimation, and impose three L-moments constraints. The first 4 L-moments of the Weibull distribution are given in Example \ref{Example:Weibull}.\\
In a second part, we perform an estimation of a semiparametric mixture model where the Lognormal component is considered unknown and defined through 3 L-moments conditions; the second, the third and the fourth L-moment.  The L-moments of the Lognormal distribution do not have a close formula and are calculated numerically using function \texttt{lmrln3} of package \texttt{lmom} written by Hosking.\\
Results in table (\ref{tab:3by3ResultsWeibullLognormLmom}) show that L-moments are more informative and we need less data in order to get good estimates in comparison to moments constraints. In order to calculate the estimate $\hat{\phi}$, we considered 6 initial points; namely the set 
\[\phi^{(0)} \in \left\{(0.8,2,1),(0.5,2,1),(0.8,1,1),(0.7,3,1.5),(0.7,2,2),(0.5,4,2),(0.5,1.5,2)\right\}.\] 
The vector $\hat{\phi}$ was taken as the one which corresponds to the lowest value among the infima produced by the optimization algorithm.\\
In table (\ref{tab:3by3ResultsLognormWeibullLmoment}) the Lognormal component is the unknown component during estimation. Initialization of the optimization algorithm, for example in mixture 2, was taken from the set $\{(0.1,0.5,1),(0.15,0.5,0.7),(0.05,1.5,2.5),(0.1,1,3)\}$.\\
It is clear that the moments constraints gave better results than L-moments constraints in mixture 1 for the estimation of the scale of the Weibull component. For the second mixture, both types of constraints give similar results. The two methods have the same bias in the estimation of the scale of Weibull component; the moments constraints produced a positive bias whereas the L-moments constraints produced a negative bias. The L-moments produced a smaller variance. In the third mixture, the L-moments constraints gave clear better results. The last mixture is the most difficult one in the sense that the proportion of the parametric component is very low. 

\begin{table}[ht]
\centering
\begin{tabular}{|c|c|c|c|c|c|c|c|}
\hline
nb of observations & Estimation method & $\lambda$ & sd$(\lambda)$ & $\mu$ & sd($\mu$) & $\nu$ & sd($\nu$)\\
\hline
\hline
\multicolumn{8}{|c|}{True Parameters : $\lambda^* = 0.7$, $\mu^*=3$, $\sigma_2^*=0.5$(fixed), $\nu^*=1.5$, $\sigma_1^*=1$(fixed) }\\
\hline
\multirow{2}{2.5cm}{$n = 10^2$} & Pearson's $\chi^2$ L-moments & 0.685 & 0.069 & 2.798 & 0.413 & 0.436 & 0.074 \\
 &  Pearson's $\chi^2$ Moments & 0.384 & 0.117 & 2.654 & 0.153 & 0.488 & 0.018 \\
\hline 
\multirow{2}{2.5cm}{$n=10^3$} &  Pearson's $\chi^2$ L-moments & 0.677 & 0.017 & 3.014 & 0.028 & 0.726 & 0.272 \\
 &  Pearson's $\chi^2$ Moments & 0.518 & 0.068 & 2.806 & 0.099 & 0.473 & 0.014 \\
\hline
\multirow{2}{2.5cm}{$n=10^4$} &  Pearson's $\chi^2$ L-moments & 0.697 & 0.009 & 3.003 & 0.010 & 1.343 & 0.185 \\
 &  Pearson's $\chi^2$ Moments & 0.605 & 0.044 & 2.903 & 0.069 & 0.531 & 0.326 \\
\hline
\end{tabular}
\caption{The mean value with the standard deviation of estimates in a 100-run experiment on a two-component Weibull-log normal mixture. The parametric component is the log-normal with unknown mean parameter $\mu$. The semiparametric component is the Weibull component which is defined by its first three L-moments (moments resp.) with unknown shape $\nu$.}
\label{tab:3by3ResultsWeibullLognormLmom}
\end{table}

\begin{table}[ht]
\centering
\begin{tabular}{|c|c|c|c|c|c|c|}
\hline
Nb of observations & $\lambda$ & sd$(\lambda)$ & $\nu$ & sd($\nu$) & $\mu$ & sd($\mu$)\\
\hline
\hline
\multicolumn{7}{|c|}{Mixture 1 : $n=10^3$, $\lambda^* = 0.3$, $\nu^*=1.5$, $\sigma_1^*=1$(fixed), $\mu^*=3$, $\sigma_2^*=0.5$(fixed) }\\
\hline
Pearson's $\chi^2$ L-moments& 0.313 & 0.019 & 1.027 & 0.541 & 2.992 & 0.050 \\
Pearson's $\chi^2$ Moments& 0.308 & 0.017 & 1.484 & 0.624 & 3.002 & 0.026 \\
\hline
\hline
\multicolumn{7}{|c|}{Mixture 2 : $n=10^4$, $\lambda^* = 0.1$, $\nu^*=1$, $\sigma_1^*=1$(fixed), $\mu^*=3$, $\sigma_2^*=0.5$(fixed) }\\
\hline
Pearson's $\chi^2$ L-moments & 0.104 & 0.006 & 0.795 & 0.379 & 2.994 & 0.015 \\
Pearson's $\chi^2$ Moments & 0.103 & 0.006 & 1.284 & 0.677 & 3.001 & 0.007 \\
\hline
\hline
\multicolumn{7}{|c|}{Mixture 3 : $n=5\times 10^4$, $\lambda^* = 0.05$, $\nu^*=0.4$, $\sigma_1^*=1$(fixed), $\mu^*=3$, $\sigma_2^*=0.5$(fixed) }\\
\hline
Pearson's $\chi^2$ L-Moments & 0.049 & 0.002 & 0.448 & 0.129 & 3.000 & 0.006 \\
Pearson's $\chi^2$ Moments & 0.049 & 0.002 & 0.629 & 0.438 & 3.001 & 0.004 \\
\hline
\end{tabular}
\caption{The mean value with the standard deviation of estimates in a 100-run experiment on a two-component Weibull-log normal mixture. The parametric component is the Weibull with unknown shape $\nu$. The semiparametric component is the lognormal component which is defined by its first three L-moments (moments resp.) with unknown mean parameter $\mu$.}
\label{tab:3by3ResultsLognormWeibullLmoment}
\end{table}


\subsection{Data generated from a two-sided Weibull Gaussian mixture modeled by a semiparametric two-sided Weibull Gaussian mixture}
We have already presented this model in paragraph \ref{subsec:TwoSidGaussMom}. The 2nd, 3rd and 4th L-moments of the two-sided Weibull distribution are given by:
\begin{eqnarray*}
\lambda_2 & = & \left[1-\frac{1}{2^{1+1/\nu}}\right]\sigma_2\Gamma\left(1+\frac{1}{\nu}\right);\\
\lambda_3 & = & 0 ;\\
\lambda_4 & = & \left[1-\frac{6}{2^{1+1/\nu}}+\frac{15}{2\times 3^{1+1/\nu}}-\frac{5}{2\times 4^{1+1/\nu}}\right] \sigma_2\Gamma\left(1+\frac{1}{\nu}\right).
\end{eqnarray*}
Results are presented in table (\ref{tab:3by3ResultsTwoSideWeibullGaussLMom}). The L-moments constraints produce clear better results than the moments constraints in all the mixtures. The estimation based on L-moments constraints produced clear lower variance. Besides, and once again, the L-moments constraints seem to be more informative and we need less number of observations than moments constraints in order to produce good estimates.\\
In this example we presented a challenge to our estimation method by simulating mixtures with very low proportion of the parametric part; mixture 3 with $\lambda^*=0.05$ and mixture 4 with $\lambda^*=0.01$. Using signal-noise terms, in mixture 4, only one percent of the data comes from the signal whereas $99\%$ of the data is pure noise. The location of the signal is then estimated around zero with standard deviation of $0.3$ with the L-moments constraints. It is not well localized however using moments constraints with $10^5$ observations, and we need at least $10^8$ observations to reach a similar precision to the result obtained with L-moments constraints. It is still important to notice that using moments or L-moments constraints, we were able to confirm the existence of a signal component (the parametric component).\\
In what concerns the initialization of the algorithm under L-moments constraints, we used:
\begin{eqnarray*}
\text{Mix 1} & : & \left\{(0.8,1,1),(0.5,-1,2.5),(0.8,0.5,2),(0.7,0,3),(0.7,1,4),(0.5,2,3.5)\right\} \\
\text{Mix 2} & : & \left\{(0.2,1,1),(0.5,-1,2.5),(0.2,0.5,2),(0.3,0,3),(0.3,1,4)\right\} \\
\text{Mix 3} & : & \left\{(0.1,1,1),(0.05,-1,2.5),(0.03,0.5,2),(0.01,0,1.5),(0.005,1,0.7)\right\} \\
 \text{Mix 4} & : & \left\{(0.1,1,1),(0.005,1,0.7)\right\}
\end{eqnarray*}
For the last mixture, we have found no changes in using more initial points than the two given points. Besides, execution time was very long (about 5 samples per day), so we preferred to use only two starting points.
\begin{table}[ht]
\centering
\begin{tabular}{|c|c|c|c|c|c|c|}
\hline
Estimation method & $\lambda$ & sd$(\lambda)$ & $\mu$ & sd($\mu$) & $\nu$ & sd($\nu$)\\
\hline
\hline
\multicolumn{7}{|c|}{Mixture 1 : $n=100$, $\lambda^* = 0.7$, $\mu^*=0$, $\sigma_2^*=0.5$(fixed), $\nu^*=3$, $\sigma_1^*=1.5$(fixed) }\\
\hline
Pearson's $\chi^2$ -- L-Moments & 0.758 & 0.067 & -2.28$\times 10^{-3}$ & 0.098 & 3.040  & 0.639 \\
Pearson's $\chi^2$ under $\mathcal{M}_{2:4}$ & 0.764 & 0.067 & -0.012 & 0.342 & 2.893  & 0.731 \\
\hline
\hline
\multicolumn{7}{|c|}{Mixture 2 : $n=100$, $\lambda^* = 0.3$, $\mu^*=0$, $\sigma_2^*=0.5$(fixed), $\nu^*=3$, $\sigma_1^*=1.5$(fixed) }\\
\hline
Pearson's $\chi^2$ -- L-Moments & 0.364 & 0.082 & -0.016 & 0.246 & 3.058  & 0.418 \\
Pearson's $\chi^2$ under $\mathcal{M}_{2:4}$ & 0.407 & 0.077 & 0.012 & 0.575 & 2.925  & 0.454 \\
\hline
\hline
\multicolumn{7}{|c|}{Mixture 3 : $n=5000$, $\lambda^* = 0.05$, $\mu^*=0$, $\sigma_2^*=0.5$(fixed), $\nu^*=1.5$, $\sigma_1^*=2$(fixed) }\\
\hline
Pearson's $\chi^2$ -- L-Moments & 0.050 & 0.013 & 0.026 & 0.365 & 1.496  & 0.020 \\
Pearson's $\chi^2$ under $\mathcal{M}_{2:4}$ 0.066 & 0.013 & -0.036 & 0.857 & 1.493 & 0.008\\
\hline
\hline
\multicolumn{7}{|c|}{Mixture 4 : $n=10^5$, $\lambda^* = 0.01$, $\mu^*=0$, $\sigma_2^*=0.5$(fixed), $\nu^*=1.5$, $\sigma_1^*=2$(fixed) }\\
\hline
Pearson's $\chi^2$ -- L-Moments & 0.011 & 0.003 & 0.023 & 0.377 & 1.500  & 0.005 \\
Pearson's $\chi^2$ under $\mathcal{M}_{2:4}$ & 0.025 & 0.010&  - 0.047 & 1.356 & 1.495 & 0.006\\
\hline
\end{tabular}
\caption{The mean value with the standard deviation of estimates in a 100-run experiment on a two-component two-sided Weibull--Gaussian mixture under L-moment constraints.}
\label{tab:3by3ResultsTwoSideWeibullGaussLMom}
\end{table}
\clearpage
\subsection{Conclusions}
In this chapter, we introduced another structure for semiparametric mixture models with unknown component by imposing L-moments constraints on it. The method was proved to be consistent and asymptotic normal under standard assumptions. The estimation method under L-moments constraints presented several advantages in comparison to the estimation method under moments constraints. We were able to estimate over the whole parameter space and no need to check if the optimized function $\xi\mapsto H(\phi,\xi)$ is strictly concave for every $\phi$. Although the estimation method under L-moments constraints need numerical integrations (which is not the case of moments-type constraints procedure), the resulting estimator seems to have lower variance. Moreover, L-moments are demonstrated through simulations to be more informative than moments constraints, and we need less number of observations in order to obtain good estimates.

%
%

\section{Appendix: Proofs}
\subsection{Proof of Proposition \ref{prop:identifiabilityMixtureLmom}}\label{AppendSemiPara:PropIdenitifiabilityLmom}
\begin{proof}
Denote $M^{1}$ the set of all probability measures. Based on equation (\ref{eqn:IdenitifiabilityDefEqLmom}), we have:
\begin{eqnarray*}
P_0 & = & \frac{1}{1-\lambda} P_T - \frac{\lambda}{1-\lambda}P_1(.|\theta) \\
\tilde{P}_0 & = & \frac{1}{1-\tilde{\lambda}} P_T - \frac{\tilde{\lambda}}{1-\tilde{\lambda}}P_1(.|\tilde{\theta})
\end{eqnarray*}
Define the following function:
\[G:\mathbb{R}^{d-s}\times M^+\rightarrow \text{Im}(G)\subset M^1: (\lambda,\theta,P_0)\mapsto \lambda P_1(.|\theta) + (1-\lambda)P_0.\]
where 
\[M^+ = \{P_0 \in M^1 \text{  s.t. } \F_0^{-1}\in\mathcal{M}\}.\]
Identifiability is now equivalent to the fact that function $G$ is one-to-one. This means that for a given mixture distribution $P_T\in$Im$(G)$, we need that there exists a unique triplet $(\lambda,\theta,P_0)$ such that
\[P_T = \lambda P_1(.|\theta) + (1-\lambda)P_0\]
In other words:
\[P_0 = \frac{1}{1-\lambda}P_T - \frac{\lambda}{1-\lambda}P_1(.|\theta)\]
The equality of measures imply the equality of the quantiles. Thus, we may write:
\begin{equation}
\int_0^1{K(u)\F_0^{-1}(du)} = m(\alpha) = \int_0^1{K(u)\left(\frac{1}{1-\lambda}\F_T - \frac{\lambda}{1-\lambda}\F_1(.|\theta)\right)^{-1}(du)}.
\label{eqn:SysLmom}
\end{equation}
The assumption of the present proposition imposes the existence of unique solution $(\lambda^*,\theta^*,\alpha^*)$ to the previous nonlinear system of equations. Let's go back to function $G$. For a given mixture distribution $P_T\in$ Im$(G)$, take $\lambda=\lambda^*,\theta=\theta^*$ to be the solution to the nonlinear system (\ref{eqn:SysLmom}),  and define $P_0^*$ by:
\[P_0^* = \frac{1}{1-\lambda^*}P_T - \frac{\lambda^*}{1-\lambda^*}P_1(.|\theta^*).\]
Notice that $P_0^*\in\mathcal{M}_{\alpha^*}$. Suppose that $P_T$ can be written in two manners. In other words, suppose that there exists another triplet $(\tilde{\lambda},\tilde{\theta},\tilde{P}_0)$ with $\tilde{P}_0\in\mathcal{M}_{\tilde{\alpha}}$ such that:
\[P_T = \tilde{\lambda} P_1(.|\tilde{\theta}) + (1-\tilde{\lambda})\tilde{P}_0.\]
We then have:
\[\tilde{P}_0 = \frac{1}{1-\tilde{\lambda}}P_T - \frac{\tilde{\lambda}}{1-\tilde{\lambda}}P_1(.|\tilde{\theta}),\]
and consequently,
\[m(\tilde{\alpha}) = \int_0^1{K(u)\left(\frac{1}{1-\tilde{\lambda}}\F_T - \frac{\tilde{\lambda}}{1-\tilde{\lambda}}\F_1(.|\tilde{\theta})\right)^{-1}(du)}.\]
Thus, $(\tilde{\lambda},\tilde{\theta},\tilde{\alpha})$ is a second solution to the system (\ref{eqn:SysLmom}). Nevertheless, the system of equations (\ref{eqn:SysLmom}) has a unique solution by assumption of the present proposition. Hence, a contradiction is reached and the triplet $(\lambda^*,\theta^*,P_0^*)$ is unique. We conclude that function $G$ is one-to-one and the semiparametric mixture model subject to L-moments constraints is identifiable.
\end{proof}


\subsection{Proof of Proposition \ref{prop:UniqueSolLmom}}\label{AppendSemiPara:Prop1Lmom}
\begin{proof}
Let $\F_0^{-1}$ be some quantile measure which belongs to the intersection $\mathcal{N}^{-1} \cap \mathcal{M}$. Since $\F_0^{-1}$ belongs to $\mathcal{N}^{-1}$, there exists a couple $(\lambda,\theta)\in\Phi^+$ such that:
\begin{equation}
\F_0^{-1} = \left(\frac{1}{1-\lambda} \F_T - \frac{\lambda}{1-\lambda} \F_1(.|\theta)\right)^{-1}.
\label{eqn:SetNelementQuantile}
\end{equation}
This couple is unique by virtue of assumptions 3 and 4. Indeed, let $(\lambda,\theta)$	and $(\tilde{\lambda},\tilde{\theta})$ be two couples such that:
\begin{equation*}
\left(\frac{1}{1-\lambda} \F_T - \frac{\lambda}{1-\lambda} \F_1(.|\theta)\right)^{-1} = \left(\frac{1}{1-\tilde{\lambda}} \F_T - \frac{\tilde{\lambda}}{1-\tilde{\lambda}} \F_1(.|\tilde{\theta})\right)^{-1}
\end{equation*}
This entails that:
\begin{equation}
\frac{1}{1-\lambda} \mathbb{F}_T(x) - \frac{\lambda}{1-\lambda} \mathbb{F}_1(x|\theta) = \frac{1}{1-\tilde{\lambda}} \mathbb{F}_T(x) - \frac{\tilde{\lambda}}{1-\tilde{\lambda}} \mathbb{F}_1(x|\tilde{\theta}).
\label{eqn:identifEqualityQuantile}
\end{equation}
By derivation of both sides, we get an identity in the densities:
\[\frac{1}{1-\lambda} - \frac{\lambda}{1-\lambda} \frac{p_1(x|\theta)}{p_T(x)} = \frac{1}{1-\tilde{\lambda}} - \frac{\tilde{\lambda}}{1-\tilde{\lambda}} \frac{p_1(x|\tilde{\theta})}{p_T(x)}.\]
Taking the limit as $x$ tends to $\infty$ results in:
\[\frac{1-c\lambda}{1-\lambda}  = \frac{1-c\tilde{\lambda}}{1-\tilde{\lambda}}.\]
Note that function $z\mapsto (1-cz)/(1-z)$ is strictly monotone as long as $c\neq 1$. Hence, it is a one-to-one map. Thus $\lambda=\tilde{\lambda}$. Inserting this result in equation (\ref{eqn:identifEqualityQuantile}) entails that:
\[\mathbb{F}_1(.|\theta) = \mathbb{F}_1(.|\tilde{\theta}).\]
Using the identifiability of $P_1$ (assumption 4), we get $\theta=\tilde{\theta}$ which proves the existence of a unique couple $(\lambda,\theta)$ in (\ref{eqn:SetNelementQuantile}).\\
On the other hand, since $\F_0^{-1}$ belongs to $\mathcal{M}$, there exists a unique $\alpha$ such that $\F_0^{-1}\in\mathcal{M}_{\alpha}$. Uniqueness comes from the fact that the function $\alpha\mapsto m(\alpha)$ is one-to-one (assumption 2). Thus, $\F_0^{-1}$ verifies the constraints
\[\int_0^1{K(u)\F_0^{-1}(du)} = m(\alpha).\]
Combining this with (\ref{eqn:SetNelementQuantile}), we get:
\begin{equation}
\int_0^1{K(u)\left(\frac{1}{1-\lambda} \F_T - \frac{\lambda}{1-\lambda} \F_1(.|\theta)\right)^{-1}(du)} = m(\alpha).
\label{eqn:NlnSysMalphaQuantile}
\end{equation}
This is a non linear system of equations with $\ell$ equations. Now, let $\F_0^{-1}$ and $\tilde{\F}_0^{-1}$ be two elements in $\mathcal{N}^{-1}\cap\mathcal{M}$, then there exist two couples $(\lambda,\theta)$ and $(\tilde{\lambda},\tilde{\theta})$ with $\lambda\neq\tilde{\lambda}$ or $\theta\neq\tilde{\theta}$ such that $\F_0^{-1}$ and $\tilde{\F}_0^{-1}$ can be written in the form of (\ref{eqn:SetNelementQuantile}) with respectively $(\lambda,\theta)$ and $(\tilde{\lambda},\tilde{\theta})$. Since $\F_0^{-1}\in\mathcal{M}$, there exists $\alpha$ such that $\F_0^{-1}\in\mathcal{M}_{\alpha}$. Similarly, there exists $\tilde{\alpha}$ possibly different from $\alpha$ such that $\tilde{\F}_0^{-1}\in\mathcal{M}_{\tilde{\alpha}}$. Now, $(\lambda,\theta,\alpha)$ and $(\tilde{\lambda},\tilde{\theta},\tilde{\alpha})$ are two solutions to the system of equations (\ref{eqn:NlnSysMalphaQuantile}) which contradicts with assumption 1 of the present proposition.\\
We may now conclude that, if a quantile measure $\F_0^{-1}$ belongs to the intersection $\mathcal{N}^{-1} \cap \mathcal{M}$, then it has the representation (\ref{eqn:SetNelementQuantile}) for a unique couple $(\lambda,\theta)$ and there exists a unique $\alpha$ such that the triplet $(\lambda,\theta,\alpha)$ is a solution to the non linear system (\ref{eqn:NlnSysMalphaQuantile}). Conversely, if there exists a triplet $(\lambda,\theta,\alpha)$ which solves the non linear system (\ref{eqn:NlnSysMalphaQuantile}), then the quantile measure $\F_0^{-1}$ defined by $\F_0^{-1} = \left(\frac{1}{1-\lambda} \F_T - \frac{\lambda}{1-\lambda} \F_1(.|\theta)\right)^{-1}$ belongs to the intersection $\mathcal{N}^{-1} \cap \mathcal{M}$. This is because on the one hand, it clearly belongs to $\mathcal{N}^{-1}$ by its definition and on the other hand, it belongs to $\mathcal{M}_{\alpha}$ since it verifies the constraints and thus belongs to $\mathcal{M}$.\\
It is now reasonable to conclude that under assumptions 2-4, the intersection $\mathcal{N}^{-1} \cap \mathcal{M}$ includes a \emph{unique} quantile measure $\F_0^{-1}$ if and only if the set of $\ell$ non linear equations (\ref{eqn:NlnSysMalpha}) has a unique solution $(\lambda,\theta,\alpha)$.
\end{proof}


\subsection{Proof of Lemma \ref{lem:LmomCompactXi}}\label{Append:LemmaCompactXi}
\begin{proof}
The same arguments hold for both functions $\xi(\phi)$ and $\xi_n(\phi)$. We therefore, proceed with $\xi(\phi)$. Function $\xi\mapsto H(\phi,\xi)$ is strictly concave since\footnote{One can prove the strict concavity simply by calculating $H(\phi,u\xi_1+(1-u)\xi_2)$.} it is $\mathcal{C}^2$ and have the following Hessian matrix:
\[J_{H(\phi,.)} = -\int{K\left(\mathbb{F}_0(y,|\phi)\right)K\left(\mathbb{F}_0(y,|\phi)\right)^t} \psi''\left(\xi^tK\left(\mathbb{F}_0(y,|\phi)\right)\right)dy.\]
Since $\psi$ is strictly convex, then $\psi''(z)>0$ for any $z$. Thus the matrix $J_{H(\phi,.)}$ is definite negative and $\xi\mapsto H(\phi,\xi)$ is strictly concave. By the implicit function theorem, function $\phi\mapsto\xi(\phi)$ is uniquely defined and $\mathcal{C}^1$ over int$(\Phi)$. Notice here that even if $\frac{1}{1-\lambda} \mathbb{F}_T(y) - \frac{\lambda}{1-\lambda} \mathbb{F}_1(y|\theta)$ is negative, the matrix $J_{H(\phi,.)}$ can still be definite negative unlike the case of moment constraints.\\
The second part of the proposition is a direct consequence from the continuity of function $\phi\mapsto\xi(\phi)$.
\end{proof}

\subsection{Proof of Theorem \ref{theo:ConsistencyLmom}}\label{Append:TheoConsistLmom}
\begin{proof}
We will use Theorem \ref{theo:MainTheorem}. We start with assumption A2. We prove, first, that the supremum over $\xi$ can only be calculated over a compact subset of $\mathbb{R}^l$. This is a direct result from Lemma \ref{lem:LmomCompactXi}. One can redefine the estimator by maximizing over $\xi$ on the subset $\Xi=$Im$(\xi(.))\subset\mathbb{R}^l$ independently of $\phi$. We thus have:
\begin{eqnarray*}
D_{\varphi}(\mathcal{M}_{\alpha},\mathbb{F}_0(.|\phi)) & = & \sup_{\xi\in\Xi} H(\phi,\xi) \\
\phi^* & = & \arginf_{\phi}\sup_{\xi\in\Xi} H(\phi,\xi).
\end{eqnarray*}
We redefine now the estimation procedure (\ref{eqn:EstimProcLmomDualCDFVersion}) as follows:
\[\hat{\phi} = \arginf_{\alpha,\theta,\lambda}\sup_{\xi\in\Xi} \xi^t m(\alpha) - \int{\psi\left[\xi^tK\left(\frac{1}{1-\lambda} \mathbb{F}_T(y) - \frac{\lambda}{1-\lambda} \mathbb{F}_1(y|\theta)\right)\right]dy}\]
Using the mean value theorem, there exists $\eta(y)\in(0,1)$ such that\footnote{In the case of the Chi square, $\lambda(y)=\frac{1}{2}$}:
\begin{multline}
\psi\left(\xi^tK\left(\mathbb{F}_0(y|\phi)\right)\right) - \psi\left(\xi^tK\left(\hat{\mathbb{F}}_0(y|\phi)\right)\right) = \xi^t\left(K\left(\mathbb{F}_0(y|\phi)\right) - K\left(\hat{\mathbb{F}}_0(y|\phi)\right)\right) \\ 
\times \psi'\left(\eta(y)\xi^tK\left(\mathbb{F}_0(y|\phi)\right) + (1-\eta(y))\xi^tK\left(\hat{\mathbb{F}}_0(y|\phi)\right)\right)
\label{eqn:MeanValResConsist}
\end{multline}
An exact formula of function $\eta(y)$ will not be needed. We will only use the fact that its image is included in $(0,1)$. By the central limit theorem, one can write:
\[\sqrt{n}\frac{\mathbb{F}_n(y) - \mathbb{F}_T(y)}{\sqrt{\mathbb{F}_T(y)(1-\mathbb{F}_T(y))}} \rightarrow \mathcal{N}\left(0,1\right).\]
Since $\hat{\mathbb{F}}_0(y|\phi) - \mathbb{F}_0(y|\phi) = \mathbb{F}_n(y) - \mathbb{F}_T(y)$, we write
\[\sqrt{n}\frac{\hat{\mathbb{F}}_0(y|\phi) - \mathbb{F}_0(y|\phi)}{\sqrt{\mathbb{F}_T(y)(1-\mathbb{F}_T(y))}} \rightarrow \mathcal{N}\left(0,1\right),\]
which entails by the delta method that:
\begin{equation}
\sqrt{n}\frac{K\left(\hat{\mathbb{F}}_0(y|\phi)\right) - K\left(\mathbb{F}_0(y|\phi)\right)}{\sqrt{\mathbb{F}_T(y)(1-\mathbb{F}_T(y))}} \rightarrow \mathcal{N}\left(0,\nabla K\left(\mathbb{F}_0(y|\phi)\right) \nabla K\left(\mathbb{F}_0(y|\phi)\right)^t\right).
\label{eqn:LimiLawConsist}
\end{equation}
Since function $K$ is a vector of polynomials, its gradient is a matrix of polynomials. Besides, the distribution function $\mathbb{F}_0(y|\phi)$ takes its values in $[0,1]$, thus the variance of the limiting law in (\ref{eqn:LimiLawConsist}) is of order $\frac{1}{n}$ independently of $y$ and $\phi$. We may now write:
\begin{equation}
\frac{K\left(\hat{\mathbb{F}}_0(y|\phi)\right) - K\left(\mathbb{F}_0(y|\phi)\right)}{\sqrt{\mathbb{F}_T(y)(1-\mathbb{F}_T(y))}} = o_P(1)
\label{eqn:KdiffConsist}
\end{equation}
Going back to equation (\ref{eqn:DiffFunConsistency}), we use equations (\ref{eqn:MeanValResConsist}) and (\ref{eqn:KdiffConsist}) to write:
\begin{eqnarray*}
H(\phi,\xi) - H_n(\phi,\xi) & = & \int{\xi^t\left(K\left(\mathbb{F}_0(y|\phi)\right) - K\left(\hat{\mathbb{F}}_0(y|\phi)\right)\right)\psi'\left[\eta(y)\xi^tK\left(\mathbb{F}_0(y|\phi)\right) \right.}\\
				& & \text{\vspace{2cm}} \left.+ (1-\eta(y))\xi^tK(\hat{\mathbb{F}}_0(y|\phi))\right]dy \\
 & = & \int{\sqrt{\mathbb{F}_T(y)(1-\mathbb{F}_T(y))}\xi^t\frac{\left(K\left(\mathbb{F}_0(y|\phi)\right) - K\left(\hat{\mathbb{F}}_0(y|\phi)\right)\right)}{\sqrt{\mathbb{F}_T(y)(1-\mathbb{F}_T(y))}} }\\
& & \times \psi'\left(\eta(y)\xi^tK\left(\mathbb{F}_0(y|\phi)\right) + (1-\eta(y))\xi^tK\left(\hat{\mathbb{F}}_0(y|\phi)\right)\right)dy \\
& = & \xi^to_p(1)\int{\sqrt{\mathbb{F}_T(y)(1-\mathbb{F}_T(y))}\psi'\left[\eta(y)\xi^tK\left(\mathbb{F}_0(y|\phi)\right)\right.}\\
& & \left. + (1-\eta(y))\xi^tK(\hat{\mathbb{F}}_0(y|\phi))\right]dy.
\end{eqnarray*}
The finale line can also be justified by the Chebyshev's inequality, see Remark \ref{rem:ConsistKdiffChebyshev}, or even using the calculus in the proof of Proposition \ref{prop:LimitLawLmomConstrPart} below.\\
It suffices now to prove that the integral in the previous display is finite. Here, $\xi$ (resp. $\phi$) is inside the compact set $\Xi$ (resp. $\Phi$), and functions $\eta(y), \mathbb{F}_0(y|\phi)$ and $\hat{\mathbb{F}}_0(y|\phi)$ all take values inside the compact interval $[0,1]$. Thus, continuity of $\psi'$ suffices to conclude that there exists a constant $M$ independent of $y$, $\phi$ and $\xi$ such that:
\begin{equation}
\left|\psi'\left(\eta(y)\xi^tK\left(\mathbb{F}_0(y|\phi)\right) + (1-\eta(y))\xi^tK\left(\hat{\mathbb{F}}_0(y|\phi)\right)\right)\right| \leq M.
\label{eqn:BoundednessPsiConsist}
\end{equation}
This entails using assumption C6 that:
\begin{eqnarray*}
\int{\sqrt{\mathbb{F}_T(y)(1-\mathbb{F}_T(y))}\left|\psi'\left(\eta(y)\xi^tK\left(\mathbb{F}_0(y|\phi)\right) + (1-\eta(y))\xi^tK\left(\hat{\mathbb{F}}_0(y|\phi)\right)\right)\right|dy} & \leq & \\
 M\int{\sqrt{\mathbb{F}_T(y)(1-\mathbb{F}_T(y))}dy} & & \\
 & < & +\infty.
\end{eqnarray*}
Finally, the integral is finite and the compactness of $\Xi$ implies that $\|\xi\|$ is bounded. Therefore, we have:
\[H(\phi,\xi) - H_n(\phi,\xi) = o_P(1),\]
independently of $\xi$ and $\phi$. We may deduce now that:
\[\sup_{\phi,\xi}\left|H(\phi,\xi) - H_n(\phi,\xi)\right| \xrightarrow[n\rightarrow\infty]{\quad \mathbb{P} \quad} 0.\]
This proves assumption A2.\\
Assumption A3 is immediately verified since function $\xi\mapsto H(\phi,\xi)$ is strictly concave. Assumption A4 is what we have assumed in assumption C3. Finally, continuity assumption A5 is a direct result from assumptions C4 and C5 using Lebesgue's continuity theorem. All assumptions of Theorem \ref{theo:MainTheorem} are fulfilled and the consistency of $\hat{\phi}$ follows as a consequence.
\end{proof}
\begin{remark}
\label{rem:ConsistKdiffChebyshev}
We can prove assumption A2 in the previous proof without the use of the "small o" notation. We first have:
\[\frac{K\left(\hat{\mathbb{F}}_0(y|\phi)\right) - K\left(\mathbb{F}_0(y|\phi)\right)}{\sqrt{\mathbb{F}_T(y)(1-\mathbb{F}_T(y))}} \stackrel{\mathbb{P}}{\rightarrow} 0.\]
This is translated into the following limit:
\[\forall \varepsilon>0, \quad \mathbb{P}\left(\left|\frac{K\left(\hat{\mathbb{F}}_0(y|\phi)\right) - K\left(\mathbb{F}_0(y|\phi)\right)}{\sqrt{\mathbb{F}_T(y)(1-\mathbb{F}_T(y))}}\right|<\varepsilon\right)  \xrightarrow[n\rightarrow\infty]{} 1\]
Thus, there exists a sequence of positive numbers $(a_n)_n$ independent of\footnote{This is possible using Chebyshev's inequality and using the fact that $K\left(\mathbb{F}_0(y|\phi)\right)$ can be bounded independently of $y$ and $\phi$.} $y$ which goes to zero at infinity such that:
\[\mathbb{P}\left(\left|\frac{K\left(\hat{\mathbb{F}}_0(y|\phi)\right) - K\left(\mathbb{F}_0(y|\phi)\right)}{\sqrt{\mathbb{F}_T(y)(1-\mathbb{F}_T(y))}}\right|<\frac{\varepsilon}{\tilde{M}}\right)\geq 1-a_n\]
where $\tilde{M}=M\sup_{\Xi}\|\xi\|\int{\mathbb{F}_T(y)(1-\mathbb{F}_T(y))dy}$ and $M$ is defined through inequality (\ref{eqn:BoundednessPsiConsist}). On the other hand, the event:
\[\left\|\frac{K\left(\hat{\mathbb{F}}_0(y|\phi)\right) - K\left(\mathbb{F}_0(y|\phi)\right)}{\sqrt{\mathbb{F}_T(y)(1-\mathbb{F}_T(y))}}\right\|<\frac{\varepsilon}{\tilde{M}}\]
implies the event:
\begin{eqnarray*}
& & \int{\sqrt{\mathbb{F}_T(y)(1-\mathbb{F}_T(y))}\|\xi\|\left\|\frac{\left(K\left(\mathbb{F}_0(y|\phi)\right) - K\left(\hat{\mathbb{F}}_0(y|\phi)\right)\right)}{\sqrt{\mathbb{F}_T(y)(1-\mathbb{F}_T(y))}}\right\| \psi'\left(\eta(y)\xi^tK\left(\mathbb{F}_0(y|\phi)\right) \right.}\\
&  &  \left.+ (1-\eta(y))\xi^tK\left(\hat{\mathbb{F}}_0(y|\phi)\right)\right)dy \\
&  & <\frac{\varepsilon}{\tilde{M}} \int{\sqrt{\mathbb{F}_T(y)(1-\mathbb{F}_T(y))}\|\xi\| \psi'\left(\eta(y)\xi^tK\left(\mathbb{F}_0(y|\phi)\right) + (1-\eta(y))\xi^tK\left(\hat{\mathbb{F}}_0(y|\phi)\right)\right)dy}\\
& & <  \varepsilon.
\end{eqnarray*}
This entails that 
\[\left|H(\phi,\xi) - H_n(\phi,\xi)\right|<\varepsilon.\]
The final line does not depend on $(\phi,\xi)$, and we may deduce that:
\[\mathbb{P}\left(\sup_{\phi,\xi}\left|H(\phi,\xi) - H_n(\phi,\xi)\right|<\varepsilon\right) \geq 1-a_n.\]
\end{remark}

\subsection{Proof of Proposition \ref{prop:LimitLawLmomConstrPart}}\label{Append:PropAsyptotNormIntegral}
\begin{proof}
We would like to calculate the difference $\int{K(\hat{\mathbb{F}}_0(y|\phi))dy}-\int{K(\mathbb{F}_0(y|\phi))dy}$ as a functional of the difference $\hat{\mathbb{F}}_0(y|\phi)-\mathbb{F}_0(y|\phi)$. For two reals $a$ and $b$, we have:
\[K_r(a)-K_r(b) = \sum_{k=0}^{r-1}{\frac{c_{r,k}}{k+1}\left(a^{k+1}-b^{k+1}\right)},\]
where $c_{r,k}=(-1)^{r-k-1}\binom{r-1}{k}\binom{r+k-1}{k}$. Using the identity $a^{k+1}-b^{k+1} = (a-b)\sum_{j=0}^{k}{a^jb^{k-j}}$, we can write:
\[K_r(a)-K_r(b) = (a-b)\sum_{k=0}^{r-1}\sum_{j=0}^{k}{\frac{c_{r,k}}{k+1}a^jb^{k-j}}.\]
Applying this formula on $a=\hat{\mathbb{F}}_0(y|\phi)$ and $b=\mathbb{F}_0(y|\phi)$ yields
\begin{eqnarray}
K_r\left(\hat{\mathbb{F}}_0(y|\phi)\right)-K_r\left(\mathbb{F}_0(y|\phi)\right) & = & \left(\hat{\mathbb{F}}_0(y|\phi)-\mathbb{F}_0(y|\phi)\right)\sum_{k=0}^{r-1}\sum_{j=0}^{k}{\frac{c_{r,k}}{k+1}\hat{\mathbb{F}}_0(y|\phi)^j \mathbb{F}_0(y|\phi)^{k-j}} \nonumber \\
& = & \frac{1}{1-\lambda}\left(\mathbb{F}_n(y)-\mathbb{F}_T(y)\right)\sum_{k=0}^{r-1}\sum_{j=0}^{k}{\frac{c_{r,k}}{k+1}\hat{\mathbb{F}}_0(y|\phi)^j \mathbb{F}_0(y|\phi)^{k-j}}. \nonumber \\
\label{eqn:Kdifference}
\end{eqnarray}
We will show that the sum term can be rewritten using only $\mathbb{F}_0(y|\phi)$. By the Kolmogorov-Smirnov theorem, we have:
\[\sup_{y}\left|\hat{\mathbb{F}}_0(y|\phi)-\mathbb{F}_0(y|\phi)\right| = \sup_{y}\left|\mathbb{F}_n(y)-\mathbb{F}_T(y)\right|  = O_P\left(\frac{1}{\sqrt{n}}\right).\]
This permits us to simply write that
\[\hat{\mathbb{F}}_0(y|\phi)=\mathbb{F}_0(y|\phi) + O_P\left(\frac{1}{\sqrt{n}}\right),\]
with $O_P\left(\frac{1}{\sqrt{n}}\right)$ tends to zero in probability as $n$ goes to infinity independently of $y$. Thus formula (\ref{eqn:Kdifference}) can be rewritten as:
\begin{eqnarray*}
K_r\left(\hat{\mathbb{F}}_0(y|\phi)\right)-K_r\left(\mathbb{F}_0(y|\phi)\right) & = & \frac{1}{1-\lambda}\left(\mathbb{F}_n(y)-\mathbb{F}_T(y)\right)\sum_{k=0}^{r-1}\sum_{j=0}^{k}{\frac{c_{r,k}}{k+1}\left(\mathbb{F}_0(y|\phi)^j + O_P\left(\frac{1}{\sqrt{n}}\right)\right) }\\ 
 & & \text{\vspace{3cm}} \times \mathbb{F}_0(y|\phi)^{k-j}\\
 & = & \frac{1}{1-\lambda}\left(\mathbb{F}_n(y)-\mathbb{F}_T(y)\right)\sum_{k=0}^{r-1}{c_{r,k}\mathbb{F}_0(y|\phi)^k} \\
 & &   + O_P\left(\frac{1}{\sqrt{n}}\right) \frac{1}{1-\lambda}\left(\mathbb{F}_n(y)-\mathbb{F}_T(y)\right)\sum_{k=0}^{r-1}\sum_{j=0}^{k}{\frac{c_{r,k}}{k+1} \mathbb{F}_0(y|\phi)^{k-j}}.
\end{eqnarray*}
Integrating the two sides of the previous equation and multiplying by $\sqrt{n}$ gives:
\begin{multline}
\sqrt{n}\int{\left[K_r\left(\hat{\mathbb{F}}_0(y|\phi)\right)-K_r\left(\mathbb{F}_0(y|\phi)\right)\right]dy} = \frac{1}{1-\lambda}\int{\sqrt{n}\left(\mathbb{F}_n(y)-\mathbb{F}_T(y)\right)\sum_{k=0}^{r-1}{c_{r,k}\mathbb{F}_0(y|\phi)^k}dy}\\ + O_P\left(\frac{1}{\sqrt{n}}\right)\frac{1}{1-\lambda}\int{\sqrt{n}\left(\mathbb{F}_n(y)-\mathbb{F}_T(y)\right)\sum_{k=0}^{r-1}\sum_{j=0}^{k}{\frac{c_{r,k}}{k+1} \mathbb{F}_0(y|\phi)^{k-j}} dy}.
\label{eqn:IntegDiffK}
\end{multline}
The first integral in the right hand side is the part which will produce the Gaussian distribution of the limit law using the CLT. It remains to prove that the second integral in the right hand side tends to zero in probability. Using the law of iterated logarithm, we can write:
\begin{equation}
\limsup_{n\rightarrow\infty}\sqrt{\frac{n}{\log\log n}}\frac{\mathbb{F}_n(y)-\mathbb{F}_T(y)}{\sqrt{\mathbb{F}_T(y)\left(1-\mathbb{F}_T(y)\right)}} = \sqrt{2}.
\label{eqn:IterLogLaw}
\end{equation}
We now may write the integral in the second term as follows:
\begin{multline*}
O_P\left(\frac{1}{\sqrt{n}}\right)\int{\sqrt{n}\left(\mathbb{F}_n(y)-\mathbb{F}_T(y)\right)\sum_{k=0}^{r-1}\sum_{j=0}^{k}{\frac{c_{r,k}}{k+1} \mathbb{F}_0(y|\phi)^{k-j}} dy} = \\ 
O_P\left(\sqrt{\frac{\log\log n}{n}}\right)\int{\sqrt{\frac{n}{\log\log n}}\frac{\mathbb{F}_n(y)-\mathbb{F}_T(y)}{\sqrt{\mathbb{F}_T(y)\left(1-\mathbb{F}_T(y)\right)}} \sqrt{\mathbb{F}_T(y)\left(1-\mathbb{F}_T(y)\right)}  \sum_{k=0}^{r-1}\sum_{j=0}^{k}{\frac{c_{r,k}}{k+1} \mathbb{F}_0(y|\phi)^{k-j}} dy}.
\end{multline*}
The sum term inside the integral is bounded uniformly on $y$. Combine this with the limit in (\ref{eqn:IterLogLaw}), we may deduce that for $n$ sufficiently large, there exists a constant $M$ such that:
\begin{eqnarray*}
\int{\sqrt{\frac{n}{\log\log n}}\frac{\left|\mathbb{F}_n(y)-\mathbb{F}_T(y)\right|}{\sqrt{\mathbb{F}_T(y)\left(1-\mathbb{F}_T(y)\right)}} \sqrt{\mathbb{F}_T(y)\left(1-\mathbb{F}_T(y)\right)}  \sum_{k=0}^{r-1}\sum_{j=0}^{k}{\frac{|c_{r,k}|}{k+1} \mathbb{F}_0(y|\phi)^{k-j}} dy} & \leq & \\
 M\int{\sqrt{\mathbb{F}_T(y)\left(1-\mathbb{F}_T(y)\right)} dy} & &  \\
& < & \infty
\end{eqnarray*}
Thus, the integral exists and is finite for sufficiently large $n$. This entails that:
\begin{equation}
O_P\left(\frac{1}{\sqrt{n}}\right)\int{\sqrt{n}\left(\mathbb{F}_n(y)-\mathbb{F}_T(y)\right)\sum_{k=0}^{r-1}\sum_{j=0}^{k}{\frac{c_{r,k}}{k+1} \mathbb{F}_0(y|\phi)^{k-j}} dy} \xrightarrow[\mathbb{P}]{\quad n\rightarrow\infty\quad} 0,\quad \text{in probability.}
\label{eqn:LimitLawPart2}
\end{equation}
Going back to equation (\ref{eqn:IntegDiffK}), the second term in the right hand side tends to zero in probability. We need now to treat the first term.
\begin{equation}
\int{\sqrt{n}\left(\mathbb{F}_n(y)-\mathbb{F}_T(y)\right)\sum_{k=0}^{r-1}{c_{r,k}\mathbb{F}_0(y|\phi)^k}} = \frac{1}{\sqrt{n}}\sum_{i=1}^n{\int{\left(\ind{X_i\leq y} - \mathbb{F}_T(y)\right)}\sum_{k=0}^{r-1}{c_{r,k}\mathbb{F}_0(y|\phi)^k}dy}.
\label{eqn:AsymptotNormGaussSumCLT}
\end{equation}
This is a sum of i.i.d. random variables. Before proceeding any further, it is necessary to prove that such random variables are well defined (the integrals exist) and have a finite variance. First of all, we have:
\begin{multline}
\int_{-\infty}^{\infty}{\left|\ind{X_i\leq y} - \mathbb{F}_T(y)\right|\sum_{k=0}^{r-1}{c_{r,k}\mathbb{F}_0(y|\phi)^k}dy} =  \int_{-\infty}^{X_i}{\mathbb{F}_T(y)}\sum_{k=0}^{r-1}{c_{r,k}\mathbb{F}_0(y|\phi)^kdy} \\ + \int_{X_i}^{\infty}{\left(1 - \mathbb{F}_T(y)\right)\sum_{k=0}^{r-1}{c_{r,k}\mathbb{F}_0(y|\phi)^k}dy}
\label{eqn:AsymptoNormLmomIntegrability}
\end{multline}
On the other hand, since the $|X_i|$'s have finite expectation, then $X_i$ is finite almost surely and we have:
\begin{eqnarray*}
\mathbb{E}|X_i| & = & \int_{t=0}^{\infty}{\mathbb{P}\left(|X_i|>t\right)dt} \\
 & = & \int_{0}^{\infty}{\left(1-\mathbb{F}_T(t)\right)dt} + \int_{-\infty}^{0}{\mathbb{F}_T(t)dt}.
\end{eqnarray*}
Thus, $\mathbb{F}_T(t)$ is integrable in the neighborhood of $-\infty$, and $1-\mathbb{F}_T(t)$ is integrable in the neighborhood of $+\infty$. This proves that the integral in equation (\ref{eqn:AsymptoNormLmomIntegrability}) exists and is finite. Now the random variables in (\ref{eqn:AsymptotNormGaussSumCLT}) are well defined. The expectation is zero using the Fubini's theorem:
\[\mathbb{E}\left[\int{\left(\ind{X_i\leq y} - \mathbb{F}_T(y)\right)\sum_{k=0}^{r-1}{c_{r,k}\mathbb{F}_0(y|\phi)^k}dy}\right] = \int{\mathbb{E}\left(\ind{X_i\leq y} - \mathbb{F}_T(y)\right)\sum_{k=0}^{r-1}{c_{r,k}\mathbb{F}_0(y|\phi)^k}dy}=0.\]
The final part of the proof is to calculate the covariance matrix. Let $r_1$ and $r_2$ be two positive natural numbers such that $r_1\leq\ell$ and $r_2\leq\ell$. The Fubini's theorem yields:
\begin{multline*}
\mathbb{E}\int{\left(\ind{X_i\leq y} - \mathbb{F}_T(y)\right)\sum_{k=0}^{r_1-1}{c_{r_1,k}\mathbb{F}_0(y|\phi)^k}dy}\int{\left(\ind{X_i\leq x} - \mathbb{F}_T(x)\right)\sum_{k=0}^{r_2-1}{c_{r_2,k}\mathbb{F}_0(x|\phi)^k}dx} = \\
\int{\int{\mathbb{E}\left(\ind{X_i\leq x} - \mathbb{F}_T(x)\right)\left(\ind{X_i\leq y} - \mathbb{F}_T(y)\right)\sum_{k=0}^{r_1-1}{c_{r_1,k}\mathbb{F}_0(x|\phi)^k}\sum_{k=0}^{r_2-1}{c_{r_2,k}\mathbb{F}_0(y|\phi)^k}dy}dx}
\end{multline*}
Denoting $\Sigma$ the covariance matrix, we may write:
\[\Sigma_{r_1,r_2} = \int{\int{\left(\mathbb{F}_T\left(\min(x,y)\right) - \mathbb{F}_T(x)\mathbb{F}_T(y)\right)\sum_{k=0}^{r_1-1}{c_{r_1,k}\mathbb{F}_0(x|\phi)^k}\sum_{k=0}^{r_2-1}{c_{r_2,k}\mathbb{F}_0(y|\phi)^k}dy}dx}.\]
The sum of i.i.d. variables in (\ref{eqn:AsymptotNormGaussSumCLT}) are now well defined and the CLT applies and gives:
\[\frac{1}{\sqrt{n}}\sum_{i=1}^n{\int{\left(\ind{X_i\leq y} - \mathbb{F}_T(y)\right)}\sum_{k=0}^{r-1}{c_{r,k}\mathbb{F}_0(y|\phi)^k}dy} \xrightarrow[]{\quad \mathcal{D}\quad} \mathcal{N}(0,\Sigma).\]
This result together with (\ref{eqn:LimitLawPart2}) and (\ref{eqn:IntegDiffK}) complete the proof.
\end{proof}

\subsection{Proof of Theorem \ref{theo:AsymptotNormLmom}}\label{Append:TheoNormalAsymptotLmom}
\begin{proof}
The proof is based on a mean value expansion between $(\hat{\phi},\xi_n(\hat{\phi}))$ and $(\phi^*,0)$ similarly to the case of moment constraints Theorem \ref{theo:AsymptotNormalMomConstr}. We therefore, need to calculate the first and second order derivatives.\\ 
First order derivatives are given by:
\begin{eqnarray*}
\frac{\partial H_n}{\partial \xi}(\phi,\xi) & = & m(\alpha) - \int{K(\hat{\mathbb{F}}_0(y|\phi)) \psi'\left(\xi^tK(\hat{\mathbb{F}}_0(y|\phi))\right)dy} \\
\frac{\partial H_n}{\partial \alpha}(\phi,\xi) & = & \xi^t\nabla m(\alpha) \\
\frac{\partial H_n}{\partial \lambda}(\phi,\xi) & = & -\int{\left[\frac{1}{(1-\lambda)^2}\mathbb{F}_n(y) - \frac{1}{(1-\lambda)^2}\mathbb{F}_1(y|\theta)\right]\xi^tK'(\hat{\mathbb{F}}_0(y|\phi))\psi'\left(\xi^t K(\hat{\mathbb{F}}_0(y|\phi))\right)dy} \\
\frac{\partial H_n}{\partial \theta}(\phi,\xi) & = & \frac{\lambda}{1-\lambda} \int{\nabla_{\theta}\mathbb{F}_1(y|\theta)\xi^t K'(\hat{\mathbb{F}}_0(y|\phi)) \psi'\left(\xi^t K(\hat{\mathbb{F}}_0(y|\phi))\right)dy}.
\end{eqnarray*}
Second order derivatives are given by:
\begin{eqnarray*}
\frac{\partial^2 H_n}{\partial \xi^2}(\phi,\xi) & = & \int{K(\hat{\mathbb{F}}_0(y|\phi))K(\hat{\mathbb{F}}_0(y|\phi))^t \psi''\left(\xi^tK(\hat{\mathbb{F}}_0(y|\phi))\right)dy} \\
\frac{\partial^2 H_n}{\partial \alpha^2}(\phi,\xi) & = & \xi^t J_{m(\alpha)} \\
\frac{\partial^2 H_n}{\partial^2 \lambda}(\phi,\xi) & = & -\int{\left[\frac{2}{(1-\lambda)^3}\mathbb{F}_n(y) - \frac{2}{(1-\lambda)^3}\mathbb{F}_1(y|\theta)\right]\xi^tK'(\hat{\mathbb{F}}_0(y|\phi))\psi'\left(\xi^t K(\hat{\mathbb{F}}_0(y|\phi))\right)dy} \\  
					& & -\int{\left[\frac{1}{(1-\lambda)^2}\mathbb{F}_n(y) - \frac{1}{(1-\lambda)^2}\mathbb{F}_1(y|\theta)\right]^2\xi^tK''(\hat{\mathbb{F}}_0(y|\phi))\psi'\left(\xi^t K(\hat{\mathbb{F}}_0(y|\phi))\right)dy} \\
					&  &	-\int{\left[\frac{1}{(1-\lambda)^2}\mathbb{F}_n(y) - \frac{1}{(1-\lambda)^2}\mathbb{F}_1(y|\theta)\right]^2\left[(\xi^tK'(\hat{\mathbb{F}}_0(y|\phi))\right]^2\psi''\left(\xi^t K(\hat{\mathbb{F}}_0(y|\phi))\right)dy} \\
\frac{\partial^2 H_n}{\partial \theta^2}(\phi,\xi) & = & \frac{\lambda}{1-\lambda} \int{J_{\mathbb{F}_1(.|\theta)}\xi^t K'(\hat{\mathbb{F}}_0(y|\phi)) \psi'\left(\xi^t K(\hat{\mathbb{F}}_0(y|\phi))\right)dy} \\
				&  & - \frac{\lambda^2}{(1-\lambda)^2} \int{\nabla_{\theta}\mathbb{F}_1(y|\theta)\nabla_{\theta}\mathbb{F}_1(y|\theta)^t \xi^t K''(\hat{\mathbb{F}}_0(y|\phi)) \psi'\left(\xi^t K(\hat{\mathbb{F}}_0(y|\phi))\right)dy} \\
				 &  & -\frac{\lambda^2}{(1-\lambda)^2} \int{\nabla_{\theta}\mathbb{F}_1(y|\theta)\nabla_{\theta}\mathbb{F}_1(y|\theta)^t \left[\xi^t K'(\hat{\mathbb{F}}_0(y|\phi))\right]^2 \psi''\left(\xi^t K(\hat{\mathbb{F}}_0(y|\phi))\right)dy}
\end{eqnarray*}
Crossed derivatives:
\begin{eqnarray*}
\frac{\partial^2 H_n}{\partial \xi \partial\alpha}(\phi,\xi) & = & \nabla m(\alpha) \\
\frac{\partial^2 H_n}{\partial \xi \partial\lambda}(\phi,\xi) & = &  -\int{\left[\frac{1}{(1-\lambda)^2}\mathbb{F}_n(y) - \frac{1}{(1-\lambda)^2}\mathbb{F}_1(y|\theta)\right]K'(\hat{\mathbb{F}}_0(y|\phi))\psi'\left(\xi^t K(\hat{\mathbb{F}}_0(y|\phi))\right)dy} - \\
 &  & \int{K(\hat{\mathbb{F}}_0(y|\phi))\left[\frac{1}{(1-\lambda)^2}\mathbb{F}_n(y) - \frac{1}{(1-\lambda)^2}\mathbb{F}_1(y|\theta)\right]\xi^tK'(\hat{\mathbb{F}}_0(y|\phi))\psi'\left(\xi^t K(\hat{\mathbb{F}}_0(y|\phi))\right)dy} \\
\frac{\partial^2 H_n}{\partial \xi \partial\theta}(\phi,\xi) & = & \frac{\lambda}{1-\lambda} \int{\nabla_{\theta}\mathbb{F}_1(y|\theta)K'(\hat{\mathbb{F}}_0(y|\phi))^t \psi'\left(\xi^t K(\hat{\mathbb{F}}_0(y|\phi))\right)dy} \\
 &  & + \frac{\lambda}{1-\lambda} \int{K(\hat{\mathbb{F}}_0(y|\phi))\nabla_{\theta}\mathbb{F}_1(y|\theta)^t\xi^t K'(\hat{\mathbb{F}}_0(y|\phi)) \psi'\left(\xi^t K(\hat{\mathbb{F}}_0(y|\phi))\right)dy}\\
\frac{\partial^2 H_n}{\partial \alpha \partial\lambda}(\phi,\xi) & = & 0 \\
\frac{\partial^2 H_n}{\partial \alpha \partial\theta}(\phi,\xi) & = & 0 \\
\frac{\partial^2 H_n}{\lambda \partial\theta}(\phi,\xi) & = & \frac{1}{(1-\lambda)^2}\int{\nabla\mathbb{F}_1(y|\theta)\xi^tK'(\hat{\mathbb{F}}_0(y|\phi))\psi'\left(\xi^t K(\hat{\mathbb{F}}_0(y|\phi))\right)dy} \\
	&  &  + \frac{\lambda}{1-\lambda} \int{\nabla_{\theta}\mathbb{F}_1(y|\theta) \left[\frac{1}{(1-\lambda)^2}\mathbb{F}_n(y) - \frac{1}{(1-\lambda)^2}\mathbb{F}_1(y|\theta)\right]\xi^t K''(\hat{\mathbb{F}}_0(y|\phi))} \\
	& & \qquad \times \psi'\left(\xi^t K(\hat{\mathbb{F}}_0(y|\phi))\right)dy \\  
	&  & + \frac{\lambda}{1-\lambda} \int{\nabla_{\theta}\mathbb{F}_1(y|\theta) \left[\frac{1}{(1-\lambda)^2}\mathbb{F}_n(y) - \frac{1}{(1-\lambda)^2}\mathbb{F}_1(y|\theta)\right] \left[\xi^t K'(\hat{\mathbb{F}}_0(y|\phi))\right]^2} \\
	 & & \qquad \times \psi''\left(\xi^t K(\hat{\mathbb{F}}_0(y|\phi))\right)dy
\end{eqnarray*}
Notice that by assumption 1, interesting values of $\xi$ are only in a neighborhood of the vector 0 which can be taken to be the ball $B(0,\varepsilon)$ for some $\varepsilon>0$. Besides, the derivatives given here above are well defined using Lebesgue theorems. Indeed, all integrands are controlled by either $K(\hat{\mathbb{F}}(y|\phi))$ or $\mathbb{F}_n(y)-\mathbb{F}_1(.|\theta)$ which are both integrable independently of $\phi$ as soon as $\mathbb{F}_1(.|\theta)$ has a finite expectation. Similar discussion for the former was given in Example \ref{Example:Chi2Lmom}, and for the later in the proof of Proposition \ref{prop:LimitLawLmomConstrPart} but for $\mathbb{F}_n(y)-\mathbb{F}_T(.|\theta)$ instead. Other derivatives are controlled by assumtions 4-6 of the present theorem.\\
A mean value expansion of the gradient of $H_n$ between $(\hat{\phi},\xi_n(\hat{\phi}))$ with Lagrange remainder gives that there exists $(\bar{\phi},\bar{\xi})$ on the line between these two points such that:
\begin{equation}
\left(\begin{array}{c} \frac{\partial H_n}{\partial \phi}(\hat{\phi},\xi(\hat{\phi})) \\ \frac{\partial H_n}{\partial \xi}(\hat{\phi},\xi_n(\hat{\phi})) \end{array}\right) = \left(\begin{array}{c}  \frac{\partial H_n}{\partial \phi}(\phi^*,0) \\\frac{\partial H_n}{\partial \xi}(\phi^*,0) \end{array}\right)  + J_{H_n}(\bar{\phi},\bar{\xi}) \left(\begin{array}{c} \hat{\phi}-\phi^* \\ \xi_n(\hat{\phi})\end{array}\right),
\label{eqn:StochExpansionLmom}
\end{equation}
where $J_{H_n}(\bar{\phi},\bar{\xi})$ is the matrix of second derivatives of $H_n$ calculated at the mid point $(\bar{\phi},\bar{\xi})$. First order optimality condition at $(\hat{\phi},\xi_n(\hat{\phi}))$ is translated by:
\begin{eqnarray*}
\frac{\partial}{\partial \xi} H_n(\hat{\phi},\xi_n(\hat{\phi})) & = & 0 \\
\left.\frac{\partial}{\partial \phi}\left(H_n(\phi,\xi_n(\phi))\right)\right|_{\phi=\hat{\phi}} & = & 0. 
\end{eqnarray*}
The chain rule permits us to calculate the second line simply as a derivative with respect to $\phi$ calculated at the optimal point $(\hat{\phi},\xi_n(\hat{\phi}))$, i.e.
\begin{eqnarray*}
\left.\frac{\partial}{\partial \phi}\left(H_n(\phi,\xi_n(\phi))\right)\right|_{\phi=\hat{\phi}} & = &  \frac{\partial}{\partial \phi}H_n(\hat{\phi},\xi_n(\hat{\phi})) + \frac{\partial}{\partial \xi} H_n(\hat{\phi},\xi_n(\hat{\phi})) \frac{\partial \xi_n}{\partial \phi}(\hat{\phi}) \\
 & = & \frac{\partial}{\partial \phi}H_n(\hat{\phi},\xi_n(\hat{\phi})).
\end{eqnarray*}
Thus, optimality conditions at $(\hat{\phi},\xi_n(\hat{\phi}))$ are given by:
\[\frac{\partial H_n}{\partial \xi}(\hat{\phi},\xi_n(\hat{\phi})) = 0, \quad 
\frac{\partial H_n}{\partial \alpha}(\hat{\phi},\xi_n(\hat{\phi})) = 0, \quad
\frac{\partial H_n}{\partial \lambda}(\hat{\phi},\xi_n(\hat{\phi})) = 0, \quad
\frac{\partial H_n}{\partial \theta}(\hat{\phi},\xi_n(\hat{\phi})) = 0.
\]
On the other hand, we have at $(\phi^*,0)$:
\begin{eqnarray*}
\frac{\partial H_n}{\partial \xi}(\phi^*,0) = m(\alpha^*) - \int{K(\hat{\mathbb{F}}_0(y|\phi^*))dy},
\frac{\partial H_n}{\partial \alpha}(\phi^*,0) = 0,
\frac{\partial H_n}{\partial \lambda}(\phi^*,0) = 0,
\frac{\partial H_n}{\partial \theta}(\phi^*,0) =  0.
\end{eqnarray*}
By proposition \ref{prop:LimitLawLmomConstrPart}, since $m(\alpha^*) = \int{K(\mathbb{F}_0(y|\phi^*))}$,
\begin{equation}
\sqrt{n}\left[m(\alpha^*) - \int{K(\hat{\mathbb{F}}_0(y|\phi^*))dy}\right] \xrightarrow{\mathcal{L}}{} \mathcal{N}(0,\Sigma)
\label{eqn:LimitLawPartialDerivHnLmom}
\end{equation}
with $\Sigma$ is the matrix of covariance defined by formula (\ref{eqn:VarCovMatConstrPart}). It remains now to calculate the limit in probability of the matrix $J_{H_n}(\bar{\phi},\bar{\xi})$. Recall first that as $n$ goes to infinity $\bar{\phi}\rightarrow\phi^*$ and $\bar{\xi}\rightarrow 0$. Moreover, by the Slutsky theorem and the law of large numbers, we have:
\[\hat{\mathbb{F}}_0(y,|\bar{\phi}) = \frac{1}{1-\bar{\lambda}} \mathbb{F}_n(y) - \frac{\bar{\lambda}}{1-\bar{\lambda}} \mathbb{F}_1(y|\bar{\theta})\xrightarrow{n\rightarrow\infty}{} \frac{1}{1-\lambda^*} \mathbb{F}_T(y) - \frac{\lambda^*}{1-\lambda^*} \mathbb{F}_1(y|\theta^*) = \mathbb{F}_0(y|\phi^*).\]
We may now give the limit of the blocs of the matrix $J_{H_n}(\bar{\phi},\bar{\xi})$:
\begin{eqnarray*}
\frac{\partial^2 H_n}{\partial \xi^2}(\phi^*,0) =  \int{K(\mathbb{F}_0(y|\phi^*))K(\mathbb{F}_0(y|\phi^*))^tdy}, \qquad
\frac{\partial^2 H_n}{\partial \alpha^2}(\phi^*,0) = 0, \\
\frac{\partial^2 H_n}{\partial^2 \lambda}(\phi^*,0)  =  0, \qquad 
\frac{\partial^2 H_n}{\partial \theta^2}(\phi^*,0) = 0.
\end{eqnarray*}
Crossed derivatives:
\[\frac{\partial^2 H_n}{\partial \xi \partial\alpha}(\phi^*,0) = \nabla m(\alpha^*),\quad \frac{\partial^2 H_n}{\partial \alpha \partial\lambda}(\phi^*,0) = 0,\quad \frac{\partial^2 H_n}{\partial \alpha \partial\theta}(\phi^*,0) = 0  ,\quad 
\frac{\partial^2 H_n}{\lambda \partial\theta}(\phi^*,0) = 0, \]
\[\frac{\partial^2 H_n}{\partial \xi \partial\lambda}(\phi^*,0) = -\int{\left[\frac{1}{(1-\lambda^*)^2}\mathbb{F}_T(y) - \frac{1}{(1-\lambda^*)^2}\mathbb{F}_1(y|\theta^*)\right]K'(\mathbb{F}_0(y|\phi^*))dy}\]
\[\frac{\partial^2 H_n}{\partial \xi \partial\theta}(\phi^*,0) = \frac{\lambda^*}{1-\lambda^*} \int{\nabla_{\theta}\mathbb{F}_1(y|\theta^*)K'(\mathbb{F}_0(y|\phi^*))^tdy}.\]
The limit in probability of the matrix $J_{H_n}(\bar{\phi},\bar{\xi})$ can be written in the form:
\[ J_H = \left[\begin{array}{cc}
0 & J_{\phi^*,\xi^*}^t \\
J_{\phi^*,\xi^*} & J_{\xi^*,\xi^*}
\end{array}\right]\]
where $J_{\phi^*,\xi^*}$ and $J_{\xi^*,\xi^*}$ are given by (\ref{eqn:NormalAsymLMomJ1}) and (\ref{eqn:NormalAsymLMomJ2}). The inverse of matrix $J_H$ has the form:
\[J_H^{-1} = \left(\begin{array}{cc} -\tilde{\Sigma} & H \\ H^t & P\end{array}\right),\]
where
\[
\tilde{\Sigma} = \left(J_{\phi^*,\xi^*}^t J_{\xi^*,\xi^*} J_{\phi^*,\xi^*}\right)^{-1},\quad  H = \tilde{\Sigma} J_{\phi^*,\xi^*}^t J_{\xi^*,\xi^*}^{-1},\quad  P = J_{\xi^*,\xi^*}^{-1} - J_{\xi^*,\xi^*}^{-1} J_{\phi^*,\xi^*} \tilde{\Sigma} J_{\phi^*,\xi^*}^t J_{\xi^*,\xi^*}^{-1}
\]
Going back to (\ref{eqn:StochExpansionLmom}), we have:
\begin{equation*}
\left(\begin{array}{c} 0 \\ 0 \end{array}\right) = \left(\begin{array}{c}  0 \\ \frac{\partial H_n}{\partial \xi}(\phi^*,0) \end{array}\right)  + J_{H_n}(\bar{\phi},\bar{\xi}) \left(\begin{array}{c}  \hat{\phi}-\phi^* \\ \xi_n(\hat{\phi}) \end{array}\right).
\end{equation*}
Solving this equation in $\phi$ and $\xi$ gives:
\begin{equation*}
\left(\begin{array}{c}  \sqrt{n}\left(\hat{\phi}-\phi^*\right) \\ \sqrt{n}\xi_n(\hat{\phi})\end{array}\right) = J_H^{-1}\left(\begin{array}{c}  0 \\ \sqrt{n}\frac{\partial H_n}{\partial \xi}(\phi^*,0) \end{array}\right) + o_P(1).
\end{equation*}
Finally, using (\ref{eqn:LimitLawPartialDerivHnLmom}), we get that:
\[\left(\begin{array}{c}  \sqrt{n}\left(\hat{\phi}-\phi^*\right) \\ \sqrt{n}\xi_n(\hat{\phi})\end{array}\right) \xrightarrow[\mathcal{L}]{} \mathcal{N}\left(0,S\right)\]
where 
\[S=\left(\begin{array}{c}H \\ P\end{array}\right) \Sigma \left(H^t\quad P^t\right).\]
This ends the proof.
\end{proof}

\chapter*{Conclusions and Perspectives}
\addcontentsline{toc}{chapter}{Conclusions and Perspectives}
We summarize some of the most important contributions achieved in this work, and give some future perspectives and research directions concerning the different subjects presented in this manuscript.\\
\begin{itemize}
\item We studied in the first chapter the dual formula of $\varphi-$divergences and showed the limitations of the estimators built using it. We emphasized on the lack of robustness of the so-called MD$\varphi$DE and explained the reason behind this problem. This permitted us to introduce a new robust estimator which we called the kernel-based MD$\varphi$DE. The new estimator is proved to be consistent, asymptotically Gaussian and robust under standard conditions. \\
\item The detailed simulation study presented at the end of Chapter 1 opened several questions. When we work with symmetric kernels, the choice of the window influences on the estimation result. Automatic methods did not give the best results and a suitably-chosen fixed value of the kernel gave always a better result. This gives rise to the question about the best window choice with respect to the estimation procedure instead of the density estimator.\\
\item The use of asymmetric kernels is very useful and must be considered in estimation procedures which uses kernels as soon as we are working with distributions defined on a subset of $\mathbb{R}$. The choice of the kernel was not of a great importance, but the choice of the window was essential. Indeed, existing methods for the choice of the window for asymmetric kernels do not give satisfactory results and a fixed choice of the window gave clear good results almost all the time.\\
\item We presented in the second chapter a proximal-point algorithm for the calculus of divergence-based estimators. We studied the convergence properties of this algorithm and relaxed the identifiability assumption over the proximal term. \\
\item Our simulations show that the proximal algorithm give the same results as a direct optimization algorithm. The question is: Can the proximal algorithm give \emph{clear} better results than direct optimization methods? We could not explore this question in the present work. We could consider a well-known model where direct optimization algorithms fail and converge to "bad" local optima and test whether our proximal algorithm succeed to give a better result.\\
\item The role of the proximal term could also be studied. Indeed, we have noticed that the use of a proximal term of the form $\|\phi-\phi^k\|$ is not suitable for our simulations. The use of a Hellinger-type proximal term gives better results. \\
\item We presented in the third chapter a new structure for semiparametric two-component mixture models where a component is defined through linear constraints such as moments constraints. The new structure permits the addition of a relatively general prior information about the unknown component. The new structure puts the model in between a (restrictive) fully parametric model and a complex semiparametric setup. The new structure permits to estimate the parameters of the parametric component keeping the unknown component in a neighborhood of some family of distributions. The resulting estimator is proved under standard conditions to be consistent and asymptotically normal. \\
\item The estimation procedure presented in Chap. 3 has a linear complexity when we use the $\chi^2$ divergence and when the constraints are polynomials in the distribution function (moments constraints). Besides, no numerical integration or smoothing are needed which permits to calculate the estimates instantly which is a clear advantage over existing methods. The later requires from several hours to several days in order to estimate the parameters of only one sample when the sample size becomes high enough (of order $10^5$). Besides, it permitted in several simulated examples the identification of a parametric component even when the proportion of it is very low (of order 0.01).\\
\item It is necessary and intriguing that we apply our new model on real data and see if we can get satisfactory results. Moreover, we should test the performance of our method on data where the true unknown component $P_0^*$ does not verify the constraints. \\
\item In chapter 4, we presented another structure for semiparametric two-component mixture models when one component is defined by L-moments constraints. The resulting estimator was proved to be consistent and asymptotically normal under standard conditions. In comparison to the structure introduced in Chap. 3 using moments constraints, the use of L-moments constraints shows a clear improvement of performance. In several simulations, we were able to obtain better results with L-moments constraints than with moments constraints and with a smaller number of observations. \\
\item In the literature on L-moments, there exist some propositions and attempts to define multivariate L-moments. Our approach in Chap. 4 only treat univariate L-moments. This may be explorer in a future work.\\
\item An important and difficult question common for both chapters 3 and 4 is: can we use less number of constraints than the number of parameters ? and even if we have an infinite number of solutions, can we ensure that these solutions are in a small neighborhood of the true set of parameters ?
\end{itemize}
\bibliographystyle{plainnat}
\bibliography{PhDbibliography}

\begin{thebibliography}{91}
\providecommand{\natexlab}[1]{#1}
\providecommand{\url}[1]{\texttt{#1}}
\expandafter\ifx\csname urlstyle\endcsname\relax
  \providecommand{\doi}[1]{doi: #1}\else
  \providecommand{\doi}{doi: \begingroup \urlstyle{rm}\Url}\fi

\bibitem[Ali and Silvey(1966)]{AliSilvey}
S.~M. Ali and S.~D. Silvey.
\newblock {A General Class of Coefficients of Divergence of One Distribution
  from Another}.
\newblock \emph{Journal of the Royal Statistical Society. Series B
  (Methodological)}, 28\penalty0 (1):\penalty0 131--142, 1966.

\bibitem[Barron and Sheu(1991)]{BarronSheu}
Andrew~R. Barron and Chyong-Hwa Sheu.
\newblock Approximation of density functions by sequences of exponential
  families.
\newblock \emph{Ann. Statist.}, 19\penalty0 (3):\penalty0 1347--1369, 09 1991.

\bibitem[Basu and Lindsay(1994)]{BasuLindsay}
Ayanendranath Basu and Bruce~G. Lindsay.
\newblock Minimum disparity estimation for continuous models: Efficiency,
  distributions and robustness.
\newblock \emph{Annals of the Institute of Statistical Mathematics},
  46\penalty0 (4):\penalty0 683--705, 1994.

\bibitem[Basu and Sarkar(1994)]{BasuSarkar}
Ayanendranath Basu and Sahadeb Sarkar.
\newblock The trade-off between robustness and efficiency and the effect of
  model smoothing in minimum disparity inference.
\newblock \emph{Journal of Statistical Computation and Simulation},
  50:\penalty0 173--185, 09 1994.

\bibitem[Basu et~al.(1998)Basu, Harris, Hjort, and Jones]{BasuMPD}
Ayanendranath Basu, Ian~R. Harris, Nils~L. Hjort, and M.~C. Jones.
\newblock Robust and efficient estimation by minimising a density power
  divergence.
\newblock \emph{Biometrika}, 85\penalty0 (3):\penalty0 549--559, 09 1998.

\bibitem[Beran(1977)]{Beran}
Rudolf Beran.
\newblock Minimum hellinger distance estimates for parametric models.
\newblock \emph{Ann. Statist.}, 5\penalty0 (3):\penalty0 445--463, 05 1977.

\bibitem[Berge(1963)]{Berge}
C.~Berge.
\newblock \emph{Topological Spaces: Including a Treatment of Multi-valued
  Functions, Vector Spaces, and Convexity}.
\newblock Dover books on mathematics. Dover Publications, 1963.

\bibitem[Bordes and Vandekerkhove(2010)]{Bordes10}
L.~Bordes and P.~Vandekerkhove.
\newblock Semiparametric two-component mixture model with a known component: An
  asymptotically normal estimator.
\newblock \emph{Mathematical Methods of Statistics}, 19\penalty0 (1):\penalty0
  22--41, 2010.
\newblock ISSN 1066-5307.

\bibitem[Bordes et~al.(2006)Bordes, Delmas, and Vandekerkhove]{Bordes06b}
Laurent Bordes, C\'{e}line Delmas, and Pierre Vandekerkhove.
\newblock Semiparametric estimation of a two-component mixture model where one
  component is known.
\newblock \emph{Scandinavian Journal of Statistics}, 33\penalty0 (4):\penalty0
  733--752, 2006.

\bibitem[Bordes et~al.(2007)Bordes, Chauveau, and Vandekerkhove]{BordesStochEM}
Laurent Bordes, Didier Chauveau, and Pierre Vandekerkhove.
\newblock A stochastic {EM} algorithm for a semiparametric mixture model.
\newblock \emph{Computational Statistics and Data Analysis}, 51\penalty0
  (11):\penalty0 5429 -- 5443, 2007.
\newblock Advances in Mixture Models.

\bibitem[Bouezmarni and Rombouts(2010)]{BouezmarniMultivariate}
Taoufik Bouezmarni and Jeroen~V.K. Rombouts.
\newblock Nonparametric density estimation for multivariate bounded data.
\newblock \emph{Journal of Statistical Planning and Inference}, 140\penalty0
  (1):\penalty0 139 -- 152, 2010.

\bibitem[Bouezmarni and Scaillet(2005)]{Bouezmarni}
Taoufik Bouezmarni and Olivier Scaillet.
\newblock Consistency of asymmetric kernel density estimators and smoothed
  histograms with application to income data.
\newblock \emph{Econometric Theory}, 21\penalty0 (2):\penalty0 pp. 390--412,
  2005.

\bibitem[Broniatowski and Decurninge(2016)]{AlexisGSI13}
M.~Broniatowski and A.~Decurninge.
\newblock \emph{Estimation for Models Defined by Conditions on Their
  L-Moments}, volume~62, pages 5181--5198.
\newblock Sept 2016.

\bibitem[Broniatowski(2014)]{Broniatowski2014}
Michel Broniatowski.
\newblock Minimum divergence estimators, maximum likelihood and exponential
  families.
\newblock \emph{Statistics and Probability letters}, 93:\penalty0 27--33, 2014.

\bibitem[Broniatowski and Keziou(2006)]{BroniaKeziou2006}
Michel Broniatowski and Amor Keziou.
\newblock Minimization of divergences on sets of signed measures.
\newblock \emph{Studia Sci. Math. Hungar.}, 43\penalty0 (4):\penalty0 403--442,
  2006.

\bibitem[Broniatowski and Keziou(2009{\natexlab{a}})]{BroniaKeziou09}
Michel Broniatowski and Amor Keziou.
\newblock Parametric estimation and tests through divergences and the duality
  technique.
\newblock \emph{J. Multivariate Anal.}, 100\penalty0 (1):\penalty0 16--36,
  2009{\natexlab{a}}.

\bibitem[Broniatowski and Keziou(2009{\natexlab{b}})]{BroniatowskiKeziou2007}
Michel Broniatowski and Amor Keziou.
\newblock Parametric estimation and tests through divergences and the duality
  technique.
\newblock \emph{J. Multivariate Anal.}, 100\penalty0 (1):\penalty0 16--36,
  2009{\natexlab{b}}.

\bibitem[Broniatowski and Keziou(2012)]{BroniaKeziou12}
Michel Broniatowski and Amor Keziou.
\newblock Divergences and duality for estimation and test under moment
  condition models.
\newblock \emph{Journal of Statistical Planning and Inference}, 142\penalty0
  (9):\penalty0 2554 -- 2573, 2012.

\bibitem[Broniatowski and Vajda(2012)]{BroniatowskiSeveralApplic}
Michel Broniatowski and Igor Vajda.
\newblock Several applications of divergence criteria in continuous families.
\newblock \emph{Kybernetika}, 48\penalty0 (4):\penalty0 600--636, 2012.

\bibitem[Chen and Gerlach(2013)]{Chen2sideWeibull}
Qian Chen and Richard~H. Gerlach.
\newblock The two-sided weibull distribution and forecasting financial tail
  risk.
\newblock \emph{International Journal of Forecasting}, 29\penalty0
  (4):\penalty0 527 -- 540, 2013.

\bibitem[Chen(1999)]{Chen}
Song~Xi Chen.
\newblock Beta kernel estimators for density functions.
\newblock \emph{Computational Statistics $\&$ Data Analysis}, 31\penalty0
  (2):\penalty0 131 -- 145, 1999.

\bibitem[Cherfi(2011)]{Cherfi}
Mohamed Cherfi.
\newblock {Dual $\varphi$-divergences estimation in normal models}.
\newblock \emph{ArXiv e-prints}, August 2011.
\newblock URL \url{http://arxiv.org/abs/1108.2999v1}.

\bibitem[Chretien and Hero(1998)]{ChretienHeroAccel}
S.~Chretien and A.O. Hero.
\newblock Acceleration of the em algorithm via proximal point iterations.
\newblock In \emph{Information Theory, 1998. Proceedings. 1998 IEEE
  International Symposium on}, pages 444--, 1998.

\bibitem[Chr\'etien and Hero(1998)]{ChretienHero}
St\'ephane Chr\'etien and Alfred~O. Hero.
\newblock Generalized proximal point algorithms and bundle implementations.
\newblock Technical report, Department of Electrical Engineering and Computer
  Science, The University of Michigan, 1998.

\bibitem[Chr\'etien and Hero(2008)]{ChretienHeroProxGener}
St\'ephane Chr\'etien and Alfred~O. Hero.
\newblock On em algorithms and their proximal generalizations.
\newblock \emph{ESAIM: Probability and Statistics}, 12:\penalty0 308--326, 1
  2008.

\bibitem[Cressie and Read(1984)]{CressieRead1984}
Noel Cressie and Timothy R.~C. Read.
\newblock Multinomial goodness-of-fit tests.
\newblock \emph{J. Roy. Statist. Soc. Ser. B}, 46\penalty0 (3):\penalty0
  440--464, 1984.
\newblock ISSN 0035-9246.

\bibitem[Cristea(2007)]{Cristea}
Mihai Cristea.
\newblock {A Note on Global Implicit Function Theorem}.
\newblock \emph{Journal of Inequalities in Pure and Applied Mathematics}, 8,
  2007.

\bibitem[Csisz\'{a}r(1963)]{Csiszar}
I.~Csisz\'{a}r.
\newblock Eine informationstheoretische {U}ngleichung und ihre anwendung auf
  den {B}eweis der ergodizit{\"a}t von {M}arkoffschen {K}etten.
\newblock \emph{Publications of the Mathematical Institute of Hungarian Academy
  of Sciences}, 8:\penalty0 95--108, 1963.

\bibitem[Csisz{\'a}r(1963)]{Csiszar1963}
Imre Csisz{\'a}r.
\newblock Eine informationstheoretische {U}ngleichung und ihre {A}nwendung auf
  den {B}eweis der {E}rgodizit\"at von {M}arkoffschen {K}etten.
\newblock \emph{Magyar Tud. Akad. Mat. Kutat\'o Int. K\"ozl.}, 8:\penalty0
  85--108, 1963.

\bibitem[Decurninge(2015)]{AlexisThesis}
Alexis Decurninge.
\newblock \emph{{Univariate and multivariate quantiles, probabilistic and
  statistical approaches; radar applications}}.
\newblock Theses, {Universit{\'e} Pierre et Marie Curie}, January 2015.
\newblock URL \url{https://hal.inria.fr/tel-01129961}.

\bibitem[Dempster et~al.(1977)Dempster, Laird, and Rubin]{Dempster}
A.~P. Dempster, N.~M. Laird, and D.~B. Rubin.
\newblock Maximum likelihood from incomplete data via the em algorithm.
\newblock \emph{Journal of the Royal Statistical Society, series B},
  39\penalty0 (1):\penalty0 1--38, 1977.

\bibitem[Donoho and Liu(1988)]{Donoho}
David~L. Donoho and Richard~C. Liu.
\newblock The "automatic" robustness of minimum distance functionals.
\newblock \emph{Ann. Statist.}, 16\penalty0 (2):\penalty0 552--586, 06 1988.
\newblock \doi{10.1214/aos/1176350820}.

\bibitem[Feller(1971)]{Feller}
W.~Feller.
\newblock \emph{An introduction to probability theory and its applications}.
\newblock Number vol.~2 in Wiley mathematical statistics series. Wiley, 1971.

\bibitem[Fr\'{y}dlov\`{a} et~al.(2012)Fr\'{y}dlov\`{a}, Vajda, and
  Kus]{Frydlova}
Iva Fr\'{y}dlov\`{a}, Igor Vajda, and V\`{a}clav Kus.
\newblock Modified power divergence estimators in normal models - simulation
  and comparative study.
\newblock \emph{Kybernetika}, 48\penalty0 (4):\penalty0 795--808, 2012.

\bibitem[Funke and Kawka(2015)]{FunkeMultivariate}
Benedikt Funke and Rafael Kawka.
\newblock Nonparametric density estimation for multivariate bounded data using
  two non-negative multiplicative bias correction methods.
\newblock \emph{Computational Statistics $\&$ Data Analysis}, 92:\penalty0 148
  -- 162, 2015.

\bibitem[Ghosh et~al.(2013)Ghosh, Harris, Maji, Basu, and
  Pardo]{GoshSDivergence}
Abhik Ghosh, Ian~R. Harris, Avijit Maji, Ayanendranath Basu, and Leandro Pardo.
\newblock A generalized divergence for statistical inference.
\newblock Technical report, Byesian and Interdisciplinary Research Unit Indian
  Statistical Institute, 2013.

\bibitem[Goldstein and Russak(1987)]{Goldstein}
A.A. Goldstein and I.B. Russak.
\newblock How good are the proximal point algorithms?
\newblock \emph{Numerical Functional Analysis and Optimization}, 9\penalty0
  (7-8):\penalty0 709--724, 1987.

\bibitem[Guedj et~al.(2009)Guedj, Robin, Celisse, and Nuel]{GuedjRobin}
Mickael Guedj, Stephane Robin, Alain Celisse, and Gregory Nuel.
\newblock Kerfdr: a semi-parametric kernel-based approach to local false
  discovery rate estimation.
\newblock \emph{BMC Bioinformatics}, 10\penalty0 (1):\penalty0 1--12, 2009.

\bibitem[Hosking(1990)]{Hoskings}
J.~R.~M. Hosking.
\newblock L-moments: Analysis and estimation of distributions using linear
  combinations of order statistics.
\newblock \emph{Journal of the Royal Statistical Society. Series B
  (Methodological)}, 52\penalty0 (1):\penalty0 105--124, 1990.

\bibitem[Jiahua~Chen(1993)]{ChenQin}
Jing~Qin Jiahua~Chen.
\newblock Empirical likelihood estimation for finite populations and the
  effective usage of auxiliary information.
\newblock \emph{Biometrika}, 80\penalty0 (1):\penalty0 107--116, 1993.

\bibitem[Jim\'enez and Shao(2001)]{Jimenez}
Ra\`ul Jim\'enez and Yongzhao Shao.
\newblock On robustness and efficiency of minimum divergence estimators.
\newblock \emph{Test}, 10\penalty0 (2):\penalty0 241--248, 2001.

\bibitem[Jim\'{e}nz and Shao(2001)]{Jimenz}
Ra\`{u}l Jim\'{e}nz and Yongzhao Shao.
\newblock On robustness and efficiency of minimum divergence estimators.
\newblock \emph{Test}, 10\penalty0 (2):\penalty0 241--248, 2001.

\bibitem[Karunamuni and Alberts(2005)]{BiasCorrSurvey}
R.J. Karunamuni and T.~Alberts.
\newblock On boundary correction in kernel density estimation.
\newblock \emph{Statistical Methodology}, 2\penalty0 (3):\penalty0 191 -- 212,
  2005.

\bibitem[Karunamuni and Wu(2009)]{KarunamuniWu}
R.J. Karunamuni and J.~Wu.
\newblock Minimum hellinger distance estimation in a nonparametric mixture
  model.
\newblock \emph{Journal of Statistical Planning and Inference}, 139\penalty0
  (3):\penalty0 1118 -- 1133, 2009.

\bibitem[Keziou(2003)]{KeziouThesis}
Amor Keziou.
\newblock \emph{{Utilisation des Divergences entre Mesures en Statistique
  Inf{\'e}rentielle}}.
\newblock Theses, {Universit{\'e} Pierre et Marie Curie - Paris VI}, November
  2003.
\newblock URL \url{https://tel.archives-ouvertes.fr/tel-00004069}.
\newblock Patrice Bertail (rapporteur), Denis Bosq (p{\'e}sident), Michel
  Delecroix, Dominique Picard, Ya'acov Ritov (rapporteur), Christian P. Robert,
  Jean-Michel Zakoian.

\bibitem[Kuchibhotla and Basu(2015)]{KumarBasu}
Arun~Kumar Kuchibhotla and Ayanendranath Basu.
\newblock A general set up for minimum disparity estimation.
\newblock \emph{Statistics and Probability Letters}, 96:\penalty0 68 -- 74,
  2015.

\bibitem[Lange(2013)]{OptimKenneth}
Kenneth Lange.
\newblock \emph{Optimization}.
\newblock Springer Texts in Statistics. Springer-Verlag New York, 2 edition,
  2013.

\bibitem[Lavancier and Rochet(2016)]{LavancierRochet}
F.~Lavancier and P.~Rochet.
\newblock A general procedure to combine estimators.
\newblock \emph{Computational Statistics $\&$ Data Analysis}, 94:\penalty0 175
  -- 192, 2016.

\bibitem[Libengue Dobele-kpoka(2013)]{Libengue}
Libengu{\'e} Dob{\'e}l{\'e}-Kpoka Libengue Dobele-kpoka, Francial
  Giscard~Baudin.
\newblock \emph{{Non parametric method of mixed associated kernels and
  applications}}.
\newblock Theses, {Universit{\'e} de Franche-Comt{\'e}}, June 2013.
\newblock URL \url{https://tel.archives-ouvertes.fr/tel-01124288}.

\bibitem[Liese and Vajda(2006)]{LieseVajdaDivergence}
F.~Liese and I.~Vajda.
\newblock On divergences and informations in statistics and information theory.
\newblock \emph{{IEEE} Transactions on Information Theory}, 52\penalty0
  (10):\penalty0 4394--4412, 2006.

\bibitem[Liese and Vajda(1987)]{LieseVajda}
Friedrich Liese and Igor Vajda.
\newblock \emph{Convex statistical distances}, volume~95 of \emph{Teubner-Texte
  zur Mathematik [Teubner Texts in Mathematics]}.
\newblock BSB B. G. Teubner Verlagsgesellschaft, Leipzig, 1987.
\newblock ISBN 3-322-00428-7.
\newblock With German, French and Russian summaries.

\bibitem[Lindsay(1994)]{LindsayRAF}
Bruce~G. Lindsay.
\newblock Efficiency versus robustness: The case for minimum hellinger distance
  and related methods.
\newblock \emph{Ann. Statist.}, 22\penalty0 (2):\penalty0 1081--1114, 06 1994.

\bibitem[Lu and Shiou(2002)]{BlockMat}
Tzon-Tzer Lu and Sheng-Hua Shiou.
\newblock Inverses of 2 $\times$ 2 block matrices.
\newblock \emph{Computers $\&$ Mathematics with Applications}, 43\penalty0
  (1-2):\penalty0 119 -- 129, 2002.
\newblock ISSN 0898-1221.
\newblock \doi{http://dx.doi.org/10.1016/S0898-1221(01)00278-4}.

\bibitem[Ma et~al.(2011)Ma, Gudlaugsdottir, and Wood]{JunMaDiscret}
Jun Ma, Sigurbjorg Gudlaugsdottir, and Graham Wood.
\newblock Generalized em estimation for semi-parametric mixture distributions
  with discretized non-parametric component.
\newblock \emph{Statistics and Computing}, 21\penalty0 (4):\penalty0 601--612,
  2011.
\newblock ISSN 0960-3174.

\bibitem[Maiboroda and Sugakova(2012)]{Maiboroda2012}
Rostyslav Maiboroda and Olena Sugakova.
\newblock Nonparametric density estimation for symmetric distributions by
  contaminated data.
\newblock \emph{Metrika}, 75\penalty0 (1):\penalty0 109--126, 2012.

\bibitem[Martinet(1970)]{Martinet}
B.~Martinet.
\newblock Br\`eve communication. r\'egularisation d'in\'equations
  variationnelles par approximations successives.
\newblock \emph{ESAIM: Mathematical Modelling and Numerical Analysis -
  Mod\'elisation Math\'ematique et Analyse Num\'erique}, 4\penalty0
  (R3):\penalty0 154--158, 1970.

\bibitem[McLachlan and Krishnan(2007)]{McLachlanEM}
G.~McLachlan and T.~Krishnan.
\newblock \emph{The EM Algorithm and Extensions}.
\newblock Wiley Series in Probability and Statistics. Wiley, 2007.

\bibitem[McLachlan and Peel(2005)]{FinMixModMclachlan}
Geoffrey McLachlan and David Peel.
\newblock \emph{Finite Mixture Models}.
\newblock John Wiley $\&$ Sons, Inc., 2005.

\bibitem[Meister(2009)]{Meister}
Alexander Meister.
\newblock \emph{Deconvolution Problems in Nonparametric Statistics}.
\newblock Lecture Notes in Statistics. Springer, 2009.

\bibitem[Mnatsakanov and Sarkisian(2012)]{vKDE}
Robert Mnatsakanov and Khachatur Sarkisian.
\newblock Varying kernel density estimation on $\mathbb{R}_+$.
\newblock \emph{Statistics and Probability Letters}, 82\penalty0 (7):\penalty0
  1337 -- 1345, 2012.

\bibitem[Nelder and Mead(1965)]{NelderMead}
J.~A. Nelder and R.~Mead.
\newblock A simplex method for function minimization.
\newblock \emph{Computer Journal}, 7:\penalty0 308--313, 1965.

\bibitem[Newey and Smith(2004)]{NeweySmith}
Whitney~K. Newey and Richard~J. Smith.
\newblock Higher order properties of gmm and generalized empirical likelihood
  estimators.
\newblock \emph{Econometrica}, 72\penalty0 (1):\penalty0 219--255, 2004.
\newblock ISSN 1468-0262.
\newblock \doi{10.1111/j.1468-0262.2004.00482.x}.
\newblock URL \url{http://dx.doi.org/10.1111/j.1468-0262.2004.00482.x}.

\bibitem[Ostrowski(1966)]{Ostrowski}
A.M. Ostrowski.
\newblock \emph{Solution of equations and systems of equations}.
\newblock Pure and applied mathematics. Academic Press, 1966.

\bibitem[Owen(1990)]{Owen}
Art Owen.
\newblock Empirical likelihood ratio confidence regions.
\newblock \emph{The Annals of Statistics}, 18\penalty0 (1):\penalty0 90--120,
  1990.

\bibitem[Pardo(2006)]{Pardo}
Leandro Pardo.
\newblock \emph{Statistical inference based on divergence measures}, volume 185
  of \emph{Statistics: Textbooks and Monographs}.
\newblock Chapman \& Hall/CRC, Boca Raton, FL, 2006.
\newblock ISBN 978-1-58488-600-6; 1-58488-600-5.

\bibitem[Park and Basu(2004)]{ParkBasu}
Chanseok Park and Ayanendranath Basu.
\newblock Minimum disparity estimation : Asymptotic normality and breakdown
  point results.
\newblock \emph{Bulletin of informatics and cybernetics}, 36:\penalty0 19--33,
  2004.

\bibitem[Patra and Sen(2016)]{PatraRSSB}
Rohit~Kumar Patra and Bodhisattva Sen.
\newblock Estimation of a two-component mixture model with applications to
  multiple testing.
\newblock \emph{Journal of the Royal Statistical Society: Series B (Statistical
  Methodology)}, 78\penalty0 (4):\penalty0 869--893, 2016.

\bibitem[{R Core Team}(2015)]{Rtool}
{R Core Team}.
\newblock \emph{R: A Language and Environment for Statistical Computing}.
\newblock R Foundation for Statistical Computing, Vienna, Austria, 2015.
\newblock URL \url{http://www.R-project.org/}.

\bibitem[Robin et~al.(2007)Robin, Bar-Hen, Daudin, and Pierre]{Robin}
St\'ephane Robin, Avner Bar-Hen, Jean-Jacques Daudin, and Laurent Pierre.
\newblock A semi-parametric approach for mixture models: Application to local
  false discovery rate estimation.
\newblock \emph{Computational Statistics and Data Analysis}, 51\penalty0
  (12):\penalty0 5483 -- 5493, 2007.
\newblock ISSN 0167-9473.

\bibitem[Rockafellar and Wets(1998)]{Rockafellar}
R.~Tyrrell Rockafellar and Roger J-B Wets.
\newblock \emph{Variational Analysis}.
\newblock Die Grundlehren der mathematischen Wissenschaften in
  Einzeldarstellungen. Springer, 3 edition, 1998.

\bibitem[Schuster(1969)]{Schuster}
Eugene~F. Schuster.
\newblock Estimation of a probability density function and its derivatives.
\newblock \emph{Ann. Math. Statist.}, 40\penalty0 (4):\penalty0 1187--1195, 08
  1969.

\bibitem[Silverman(1978)]{Silverman}
Bernard~W. Silverman.
\newblock Weak and strong uniform consistency of the kernel estimate of a
  density and its derivatives.
\newblock \emph{Ann. Statist.}, 6\penalty0 (1):\penalty0 177--184, 01 1978.

\bibitem[Simpson(1987)]{Simpson}
Douglas~G. Simpson.
\newblock {Minimum Hellinger Distance Estimation for the Analysis of Count
  Data}.
\newblock \emph{Journal of the American Statistical Association}, 82\penalty0
  (399), 1987.

\bibitem[Song et~al.(2010)Song, Nicolae, and Song]{Song}
Seongjoo Song, Dan~L. Nicolae, and Jongwoo Song.
\newblock Estimating the mixing proportion in a semiparametric mixture model.
\newblock \emph{Computational Statistics and Data Analysis}, 54\penalty0
  (10):\penalty0 2276 -- 2283, 2010.
\newblock ISSN 0167-9473.

\bibitem[Stigler(1974)]{Stigler}
Stephen~M. Stigler.
\newblock Linear functions of order statistics with smooth weight functions.
\newblock \emph{Ann. Statist.}, 2\penalty0 (4):\penalty0 676--693, 07 1974.

\bibitem[Tagliani(2001)]{Tagliani}
Aldo Tagliani.
\newblock Recovering a probability density function from its mellin transform.
\newblock \emph{Applied Mathematics and Computation}, 118\penalty0
  (2-3):\penalty0 151 -- 159, 2001.
\newblock ISSN 0096-3003.

\bibitem[Tamura and Boos(1986)]{TamuraBoos}
Roy~N. Tamura and Dennis~D. Boos.
\newblock Minimum hellinger distance estimation for multivariate location and
  covariance.
\newblock \emph{Journal of the American Statistical Association}, 81\penalty0
  (393):\penalty0 223--229, 1986.

\bibitem[Tang and Karunamuni(2013)]{TangRegression}
Qingguo Tang and Rohana~J. Karunamuni.
\newblock Minimum distance estimation in a finite mixture regression model.
\newblock \emph{Journal of Multivariate Analysis}, 120:\penalty0 185 -- 204,
  2013.

\bibitem[Titterington et~al.(1985)Titterington, Smith, and Makov]{Titterington}
D.M. Titterington, A.F.M. Smith, and U.E. Makov.
\newblock \emph{Statistical Analysis of Finite Mixture Distributions}.
\newblock Wiley, New York, 1985.

\bibitem[Toma(2013)]{TomaRobustLinConstr}
Aida Toma.
\newblock Robustness of dual divergence estimators for models satisfying linear
  constraints.
\newblock \emph{Comptes Rendus Mathematique}, 351\penalty0 (7-8):\penalty0 311
  -- 316, 2013.

\bibitem[Toma and Broniatowski(2011)]{TomaBronia}
Aida Toma and Michel Broniatowski.
\newblock Dual divergence estimators and tests: Robustness results.
\newblock \emph{J. Multivariate Analysis}, 102\penalty0 (1):\penalty0 20--36,
  2011.

\bibitem[Toma and Leoni-Aubin(2010)]{TomaAubinTests}
Aida Toma and Samuela Leoni-Aubin.
\newblock Robust tests based on dual divergence estimators and saddlepoint
  approximations.
\newblock \emph{Journal of Multivariate Analysis}, 101\penalty0 (5):\penalty0
  1143 -- 1155, 2010.

\bibitem[Toma and Leoni-Aubin(2013)]{TomaAubin}
Aida Toma and Samuela Leoni-Aubin.
\newblock Optimal robust m-estimators using {R}\'{e}nyi pseudodistances.
\newblock \emph{Journal of Multivariate Analysis}, 115\penalty0 (C):\penalty0
  359--373, 2013.

\bibitem[Tseng(2004)]{Tseng}
Paul Tseng.
\newblock {An Analysis of the EM Algorithm and Entropy-Like Proximal Point
  Methods}.
\newblock \emph{Math. Oper. Res.}, 29\penalty0 (1):\penalty0 27--44, 2004.

\bibitem[van~der Vaart(1998)]{Vaart}
A.W. van~der Vaart.
\newblock \emph{Asymptotic Statistics}.
\newblock Cambridge Series in Statistical and Probabilistic Mathematics, 3.
  Cambridge University Press, 1998.

\bibitem[Venables and Ripley(2013)]{VenablesRipley}
W.N. Venables and B.D. Ripley.
\newblock \emph{Modern Applied Statistics with S}.
\newblock Statistics and Computing. Springer New York, 2013.

\bibitem[Wied and Weib{\ss}ach(2012)]{WiedWeibbach}
Dominik Wied and Rafael Weib{\ss}ach.
\newblock Consistency of the kernel density estimator: a survey.
\newblock \emph{Statistical Papers}, 53\penalty0 (1):\penalty0 1--21, 2012.
\newblock ISSN 0932-5026.

\bibitem[Wu(1983)]{Wu}
C.~F.~Jeff Wu.
\newblock On the convergence properties of the em algorithm.
\newblock \emph{Ann. Statist.}, 11\penalty0 (1):\penalty0 95--103, 03 1983.

\bibitem[Xiang et~al.(2014)Xiang, Yao, and Wu]{Xiang}
Sijia Xiang, Weixin Yao, and Jingjing Wu.
\newblock Minimum profile hellinger distance estimation for a semiparametric
  mixture model.
\newblock \emph{Canadian Journal of Statistics}, 42\penalty0 (2):\penalty0
  246--267, 2014.

\bibitem[Yuille and Rangarajan(2003)]{CCCP}
A.~L. Yuille and Anand Rangarajan.
\newblock {The Concave-Convex Procedure (CCCP)}.
\newblock \emph{Neural Computation}, 15\penalty0 (4):\penalty0 915--936, 2003.

\bibitem[Zambom and Dias(2013)]{ZambomDias}
Adriano~Z. Zambom and Ronaldo Dias.
\newblock {A Review of Kernel Density Estimation with Applications to
  Econometrics}.
\newblock \emph{International Econometric Review (IER)}, 5\penalty0
  (1):\penalty0 20--42, April 2013.

\end{thebibliography}

\end{document}